\providecommand{\U}[1]{\protect\rule{.1in}{.1in}}
\newtheorem{theorem}{Theorem}
\newtheorem{conclusion}[theorem]{Conclusion}
\newtheorem{corollary}[theorem]{Corollary}
\newtheorem{definition}[theorem]{Definition}
\newtheorem{lemma}[theorem]{Lemma}
\newtheorem{proposition}[theorem]{Proposition}
\newtheorem{remark}[theorem]{Remark}
\newenvironment{proof}[1][Proof]{\noindent\textbf{#1.} }{\ \rule{0.5em}{0.5em}}
\numberwithin{equation}{section}
\begin{document}

\title{Geometric distinguishability measures limit quantum channel estimation and discrimination}
\author{Vishal Katariya\thanks{Hearne Institute for Theoretical Physics, Department of
Physics and Astronomy, and Center for Computation and Technology, Louisiana
State University, Baton Rouge, Louisiana 70803, USA}
\and Mark M. Wilde\footnotemark[1] \thanks{Stanford Institute for Theoretical
Physics, Stanford University, Stanford, California 94305, USA} \thanks{Email: mwilde@lsu.edu}}
\maketitle

\begin{abstract}
Quantum channel estimation and discrimination are fundamentally related information processing tasks of interest in quantum information science. In this paper, we analyze these tasks by employing the right logarithmic derivative Fisher information and the geometric R\'enyi relative entropy, respectively, and we also identify connections between these distinguishability measures. A key result of our paper is that a chain-rule property holds for the right logarithmic
derivative Fisher information and the geometric R\'enyi relative entropy for the interval $\alpha\in(0,1) $ of the R\'enyi parameter $\alpha$. 
 In channel estimation, these results imply a condition for the unattainability of Heisenberg scaling, while in channel discrimination, they lead to improved bounds on error rates in the Chernoff and Hoeffding error exponent settings.
 More generally, we introduce
the amortized quantum Fisher information as a conceptual framework for analyzing general
sequential protocols that estimate a parameter encoded in a quantum channel. We then use this framework, beyond the aforementioned application, to show that Heisenberg scaling is not possible when a parameter is encoded in a classical--quantum channel.
We then identify a number of other
conceptual and technical connections between the tasks of estimation and
discrimination and the distinguishability measures involved in analyzing each. As part of
this work, we present a detailed overview of the geometric R\'enyi relative
entropy of quantum states and channels, as well as its properties, which may
be of independent interest.
\end{abstract}
\tableofcontents

\section{Introduction}

Quantum channel discrimination and  estimation are fundamental tasks in quantum information science. Channel discrimination refers to the task of distinguishing 
two (or more) quantum channels, while quantum channel estimation is a
generalization of this scenario. Instead of determining an unknown channel selected from a 
finite set, the goal of channel estimation is to estimate a
particular member chosen from a continuously parameterized set of quantum channels. The simplest
channel discrimination task consists of discriminating two channels selected from a set
$\{\mathcal{N}_{\theta}\}_{\theta \in \{1,2\}}$, whereas the simplest estimation task
consists of identifying a particular member of a continuously parameterized set
of channels $\{ \mathcal{N}_{\theta} \}_{\theta \in \Theta}$, where $\Theta \subseteq  \mathbb{R}$. Theoretical studies in both the discrimination and estimation of quantum channels have been applied  in a variety of settings, including quantum illumination \cite{Lloyd2008}, phase estimation using optical interferometry \cite{Braunstein1992, Dowling1998, Demkowicz-Dobrzanski2015}, and gravitational wave detection \cite{Caves1981, Yurke1986, Berry2000, Demkowicz-Dobrzanksi2013}. 

In classical parameter estimation, the unknown parameter $\theta$ is encoded
in a probability distribution $p_{\theta}(x)$ with associated random variable $X$. One tries to guess its value from a realization $x$ of $X$ by calculating an estimator $\hat{\theta}(x)$ of the true value $\theta$.  The most common measure of performance employed in estimation theory is the mean-squared error, defined as $\mathbb{E}[(\hat{\theta}(X) - \theta)^2]$. For an unbiased estimator satisfying $\mathbb{E}[\hat{\theta}(X) ] = \theta$, the mean-squared error is equal to $\text{Var}(\hat{\theta}(X))$, and one of the fundamental results of classical estimation theory is the Cramer-Rao lower bound (CRB) on the mean-squared error of an unbiased estimator:
\begin{equation}
\text{Var}(\hat{\theta}(X)) \geq
\frac{1}{I_{F}(\theta; \{p_{\theta}\}_\theta)}.
\end{equation}
 The lower bound features  
the Fisher information, defined as the following function of the probability distribution family $\{p_{\theta}\}_\theta$:
\begin{equation}
I_{F}
(\theta; \{p_{\theta}\}_\theta) = \mathbb{E}[(\partial_\theta \ln p_\theta(X))^2]
= \int dx\, p_\theta(x) (\partial_\theta \ln p_\theta(x))^2,
\label{eq:classical-FI-intro}
\end{equation}
where we employ the shorthand $\partial_\theta(\cdot) \equiv \frac{\partial}{\partial \theta}(\cdot)$.
Recalling the interpretation of $-\ln p_\theta(x)$ as the surprisal of the realization $x$, it follows that $\partial_\theta [ - \ln p_\theta(x)]$ is the rate of change of the surprisal with the parameter $\theta$ (surprisal rate). After noticing that the expected surprisal rate vanishes, by applying the  conservation of probability, it follows that the Fisher information is equal to the variance of the surprisal rate, thus characterizing its fluctuations  \cite{PhysRevE.98.032106,NGCG20}. If one generates $n$ independent samples $x^n \equiv x_1, \ldots, x_n$ of $p_{\theta}(x)$, described by the random sequence $X^n \equiv X_1, \ldots, X_n$, and forms an unbiased estimator $\hat{\theta}(x^n)$, then the Fisher information increases linearly with $n$ and the CRB becomes as follows:
\begin{equation}
\text{Var}(\hat{\theta}(X^n)) \geq
\frac{1}{n I_{F}(\theta; \{p_{\theta}\}_\theta)},
\end{equation}
which is how it is commonly employed in applications. 

In quantum estimation, the parameter $\theta$ is encoded in a quantum state
$\rho_{\theta}$ or a quantum channel $\mathcal{N}_{\theta}$, and generally, it is possible to attain better-than-classical scaling in error by
using quantum resources such as entanglement and collective measurements. When  formulating a quantum generalization of the Cramer--Rao bound and Fisher information, it is necessary to find a quantum generalization of the logarithmic derivative $\partial_\theta \ln p_\theta(x)$ in \eqref{eq:classical-FI-intro}. However, the noncommutative nature of quantum mechanics yields an infinite number
of logarithmic derivatives of $\rho_{\theta}$. To demonstrate this point, consider that we can define a family of parameterized logarithmic derivative operators $\{ D^{(p)}_\theta\}_{p}$ with $p \in [0, 1/2]$ as follows: $\partial_\theta \rho_\theta \coloneqq p D^{(p)}_\theta \rho_\theta + (1-p) \rho_\theta D^{(p)}_\theta$. Each $D^{(p)}_\theta$ collapses to the scalar logarithmic derivative in the classical case. The two most studied logarithmic derivatives are specific instances of $D^{(p)}_\theta$: the symmetric logarithmic derivative
(SLD) corresponding to $p = 1/2$ \cite{Hel67} and the right logarithmic derivative (RLD) corresponding to $p=0$ \cite{YL73}. At least two quantum Fisher informations can be defined based on these specific possibilities. By far, the SLD Fisher information has been
the most studied, on account of it providing the tightest quantum Cramer-Rao bound
(QCRB) in single parameter estimation of quantum states, while also being achievable when many copies of the state are available. 
The recent review \cite{Sidhu2019} provides an in-depth study of these and other notions
in quantum estimation. 

In this paper, we focus on the task of estimating a single unknown parameter $\theta$ encoded in a
quantum channel $\mathcal{N}_{\theta}$. This task has been
studied extensively in prior work \cite{Sasaki2002, Fujiwara_2003,
Fujiwara2004, Ji2008, Fujiwara2008, Mat10, Hayashi2011,
Demkowicz-Dobrzanski2012, Kolodynski2013, Demkowicz-Dobrzanski2014,
Sekatski2017, Demkowicz-Dobrzanski2017, Zhou2018, Zhou2019, Zhou2019a,YCH20}, and the most general setting for this problem is known as the sequential setting \cite{Giovannetti2006,PhysRevLett.98.090501,Demkowicz-Dobrzanski2014,Yuan2017}, in which one can interact with the channel $n$ independent times in the most general way allowed by quantum mechanics. Heisenberg scaling refers to the quantum Fisher information scaling as $n^{2}$,
where $n$ is the number of channel uses, or as $t^{2}$, where $t$ is the total
probing time. One
fundamental question for channel estimation is whether Heisenberg scaling can
be achieved when estimating a particular quantum channel. 

Our approach to the channel  estimation problem involves defining the amortized Fisher information of a family of channels, which is in the same
spirit as the amortized channel divergence introduced in \cite{Berta2018c}. The amortized Fisher information
provides a compact mathematical framework for studying the difference between
sequential and parallel estimation strategies, just as the amortized channel divergence does for channel discrimination \cite{Berta2018c}. Specifically, we prove that the amortized Fisher information  is a generic bound for all channel estimation protocols (called the ``meta-converse'' for channel estimation).

One key result of our paper is a chain rule for the
RLD Fisher information, with a consequence being that
amortization does not increase the RLD Fisher information of quantum channels. Importantly, when combining this result with the aforementioned meta-converse, it follows that Heisenberg scaling is unattainable for a channel family if its RLD Fisher information is finite.  This latter result generalizes a finding of \cite{Hayashi2011} beyond parallel strategies for channel estimation to the more general sequential strategies. Let us also note that evaluating the finiteness condition for the RLD Fisher information is a simpler task than evaluating the RLD (or SLD) Fisher information itself.

Turning to the related task of channel discrimination, a key tool that we employ for this purpose is the geometric R\'enyi relative entropy.
This distinguishability measure  has its roots in \cite{PR98}, and it was further
developed in \cite{Mat13,Matsumoto2018} (see also \cite{T15book,HM17}). It was given the name ``geometric R\'enyi relative entropy'' in \cite{Fang2019a} because it is a function
of the matrix geometric mean of its arguments. It was also  used to great
effect in \cite{Fang2019a} to bound quantum channel capacities and error rates of
channel discrimination in the  asymmetric setting. We continue to use it in this vein, in particular, by
improving upper bounds on error rates of channel discrimination in the symmetric
setting (specifically, the Chernoff and Hoeffding error exponents). Due to the chain rule of the
geometric R\'enyi relative entropy (and hence amortization collapse of the related channel function), the bounds that we report here are both single-letter and
efficiently computable via semi-definite programs. Our bounds also improve upon those found recently in  \cite{Berta2018c, CE18}.

As mentioned earlier, channel estimation is a
 generalization of channel discrimination to the case in which the unknown parameter is continuous. We devote the last
section of our paper to bringing out connections between the two tasks. We
observe that the RLD Fisher information arises from taking the limit of the geometric R\'enyi relative entropy
 of two infinitesimally close elements of a family of channels. Therefore, in this sense, we see that
the QCRB arising from the RLD Fisher information has the geometric R\'enyi relative entropy
underlying it. Further, we connect properties of the SLD and RLD Fisher informations to the
corresponding properties of their underlying distance measures (fidelity and
geometric R\'enyi relative entropy, respectively).

Our paper is structured as follows. First, we present a more detailed, yet brief overview of our results in Section~\ref{sec:summary-results}. In Section~\ref{sec:q-info-preliminaries}, we review some notation and mathematical identities
used throughout our paper. In Section~\ref{sec:setting-q-ch-est-disc}, we present the
information-processing tasks of channel estimation and discrimination.
Section~\ref{sec:bounds-on-estimation} contains all of our results regarding bounds
on channel estimation. Section~\ref{sec:bounds-on-disc} introduces
the geometric R\'enyi relative entropy and contains our bounds on channel
discrimination. Section~\ref{sec:connections} brings out connections between
estimation and discrimination, building on our results from the previous two
sections. In Section~\ref{sec:conclusion}, we conclude by summarizing our results
and outlining future work. The appendices of our paper contain many detailed mathematical proofs, as well as a detailed overview of the geometric R\'enyi relative entropy of quantum states and channels (Appendices~\ref{app:geo-ren-props} and \ref{app:geo-renyi-channels-app}).

\section{Summary of Results} \label{sec:summary-results}

Here we summarize the main contributions and results of our paper:

\begin{enumerate}
    \item In Section~\ref{subsec:classical-q-fisher-info}, we provide definitions for the SLD and RLD Fisher informations of quantum state families. These definitions are accompanied by specific conditions that govern the finiteness of the quantities. We also prove that the SLD and RLD Fisher informations are physically consistent, i.e., that the definitions provided are consistent with a limiting procedure in which some additive noise vanishes.
    
    \item In Section~\ref{subsec:generalized-fisher-info}, we define the generalized Fisher information of quantum state and channel families, with the aim of establishing a number of properties that arise solely from data processing. We also provide finiteness conditions for the SLD and RLD Fisher informations of quantum channels, which are helpful for determining whether Heisenberg scaling can occur in channel estimation. In this same section, we also introduce the idea of and define the amortized Fisher information of quantum channel families, as a generalization of the amortized channel divergence introduced in \cite{Berta2018c}. We then establish a meta-converse for all channel estimation protocols, which demonstrates that amortized Fisher information is a generic bound for all such protocols. 
    
    \item In Section~\ref{subsec:qfi-optimization-formulae}, we cast the SLD and RLD Fisher informations as optimization problems. Specifically, we cast the SLD Fisher information of quantum states as a semi-definite program, the SLD Fisher information of quantum channels as a bilinear program, and the RLD Fisher information of both quantum states and channels as a semi-definite program. We also provide a quadratically constrained program for the root SLD Fisher information of quantum states, whose formulation is used to establish the chain rule property of the root SLD Fisher information. We provide duals to our semi-definite programs in all cases.
    
    \item In Section~\ref{subsec:SLD-limit-cq-channels}, we show that sequential estimation strategies provide no advantage over parallel estimation strategies for classical-quantum channel families.
    
    \item In Sections~\ref{sec:SLD-fisher-info-limits} and \ref{sec:RLD-Fish-limits}, we utilize the SLD and RLD Fisher information of quantum channels  to place lower bounds on the error of sequential parameter estimation protocols. We prove chain rule properties for the RLD Fisher information and the root SLD Fisher information, which imply an amortization collapse for these quantities.
    
    \item An important corollary of the amortization collapse of the RLD Fisher information is a condition for the unattainability of Heisenberg scaling. Specifically, we prove that if the RLD Fisher information of a channel family is finite, then Heisenberg scaling is unattainable for it. Thus, we provide an operational consequence of the finiteness condition for the RLD Fisher information of quantum channels.
    
    \item When estimating a single parameter, the RLD Fisher information is never smaller than the SLD Fisher information. We study an example in Section~\ref{subsec:estimating-gadc} regarding the effectiveness of the RLD Fisher information as a performance bound when estimating various parameters encoded in a generalized amplitude damping channel.
    
    \item In Sections~\ref{subsec:geo-rel-ent} and \ref{subsec:geo-rel-ent-properties}, we provide a limit-based formula for the geometric R\'enyi relative entropy, and then we establish consistency of this formula with more explicit formulas for the whole range $\alpha \in (0,1) \cup (1,\infty)$. We review existing and also establish new properties of the geometric R\'enyi relative entropy of quantum states and channels.
    
    \item In the rest of Section~\ref{sec:bounds-on-disc}, we use the geometric R\'enyi relative entropy to improve currently known upper bounds on error rates in quantum channel discrimination. We (a) use the geometric fidelity to place an upper bound on the error exponent in the symmetric Chernoff setting and (b) introduce the Belavkin--Staszewski divergence sphere as an upper bound on the Hoeffding error exponent. We also study a task called ``sequential channel discrimination with repetition'' and establish an upper bound on its Chernoff and Hoeffding error exponents.
    
    \item Finally, in Section~\ref{sec:connections}, we bring out a number of conceptual and technical connections between the tasks of channel estimation and discrimination.
    
\end{enumerate}

\section{Quantum information preliminaries}

\label{sec:q-info-preliminaries}

We begin by recalling some basic facts and identities that appear often in
this paper and more generally in quantum information. For further background,
we refer to the textbooks \cite{H06,H13book,Wat18,Wbook17}.

A quantum state is described by a density operator, which is a positive
semi-definite operator with trace equal to one and often denoted by $\rho$,
$\sigma$, $\tau$, etc. A quantum channel $\mathcal{N}_{A\rightarrow B}$ taking
an input quantum system $A$ to an output quantum system $B$ is described by a
completely positive, trace-preserving map. In this paper, we deal exclusively
with finite-dimensional systems, but it is clear that many of the concepts and
results should generalize to quantum states and channels acting on
separable Hilbert spaces.

Let $|\Gamma\rangle_{RA}$ denote the unnormalized maximally entangled vector:
\begin{equation}
|\Gamma\rangle_{RA}:=\sum_{i}|i\rangle_{R}|i\rangle_{A},
\end{equation}
where $\{|i\rangle_{R}\}_{i}$ and $\{|i\rangle_{A}\}_{i}$ are orthonormal
bases for the isomorphic Hilbert spaces $\mathcal{H}_{R}$ and $\mathcal{H}
_{A}$. We repeatedly use the fact that a pure bipartite state $|\psi
\rangle_{RA}$ can be written as $(X_{R}\otimes I_{A})|\Gamma\rangle_{RA}$
where $X_{R}$ is an operator satisfying $\operatorname{Tr}[X_{R}^{\dag}
X_{R}]=1$. For a linear operator~$M$, the following transpose trick identity
holds
\begin{equation}
\left(  I_{R}\otimes M_{A}\right)  |\Gamma\rangle_{RA}=\left(  M_{R}
^{T}\otimes I_{A}\right)  |\Gamma\rangle_{RA}, \label{eq:transpose-trick}
\end{equation}
where $M^{T}$ denotes the transpose of $M$ with respect to the orthonormal
basis $\{|i\rangle_{R}\}_{i}$. For a linear operator $K_{R}$, the following
identity holds
\begin{equation}
\langle\Gamma|_{RA}\left(  K_{R}\otimes I_{A}\right)  |\Gamma\rangle
_{RA}=\operatorname{Tr}[K_{R}]. \label{eq:max-ent-partial-trace}
\end{equation}

The Choi operator $\Gamma_{RB}^{\mathcal{N}}$ of a quantum channel
$\mathcal{N}_{A\rightarrow B}$ is defined as
\begin{equation}
\Gamma_{RB}^{\mathcal{N}}:=\mathcal{N}_{A\rightarrow B}(\Gamma_{RA}),
\end{equation}
where
\begin{equation}
\Gamma_{RA}:=|\Gamma\rangle\!\langle\Gamma|_{RA}.
\end{equation}
The Choi operator is positive semi-definite and satisfies the following
property as a consequence of $\mathcal{N}_{A\rightarrow B}$ being trace
preserving:
\begin{equation}
\operatorname{Tr}_{B}[\Gamma_{RB}^{\mathcal{N}}]=I_{R}.
\end{equation}
The following post-selected teleportation identity \cite{B05}\ allows for
writing the output of a quantum channel $\mathcal{N}_{A\rightarrow B}$\ on an
input quantum state $\rho_{RA}$ in the following way:
\begin{equation}
\mathcal{N}_{A\rightarrow B}(\rho_{RA})=\langle\Gamma|_{AS}\rho_{RA}
\otimes\Gamma_{SB}^{\mathcal{N}}|\Gamma\rangle_{AS}\text{,}
\label{eq:PS-TP-identity}
\end{equation}
where $S$ is a system isomorphic to the channel input system $A$.

\section{Setting of quantum channel parameter estimation and discrimination}

\label{sec:setting-q-ch-est-disc}

We now recall the two related tasks of channel parameter estimation and
discrimination. In the first task, one is interested in estimating an unknown
channel selected from a continuously parameterized family of channels, while
in the latter, the goal is the same but the unknown channel is selected from a
finite set. The metrics used to quantify performance are different and are
explained below. Also, in this paper, we focus exclusively on channel
discrimination of just two quantum channels.

\subsection{Quantum channel parameter estimation}

\label{sec:ch-param-est}Let us now discuss channel parameter estimation in
more detail. Let $\left\{  \mathcal{N}_{A\rightarrow B}^{\theta}\right\}
_{\theta}$ denote a family of quantum channels with input system $A$ and
output system $B$, such that each channel in the family is parameterized by a
single real parameter $\theta\in\Theta\subseteq\mathbb{R}$, where $\Theta$ is
the parameter set. The problem we consider is this: given a particular unknown
channel $\mathcal{N}^{\theta}_{A \to B}$, how well can we estimate $\theta$
when allowed to probe the channel $n$ times? There are various ways that one
can probe the quantum channel $n$ times, but each such procedure results in a
probability distribution $p_{\theta}(x)$ for a final measurement outcome $x$,
with corresponding random variable $X$. This distribution $p_{\theta}(x)$
depends on the unknown parameter $\theta$. Using the measurement outcome $x$,
one formulates an estimate $\hat{\theta}(x)$ of the unknown parameter. An
unbiased estimator satisfies $\mathbb{E}_{p_{\theta}}[\hat{\theta}(X)]=\theta
$. For an unbiased estimator (on which we focus exclusively here), the mean
squared error (MSE) is a commonly considered measure of performance:
\begin{equation}
\text{Var}(\hat{\theta}(X)):=\mathbb{E}[(\hat{\theta}(X)-\theta)^{2}]=\int
dx\ p_{\theta}(x)(\hat{\theta}(x)-\theta)^{2}.
\end{equation}

One major question of interest is to ascertain the optimal scaling of the MSE
with the number $n$ of channel uses. We note that much work has been done
on this topic, with an inexhaustive reference list given by \cite{Sasaki2002,
Fujiwara_2003, Fujiwara2004, Ji2008, Fujiwara2008, Mat10, Hayashi2011,
Demkowicz-Dobrzanski2012, Kolodynski2013, Demkowicz-Dobrzanski2014,
Sekatski2017, Zhou2018, Zhou2019, Zhou2019a}. We also clarify that our approach adopts the frequentist approach to parameter estimation. In general, the MSE and Cramer--Rao bounds may depend on the value of the unknown parameter, in contrast with the more general paradigm of Bayesian parameter estimation \cite{Li2018}. This is alleviated by enforcing the unbiasedness condition.

\begin{figure}[ptb]
\centering
\begin{tikzpicture}[scale=1.8]
	\draw (0.375, 0.75-0.375) -- (0.75,0);
	\draw (0.375, 0.75-0.375) -- (0.75, 1.5-0.75);
	\node at (0.15, 0.75-0.375) {$\rho_{R A_1}$};
	\draw (0.75,0) -- (1,0);
	\draw (1,-0.25) rectangle (1.5,0.25);
	\node at (1.25,0) {$\mathcal{N}^{\theta}$};
	\node at (0.5,0) {\small $A_1$};
	\node at (1.8,0.14) {\small$B_1$};
	\draw (1.5,0) -- (2,0);
	\draw (0.75,1.5-0.75) -- (2,1.5-0.75);
	\draw (2, 0.75+0.25) rectangle (2.5, 0.75-1); 
	\draw (2.5, 1.5-0.75) -- (4, 1.5-.75);
	\node at (2.25, 0.75-0.375) {$\mathcal{S}^1$};
	\draw (3,-0.25) rectangle (3.5,0.25);
	\node at (3.25,0) {$\mathcal{N}^{\theta}$};
	\node at (2.75,0.14) {\small$A_2$};
	\node at (3.75,0.14) {\small$B_2$};
	\draw (3.5,0) -- (4,0); 
	\draw (2.5,0) -- (3,0); 
	\draw (4.5, 0) -- (5, 0); 
	\node at (4.75, 0.14) {$A_3$};
	\draw (4, 0.75+0.25) rectangle (4.5, 0.75-1); 
	\node at (1.8, 1.5-0.6) {$R_1$};
	\node at (3.75, 1.5-0.6) {$R_2$};
	\node at (4.75, 1.5-0.6) {$R_3$};
	\draw (4.5, 1.5-0.75) -- (5, 1.5-0.75);
	\node at (5.75+1, 1.5-0.6) {$R_n$};
	\node at (4.25, 0.75-0.375) {$\mathcal{S}^2$}; 
	\node at (5.25,0.75-0.375) {$\cdots$};
	\draw (4.5+1, 1.5-0.75) -- (6+1, 1.5-0.75);
	\draw (4.5+1,0) -- (5+1,0);
	\node at (4.75+1, 0.14) {$A_n$};
	\node at (5.75+1, 0.14) {$B_n$};
	\draw (5.5+1, 0) -- (6+1, 0);
	\draw (5+1,-0.25) rectangle (5.5+1,0.25);
	\node at (5.25+1, 0) {$\mathcal{N}^{\theta}$};
	\draw (6+1,0.75-1) rectangle (6.75+1,0.75+0.25); 
	\node at (6.375+1,0.75-0.375) {$\Lambda^{\hat{\theta}}$};
	\draw[double distance between line centers=0.2em] (7.75, 0.75-0.375) -- (8, 0.75-0.375);
	\node at (7.92, 0.75-0.2) {$\hat{\theta}$};
\end{tikzpicture}
\caption{Processing $n$ uses of channel $\mathcal{N}^{\theta}$ in a sequential
or adaptive manner is the most general approach to channel parameter
estimation or discrimination. The $n$ uses of the channel are interleaved with
$n$ quantum channels $\mathcal{S}^{1}$ through $\mathcal{S}^{n-1}$, which can
also share memory systems with each other. The final measurement's outcome is
then used to obtain an estimate of the unknown parameter $\theta$. If
${\theta}\in\Theta\subseteq\mathbb{R}$, then this is the task of parameter
estimation. If, on the other hand, ${\theta}\in\{1,2\}$, then this task
corresponds to channel discrimination. }
\label{fig:adaptive-scheme}
\end{figure}
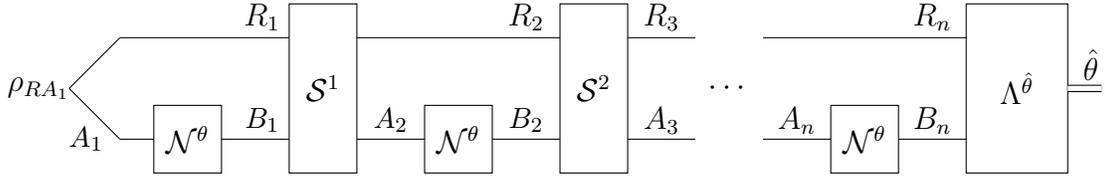The most general channel estimation procedure is depicted in
Figure~\ref{fig:adaptive-scheme}. A sequential or adaptive strategy that makes
$n$ calls to the channel is specified in terms of an input quantum state
$\rho_{R_{1}A_{1}}$, a set of interleaved channels $\{\mathcal{S}_{R_{i}
B_{i}\rightarrow R_{i+1}A_{i+1}}^{i}\}_{i=1}^{n-1}$, and a final quantum
measurement $\{\Lambda_{R_{n}B_{n}}^{\hat{\theta}}\}_{\hat{\theta}}$ that
outputs an estimate $\hat{\theta}$ of the unknown parameter (here we
incorporate any classical post-processing of a preliminary measurement outcome
$x$ to generate the estimate $\hat{\theta}$ as part of the final measurement).
Note that any particular strategy $\{\rho_{R_{1}A_{1}},\{\mathcal{S}
_{R_{i}B_{i}\rightarrow R_{i+1}A_{i+1}}^{i}\}_{i=1}^{n-1},\{\Lambda
_{R_{n}B_{n}}^{\hat{\theta}}\}_{\hat{\theta}}\}$\ employed does not depend on
the actual value of the unknown parameter $\theta$. We make the following
abbreviation for a fixed strategy in what follows:
\begin{equation}
\{\mathcal{S}^{(n)},\Lambda^{\hat{\theta}}\}\equiv\{\rho_{R_{1}A_{1}
},\{\mathcal{S}_{R_{i}B_{i}\rightarrow R_{i+1}A_{i+1}}^{i}\}_{i=1}
^{n-1},\{\Lambda_{R_{n}B_{n}}^{\hat{\theta}}\}_{\hat{\theta}}\}.
\end{equation}
The strategy begins with the estimator preparing the input quantum state
$\rho_{R_{1}A_{1}}$ and sending the $A_{1}$ system into the channel
$\mathcal{N}_{A_{1}\rightarrow B_{1}}^{\theta}$. The first channel
$\mathcal{N}_{A_{1}\rightarrow B_{1}}^{\theta}$ outputs the system $B_{1}$,
which is then available to the estimator. The resulting state is
\begin{equation}
\rho_{R_{1}B_{1}}^{\theta}:=\mathcal{N}_{A_{1}\rightarrow B_{1}}^{\theta}
(\rho_{R_{1}A_{1}}).
\end{equation}
The estimator adjoins the system $B_{1}$ to system $R_{1}$ and applies the
channel $\mathcal{S}_{R_{1}B_{1}\rightarrow R_{2}A_{2}}^{1}$, leading to the
state
\begin{equation}
\rho_{R_{2}A_{2}}^{\theta}:=\mathcal{S}_{R_{1}B_{1}\rightarrow R_{2}A_{2}}
^{1}(\rho_{R_{1}B_{1}}^{\theta}).
\end{equation}
The channel $\mathcal{S}_{R_{1}B_{1}\rightarrow R_{2}A_{2}}^{1}$ can take an
action conditioned on information in the system $B_{1}$, which itself might
contain some partial information about the unknown parameter~$\theta$. The
estimator then inputs the system $A_{2}$ into the second use of the channel
$\mathcal{N}_{A_{2}\rightarrow B_{2}}^{\theta}$, which outputs a system
$B_{2}$ and gives the state
\begin{equation}
\rho_{R_{2}B_{2}}^{\theta}:=\mathcal{N}_{A_{2}\rightarrow B_{2}}^{\theta}
(\rho_{R_{2}A_{2}}^{\theta}).
\end{equation}
This process repeats $n-2$ more times, for which we have the intermediate
states
\begin{align}
\rho_{R_{i}B_{i}}^{\theta}  &  :=\mathcal{N}_{A_{i}\rightarrow B_{i}}^{\theta
}(\rho_{R_{i}A_{i}}^{\theta}),\\
\rho_{R_{i}A_{i}}^{\theta}  &  :=\mathcal{S}_{R_{i-1}B_{i-1}\rightarrow
R_{i}A_{i}}^{i-1}(\rho_{R_{i-1}B_{i-1}}^{\theta}),
\end{align}
for $i\in\left\{  3,\ldots,n\right\}  $, and at the end, the estimator has
systems $R_{n}$ and $B_{n}$. We define $\omega_{R_{n}B_{n}}^{\theta}$ to be
the final state of the estimation protocol before the final measurement
$\{\Lambda_{R_{n}B_{n}}^{\hat{\theta}}\}_{\hat{\theta}}$:
\begin{equation}
\omega_{R_{n}B_{n}}^{\theta}:=(\mathcal{N}_{A_{n}\rightarrow B_{n}}^{\theta
}\circ\mathcal{S}_{R_{n-1}B_{n-1}\rightarrow R_{n}A_{n}}^{n-1}\circ\cdots
\circ\mathcal{S}_{R_{1}B_{1}\rightarrow R_{2}A_{2}}^{1}\circ\mathcal{N}
_{A_{1}\rightarrow B_{1}}^{\theta})(\rho_{R_{1}A_{1}}).
\label{eq:estimation-final-state}
\end{equation}
The estimator finally performs a measurement $\{\Lambda_{R_{n}B_{n}}
^{\hat{\theta}}\}_{\hat{\theta}}$ that outputs an estimate $\hat{\theta}$ of
the unknown parameter $\theta$. The conditional probability for the estimate
$\hat{\theta}$\ given the unknown parameter $\theta$ is given by the Born
rule:
\begin{equation}
p_{\theta}(\hat{\theta})=\operatorname{Tr}[\Lambda_{R_{n}B_{n}}^{\hat{\theta}
}\omega_{R_{n}B_{n}}^{\theta}]. \label{eq:cond-prob-adaptive}
\end{equation}
As we stated above, any particular strategy does not depend on the value of
the unknown parameter $\theta$, but the states at each step of the protocol do
depend on $\theta$ through the successive probings of the underlying channel
$\mathcal{N}_{A\rightarrow B}^{\theta}$.

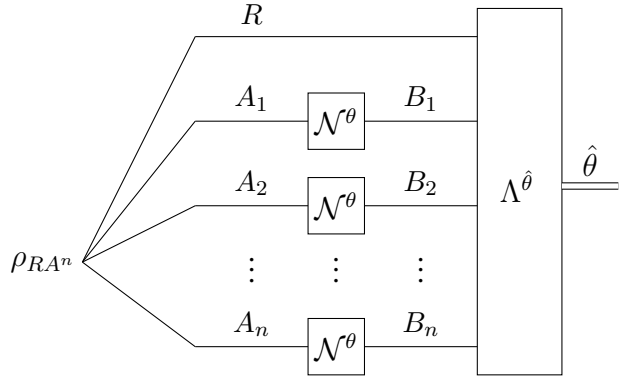
\begin{figure}[ptb]
\centering
\begin{tikzpicture}[scale=1.5]
	\draw (-1, -1.25) -- (0, 0);
	\draw (-1, -1.25) -- (0, -0.75);
	\draw (-1, -1.25) -- (0, -2);
	\draw (-1, -1.25) -- (0, 3.5-2.75);
	\node at (-1.35, -1.25) {$\rho_{R A^n}$};
	\node at (1-0.5, -1.25) {$\vdots$};
	\node at (2, -1.25) {$\vdots$};
	\draw (0,0) -- (1,0);
	\draw (1,-0.25) rectangle (1.5,0.25);
	\node at (1.25,0) {$\mathcal{N}^{\theta}$};
	\node at (0.5,0.2) {\small $A_1$};
	\node at (2,0.2) {\small$B_1$};
	\draw (1.5,0) -- (2.5,0);
	\draw (0,-0.75) -- (1,-0.75);
	\draw (1,-1) rectangle (1.5,-0.5);
	\node at (1.25,-0.75) {$\mathcal{N}^{\theta}$};
	\node at (0.5,0.2-0.75) {\small$A_2$};
	\node at (2,0.2-0.75) {\small$B_2$};
	\draw (1.5,-0.75) -- (2.5,-0.75);
	\draw (0,-2) -- (1,-2);
	\draw (1,-2.25) rectangle (1.5,-1.75);
	\node at (1.25,-2) {$\mathcal{N}^{\theta}$};
	\node at (0.5,-1.8) {\small$A_n$};
	\node at (2,-1.8) {\small$B_n$};
	\draw (1.5,-2) -- (2.5,-2);
	\node at (1.25,-1.25) {$\vdots$};
	\node at (0.5,3.5-2.55) {\small$R$};
	\draw (0,3.5-2.75) -- (2.5,3.5-2.75);
	\draw (2.5,0.75-3) rectangle (3.25,0.75+0.25);
	\node at (2.850,0.75-1.325) {$\Lambda^{\hat{\theta}}$};
	\draw[double distance between line centers=0.2em] (3.25,0.75-1.325) -- (3.75,0.75-1.325);
	\node at (3.5, 0.75-1.1) {$\hat{\theta}$};
\end{tikzpicture}
\caption{Processing $n$ uses of channel $\mathcal{N}^{\theta}$ in a parallel
manner. The $n$ channels are called in parallel, allowing for entanglement to
be shared among input systems $A_{1}$ through $A_{n}$, along with a quantum
memory system $R$. A collective measurement is made, with its outcome being an
estimate $\hat{\theta}$ for the unknown parameter $\theta$. Parallel
strategies form a special case of sequential ones, and therefore parallel
strategies are no more powerful than sequential ones.}
\label{fig:parallel-scheme}
\end{figure}

Note that such a sequential strategy contains a parallel or non-adaptive
strategy as a special case: the system $R_{1}$ can be arbitrarily large and
divided into subsystems, with the only role of the interleaved channels
$\mathcal{S}_{R_{i}B_{i}\rightarrow R_{i+1}A_{i+1}}^{i}$ being that they
redirect these subsystems to be the inputs of future calls to the channel (as
would be the case in any non-adaptive strategy for estimation or
discrimination). Figure~\ref{fig:parallel-scheme}\ depicts a parallel or
non-adaptive channel estimation strategy.

One main goal of the present paper is to place a lower bound on the MSE\ of a
general sequential strategy for channel parameter estimation, such that the
lower bound is a function solely of the channel family $\left\{
\mathcal{N}_{A\rightarrow B}^{\theta}\right\}  _{\theta}$ and the number
$n$\ of channel uses. Such a bound indicates a fundamental limitation for
channel estimation that cannot be improved upon by any possible estimation strategy.

\subsection{Quantum channel discrimination}

\label{sec:ch-disc}

The operational setting for quantum channel discrimination is exactly as
described above, and the only difference is that $\theta\in\Theta= \left\{
1,\ldots,d\right\}  $ for some integer $d$. In this work, we focus exclusively
on the case $d=2$ for channel discrimination.

\subsubsection{Symmetric setting}

In this subsection, we recall the setting of symmetric or Bayesian channel
discrimination in which there is a prior probability distribution for $\theta
$: $\Pr[\theta=1]=p\in\left(  0,1\right)  $ and $\Pr[\theta=2]=1-p$. The
relevant measure of performance of a given channel discrimination strategy
$\{\mathcal{S}^{(n)} ,\Lambda^{\hat{\theta}}\}$ is the expected error
probability:
\begin{align}
&  p_{e}^{(n)}(\{\mathcal{N}^{\theta}\}_{\theta},\{\mathcal{S}^{(n)}
,\Lambda^{\hat{\theta}}\})\nonumber\\
&  =\Pr[\hat{\theta}\neq\theta]\\
&  =\Pr[\theta=1]\Pr[\hat{\theta}=2|\theta=1]+\Pr[\theta=2]\Pr[\hat{\theta
}=1|\theta=2]\\
&  =p\operatorname{Tr}[\Lambda_{R_{n}B_{n}}^{\hat{\theta}=2}\omega_{R_{n}
B_{n}}^{\theta=1}]+\left(  1-p\right)  \operatorname{Tr}[\Lambda_{R_{n}B_{n}
}^{\hat{\theta}=1}\omega_{R_{n}B_{n}}^{\theta=2}]\\
&  =p\operatorname{Tr}[(I_{R_{n}B_{n}}-\Lambda_{R_{n}B_{n}})\omega_{R_{n}
B_{n}}^{\theta=1}]+\left(  1-p\right)  \operatorname{Tr}[\Lambda_{R_{n}B_{n}
}\omega_{R_{n}B_{n}}^{\theta=2}],
\end{align}
where $\omega_{R_{n}B_{n}}^{\theta}$ is the state at the end of the protocol,
as defined in \eqref{eq:estimation-final-state}, and we made the abbreviation
$\Lambda_{R_{n}B_{n}}\equiv\Lambda_{R_{n}B_{n}}^{\hat{\theta}=1}$. We can also
write the error probability in conventional notation as follows:
\begin{equation}
p_{e}^{(n)}(\{\mathcal{N}^{\theta}\}_{\theta},\{\mathcal{S}^{(n)}
,\Lambda^{\hat{\theta}}\}):=p\alpha_{n}(\{\mathcal{S}^{(n)},\Lambda
^{\hat{\theta}}\})+\left(  1-p\right)  \beta_{n}(\{\mathcal{S}^{(n)}
,\Lambda^{\hat{\theta}}\}),
\end{equation}
where $\alpha_{n}$ is called the Type~I error probability and $\beta_{n}$ the
Type~II error probability:
\begin{align}
\alpha_{n}(\{\mathcal{S}^{(n)},\Lambda^{\hat{\theta}}\})  &
:=\operatorname{Tr}[(I_{R_{n}B_{n}}-\Lambda_{R_{n}B_{n}})\omega_{R_{n}B_{n}
}^{\theta=1}],\label{eq:type-I-err-prob}\\
\beta_{n}(\{\mathcal{S}^{(n)},\Lambda^{\hat{\theta}}\})  &
:=\operatorname{Tr}[\Lambda_{R_{n}B_{n}}\omega_{R_{n}B_{n}}^{\theta=2}].
\label{eq:type-II-err-prob}
\end{align}
By optimizing the final measurement, we arrive at the following optimized
error probability:
\begin{align}
p_{e}^{(n)}(\{\mathcal{N}^{\theta}\}_{\theta},\mathcal{S}^{(n)})  &
:=\inf_{\{\Lambda^{\hat{\theta}}\}_{\hat{\theta}}}p_{e}^{(n)}(\{\mathcal{N}
^{\theta}\}_{\theta},\mathcal{S}^{(n)},\Lambda^{\hat{\theta}})\\
&  =\frac{1}{2}\left(  1-\left\Vert p\omega_{R_{n}B_{n}}^{\theta=1}-\left(
1-p\right)  \omega_{R_{n}B_{n}}^{\theta=2}\right\Vert _{1}\right)  ,
\end{align}
where the last equality follows from a standard result in quantum state
discrimination theory \cite{Hel69,Hol72,Hel76}. We can perform a further
optimization over all discrimination strategies to arrive at the optimal
expected error probability:
\begin{align}
p_{e}^{(n)}(\{\mathcal{N}^{\theta}\}_{\theta})  &  :=\inf_{\mathcal{S}^{(n)}
}p_{e}^{(n)}(\{\mathcal{N}^{\theta}\}_{\theta},\mathcal{S}^{(n)})\\
&  =\frac{1}{2}\left(  1-\left\Vert p(\mathcal{N}^{\theta=1})^{(n)}-\left(
1-p\right)  (\mathcal{N}^{\theta=2})^{(n)}\right\Vert _{\diamond n}\right)  ,
\end{align}
where the quantum strategy distance \cite{Gutoski2007,Gutoski2010,Gutoski2012}
(see also \cite{Chiribella2008,Chiribella2009}) is defined as
\begin{equation}
\left\Vert p(\mathcal{N}^{\theta=1})^{(n)}-\left(  1-p\right)  (\mathcal{N}
^{\theta=2})^{(n)}\right\Vert _{\diamond n}:=\sup_{\mathcal{S}^{(n)}
}\left\Vert p\omega_{R_{n}B_{n}}^{\theta=1}-\left(  1-p\right)  \omega
_{R_{n}B_{n}}^{\theta=2}\right\Vert _{1}.
\end{equation}
Although the strategy distance can be computed by means of a semi-definite
program \cite{Gutoski2012}, this fact is only useful for small $n$ and
small-dimensional channels because the difficulty in calculating grows quickly
as $n$ becomes larger (see \cite{KW20} for explicit examples of the
calculation of the strategy distance).

As such, we are interested in the exponential rate at which the expected error
probability converges to zero in the limit as $n$ becomes larger:
\begin{equation}
\xi_{n}(p,\{\mathcal{N}^{\theta}\}_{\theta}):=-\frac{1}{n}\ln p_{e}
^{(n)}(\{\mathcal{N}^{\theta}\}_{\theta}).
\label{eq:non-asymptotic-Chernoff-exp}
\end{equation}
This quantity is called the non-asymptotic Chernoff exponent of quantum
channels \cite{Berta2018c}, and its asymptotic counterparts are defined as
\begin{equation}
\underline{\xi
}(\{\mathcal{N}^{\theta}\}_{\theta}):=\liminf_{n\rightarrow\infty}\xi
_{n}(p,\{\mathcal{N}^{\theta}\}_{\theta}),
\qquad
\overline{\xi}(\{\mathcal{N}^{\theta}\}_{\theta}):=\limsup_{n\rightarrow
\infty}\xi_{n}(p,\{\mathcal{N}^{\theta}\}_{\theta}).
\label{eq:asymptotic-Chernoff}
\end{equation}
The asymptotic quantities 
 $\underline{\xi}(\{\mathcal{N}^{\theta}\}_{\theta})$ and
$\overline{\xi}(\{\mathcal{N}^{\theta}\}_{\theta})$
are independent of
the particular value of $p\in(0,1)$.

Another goal of the present paper is to establish an improved upper bound on
$\xi_{n}(p,\{\mathcal{N}^{\theta}\}_{\theta})$ and thus on $\overline{\xi
}(\{\mathcal{N}^{\theta}\}_{\theta})$.

\subsubsection{Asymmetric setting -- Hoeffding error exponent}

Another setting of interest for channel discrimination is called the Hoeffding
error exponent setting (see, e.g., \cite{CMW14,Berta2018c}). In this case,
there is no assumed prior probability on the parameter $\theta$. In this
setting, the Type~II\ error probability $\beta_{n}$\ in
\eqref{eq:type-II-err-prob}\ is constrained to decrease exponentially at a
fixed rate $r>0$, and the objective is to determine the optimal exponential
rate of decay for the Type~I error probability $\alpha_{n}$ in
\eqref{eq:type-I-err-prob}, subject to this constraint. Formally, the
non-asymptotic Hoeffding error exponent is defined as follows
\cite{Berta2018c}:
\begin{equation}
B_{n}(r,\{\mathcal{N}^{\theta}\}_{\theta}):=\sup_{\{\mathcal{S}^{(n)}
,\Lambda^{\hat{\theta}}\}}\left\{  -\frac{1}{n}\ln\alpha_{n}(\{\mathcal{S}
^{(n)},\Lambda^{\hat{\theta}}\})\middle|-\frac{1}{n}\ln\beta_{n}
(\{\mathcal{S}^{(n)},\Lambda^{\hat{\theta}}\})\geq r\right\}  ,
\end{equation}
and its asymptotic variants as follows:
\begin{equation}
\underline{B}(r,\{\mathcal{N}^{\theta}\}_{\theta}):=\liminf_{n\rightarrow
\infty}B_{n}(r,\{\mathcal{N}^{\theta}\}_{\theta}),\qquad\overline
{B}(r,\{\mathcal{N}^{\theta}\}_{\theta}):=\limsup_{n\rightarrow\infty}
B_{n}(r,\{\mathcal{N}^{\theta}\}_{\theta}). \label{eq:asymptotic-Hoeffding}
\end{equation}

\subsubsection{Sequential channel discrimination with repetition}

\label{sec:seq-ch-w-rep}

As a variation of the general channel discrimination setting discussed in
Section~\ref{sec:ch-disc}, we can consider a more specialized setting that
we call sequential channel discrimination with repetition. In this setting,
the general, $n$-round channel discrimination protocol discussed in
Section~\ref{sec:ch-disc}\ is repeated $m$ times, such that the final state of
the protocol is $(\omega_{R_{n}B_{n}}^{\theta})^{\otimes m}$, where
$\omega_{R_{n}B_{n}}^{\theta}$ is defined in
\eqref{eq:estimation-final-state}. One can then perform a collective
measurement $\{\Lambda_{(R_{n}B_{n})^{m}}^{\hat{\theta}}\}$ on this final
state, where the notation $(R_{n}B_{n})^{m}$ is a shorthand for all of the
remaining systems at the end of the $nm$ calls to the channel. We abbreviate
such a protocol with the notation $\{\mathcal{S}^{(n)},\Lambda^{\hat{\theta
},(m)}\}$, which indicates that the protocol $\mathcal{S}^{(n)}$ is fixed, but
the final measurement is performed on $m$ systems. The two kinds of errors in
such a protocol are then defined as follows:
\begin{align}
\alpha_{n,m}(\{\mathcal{S}^{(n)},\Lambda^{\hat{\theta},(m)}\})  &
:=\operatorname{Tr}[(I_{R_{n}B_{n}}-\Lambda_{(R_{n}B_{n})^{m}}^{\hat{\theta
}=1})(\omega_{R_{n}B_{n}}^{\theta=1})^{\otimes m}],\\
\beta_{n,m}(\{\mathcal{S}^{(n)},\Lambda^{\hat{\theta},(m)}\})  &
:=\operatorname{Tr}[\Lambda_{(R_{n}B_{n})^{m}}^{\hat{\theta}=1}(\omega
_{R_{n}B_{n}}^{\theta=2})^{\otimes m}].
\end{align}
This somewhat specialized setting has been considered in the context of quantum
channel estimation \cite{Yuan2017}.\ We refer to such a protocol as an $(n,m)$
protocol for sequential channel discrimination with repetition.

Of course, sequential channel discrimination with repetition is special kind
of channel discrimination protocol of the form discussed in
Section~\ref{sec:ch-disc}, in which the channel is called $nm$ times. Thus,
the optimal error probabilities involved in $(n,m)$ sequential channel
discrimination with repetition cannot be smaller than the optimal error
probabilities in a general channel discrimination protocol that calls the
channel $nm$ times. At the same time, a general channel discrimination
protocol that calls the channel $n$ times is trivially an $(n,1)$ sequential
channel discrimination protocol with repetition (however, the phrase
\textquotedblleft with repetition\textquotedblright\ is not particular apt in
this specialized instance).

We can define the non-asymptotic Chernoff and Hoeffding error exponents in a
similar way to how they were defined in the previous section. The
non-asymptotic Chernoff exponent is defined as
\begin{equation}
\xi_{n,m}(p,\{\mathcal{N}^{\theta}\}_{\theta}):=-\frac{1}{nm}\ln p_{e}
^{(n,m)}(\{\mathcal{N}^{\theta}\}_{\theta}),
\end{equation}
where
\begin{equation}
p_{e}^{(n,m)}(\{\mathcal{N}^{\theta}\}_{\theta}):=\inf_{\{\mathcal{S}
^{(n)},\Lambda^{\hat{\theta},(m)}\}}p\alpha_{n,m}(\{\mathcal{S}^{(n)}
,\Lambda^{\hat{\theta},(m)}\})+\left(  1-p\right)  \beta_{n,m}(\{\mathcal{S}
^{(n)},\Lambda^{\hat{\theta},(m)}\}),
\end{equation}
and the non-asymptotic Hoeffding exponent as
\begin{multline}
B_{n,m}(r,\{\mathcal{N}^{\theta}\}_{\theta}):=\\
\sup_{\{\mathcal{S}^{(n)},\Lambda^{\hat{\theta},(m)}\}}\left\{  -\frac{1}
{nm}\ln\alpha_{n,m}(\{\mathcal{S}^{(n)},\Lambda^{\hat{\theta},(m)}
\})\middle|-\frac{1}{nm}\ln\beta_{n,m}(\{\mathcal{S}^{(n)},\Lambda
^{\hat{\theta},(m)}\})\geq r\right\}  .
\end{multline}
From these non-asymptotic quantities, one can then define asymptotic
quantities similar to \eqref{eq:asymptotic-Chernoff} and
\eqref{eq:asymptotic-Hoeffding}. However, note that they might possibly depend on
the order of limits (whether one takes $\lim_{m\rightarrow\infty}$ or
$\lim_{n\rightarrow\infty}$ first). Another contribution of our paper is to
establish upper bounds on the asymptotic versions of $\xi_{n,m}
(p,\{\mathcal{N}^{\theta}\}_{\theta})$\ and $B_{n,m}(r,\{\mathcal{N}^{\theta
}\}_{\theta})$ that hold in the case that we take the limit $\lim
_{m\rightarrow\infty}$ first, followed by the limit $\lim_{n\rightarrow\infty
}$.

\section{Limits on quantum channel parameter estimation}

\label{sec:bounds-on-estimation}

\subsection{Classical and quantum Fisher information} \label{subsec:classical-q-fisher-info}

\subsubsection{Classical Fisher information and its operational relevance}

Let us first recall some fundamental results well known in classical
estimation theory \cite{Cram46,Rao45} (see also \cite{Kay93}). Here, we
suppose that there is a family $\{p_{\theta}\}_{\theta}$\ of probability
distributions that are a function of the unknown parameter $\theta\in
\Theta\subseteq\mathbb{R} $, and the goal is to produce an estimate
$\hat{\theta}$ of $\theta$ from $n$ independent samples of the distribution
$p_{\theta}(x)$. It is clear that the estimate can improve as the number
$n$\ of samples becomes large, but we are interested in how the MSE\ scales
with $n$, as well as particular scaling factors.

Let $\{p_{\theta}(x)\}_{\theta}$\ denote a family of probability density
functions. Suppose that the family $\{p_{\theta}(x)\}_{\theta}$ is
differentiable with respect to the parameter $\theta$, so that $\partial
_{\theta}p_{\theta}(x)$ exists for all values of $\theta$ and $x$, where
$\partial_{\theta}\equiv\frac{\partial}{\partial\theta}$. The classical Fisher
information $I_{F}(\theta;\{p_{\theta}\}_{\theta})$ of the family
$\{p_{\theta}(x)\}_{\theta}$ is defined as follows:
\begin{equation}
I_{F}(\theta;\{p_{\theta}\}_{\theta}):=\left\{
\begin{array}
[c]{cc}
\int_{\Omega}dx\ \frac{1}{p_{\theta}(x)}\left(  \partial_{\theta}p_{\theta
}(x)\right)  ^{2} & \text{if }\operatorname{supp}(\partial_{\theta}p_{\theta
})\subseteq\operatorname{supp}(p_{\theta})\\
+\infty & \text{otherwise}
\end{array}
\right.  , \label{eq:CFI}
\end{equation}
where $\Omega$ is the sample space for the probability density function
$p_{\theta}(x)$. When the support condition
\begin{equation}
\operatorname{supp}(\partial_{\theta}p_{\theta})\subseteq\operatorname{supp}
(p_{\theta})
\end{equation}
is satisfied (understood as \textquotedblleft essential
support\textquotedblright), the classical Fisher information has the following
alternative expression:
\begin{equation}
I_{F}(\theta;\{p_{\theta}\}_{\theta})    =\int_{\Omega}dx\ p_{\theta
}(x)\left(  \partial_{\theta}\ln p_{\theta}(x)\right)  ^{2}
  =\mathbb{E}_{p_{\theta}}[\left(  \partial_{\theta}\ln p_{\theta}(X)\right)
^{2}], \label{eq:CFI-for-RLD}
\end{equation}
interpreted as the variance of the surprisal rate $\partial_{\theta}[- \ln p_{\theta}(x)]$.

One of the fundamental results of classical estimation theory
\cite{Cram46,Rao45,Kay93} is the Cramer--Rao lower bound on the MSE\ of an
unbiased estimator of $\theta$:
\begin{equation}
\text{Var}(\hat{\theta})\geq\frac{1}{nI_{F}(\theta;\{p_{\theta}\}_{\theta})}.
\label{eq:CRB-classical}
\end{equation}
The Cramer--Rao bound can be saturated, in the sense that there exists an
estimator, the maximum likelihood estimator, having an MSE\ that achieves the
lower bound in the large~$n$ limit of many independent trials
\cite{fisher_1925}.

\subsubsection{SLD Fisher information and its operational relevance}

\label{subsubsec:SLD-fish}

The classical Cramer--Rao bound (CRB) can be generalized to a quantum scenario
\cite{Hel67,Hel69,Hel76} (see also \cite{H11book}). Let $\{\rho_{\theta
}\}_{\theta}$ denote a family of quantum states into which the parameter
$\theta\in\Theta\subseteq\mathbb{R}$ is encoded. One can then subject $n$ copies of this
state $\rho_{\theta}$ to a quantum measurement $\{\Lambda_{x}\}_{x}$ to yield
a classical probability distribution according to the Born rule:
\begin{equation}
p_{\theta}(x)=\operatorname{Tr}[\Lambda_{x}\rho^{\otimes n}_{\theta}],
\end{equation}
from which one then forms an estimate $\hat{\theta}$.
Suppose that the family $\{\rho_{\theta}\}_{\theta}$ of quantum states is
differentiable with respect to $\theta$, so that $\partial_{\theta}
\rho_{\theta}$ exists for all values of $\theta$. We can then apply the
classical CRB\ as given in \eqref{eq:CRB-classical}, but it is desirable in
the quantum case to perform the best possible measurement in order to know the
scaling of any possible quantum estimation strategy. The optimal measurement
leads to the \textit{most informative} CRB, which is called the quantum CRB
(QCRB) and is given as the following bound on the variance of an unbiased estimator of $\theta$:
\begin{equation}
\text{Var}(\hat{\theta})\geq\frac{1}{nI_{F}(\theta;\{\rho_{\theta}\}_{\theta
})}, \label{eq:QCRB}
\end{equation}
where $I_{F}(\theta;\{\rho_{\theta}\}_{\theta})$ is the symmetric logarithmic
derivative (SLD) quantum Fisher information, given in
Definition~\ref{def:SLD-Fish-states}\ below, and we have applied the additivity relation $I_{F}(\theta;\{\rho^{\otimes n}_{\theta}\}_{\theta
}) = nI_{F}(\theta;\{\rho_{\theta}\}_{\theta
})$. The lower bound in
\eqref{eq:QCRB} is achievable in the large $n$ limit of many copies of the
state $\rho_{\theta}$ \cite{Nag89,Braunstein1994a}.

\begin{definition}
[SLD\ Fisher information]\label{def:SLD-Fish-states}Let $\{\rho_{\theta
}\}_{\theta}$ be a differentiable family of quantum states. Then the
SLD\ Fisher information is defined as follows:
\begin{equation}
I_{F}(\theta;\{\rho_{\theta}\}_{\theta})=\left\{
\begin{array}
[c]{cc}
2\left\Vert \left(  \rho_{\theta}\otimes I+I\otimes\rho_{\theta}^{T}\right)
^{-\frac{1}{2}}\left(  (\partial_{\theta}\rho_{\theta})\otimes I\right)
|\Gamma\rangle\right\Vert _{2}^{2} & \text{if }\Pi_{\rho_{\theta}}^{\perp
}(\partial_{\theta}\rho_{\theta})\Pi_{\rho_{\theta}}^{\perp}=0\\
+\infty & \text{otherwise}
\end{array}
\right.  , \label{eq:basis-independent-formula-SLD}
\end{equation}
where $\Pi_{\rho_{\theta}}^{\perp}$ denotes the projection onto the kernel of
$\rho_{\theta}$, $|\Gamma\rangle=\sum_{i}|i\rangle|i\rangle$ is the
unnormalized maximally entangled vector, $\{|i\rangle\}_{i}$ is any
orthonormal basis, the transpose in \eqref{eq:basis-independent-formula-SLD}
is with respect to this basis, and the inverse is taken on the support of
$\rho_{\theta}\otimes I+I\otimes\rho_{\theta}^{T}$.
\end{definition}

Let the spectral decomposition of $\rho_{\theta}$ be given as
\begin{equation}
\rho_{\theta}=\sum_{j}\lambda_{\theta}^{j}|\psi_{\theta}^{j}\rangle\!\langle
\psi_{\theta}^{j}|,
\end{equation}
which includes the indices for which $\lambda_{\theta}^{j}=0$. Then the
projection $\Pi_{\rho_{\theta}}^{\perp}$ onto the kernel of $\rho_{\theta}$ is
given by
\begin{equation}
\Pi_{\rho_{\theta}}^{\perp}:=\sum_{j:\lambda_{\theta}^{j}=0}|\psi_{\theta}
^{j}\rangle\!\langle\psi_{\theta}^{j}|.
\end{equation}
With this notation, the SLD\ quantum Fisher information can also be written as
follows, as discussed in Appendix~\ref{app:basis-ind-dep-formulas-SLD-FI}:
\begin{equation}
I_{F}(\theta;\{\rho_{\theta}\}_{\theta})=\left\{
\begin{array}
[c]{cc}
2\sum_{j,k:\lambda_{j}^{\theta}+\lambda_{k}^{\theta}>0}\frac{|\langle
\psi_{\theta}^{j}|(\partial_{\theta}\rho_{\theta})|\psi_{\theta}^{k}
\rangle|^{2}}{\lambda_{\theta}^{j}+\lambda_{\theta}^{k}} & \text{if }\Pi
_{\rho_{\theta}}^{\perp}(\partial_{\theta}\rho_{\theta})\Pi_{\rho_{\theta}
}^{\perp}=0\\
+\infty & \text{otherwise}
\end{array}
\right.  . \label{eq:SLD-Fish-info-formula}
\end{equation}
The formula in \eqref{eq:basis-independent-formula-SLD} has the advantage that
it is basis independent, with no need to perform a spectral decomposition in
order to calculate the SLD\ Fisher information. It also leads to a
semi-definite program for calculating the SLD Fisher information, as we show
in Section~\ref{sec:SDPs-for-FIs}.

As we discuss in more detail in
Appendix~\ref{app:phys-cons-SLD-RLD-Fish-states}, the finiteness condition
\begin{equation}
\Pi_{\rho_{\theta}}^{\perp}\partial_{\theta}\rho_{\theta}\Pi_{\rho_{\theta}
}^{\perp}=0 \label{eq:finiteness-condition-SLD-Fish-states}
\end{equation}
in \eqref{eq:basis-independent-formula-SLD} is equivalent to the following condition:
\begin{equation}
\forall j,k:\langle\psi_{j}^{\theta}|(\partial_{\theta}\rho_{\theta})|\psi
_{k}^{\theta}\rangle=0\text{ if }\lambda_{j}^{\theta}+\lambda_{k}^{\theta}=0,
\label{eq:finiteness-condition-SLD-Fish-alt}
\end{equation}
which is helpful for understanding the formula in \eqref{eq:SLD-Fish-info-formula}.

Note that the condition $\Pi_{\rho_{\theta}}^{\perp}\partial_{\theta}
\rho_{\theta}\Pi_{\rho_{\theta}}^{\perp}=0$ is not equivalent to
$\operatorname{supp}(\partial_{\theta}\rho_{\theta})\subseteq
\operatorname{supp}(\rho_{\theta})$. The latter condition $\operatorname{supp}
(\partial_{\theta}\rho_{\theta})\subseteq\operatorname{supp}(\rho_{\theta})$
is equivalent to $\Pi_{\rho_{\theta}}^{\perp}\partial_{\theta}\rho_{\theta
}=\partial_{\theta}\rho_{\theta}\Pi_{\rho_{\theta}}^{\perp}=0$ and implies
$\Pi_{\rho_{\theta}}^{\perp}\partial_{\theta}\rho_{\theta}\Pi_{\rho_{\theta}
}^{\perp}=0$, but the converse is not necessarily true. To elaborate on this
point, consider that we can write the operator $\partial_{\theta}\rho_{\theta
}$ with respect to the Hilbert space decomposition $\operatorname{supp}
(\rho_{\theta})\oplus\ker(\rho_{\theta})$ in the following matrix form:
\begin{equation}
\partial_{\theta}\rho_{\theta}=
\begin{bmatrix}
(\partial_{\theta}\rho_{\theta})_{0,0} & (\partial_{\theta}\rho_{\theta
})_{0,1}\\
(\partial_{\theta}\rho_{\theta})_{0,1}^{\dag} & (\partial_{\theta}\rho
_{\theta})_{1,1}
\end{bmatrix}
,
\end{equation}
where
\begin{equation}
(\partial_{\theta}\rho_{\theta})_{0,0}:=\Pi_{\rho_{\theta}}\partial_{\theta
}\rho_{\theta}\Pi_{\rho_{\theta}},\quad(\partial_{\theta}\rho_{\theta}
)_{0,1}:=\Pi_{\rho_{\theta}}\partial_{\theta}\rho_{\theta}\Pi_{\rho_{\theta}
}^{\perp},\quad(\partial_{\theta}\rho_{\theta})_{1,1}:=\Pi_{\rho_{\theta}
}^{\perp}\partial_{\theta}\rho_{\theta}\Pi_{\rho_{\theta}}^{\perp}.
\end{equation}
The constraint $\operatorname{supp}(\partial_{\theta}\rho_{\theta}
)\subseteq\operatorname{supp}(\rho_{\theta})$ implies that both $(\partial
_{\theta}\rho_{\theta})_{0,1}$ and $(\partial_{\theta}\rho_{\theta})_{1,1}$
are zero, whereas the constraint $\Pi_{\rho_{\theta}}^{\perp}\partial_{\theta
}\rho_{\theta}\Pi_{\rho_{\theta}}^{\perp}=0$ implies that $(\partial_{\theta
}\rho_{\theta})_{1,1}$ is zero.

As we also show in Appendix~\ref{app:phys-cons-SLD-RLD-Fish-states}, when the
finiteness condition in
\eqref{eq:finiteness-condition-SLD-Fish-states}\ holds, a formula alternative
but equal to \eqref{eq:SLD-Fish-info-formula}\ is as follows:
\begin{equation}
I_{F}(\theta;\{\rho_{\theta}\}_{\theta})=2\sum_{j,k:\lambda_{j}^{\theta
},\lambda_{k}^{\theta}>0}\frac{|\langle\psi_{\theta}^{j}|(\partial_{\theta
}\rho_{\theta})|\psi_{\theta}^{k}\rangle|^{2}}{\lambda_{\theta}^{j}
+\lambda_{\theta}^{k}}+4\sum_{j:\lambda_{j}^{\theta}>0}\frac{\langle
\psi_{\theta}^{j}|(\partial_{\theta}\rho_{\theta})\Pi_{\rho_{\theta}}^{\perp
}(\partial_{\theta}\rho_{\theta})|\psi_{\theta}^{j}\rangle}{\lambda_{\theta
}^{j}}. \label{eq:SLD-Fish-alt-formula-kernel-rho-theta}
\end{equation}

For a differentiable family $\{|\varphi_{\theta}\rangle\!\langle\varphi_{\theta
}|\}_{\theta}$ of pure states, we discuss in
Appendix~\ref{sec:pure-state-fam-examps}\ how the formula in
\eqref{eq:SLD-Fish-info-formula} reduces to the well known expression
\cite{Braunstein1994a,Fuji94}:
\begin{equation}
I_{F}(\theta;\{|\varphi_{\theta}\rangle\!\langle\varphi_{\theta}|\}_{\theta
})=4\left[  \langle\partial_{\theta}\phi_{\theta}|\partial_{\theta}
\phi_{\theta}\rangle-\left\vert \langle\partial_{\theta}\phi_{\theta}
|\phi_{\theta}\rangle\right\vert ^{2}\right]  .
\label{eq:pure-state-reduction-SLD-Fish}
\end{equation}
That is, for all pure-state differentiable families, the finiteness condition
in \eqref{eq:finiteness-condition-SLD-Fish-states} always holds and one can
employ the formula in \eqref{eq:SLD-Fish-info-formula} to arrive at the
expression in \eqref{eq:pure-state-reduction-SLD-Fish}.

The following proposition demonstrates that the definition in
\eqref{eq:SLD-Fish-info-formula} is physically consistent, in the sense that
it is the result of a limiting procedure in which some constant additive noise vanishes:

\begin{proposition}
\label{prop:physical-consistency-SLD-Fish-states}Let $\{\rho_{\theta
}\}_{\theta}$ be a differentiable family of quantum states. Then the
SLD\ Fisher information in \eqref{eq:SLD-Fish-info-formula}\ is given by the
following limit:
\begin{equation}
I_{F}(\theta;\{\rho_{\theta}\}_{\theta})=\lim_{\varepsilon\rightarrow0}
I_{F}(\theta;\{\rho_{\theta}^{\varepsilon}\}_{\theta}),
\label{eq:fish-limit-delta}
\end{equation}
where
\begin{equation}
\rho_{\theta}^{\varepsilon}:=\left(  1-\varepsilon\right)  \rho_{\theta
}+\varepsilon\pi_{d}, \label{eq:rho-delta-def}
\end{equation}
and $\pi_{d}:=I/d$ is the maximally mixed state, with $d$ large enough so that
$\operatorname{supp}(\rho_{\theta})\subseteq\operatorname{supp}(\pi)$ for all
$\theta$.
\end{proposition}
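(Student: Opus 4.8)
The plan is to work with the explicit eigenvalue formula \eqref{eq:SLD-Fish-info-formula} and track how each term behaves as $\varepsilon\to 0$, using the fact that for $\varepsilon>0$ the perturbed family $\{\rho_\theta^\varepsilon\}_\theta$ has full support, so its SLD Fisher information is automatically finite and computed by the unconstrained sum $2\sum_{j,k}|\langle e_j^\theta|(\partial_\theta\rho_\theta^\varepsilon)|e_k^\theta\rangle|^2/(\mu_j^\theta+\mu_k^\theta)$, where $\{\mu_j^\theta,|e_j^\theta\rangle\}$ are the eigenvalues and eigenvectors of $\rho_\theta^\varepsilon$. First I would note that since $\rho_\theta^\varepsilon=(1-\varepsilon)\rho_\theta+\varepsilon\pi_d$, one has $\partial_\theta\rho_\theta^\varepsilon=(1-\varepsilon)\partial_\theta\rho_\theta$, and — crucially — $\rho_\theta^\varepsilon$ commutes with $\rho_\theta$ for every $\varepsilon$, so they share a common eigenbasis. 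If $\lambda_j^\theta$ is an eigenvalue of $\rho_\theta$, the corresponding eigenvalue of $\rho_\theta^\varepsilon$ is $\mu_j^\theta=(1-\varepsilon)\lambda_j^\theta+\varepsilon/d$. Thus we may use a single, $\varepsilon$-independent eigenbasis $\{|\psi_\theta^j\rangle\}_j$ of $\rho_\theta$ throughout, which removes any worry about eigenvectors moving with $\varepsilon$.

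With that setup, I would split the double sum according to whether $\lambda_j^\theta+\lambda_k^\theta>0$ or $=0$. For pairs with $\lambda_j^\theta+\lambda_k^\theta>0$, the denominator $\mu_j^\theta+\mu_k^\theta=(1-\varepsilon)(\lambda_j^\theta+\lambda_k^\theta)+2\varepsilon/d$ converges to $\lambda_j^\theta+\lambda_k^\theta>0$, and the numerator $(1-\varepsilon)^2|\langle\psi_\theta^j|(\partial_\theta\rho_\theta)|\psi_\theta^k\rangle|^2$ converges to $|\langle\psi_\theta^j|(\partial_\theta\rho_\theta)|\psi_\theta^k\rangle|^2$, so each such term converges to the corresponding term in \eqref{eq:SLD-Fish-info-formula}; since there are finitely many terms, the sum over this block converges to $2\sum_{j,k:\lambda_j^\theta+\lambda_k^\theta>0}|\langle\psi_\theta^j|(\partial_\theta\rho_\theta)|\psi_\theta^k\rangle|^2/(\lambda_j^\theta+\lambda_k^\theta)$. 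The remaining block consists of pairs $(j,k)$ with $\lambda_j^\theta=\lambda_k^\theta=0$, i.e.\ $|\psi_\theta^j\rangle,|\psi_\theta^k\rangle\in\ker(\rho_\theta)$; for these, $\mu_j^\theta+\mu_k^\theta=2\varepsilon/d$ and the term equals $2\cdot\frac{d}{2\varepsilon}(1-\varepsilon)^2|\langle\psi_\theta^j|(\partial_\theta\rho_\theta)|\psi_\theta^k\rangle|^2$. The numerator here is exactly $\langle\psi_\theta^j|\Pi_{\rho_\theta}^\perp(\partial_\theta\rho_\theta)\Pi_{\rho_\theta}^\perp|\psi_\theta^k\rangle$ up to conjugation, so this block is $\tfrac{d(1-\varepsilon)^2}{\varepsilon}\|\Pi_{\rho_\theta}^\perp(\partial_\theta\rho_\theta)\Pi_{\rho_\theta}^\perp\|_2^2$.

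Now the case split matches the definition exactly. If the finiteness condition \eqref{eq:finiteness-condition-SLD-Fish-states} holds, this last block vanishes identically for every $\varepsilon>0$, so the limit is precisely the right-hand side of \eqref{eq:SLD-Fish-info-formula}. If instead $\Pi_{\rho_\theta}^\perp(\partial_\theta\rho_\theta)\Pi_{\rho_\theta}^\perp\neq 0$, then its Hilbert--Schmidt norm is a strictly positive constant, the last block behaves like $(\text{const})/\varepsilon\to+\infty$, and since all other terms stay bounded, $I_F(\theta;\{\rho_\theta^\varepsilon\}_\theta)\to+\infty$, matching the ``$+\infty$'' branch of the definition. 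I expect the only real subtlety — and the one point worth stating carefully — to be the commutation observation $[\rho_\theta,\rho_\theta^\varepsilon]=0$ that legitimizes using a fixed eigenbasis; after that, the argument is a finite-sum limit computation with a clean separation into a convergent block and a potentially-divergent block whose divergence is governed exactly by the finiteness condition. One should also remark that when \eqref{eq:finiteness-condition-SLD-Fish-states} holds one could alternatively verify consistency with the equivalent form \eqref{eq:SLD-Fish-alt-formula-kernel-rho-theta}, but the direct route above already suffices.
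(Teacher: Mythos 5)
Your proof is correct and follows essentially the same route as the paper's: a common eigenbasis for $\rho_\theta$ and $\rho_\theta^\varepsilon$, explicit perturbed eigenvalues $(1-\varepsilon)\lambda_j^\theta+\varepsilon/d$, a split of the double sum by whether the unperturbed eigenvalues vanish, and identification of the potentially divergent block with $\tfrac{d(1-\varepsilon)^2}{\varepsilon}\|\Pi_{\rho_\theta}^\perp(\partial_\theta\rho_\theta)\Pi_{\rho_\theta}^\perp\|_2^2$. Your two-block split (by $\lambda_j^\theta+\lambda_k^\theta>0$ versus $=0$) is marginally more direct than the paper's four-block support/kernel decomposition, since it lands on the formula in \eqref{eq:SLD-Fish-info-formula} without passing through the intermediate form \eqref{eq:SLD-Fish-alt-formula-kernel-rho-theta}, but the substance is identical.
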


\begin{proof}
See Appendix~\ref{app:phys-cons-SLD-RLD-Fish-states}.
\end{proof}

\medskip

In the case that the condition $\Pi_{\rho_{\theta}}^{\perp}(\partial_{\theta
}\rho_{\theta})\Pi_{\rho_{\theta}}^{\perp}=0$ holds, we can also write the
SLD\ Fisher information as follows:
\begin{equation}
I_{F}(\theta;\{\rho_{\theta}\}_{\theta})=\mathrm{Tr}[ L_{\theta}^{2}
\rho_{\theta}] =\operatorname{Tr}[L_{\theta}(\partial_{\theta}\rho_{\theta})],
\label{eq:SLD-FI}
\end{equation}
where the operator $L_{\theta}$ is the symmetric logarithmic derivative (SLD)
\cite{Hel67}, defined through the following differential equation:
\begin{equation}
\partial_{\theta}\rho_{\theta}:=\frac{1}{2}\left(  \rho_{\theta}L_{\theta
}+L_{\theta}\rho_{\theta}\right)  . \label{eq:SLD-op-def}
\end{equation}
In Appendix~\ref{app:basis-ind-dep-formulas-SLD-FI}, we revisit the derivation
of \cite{Saf18}\ and show how \eqref{eq:SLD-FI} is a consequence of
\eqref{eq:basis-independent-formula-SLD} when the finiteness condition in
\eqref{eq:finiteness-condition-SLD-Fish-states}\ holds. By sandwiching
\eqref{eq:SLD-op-def} on the left and right by $\langle\psi_{k}^{\theta}|$ and
$|\psi_{j}^{\theta}\rangle$, with $|\psi_{j}^{\theta}\rangle,|\psi_{k}
^{\theta}\rangle\in\operatorname{supp}(\rho_{\theta})$, one can check that the
SLD\ has the following unique and explicit form on the subspace
$\operatorname{span}\{|\psi\rangle\!\langle\varphi|:|\psi\rangle,|\varphi
\rangle\in\operatorname{supp}(\rho_{\theta})\}$:
\begin{equation}
L_{\theta}=2\sum_{j,k:\lambda_{j}^{\theta}+\lambda_{k}^{\theta}>0}
\frac{\langle\psi_{j}^{\theta}|(\partial_{\theta}\rho_{\theta})|\psi
_{k}^{\theta}\rangle}{\lambda_{j}^{\theta}+\lambda_{k}^{\theta}}|\psi
_{j}^{\theta}\rangle\!\langle\psi_{k}^{\theta}|.
\end{equation}
Then, in the case that the finiteness condition in
\eqref{eq:finiteness-condition-SLD-Fish-states} holds, after evaluating
\eqref{eq:SLD-FI}, we arrive at the explicit formula for the SLD Fisher
information in \eqref{eq:SLD-Fish-info-formula}.

As indicated above, in the case that
\eqref{eq:finiteness-condition-SLD-Fish-states} holds, the following equality
holds between the basis-independent formula in
\eqref{eq:basis-independent-formula-SLD} and the basis-dependent formula in
\eqref{eq:SLD-Fish-info-formula}:
\begin{align}
I_{F}(\theta;\{\rho_{\theta}\}_{\theta})  &  =2\sum_{j,k:\lambda_{j}^{\theta
}+\lambda_{k}^{\theta}>0}\frac{|\langle\psi_{\theta}^{j}|(\partial_{\theta
}\rho_{\theta})|\psi_{\theta}^{k}\rangle|^{2}}{\lambda_{\theta}^{j}
+\lambda_{\theta}^{k}}\label{eq:basis-dependent-SLD-formula}\\
&  =2\langle\Gamma|\left(  (\partial_{\theta}\rho_{\theta})\otimes I\right)
\left(  \rho_{\theta}\otimes I+I\otimes\rho_{\theta}^{T}\right)  ^{-1}\left(
(\partial_{\theta}\rho_{\theta})\otimes I\right)  |\Gamma\rangle
\label{eq:basis-independent-SLD-formula-extra}\\
&  =2\left\Vert \left(  \rho_{\theta}\otimes I+I\otimes\rho_{\theta}
^{T}\right)  ^{-\frac{1}{2}}\left(  (\partial_{\theta}\rho_{\theta})\otimes
I\right)  |\Gamma\rangle\right\Vert _{2}^{2}.
\label{eq:basis-independent-formula-SLD-2}
\end{align}
This basis-independent formula was explicitly given in \cite{Saf18}. Arguably,
it is implicitly given in \cite{Petz96,Jen12}, being a consequence of (a)\ the
general theory presented in \cite{Petz96}\ in terms of monotone metrics and
the relative modular operator formalism \cite{ArakiMasuda82}\ and (b)\ the well
known isomorphism connecting the Hilbert--Schmidt inner product to an extended
vector-space inner product \cite{Ando79}, which is called Ando's identity in
\cite{Carlen09}\ (see also\ \cite{P86,TCR09,HMPB11,W18opt}). The formula in
\eqref{eq:basis-independent-SLD-formula-extra}\ was presented in
\cite[Remark~4]{Jen12} in the relative modular operator formalism and in
\cite{Saf18} in the extended Hilbert space formalism (as given above). As
indicated above, we discuss this equality in more detail in
Appendix~\ref{app:basis-ind-dep-formulas-SLD-FI}.

The explicit formula in \eqref{eq:SLD-Fish-info-formula} can be difficult to
evaluate in practice because it requires performing a spectral decomposition
of $\rho_{\theta}$. The same is true for the formula in
\eqref{eq:basis-independent-formula-SLD} due to the presence of a matrix
inverse. To get around these problems, we show in
Section~\ref{sec:SDPs-for-FIs}\ how the SLD Fisher information can be
evaluated by means of a semi-definite program that takes $\rho_{\theta}$ and
$\partial_{\theta}\rho_{\theta}$ as input (that is, with this approach, there
is no need to perform a diagonalization of $\rho_{\theta}$ or a matrix
inverse). See \cite{BV04,Wat18} for general background on semi-definite programming.

\subsubsection{RLD\ Fisher information}

\label{subsubsec:RLD-fish}

The quantum Cramer--Rao bound (QCRB) provides a technique to bound the MSE in
estimating a parameter by using the SLD Fisher information. As mentioned previously, there is in fact
an infinite number of QCRBs, with each of them arising from a particular
noncommutative generalization of the classical Fisher information in
\eqref{eq:CFI}. Another noncommutative generalization of the classical Fisher
information is the \textit{right} logarithmic derivative (RLD) Fisher information:

\begin{definition}
[RLD\ Fisher information]\label{def:RLD-Fish-info-states}Let $\{\rho_{\theta
}\}_{\theta}$ be a differentiable family of quantum states. Then the
RLD\ Fisher information is defined as follows:
\begin{equation}
\widehat{I}_{F}(\theta;\{\rho_{\theta}\}_{\theta})=\left\{
\begin{array}
[c]{cc}
\operatorname{Tr}[(\partial_{\theta}\rho_{\theta})^{2}\rho_{\theta}^{-1}] &
\text{if }\operatorname{supp}(\partial_{\theta}\rho_{\theta})\subseteq
\operatorname{supp}(\rho_{\theta})\\
+\infty & \text{otherwise}
\end{array}
\right.  , \label{eq:RLD-FI}
\end{equation}
where the inverse $\rho_{\theta}^{-1}$ is taken on the support of
$\rho_{\theta}$.
\end{definition}

Note that the support condition $\operatorname{supp}(\partial_{\theta}\rho_{\theta})\subseteq
\operatorname{supp}(\rho_{\theta})$ is equivalent to $\Pi^{\perp}_{\rho_\theta} \partial_\theta \rho_\theta =  \partial_\theta \rho_\theta \Pi^{\perp}_{\rho_\theta} = 0$, which implies that $\Pi^{\perp}_{\rho_\theta}\partial_\theta \rho_\theta \Pi^{\perp}_{\rho_\theta} = 0$.

For a differentiable family $\{|\varphi_{\theta}\rangle\!\langle\varphi_{\theta
}|\}_{\theta}$ of pure states, the RLD\ Fisher information has trivial
behavior due to the finiteness condition in \eqref{eq:RLD-FI}. If the family
is constant, such that $|\varphi_{\theta}\rangle=|\varphi\rangle$ for all
$\theta$, then the RLD\ Fisher information is finite and equal to zero.
Otherwise, the RLD\ Fisher information is infinite. We show this in more
detail in Appendix~\ref{sec:pure-state-fam-examps}.\ Thus, the RLD\ Fisher
information is a degenerate and uninteresting information measure for
pure-state families.

Similar to Proposition~\ref{prop:physical-consistency-SLD-Fish-states}, the
following proposition demonstrates that the definition in \eqref{eq:RLD-FI} is
physically consistent, in the sense that it is the result of a limiting
procedure in which some constant additive noise vanishes:

\begin{proposition}
\label{prop:physical-consistency-RLD-Fish-states}Let $\{\rho_{\theta
}\}_{\theta}$ be a differentiable family of quantum states. Then the
RLD\ Fisher information in \eqref{eq:RLD-FI}\ is given by the following limit:
\begin{equation}
\widehat{I}_{F}(\theta;\{\rho_{\theta}\}_{\theta})=\lim_{\varepsilon
\rightarrow0}\widehat{I}_{F}(\theta;\{\rho_{\theta}^{\varepsilon}\}_{\theta}),
\end{equation}
where
\begin{equation}
\rho_{\theta}^{\varepsilon}:=\left(  1-\varepsilon\right)  \rho_{\theta
}+\varepsilon\pi_{d},
\end{equation}
and $\pi_{d}:=I/d$ is the maximally mixed state, with $d$ large enough so that
$\operatorname{supp}(\rho_{\theta})\subseteq\operatorname{supp}(\pi)$ for all
$\theta$.
\end{proposition}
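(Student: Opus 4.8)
The plan is to reduce the statement to an explicit computation in an eigenbasis of $\rho_\theta$, in the same spirit as the argument behind Proposition~\ref{prop:physical-consistency-SLD-Fish-states}. First I would observe that, since $\pi_d$ is independent of $\theta$, we have $\partial_\theta\rho_\theta^\varepsilon = (1-\varepsilon)\,\partial_\theta\rho_\theta$, and that $\rho_\theta^\varepsilon$ has full support for every $\varepsilon\in(0,1]$ because $\pi_d = I/d$ does. Hence the support condition appearing in \eqref{eq:RLD-FI} is automatically satisfied by the perturbed family, so
\[
\widehat{I}_{F}(\theta;\{\rho_{\theta}^{\varepsilon}\}_{\theta}) = (1-\varepsilon)^2\operatorname{Tr}[(\partial_{\theta}\rho_{\theta})^{2}(\rho_{\theta}^{\varepsilon})^{-1}],
\]
with a genuine inverse on the right-hand side.

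Next I would diagonalize $\rho_\theta = \sum_j \lambda_\theta^j\,|\psi_\theta^j\rangle\!\langle\psi_\theta^j|$, keeping the zero eigenvalues. Since $\pi_d$ commutes with $\rho_\theta$, the operator $\rho_\theta^\varepsilon$ is diagonal in the very same basis with strictly positive eigenvalues $(1-\varepsilon)\lambda_\theta^j + \varepsilon/d$, so expanding the trace in this basis (and using that $\partial_\theta\rho_\theta$ is Hermitian) gives the finite sum of nonnegative terms
\[
\widehat{I}_{F}(\theta;\{\rho_{\theta}^{\varepsilon}\}_{\theta}) = (1-\varepsilon)^2\sum_{j,k}\frac{|\langle\psi_\theta^j|(\partial_\theta\rho_\theta)|\psi_\theta^k\rangle|^2}{(1-\varepsilon)\lambda_\theta^j + \varepsilon/d}.
\]
From here I split into two cases, according to whether $\operatorname{supp}(\partial_\theta\rho_\theta)\subseteq\operatorname{supp}(\rho_\theta)$ holds.

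If the support condition holds, then (as noted right after Definition~\ref{def:RLD-Fish-info-states}) $\Pi_{\rho_\theta}^{\perp}\partial_\theta\rho_\theta = 0$, so every numerator $|\langle\psi_\theta^j|(\partial_\theta\rho_\theta)|\psi_\theta^k\rangle|^2$ with $\lambda_\theta^j = 0$ vanishes identically; for the surviving indices $\lambda_\theta^j > 0$ the denominator converges to $\lambda_\theta^j$, and since the sum is finite I may pass to the limit term by term, obtaining $\sum_{j:\lambda_\theta^j>0}(\lambda_\theta^j)^{-1}\langle\psi_\theta^j|(\partial_\theta\rho_\theta)^2|\psi_\theta^j\rangle = \operatorname{Tr}[(\partial_\theta\rho_\theta)^2\rho_\theta^{-1}]$, which is the desired value. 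If instead the support condition fails, then $\widehat{I}_F(\theta;\{\rho_\theta\}_\theta) = +\infty$ and there is an index $j$ with $\lambda_\theta^j = 0$ and $\langle\psi_\theta^j|(\partial_\theta\rho_\theta)|\psi_\theta^k\rangle\neq 0$ for some $k$; the corresponding term of the sum is at least $(1-\varepsilon)^2(d/\varepsilon)\,|\langle\psi_\theta^j|(\partial_\theta\rho_\theta)|\psi_\theta^k\rangle|^2$, which diverges as $\varepsilon\to 0$, and since all terms are nonnegative the whole expression diverges to $+\infty$ as well. Thus the limit equals $\widehat{I}_F(\theta;\{\rho_\theta\}_\theta)$ in both cases.

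The one place that requires care — and the only genuine obstacle — is the bookkeeping of the kernel indices: one must invoke precisely the support inclusion, equivalently $\Pi_{\rho_\theta}^{\perp}\partial_\theta\rho_\theta = 0$, and not merely the weaker two-sided condition $\Pi_{\rho_\theta}^{\perp}\partial_\theta\rho_\theta\Pi_{\rho_\theta}^{\perp} = 0$ that governs the SLD case, in order to annihilate the numerators of the terms that would otherwise blow up. Everything else is an interchange of a limit with a finite sum, which is automatic in finite dimensions.
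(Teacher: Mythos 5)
Your proposal is correct and follows essentially the same route as the paper's proof: diagonalize $\rho_{\theta}$, observe that $(\rho_{\theta}^{\varepsilon})^{-1}$ is diagonal in the same basis with eigenvalues $((1-\varepsilon)\lambda_{\theta}^{j}+\varepsilon/d)^{-1}$, and identify the kernel contribution (which the paper packages as $\frac{d(1-\varepsilon)^{2}}{\varepsilon}\operatorname{Tr}[(\partial_{\theta}\rho_{\theta})^{2}\Pi_{\rho_{\theta}}^{\perp}]$) as the term that either vanishes under the support condition or forces divergence. Your closing remark correctly isolates the key point, namely that the full support inclusion $\Pi_{\rho_{\theta}}^{\perp}\partial_{\theta}\rho_{\theta}=0$, rather than the weaker two-sided SLD condition, is what annihilates the offending numerators.
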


\begin{proof}
See Appendix~\ref{app:phys-cons-SLD-RLD-Fish-states}.
\end{proof}

\medskip

In the case that the following support condition holds
\begin{equation}
\operatorname{supp}(\partial_{\theta}\rho_{\theta})\subseteq
\operatorname{supp}(\rho_{\theta}), \label{eq:support-condition-RLD}
\end{equation}
then the RLD\ Fisher information can also be defined in the following way:
\begin{equation}
\widehat{I}_{F}(\theta;\{\rho_{\theta}\}_{\theta}):=\operatorname{Tr}[R_{\theta
}R_{\theta}^{\dag}\rho_{\theta}]=\operatorname{Tr}[(\partial_{\theta}
\rho_{\theta})R_{\theta}^{\dag}],
\end{equation}
where the RLD\ operator \cite{YL73}\ is defined through the following
differential equation:
\begin{equation}
\partial_{\theta}\rho_{\theta}=\rho_{\theta}R_{\theta}. \label{eq:RLD-diff-eq}
\end{equation}
By observing from \eqref{eq:RLD-diff-eq} that $\Pi_{\rho_{\theta}}R_{\theta
}=\rho_{\theta}^{-1}\partial_{\theta}\rho_{\theta}$, where $\Pi_{\rho_{\theta
}}$ is the projection onto the support of $\rho_{\theta}$, the RLD\ Fisher
information can be written explicitly as $\widehat{I}_{F}(\theta
;\{\rho_{\theta}\}_{\theta}):=\operatorname{Tr}[(\partial_{\theta}\rho
_{\theta})^{2}\rho_{\theta}^{-1}]$, consistent with
Definition~\ref{def:RLD-Fish-info-states}. This formula is thus a more direct
quantum generalization of the classical formula in \eqref{eq:CFI-for-RLD}.

The SLD Fisher information never exceeds the RLD Fisher information:
\begin{equation}
I_{F}(\theta;\{\rho_{\theta}\}_{\theta})\leq\widehat{I}_{F}(\theta
;\{\rho_{\theta}\}_{\theta}), \label{eq:SLD<=RLD}
\end{equation}
which can be seen from the operator convexity of the function $x^{-1}$ for
$x>0$. That is, for full-rank $\rho_{\theta}$, we have that
\begin{align}
2\left(  \rho_{\theta}\otimes I+I\otimes\rho_{\theta}^{T}\right)  ^{-1}  &
=\left(  \frac{1}{2}\rho_{\theta}\otimes I+\frac{1}{2}I\otimes\rho_{\theta
}^{T}\right)  ^{-1}\\
&  \leq\frac{1}{2}\left(  \rho_{\theta}\otimes I\right)  ^{-1}+\frac{1}
{2}\left(  I\otimes\rho_{\theta}^{T}\right)  ^{-1}\\
&  =\frac{1}{2}\left(  \rho_{\theta}^{-1}\otimes I\right)  +\frac{1}{2}\left(
I\otimes\rho_{\theta}^{-T}\right)  ,
\end{align}
and then \eqref{eq:basis-independent-formula-SLD}, \eqref{eq:transpose-trick},
\eqref{eq:max-ent-partial-trace}, and the limit formulas in
Propositions~\ref{prop:physical-consistency-SLD-Fish-states} and
\ref{prop:physical-consistency-RLD-Fish-states}\ lead to \eqref{eq:SLD<=RLD}.
Thus, as a consequence of \eqref{eq:QCRB}\ and \eqref{eq:SLD<=RLD}, the
RLD\ Fisher information leads to another lower bound on the MSE of an unbiased
estimator:
\begin{equation}
\text{Var}(\hat{\theta})\geq\frac{1}{n\widehat{I}_{F}(\theta;\{\rho_{\theta
}\}_{\theta})}. \label{eq:RLD-QCRB}
\end{equation}
Although the inequality above is not generally achievable, the RLD\ Fisher
information possesses an operational meaning in terms of a task called reverse
estimation \cite{Matsumoto2005}.

The formula in \eqref{eq:RLD-FI} may be difficult to evaluate in practice due
to the presence of a matrix inverse. In Section~\ref{sec:SDPs-for-FIs}, we
show how this quantity can be evaluated by means of a semi-definite program
that takes $\rho_{\theta}$ and $\partial_{\theta}\rho_{\theta}$ as input, thus
obviating the need to perform the inverse.

\subsection{Basic properties of SLD\ and RLD\ Fisher information of quantum
states}

Here we collect some basic properties of SLD\ and RLD\ Fisher information of
quantum states, which include faithfulness, data processing, additivity, and
decomposition on classical--quantum states.

\subsubsection{Faithfulness}

\begin{proposition}
[Faithfulness]\label{prop:faithfulness-SLD-RLD-Fish}For a differentiable
family $\{\rho_{A}^{\theta}\}_{\theta}$ of quantum states, the SLD\ and
RLD\ Fisher informations are equal to zero:
\begin{equation}
I_{F}(\theta;\{\rho_{A}\}_{\theta})=\widehat{I}_{F}(\theta;\{\rho
_{A}\}_{\theta})=0\qquad\forall\theta\in\Theta,
\label{eq:SLD-RLD-Fish-faithful}
\end{equation}
if and only if $\rho_{A}^{\theta}$ has no dependence on the parameter $\theta$
(i.e., $\rho_{A}^{\theta}=\rho_{A}$ for all $\theta$).
\end{proposition}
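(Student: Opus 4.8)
The plan is to prove the two implications separately, leaning on the spectral formulas already recorded. The ``if'' direction is immediate: if $\rho_A^\theta=\rho_A$ for all $\theta$, then $\partial_\theta\rho_A^\theta=0$ for every $\theta$, so the finiteness condition \eqref{eq:finiteness-condition-SLD-Fish-states} and the support condition \eqref{eq:support-condition-RLD} both hold trivially, and the explicit formulas \eqref{eq:SLD-Fish-info-formula} and \eqref{eq:RLD-FI} evaluate to zero, giving \eqref{eq:SLD-RLD-Fish-faithful}.

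For the ``only if'' direction, by the inequality \eqref{eq:SLD<=RLD} (so that $\widehat{I}_F=0$ implies $I_F=0$) it suffices to show that $I_F(\theta;\{\rho_A^\theta\}_\theta)=0$ at a fixed $\theta$ already forces $\partial_\theta\rho_A^\theta=0$. Since $I_F(\theta;\{\rho_A^\theta\}_\theta)=0<+\infty$, the finiteness condition \eqref{eq:finiteness-condition-SLD-Fish-states} holds, so we may use \eqref{eq:SLD-Fish-info-formula}. Write $\rho_A^\theta=\sum_j\lambda_\theta^j|\psi_\theta^j\rangle\!\langle\psi_\theta^j|$. Every summand in \eqref{eq:SLD-Fish-info-formula} is nonnegative, so its vanishing forces $\langle\psi_\theta^j|(\partial_\theta\rho_A^\theta)|\psi_\theta^k\rangle=0$ for all $j,k$ with $\lambda_\theta^j+\lambda_\theta^k>0$; meanwhile the finiteness condition, in the equivalent form \eqref{eq:finiteness-condition-SLD-Fish-alt}, supplies the same vanishing for all $j,k$ with $\lambda_\theta^j+\lambda_\theta^k=0$. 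Hence $\langle\psi_\theta^j|(\partial_\theta\rho_A^\theta)|\psi_\theta^k\rangle=0$ for every pair $(j,k)$, i.e., $\partial_\theta\rho_A^\theta=0$. (For the RLD one can also argue directly: $\widehat{I}_F(\theta;\{\rho_A^\theta\}_\theta)=0$ forces the support condition \eqref{eq:support-condition-RLD} together with $\operatorname{Tr}[(\partial_\theta\rho_A^\theta)^2(\rho_A^\theta)^{-1}]=\Vert(\rho_A^\theta)^{-1/2}\partial_\theta\rho_A^\theta\Vert_2^2=0$, so $(\rho_A^\theta)^{-1/2}\partial_\theta\rho_A^\theta=0$, and multiplying on the left by $(\rho_A^\theta)^{1/2}$ and using the support condition yields $\partial_\theta\rho_A^\theta=0$.)

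Having established $\partial_\theta\rho_A^\theta=0$ for every $\theta$, the conclusion $\rho_A^\theta=\rho_A$ follows by integrating the matrix entries over the (connected) parameter interval $\Theta$. I do not expect a genuine obstacle here; the only point requiring care is that the SLD finiteness condition \eqref{eq:finiteness-condition-SLD-Fish-states} is strictly weaker than $\operatorname{supp}(\partial_\theta\rho_A^\theta)\subseteq\operatorname{supp}(\rho_A^\theta)$, so one must exploit that the sum in \eqref{eq:SLD-Fish-info-formula} ranges over \emph{all} pairs with $\lambda_\theta^j+\lambda_\theta^k>0$ — which is precisely what pins down the ``off-support'' block $\Pi_{\rho_A^\theta}(\partial_\theta\rho_A^\theta)\Pi_{\rho_A^\theta}^\perp$ of $\partial_\theta\rho_A^\theta$ — and not merely over pairs of strictly positive eigenvalues.
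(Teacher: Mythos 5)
Your proof is correct and follows essentially the same route as the paper's: the ``if'' direction by direct substitution, and the ``only if'' direction by reducing the RLD case to the SLD case via \eqref{eq:SLD<=RLD} and then reading off the vanishing of all matrix elements of $\partial_\theta\rho_\theta$ from the spectral formula \eqref{eq:SLD-Fish-info-formula} together with the finiteness condition \eqref{eq:finiteness-condition-SLD-Fish-alt}. The only (harmless) additions are your parenthetical direct argument for the RLD case and the explicit remark about integrating over $\Theta$, both of which the paper leaves implicit.
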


\begin{proof}
The if-part follows directly from plugging into the definitions after
observing that $\partial_{\theta}\rho_{\theta}=0$ for a constant family. So we
now prove the only-if part. If $I_{F}(\theta;\{\rho_{A}\}_{\theta})=0$, then
it is necessary for the finiteness condition in
\eqref{eq:finiteness-condition-SLD-Fish-states}\ to hold (otherwise we would
have a contradiction). Then this means that
\begin{align}
\Pi_{\rho_{\theta}}^{\perp}(\partial_{\theta}\rho_{\theta})\Pi_{\rho_{\theta}
}^{\perp}  &  =0,\\
2\sum_{j,k:\lambda_{j}^{\theta}+\lambda_{k}^{\theta}>0}\frac{|\langle
\psi_{\theta}^{j}|(\partial_{\theta}\rho_{\theta})|\psi_{\theta}^{k}
\rangle|^{2}}{\lambda_{\theta}^{j}+\lambda_{\theta}^{k}}  &  =0\qquad
\forall\theta.
\end{align}
By sandwiching the first equation by $\langle\psi_{\theta}^{j}|$ and
$|\psi_{\theta}^{k}\rangle$ for which $\lambda_{\theta}^{j},\lambda_{\theta
}^{k}=0$, we find that these matrix elements $\langle\psi_{\theta}
^{j}|(\partial_{\theta}\rho_{\theta})|\psi_{\theta}^{k}\rangle$\ of
$\partial_{\theta}\rho_{\theta}$ are equal to zero. Since $\lambda_{\theta
}^{j}+\lambda_{\theta}^{k}>0$ in the latter expression, the latter equality
implies the following
\begin{equation}
|\langle\psi_{\theta}^{j}|(\partial_{\theta}\rho_{\theta})|\psi_{\theta}
^{k}\rangle|^{2}=0
\end{equation}
for all $\lambda_{\theta}^{j}$ and $\lambda_{\theta}^{k}$ satisfying
$\lambda_{\theta}^{j}+\lambda_{\theta}^{k}>0$. This implies that these matrix
elements $\langle\psi_{\theta}^{j}|(\partial_{\theta}\rho_{\theta}
)|\psi_{\theta}^{k}\rangle$ of $\partial_{\theta}\rho_{\theta}$ are equal to
zero. These are all possible matrix elements, and so we conclude that
$\partial_{\theta}\rho_{\theta}=0$. This in turn implies that $\rho_{\theta}$
is a constant family (i.e., $\rho_{A}^{\theta}=\rho_{A}$ for all $\theta$). If
$\widehat{I}_{F}(\theta;\{\rho_{A}\}_{\theta})=0$, then by the inequality in
\eqref{eq:SLD<=RLD}, $I_{F}(\theta;\{\rho_{A}\}_{\theta})=0$. Then by what we
have just shown, $\rho_{\theta}$ is a constant family in this case also.
\end{proof}

\subsubsection{Data processing}

The SLD\ and RLD\ Fisher informations obey the following data-processing
inequalities:
\begin{align}
I_{F}(\theta;\{\rho_{A}^{\theta}\}_{\theta})  &  \geq I_{F}(\theta
;\{\mathcal{N}_{A\rightarrow B}(\rho_{A}^{\theta})\}_{\theta}
),\label{eq:DP-SLD}\\
\widehat{I}_{F}(\theta;\{\rho_{A}^{\theta}\}_{\theta})  &  \geq\widehat{I}
_{F}(\theta;\{\mathcal{N}_{A\rightarrow B}(\rho_{A}^{\theta})\}_{\theta}),
\label{eq:DP-RLD}
\end{align}
where $\mathcal{N}_{A\rightarrow B}$ is a quantum channel independent of the
parameter $\theta$ (more generally, these hold if $\mathcal{N}_{A\rightarrow
B}$ is a two-positive, trace-preserving map). The data-processing inequalities
for $I_{F}$ and $\widehat{I}_{F}$ were established in \cite{Petz96}. In fact,
the inequality in \eqref{eq:DP-RLD}\ is an immediate consequence of
\cite[Proposition~4.1]{Choi80}.

\subsubsection{Additivity}

\begin{proposition}
\label{prop:additivity-SLD-RLD-states}Let $\{\rho_{A}^{\theta}\}_{\theta}$ and
$\{\sigma_{A}^{\theta}\}_{\theta}$ be differentiable families of quantum
states. Then the SLD\ and RLD\ Fisher informations are additive in the
following sense:
\begin{align}
I_{F}(\theta;\{\rho_{A}^{\theta}\otimes\sigma_{B}^{\theta}\}_{\theta})  &
=I_{F}(\theta;\{\rho_{A}^{\theta}\}_{\theta})+I_{F}(\theta;\{\sigma
_{B}^{\theta}\}_{\theta}),\label{eq:additivity-SLD-states}\\
\widehat{I}_{F}(\theta;\{\rho_{A}^{\theta}\otimes\sigma_{B}^{\theta}
\}_{\theta})  &  =\widehat{I}_{F}(\theta;\{\rho_{A}^{\theta}\}_{\theta
})+\widehat{I}_{F}(\theta;\{\sigma_{B}^{\theta}\}_{\theta}).
\label{eq:additivity-RLD-states}
\end{align}

\end{proposition}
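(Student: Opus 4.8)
The plan is to prove both identities in tandem. Throughout write $\rho^{\theta}\equiv\rho_{A}^{\theta}$ and $\sigma^{\theta}\equiv\sigma_{B}^{\theta}$, and recall $\partial_{\theta}(\rho^{\theta}\otimes\sigma^{\theta})=(\partial_{\theta}\rho^{\theta})\otimes\sigma^{\theta}+\rho^{\theta}\otimes(\partial_{\theta}\sigma^{\theta})$ and $\Pi_{\rho^{\theta}\otimes\sigma^{\theta}}=\Pi_{\rho^{\theta}}\otimes\Pi_{\sigma^{\theta}}$. The first step is to reduce the finiteness/support conditions to the single-system case. Inserting $I=\Pi_{\rho^{\theta}}+\Pi_{\rho^{\theta}}^{\perp}$ and $I=\Pi_{\sigma^{\theta}}+\Pi_{\sigma^{\theta}}^{\perp}$ and using $\Pi_{\rho^{\theta}}\rho^{\theta}=\rho^{\theta}$, $\Pi_{\sigma^{\theta}}\sigma^{\theta}=\sigma^{\theta}$, a direct manipulation gives
\begin{align}
\Pi_{\rho^{\theta}\otimes\sigma^{\theta}}^{\perp}\left(\partial_{\theta}(\rho^{\theta}\otimes\sigma^{\theta})\right)\Pi_{\rho^{\theta}\otimes\sigma^{\theta}}^{\perp}&=\left(\Pi_{\rho^{\theta}}^{\perp}(\partial_{\theta}\rho^{\theta})\Pi_{\rho^{\theta}}^{\perp}\right)\otimes\sigma^{\theta}+\rho^{\theta}\otimes\left(\Pi_{\sigma^{\theta}}^{\perp}(\partial_{\theta}\sigma^{\theta})\Pi_{\sigma^{\theta}}^{\perp}\right),\\
\Pi_{\rho^{\theta}\otimes\sigma^{\theta}}^{\perp}\left(\partial_{\theta}(\rho^{\theta}\otimes\sigma^{\theta})\right)&=\left(\Pi_{\rho^{\theta}}^{\perp}(\partial_{\theta}\rho^{\theta})\right)\otimes\sigma^{\theta}+\rho^{\theta}\otimes\left(\Pi_{\sigma^{\theta}}^{\perp}(\partial_{\theta}\sigma^{\theta})\right).
\end{align}
In each identity, the first summand is supported in $\ker(\rho^{\theta})\otimes\operatorname{supp}(\sigma^{\theta})$ and the second in $\operatorname{supp}(\rho^{\theta})\otimes\ker(\sigma^{\theta})$, which are orthogonal subspaces; since $\rho^{\theta}$ and $\sigma^{\theta}$ are nonzero, each left-hand side vanishes if and only if the finiteness condition \eqref{eq:finiteness-condition-SLD-Fish-states}, respectively the support condition \eqref{eq:support-condition-RLD}, holds for both $\{\rho^{\theta}\}_{\theta}$ and $\{\sigma^{\theta}\}_{\theta}$. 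Consequently, if either factor family fails the relevant condition then both sides of \eqref{eq:additivity-SLD-states}, respectively \eqref{eq:additivity-RLD-states}, equal $+\infty$; otherwise we may assume the condition holds for both factors and compute.

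For the RLD identity \eqref{eq:additivity-RLD-states}, with the support condition in force Definition~\ref{def:RLD-Fish-info-states} applies and $(\rho^{\theta}\otimes\sigma^{\theta})^{-1}=(\rho^{\theta})^{-1}\otimes(\sigma^{\theta})^{-1}$ on the support. I would expand $(\partial_{\theta}(\rho^{\theta}\otimes\sigma^{\theta}))^{2}$ into four terms, multiply by $(\rho^{\theta})^{-1}\otimes(\sigma^{\theta})^{-1}$, and take the trace: the two ``diagonal'' terms give $\operatorname{Tr}[(\partial_{\theta}\rho^{\theta})^{2}(\rho^{\theta})^{-1}]\operatorname{Tr}[\sigma^{\theta}]+\operatorname{Tr}[\rho^{\theta}]\operatorname{Tr}[(\partial_{\theta}\sigma^{\theta})^{2}(\sigma^{\theta})^{-1}]$, which by $\operatorname{Tr}[\rho^{\theta}]=\operatorname{Tr}[\sigma^{\theta}]=1$ equals $\widehat{I}_{F}(\theta;\{\rho^{\theta}\}_{\theta})+\widehat{I}_{F}(\theta;\{\sigma^{\theta}\}_{\theta})$; each ``cross'' term carries a factor $\operatorname{Tr}[(\partial_{\theta}\rho^{\theta})\rho^{\theta}(\rho^{\theta})^{-1}]=\operatorname{Tr}[(\partial_{\theta}\rho^{\theta})\Pi_{\rho^{\theta}}]=\operatorname{Tr}[\partial_{\theta}\rho^{\theta}]=\partial_{\theta}\operatorname{Tr}[\rho^{\theta}]=0$ (the middle equality using the support condition $\Pi_{\rho^{\theta}}^{\perp}\partial_{\theta}\rho^{\theta}=0$), so the cross terms vanish, proving \eqref{eq:additivity-RLD-states}.

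For the SLD identity \eqref{eq:additivity-SLD-states}, let $L^{\rho,\theta}$ and $L^{\sigma,\theta}$ be symmetric logarithmic derivatives of $\{\rho^{\theta}\}_{\theta}$ and $\{\sigma^{\theta}\}_{\theta}$ in the sense of \eqref{eq:SLD-op-def}, which exist since the finiteness condition holds. Substituting into \eqref{eq:SLD-op-def}, one checks directly that $L^{\theta}:=L^{\rho,\theta}\otimes I+I\otimes L^{\sigma,\theta}$ satisfies $\partial_{\theta}(\rho^{\theta}\otimes\sigma^{\theta})=\frac{1}{2}\left((\rho^{\theta}\otimes\sigma^{\theta})L^{\theta}+L^{\theta}(\rho^{\theta}\otimes\sigma^{\theta})\right)$, so it is a valid SLD for the product family. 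Any two solutions of this equation differ by an operator $\Delta$ with $(\rho^{\theta}\otimes\sigma^{\theta})\Delta=\Delta(\rho^{\theta}\otimes\sigma^{\theta})=0$, and for such $\Delta$ one has $\operatorname{Tr}[(L^{\theta}+\Delta)^{2}(\rho^{\theta}\otimes\sigma^{\theta})]=\operatorname{Tr}[(L^{\theta})^{2}(\rho^{\theta}\otimes\sigma^{\theta})]$; hence \eqref{eq:SLD-FI} gives $I_{F}(\theta;\{\rho^{\theta}\otimes\sigma^{\theta}\}_{\theta})=\operatorname{Tr}[(L^{\theta})^{2}(\rho^{\theta}\otimes\sigma^{\theta})]$. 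Since $L^{\rho,\theta}\otimes I$ and $I\otimes L^{\sigma,\theta}$ commute, expanding the square, taking the trace against $\rho^{\theta}\otimes\sigma^{\theta}$, and using $\operatorname{Tr}[\rho^{\theta}]=\operatorname{Tr}[\sigma^{\theta}]=1$ yields
\begin{equation}
I_{F}(\theta;\{\rho^{\theta}\otimes\sigma^{\theta}\}_{\theta})=\operatorname{Tr}[(L^{\rho,\theta})^{2}\rho^{\theta}]+2\operatorname{Tr}[L^{\rho,\theta}\rho^{\theta}]\operatorname{Tr}[L^{\sigma,\theta}\sigma^{\theta}]+\operatorname{Tr}[(L^{\sigma,\theta})^{2}\sigma^{\theta}].
\end{equation}
By \eqref{eq:SLD-op-def}, $\operatorname{Tr}[L^{\rho,\theta}\rho^{\theta}]=\operatorname{Tr}[\partial_{\theta}\rho^{\theta}]=\partial_{\theta}\operatorname{Tr}[\rho^{\theta}]=0$ and likewise $\operatorname{Tr}[L^{\sigma,\theta}\sigma^{\theta}]=0$, so the cross term drops out, and by \eqref{eq:SLD-FI} the remaining two terms are $I_{F}(\theta;\{\rho^{\theta}\}_{\theta})$ and $I_{F}(\theta;\{\sigma^{\theta}\}_{\theta})$.

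I expect the main obstacle to be the SLD step, specifically the point that the operator $L^{\theta}$ built from the two factor SLDs need not coincide with the canonical solution of \eqref{eq:SLD-op-def} on $\ker(\rho^{\theta}\otimes\sigma^{\theta})$, so one must justify that $\operatorname{Tr}[(L^{\theta})^{2}(\rho^{\theta}\otimes\sigma^{\theta})]$ is insensitive to this ambiguity (as indicated above). A computation-heavier route that avoids this is to work directly from the spectral formula \eqref{eq:SLD-Fish-info-formula}: the eigenbasis of $\rho^{\theta}\otimes\sigma^{\theta}$ is the tensor product of those of $\rho^{\theta}$ and $\sigma^{\theta}$; the matrix element of $\partial_{\theta}(\rho^{\theta}\otimes\sigma^{\theta})$ between $|\psi^{j}\rangle|\phi^{l}\rangle$ and $|\psi^{j'}\rangle|\phi^{l'}\rangle$ contains only terms proportional to $\delta_{ll'}$ or to $\delta_{jj'}$, so all contributions with $j\neq j'$ and $l\neq l'$ drop out; and the surviving sums collapse using $\sum_{l}\mu_{l}^{\theta}=1$ for the spectrum $\{\mu_{l}^{\theta}\}$ of $\sigma^{\theta}$ and symmetrically, with the single remaining cross term again vanishing because $\operatorname{Tr}[\partial_{\theta}\rho^{\theta}]=\operatorname{Tr}[\partial_{\theta}\sigma^{\theta}]=0$. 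The RLD case presents no comparable subtlety.
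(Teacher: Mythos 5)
Your proof is correct. The reduction of the finiteness/support conditions and the entire RLD computation coincide with the paper's argument (the paper verifies the "only if" direction of the condition equivalence by sandwiching with $I\otimes\Pi_{\sigma_B^{\theta}}$ and tracing out one factor, whereas you observe that the two summands have ranges in the mutually orthogonal subspaces $\ker(\rho^{\theta})\otimes\operatorname{supp}(\sigma^{\theta})$ and $\operatorname{supp}(\rho^{\theta})\otimes\ker(\sigma^{\theta})$; your version is a touch cleaner but substantively the same). Where you genuinely diverge is the SLD identity: the paper works directly from the spectral formula \eqref{eq:SLD-Fish-info-formula}, expanding $|\langle\psi^j_\theta\varphi^l_\theta|\partial_\theta(\rho^{\theta}\otimes\sigma^{\theta})|\psi^{j'}_\theta\varphi^{l'}_\theta\rangle|^2$ into three sums and collapsing each (this is essentially the "computation-heavier route" you sketch at the end), while you instead exhibit $L^{\rho,\theta}\otimes I+I\otimes L^{\sigma,\theta}$ as a solution of \eqref{eq:SLD-op-def} for the product family and invoke \eqref{eq:SLD-FI}. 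Your route is shorter and more conceptual, but it leans on two facts you correctly identify and dispose of: that \eqref{eq:SLD-FI} holds for the canonical SLD under the finiteness condition (established in Appendix~\ref{app:basis-ind-dep-formulas-SLD-FI}), and that $\operatorname{Tr}[L^2(\rho^{\theta}\otimes\sigma^{\theta})]$ is insensitive to the residual freedom in $L$ on $\ker(\rho^{\theta}\otimes\sigma^{\theta})$ — your claim that any two solutions differ by a $\Delta$ annihilated by the state on both sides checks out (the Lyapunov map $X\mapsto\rho_0X+X\rho_0$ is invertible on the support block, forcing $\Delta$ into the kernel-kernel block). The paper's brute-force spectral route avoids this discussion entirely at the cost of a longer calculation. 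Both are sound.
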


\begin{proof}
See Appendix~\ref{app:additivity-SLD-RLD}.
\end{proof}

\subsubsection{Decomposition for classical--quantum families}

\begin{proposition}
\label{prop:cq-decomp-SLD-RLD}Let $\left\{  \rho_{XB}^{\theta}\right\}
_{\theta}$ be a differentiable family of classical--quantum states, where
\begin{equation}
\rho_{XB}^{\theta}:=\sum_{x}p_{\theta}(x)|x\rangle\!\langle x|_{X}\otimes
\rho_{\theta}^{x}.
\end{equation}
Then the following decompositions hold for the SLD\ and RLD\ Fisher
informations:
\begin{align}
I_{F}(\theta;\{\rho_{XB}^{\theta}\}_{\theta})  &  =I_{F}(\theta;\{p_{\theta
}\}_{\theta})+\sum_{x}p_{\theta}(x)I_{F}(\theta;\{\rho_{\theta}^{x}\}_{\theta
}),\label{eq:decomp-SLD-cq}\\
\widehat{I}_{F}(\theta;\{\rho_{XB}^{\theta}\}_{\theta})  &  =I_{F}
(\theta;\{p_{\theta}\}_{\theta})+\sum_{x}p_{\theta}(x)\widehat{I}_{F}
(\theta;\{\rho_{\theta}^{x}\}_{\theta}).
\end{align}

\end{proposition}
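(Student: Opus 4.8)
The plan is to prove both decompositions by exploiting the block-diagonal structure of the classical--quantum family and reducing to the additivity and data-processing properties already established, together with the explicit formulas for $I_F$ and $\widehat{I}_F$. First I would fix $\theta$ and write $\rho_{XB}^\theta = \sum_x p_\theta(x)\,|x\rangle\!\langle x|_X \otimes \rho_\theta^x$, so that $\partial_\theta \rho_{XB}^\theta = \sum_x |x\rangle\!\langle x|_X \otimes \left( (\partial_\theta p_\theta(x))\,\rho_\theta^x + p_\theta(x)\,\partial_\theta \rho_\theta^x \right)$. Because the $X$ register is classical and the same projectors $|x\rangle\!\langle x|$ appear in every term, all the operators in sight are block-diagonal with respect to the decomposition indexed by $x$; hence spectral decompositions, supports, kernels, and matrix inverses all act blockwise. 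For the RLD case this is the cleanest: the support condition $\operatorname{supp}(\partial_\theta \rho_{XB}^\theta) \subseteq \operatorname{supp}(\rho_{XB}^\theta)$ holds if and only if it holds in each block with $p_\theta(x) > 0$ (and one checks the $p_\theta(x)=0$ blocks contribute $+\infty$ to either side simultaneously), and when it holds, $(\rho_{XB}^\theta)^{-1} = \sum_x p_\theta(x)^{-1}\,|x\rangle\!\langle x|_X \otimes (\rho_\theta^x)^{-1}$ on the support, so that $\operatorname{Tr}[(\partial_\theta \rho_{XB}^\theta)^2 (\rho_{XB}^\theta)^{-1}]$ expands into a sum over $x$. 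Carrying out that expansion and using $\operatorname{Tr}[\rho_\theta^x]=1$, $\operatorname{Tr}[\partial_\theta \rho_\theta^x]=0$ (differentiating normalization) kills the cross terms and produces exactly $\sum_x \frac{(\partial_\theta p_\theta(x))^2}{p_\theta(x)} + \sum_x p_\theta(x)\,\widehat{I}_F(\theta;\{\rho_\theta^x\}_\theta)$, which is the claimed identity since the first sum is $I_F(\theta;\{p_\theta\}_\theta)$ by \eqref{eq:CFI}.

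For the SLD case I would argue similarly but via the explicit formula \eqref{eq:SLD-Fish-info-formula}. The spectral decomposition of $\rho_{XB}^\theta$ is $\sum_x \sum_j p_\theta(x)\lambda_{\theta,x}^j\,|x\rangle\!\langle x|_X \otimes |\psi_{\theta,x}^j\rangle\!\langle\psi_{\theta,x}^j|$ where $\rho_\theta^x = \sum_j \lambda_{\theta,x}^j |\psi_{\theta,x}^j\rangle\!\langle\psi_{\theta,x}^j|$; eigenvectors of $\rho_{XB}^\theta$ carrying distinct classical labels $x \neq x'$ are orthogonal, and $\partial_\theta \rho_{XB}^\theta$ has zero matrix elements between such vectors because it too is $X$-block-diagonal. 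Thus the double sum in \eqref{eq:SLD-Fish-info-formula} restricts to pairs of eigenvectors with the same label $x$, and within a fixed block the matrix element is $\langle\psi_{\theta,x}^j|\left( (\partial_\theta p_\theta(x))\rho_\theta^x + p_\theta(x)\partial_\theta\rho_\theta^x\right)|\psi_{\theta,x}^k\rangle$ while the eigenvalue in the denominator is $p_\theta(x)(\lambda_{\theta,x}^j + \lambda_{\theta,x}^k)$. Splitting into diagonal terms $j=k$ (which, after using $\partial_\theta p_\theta(x) \langle\psi|\rho_\theta^x|\psi\rangle = \partial_\theta p_\theta(x)\,\lambda$, and summing $\sum_j \lambda_{\theta,x}^j = 1$, yield $\frac{(\partial_\theta p_\theta(x))^2}{p_\theta(x)}$ together with a cross term proportional to $\partial_\theta p_\theta(x)\cdot \operatorname{Tr}[\partial_\theta \rho_\theta^x] = 0$) and off-diagonal terms $j\neq k$ (where the $\partial_\theta p_\theta(x)\rho_\theta^x$ piece vanishes since $\rho_\theta^x$ is diagonal in its own eigenbasis, leaving $p_\theta(x)^2|\langle\psi^j|\partial_\theta\rho_\theta^x|\psi^k\rangle|^2$ over $p_\theta(x)(\lambda^j+\lambda^k)$), one gets $\sum_x \frac{(\partial_\theta p_\theta(x))^2}{p_\theta(x)} + \sum_x p_\theta(x)\cdot 2\sum_{j,k:\lambda^j+\lambda^k>0}\frac{|\langle\psi^j|\partial_\theta\rho_\theta^x|\psi^k\rangle|^2}{\lambda^j+\lambda^k}$, which is precisely \eqref{eq:decomp-SLD-cq}. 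One must also check the finiteness conditions match up: $\Pi^\perp_{\rho_{XB}^\theta}(\partial_\theta\rho_{XB}^\theta)\Pi^\perp_{\rho_{XB}^\theta}=0$ iff $\Pi^\perp_{\rho_\theta^x}(\partial_\theta\rho_\theta^x)\Pi^\perp_{\rho_\theta^x}=0$ for every $x$ with $p_\theta(x)>0$, and for $x$ with $p_\theta(x)=0$ the condition $\partial_\theta p_\theta(x) = 0$ is forced by differentiability of the family (so that block also behaves well), which handles the $+\infty$ cases consistently.

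An alternative, cleaner route for the SLD part — and perhaps the one I would actually present — avoids the spectral bookkeeping entirely: observe that for a classical--quantum family the SLD operator $L_\theta$ for $\rho_{XB}^\theta$ is itself block-diagonal, $L_\theta = \sum_x |x\rangle\!\langle x|_X \otimes L_\theta^{(x)}$, where one solves the defining equation $\partial_\theta(p_\theta(x)\rho_\theta^x) = \frac12(p_\theta(x)\rho_\theta^x L_\theta^{(x)} + L_\theta^{(x)} p_\theta(x)\rho_\theta^x)$ within each block; expanding the left side as $(\partial_\theta p_\theta(x))\rho_\theta^x + p_\theta(x)\partial_\theta\rho_\theta^x$ shows $L_\theta^{(x)} = \partial_\theta\ln p_\theta(x)\cdot I + L_\theta^x$ where $L_\theta^x$ is the SLD of the family $\{\rho_\theta^x\}_\theta$. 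Then $I_F(\theta;\{\rho_{XB}^\theta\}_\theta) = \operatorname{Tr}[L_\theta^2 \rho_{XB}^\theta] = \sum_x p_\theta(x)\operatorname{Tr}[(L_\theta^{(x)})^2 \rho_\theta^x]$, and expanding the square gives $(\partial_\theta\ln p_\theta(x))^2 + 2\partial_\theta\ln p_\theta(x)\operatorname{Tr}[L_\theta^x\rho_\theta^x] + \operatorname{Tr}[(L_\theta^x)^2\rho_\theta^x]$; the middle term vanishes because $\operatorname{Tr}[L_\theta^x \rho_\theta^x] = \operatorname{Tr}[\partial_\theta\rho_\theta^x] = 0$ by \eqref{eq:SLD-op-def} applied in the block and normalization, so we land on $\sum_x p_\theta(x)(\partial_\theta\ln p_\theta(x))^2 + \sum_x p_\theta(x)I_F(\theta;\{\rho_\theta^x\}_\theta)$. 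The main obstacle in both approaches is the same and it is bookkeeping rather than conceptual: one has to be careful about the supports/kernels and the $p_\theta(x)=0$ contingencies so that the finiteness dichotomies on the two sides of each claimed equation genuinely coincide, and one must handle the case where some $\rho_\theta^x$ is rank-deficient (where the block-wise SLD formula \eqref{eq:SLD-Fish-alt-formula-kernel-rho-theta} with its kernel correction term must be invoked). I would therefore organize the proof by first treating the generic full-support case cleanly via the operator identity, then appending a short paragraph verifying the finiteness conditions and the degenerate blocks reduce correctly. The RLD decomposition, by contrast, is essentially immediate from the explicit trace formula and the block-diagonality, so I would state it first and then do the (mildly longer) SLD argument.
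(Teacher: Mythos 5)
Your proposal is correct and follows essentially the same route as the paper's proof in Appendix~\ref{app:decomp-fisher-cqq}: first show the finiteness conditions on the two sides of each identity are equivalent (so the $+\infty$ cases match), then exploit block-diagonality in the classical register to expand the explicit spectral formula \eqref{eq:SLD-Fish-info-formula} for the SLD case and the trace formula \eqref{eq:RLD-FI} for the RLD case, with the cross terms killed by $\operatorname{Tr}[\partial_{\theta}\rho_{\theta}^{x}]=0$. Your alternative argument via the block-diagonal SLD operator $L_{\theta}$ is a legitimate shortcut in the full-rank case, and you correctly flag that the rank-deficient blocks and the finiteness bookkeeping are the only points requiring the more careful spectral treatment that the paper carries out.
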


\begin{proof}
See Appendix~\ref{app:decomp-fisher-cqq}.
\end{proof}

\medskip
We note here that the extended convexity inequality reported in \cite[Eq.~(4)]{PhysRevA.91.042104} is a consequence of \eqref{eq:decomp-SLD-cq}. That is, one recovers the extended convexity inequality of \cite{PhysRevA.91.042104} by performing a partial trace over the classical register $X$ on the left-hand side of  \eqref{eq:decomp-SLD-cq} and applying the  data-processing inequality in \eqref{eq:DP-SLD}.

\subsection{Generalized Fisher information and a meta-converse for channel
parameter estimation} \label{subsec:generalized-fisher-info}

Since data processing is such a fundamental and powerful tool, it can be
fruitful to define and develop a generalized distinguishability measure based
on this property alone (this is also called generalized divergence
\cite{PV10,SW12}). This approach has been employed for some time now in
quantum communication
\cite{SW12,WWY14,GW15,TWW17,WTB16,Led16,KW17,PhysRevA.101.012344,PhysRevLett.123.070502,Fang2019a,WWW19}
and distinguishability \cite{TW16,LKDW18,Berta2018c,PhysRevResearch.1.033169}
theory. Here we extend the approach to quantum estimation theory.

\subsubsection{Generalized Fisher information of states}

Let $\mathcal{D}$ denote the set of density operators and $\Theta$ the
parameter set. We define the generalized Fisher information of quantum states
as follows:

\begin{definition}
[Generalized Fisher information of quantum states]
\label{def:gen-fish-info-states}The generalized Fisher information
$\mathbf{I}_{F}(\theta;\{\rho_{A}^{\theta}\}_{\theta})$ of a family
$\{\rho_{A}^{\theta}\}_{\theta}$ of quantum states is a function
$\mathbf{I}_{F} : \Theta\times\mathcal{D} \to\mathbb{R}$ that does not
increase under the action of a parameter-independent quantum channel~$\mathcal{N}_{A\rightarrow B}$:
\begin{equation}
\mathbf{I}_{F}(\theta;\{\rho_{A}^{\theta}\}_{\theta})\geq\mathbf{I}_{F}
(\theta;\{\mathcal{N}_{A\rightarrow B}(\rho_{A}^{\theta})\}_{\theta}).
\label{eq:gen-fisher-states}
\end{equation}

\end{definition}

It follows from \eqref{eq:DP-SLD} and \eqref{eq:DP-RLD} that the SLD\ and
RLD\ Fisher informations in \eqref{eq:basis-independent-formula-SLD} and \eqref{eq:RLD-FI} are
particular examples because they possess this basic property. Furthermore, the
generalized divergence of \cite{PV10,SW12}\ is a special case of generalized
Fisher information when the parameter $\theta$ takes on only two values.

An immediate consequence of Definition~\ref{def:gen-fish-info-states} is that
the generalized Fisher information is equal to a constant, minimal value for a
state family that has no dependence on the parameter $\theta$:
\begin{equation}
\mathbf{I}_{F}(\theta;\{\rho_{A}\}_{\theta})=c.
\label{eq:constant-value-gen-Fish-states}
\end{equation}
This follows because one can get from one fixed family $\{\rho_{A}\}_{\theta}$
to another $\{\sigma_{A}\}_{\theta}$ by means of a trace and replace channel
$(\cdot)\rightarrow\operatorname{Tr}[(\cdot)]\sigma_{A}$, and then we apply
the data-processing inequality. If this constant $c$\ is equal to zero, then
we say that the generalized Fisher information is \textit{weakly faithful}.

A generalized Fisher information obeys the \textit{direct-sum property} if the
following equality holds
\begin{equation}
\mathbf{I}_{F}\!\left(  \theta;\left\{  \sum_{x}p(x)|x\rangle\!\langle
x|\otimes\rho_{\theta}^{x}\right\}  _{\theta}\right)  =\sum_{x}p(x)\mathbf{I}
_{F}(\theta;\left\{  \rho_{\theta}^{x}\right\}  _{\theta}),
\label{eq:direct-sum-prop-gen-fish}
\end{equation}
where, for each $x$, the family $\left\{  \rho_{\theta}^{x}\right\}  _{\theta
}$ of quantum states is differentiable. Observe that the probability
distribution $p(x)$ has no dependence on the parameter $\theta$. If a
generalized Fisher information obeys the direct-sum property, then it is also
convex in the following sense:
\begin{equation}
\sum_{x}p(x)\mathbf{I}_{F}(\theta;\left\{  \rho_{\theta}^{x}\right\}
_{\theta})\geq\mathbf{I}_{F}(\theta;\left\{  \overline{\rho}_{\theta}\right\}
_{\theta}), \label{eq:gen-fish-convex}
\end{equation}
where $\overline{\rho}_{\theta}:=\sum_{x}p(x)\rho_{\theta}^{x}$. This follows
by applying \eqref{eq:direct-sum-prop-gen-fish} and the data-processing inequality with
a partial trace over the classical register. Thus, due to \eqref{eq:DP-SLD},
\eqref{eq:DP-RLD}, and Proposition~\ref{prop:cq-decomp-SLD-RLD}, the SLD and
RLD Fisher informations are convex.

\subsubsection{Generalized Fisher information of channels}

From the generalized Fisher information of states, we can define the
generalized Fisher information of channels:

\begin{definition}
[Generalized Fisher information of quantum channels]
\label{def:gen-fish-channels}The generalized Fisher information of a family
$\{\mathcal{N}_{A\rightarrow B}^{\theta}\}_{\theta}$ of quantum channels is
defined in terms of the following optimization:
\begin{equation}
\mathbf{I}_{F}(\theta;\{\mathcal{N}_{A\rightarrow B}^{\theta}\}_{\theta
}):=\sup_{\rho_{RA}}\mathbf{I}_{F}(\theta;\{\mathcal{N}_{A\rightarrow
B}^{\theta}(\rho_{RA})\}_{\theta}). \label{eq:gen-fisher-channels}
\end{equation}
In the above definition, we take the supremum over arbitrary states $\rho
_{RA}$\ with unbounded reference system $R$.
\end{definition}

The SLD\ Fisher information of quantum channels was defined in \cite{fuji2001}
and the RLD\ Fisher information of quantum channels in \cite{Hayashi2011}; these are special cases of \eqref{eq:gen-fisher-channels}. The
generalized channel divergence of \cite{CMW14,LKDW18}\ is a special case of
generalized Fisher information of channels when the parameter $\theta$ takes
on only two values.

\begin{remark}
\label{rem:restrict-to-pure-bipartite}As is the case for all information
measures that obey the data-processing inequality, we can employ the
data-processing inequality in \eqref{eq:gen-fisher-states} with respect to
partial trace and the Schmidt decomposition theorem to conclude that it
suffices to perform the optimization in \eqref{eq:gen-fisher-channels} with
respect to pure bipartite states $\psi_{RA}$ with system $R$ isomorphic to
system $A$, so that
\begin{equation}
\mathbf{I}_{F}(\theta;\{\mathcal{N}_{A\rightarrow B}^{\theta}\}_{\theta}
)=\sup_{\psi_{RA}}\mathbf{I}_{F}(\theta;\{\mathcal{N}_{A\rightarrow B}
^{\theta}(\psi_{RA})\}_{\theta}).
\end{equation}

\end{remark}

Some basic properties of the generalized Fisher information of quantum
channels are as follows:

\begin{proposition}
Let $\{\mathcal{N}_{A\rightarrow B}\}_{\theta}$ be a family of quantum
channels that has no dependence on the parameter $\theta$, and suppose that
the underlying generalized Fisher information is weakly faithful. Then
\begin{equation}
\mathbf{I}_{F}(\theta;\{\mathcal{N}_{A\rightarrow B}\}_{\theta})=0.
\end{equation}

\end{proposition}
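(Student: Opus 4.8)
The plan is to unwind the definition of the generalized Fisher information of channels and reduce everything to the corresponding statement for states that was already recorded in \eqref{eq:constant-value-gen-Fish-states}. First I would fix an arbitrary input state $\rho_{RA}$ with reference system $R$ of arbitrary (finite) size. The crucial observation is that the optimization in \eqref{eq:gen-fisher-channels} is over states $\rho_{RA}$ that carry no dependence on the parameter $\theta$, and by hypothesis the channel $\mathcal{N}_{A\rightarrow B}$ likewise has no dependence on $\theta$. Consequently the output state $\mathcal{N}_{A\rightarrow B}(\rho_{RA})$ is a single fixed density operator, so that the family $\{\mathcal{N}_{A\rightarrow B}(\rho_{RA})\}_{\theta}$ is a constant (parameter-independent) family of states.

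Next I would invoke the property stated in \eqref{eq:constant-value-gen-Fish-states}: for a state family with no dependence on $\theta$, the generalized Fisher information equals the constant $c$. Since, by assumption, the underlying generalized Fisher information is weakly faithful, this constant is $c = 0$, and hence $\mathbf{I}_{F}(\theta;\{\mathcal{N}_{A\rightarrow B}(\rho_{RA})\}_{\theta}) = 0$ for every choice of $\rho_{RA}$. Taking the supremum over all such $\rho_{RA}$ as in \eqref{eq:gen-fisher-channels} then yields $\mathbf{I}_{F}(\theta;\{\mathcal{N}_{A\rightarrow B}\}_{\theta}) = \sup_{\rho_{RA}} 0 = 0$, which is the claim.

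There is no substantive obstacle here: the argument is essentially immediate from the definitions. The only point that warrants a moment's care is the remark, already implicit in the setup of the estimation protocol, that the reference state $\rho_{RA}$ appearing in \eqref{eq:gen-fisher-channels} is itself $\theta$-independent — it is this fact, together with the $\theta$-independence of $\mathcal{N}_{A\rightarrow B}$, that makes the output family constant and lets one appeal to \eqref{eq:constant-value-gen-Fish-states} and weak faithfulness.
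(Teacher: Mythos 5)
Your argument is correct and is exactly the route the paper takes: the paper's proof simply cites the definition in \eqref{eq:gen-fisher-channels}, the constant-value property \eqref{eq:constant-value-gen-Fish-states}, and weak faithfulness, which is precisely what you have spelled out. No issues.
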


\begin{proof}
This follows as an immediate consequence of the definition,
\eqref{eq:constant-value-gen-Fish-states}, and the weak faithfulness assumption.
\end{proof}

\begin{proposition}
[Reduction to states]Let $\{\rho_{B}^{\theta}\}_{\theta}$ be a family of
quantum states, and define the family $\{\mathcal{R}_{A\rightarrow B}^{\theta
}\}_{\theta}$ of replacer channels as
\begin{equation}
\mathcal{R}_{A\rightarrow B}^{\theta}(\omega_{A})=\operatorname{Tr}[\omega
_{A}]\rho_{B}^{\theta}.
\end{equation}
Then
\begin{equation}
\mathbf{I}_{F}(\theta;\{\mathcal{R}_{A\rightarrow B}^{\theta}\}_{\theta
})=\mathbf{I}_{F}(\theta;\{\rho_{B}^{\theta}\}_{\theta}).
\end{equation}

\end{proposition}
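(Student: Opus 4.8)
The plan is to unfold Definition~\ref{def:gen-fish-channels} and reduce the claim to a simple sandwiching argument via two applications of the data-processing property that \emph{defines} any generalized Fisher information (Definition~\ref{def:gen-fish-info-states}). First I would observe that for an arbitrary input state $\rho_{RA}$ with reference system $R$, the replacer channel acts as
\begin{equation}
\mathcal{R}_{A\rightarrow B}^{\theta}(\rho_{RA})=\operatorname{Tr}_{A}[\rho_{RA}]\otimes\rho_{B}^{\theta}=\rho_{R}\otimes\rho_{B}^{\theta},
\end{equation}
where $\rho_{R}:=\operatorname{Tr}_{A}[\rho_{RA}]$. Hence, by the definition $\mathbf{I}_{F}(\theta;\{\mathcal{R}_{A\rightarrow B}^{\theta}\}_{\theta})=\sup_{\rho_{RA}}\mathbf{I}_{F}(\theta;\{\mathcal{R}_{A\rightarrow B}^{\theta}(\rho_{RA})\}_{\theta})$, it suffices to prove that $\mathbf{I}_{F}(\theta;\{\rho_{R}\otimes\rho_{B}^{\theta}\}_{\theta})=\mathbf{I}_{F}(\theta;\{\rho_{B}^{\theta}\}_{\theta})$ for every fixed state $\rho_{R}$, since the right-hand side carries no dependence on $\rho_{RA}$ and the supremum is then trivial.

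For the inequality $\mathbf{I}_{F}(\theta;\{\rho_{R}\otimes\rho_{B}^{\theta}\}_{\theta})\geq\mathbf{I}_{F}(\theta;\{\rho_{B}^{\theta}\}_{\theta})$, I would apply \eqref{eq:gen-fisher-states} with the parameter-independent partial-trace channel $\operatorname{Tr}_{R}$, which maps $\rho_{R}\otimes\rho_{B}^{\theta}$ to $\rho_{B}^{\theta}$. For the reverse inequality $\mathbf{I}_{F}(\theta;\{\rho_{B}^{\theta}\}_{\theta})\geq\mathbf{I}_{F}(\theta;\{\rho_{R}\otimes\rho_{B}^{\theta}\}_{\theta})$, I would apply \eqref{eq:gen-fisher-states} with the parameter-independent ``prepare-and-append'' channel $\omega_{B}\mapsto\rho_{R}\otimes\omega_{B}$, which maps $\rho_{B}^{\theta}$ to $\rho_{R}\otimes\rho_{B}^{\theta}$. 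Both channels are manifestly independent of $\theta$, so the data-processing property of the generalized Fisher information applies in each case, and combining the two gives the desired equality for every $\rho_{R}$.

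Finally, I would take the supremum over $\rho_{RA}$ on the left-hand side; since the value is the constant $\mathbf{I}_{F}(\theta;\{\rho_{B}^{\theta}\}_{\theta})$ independent of the choice of $\rho_{RA}$, the supremum equals $\mathbf{I}_{F}(\theta;\{\rho_{B}^{\theta}\}_{\theta})$, which proves the proposition. There is no real obstacle here; the only point worth stating explicitly is that both the appending and the tracing-out maps are parameter-independent channels, which is exactly the hypothesis required to invoke \eqref{eq:gen-fisher-states}.
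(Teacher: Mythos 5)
Your proposal is correct and follows essentially the same route as the paper's own proof: both unfold the definition, note that the replacer channel outputs $\rho_{R}\otimes\rho_{B}^{\theta}$, and sandwich via data processing under the parameter-independent append (preparation) channel and the partial trace over $R$. The only cosmetic difference is that the paper restricts the supremum to pure inputs $\psi_{RA}$, which changes nothing in the argument.
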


\begin{proof}
This follows from the definition and the data-processing inequality. Consider
that
\begin{align}
\mathbf{I}_{F}(\theta;\{\mathcal{R}_{A\rightarrow B}^{\theta}\}_{\theta})  &
=\sup_{\psi_{RA}}\mathbf{I}_{F}(\theta;\{\mathcal{R}_{A\rightarrow B}^{\theta
}(\psi_{RA})\}_{\theta})\\
&  =\sup_{\psi_{RA}}\mathbf{I}_{F}(\theta;\{\psi_{R}\otimes\rho_{B}^{\theta
}\}_{\theta})\\
&  =\mathbf{I}_{F}(\theta;\{\rho_{B}^{\theta}\}_{\theta}).
\end{align}
The last equality follows because
\begin{align}
\mathbf{I}_{F}(\theta;\{\rho_{B}^{\theta}\}_{\theta})  &  \geq\mathbf{I}
_{F}(\theta;\{\psi_{R}\otimes\rho_{B}^{\theta}\}_{\theta}),\\
\mathbf{I}_{F}(\theta;\{\rho_{B}^{\theta}\}_{\theta})  &  \leq\mathbf{I}
_{F}(\theta;\{\psi_{R}\otimes\rho_{B}^{\theta}\}_{\theta}),
\end{align}
with the first inequality following from the fact that there is a
parameter-independent preparation channel such that $\rho_{B}^{\theta
}\rightarrow\psi_{R}\otimes\rho_{B}^{\theta}$, while the second inequality
follows from data-processing under partial trace over the reference system $R$.
\end{proof}

\begin{proposition}
\label{prop:choi-state-bound-gen-fish}Let $\{\mathcal{N}_{A\rightarrow
B}^{\theta}\}_{\theta}$ be a family of quantum channels, and suppose that the
underlying generalized Fisher information is weakly faithful and obeys the
direct-sum property. Then the following inequalities hold
\begin{equation}
\mathbf{I}_{F}(\theta;\{\mathcal{N}_{A\rightarrow B}^{\theta}(\Phi
_{RA})\}_{\theta})\leq\mathbf{I}_{F}(\theta;\{\mathcal{N}_{A\rightarrow
B}^{\theta}\}_{\theta})\leq d\cdot\mathbf{I}_{F}(\theta;\{\mathcal{N}
_{A\rightarrow B}^{\theta}(\Phi_{RA})\}_{\theta}),
\label{eq:choi-state-bound-gen-fish}
\end{equation}
where $\Phi_{RA}$ is the maximally entangled state and $d$ is the dimension of
the channel input system $A$.
\end{proposition}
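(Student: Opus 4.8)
The plan is to prove the two inequalities in \eqref{eq:choi-state-bound-gen-fish} separately, using only weak faithfulness, the direct-sum property, and the data-processing inequality for the generalized Fisher information of states. For the lower bound $\mathbf{I}_{F}(\theta;\{\mathcal{N}_{A\rightarrow B}^{\theta}(\Phi_{RA})\}_{\theta}) \leq \mathbf{I}_{F}(\theta;\{\mathcal{N}_{A\rightarrow B}^{\theta}\}_{\theta})$, there is essentially nothing to do: the maximally entangled state $\Phi_{RA}$ is one particular choice of input state $\rho_{RA}$ in the optimization defining $\mathbf{I}_{F}(\theta;\{\mathcal{N}_{A\rightarrow B}^{\theta}\}_{\theta})$ in \eqref{eq:gen-fisher-channels}, so the channel quantity, being a supremum, is at least the value attained at this input.

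The substantive part is the upper bound $\mathbf{I}_{F}(\theta;\{\mathcal{N}_{A\rightarrow B}^{\theta}\}_{\theta})\leq d\cdot\mathbf{I}_{F}(\theta;\{\mathcal{N}_{A\rightarrow B}^{\theta}(\Phi_{RA})\}_{\theta})$. By Remark~\ref{rem:restrict-to-pure-bipartite} it suffices to bound $\mathbf{I}_{F}(\theta;\{\mathcal{N}_{A\rightarrow B}^{\theta}(\psi_{RA})\}_{\theta})$ for an arbitrary pure bipartite state $\psi_{RA}$ with $R\cong A$. Writing $|\psi\rangle_{RA}=(X_R\otimes I_A)|\Gamma\rangle_{RA}$ with $\operatorname{Tr}[X_R^\dag X_R]=1$ (as recalled in Section~\ref{sec:q-info-preliminaries}), the key observation is that applying $\mathcal{N}^\theta_{A\to B}$ gives $\mathcal{N}^\theta_{A\to B}(\psi_{RA}) = (X_R\otimes I_B)\,\Gamma^{\mathcal{N}^\theta}_{RB}\,(X_R^\dag\otimes I_B)$, since $\mathcal{N}^\theta_{A\to B}$ acts only on $A$ and commutes with the operator $X_R$. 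Thus the state $\mathcal{N}^\theta_{A\to B}(\psi_{RA})$ is obtained from the (parameter-dependent) Choi operator $\Gamma^{\mathcal{N}^\theta}_{RB}=\mathcal{N}^\theta_{A\to B}(\Phi_{RA})\cdot d$ (up to the normalization $d$ relating $\Gamma_{RA}$ and $\Phi_{RA}$) by a fixed, parameter-independent linear map $Y\mapsto (X_R\otimes I)\,Y\,(X_R^\dag\otimes I)$.

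The remaining issue is that $Y\mapsto (X_R\otimes I)\,Y\,(X_R^\dag\otimes I)$ is completely positive but generally not trace preserving, so it is not literally a quantum channel to which \eqref{eq:gen-fisher-states} applies directly. The standard fix is to append an auxiliary outcome: define the channel $\mathcal{E}_{RB\to RB X}$ that outputs, on a classical flag system $X$, either the "success" branch $(X_R\otimes I)\,Y\,(X_R^\dag\otimes I)$ or a "failure" branch supported on the orthogonal complement, with weights chosen so that the total map is trace preserving; concretely one can take $Y \mapsto |0\rangle\!\langle 0|_X \otimes (X_R\otimes I)Y(X_R^\dag\otimes I) + |1\rangle\!\langle 1|_X \otimes \mathcal{C}(Y)$ for a suitable completion channel $\mathcal{C}$, using that $\|X_R\|_\infty^2 \le \operatorname{Tr}[X_R^\dag X_R] = 1$ so the success branch is indeed subnormalized appropriately. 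Applying this parameter-independent channel $\mathcal{E}$ to $\tfrac1d \Gamma^{\mathcal{N}^\theta}_{RB} = \mathcal{N}^\theta_{A\to B}(\Phi_{RA})$, data processing \eqref{eq:gen-fisher-states} gives $\mathbf{I}_{F}(\theta;\{\mathcal{N}^\theta_{A\to B}(\Phi_{RA})\}_\theta) \ge \mathbf{I}_{F}(\theta;\{\mathcal{E}(\mathcal{N}^\theta_{A\to B}(\Phi_{RA}))\}_\theta)$; then the direct-sum property splits the right-hand side over the classical flag $X$, one branch of which is proportional to $\mathcal{N}^\theta_{A\to B}(\psi_{RA})$. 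The factor of $d$ arises from tracking the normalization: the success branch of $\mathcal{E}(\mathcal{N}^\theta_{A\to B}(\Phi_{RA}))$ equals $\tfrac1d\,(X_R\otimes I)\Gamma^{\mathcal{N}^\theta}_{RB}(X_R^\dag\otimes I) = \tfrac1d\,\mathcal{N}^\theta_{A\to B}(\psi_{RA})$, and a rescaling of the state family by a parameter-independent constant, combined with weak faithfulness and the direct-sum property applied to the degrees of freedom absorbing the weight, recovers $\mathbf{I}_{F}(\theta;\{\mathcal{N}^\theta_{A\to B}(\psi_{RA})\}_\theta)$ with the overall factor $d$. Taking the supremum over $\psi_{RA}$ then yields the claimed upper bound. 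The main obstacle is precisely this bookkeeping — ensuring the completion to a genuine parameter-independent trace-preserving channel is legitimate and that the normalization constant is tracked correctly through the direct-sum and weak-faithfulness steps; everything else is a direct invocation of the abstract properties assumed in the hypothesis.
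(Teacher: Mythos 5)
Your proposal is correct and follows essentially the same route as the paper: the paper also realizes $\psi_{RA}$ from $\Phi_{RA}$ via conjugation by an operator $Z_R$ with $\operatorname{Tr}[Z_R^\dag Z_R]=1$ (hence $Z_R^\dag Z_R\le I_R$), completes this to a parameter-independent flagged ``steering'' channel with success probability $1/d$ and a complementary branch $\sqrt{I_R-Z_R^\dag Z_R}(\cdot)\sqrt{I_R-Z_R^\dag Z_R}$, commutes it past $\mathcal{N}^\theta_{A\to B}$, and then applies data processing, the direct-sum property, and weak faithfulness to drop the failure branch. The only cosmetic difference is that your final ``rescaling'' bookkeeping is unnecessary: the direct-sum property applied to the ensemble $\tfrac1d|0\rangle\!\langle0|\otimes\mathcal{N}^\theta(\psi_{RA})+(1-\tfrac1d)|1\rangle\!\langle1|\otimes\mathcal{N}^\theta(\sigma_{RA})$ already yields $\tfrac1d\,\mathbf{I}_F(\theta;\{\mathcal{N}^\theta(\psi_{RA})\}_\theta)$ plus a nonnegative term directly.
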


\begin{proof}
The first inequality is trivial, following from the definition in
\eqref{eq:gen-fisher-channels}. So we prove the second one and note that it
follows from a quantum steering or remote state preparation argument. Let
$\psi_{RA}$ be an arbitrary pure bipartite input state. To each such state,
there exists an operator $Z_{R}$ satisfying
\begin{align}
\psi_{RA}  &  =d\cdot Z_{R}\Phi_{RA}Z_{R}^{\dag},\\
\operatorname{Tr}[Z_{R}^{\dag}Z_{R}]  &  =1.
\end{align}
Let $\mathcal{P}_{R\rightarrow XR}$ denote the following steering quantum
channel:
\begin{equation}
\mathcal{P}_{R\rightarrow XR}(\omega_{R}):=|0\rangle\!\langle0|_{X}\otimes
Z_{R}\omega_{R}Z_{R}^{\dag}+|1\rangle\!\langle1|_{X}\otimes\sqrt{I_{R}
-Z_{R}^{\dag}Z_{R}}\omega_{R}\sqrt{I_{R}-Z_{R}^{\dag}Z_{R}},
\end{equation}
and consider that
\begin{equation}
\mathcal{P}_{R\rightarrow XR}(\Phi_{RA})=\frac{1}{d}|0\rangle\!\langle
0|_{X}\otimes\psi_{RA}+\left(  1-\frac{1}{d}\right)  |1\rangle\!\langle
1|_{X}\otimes\sigma_{RA},
\end{equation}
where
\begin{equation}
\sigma_{RA}:=\left(  1-\frac{1}{d}\right)  ^{-1}\sqrt{I_{R}-Z_{R}^{\dag}Z_{R}
}\Phi_{RA}\sqrt{I_{R}-Z_{R}^{\dag}Z_{R}}.
\end{equation}
This implies that
\begin{align}
&  \mathcal{P}_{R\rightarrow XR}(\mathcal{N}_{A\rightarrow B}^{\theta}
(\Phi_{RA}))\nonumber\\
&  =\mathcal{N}_{A\rightarrow B}^{\theta}(\mathcal{P}_{R\rightarrow XR}
(\Phi_{RA}))\\
&  =\frac{1}{d}|0\rangle\!\langle0|_{X}\otimes\mathcal{N}_{A\rightarrow
B}^{\theta}(\psi_{RA})+\left(  1-\frac{1}{d}\right)  |1\rangle\!\langle
1|_{X}\otimes\mathcal{N}_{A\rightarrow B}^{\theta}(\sigma_{RA}).
\label{eq:steering-ch-dim-bound}
\end{align}
Then we find that
\begin{align}
&  \mathbf{I}_{F}(\theta;\{\mathcal{N}_{A\rightarrow B}^{\theta}(\Phi
_{RA})\}_{\theta})\nonumber\\
&  \geq\mathbf{I}_{F}(\theta;\{\mathcal{P}_{R\rightarrow XR}(\mathcal{N}
_{A\rightarrow B}^{\theta}(\Phi_{RA}))\}_{\theta})\\
&  =\frac{1}{d}\mathbf{I}_{F}(\theta;\{\mathcal{N}_{A\rightarrow B}^{\theta
}(\psi_{RA})\}_{\theta})+\left(  1-\frac{1}{d}\right)  \mathbf{I}_{F}
(\theta;\{\mathcal{N}_{A\rightarrow B}^{\theta}(\sigma_{RA})\}_{\theta})\\
&  \geq\frac{1}{d}\mathbf{I}_{F}(\theta;\{\mathcal{N}_{A\rightarrow B}
^{\theta}(\psi_{RA})\}_{\theta}).
\end{align}
The first inequality follows from data processing. The equality follows from
\eqref{eq:steering-ch-dim-bound}\ and the direct-sum property in
\eqref{eq:direct-sum-prop-gen-fish}. The last inequality follows from the
assumption that $\mathbf{I}_{F}$ is weakly faithful, so that $\mathbf{I}
_{F}(\theta;\{\mathcal{N}_{A\rightarrow B}^{\theta}(\sigma_{RA})\}_{\theta
})\geq0$. Since the inequality holds for all pure bipartite states $\psi_{RA}$, we conclude the second inequality in \eqref{eq:choi-state-bound-gen-fish}.
\end{proof}

\begin{remark}
Note that a special case of \eqref{eq:choi-state-bound-gen-fish} occurs when
the parameter $\theta$ takes on only two values. So the argument above applies
to all generalized channel divergences \cite{LKDW18}\ that are weakly faithful
and obey the direct-sum property, which includes diamond distance, relative
entropy, negative root fidelity, and Petz-,  sandwiched, and geometric R\'{e}nyi relative quasi-entropies.
\end{remark}

\begin{remark}
\label{rem:finiteness-cond-gen-fish}Supposing that a generalized Fisher
information is weakly faithful and obeys the direct-sum property, a
consequence of Proposition~\ref{prop:choi-state-bound-gen-fish}\ is that, in
order to determine whether the corresponding generalized Fisher information of
channels is finite, it is only necessary to check the value of the quantity on
the maximally entangled input state.
\end{remark}

Particular generalized Fisher informations of channels of interest include the
SLD\ and RLD\ ones. Due to \eqref{eq:DP-SLD}--\eqref{eq:DP-RLD},
Propositions~\ref{prop:faithfulness-SLD-RLD-Fish},
\ref{prop:cq-decomp-SLD-RLD}, and \ref{prop:choi-state-bound-gen-fish}, and
Remark~\ref{rem:finiteness-cond-gen-fish}, we can write them respectively as
follows:
\begin{equation}
I_{F}(\theta;\{\mathcal{N}_{A\rightarrow B}^{\theta}\}_{\theta})=\left\{
\begin{array}
[c]{cc}
\sup_{\psi_{RA}}I_{F}(\theta;\{\mathcal{N}_{A\rightarrow B}^{\theta}(\psi
_{RA})\}_{\theta}) & \text{if }\Pi_{\Gamma_{RB}^{\mathcal{N}^{\theta}}}
^{\perp}(\partial_{\theta}\Gamma_{RB}^{\mathcal{N}^{\theta}})\Pi_{\Gamma
_{RB}^{\mathcal{N}^{\theta}}}^{\perp}=0\\
+\infty & \text{otherwise.}
\end{array}
\right.  , \label{eq:finiteness-condition-SLD-fish-ch}
\end{equation}
\begin{multline}
\widehat{I}_{F}(\theta;\{\mathcal{N}_{A\rightarrow B}^{\theta}\}_{\theta
})=\label{eq:finiteness-condition-RLD-fish-ch}\\
\left\{
\begin{array}
[c]{cc}
\left\Vert \operatorname{Tr}_{B}[(\partial_{\theta}\Gamma_{RB}^{\mathcal{N}
^{\theta}})(\Gamma_{RB}^{\mathcal{N}^{\theta}})^{-1}(\partial_{\theta}
\Gamma_{RB}^{\mathcal{N}^{\theta}})]\right\Vert _{\infty} & \text{if
}\operatorname{supp}(\partial_{\theta}\Gamma_{RB}^{\mathcal{N}^{\theta}
})\subseteq\operatorname{supp}(\Gamma_{RB}^{\mathcal{N}^{\theta}})\\
+\infty & \text{otherwise.}
\end{array}
\right.  ,
\end{multline}
where $\Gamma^{\mathcal{N}^{\theta}}_{RB}$ is the Choi operator of the channel
$\mathcal{N}^{\theta}_{A\to B}$. The explicit expression above for
$\widehat{I}_{F}(\theta;\{\mathcal{N}_{A\rightarrow B}^{\theta}\}_{\theta})$
was given in \cite{Hayashi2011}\ and is recalled in
Proposition~\ref{prop:geo-fish-explicit-formula}\ below. It is unclear to us
at the moment how to obtain a more explicit form for $I_{F}(\theta
;\{\mathcal{N}_{A\rightarrow B}^{\theta}\}_{\theta})$ in terms of its Choi operator.

The finiteness conditions in \eqref{eq:finiteness-condition-SLD-fish-ch} and \eqref{eq:finiteness-condition-RLD-fish-ch} have interesting implications for a differentiable family $\{\mathcal{U}_\theta\}_\theta$ of isometric or unitary channels. When such a family acts on one share of a maximally entangled state, it induces a differentiable family of pure states. Now applying what was stated previously in Sections~\ref{subsubsec:SLD-fish} and \ref{subsubsec:RLD-fish} for such families, it follows that the SLD Fisher information of  $\{\mathcal{U}_\theta\}_\theta$ is always finite, whereas the RLD Fisher information of  $\{\mathcal{U}_\theta\}_\theta$ is finite if and only if it is equal to zero (i.e., when the family $\{\mathcal{U}_\theta\}_\theta$ is a constant family $\{\mathcal{U}\}_\theta$ independent of the parameter $\theta$). So in this sense, the RLD Fisher information of isometric or unitary channels is a degenerate and uninteresting information measure.

\subsubsection{Amortized Fisher information}

The generalized Fisher information of quantum channels is motivated by channel
parameter estimation, and in particular, by the parallel setting of channel
estimation. Now motivated by the more general sequential setting of channel
parameter estimation, we define the following amortized Fisher information of
quantum channels:

\begin{definition}
[Amortized Fisher information of quantum channels]
\label{def:amort-Fish-ch}
The amortized Fisher
information of a family $\{\mathcal{N}_{A\rightarrow B}^{\theta}\}_{\theta}$
of quantum channels is defined as follows:
\begin{equation}
\mathbf{I}_{F}^{\mathcal{A}}(\theta;\{\mathcal{N}_{A\rightarrow B}^{\theta
}\}_{\theta}):=\sup_{\{\rho_{RA}^{\theta}\}_{\theta}}\left[  \mathbf{I}
_{F}(\theta;\{\mathcal{N}_{A\rightarrow B}^{\theta}(\rho_{RA}^{\theta
})\}_{\theta})-\mathbf{I}_{F}(\theta;\{\rho_{RA}^{\theta})\}_{\theta})\right]
, \label{eq:amortized-fisher-info}
\end{equation}
where the supremum is with respect to arbitrary state families $\{\rho
_{RA}^{\theta}\}_{\theta}$\ with unbounded reference system $R$.
\end{definition}

The idea behind this quantity is the same as that of the amortized channel
divergence of \cite{Berta2018c}.\ We allow for a resource at the channel input
in order to help with the estimation task, but then we subtract off the value
of this resource in order to account for the amount of resource that is strictly present in the channel family. In this case, the resource is estimability, as proposed in
\cite{Matsumoto2005}. This kind of idea has been useful in the analysis of
feedback-assisted or sequential protocols in other areas of quantum
information science
\cite{BHLS03,BGMW17,RKBKMA17,KW17,BW17,DW17,PhysRevResearch.1.033169,Fang2019a,Wang_2019}
, and here we see how it is useful in the context of channel parameter
estimation. Also, we should indicate here that the amortized channel
divergence of \cite{Berta2018c} is a special case of the amortized Fisher
information in which the parameter $\theta$ takes on only two values.

\begin{proposition}
\label{prop:amort->=-ch-Fish-gen}Let $\{\mathcal{N}_{A\rightarrow B}^{\theta
}\}_{\theta}$ be a family of quantum channels, and suppose that the underlying
generalized Fisher information is weakly faithful. Then the generalized Fisher
information does not exceed the amortized one:
\begin{equation}
\mathbf{I}_{F}^{\mathcal{A}}(\theta;\{\mathcal{N}_{A\rightarrow B}^{\theta
}\}_{\theta})\geq\mathbf{I}_{F}(\theta;\{\mathcal{N}_{A\rightarrow B}^{\theta
}\}_{\theta}). \label{eq:general-amort-ineq-obvi-dir}
\end{equation}

\end{proposition}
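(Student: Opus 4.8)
The plan is to obtain the inequality simply by restricting the supremum that defines the amortized Fisher information to the subclass of constant (i.e., $\theta$-independent) input state families. Concretely, fix an arbitrary state $\rho_{RA}$ with unbounded reference system $R$, and consider the family $\{\rho_{RA}^{\theta}\}_{\theta}$ with $\rho_{RA}^{\theta}=\rho_{RA}$ for every $\theta$. Since this family has no dependence on $\theta$, equation~\eqref{eq:constant-value-gen-Fish-states} gives $\mathbf{I}_{F}(\theta;\{\rho_{RA}\}_{\theta})=c$ for some constant $c$, and the assumed weak faithfulness forces $c=0$. Hence, for this particular choice, the bracketed difference in \eqref{eq:amortized-fisher-info} collapses to $\mathbf{I}_{F}(\theta;\{\mathcal{N}_{A\rightarrow B}^{\theta}(\rho_{RA})\}_{\theta})$.

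Next I would take the supremum of this expression over all such constant input families, equivalently over all states $\rho_{RA}$. By Definition~\ref{def:gen-fish-channels} (equation~\eqref{eq:gen-fisher-channels}), this supremum is exactly $\mathbf{I}_{F}(\theta;\{\mathcal{N}_{A\rightarrow B}^{\theta}\}_{\theta})$. On the other hand, the constant families form a subset of the families ranged over in \eqref{eq:amortized-fisher-info}, so this same supremum is at most $\mathbf{I}_{F}^{\mathcal{A}}(\theta;\{\mathcal{N}_{A\rightarrow B}^{\theta}\}_{\theta})$. Combining the two observations yields \eqref{eq:general-amort-ineq-obvi-dir}.

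I do not expect a genuine obstacle here: the statement is essentially the assertion that permitting a $\theta$-dependent resource at the channel input, with its own Fisher information credited back, can only help relative to the non-adaptive definition. The only point that needs care is the role of weak faithfulness. Without it, a constant family would carry a nonzero constant $c$, and the argument would give only $\mathbf{I}_{F}^{\mathcal{A}}(\theta;\{\mathcal{N}_{A\rightarrow B}^{\theta}\}_{\theta})\geq\mathbf{I}_{F}(\theta;\{\mathcal{N}_{A\rightarrow B}^{\theta}\}_{\theta})-c$; the hypothesis $c=0$ is precisely what produces the clean bound. I would also note that no additional regularity is required beyond what has already been assumed, since the argument invokes only the defining monotonicity property of $\mathbf{I}_{F}$ together with the already-established constancy and weak-faithfulness facts.
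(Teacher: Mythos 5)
Your proposal is correct and follows essentially the same route as the paper's proof: choose a constant input family, apply weak faithfulness to make the subtracted term vanish, and then take the supremum over all input states to recover the channel quantity. No gaps to report.
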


\begin{proof}
This follows because we can always pick the input family $\{\rho_{RA}^{\theta
}\}_{\theta}$ in \eqref{eq:amortized-fisher-info} to have no dependence on the
parameter $\theta$. Then we find that
\begin{align}
\mathbf{I}_{F}^{\mathcal{A}}(\theta;\{\mathcal{N}_{A\rightarrow B}^{\theta
}\}_{\theta})  &  \geq\mathbf{I}_{F}(\theta;\{\mathcal{N}_{A\rightarrow
B}^{\theta}(\rho_{RA})\}_{\theta})-\mathbf{I}_{F}(\theta;\{\rho_{RA}
)\}_{\theta})\\
&  =\mathbf{I}_{F}(\theta;\{\mathcal{N}_{A\rightarrow B}^{\theta}(\rho
_{RA})\}_{\theta}),
\end{align}
where we applied the weak faithfulness assumption to arrive at the equality.
Since the inequality holds for all input states $\rho_{RA}$, we conclude \eqref{eq:general-amort-ineq-obvi-dir}.
\end{proof}

\medskip

We now connect the amortized Fisher information to sequential channel
estimation through the following meta-converse, which generalizes the related
meta-converse of \cite{Berta2018c}:

\begin{theorem}
\label{thm:meta-converse}Consider a general sequential channel estimation
protocol of the form discussed in Section~\ref{sec:ch-param-est}. Suppose that
the generalized Fisher information $\mathbf{I}_{F}$ is weakly faithful. Then
the following inequality holds
\begin{equation}
\mathbf{I}_{F}(\theta;\{\omega_{R_{n}B_{n}}^{\theta}\}_{\theta})\leq
n\cdot \mathbf{I}_{F}^{\mathcal{A}}(\theta;\{\mathcal{N}_{A\rightarrow B}^{\theta
}\}_{\theta}),
\end{equation}
where $\omega_{R_{n}B_{n}}^{\theta}$ is the final state of the estimation
protocol, as given in \eqref{eq:estimation-final-state}.
\end{theorem}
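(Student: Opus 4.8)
The plan is to run a telescoping (chain) argument over the steps of the sequential protocol, exactly in the spirit of the amortization-collapse arguments for channel divergences in \cite{Berta2018c}. First I would name the intermediate states: for $i\in\{1,\dots,n\}$ let $\rho_{R_iA_i}^{\theta}$ denote the state just before the $i$-th use of $\mathcal{N}^{\theta}$ and $\rho_{R_iB_i}^{\theta}:=\mathcal{N}_{A_i\to B_i}^{\theta}(\rho_{R_iA_i}^{\theta})$ the state just after, adopting the convention $\rho_{R_1A_1}^{\theta}:=\rho_{R_1A_1}$ (which carries no $\theta$-dependence) and observing that $\rho_{R_nB_n}^{\theta}=\omega_{R_nB_n}^{\theta}$. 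The interleaving channels satisfy $\rho_{R_{i+1}A_{i+1}}^{\theta}=\mathcal{S}_{R_iB_i\to R_{i+1}A_{i+1}}^{i}(\rho_{R_iB_i}^{\theta})$, and crucially each $\mathcal{S}^{i}$ is independent of $\theta$.

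Next I would write the difference $\mathbf{I}_{F}(\theta;\{\rho_{R_nB_n}^{\theta}\}_{\theta})-\mathbf{I}_{F}(\theta;\{\rho_{R_1A_1}^{\theta}\}_{\theta})$ as a telescoping sum that alternates between ``channel steps'' $\rho_{R_iA_i}^{\theta}\to\rho_{R_iB_i}^{\theta}$ and ``interleaving steps'' $\rho_{R_iB_i}^{\theta}\to\rho_{R_{i+1}A_{i+1}}^{\theta}$. For each of the $n-1$ interleaving steps, the term $\mathbf{I}_{F}(\theta;\{\rho_{R_{i+1}A_{i+1}}^{\theta}\}_{\theta})-\mathbf{I}_{F}(\theta;\{\rho_{R_iB_i}^{\theta}\}_{\theta})$ is nonpositive by the data-processing inequality \eqref{eq:gen-fisher-states} applied to the parameter-independent channel $\mathcal{S}^{i}$. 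For each of the $n$ channel steps, the term $\mathbf{I}_{F}(\theta;\{\rho_{R_iB_i}^{\theta}\}_{\theta})-\mathbf{I}_{F}(\theta;\{\rho_{R_iA_i}^{\theta}\}_{\theta})$ is upper bounded by $\mathbf{I}_{F}^{\mathcal{A}}(\theta;\{\mathcal{N}_{A\to B}^{\theta}\}_{\theta})$ directly from Definition~\ref{def:amort-Fish-ch}, since $\{\rho_{R_iA_i}^{\theta}\}_{\theta}$ is an admissible input family in the supremum defining the amortized Fisher information (the unbounded reference system in that definition accommodates the growing systems $R_i$). Summing, the whole expression is at most $n\cdot\mathbf{I}_{F}^{\mathcal{A}}(\theta;\{\mathcal{N}_{A\to B}^{\theta}\}_{\theta})$.

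Finally I would dispose of the boundary term: because $\rho_{R_1A_1}^{\theta}=\rho_{R_1A_1}$ has no dependence on $\theta$, equation \eqref{eq:constant-value-gen-Fish-states} together with the hypothesis that $\mathbf{I}_{F}$ is weakly faithful (so the constant $c$ there equals zero) gives $\mathbf{I}_{F}(\theta;\{\rho_{R_1A_1}^{\theta}\}_{\theta})=0$. Combining this with $\rho_{R_nB_n}^{\theta}=\omega_{R_nB_n}^{\theta}$ yields the claimed inequality.

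I do not expect a genuine obstacle; the only thing requiring care is the bookkeeping — ensuring the telescoping has exactly $n$ channel steps and $n-1$ interleaving steps, correctly matching each map to the right inequality ($\theta$-independent $\mathcal{S}^{i}$ to data processing, $\theta$-dependent $\mathcal{N}^{\theta}$ to the amortized bound), and checking that weak faithfulness is precisely what makes the initial term vanish rather than contribute a positive constant.
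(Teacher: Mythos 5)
Your proposal is correct and is essentially the paper's own proof: the same telescoping decomposition into $n$ channel steps (each bounded by the amortized Fisher information via Definition~\ref{def:amort-Fish-ch}) and $n-1$ interleaving steps (each nonpositive by data processing under the parameter-independent $\mathcal{S}^{i}$), with weak faithfulness killing the initial-state term. No gaps.
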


\begin{proof}
Consider that
\begin{align}
&  \mathbf{I}_{F}(\theta;\{\omega_{R_{n}B_{n}}^{\theta}\}_{\theta})\nonumber\\
&  =\mathbf{I}_{F}(\theta;\{\omega_{R_{n}B_{n}}^{\theta}\}_{\theta
})-\mathbf{I}_{F}(\theta;\{\rho_{R_{1}A_{1}}\}_{\theta})\\
&  =\mathbf{I}_{F}(\theta;\{\omega_{R_{n}B_{n}}^{\theta}\}_{\theta
})-\mathbf{I}_{F}(\theta;\{\rho_{R_{1}A_{1}}\}_{\theta})+\sum_{i=2}^{n}\left(
\mathbf{I}_{F}(\theta;\{\rho_{R_{i}A_{i}}^{\theta}\}_{\theta})-\mathbf{I}
_{F}(\theta;\{\rho_{R_{i}A_{i}}^{\theta}\}_{\theta})\right) \\
&  =\mathbf{I}_{F}(\theta;\{\omega_{R_{n}B_{n}}^{\theta}\}_{\theta
})-\mathbf{I}_{F}(\theta;\{\rho_{R_{1}A_{1}}\}_{\theta})\nonumber\\
&  \qquad+\sum_{i=2}^{n}\left(  \mathbf{I}_{F}(\theta;\{\mathcal{S}
_{R_{i-1}B_{i-1}\rightarrow R_{i}A_{i}}^{i-1}(\rho_{R_{i-1}B_{i-1}}^{\theta
})\}_{\theta})-\mathbf{I}_{F}(\theta;\{\rho_{R_{i}A_{i}}^{\theta}\}_{\theta
})\right) \\
&  \leq\mathbf{I}_{F}(\theta;\{\omega_{R_{n}B_{n}}^{\theta}\}_{\theta
})-\mathbf{I}_{F}(\theta;\{\rho_{R_{1}A_{1}}\}_{\theta})\nonumber\\
&  \qquad+\sum_{i=2}^{n}\left(  \mathbf{I}_{F}(\theta;\{\rho_{R_{i-1}B_{i-1}
}^{\theta}\}_{\theta})-\mathbf{I}_{F}(\theta;\{\rho_{R_{i}A_{i}}^{\theta
}\}_{\theta})\right) \\
&  =\sum_{i=1}^{n}\left(  \mathbf{I}_{F}(\theta;\{\rho_{R_{i}B_{i}}^{\theta
}\}_{\theta})-\mathbf{I}_{F}(\theta;\{\rho_{R_{i}A_{i}}^{\theta}\}_{\theta
}\right) \\
&  =\sum_{i=1}^{n}\left(  \mathbf{I}_{F}(\theta;\{\mathcal{N}_{A_{i}
\rightarrow B_{i}}^{\theta}(\rho_{R_{i}A_{i}}^{\theta})\}_{\theta}
)-\mathbf{I}_{F}(\theta;\{\rho_{R_{i}A_{i}}^{\theta}\}_{\theta}\right) \\
&  \leq n\cdot\sup_{\{\rho_{RA}^{\theta}\}_{\theta}}\left[  \mathbf{I}
_{F}(\theta;\{\mathcal{N}_{A\rightarrow B}^{\theta}(\rho_{RA}^{\theta
})\}_{\theta})-\mathbf{I}_{F}(\theta;\{\rho_{RA}^{\theta})\}_{\theta})\right]
\\
&  =n\cdot \mathbf{I}_{F}^{\mathcal{A}}(\theta;\{\mathcal{N}_{A\rightarrow
B}^{\theta}\}_{\theta}).
\end{align}
The first equality follows from the weak faithfulness assumption and because
the initial state of the protocol has no dependence on the parameter $\theta$.
The inequality follows from the data-processing inequality. The other steps
are straightforward manipulations.
\end{proof}

\bigskip

For some particular choices of the generalized Fisher information, the
inequality in \eqref{eq:general-amort-ineq-obvi-dir}\ can be reversed, which is called an ``amortization collapse.'' 
Theorem~\ref{thm:meta-converse}\ makes such a collapse useful for establishing limits on the
performance of sequential estimation protocols if the underlying Fisher
information has a relation to the MSE\ through a CRB. We show later that the
following equalities hold for the root SLD and RLD Fisher informations for all differentiable
families $\{\mathcal{N}_{A\rightarrow B}^{\theta}\}_{\theta}$\ of quantum
channels:
\begin{align}
\sqrt{I_{F}}^{\mathcal{A}}(\theta;\{\mathcal{N}_{A\rightarrow B}^{\theta
}\}_{\theta}) & =\sqrt{I_{F}}(\theta;\{\mathcal{N}_{A\rightarrow B}^{\theta
}\}_{\theta}),
\\
\widehat{I}_{F}^{\mathcal{A}}(\theta;\{\mathcal{N}_{A\rightarrow B}^{\theta
}\}_{\theta}) & =\widehat{I}_{F}(\theta;\{\mathcal{N}_{A\rightarrow B}^{\theta
}\}_{\theta}).
\end{align}
Also, for differentiable families $\{\mathcal{N}_{X\rightarrow B}^{\theta
}\}_{\theta}$\ of classical--quantum channels, the following equality holds
for the SLD Fisher information:
\begin{equation}
I_{F}^{\mathcal{A}}(\theta;\{\mathcal{N}_{X\rightarrow B}^{\theta}\}_{\theta
})=I_{F}(\theta;\{\mathcal{N}_{X\rightarrow B}^{\theta}\}_{\theta}).
\end{equation}

\subsubsection{Environment-parameterized and environment-seizable channel families}

In this section, we recall the notion of environment-parameterized and
environment-seizable channel families, as discussed in \cite{TW16,Berta2018c,DW19}, and we show
that the amortized Fisher information collapses for environment-seizable
channel families. Environment-parameterized channel families are also known as programmable
channel families \cite{DP05}.

\begin{definition}
[Environment-parameterized family]A family $\{\mathcal{N}_{A\rightarrow
B}^{\theta}\}_{\theta}$ is called  environment-parameterized if
there exists a family $\{\rho_{E}^{\theta}\}_{\theta}$ of states and a
parameter-independent quantum channel $\mathcal{M}_{AE\rightarrow B}$ such
that the action of $\mathcal{N}_{A\rightarrow B}^{\theta}$ on any channel
input $\omega_{A}$ can be written as follows:%
\begin{equation}
\mathcal{N}_{A\rightarrow B}^{\theta}(\omega_{A})=\mathcal{M}_{AE\rightarrow
B}(\omega_{A}\otimes\rho_{E}^{\theta}).\label{eq:env-param-def}%
\end{equation}

\end{definition}

It is important to highlight that \textit{every} channel family is environment
parameterized in a trivial way, as discussed in \cite{DW19} for a finite set.
Indeed, set $\rho_{E}^{\theta}=|\theta\rangle\!\langle\theta|_{E}$, where the
vectors $\{|\theta\rangle_{E}\}_{\theta}$ are an orthonormal family, and set%
\begin{equation}
\mathcal{M}_{AE\rightarrow B}(\tau_{AE})=\int d\theta\ \mathcal{N}%
_{A\rightarrow B}^{\theta}(\langle\theta|_{E}\tau_{AE}|\theta\rangle_{E}).
\end{equation}
This simulation can be thought of as preparing a classical register $E$ with
the parameter value $\theta$, and then the parameter-independent channel
$\mathcal{M}_{AE\rightarrow B}$ observes the value $\theta$ in the classical
register and performs the channel $\mathcal{N}_{A\rightarrow B}^{\theta}$ on
the input system $A$. However, this construction is not useful for obtaining
upper bounds on the performance of channel families for quantum estimation,
because the classical Fisher information of the classical background family
$\{|\theta\rangle\!\langle\theta|_{E}\}_{\theta}$ is equal to infinity.

The notion of environment-parameterized channels only becomes interesting or
useful for obtaining bounds on the performance of channel estimation in the
case that the background environment states $\rho_{E}^{\theta}$ are not
perfectly distinguishable, as considered in \cite{Ji2008,Demkowicz-Dobrzanski2014,TW16}. That is, this concept is
only useful for obtaining bounds if the Fisher information of the state family
$\{\rho_{E}^{\theta}\}_{\theta}$ is finite. In a general sense, performance
bounds in the general sequential setting can be understood as being a
consequence of the following proposition:

\begin{proposition}
\label{prop:env-param-bound}Let $\{\mathcal{N}_{A\rightarrow B}^{\theta
}\}_{\theta}$ be an environment-parameterized channel family with associated
environment state family $\{\rho_{E}^{\theta}\}_{\theta}$. Suppose that the
underlying generalized Fisher information is subadditive on product-state
families. Then the amortized Fisher information obeys the following bound:%
\begin{equation}
\mathbf{I}_{F}^{\mathcal{A}}(\theta;\{\mathcal{N}_{A\rightarrow B}^{\theta
}\}_{\theta})\leq\mathbf{I}_{F}(\theta;\{\rho_{E}^{\theta}\}_{\theta}).
\label{eq:env-param-bound}
\end{equation}

\end{proposition}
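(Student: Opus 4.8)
The plan is to bound the bracketed quantity appearing in the definition of $\mathbf{I}_{F}^{\mathcal{A}}$ for an arbitrary input family $\{\rho_{RA}^{\theta}\}_{\theta}$ and then take the supremum. First I would use the environment-parameterized structure \eqref{eq:env-param-def} to write, for each $\theta$,
\begin{equation}
\mathcal{N}_{A\rightarrow B}^{\theta}(\rho_{RA}^{\theta})=\mathcal{M}_{AE\rightarrow B}(\rho_{RA}^{\theta}\otimes\rho_{E}^{\theta}),
\end{equation}
where $\mathcal{M}_{AE\rightarrow B}$ is a parameter-independent channel. Since the generalized Fisher information does not increase under the action of a parameter-independent channel (here $\mathcal{M}_{AE\rightarrow B}$, with $R$ acting as an inert reference), the data-processing inequality \eqref{eq:gen-fisher-states} yields
\begin{equation}
\mathbf{I}_{F}(\theta;\{\mathcal{N}_{A\rightarrow B}^{\theta}(\rho_{RA}^{\theta})\}_{\theta})\leq\mathbf{I}_{F}(\theta;\{\rho_{RA}^{\theta}\otimes\rho_{E}^{\theta}\}_{\theta}).
\end{equation}

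Next I would invoke the hypothesis that $\mathbf{I}_{F}$ is subadditive on product-state families, which gives
\begin{equation}
\mathbf{I}_{F}(\theta;\{\rho_{RA}^{\theta}\otimes\rho_{E}^{\theta}\}_{\theta})\leq\mathbf{I}_{F}(\theta;\{\rho_{RA}^{\theta}\}_{\theta})+\mathbf{I}_{F}(\theta;\{\rho_{E}^{\theta}\}_{\theta}).
\end{equation}
Combining the two displays and rearranging,
\begin{equation}
\mathbf{I}_{F}(\theta;\{\mathcal{N}_{A\rightarrow B}^{\theta}(\rho_{RA}^{\theta})\}_{\theta})-\mathbf{I}_{F}(\theta;\{\rho_{RA}^{\theta}\}_{\theta})\leq\mathbf{I}_{F}(\theta;\{\rho_{E}^{\theta}\}_{\theta}).
\end{equation}
Because the right-hand side has no dependence on the input family $\{\rho_{RA}^{\theta}\}_{\theta}$, taking the supremum over all such families on the left-hand side gives \eqref{eq:env-param-bound}.

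I do not anticipate a serious obstacle; the only point requiring care is that the supremum defining $\mathbf{I}_{F}^{\mathcal{A}}$ ranges over families with an unbounded reference system $R$, but this is harmless since every step above holds uniformly in the choice of $R$ and of the input family, and adjoining the system $E$ does not disturb the role of $R$ as reference for the channel $\mathcal{M}_{AE\rightarrow B}$. It is worth emphasizing that only subadditivity, not full additivity, of $\mathbf{I}_{F}$ on product families is needed, and that the SLD and RLD Fisher informations satisfy this by Proposition~\ref{prop:additivity-SLD-RLD-states}, so the conclusion applies to both of them.
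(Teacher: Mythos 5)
Your proof is correct and follows essentially the same route as the paper's: apply the environment-parameterized decomposition, then data processing under the parameter-independent channel $\mathcal{M}_{AE\rightarrow B}$, then subadditivity on product-state families, and finally take the supremum over input families. No gaps.
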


\begin{proof}
Let $\{\omega_{RA}^{\theta}\}_{\theta}$ be an arbitrary input state family.
Then the following chain of inequalities holds%
\begin{align}
\mathbf{I}_{F}(\theta;\{\mathcal{N}_{A\rightarrow B}^{\theta}(\omega
_{RA}^{\theta})\}_{\theta})  & =\mathbf{I}_{F}(\theta;\{\mathcal{M}%
_{AE\rightarrow B}(\omega_{RA}^{\theta}\otimes\rho_{E}^{\theta})\}_{\theta
})\\
& \leq\mathbf{I}_{F}(\theta;\{\omega_{RA}^{\theta}\otimes\rho_{E}^{\theta
}\}_{\theta})\\
& \leq\mathbf{I}_{F}(\theta;\{\omega_{RA}^{\theta}\}_{\theta})+\mathbf{I}%
_{F}(\theta;\{\rho_{E}^{\theta}\}_{\theta}).
\end{align}
The equality follows by applying \eqref{eq:env-param-def}. The first
inequality follows from data processing, and the second inequality follows
from the assumption of subadditivity of $\mathbf{I}_{F}$ on product-state families. Since the inequality holds for an arbitrary state family $\{\omega_{RA}^{\theta}\}_{\theta}$, we conclude~\eqref{eq:env-param-bound}.
\end{proof}

\bigskip

Perhaps the most interesting case of environment-parameterized channel
families is when the environment states are seizable by a pre- and
post-processing of the channel \cite{Berta2018c,DW19}:

\begin{definition}
[Environment-seizable family]\label{def:env-seize-ch-fams}An
environment-parameterized channel family $\{\mathcal{N}_{A\rightarrow
B}^{\theta}\}_{\theta}$ with associated environment state family $\{\rho
_{E}^{\theta}\}_{\theta}$ is called environment seizable if there exists a
parameter-independent input state $\zeta_{RA}$ and post-processing channel
$\mathcal{D}_{RB\rightarrow E}$ that can be used to seize the background state
$\rho_{E}^{\theta}$ in the following sense:%
\begin{equation}
\mathcal{D}_{RB\rightarrow E}(\mathcal{N}_{A\rightarrow B}^{\theta}(\zeta
_{RA}))=\rho_{E}^{\theta}.
\end{equation}

\end{definition}

Simple examples of these channel families, along with simple
environment-seizing procedures, were discussed in \cite{Berta2018c}. These examples
include erasure and dephasing channels, with the underlying parameter being
the noise parameter of the channel.

As indicated by Definition~\ref{def:env-seize-ch-fams}, environment-seizable
channel families are fully identified with their background environment
states. That is, for such channel families, the most powerful procedure for
estimating them is to seize the background states first and then perform
processing on these background environment states. One way to formalize this
is with the following proposition:

\begin{proposition}
Let $\{\mathcal{N}_{A\rightarrow B}^{\theta}\}_{\theta}$ be an
environment-seizable channel family with associated environment state family
$\{\rho_{E}^{\theta}\}_{\theta}$. Suppose that the underlying generalized
Fisher information is subadditive on product-state families and weakly
faithful. Then the amortized Fisher information is equal to the generalized
Fisher information of the environment state family:%
\begin{equation}
\mathbf{I}_{F}^{\mathcal{A}}(\theta;\{\mathcal{N}_{A\rightarrow B}^{\theta
}\}_{\theta})=\mathbf{I}_{F}(\theta;\{\rho_{E}^{\theta}\}_{\theta}).
\end{equation}

\end{proposition}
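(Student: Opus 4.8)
The plan is to prove the two inequalities separately. The bound $\mathbf{I}_{F}^{\mathcal{A}}(\theta;\{\mathcal{N}_{A\rightarrow B}^{\theta}\}_{\theta})\leq\mathbf{I}_{F}(\theta;\{\rho_{E}^{\theta}\}_{\theta})$ is immediate: an environment-seizable channel family is in particular environment parameterized, so Proposition~\ref{prop:env-param-bound} applies directly, its hypothesis (subadditivity of $\mathbf{I}_F$ on product-state families) being exactly one of the assumptions of the present statement.

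For the reverse inequality, I would use the seizability structure to exhibit a single good input family inside the supremum defining the amortized Fisher information. Let $\zeta_{RA}$ be the parameter-independent input state and $\mathcal{D}_{RB\rightarrow E}$ the post-processing channel furnished by Definition~\ref{def:env-seize-ch-fams}, so that $\mathcal{D}_{RB\rightarrow E}(\mathcal{N}_{A\rightarrow B}^{\theta}(\zeta_{RA}))=\rho_{E}^{\theta}$ for every $\theta$. Choosing the constant input family $\{\rho_{RA}^{\theta}\}_{\theta}=\{\zeta_{RA}\}_{\theta}$ in \eqref{eq:amortized-fisher-info} and invoking weak faithfulness to conclude $\mathbf{I}_{F}(\theta;\{\zeta_{RA}\}_{\theta})=0$, one obtains
\[
\mathbf{I}_{F}^{\mathcal{A}}(\theta;\{\mathcal{N}_{A\rightarrow B}^{\theta}\}_{\theta})\geq\mathbf{I}_{F}(\theta;\{\mathcal{N}_{A\rightarrow B}^{\theta}(\zeta_{RA})\}_{\theta}).
\]
Then applying the data-processing inequality \eqref{eq:gen-fisher-states} to the channel $\mathcal{D}_{RB\rightarrow E}$ and using the seizability identity yields
\[
\mathbf{I}_{F}(\theta;\{\mathcal{N}_{A\rightarrow B}^{\theta}(\zeta_{RA})\}_{\theta})\geq\mathbf{I}_{F}(\theta;\{\mathcal{D}_{RB\rightarrow E}(\mathcal{N}_{A\rightarrow B}^{\theta}(\zeta_{RA}))\}_{\theta})=\mathbf{I}_{F}(\theta;\{\rho_{E}^{\theta}\}_{\theta}),
\]
which is the reverse inequality. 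Combining the two bounds gives the claimed equality.

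I do not anticipate a genuine obstacle here: the argument is essentially a one-line combination of weak faithfulness (to annihilate the subtracted term in the amortized quantity) and data processing (to push the seized environment states downstream). The only points meriting a sentence of care are (i) verifying that Proposition~\ref{prop:env-param-bound} is applicable for the easy direction, since its subadditivity hypothesis is part of what we assume, and (ii) noting that the constant family $\{\zeta_{RA}\}_{\theta}$ is a legitimate choice in the supremum in \eqref{eq:amortized-fisher-info}, so that the lower bound is valid.
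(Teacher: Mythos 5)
Your proposal is correct and matches the paper's proof essentially verbatim: the easy direction via Proposition~\ref{prop:env-param-bound}, and the reverse direction by inserting the constant family $\{\zeta_{RA}\}_{\theta}$ into the amortized supremum, killing the subtracted term by weak faithfulness, and pushing through the post-processing channel $\mathcal{D}_{RB\rightarrow E}$ via data processing. No gaps.
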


\begin{proof}
The inequality $\leq$ was established by
Proposition~\ref{prop:env-param-bound}. To see the opposite inequality, pick
$\{\rho_{RA}^{\theta}\}_{\theta}$ in the definition of $\mathbf{I}%
_{F}^{\mathcal{A}}(\theta;\{\mathcal{N}_{A\rightarrow B}^{\theta}\}_{\theta})$
to be the parameter-independent family $\{\zeta_{RA}\}_{\theta}$. Then it
follows that%
\begin{align}
\mathbf{I}_{F}^{\mathcal{A}}(\theta;\{\mathcal{N}_{A\rightarrow B}^{\theta
}\}_{\theta})  & \geq\mathbf{I}_{F}(\theta;\{\mathcal{N}_{A\rightarrow
B}^{\theta}(\zeta_{RA})\}_{\theta})-\mathbf{I}_{F}(\theta;\{\zeta
_{RA}\}_{\theta})\\
& =\mathbf{I}_{F}(\theta;\{\mathcal{N}_{A\rightarrow B}^{\theta}(\zeta
_{RA})\}_{\theta})\\
& \geq\mathbf{I}_{F}(\theta;\{\mathcal{D}_{RB\rightarrow E}(\mathcal{N}%
_{A\rightarrow B}^{\theta}(\zeta_{RA}))\}_{\theta})\\
& =\mathbf{I}_{F}(\theta;\{\rho_{E}^{\theta}\}_{\theta}).
\end{align}
The first inequality follows from Definition~\ref{def:amort-Fish-ch}. The first equality
follows from the weak faithfulness assumption. The second inequality follows
from data processing. The final equality follows from
Definition~\ref{def:env-seize-ch-fams}.
\end{proof}

\bigskip

For these channel families, we can then employ the SLD\ Fisher information to
arrive at the following conclusion, the first part of which was already given
in \cite{Demkowicz-Dobrzanski2014}:

\begin{conclusion}
Let $\{\mathcal{N}_{A\rightarrow B}^{\theta}\}_{\theta}$ be an
environment-parameterized channel family with associated environment state
family $\{\rho_{E}^{\theta}\}_{\theta}$. As a direct consequence of the QCRB in
\eqref{eq:QCRB}, the meta-converse from Theorem~\ref{thm:meta-converse}, and the bound in
Proposition~\ref{prop:env-param-bound}, we conclude the following bound on the
MSE\ of an unbiased estimator $\hat{\theta}$ of $\theta$ that results from an
$n$-round sequential estimation protocol:%
\begin{equation}
\operatorname{Var}(\hat{\theta})\geq\frac{1}{nI_{F}(\theta;\{\rho_{E}^{\theta
}\}_{\theta})}.
\end{equation}
If the channel family is environment seizable as well, then this bound is
achievable in the large $n$ limit.
\end{conclusion}

\subsection{Optimizing the SLD\ and RLD\ Fisher information of quantum states
and channels} \label{subsec:qfi-optimization-formulae}

Particular generalized Fisher informations of interest in applications, due to
the bounds in \eqref{eq:QCRB}, \eqref{eq:SLD<=RLD}, and \eqref{eq:RLD-QCRB}, are
the SLD\ and RLD ones. In this section, we show how these quantities, along
with their dynamic channel versions, can be cast as optimization problems. In
some cases, we find semi-definite programs, which implies that these
quantities can be efficiently computed \cite{AHK05,AK07,AHK12, LSW15} (we
should clarify that, by ``efficient,'' we mean the computational run time is
polynomial in the dimension of the states or channels under consideration).
Thus, in these cases, there is no need to compute spectral decompositions or
matrix inverses in order to evaluate the Fisher information quantities.

\subsubsection{Semi-definite program for SLD\ Fisher information of quantum
states}

\label{sec:SDPs-for-FIs}We begin with the SLD\ Fisher information,
establishing that it can be evaluated by means of a semi-definite program.

\begin{proposition}
\label{prop:SLD-Fish-states-SDP}The SLD\ Fisher information of a
differentiable family $\{\rho_{\theta}\}_{\theta}$ of states satisfying the
finiteness condition in \eqref{eq:finiteness-condition-SLD-Fish-states} can be
evaluated by means of the following semi-definite program:
\begin{equation}
I_{F}(\theta;\{\rho_{\theta}\}_{\theta})=2\cdot\inf\left\{  \mu\in\mathbb{R}:
\begin{bmatrix}
\mu & \langle\Gamma|\left(  \partial_{\theta}\rho_{\theta}\otimes I\right) \\
\left(  \partial_{\theta}\rho_{\theta}\otimes I\right)  |\Gamma\rangle &
\rho_{\theta}\otimes I+I\otimes\rho_{\theta}^{T}
\end{bmatrix}
\geq0\right\}  .
\end{equation}
The dual semi-definite program is as follows:
\begin{equation}
2\cdot\sup_{\lambda,|\varphi\rangle,Z}2\operatorname{Re}[\langle
\varphi|\left(  \partial_{\theta}\rho_{\theta}\otimes I\right)  |\Gamma
\rangle]- \operatorname{Tr}[(\rho_{\theta}\otimes I+I\otimes\rho_{\theta}
^{T})Z],
\end{equation}
subject to $\lambda\in\mathbb{R}$, $|\varphi\rangle$ an arbitrary complex
vector, $Z$ Hermitian, and
\begin{equation}
\lambda\leq1,\qquad
\begin{bmatrix}
\lambda & \langle\varphi|\\
|\varphi\rangle & Z
\end{bmatrix}
\geq0.
\end{equation}

\end{proposition}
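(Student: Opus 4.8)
The plan is to read off both programs from the basis-independent formula \eqref{eq:basis-independent-SLD-formula-extra} for the SLD Fisher information together with the generalized Schur-complement criterion for positive semi-definiteness of a block matrix. Throughout, abbreviate $v := (\partial_\theta\rho_\theta\otimes I)|\Gamma\rangle$ and $C := \rho_\theta\otimes I + I\otimes\rho_\theta^T$, so that $C\geq 0$ and the matrix appearing in the proposition has scalar top-left block $\mu$, off-diagonal blocks $v^\dagger$ and $v$, and bottom-right block $C$ (here we use that $\partial_\theta\rho_\theta$ is Hermitian, so $v^\dagger=\langle\Gamma|(\partial_\theta\rho_\theta\otimes I)$). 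Recall that, when the finiteness condition \eqref{eq:finiteness-condition-SLD-Fish-states} holds, \eqref{eq:basis-independent-SLD-formula-extra} states $I_{F}(\theta;\{\rho_\theta\}_\theta)=2\,v^\dagger C^{+}v$, with $C^{+}$ the inverse of $C$ on its support.

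The first step is to match the intrinsic finiteness condition with the range condition that enters the Schur complement. Since $\operatorname{supp}(C)^{\perp}=\ker(\rho_\theta)\otimes\ker(\rho_\theta^{T})$, the projection onto $\operatorname{supp}(C)^{\perp}$ is $\Pi_{\rho_\theta}^{\perp}\otimes(\Pi_{\rho_\theta}^{\perp})^{T}$, and applying the transpose trick \eqref{eq:transpose-trick} gives $(\Pi_{\rho_\theta}^{\perp}\otimes(\Pi_{\rho_\theta}^{\perp})^{T})\,v=\big((\Pi_{\rho_\theta}^{\perp}(\partial_\theta\rho_\theta)\Pi_{\rho_\theta}^{\perp})\otimes I\big)|\Gamma\rangle$. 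This vector vanishes if and only if $\Pi_{\rho_\theta}^{\perp}(\partial_\theta\rho_\theta)\Pi_{\rho_\theta}^{\perp}=0$, so \eqref{eq:finiteness-condition-SLD-Fish-states} is equivalent to $v\in\operatorname{supp}(C)$.

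Next, I would invoke the generalized Schur complement: for Hermitian $C\geq 0$, the block matrix is positive semi-definite if and only if $v\in\operatorname{supp}(C)$ and $\mu\geq v^\dagger C^{+}v$. Under the finiteness condition, the feasible set for $\mu$ is therefore the ray $[\,v^\dagger C^{+}v,\infty)$, whose infimum is $v^\dagger C^{+}v$; multiplying by $2$ and using \eqref{eq:basis-independent-SLD-formula-extra} yields the asserted primal program (and, if \eqref{eq:finiteness-condition-SLD-Fish-states} fails, the program is infeasible, consistently with $I_{F}=+\infty$). For the dual, I would form the Lagrangian $\mu-\operatorname{Tr}[Y N]$ over $Y\geq 0$, where $N$ denotes the block matrix in the proposition, write $Y$ in block form with scalar block $\lambda$, off-diagonal blocks $\langle\varphi|,|\varphi\rangle$, and bottom-right block $Z$, and minimize over $\mu$; this gives $(1-\lambda)\mu-2\operatorname{Re}\langle\varphi|v\rangle-\operatorname{Tr}[CZ]$, whose infimum over the (without loss of generality nonnegative) variable $\mu$ is finite exactly when $\lambda\leq 1$, leaving the objective $-2\operatorname{Re}\langle\varphi|v\rangle-\operatorname{Tr}[CZ]$. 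Substituting $\varphi\mapsto-\varphi$ (which preserves $Y\geq 0$) and unfolding $\langle\varphi|v\rangle=\langle\varphi|(\partial_\theta\rho_\theta\otimes I)|\Gamma\rangle$ produces precisely the stated dual objective and constraints; one also checks directly that replacing $\lambda\leq 1$ by $\lambda=1$ does not change the supremum, since the objective is independent of $\lambda$ and raising $\lambda$ preserves feasibility. Strong duality, hence equality of the two values and attainment of the primal optimum, follows from Slater's condition, the dual being strictly feasible via $\varphi=0$, $\lambda=1$, $Z=I$; see \cite{BV04,Wat18}.

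I expect the one genuinely delicate point to be the first step — correctly matching the finiteness condition \eqref{eq:finiteness-condition-SLD-Fish-states} with the support condition $v\in\operatorname{supp}(C)$ built into the generalized Schur complement, which requires careful bookkeeping of the kernels of $\rho_\theta$ and $\rho_\theta^{T}$ and of the transpose trick. The remaining steps are a routine application of semi-definite programming duality.
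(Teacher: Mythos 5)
Your proof is correct and follows essentially the same route as the paper: the primal comes from the (generalized) Schur-complement characterization of $v^{\dagger}C^{+}v$, which the paper packages as Lemma~\ref{lem:min-XYinvX}, and the dual from the standard Lagrangian/SDP duality packaged as Lemma~\ref{lem:freq-used-SDP-primal-dual}; if anything, your explicit handling of the range condition when $C=\rho_{\theta}\otimes I+I\otimes\rho_{\theta}^{T}$ is singular is more careful than the paper's, whose lemma is stated for positive definite $Y$. The only nit is your Slater certificate: $\lambda=1$ makes the constraint $\lambda\leq1$ tight rather than strict, so take instead, e.g., $\lambda=1/2$, $\varphi=0$, $Z=I$.
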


\begin{proof}
The primal semi-definite program is a direct consequence of the formula in
\eqref{eq:basis-independent-SLD-formula-extra}\ and Lemma~\ref{lem:min-XYinvX}. The dual program is a consequence of
Lemma~\ref{lem:freq-used-SDP-primal-dual}.
\end{proof}

\subsubsection{Root SLD Fisher information of quantum states as a
quadratically constrained optimization}

In this section, we find that the root SLD\ Fisher information of quantum
states can be computed by means of a quadratically constrained optimization.
These optimization problems are difficult to solve in general, but heuristic
methods are available \cite{PB17}. In any case, the particular optimization
formula in Proposition~\ref{prop:root-SLD-opt-formula}\ is helpful for
establishing the chain rule property of the root SLD\ Fisher information,
which we discuss in Section~\ref{sec:root-SLD-chain-rule-limits}.

\begin{proposition}
\label{prop:root-SLD-opt-formula}Let $\{\rho_{\theta}\}_{\theta}$ be a
differentiable family of quantum states. Then the root SLD\ Fisher information
can be written as the following optimization:
\begin{equation}
\sqrt{I_{F}}(\theta;\{\rho_{\theta}\}_{\theta})=\sqrt{2}\sup_{X}\left\{
\left\vert \operatorname{Tr}[X(\partial_{\theta}\rho_{\theta})]\right\vert
:\operatorname{Tr}[(XX^{\dag}+X^{\dag}X)\rho_{\theta}]\leq1\right\}  .
\label{eq:root-SLD-opt-formula}
\end{equation}
If the finiteness condition in \eqref{eq:finiteness-condition-SLD-Fish-states}
is not satisfied, then the optimization formula evaluates to $+\infty$.
\end{proposition}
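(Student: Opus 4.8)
The plan is to obtain \eqref{eq:root-SLD-opt-formula} from the basis-independent formula \eqref{eq:basis-independent-SLD-formula-extra} in three moves: convert the quadratic form $\langle v\vert M^{-1}\vert v\rangle$ into an unconstrained concave maximization; re-express that maximization over vectors $\vert w\rangle$ as a maximization over operators $X$ via the vectorization $\vert w\rangle=(X\otimes I)\vert\Gamma\rangle$; and finally homogenize to pass from the unconstrained to the constrained form, treating the rank-deficient case separately.

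For the first move I would invoke the standard variational identity: for a positive semi-definite operator $M$ (inverse taken on its support) and a vector $\vert v\rangle$,
\[
\langle v\vert M^{-1}\vert v\rangle=\sup_{\vert w\rangle}\left(2\operatorname{Re}\langle w\vert v\rangle-\langle w\vert M\vert w\rangle\right),
\]
where the supremum equals $+\infty$ precisely when $\vert v\rangle\notin\operatorname{supp}(M)$; this is exactly the mechanism behind the dual semi-definite program in Proposition~\ref{prop:SLD-Fish-states-SDP} (saturate the Schur-complement constraint there with $\lambda=1$ and $Z=\vert\varphi\rangle\!\langle\varphi\vert$). Applying this with $M=\rho_{\theta}\otimes I+I\otimes\rho_{\theta}^{T}$ and $\vert v\rangle=(\partial_{\theta}\rho_{\theta}\otimes I)\vert\Gamma\rangle$, and recalling that by \eqref{eq:basis-independent-SLD-formula-extra} the left-hand side is $I_{F}(\theta;\{\rho_{\theta}\}_{\theta})/2$ when the finiteness condition holds, gives
\[
I_{F}(\theta;\{\rho_{\theta}\}_{\theta})=2\sup_{\vert w\rangle}\left(2\operatorname{Re}\langle w\vert(\partial_{\theta}\rho_{\theta}\otimes I)\vert\Gamma\rangle-\langle w\vert(\rho_{\theta}\otimes I+I\otimes\rho_{\theta}^{T})\vert w\rangle\right).
\]

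Next I would substitute $\vert w\rangle=(X\otimes I)\vert\Gamma\rangle$, which is a bijection between vectors and operators, and evaluate the two inner products using the transpose trick \eqref{eq:transpose-trick} and the partial-trace identity \eqref{eq:max-ent-partial-trace}: one finds $\langle w\vert(\partial_{\theta}\rho_{\theta}\otimes I)\vert\Gamma\rangle=\operatorname{Tr}[X^{\dag}\partial_{\theta}\rho_{\theta}]$ and $\langle w\vert(\rho_{\theta}\otimes I+I\otimes\rho_{\theta}^{T})\vert w\rangle=\operatorname{Tr}[(XX^{\dag}+X^{\dag}X)\rho_{\theta}]$, so that $I_{F}=2\sup_{X}\bigl(2\operatorname{Re}\operatorname{Tr}[X^{\dag}\partial_{\theta}\rho_{\theta}]-\operatorname{Tr}[(XX^{\dag}+X^{\dag}X)\rho_{\theta}]\bigr)$. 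Replacing $X$ by $te^{i\phi}X$ and optimizing over the phase $\phi$ (the quadratic term is phase-invariant) and the scale $t\geq0$ sends each term of the supremum to $\lvert\operatorname{Tr}[X^{\dag}\partial_{\theta}\rho_{\theta}]\rvert^{2}/\operatorname{Tr}[(XX^{\dag}+X^{\dag}X)\rho_{\theta}]$ when the denominator is positive, and homogeneity in $X$ then yields $I_{F}=2\sup\{\lvert\operatorname{Tr}[X^{\dag}\partial_{\theta}\rho_{\theta}]\rvert^{2}:\operatorname{Tr}[(XX^{\dag}+X^{\dag}X)\rho_{\theta}]\leq1\}$. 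Taking square roots and using $\overline{\operatorname{Tr}[X(\partial_{\theta}\rho_{\theta})]}=\operatorname{Tr}[X^{\dag}\partial_{\theta}\rho_{\theta}]$ (Hermiticity of $\partial_{\theta}\rho_{\theta}$), so that $\lvert\operatorname{Tr}[X(\partial_{\theta}\rho_{\theta})]\rvert=\lvert\operatorname{Tr}[X^{\dag}\partial_{\theta}\rho_{\theta}]\rvert$, produces \eqref{eq:root-SLD-opt-formula}.

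It remains to handle the case where the finiteness condition \eqref{eq:finiteness-condition-SLD-Fish-states} fails, where I claim both sides are $+\infty$. Using $\operatorname{Tr}[(XX^{\dag}+X^{\dag}X)\rho_{\theta}]=\|\rho_{\theta}^{1/2}X\|_{2}^{2}+\|\rho_{\theta}^{1/2}X^{\dag}\|_{2}^{2}$, an operator $X$ with vanishing constraint value is precisely one supported on $\ker\rho_{\theta}$ from both sides, for which $\operatorname{Tr}[X^{\dag}\partial_{\theta}\rho_{\theta}]=\operatorname{Tr}[X^{\dag}\Pi_{\rho_{\theta}}^{\perp}(\partial_{\theta}\rho_{\theta})\Pi_{\rho_{\theta}}^{\perp}]$; if $\Pi_{\rho_{\theta}}^{\perp}(\partial_{\theta}\rho_{\theta})\Pi_{\rho_{\theta}}^{\perp}\neq0$ one can choose such an $X$ with $\operatorname{Tr}[X^{\dag}\partial_{\theta}\rho_{\theta}]\neq0$ and scale it freely within the constraint, so the right-hand side diverges, matching $I_{F}=+\infty$; and when the finiteness condition holds these zero-denominator contributions vanish, so the homogenization above is legitimate. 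The main obstacle I anticipate is exactly this rank-deficient bookkeeping: one must check that the unconstrained supremum is finite if and only if $\vert v\rangle\in\operatorname{supp}(M)$, which amounts to verifying via the transpose trick that $\operatorname{supp}(M)^{\perp}=\ker\rho_{\theta}\otimes\ker\rho_{\theta}^{T}$ and that $\vert v\rangle$ is orthogonal to this subspace exactly when \eqref{eq:finiteness-condition-SLD-Fish-states} holds. Everything else is routine manipulation with $\vert\Gamma\rangle$.
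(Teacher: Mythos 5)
Your proposal is correct, but it reaches \eqref{eq:root-SLD-opt-formula} by a different variational route than the paper. The paper works directly at the level of the square root: it writes $\tfrac{1}{\sqrt{2}}\sqrt{I_{F}}=\bigl\Vert M^{-1/2}|v\rangle\bigr\Vert_{2}$ with $M=\rho_{\theta}\otimes I+I\otimes\rho_{\theta}^{T}$ and $|v\rangle=(\partial_{\theta}\rho_{\theta}\otimes I)|\Gamma\rangle$, dualizes the $2$-norm as $\sup_{\Vert\psi\Vert_{2}=1}|\langle\psi|M^{-1/2}|v\rangle|$, and absorbs $M^{-1/2}$ into the optimization variable via $|\psi\rangle=M^{1/2}|\psi'\rangle$, so the constraint $\operatorname{Tr}[(XX^{\dag}+X^{\dag}X)\rho_{\theta}]=1$ appears immediately as the unit-norm condition and no homogenization is needed. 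You instead work at the level of $I_{F}$ itself via the Legendre-type identity $\langle v|M^{-1}|v\rangle=\sup_{|w\rangle}\bigl(2\operatorname{Re}\langle w|v\rangle-\langle w|M|w\rangle\bigr)$, obtain the unconstrained quadratic formula $I_{F}=2\sup_{X}\bigl(2\operatorname{Re}\operatorname{Tr}[X^{\dag}\partial_{\theta}\rho_{\theta}]-\operatorname{Tr}[(XX^{\dag}+X^{\dag}X)\rho_{\theta}]\bigr)$, and then pass to the constrained form by optimizing phase and scale. Both routes share the same operator substitution $(X\otimes I)|\Gamma\rangle$ and the same evaluations via \eqref{eq:transpose-trick} and \eqref{eq:max-ent-partial-trace}; the extra cost of your route is the homogenization step and the bookkeeping that zero-constraint directions contribute nothing under \eqref{eq:finiteness-condition-SLD-Fish-states}, which you handle correctly (your identification of the zero-constraint operators as those with $X=\Pi_{\rho_{\theta}}^{\perp}X\Pi_{\rho_{\theta}}^{\perp}$ is in fact a slightly cleaner way to get the divergence in the rank-deficient case than the paper's choice $X=c\Pi_{\rho_{\theta}}^{\perp}+dI$). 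What your route buys is the intermediate unconstrained formula, which is the natural Lagrangian dual and connects directly to the dual SDP in Proposition~\ref{prop:SLD-Fish-states-SDP}; what the paper's route buys is brevity, since the constraint is produced automatically rather than recovered by scaling. The remaining support-theoretic check you flag, namely that $\operatorname{supp}(M)^{\perp}$ is the range of $\Pi_{\rho_{\theta}}^{\perp}\otimes\Pi_{\rho_{\theta}^{T}}^{\perp}$ and that $|v\rangle$ lies in $\operatorname{supp}(M)$ exactly when \eqref{eq:finiteness-condition-SLD-Fish-states} holds, is exactly the computation carried out in Appendix~\ref{app:basis-ind-dep-formulas-SLD-FI}, so nothing is missing.
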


\begin{proof}
Let us begin by supposing that the finiteness condition in
\eqref{eq:finiteness-condition-SLD-Fish-states} is satisfied (i.e., $\Pi
_{\rho_{\theta}}^{\perp}(\partial_{\theta}\rho_{\theta})\Pi_{\rho_{\theta}
}^{\perp}=0$). Recall from \eqref{eq:basis-independent-SLD-formula-extra} the
following formula for SLD\ Fisher information:
\begin{equation}
I_{F}(\theta;\{\rho_{\theta}\}_{\theta})=2\langle\Gamma|\left(  \partial
_{\theta}\rho_{\theta}\otimes I\right)  \left(  \rho_{\theta}\otimes
I+I\otimes\rho_{\theta}^{T}\right)  ^{-1}\left(  \partial_{\theta}\rho
_{\theta}\otimes I\right)  |\Gamma\rangle,
\end{equation}
so that
\begin{align}
&  \frac{1}{\sqrt{2}}\sqrt{I_{F}}(\theta;\{\rho_{\theta}\}_{\theta
})\nonumber\\
&  =\sqrt{\langle\Gamma|\left(  \partial_{\theta}\rho_{\theta}\otimes
I\right)  \left(  \rho_{\theta}\otimes I+I\otimes\rho_{\theta}^{T}\right)
^{-1}\left(  \partial_{\theta}\rho_{\theta}\otimes I\right)  |\Gamma\rangle}\\
&  =\left\Vert \left(  \rho_{\theta}\otimes I+I\otimes\rho_{\theta}
^{T}\right)  ^{-\frac{1}{2}}\left(  \partial_{\theta}\rho_{\theta}\otimes
I\right)  |\Gamma\rangle\right\Vert _{2}\\
&  =\sup_{|\psi\rangle:\left\Vert |\psi\rangle\right\Vert _{2}=1}\left\vert
\langle\psi|\left(  \rho_{\theta}\otimes I+I\otimes\rho_{\theta}^{T}\right)
^{-\frac{1}{2}}\left(  \partial_{\theta}\rho_{\theta}\otimes I\right)
|\Gamma\rangle\right\vert . \label{eq:sqrt-SLD-Fish-opt-alt}
\end{align}
Observe that the projection onto the support of $\rho_{\theta}\otimes
I+I\otimes\rho_{\theta}^{T}$ is
\begin{equation}
\Pi_{\rho_{\theta}}\otimes\Pi_{\rho_{\theta}^{T}}+\Pi_{\rho_{\theta}}^{\perp
}\otimes\Pi_{\rho_{\theta}^{T}}+\Pi_{\rho_{\theta}}\otimes\Pi_{\rho_{\theta
}^{T}}^{\perp}=I\otimes I-\Pi_{\rho_{\theta}}^{\perp}\otimes\Pi_{\rho_{\theta
}^{T}}^{\perp}.
\end{equation}
Thus, it suffices to optimize over $|\psi\rangle$ satisfying
\begin{equation}
|\psi\rangle=(I\otimes I-\Pi_{\rho_{\theta}}^{\perp}\otimes\Pi_{\rho_{\theta
}^{T}}^{\perp})|\psi\rangle
\end{equation}
because
\begin{multline}
\left(  \rho_{\theta}\otimes I+I\otimes\rho_{\theta}^{T}\right)  ^{-\frac
{1}{2}}\left(  \partial_{\theta}\rho_{\theta}\otimes I\right)  |\Gamma
\rangle\\
=(I\otimes I-\Pi_{\rho_{\theta}}^{\perp}\otimes\Pi_{\rho_{\theta}^{T}}^{\perp
})\left(  \rho_{\theta}\otimes I+I\otimes\rho_{\theta}^{T}\right)  ^{-\frac
{1}{2}}\left(  \partial_{\theta}\rho_{\theta}\otimes I\right)  |\Gamma\rangle.
\end{multline}
Now define
\begin{equation}
|\psi^{\prime}\rangle:=\left(  \rho_{\theta}\otimes I+I\otimes\rho_{\theta
}^{T}\right)  ^{-\frac{1}{2}}|\psi\rangle,
\end{equation}
which implies that
\begin{equation}
|\psi\rangle=(I\otimes I-\Pi_{\rho_{\theta}}^{\perp}\otimes\Pi_{\rho_{\theta
}^{T}}^{\perp})|\psi\rangle=\left(  \rho_{\theta}\otimes I+I\otimes
\rho_{\theta}^{T}\right)  ^{\frac{1}{2}}|\psi^{\prime}\rangle,
\end{equation}
because $I\otimes I-\Pi_{\rho_{\theta}}^{\perp}\otimes\Pi_{\rho_{\theta}^{T}
}^{\perp}$ is the projection onto the support of $\rho_{\theta}\otimes
I+I\otimes\rho_{\theta}^{T}$. Thus, the following equivalence holds
\begin{align}
\left\Vert |\psi\rangle\right\Vert _{2}=1\quad &  \Longleftrightarrow
\quad\left\Vert \left(  \rho_{\theta}\otimes I+I\otimes\rho_{\theta}
^{T}\right)  ^{\frac{1}{2}}|\psi^{\prime}\rangle\right\Vert _{2}=1\\
&  \Longleftrightarrow\quad\langle\psi^{\prime}|\left(  \rho_{\theta}\otimes
I+I\otimes\rho_{\theta}^{T}\right)  |\psi^{\prime}\rangle=1.
\end{align}
Now fix the operator $X$ such that
\begin{equation}
|\psi^{\prime}\rangle=\left(  X\otimes I\right)  |\Gamma\rangle.
\end{equation}
Then the last condition above is the same as the following:
\begin{align}
1  &  =\langle\Gamma|\left(  X^{\dag}\otimes I\right)  \left(  \rho_{\theta
}\otimes I+I\otimes\rho_{\theta}^{T}\right)  \left(  X\otimes I\right)
|\Gamma\rangle\\
&  =\langle\Gamma|\left(  X^{\dag}\rho_{\theta}X\otimes I+X^{\dag}X\otimes
\rho_{\theta}^{T}\right)  |\Gamma\rangle\\
&  =\langle\Gamma|\left(  X^{\dag}\rho_{\theta}X\otimes I+X^{\dag}
X\rho_{\theta}\otimes I\right)  |\Gamma\rangle\\
&  =\operatorname{Tr}[X^{\dag}\rho_{\theta}X]+\operatorname{Tr}[X^{\dag}
X\rho_{\theta}]\\
&  =\operatorname{Tr}[(XX^{\dag}+X^{\dag}X)\rho_{\theta}],
\end{align}
where we used \eqref{eq:transpose-trick}\ and
\eqref{eq:max-ent-partial-trace}. So then the optimization problem in
\eqref{eq:sqrt-SLD-Fish-opt-alt}\ is equal to the following:
\begin{align}
&  \sup_{X:\operatorname{Tr}[(XX^{\dag}+X^{\dag}X)\rho_{\theta}]=1}\left\vert
\langle\Gamma|\left(  X\otimes I\right)  \left(  \partial_{\theta}\rho
_{\theta}\otimes I\right)  |\Gamma\rangle\right\vert \nonumber\\
&  =\sup_{X:\operatorname{Tr}[(XX^{\dag}+X^{\dag}X)\rho_{\theta}]=1}\left\vert
\langle\Gamma|\left(  X(\partial_{\theta}\rho_{\theta})\otimes I\right)
|\Gamma\rangle\right\vert \\
&  =\sup_{X}\left\{  \left\vert \operatorname{Tr}[X(\partial_{\theta}
\rho_{\theta})]\right\vert :\operatorname{Tr}[(XX^{\dag}+X^{\dag}
X)\rho_{\theta}]=1\right\}  ,
\end{align}
where again we used \eqref{eq:max-ent-partial-trace}. Now suppose that
$\operatorname{Tr}[(XX^{\dag}+X^{\dag}X)\rho_{\theta}]=c$, with $c\in(0,1)$.
Then we can multiply $X$ by $\sqrt{1/c}$, and the new operator satisfies the
equality constraint while the value of the objective function increases. So we
can write
\begin{equation}
\sqrt{I_{F}}(\theta;\{\rho_{\theta}\}_{\theta})=\sqrt{2}\sup_{X}\left\{
\left\vert \operatorname{Tr}[X(\partial_{\theta}\rho_{\theta})]\right\vert
:\operatorname{Tr}[(XX^{\dag}+X^{\dag}X)\rho_{\theta}]\leq1\right\}  .
\end{equation}
Finally, in this form, note that we can trivially include $X=0$ as part of the
optimization because it leads to a generally suboptimal value of zero for the
objective function.

Suppose that $\Pi_{\rho_{\theta}}^{\perp}(\partial_{\theta}\rho_{\theta}
)\Pi_{\rho_{\theta}}^{\perp}\neq0$. Then we can pick $X=c\Pi_{\rho_{\theta}
}^{\perp}+dI$ where $c,d>0$ and $2d^{2}=1$. We find that
\begin{align}
\operatorname{Tr}[(XX^{\dag}+X^{\dag}X)\rho_{\theta}]  &  =2\operatorname{Tr}
[\left(  c\Pi_{\rho_{\theta}}^{\perp}+dI\right)  ^{2}\rho_{\theta}]\\
&  =2\operatorname{Tr}[\left(  \left[  c^{2}+2cd\right]  \Pi_{\rho_{\theta}
}^{\perp}+d^{2}I\right)  \rho_{\theta}]\\
&  =2d^{2}=1.
\end{align}
for this case, so that the constraint in \eqref{eq:root-SLD-opt-formula}\ is
satisfied. The objective function then evaluates to
\begin{align}
\left\vert \operatorname{Tr}[X(\partial_{\theta}\rho_{\theta})]\right\vert  &
=\left\vert \operatorname{Tr}[\left(  c\Pi_{\rho_{\theta}}^{\perp}+dI\right)
(\partial_{\theta}\rho_{\theta})]\right\vert \\
&  =\left\vert c\operatorname{Tr}[\Pi_{\rho_{\theta}}^{\perp}(\partial
_{\theta}\rho_{\theta})]+d\operatorname{Tr}[\partial_{\theta}\rho_{\theta
}]\right\vert \\
&  =c\left\vert \operatorname{Tr}[\Pi_{\rho_{\theta}}^{\perp}(\partial
_{\theta}\rho_{\theta})]\right\vert .
\end{align}
Then we can pick $c>0$ arbitrarily large to get that
\eqref{eq:root-SLD-opt-formula} evaluates to $+\infty$ in the case that
$\Pi_{\rho_{\theta}}^{\perp}(\partial_{\theta}\rho_{\theta})\Pi_{\rho_{\theta
}}^{\perp}\neq0$.
\end{proof}

\bigskip

We can use the optimization formula in
Proposition~\ref{prop:root-SLD-opt-formula} to conclude that the
data-processing inequality holds for all two-positive, trace-preserving maps,
which includes quantum channels as a special case. This was already observed
in \cite{Petz96}, but here we give a different proof based on the optimization
formula in Proposition~\ref{prop:root-SLD-opt-formula}.

\begin{proposition}
Let $\{\rho_{\theta}\}_{\theta}$ be a differentiable family of quantum states,
and let $\mathcal{P}$ be a two-positive, trace-preserving map. Then the
following data-processing inequality holds
\begin{equation}
I_{F}(\theta;\{\rho_{\theta}\}_{\theta})\geq I_{F}(\theta;\{\mathcal{P}
(\rho_{\theta})\}_{\theta}).
\end{equation}

\end{proposition}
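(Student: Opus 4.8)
The plan is to invoke the optimization formula for the root SLD Fisher information from Proposition~\ref{prop:root-SLD-opt-formula},
\begin{equation}
\sqrt{I_{F}}(\theta;\{\rho_{\theta}\}_{\theta})=\sqrt{2}\sup_{X}\left\{\left\vert\operatorname{Tr}[X(\partial_{\theta}\rho_{\theta})]\right\vert:\operatorname{Tr}[(XX^{\dag}+X^{\dag}X)\rho_{\theta}]\leq1\right\},
\end{equation}
and to establish the inequality by pulling back a (near-)optimal variable for the family $\{\mathcal{P}(\rho_{\theta})\}_{\theta}$ through the Hilbert--Schmidt adjoint $\mathcal{P}^{\dag}$ of $\mathcal{P}$. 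Since both Fisher informations are nonnegative, it suffices to show that the supremum associated with $\{\rho_{\theta}\}_{\theta}$ dominates the one associated with $\{\mathcal{P}(\rho_{\theta})\}_{\theta}$.

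First I would record the relevant properties of $\mathcal{P}^{\dag}$ (defined by $\operatorname{Tr}[\mathcal{P}^{\dag}(Y)Z]=\operatorname{Tr}[Y\mathcal{P}(Z)]$ for all operators $Y,Z$): it is two-positive because $\mathcal{P}$ is, and it is unital because $\mathcal{P}$ is trace preserving. Now take any $Y$ feasible for the $\{\mathcal{P}(\rho_{\theta})\}_{\theta}$ optimization, i.e.\ satisfying $\operatorname{Tr}[(YY^{\dag}+Y^{\dag}Y)\mathcal{P}(\rho_{\theta})]\leq1$, and set $X:=\mathcal{P}^{\dag}(Y)$. Using the defining property of the adjoint together with $\mathcal{P}$ being linear and independent of $\theta$ (so that $\mathcal{P}(\partial_{\theta}\rho_{\theta})=\partial_{\theta}\mathcal{P}(\rho_{\theta})$), the objective value is unchanged:
\begin{equation}
\operatorname{Tr}[X(\partial_{\theta}\rho_{\theta})]=\operatorname{Tr}[Y\,\mathcal{P}(\partial_{\theta}\rho_{\theta})]=\operatorname{Tr}[Y\,(\partial_{\theta}\mathcal{P}(\rho_{\theta}))].
\end{equation}

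The hard part is verifying that this $X$ is feasible for the $\{\rho_{\theta}\}_{\theta}$ optimization, and this is exactly where two-positivity (rather than mere positivity) of $\mathcal{P}$ enters. I would apply the Kadison--Schwarz / Choi operator inequality for two-positive unital maps \cite{Choi80}, namely $\mathcal{P}^{\dag}(A)^{\dag}\mathcal{P}^{\dag}(A)\leq\mathcal{P}^{\dag}(A^{\dag}A)$, to $A=Y$ and to $A=Y^{\dag}$ (using that $\mathcal{P}^{\dag}$ is Hermiticity preserving, so $\mathcal{P}^{\dag}(Y)^{\dag}=\mathcal{P}^{\dag}(Y^{\dag})$), and add the two resulting inequalities to get $XX^{\dag}+X^{\dag}X\leq\mathcal{P}^{\dag}(YY^{\dag}+Y^{\dag}Y)$. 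Since $\rho_{\theta}\geq0$, multiplying by $\rho_{\theta}$ and taking the trace preserves the inequality, so
\begin{equation}
\operatorname{Tr}[(XX^{\dag}+X^{\dag}X)\rho_{\theta}]\leq\operatorname{Tr}[\mathcal{P}^{\dag}(YY^{\dag}+Y^{\dag}Y)\,\rho_{\theta}]=\operatorname{Tr}[(YY^{\dag}+Y^{\dag}Y)\,\mathcal{P}(\rho_{\theta})]\leq1,
\end{equation}
which confirms feasibility of $X$. Taking the supremum over all feasible $Y$ and then squaring yields $I_{F}(\theta;\{\rho_{\theta}\}_{\theta})\geq I_{F}(\theta;\{\mathcal{P}(\rho_{\theta})\}_{\theta})$. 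No separate treatment of the infinite cases is needed: by Proposition~\ref{prop:root-SLD-opt-formula} the optimization formula evaluates to $+\infty$ precisely when the finiteness condition fails, so the pullback argument still gives the claimed inequality, which is in any case trivial when the left-hand side is $+\infty$.
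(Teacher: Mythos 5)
Your proposal is correct and follows essentially the same route as the paper's proof: both pull a feasible variable for the $\{\mathcal{P}(\rho_{\theta})\}_{\theta}$ optimization back through the Hilbert--Schmidt adjoint $\mathcal{P}^{\dag}$, verify feasibility via the Schwarz inequality for two-positive unital maps together with $\rho_{\theta}\geq 0$, and match the objective values using $\mathcal{P}(\partial_{\theta}\rho_{\theta})=\partial_{\theta}\mathcal{P}(\rho_{\theta})$. The only differences are cosmetic (naming the pulled-back operator and making the treatment of the infinite case explicit).
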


\begin{proof}
Let $X$ be an operator satisfying
\begin{equation}
\operatorname{Tr}[(XX^{\dag}+X^{\dag}X)\mathcal{P}(\rho_{\theta})]\leq1.
\label{eq:constraint-DP-processed-state}
\end{equation}
Then it follows that
\begin{align}
1  &  \geq\operatorname{Tr}[(XX^{\dag}+X^{\dag}X)\mathcal{P}(\rho_{\theta})]\\
&  =\operatorname{Tr}[\mathcal{P}^{\dag}(XX^{\dag}+X^{\dag}X)\rho_{\theta}]\\
&  \geq\operatorname{Tr}[(\mathcal{P}^{\dag}(X)\mathcal{P}^{\dag}(X^{\dag
})+\mathcal{P}^{\dag}(X^{\dag})\mathcal{P}^{\dag}(X))\rho_{\theta}],
\end{align}
where the last inequality follows because $\rho_\theta \geq 0$ and
\begin{equation}
\mathcal{P}^{\dag}(XX^{\dag})\geq\mathcal{P}^{\dag}(X)\mathcal{P}^{\dag
}(X^{\dag}),\qquad\mathcal{P}^{\dag}(X^{\dag}X)\geq\mathcal{P}^{\dag}(X^{\dag
})\mathcal{P}^{\dag}(X).
\end{equation}
The latter inequalities are a consequence of the Schwarz inequality, which
holds for two-positive, unital maps \cite[Eq.~(3.14)]{Bhat07}. (Note that
two-positive, unital maps are the Hilbert--Schmidt adjoints of two-positive,
trace-preserving maps). Furthermore,
\begin{align}
\left\vert \operatorname{Tr}[X(\partial_{\theta}\mathcal{P}(\rho_{\theta
}))]\right\vert  &  =\left\vert \operatorname{Tr}[X\mathcal{P}(\partial
_{\theta}\rho_{\theta})]\right\vert \\
&  =\left\vert \operatorname{Tr}[\mathcal{P}^{\dag}(X)(\partial_{\theta}
\rho_{\theta})]\right\vert \\
&  \leq\sup_{Z}\left\{  \left\vert \operatorname{Tr}[Z(\partial_{\theta}
\rho_{\theta})]\right\vert :\operatorname{Tr}[(ZZ^{\dag}+Z^{\dag}
Z)\rho_{\theta}]\leq1\right\} \\
&  =\frac{1}{\sqrt{2}}\sqrt{I_{F}}(\theta;\{\rho_{\theta}\}_{\theta}).
\end{align}
Since the inequality holds for all $X$ satisfying
\eqref{eq:constraint-DP-processed-state}, we conclude that
\begin{equation}
\sqrt{I_{F}}(\theta;\{\rho_{\theta}\}_{\theta})\geq\sqrt{I_{F}}(\theta
;\{\mathcal{P}(\rho_{\theta})\}_{\theta}).
\end{equation}
This concludes the proof.
\end{proof}

\subsubsection{Bilinear program for SLD\ Fisher information of quantum
channels}

We can exploit Proposition~\ref{prop:SLD-Fish-states-SDP} and a number of
manipulations to arrive at a bilinear program for the SLD Fisher information
of channels:

\begin{proposition}
\label{prop:SDP-SLD-Fish}The SLD Fisher information of a differentiable family
$\{\mathcal{N}_{A\rightarrow B}^{\theta}\}_{\theta}$ of channels satisfying
the finiteness condition in \eqref{eq:finiteness-condition-SLD-fish-ch} can be
evaluated by means of the following bilinear program:
\begin{multline}
I_{F}(\theta;\{\mathcal{N}_{A\rightarrow B}^{\theta}
\})=\label{eq:SDP-SLD-ch-Fish}\\
2\sup_{\substack{\lambda,|\varphi\rangle_{RBR^{\prime}B^{\prime}
},\\W_{RBR^{\prime}B^{\prime}},Y_{R},\sigma_{R}}}\left(  2\operatorname{Re}
[\langle\varphi|_{RBR^{\prime}B^{\prime}}(\partial_{\theta}\Gamma
_{RB}^{\mathcal{N}^{\theta}})|\Gamma\rangle_{RR^{\prime}BB^{\prime}
}]-\operatorname{Tr}[Y_{R}\Phi(W_{RBR^{\prime}B^{\prime}})]\right)
\end{multline}
subject to
\begin{equation}
\sigma_{R}\geq0,\quad\operatorname{Tr}[\sigma_{R}]=1,\quad\lambda\leq1,\quad
\begin{bmatrix}
\lambda & \langle\varphi|_{RBR^{\prime}B^{\prime}}\\
|\varphi\rangle_{RBR^{\prime}B^{\prime}} & W_{RBR^{\prime}B^{\prime}}
\end{bmatrix}
\geq0,\quad
\begin{bmatrix}
\sigma_{R} & I_{R}\\
I_{R} & Y_{R}
\end{bmatrix}
\geq0.
\end{equation}
where
\begin{align}
|\Gamma\rangle_{RR^{\prime}BB^{\prime}}  &  :=|\Gamma\rangle_{RR^{\prime}
}\otimes|\Gamma\rangle_{BB^{\prime}},\\
\Phi(W_{RBR^{\prime}B^{\prime}})  &  :=(\operatorname{Tr}_{BR^{\prime
}B^{\prime}}[\Gamma_{R^{\prime}B^{\prime}}^{\mathcal{N}^{\theta}}\left(
F_{RR^{\prime}}\otimes F_{BB^{\prime}}\right)  W_{RBR^{\prime}B^{\prime}
}\left(  F_{RR^{\prime}}\otimes F_{BB^{\prime}}\right)  ^{\dag}])^{T}
\nonumber\\
&  \qquad\qquad+\operatorname{Tr}_{BR^{\prime}B^{\prime}}[(\Gamma_{R^{\prime
}B^{\prime}}^{\mathcal{N}^{\theta}})^{T}W_{RBR^{\prime}B^{\prime}}],
\end{align}
and $F_{RR^{\prime}}$ is the flip or swap operator that swaps systems $R$ and
$R^{\prime}$, with a similar definition for $F_{BB^{\prime}}$ but for $B$ and
$B^{\prime}$.
\end{proposition}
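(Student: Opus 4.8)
The plan is to bootstrap from the state-version dual semidefinite program in Proposition~\ref{prop:SLD-Fish-states-SDP}: substitute the Choi representation of a channel acting on a pure input, and then absorb the optimization over the channel input into the variables of the program. First, by Remark~\ref{rem:restrict-to-pure-bipartite}, $I_F(\theta;\{\mathcal{N}_{A\rightarrow B}^{\theta}\}_{\theta})$ equals the supremum of $I_F(\theta;\{\mathcal{N}_{A\rightarrow B}^{\theta}(\psi_{RA})\}_{\theta})$ over pure $\psi_{RA}$ with $R\cong A$. I would observe that this output Fisher information depends on $\psi_{RA}$ only through its reduced state on the channel input $A$: any two purifications of that reduced state differ by an isometry on $R$, which commutes with $\mathcal{N}_{A\rightarrow B}^{\theta}$ and hence leaves $I_F$ invariant by the data-processing inequality \eqref{eq:DP-SLD} applied in both directions. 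Restricting to the canonical purification and using the transpose trick \eqref{eq:transpose-trick} to move its square root onto $R$, one obtains, after renaming the resulting reference operator $\sigma_R$ (a density operator on $R$, ranging over all such), that $\rho_{\theta}^{RB}:=\mathcal{N}_{A\rightarrow B}^{\theta}(\psi_{RA})=(\sqrt{\sigma_R}\otimes I_B)\Gamma_{RB}^{\mathcal{N}^{\theta}}(\sqrt{\sigma_R}\otimes I_B)$, with $\partial_{\theta}\rho_{\theta}^{RB}=(\sqrt{\sigma_R}\otimes I_B)(\partial_{\theta}\Gamma_{RB}^{\mathcal{N}^{\theta}})(\sqrt{\sigma_R}\otimes I_B)$. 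Since the channel finiteness condition \eqref{eq:finiteness-condition-SLD-fish-ch} is assumed, each $I_F(\theta;\{\rho_{\theta}^{RB}\}_{\theta})\leq I_F(\theta;\{\mathcal{N}_{A\rightarrow B}^{\theta}\}_{\theta})$ is finite, so the state-level finiteness condition holds and Proposition~\ref{prop:SLD-Fish-states-SDP} applies to $\{\rho_{\theta}^{RB}\}_{\theta}$.

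Next I plug $\rho_{\theta}=\rho_{\theta}^{RB}$ into the dual program of Proposition~\ref{prop:SLD-Fish-states-SDP}, working on the doubled space with $|\Gamma\rangle_{RR'BB'}=|\Gamma\rangle_{RR'}\otimes|\Gamma\rangle_{BB'}$. In the objective term $2\operatorname{Re}\langle\varphi|((\partial_{\theta}\rho_{\theta}^{RB})\otimes I_{R'B'})|\Gamma\rangle$, I use \eqref{eq:transpose-trick} to push the right-hand $\sqrt{\sigma_R}$ onto the $R'$ leg (as $\sqrt{\sigma_R}^{T}$), and then absorb both $\sqrt{\sigma_R}$ factors into $|\varphi\rangle$ by introducing the Hermitian operator $S$ acting as $\sqrt{\sigma_R}$ on $R$, as $\sqrt{\sigma_R}^{T}$ on $R'$, and as the identity on $B,B'$, and setting $|\varphi\rangle\mapsto S|\varphi\rangle$; a congruence of the semidefinite constraint by $\operatorname{diag}(1,S)$ turns it into the stated constraint on $(\lambda,|\varphi\rangle,W)$ with $W$ the correspondingly transformed $Z$. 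For the term $\operatorname{Tr}[(\rho_{\theta}^{RB}\otimes I+I\otimes(\rho_{\theta}^{RB})^{T})Z]$, I substitute $Z=S^{-1}WS^{-1}$; cyclicity of the trace, the identities \eqref{eq:transpose-trick} and \eqref{eq:max-ent-partial-trace}, the swap operators $F_{RR'}$ and $F_{BB'}$ (used to carry the Choi operator from the unprimed to the primed copy), and $\operatorname{Tr}[AB^{T}]=\operatorname{Tr}[A^{T}B]$ then collapse this term to $\operatorname{Tr}[\sigma_R^{-1}\Phi(W)]$, with $\Phi$ exactly the map in the statement — its first summand arising from the $\rho_{\theta}^{RB}\otimes I$ part (hence the $F$-swaps and the outer transpose) and its second from the $I\otimes(\rho_{\theta}^{RB})^{T}$ part.

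Finally, I would replace $\sigma_R^{-1}$ by a fresh variable $Y_R$ constrained by the linear matrix inequality $\left[\begin{smallmatrix}\sigma_R & I_R\\ I_R & Y_R\end{smallmatrix}\right]\geq 0$, which by a Schur-complement argument is equivalent to $Y_R\geq\sigma_R^{-1}$ for $\sigma_R>0$. Since $W$ is the lower-right block of a positive semidefinite matrix, hence positive, and partial traces of the positive operators $(F_{RR'}\otimes F_{BB'})W(F_{RR'}\otimes F_{BB'})$ and $W$ against $\Gamma_{R'B'}^{\mathcal{N}^{\theta}}$ and $(\Gamma_{R'B'}^{\mathcal{N}^{\theta}})^{T}$ respectively remain positive, one has $\Phi(W)\geq 0$; therefore maximizing $-\operatorname{Tr}[Y_R\Phi(W)]$ is attained at $Y_R=\sigma_R^{-1}$, so this relaxation is lossless. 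Taking the joint supremum over $\sigma_R$ (now only required to be a normalized density operator), $Y_R$, $\lambda$, $|\varphi\rangle$, and $W$ then yields the asserted bilinear program. I expect the main obstacle to be the bookkeeping in the trace-term step — tracking which systems are primed and getting the transposes and $F$-swaps inside $\Phi$ exactly right — together with the routine continuity argument needed to pass from full-rank $\sigma_R$ (required for the matrix inequality with $Y_R$) to arbitrary pure inputs by approximation.
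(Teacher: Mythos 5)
Your proposal is correct and follows essentially the same route as the paper's proof in Appendix~\ref{app:SDP-SLD-Fish-ch}: parametrize pure inputs by the reduced state $\sigma_R$ via $\psi_{RA}=Z_R\Gamma_{RA}Z_R^\dag$, conjugate the state-level SDP of Proposition~\ref{prop:SLD-Fish-states-SDP} to pull out the Choi operator, use the swap operators and the transpose trick to collapse the trace term to $\operatorname{Tr}[\sigma_R^{-1}\Phi(W)]$, and linearize $\sigma_R^{-1}$ losslessly via the Schur-complement constraint on $Y_R$ using $\Phi(W)\geq 0$. The only cosmetic difference is that you perform the change of variables directly in the dual with the canonical (Hermitian) purification $\sqrt{\sigma_R}$, whereas the paper conjugates the primal with a general invertible $Z_R$ before dualizing.
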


\begin{proof}
See Appendix~\ref{app:SDP-SLD-Fish-ch}.
\end{proof}

\bigskip

The optimization above is a jointly constrained semi-definite bilinear program
\cite{HKT18}\ because the variables $Y_{R}$ and $W_{RBR^{\prime}B^{\prime}}$
are operators involved in the optimization and they multiply each other in the
last expression in \eqref{eq:SDP-SLD-ch-Fish}. This kind of optimization can
be approached with a heuristic \textquotedblleft seesaw\textquotedblright
\ method, but more advanced methods are available in~\cite{HKT18}.

\subsubsection{Semi-definite programs for RLD\ Fisher information of quantum
states and channels}

We now give semi-definite programs for the RLD\ Fisher information of quantum states:

\begin{proposition}
The RLD\ Fisher information of a differentiable family $\{\rho_{\theta
}\}_{\theta}$ of states satisfying the support condition in
\eqref{eq:support-condition-RLD} can be evaluated by means of the following
semi-definite program:
\begin{equation}
\widehat{I}_{F}(\theta;\{\rho_{A}^{\theta}\})=\inf\left\{  \operatorname{Tr}
[M]:M\geq0,
\begin{bmatrix}
M & \partial_{\theta}\rho_{\theta}\\
\partial_{\theta}\rho_{\theta} & \rho_{\theta}
\end{bmatrix}
\geq0\right\}  .
\end{equation}
The dual semi-definite program is as follows:
\begin{equation}
\sup_{X,Y,Z}2\operatorname{Re}[\operatorname{Tr}[Y(\partial_{\theta}
\rho_{\theta})]]-\operatorname{Tr}[Z\rho_{\theta}],
\end{equation}
subject to $X$ and $Y$ being Hermitian and
\begin{equation}
X\leq I,\qquad
\begin{bmatrix}
X & Y^{\dag}\\
Y & Z
\end{bmatrix}
\geq0.
\end{equation}

\end{proposition}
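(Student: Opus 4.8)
The plan is to read off the primal semi-definite program from a generalized Schur-complement characterization of positive semi-definiteness, then obtain the dual by Lagrangian duality for semi-definite programs, and finally invoke Slater's condition to rule out a duality gap. The argument runs closely parallel to the proof of Proposition~\ref{prop:SLD-Fish-states-SDP}.

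First I would note that, since $\rho_\theta$ is Hermitian for every $\theta$, the derivative $\partial_\theta\rho_\theta$ is Hermitian, so the block operator in the statement is Hermitian. Assuming the support condition in \eqref{eq:support-condition-RLD}, the generalized Schur-complement criterion (essentially Lemma~\ref{lem:min-XYinvX}) gives, for Hermitian $M$,
\begin{equation}
\begin{bmatrix} M & \partial_\theta\rho_\theta \\ \partial_\theta\rho_\theta & \rho_\theta \end{bmatrix}\geq 0 \quad\Longleftrightarrow\quad M\geq (\partial_\theta\rho_\theta)\,\rho_\theta^{-1}\,(\partial_\theta\rho_\theta),
\end{equation}
where $\rho_\theta^{-1}$ is the inverse on the support of $\rho_\theta$; here the range condition $\operatorname{supp}(\partial_\theta\rho_\theta)\subseteq\operatorname{supp}(\rho_\theta)$ is precisely what makes the right-hand operator well-defined and the equivalence valid. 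Minimizing $\operatorname{Tr}[M]$ over all such $M$ is attained at $M=(\partial_\theta\rho_\theta)\rho_\theta^{-1}(\partial_\theta\rho_\theta)$, and cyclicity of the trace yields $\operatorname{Tr}[(\partial_\theta\rho_\theta)\rho_\theta^{-1}(\partial_\theta\rho_\theta)]=\operatorname{Tr}[(\partial_\theta\rho_\theta)^2\rho_\theta^{-1}]=\widehat{I}_F(\theta;\{\rho_\theta\}_\theta)$ by Definition~\ref{def:RLD-Fish-info-states}. The constraint $M\geq 0$ is redundant in the primal (it is a principal submatrix of a positive semi-definite operator), but I would keep it because it yields a convenient slack variable in the dual.

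For the dual I would form the Lagrangian of the primal, with a dual variable $W\geq 0$ for $M\geq 0$ and a block dual variable $\left[\begin{smallmatrix} X & Y^{\dag}\\ Y & Z\end{smallmatrix}\right]\geq 0$ for the block constraint; expanding the block trace gives $L=\operatorname{Tr}[(I-W-X)M]-2\operatorname{Re}\operatorname{Tr}[Y(\partial_\theta\rho_\theta)]-\operatorname{Tr}[Z\rho_\theta]$, using that $\partial_\theta\rho_\theta$ is Hermitian. Infimizing over Hermitian $M$ forces $I-W-X=0$, and since $W\geq 0$ this is equivalent to $X\leq I$; the residual objective $-2\operatorname{Re}\operatorname{Tr}[Y(\partial_\theta\rho_\theta)]-\operatorname{Tr}[Z\rho_\theta]$ is brought to the stated form by the substitution $Y\mapsto -Y$, a congruence of the block variable by $\operatorname{diag}(I,-I)$ that preserves positivity. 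This manipulation is exactly what is packaged in Lemma~\ref{lem:freq-used-SDP-primal-dual}, which I would invoke directly. Strong duality and attainment of the optimum then follow from Slater's condition applied to the dual, which is strictly feasible: for instance $X=\tfrac12 I$, $Y=0$, $Z=I$ gives $X<I$ and $\left[\begin{smallmatrix} X & Y^{\dag}\\ Y & Z\end{smallmatrix}\right]>0$. Hence the primal and dual optimal values both equal $\widehat{I}_F(\theta;\{\rho_\theta\}_\theta)$.

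The only delicate point is the Schur-complement step in the rank-deficient case: one must verify that $\operatorname{supp}(\partial_\theta\rho_\theta)\subseteq\operatorname{supp}(\rho_\theta)$ is both necessary for the block operator to be positive semi-definite for some $M$ and sufficient for the displayed equivalence with the generalized inverse. If that condition fails, the primal is infeasible, its value is $+\infty$, and this matches the $+\infty$ case of Definition~\ref{def:RLD-Fish-info-states}; so the proposition's hypothesis that the support condition holds is exactly the regime where the program is meaningful. Everything else is a routine application of semi-definite programming duality.
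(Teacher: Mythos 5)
Your proof is correct and follows essentially the same route as the paper's: the primal is read off from the Schur-complement characterization in Lemma~\ref{lem:min-XYinvX} together with the formula $\widehat{I}_{F}=\operatorname{Tr}[(\partial_{\theta}\rho_{\theta})^{2}\rho_{\theta}^{-1}]$, and the dual is obtained by invoking Lemma~\ref{lem:freq-used-SDP-primal-dual}. The additional remarks on the redundancy of $M\geq 0$, strict feasibility, and the rank-deficient case are sound but not needed beyond what the paper records.
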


\begin{proof}
The primal semi-definite program is a direct consequence of the RLD formula in
\eqref{eq:RLD-FI}\ and Lemma~\ref{lem:min-XYinvX}. The dual program is found
by applying Lemma~\ref{lem:freq-used-SDP-primal-dual}.
\end{proof}

\bigskip

The following formula for the RLD\ Fisher information of quantum channels is
known from \cite{Hayashi2011}. It comes about by manipulating the RLD\ formula
in \eqref{eq:RLD-FI}\ by means of Lemma~\ref{lem:transformer-ineq-basic}. We
review its proof in Appendix~\ref{app:proofs-RLD-fish-props}.

\begin{proposition}
\label{prop:geo-fish-explicit-formula}Let $\{\mathcal{N}_{A\rightarrow
B}^{\theta}\}_{\theta}$ be a differentiable family of quantum channels such
that the support condition in \eqref{eq:finiteness-condition-RLD-fish-ch}
holds. Then the RLD\ Fisher information of quantum channels has the following
explicit form:
\begin{equation}
\widehat{I}_{F}(\theta;\{\mathcal{N}_{A\rightarrow B}^{\theta}\}_{\theta
})=\left\Vert \operatorname{Tr}_{B}[(\partial_{\theta}\Gamma_{RB}
^{\mathcal{N}^{\theta}})(\Gamma_{RB}^{\mathcal{N}^{\theta}})^{-1}
(\partial_{\theta}\Gamma_{RB}^{\mathcal{N}^{\theta}})]\right\Vert _{\infty},
\label{eq:RLD-Fish-ch}
\end{equation}
where $\Gamma_{RB}^{\mathcal{N}^{\theta}}$ is the Choi operator of the channel
$\mathcal{N}_{A\rightarrow B}^{\theta}$.
\end{proposition}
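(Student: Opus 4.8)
The plan is to prove the two inequalities
\[
\widehat{I}_F(\theta;\{\mathcal{N}_{A\to B}^\theta\}_\theta)\le \|Q_\theta\|_\infty,
\qquad
\widehat{I}_F(\theta;\{\mathcal{N}_{A\to B}^\theta\}_\theta)\ge \|Q_\theta\|_\infty,
\]
where I abbreviate $\Gamma^\theta:=\Gamma_{RB}^{\mathcal{N}^\theta}$, $\dot\Gamma^\theta:=\partial_\theta\Gamma^\theta$, and $Q_\theta:=\operatorname{Tr}_B[\dot\Gamma^\theta(\Gamma^\theta)^{-1}\dot\Gamma^\theta]$; note $Q_\theta\ge 0$ since $(\Gamma^\theta)^{-1}\ge 0$, conjugation of a positive operator by a Hermitian one is positive, and the partial trace preserves positivity. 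By Remark~\ref{rem:restrict-to-pure-bipartite}, the supremum in \eqref{eq:gen-fisher-channels} may be taken over pure bipartite states $\psi_{RA}$ with $R\cong A$. For such a state, writing $|\psi\rangle_{RA}=(X_R\otimes I_A)|\Gamma\rangle_{RA}$ with $\operatorname{Tr}[X_R^\dagger X_R]=1$ and using that $\mathcal{N}_{A\to B}^\theta$ acts trivially on $R$, one has $\mathcal{N}_{A\to B}^\theta(\psi_{RA})=\Psi(\Gamma^\theta)$ and $\partial_\theta[\mathcal{N}_{A\to B}^\theta(\psi_{RA})]=\Psi(\dot\Gamma^\theta)$, where $\Psi(\cdot):=(X_R\otimes I_B)(\cdot)(X_R^\dagger\otimes I_B)$ is a completely positive map with a single Kraus operator and satisfies $\operatorname{Tr}\circ\Psi=\operatorname{Tr}[(X_R^\dagger X_R\otimes I_B)(\cdot)]$.

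For the upper bound I would apply the transformer inequality (Lemma~\ref{lem:transformer-ineq-basic}) in the form $\Psi(\dot\Gamma)\,\Psi(\Gamma)^{-1}\,\Psi(\dot\Gamma)\le \Psi(\dot\Gamma\,\Gamma^{-1}\,\dot\Gamma)$, valid for completely positive $\Psi$, $\Gamma\ge 0$ and Hermitian $\dot\Gamma$ with $\operatorname{supp}(\dot\Gamma)\subseteq\operatorname{supp}(\Gamma)$; the support hypothesis in \eqref{eq:finiteness-condition-RLD-fish-ch} is exactly what makes this applicable. Combining this with the formula $\widehat{I}_F(\theta;\{\rho_\theta\}_\theta)=\operatorname{Tr}[(\partial_\theta\rho_\theta)\rho_\theta^{-1}(\partial_\theta\rho_\theta)]$ equivalent to \eqref{eq:RLD-FI}, taking the trace, and using $\operatorname{Tr}\circ\Psi=\operatorname{Tr}[(X_R^\dagger X_R\otimes I_B)(\cdot)]$ together with $\operatorname{Tr}_B$, I obtain
\[
\widehat{I}_F(\theta;\{\mathcal{N}_{A\to B}^\theta(\psi_{RA})\}_\theta)
\le \operatorname{Tr}[X_R^\dagger X_R\,Q_\theta]
\le \|Q_\theta\|_\infty\operatorname{Tr}[X_R^\dagger X_R]=\|Q_\theta\|_\infty ,
\]
and taking the supremum over $\psi_{RA}$ gives $\widehat{I}_F(\theta;\{\mathcal{N}_{A\to B}^\theta\}_\theta)\le\|Q_\theta\|_\infty$.

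For the lower bound I would restrict the optimization to input states of full Schmidt rank, equivalently to $X_R=\sqrt{\sigma_R}$ for a full-rank density operator $\sigma_R$ on $R$, so that $\Psi$ has an invertible Kraus operator. For such $X_R$ one checks that the support condition in \eqref{eq:finiteness-condition-RLD-fish-ch} transfers to $\operatorname{supp}(\Psi(\dot\Gamma^\theta))\subseteq\operatorname{supp}(\Psi(\Gamma^\theta))$, so that $\widehat{I}_F(\theta;\{\mathcal{N}_{A\to B}^\theta(\psi_{RA})\}_\theta)$ is finite and given by \eqref{eq:RLD-FI}. Evaluating it by writing $\Psi(\Gamma^\theta)^{-1}=\lim_{\varepsilon\to 0^+}(\Psi(\Gamma^\theta)+\varepsilon I)^{-1}$, pulling the invertible Kraus operator out of the inverse, and then sending $\varepsilon\to 0$ — where the support condition forces the limit to collapse onto the pseudoinverse $(\Gamma^\theta)^{-1}$ — yields the exact value $\operatorname{Tr}[\sigma_R\,Q_\theta]$. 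Letting $\sigma_R$ approach the projector onto a maximal eigenvector of $Q_\theta$ then gives $\sup_{\sigma_R}\operatorname{Tr}[\sigma_R\,Q_\theta]=\|Q_\theta\|_\infty$, which together with the upper bound establishes the claimed identity.

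The step I expect to be the main obstacle is the careful bookkeeping of pseudoinverses when $\Gamma^\theta$ (or $X_R$) is rank-deficient: one must verify that the transformer inequality and the regularized $\varepsilon\to 0$ limit behave as claimed, and it is precisely the hypothesis $\operatorname{supp}(\dot\Gamma^\theta)\subseteq\operatorname{supp}(\Gamma^\theta)$ that controls these subtleties (it lets one reduce, on $\operatorname{supp}(\Gamma^\theta)$, to the genuinely invertible case). The remaining manipulations — rewriting the maximally entangled vector via \eqref{eq:transpose-trick} and \eqref{eq:max-ent-partial-trace}, and the equivalence of the two trace formulas for the RLD Fisher information by cyclicity — are routine.
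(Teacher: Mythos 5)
Your proposal is correct and follows essentially the same route as the paper's proof: write $\psi_{RA}=X_R\Gamma_{RA}X_R^{\dag}$, apply the transformer inequality/equality of Lemma~\ref{lem:transformer-ineq-basic} with $L=X_R\otimes I_B$, and identify $\sup_{\sigma_R}\operatorname{Tr}[\sigma_R\,Q_\theta]=\Vert Q_\theta\Vert_\infty$ via the variational characterization of the infinity norm of a positive semi-definite operator. The only difference is organizational: you prove the two inequalities separately (the upper bound for arbitrary pure inputs via the transformer inequality, the lower bound on the dense set of full-Schmidt-rank inputs via the transformer equality), whereas the paper restricts at the outset to invertible $X_R$ and uses the transformer equality in a single pass; your split in fact treats the rank-deficient inputs slightly more explicitly.
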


We then find the following semi-definite program for the RLD\ Fisher
information of quantum channels:

\begin{proposition}
Let $\{\mathcal{N}_{A\rightarrow B}^{\theta}\}_{\theta}$ be a differentiable
family of quantum channels such that the support condition in
\eqref{eq:finiteness-condition-RLD-fish-ch} holds. Then the RLD Fisher
information of quantum channels can be calculated by means of the following
semi-definite program:
\begin{equation}
\widehat{I}_{F}(\theta;\{\mathcal{N}_{A\rightarrow B}^{\theta}\}_{\theta
})=\inf\lambda\in\mathbb{R}^{+}, \label{eq:SDP-SLD-fish-channels}
\end{equation}
subject to
\begin{equation}
\lambda I_{R}\geq\operatorname{Tr}_{B}[M_{RB}],\qquad
\begin{bmatrix}
M_{RB} & \partial_{\theta}\Gamma_{RB}^{\mathcal{N}^{\theta}}\\
\partial_{\theta}\Gamma_{RB}^{\mathcal{N}^{\theta}} & \Gamma_{RB}
^{\mathcal{N}^{\theta}}
\end{bmatrix}
\geq0. \label{eq:SDP-SLD-fish-channels-2}
\end{equation}
The dual program is given by
\begin{equation}
\sup_{\rho_{R}\geq0,P_{RB},Z_{RB},Q_{RB}}2\operatorname{Re}[\operatorname{Tr}
[Z_{RB}(\partial_{\theta}\Gamma_{RB}^{\mathcal{N}^{\theta}}
)]]-\operatorname{Tr}[Q_{RB}\Gamma_{RB}^{\mathcal{N}^{\theta}}],
\end{equation}
subject to
\begin{equation}
\operatorname{Tr}[\rho_{R}]\leq1,\quad
\begin{bmatrix}
P_{RB} & Z_{RB}^{\dag}\\
Z_{RB} & Q_{RB}
\end{bmatrix}
\geq0,\quad P_{RB}\leq\rho_{R}\otimes I_{B}.
\end{equation}

\end{proposition}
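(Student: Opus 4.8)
The plan is to linearize, one operation at a time, the two nonlinearities in the explicit formula of Proposition~\ref{prop:geo-fish-explicit-formula}: the sandwiched inverse $(\partial_{\theta}\Gamma_{RB}^{\mathcal{N}^{\theta}})(\Gamma_{RB}^{\mathcal{N}^{\theta}})^{-1}(\partial_{\theta}\Gamma_{RB}^{\mathcal{N}^{\theta}})$ and then the operator norm. For the inverse, I would apply Lemma~\ref{lem:min-XYinvX} (equivalently, the generalized Schur complement) with $X=\partial_{\theta}\Gamma_{RB}^{\mathcal{N}^{\theta}}$, which is Hermitian so $X=X^{\dag}$, and $Y=\Gamma_{RB}^{\mathcal{N}^{\theta}}$. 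The support condition in \eqref{eq:finiteness-condition-RLD-fish-ch} ensures $\operatorname{supp}(X)\subseteq\operatorname{supp}(Y)$, so the lemma identifies $(\partial_{\theta}\Gamma_{RB}^{\mathcal{N}^{\theta}})(\Gamma_{RB}^{\mathcal{N}^{\theta}})^{-1}(\partial_{\theta}\Gamma_{RB}^{\mathcal{N}^{\theta}})$ as the minimum, in the positive semi-definite order, over all Hermitian $M_{RB}$ obeying the block constraint in \eqref{eq:SDP-SLD-fish-channels-2}; in particular every feasible $M_{RB}$ satisfies $M_{RB}\geq(\partial_{\theta}\Gamma_{RB}^{\mathcal{N}^{\theta}})(\Gamma_{RB}^{\mathcal{N}^{\theta}})^{-1}(\partial_{\theta}\Gamma_{RB}^{\mathcal{N}^{\theta}})$, with equality for the optimal one. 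Since partial trace is positive and linear, $\operatorname{Tr}_{B}[M_{RB}]\geq\operatorname{Tr}_{B}[(\partial_{\theta}\Gamma_{RB}^{\mathcal{N}^{\theta}})(\Gamma_{RB}^{\mathcal{N}^{\theta}})^{-1}(\partial_{\theta}\Gamma_{RB}^{\mathcal{N}^{\theta}})]$ for every feasible $M_{RB}$, again with equality for the optimal choice.

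For the operator norm I would use $\|H\|_{\infty}=\inf\{\lambda\geq0:\lambda I_{R}\geq H\}$, valid since $H:=\operatorname{Tr}_{B}[(\partial_{\theta}\Gamma_{RB}^{\mathcal{N}^{\theta}})(\Gamma_{RB}^{\mathcal{N}^{\theta}})^{-1}(\partial_{\theta}\Gamma_{RB}^{\mathcal{N}^{\theta}})]\geq0$. Combining the two steps: if $(\lambda,M_{RB})$ is feasible for \eqref{eq:SDP-SLD-fish-channels}--\eqref{eq:SDP-SLD-fish-channels-2}, then $\lambda I_{R}\geq\operatorname{Tr}_{B}[M_{RB}]\geq H$, so the SDP value is at least $\|H\|_{\infty}=\widehat{I}_{F}(\theta;\{\mathcal{N}_{A\rightarrow B}^{\theta}\}_{\theta})$ by \eqref{eq:RLD-Fish-ch}; conversely, choosing $M_{RB}$ to be the optimal operator above (so that both inequalities become equalities) and $\lambda=\widehat{I}_{F}(\theta;\{\mathcal{N}_{A\rightarrow B}^{\theta}\}_{\theta})$ gives a feasible point, so the SDP value is at most $\widehat{I}_{F}(\theta;\{\mathcal{N}_{A\rightarrow B}^{\theta}\}_{\theta})$. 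This proves the primal formula in \eqref{eq:SDP-SLD-fish-channels}.

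For the dual, I would cast the primal in standard conic form and apply Lemma~\ref{lem:freq-used-SDP-primal-dual}. The scalar inequality $\lambda I_{R}-\operatorname{Tr}_{B}[M_{RB}]\geq0$ introduces a dual variable $\rho_{R}\geq0$; pairing it against $\lambda I_{R}$ gives $\operatorname{Tr}[\rho_{R}]$, and matching the coefficient of the free scalar $\lambda$ in the objective forces $\operatorname{Tr}[\rho_{R}]\leq1$. The block-matrix inequality introduces a dual variable that I would write as $\begin{bmatrix} P_{RB} & Z_{RB}^{\dag} \\ Z_{RB} & Q_{RB} \end{bmatrix}\geq0$; collecting the terms proportional to the free Hermitian variable $M_{RB}$, and using $\operatorname{Tr}[\rho_{R}\operatorname{Tr}_{B}[M_{RB}]]=\operatorname{Tr}[(\rho_{R}\otimes I_{B})M_{RB}]$, produces the relation between $P_{RB}$ and $\rho_{R}\otimes I_{B}$ --- the equality that a strict standard-form derivation yields may be relaxed to $P_{RB}\leq\rho_{R}\otimes I_{B}$ without changing the optimum, since $P_{RB}$ does not appear in the objective and enlarging it preserves block positivity --- while the terms proportional to the data $\partial_{\theta}\Gamma_{RB}^{\mathcal{N}^{\theta}}$ and $\Gamma_{RB}^{\mathcal{N}^{\theta}}$ yield the stated objective $2\operatorname{Re}[\operatorname{Tr}[Z_{RB}(\partial_{\theta}\Gamma_{RB}^{\mathcal{N}^{\theta}})]]-\operatorname{Tr}[Q_{RB}\Gamma_{RB}^{\mathcal{N}^{\theta}}]$, after the harmless sign change $Z_{RB}\mapsto-Z_{RB}$, which preserves the block positivity. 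Strong duality, so that the dual supremum equals the primal infimum, would follow from Slater's condition: restricting the $B$-factor to $\operatorname{supp}(\Gamma_{RB}^{\mathcal{N}^{\theta}})$, where $\Gamma_{RB}^{\mathcal{N}^{\theta}}$ is invertible, one obtains a strictly feasible primal point by taking $M_{RB}$ slightly above $H$ and $\lambda$ slightly above $\|H\|_{\infty}$.

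I expect the main obstacle to be the dual bookkeeping together with the verification of Slater's condition in the presence of the possibly rank-deficient Choi operator $\Gamma_{RB}^{\mathcal{N}^{\theta}}$; this is precisely where the support condition in \eqref{eq:finiteness-condition-RLD-fish-ch} and the generalized (pseudo-inverse) form of the Schur complement in Lemma~\ref{lem:min-XYinvX} are needed, and where one must check that the infimum in the primal is genuinely attained so that the value-matching argument goes through cleanly.
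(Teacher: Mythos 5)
Your proposal is correct and follows essentially the same route as the paper: the primal is obtained by combining the explicit formula \eqref{eq:RLD-Fish-ch} with the Schur-complement characterization of $X^{\dag}Y^{-1}X$ from Lemma~\ref{lem:min-XYinvX} and the variational expression $\left\Vert W\right\Vert _{\infty}=\inf\{\lambda\geq0:W\leq\lambda I\}$, and the dual is obtained by the same standard-form SDP bookkeeping, including the harmless sign flip $Z_{RB}\mapsto-Z_{RB}$. The only cosmetic difference is that in the standard-form derivation the constraint $P_{RB}\leq\rho_{R}\otimes I_{B}$ arises directly as an inequality rather than as a relaxed equality, so no relaxation step is needed.
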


\begin{proof}
The form of the primal program follows directly from \eqref{eq:RLD-Fish-ch},
Lemma~\ref{lem:min-XYinvX}, and from the following characterization of the
infinity norm of a positive semi-definite operator $W$:
\begin{equation}
\left\Vert W\right\Vert _{\infty}=\inf\left\{  \lambda\geq0:W\leq\lambda
I\right\}  .
\end{equation}

To arrive at the dual program, we use the standard forms of primal and dual
semi-definite programs for Hermitian operators $A$ and $B$ and a
Hermiticity-preserving map $\Phi$ \cite{Wat18}:
\begin{equation}
\sup_{X\geq0}\left\{  \operatorname{Tr}[AX]:\Phi(X)\leq B\right\}  ,
\qquad\inf_{Y\geq0}\left\{  \operatorname{Tr}[BY]:\Phi^{\dag}(Y)\geq
A\right\}  . \label{eq:standard-SDP-form-RLD-ch-helper}
\end{equation}
From \eqref{eq:SDP-SLD-fish-channels}--\eqref{eq:SDP-SLD-fish-channels-2}, we
identify
\begin{align}
B  &  =
\begin{bmatrix}
1 & 0\\
0 & 0
\end{bmatrix}
,\quad Y=
\begin{bmatrix}
\lambda & 0\\
0 & M_{RB}
\end{bmatrix}
,\quad\Phi^{\dag}(Y)=
\begin{bmatrix}
\lambda I_{R}-\operatorname{Tr}_{B}[M_{RB}] & 0 & 0\\
0 & M_{RB} & 0\\
0 & 0 & 0
\end{bmatrix}
,\\
A  &  =
\begin{bmatrix}
0 & 0 & 0\\
0 & 0 & -\partial_{\theta}\Gamma_{RB}^{\mathcal{N}^{\theta}}\\
0 & -\partial_{\theta}\Gamma_{RB}^{\mathcal{N}^{\theta}} & -\Gamma
_{RB}^{\mathcal{N}^{\theta}}
\end{bmatrix}
.
\end{align}
Setting
\begin{equation}
X=
\begin{bmatrix}
\rho_{R} & 0 & 0\\
0 & P_{RB} & Z_{RB}^{\dag}\\
0 & Z_{RB} & Q_{RB}
\end{bmatrix}
,
\end{equation}
we find that
\begin{align}
\operatorname{Tr}[X\Phi^{\dag}(Y)]  &  =\operatorname{Tr}\left[
\begin{bmatrix}
\rho_{R} & 0 & 0\\
0 & P_{RB} & Z_{RB}^{\dag}\\
0 & Z_{RB} & Q_{RB}
\end{bmatrix}
\begin{bmatrix}
\lambda I_{R}-\operatorname{Tr}_{B}[M_{RB}] & 0 & 0\\
0 & M_{RB} & 0\\
0 & 0 & 0
\end{bmatrix}
\right] \\
&  =\operatorname{Tr}[\rho_{R}(\lambda I_{R}-\operatorname{Tr}_{B}
[M_{RB}])]+\operatorname{Tr}[P_{RB}M_{RB}]\\
&  =\lambda\operatorname{Tr}[\rho_{R}]+\operatorname{Tr}[(P_{RB}-\rho
_{R}\otimes I_{B})M_{RB}]\\
&  =\operatorname{Tr}\left[
\begin{bmatrix}
\lambda & 0\\
0 & M_{RB}
\end{bmatrix}
\begin{bmatrix}
\operatorname{Tr}[\rho_{R}] & 0\\
0 & P_{RB}-\rho_{R}\otimes I_{B}
\end{bmatrix}
\right]  ,
\end{align}
which implies that
\begin{equation}
\Phi(X)=
\begin{bmatrix}
\operatorname{Tr}[\rho_{R}] & 0\\
0 & P_{RB}-\rho_{R}\otimes I_{B}
\end{bmatrix}
.
\end{equation}
Then plugging into the left-hand side of
\eqref{eq:standard-SDP-form-RLD-ch-helper}, we find that the dual is given by
\begin{equation}
\sup_{\rho_{R},P_{RB},Z_{RB},Q_{RB}}\operatorname{Tr}\left[
\begin{bmatrix}
0 & 0 & 0\\
0 & 0 & -\partial_{\theta}\Gamma_{RB}^{\mathcal{N}^{\theta}}\\
0 & -\partial_{\theta}\Gamma_{RB}^{\mathcal{N}^{\theta}} & -\Gamma
_{RB}^{\mathcal{N}^{\theta}}
\end{bmatrix}
\begin{bmatrix}
W_{R} & 0 & 0\\
0 & P_{RB} & Z_{RB}^{\dag}\\
0 & Z_{RB} & Q_{RB}
\end{bmatrix}
\right]  ,
\end{equation}
subject to
\begin{equation}
\begin{bmatrix}
\rho_{R} & 0 & 0\\
0 & P_{RB} & Z_{RB}^{\dag}\\
0 & Z_{RB} & Q_{RB}
\end{bmatrix}
\geq0,\qquad
\begin{bmatrix}
\operatorname{Tr}[\rho_{R}] & 0\\
0 & P_{RB}-\rho_{R}\otimes I_{B}
\end{bmatrix}
\leq
\begin{bmatrix}
1 & 0\\
0 & 0
\end{bmatrix}
.
\end{equation}
Upon making the swap $Z_{RB}\rightarrow-Z_{RB}$, which does not change the
optimal value, and simplifying, we find the following form:
\begin{equation}
\sup_{\rho_{R}\geq0,P_{RB},Z_{RB},Q_{RB}}2\operatorname{Re}[\operatorname{Tr}
[Z_{RB}(\partial_{\theta}\Gamma_{RB}^{\mathcal{N}^{\theta}}
)]]-\operatorname{Tr}[Q_{RB}\Gamma_{RB}^{\mathcal{N}^{\theta}}],
\end{equation}
subject to
\begin{equation}
\operatorname{Tr}[\rho_{R}]\leq1,\quad
\begin{bmatrix}
P_{RB} & -Z_{RB}^{\dag}\\
-Z_{RB} & Q_{RB}
\end{bmatrix}
\geq0,\quad P_{RB}\leq\rho_{R}\otimes I_{B}.
\end{equation}
Then we note that
\begin{equation}
\begin{bmatrix}
P_{RB} & -Z_{RB}^{\dag}\\
-Z_{RB} & Q_{RB}
\end{bmatrix}
\geq0 \quad\Longleftrightarrow\quad
\begin{bmatrix}
P_{RB} & Z_{RB}^{\dag}\\
Z_{RB} & Q_{RB}
\end{bmatrix}
\geq0
\end{equation}
This concludes the proof.
\end{proof}

\subsection{SLD Fisher information limits on quantum channel parameter
estimation}

\label{sec:SLD-fisher-info-limits}

\subsubsection{SLD Fisher information limit on parameter estimation of
classical--quantum channels} \label{subsec:SLD-limit-cq-channels}

We first consider the special case of a family $\{\mathcal{N}_{X\rightarrow
B}^{\theta}\}_{\theta}$ of classical--quantum channels of the following form:
\begin{equation}
\mathcal{N}_{X\rightarrow B}^{\theta}(\sigma_{X}):=\sum_{x}\langle
x|_{X}\sigma_{X}|x\rangle_{X}\omega_{B}^{x,\theta}, \label{eq:cq-channel-fams}
\end{equation}
where $\{|x\rangle\}_{x}$ is an orthonormal basis and $\{\omega_{B}^{x,\theta
}\}_{x}$ is a collection of states prepared at the channel output conditioned
on the value of the unknown parameter $\theta$ and on the result of the
measurement of the channel input. The key aspect of these channels is that the
measurement at the input is the same regardless of the value of the parameter
$\theta$. We find the following amortization collapse for these channels:

\begin{theorem}
\label{thm:amort-collapse-cq}Let $\{\mathcal{N}_{X\rightarrow B}^{\theta
}\}_{\theta}$ be a family of differentiable classical--quantum channels. Then
the following amortization collapse occurs
\begin{equation}
I_{F}(\theta;\{\mathcal{N}_{X\rightarrow B}^{\theta}\}_{\theta})=I_{F}
^{\mathcal{A}}(\theta;\{\mathcal{N}_{X\rightarrow B}^{\theta}\}_{\theta}
)=\sup_{x}I_{F}(\theta;\{\omega_{B}^{x,\theta}\}_{\theta}).
\label{eq:amort-collapse}
\end{equation}

\end{theorem}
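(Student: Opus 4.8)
The plan is to prove the double equality by sandwiching all three quantities between $\sup_{x} I_{F}(\theta;\{\omega_{B}^{x,\theta}\}_{\theta})$ from both sides, i.e., to establish
\[
\sup_{x} I_{F}(\theta;\{\omega_{B}^{x,\theta}\}_{\theta}) \leq I_{F}(\theta;\{\mathcal{N}_{X\rightarrow B}^{\theta}\}_{\theta}) \leq I_{F}^{\mathcal{A}}(\theta;\{\mathcal{N}_{X\rightarrow B}^{\theta}\}_{\theta}) \leq \sup_{x} I_{F}(\theta;\{\omega_{B}^{x,\theta}\}_{\theta}),
\]
which forces equality throughout. The first inequality is immediate: feeding the basis state $|x\rangle\!\langle x|_{X}$ into the channel family produces the state family $\{\omega_{B}^{x,\theta}\}_{\theta}$, so that $I_{F}(\theta;\{\mathcal{N}_{X\rightarrow B}^{\theta}\}_{\theta}) = \sup_{\rho_{RA}} I_{F}(\theta;\{\mathcal{N}^{\theta}(\rho_{RA})\}_{\theta}) \geq I_{F}(\theta;\{\omega_{B}^{x,\theta}\}_{\theta})$ for each $x$, and one takes the supremum over $x$. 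The middle inequality is Proposition~\ref{prop:amort->=-ch-Fish-gen}, which applies since the SLD Fisher information is weakly faithful by Proposition~\ref{prop:faithfulness-SLD-RLD-Fish}.

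The core of the argument is the last inequality, which reduces to showing that for every differentiable input family $\{\rho_{RX}^{\theta}\}_{\theta}$,
\[
I_{F}(\theta;\{\mathcal{N}_{X\rightarrow B}^{\theta}(\rho_{RX}^{\theta})\}_{\theta}) - I_{F}(\theta;\{\rho_{RX}^{\theta}\}_{\theta}) \leq \sup_{x} I_{F}(\theta;\{\omega_{B}^{x,\theta}\}_{\theta}).
\]
Writing $\sigma_{R}^{x,\theta} := \langle x|_{X}\rho_{RX}^{\theta}|x\rangle_{X}$ and $q^{\theta}(x) := \operatorname{Tr}[\sigma_{R}^{x,\theta}]$, the measurement-then-prepare structure of the channel gives $\mathcal{N}_{X\rightarrow B}^{\theta}(\rho_{RX}^{\theta}) = \sum_{x}\sigma_{R}^{x,\theta}\otimes\omega_{B}^{x,\theta}$. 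I would then introduce the enlarged classical--quantum family
\[
\tau_{RX'B}^{\theta} := \sum_{x} q^{\theta}(x)\, |x\rangle\!\langle x|_{X'}\otimes\left(\frac{\sigma_{R}^{x,\theta}}{q^{\theta}(x)}\otimes\omega_{B}^{x,\theta}\right),
\]
whose partial trace over $X'$ returns $\mathcal{N}_{X\rightarrow B}^{\theta}(\rho_{RX}^{\theta})$, so that data processing \eqref{eq:DP-SLD} gives $I_{F}(\theta;\{\mathcal{N}^{\theta}(\rho_{RX}^{\theta})\}_{\theta}) \leq I_{F}(\theta;\{\tau_{RX'B}^{\theta}\}_{\theta})$. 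Applying the classical--quantum decomposition \eqref{eq:decomp-SLD-cq} (with the quantum register of the proposition taken to be the composite $RB$) and then additivity \eqref{eq:additivity-SLD-states} to the product conditional states yields
\[
I_{F}(\theta;\{\tau_{RX'B}^{\theta}\}_{\theta}) = \left[I_{F}(\theta;\{q^{\theta}\}_{\theta}) + \sum_{x} q^{\theta}(x)\, I_{F}\!\left(\theta;\left\{\tfrac{\sigma_{R}^{x,\theta}}{q^{\theta}(x)}\right\}_{\theta}\right)\right] + \sum_{x} q^{\theta}(x)\, I_{F}(\theta;\{\omega_{B}^{x,\theta}\}_{\theta}).
\]

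The key observation is that the bracketed expression is, by \eqref{eq:decomp-SLD-cq} read in reverse, exactly $I_{F}(\theta;\{\nu_{X'R}^{\theta}\}_{\theta})$ for the classical--quantum family $\nu_{X'R}^{\theta} := \sum_{x}|x\rangle\!\langle x|_{X'}\otimes\sigma_{R}^{x,\theta}$. Since $\nu_{X'R}^{\theta}$ is obtained from $\rho_{RX}^{\theta}$ by the parameter-independent channel that measures $X$ in the basis $\{|x\rangle\}_{x}$ and writes the outcome into $X'$, data processing \eqref{eq:DP-SLD} gives $I_{F}(\theta;\{\nu_{X'R}^{\theta}\}_{\theta}) \leq I_{F}(\theta;\{\rho_{RX}^{\theta}\}_{\theta})$. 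Combining the last three displays, together with $\sum_{x}q^{\theta}(x)\, I_{F}(\theta;\{\omega_{B}^{x,\theta}\}_{\theta}) \leq \sup_{x}I_{F}(\theta;\{\omega_{B}^{x,\theta}\}_{\theta})$ (valid since $q^{\theta}$ is a probability distribution), yields the stated bound; taking the supremum over input families then gives $I_{F}^{\mathcal{A}}(\theta;\{\mathcal{N}_{X\rightarrow B}^{\theta}\}_{\theta}) \leq \sup_{x}I_{F}(\theta;\{\omega_{B}^{x,\theta}\}_{\theta})$ and closes the chain. I expect the main difficulty to be bookkeeping rather than conceptual: one must treat indices $x$ with $q^{\theta}(x)=0$ (and input families with $I_{F}(\theta;\{\rho_{RX}^{\theta}\}_{\theta}) = +\infty$) with the usual conventions, and verify that $\tau_{RX'B}^{\theta}$ and $\nu_{X'R}^{\theta}$ are differentiable families so that \eqref{eq:decomp-SLD-cq} and \eqref{eq:additivity-SLD-states} genuinely apply. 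The one slightly non-obvious step is recognizing the ``information cost'' terms $I_{F}(\theta;\{q^{\theta}\}_{\theta}) + \sum_{x}q^{\theta}(x)\, I_{F}(\theta;\{\sigma_{R}^{x,\theta}/q^{\theta}(x)\}_{\theta})$ as the Fisher information of a data-processed copy of the input family, which is what allows the input term on the right-hand side of the amortized inequality to absorb them.
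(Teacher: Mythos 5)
Your proposal is correct and follows essentially the same route as the paper's proof: a dephasing/relabeling of the classical input register, the classical--quantum decomposition of Proposition~\ref{prop:cq-decomp-SLD-RLD}, additivity on the product conditional states, and a final data-processing step that recognizes the residual terms as the Fisher information of a measured copy of the input family. The only cosmetic difference is that the paper disposes of the infinite case up front via the finiteness condition in \eqref{eq:finiteness-condition-SLD-fish-ch} rather than deferring it to conventions at the end.
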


\begin{proof}
If the finiteness condition in
\eqref{eq:finiteness-condition-SLD-fish-ch}\ does not hold, then all
quantities are trivially equal to $+\infty$. So let us suppose that the
finiteness condition in \eqref{eq:finiteness-condition-SLD-fish-ch}\ holds.
Note that the finiteness condition is equivalent to
\begin{equation}
\Pi_{\omega_{B}^{x,\theta}}^{\perp}(\partial_{\theta}\omega_{B}^{x,\theta}
)\Pi_{\omega_{B}^{x,\theta}}^{\perp}=0\qquad\forall x.
\label{eq:finiteness-condition-cq-channels}
\end{equation}

First, consider that the following inequality holds
\begin{equation}
I_{F}(\theta;\{\mathcal{N}_{X\rightarrow B}^{\theta}\}_{\theta})\geq\sup
_{x}I_{F}(\theta;\{\omega_{B}^{x,\theta}\}_{\theta}) \label{eq:simple-way-cq}
\end{equation}
because we can input the state $|x\rangle\!\langle x|_{X}$ to the channel
$\mathcal{N}_{X\rightarrow B}^{\theta}$ and obtain the output state
$\mathcal{N}_{X\rightarrow B}^{\theta}(|x\rangle\!\langle x|_{X})=\omega
_{B}^{x,\theta}$. Then we can optimize over $x\in\mathcal{X}$ and obtain the
bound above.

We now prove the less trivial inequality
\begin{equation}
I_{F}^{\mathcal{A}}(\theta;\{\mathcal{N}_{X\rightarrow B}^{\theta}\}_{\theta
})\leq\sup_{x}I_{F}(\theta;\{\omega_{B}^{x,\theta}\}_{\theta}).
\label{eq:amort-dont-help-cq}
\end{equation}
Let $\{\rho_{RA}^{\theta}\}_{\theta}$ be a differentiable family of quantum
states. If the classical--quantum channel $\mathcal{N}_{X\rightarrow
B}^{\theta}$ acts on $\rho_{RA}^{\theta}$ (identifying $X=A$), the output
state is as follows:
\begin{equation}
\mathcal{N}_{X\rightarrow B}^{\theta}(\rho_{RA}^{\theta})=\sum_{x}p_{\theta
}(x)\rho_{R}^{x,\theta}\otimes\omega_{B}^{x,\theta},
\end{equation}
where
\begin{equation}
\rho_{R}^{x,\theta}:=\frac{1}{p_{\theta}(x)}\langle x|_{X}\rho_{RA}^{\theta
}|x\rangle_{X},\qquad p_{\theta}(x):=\operatorname{Tr}[\langle x|_{X}\rho
_{RA}^{\theta}|x\rangle_{X}].
\end{equation}
Then consider that
\begin{align}
&  I_{F}(\theta;\{\mathcal{N}_{X\rightarrow B}^{\theta}(\rho_{RA}^{\theta
})\}_{\theta})\nonumber\\
&  =I_{F}\!\left(  \theta;\left\{  \sum_{x}p_{\theta}(x)\rho_{R}^{x,\theta
}\otimes\omega_{B}^{x,\theta}\right\}  _{\theta}\right) \\
&  \leq I_{F}\!\left(  \theta;\left\{  \sum_{x}p_{\theta}(x)|x\rangle\!\langle
x|_{X}\otimes\rho_{R}^{x,\theta}\otimes\omega_{B}^{x,\theta}\right\}
_{\theta}\right) \\
&  =I_{F}(\theta;\{p_{\theta}\}_{\theta})+\sum_{x}p_{\theta}(x)I_{F}
(\theta;\{\rho_{R}^{x,\theta}\otimes\omega_{B}^{x,\theta}\}_{\theta})\\
&  =I_{F}(\theta;\{p_{\theta}\}_{\theta})+\sum_{x}p_{\theta}(x)I_{F}
(\theta;\{\rho_{R}^{x,\theta}\}_{\theta})+\sum_{x}p_{\theta}(x)I_{F}
(\theta;\{\omega_{B}^{x,\theta}\}_{\theta})\\
&  \leq I_{F}(\theta;\{p_{\theta}\}_{\theta})+\sum_{x}p_{\theta}
(x)I_{F}(\theta;\{\rho_{R}^{x,\theta}\}_{\theta})+\sup_{x}I_{F}(\theta
;\{\omega_{B}^{x,\theta}\}_{\theta})\\
&  =I_{F}\!\left(  \theta;\left\{  \sum_{x}p_{\theta}(x)|x\rangle\!\langle
x|_{X}\otimes\rho_{R}^{x,\theta}\right\}  _{\theta}\right)  +\sup_{x}
I_{F}(\theta;\{\omega_{B}^{x,\theta}\}_{\theta})\\
&  \leq I_{F}(\theta;\{\rho_{RA}^{\theta}\}_{\theta})+\sup_{x}I_{F}
(\theta;\{\omega_{B}^{x,\theta}\}_{\theta}).
\end{align}
The first inequality follows from the data-processing inequality for Fisher
information with respect to partial trace over the $X$ system. The second
equality follows from Proposition~\ref{prop:cq-decomp-SLD-RLD}. The third
equality follows from the additivity of SLD\ Fisher information for product
states (Proposition~\ref{prop:additivity-SLD-RLD-states}). The second
inequality follows from the fact that the average cannot exceed the maximum.
The last equality follows again from Proposition~\ref{prop:cq-decomp-SLD-RLD}.
The final inequality follows from the data-processing inequality under the
action of the measurement channel $(\cdot)\rightarrow\sum_{x}|x\rangle\!\langle
x|_{X}(\cdot)|x\rangle\!\langle x|_{X}$ on the state $\rho_{RA}$. Thus, the
following inequality holds for an arbitrary family $\{\rho_{RA}^{\theta
}\}_{\theta}$ of states:
\begin{equation}
I_{F}(\theta;\{\mathcal{N}_{X\rightarrow B}^{\theta}(\rho_{RA}^{\theta
})\}_{\theta})-I_{F}(\theta;\{\rho_{RA}^{\theta}\}_{\theta})\leq\sup_{x}
I_{F}(\theta;\{\omega_{B}^{x,\theta}\}_{\theta}).
\label{eq:amort-dont-help-cq-2}
\end{equation}
Since the inequality in \eqref{eq:amort-dont-help-cq-2} holds for an arbitrary
family $\{\rho_{RA}^{\theta}\}_{\theta}$ of states, we conclude
\eqref{eq:amort-dont-help-cq}. Combining \eqref{eq:simple-way-cq} and
\eqref{eq:amort-dont-help-cq}, along with the general inequality in
\eqref{eq:general-amort-ineq-obvi-dir}, we conclude \eqref{eq:amort-collapse}.
\end{proof}

\begin{conclusion}
As a direct consequence of the QCRB\ in \eqref{eq:QCRB}, the meta-converse
from Theorem~\ref{thm:meta-converse}, and the amortization collapse from
Theorem~\ref{thm:amort-collapse-cq}, we conclude the following bound on the
MSE\ of an unbiased estimator $\hat{\theta}$ for classical--quantum channel
families defined in \eqref{eq:cq-channel-fams} and for which the finiteness
condition in \eqref{eq:finiteness-condition-cq-channels} holds:
\begin{equation}
\operatorname{Var}(\hat{\theta})\geq\frac{1}{n\sup_{x}I_{F}(\theta
;\{\omega_{B}^{x,\theta}\}_{\theta})}.
\end{equation}
Thus, there is no advantage that sequential estimation strategies bring over
parallel estimation strategies for this class of channels. In fact, an optimal
parallel estimation strategy consists of picking the same optimal input letter $x$
to each channel use in order to estimate~$\theta$.
\end{conclusion}

\subsubsection{Root SLD\ Fisher information limit for quantum channel
parameter estimation}

\label{sec:root-SLD-chain-rule-limits}We begin by showing that the root
SLD\ Fisher information obeys the following chain rule:

\begin{proposition}
[Chain rule]\label{prop:chain-rule-root-SLD}Let $\{\rho_{\theta}\}_{\theta}$
be a differentiable family of quantum states, and let $\{\mathcal{N}
_{A\rightarrow B}^{\theta}\}_{\theta}$ be a differentiable family of quantum
channels. Then the following chain rule holds for the root SLD Fisher
information:
\begin{equation}
\sqrt{I_{F}}(\theta;\{\mathcal{N}_{A\rightarrow B}^{\theta}(\rho_{RA}^{\theta
})\}_{\theta})\leq\sqrt{I_{F}}(\theta;\{\mathcal{N}_{A\rightarrow B}^{\theta
}\}_{\theta})+\sqrt{I_{F}}(\theta;\{\rho_{RA}^{\theta}\}_{\theta}).
\label{eq:chain-rule-root-SLD}
\end{equation}

\end{proposition}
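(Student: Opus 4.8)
The plan is to exploit the variational formula for the root SLD Fisher information of states in Proposition~\ref{prop:root-SLD-opt-formula}. Write $\sigma_{RB}^{\theta}:=\mathcal{N}_{A\rightarrow B}^{\theta}(\rho_{RA}^{\theta})$, and assume both terms on the right-hand side of \eqref{eq:chain-rule-root-SLD} are finite (otherwise the bound is trivial); fix the base value of $\theta$ at which all quantities are evaluated. By Proposition~\ref{prop:root-SLD-opt-formula},
\[
\tfrac{1}{\sqrt{2}}\sqrt{I_{F}}(\theta;\{\sigma_{RB}^{\theta}\}_{\theta})=\sup_{X_{RB}}\left\{\,\left\vert \operatorname{Tr}[X_{RB}(\partial_{\theta}\sigma_{RB}^{\theta})]\right\vert \;:\;\operatorname{Tr}[(X_{RB}X_{RB}^{\dag}+X_{RB}^{\dag}X_{RB})\sigma_{RB}^{\theta}]\leq 1\,\right\},
\]
so it suffices to bound $\left\vert \operatorname{Tr}[X_{RB}(\partial_{\theta}\sigma_{RB}^{\theta})]\right\vert$ by $\tfrac{1}{\sqrt{2}}\sqrt{I_{F}}(\theta;\{\mathcal{N}_{A\rightarrow B}^{\theta}\}_{\theta})+\tfrac{1}{\sqrt{2}}\sqrt{I_{F}}(\theta;\{\rho_{RA}^{\theta}\}_{\theta})$ for each feasible $X_{RB}$. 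Applying the product rule, $\partial_{\theta}\sigma_{RB}^{\theta}=(\partial_{\theta}\mathcal{N}_{A\rightarrow B}^{\theta})(\rho_{RA}^{\theta})+\mathcal{N}_{A\rightarrow B}^{\theta}(\partial_{\theta}\rho_{RA}^{\theta})$, and I would estimate the two terms separately.

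For the first term, I would regard $\rho_{RA}^{\theta}$, frozen at the base value of $\theta$, as a fixed parameter-independent input state fed into the channel family $\{\mathcal{N}_{A\rightarrow B}^{\theta}\}_{\theta}$. With the input held constant, $\partial_{\theta}[\mathcal{N}_{A\rightarrow B}^{\theta}(\rho_{RA}^{\theta})]$ is exactly $(\partial_{\theta}\mathcal{N}_{A\rightarrow B}^{\theta})(\rho_{RA}^{\theta})$, and the feasibility constraint $\operatorname{Tr}[(X_{RB}X_{RB}^{\dag}+X_{RB}^{\dag}X_{RB})\sigma_{RB}^{\theta}]\leq 1$ inherited from the output optimization is precisely the constraint appearing in the variational formula for $\sqrt{I_{F}}(\theta;\{\mathcal{N}_{A\rightarrow B}^{\theta}(\rho_{RA}^{\theta})\}_{\theta})$. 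Hence, combining Proposition~\ref{prop:root-SLD-opt-formula} with the fact that the channel root SLD Fisher information is a supremum over input states (Definition~\ref{def:gen-fish-channels}), $\left\vert \operatorname{Tr}[X_{RB}(\partial_{\theta}\mathcal{N}_{A\rightarrow B}^{\theta})(\rho_{RA}^{\theta})]\right\vert \leq \tfrac{1}{\sqrt{2}}\sqrt{I_{F}}(\theta;\{\mathcal{N}_{A\rightarrow B}^{\theta}\}_{\theta})$. For the second term, I would move $X_{RB}$ through the channel via its adjoint: $\operatorname{Tr}[X_{RB}\mathcal{N}_{A\rightarrow B}^{\theta}(\partial_{\theta}\rho_{RA}^{\theta})]=\operatorname{Tr}[\widetilde{X}_{RA}(\partial_{\theta}\rho_{RA}^{\theta})]$ with $\widetilde{X}_{RA}:=(\operatorname{id}_{R}\otimes(\mathcal{N}_{A\rightarrow B}^{\theta})^{\dag})(X_{RB})$. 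Since $\operatorname{id}_{R}\otimes(\mathcal{N}_{A\rightarrow B}^{\theta})^{\dag}$ is completely positive and unital, the operator Schwarz inequality \cite[Eq.~(3.14)]{Bhat07} gives $\widetilde{X}_{RA}\widetilde{X}_{RA}^{\dag}\leq(\operatorname{id}_{R}\otimes(\mathcal{N}_{A\rightarrow B}^{\theta})^{\dag})(X_{RB}X_{RB}^{\dag})$ and likewise for $\widetilde{X}_{RA}^{\dag}\widetilde{X}_{RA}$; using $\rho_{RA}^{\theta}\geq 0$ together with the adjoint relation $\operatorname{Tr}[(\operatorname{id}_{R}\otimes(\mathcal{N}_{A\rightarrow B}^{\theta})^{\dag})(M)\rho_{RA}^{\theta}]=\operatorname{Tr}[M\sigma_{RB}^{\theta}]$ then yields $\operatorname{Tr}[(\widetilde{X}_{RA}\widetilde{X}_{RA}^{\dag}+\widetilde{X}_{RA}^{\dag}\widetilde{X}_{RA})\rho_{RA}^{\theta}]\leq\operatorname{Tr}[(X_{RB}X_{RB}^{\dag}+X_{RB}^{\dag}X_{RB})\sigma_{RB}^{\theta}]\leq 1$, so $\widetilde{X}_{RA}$ is feasible in the state optimization and $\left\vert \operatorname{Tr}[\widetilde{X}_{RA}(\partial_{\theta}\rho_{RA}^{\theta})]\right\vert \leq\tfrac{1}{\sqrt{2}}\sqrt{I_{F}}(\theta;\{\rho_{RA}^{\theta}\}_{\theta})$.

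Combining the two estimates by the triangle inequality and taking the supremum over feasible $X_{RB}$ gives \eqref{eq:chain-rule-root-SLD} after multiplying by $\sqrt{2}$; the argument incidentally shows that the finiteness condition for $\sigma_{RB}^{\theta}$ holds whenever the right-hand side is finite. I expect the main obstacle to be the first term: the key realization is that freezing $\rho_{RA}^{\theta}$ at the base point converts the channel-derivative contribution into a genuine instance of the channel Fisher information evaluated on a fixed (and possibly mixed) input, with the inherited constraint being exactly the correct one; the second term is essentially the Schwarz-inequality/data-processing manipulation already appearing in the data-processing proof that follows Proposition~\ref{prop:root-SLD-opt-formula}. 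A minor bookkeeping point throughout is to keep careful track of which objects are genuine $\theta$-dependent families and which are constants obtained by evaluating at the fixed base point.
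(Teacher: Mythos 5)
Your proposal is correct and follows essentially the same route as the paper's proof: decompose $\partial_{\theta}(\mathcal{N}^{\theta}(\rho^{\theta}))$ via the product rule, bound the channel-derivative term by feeding the frozen input into the variational formula for the channel quantity, and handle the state-derivative term by pulling $X_{RB}$ through the adjoint channel and invoking the operator Schwarz inequality to verify feasibility. The only cosmetic difference is that the paper derives the product rule explicitly from the post-selected teleportation identity, whereas you simply assert it; otherwise the two arguments coincide step for step.
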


\begin{proof}
If the finiteness conditions in
\eqref{eq:finiteness-condition-SLD-Fish-states}\ and\ \eqref{eq:finiteness-condition-SLD-fish-ch}
do not hold, then the inequality is trivially satisfied. So let us suppose
that the finiteness conditions
\eqref{eq:finiteness-condition-SLD-Fish-states}\ and\ \eqref{eq:finiteness-condition-SLD-fish-ch}
hold.

By invoking Proposition~\ref{prop:root-SLD-opt-formula}\ and
Remark~\ref{rem:restrict-to-pure-bipartite}, first consider that the root
SLD\ Fisher information of channels has the following representation as an
optimization:
\begin{align}
&  \frac{1}{\sqrt{2}}\sqrt{I_{F}}(\theta;\{\mathcal{N}_{A\rightarrow
B}^{\theta}\}_{\theta})\nonumber\\
&  =\frac{1}{\sqrt{2}}\sup_{\rho_{RA}}\sqrt{I_{F}}(\theta;\{\mathcal{N}
_{A\rightarrow B}^{\theta}(\rho_{RA})\}_{\theta})\\
&  =\sup_{\rho_{RA}}\sup_{X_{RB}}\left\{
\begin{array}
[c]{c}
\left\vert \operatorname{Tr}[X_{RB}(\partial_{\theta}\mathcal{N}_{A\rightarrow
B}^{\theta}(\rho_{RA}))]\right\vert :\\
\operatorname{Tr}[(X_{RB}X_{RB}^{\dag}+X_{RB}^{\dag}X_{RB})\mathcal{N}
_{A\rightarrow B}^{\theta}(\rho_{RA})]\leq1
\end{array}
\right\} \\
&  =\sup_{\rho_{RA},X_{RB}}\left\{
\begin{array}
[c]{c}
\left\vert \operatorname{Tr}[X_{RB}(\partial_{\theta}\mathcal{N}_{A\rightarrow
B}^{\theta})(\rho_{RA})]\right\vert :\\
\operatorname{Tr}[(X_{RB}X_{RB}^{\dag}+X_{RB}^{\dag}X_{RB})\mathcal{N}
_{A\rightarrow B}^{\theta}(\rho_{RA})]\leq1
\end{array}
\right\}  , \label{eq:root-SLD-opt-formula-channels}
\end{align}
where the distinction between the third and last line is that $\partial
_{\theta}\mathcal{N}_{A\rightarrow B}^{\theta}(\rho_{RA})=(\partial_{\theta
}\mathcal{N}_{A\rightarrow B}^{\theta})(\rho_{RA})$ (i.e., for fixed
$\rho_{RA}$, the state $\rho_{RA}$ is constant with respect to the partial derivative.

Now recall the post-selected teleportation identity from
\eqref{eq:PS-TP-identity}:
\begin{equation}
\mathcal{N}_{A\rightarrow B}^{\theta}(\rho_{RA}^{\theta})=\langle\Gamma
|_{AS}\rho_{RA}^{\theta}\otimes\Gamma_{SB}^{\mathcal{N}^{\theta}}
|\Gamma\rangle_{AS}\text{.}
\end{equation}
This implies that
\begin{align}
&  \partial_{\theta}(\mathcal{N}_{A\rightarrow B}^{\theta}(\rho_{RA}^{\theta
}))\nonumber\\
&  =\partial_{\theta}(\langle\Gamma|_{AS}\rho_{RA}^{\theta}\otimes\Gamma
_{SB}^{\mathcal{N}^{\theta}}|\Gamma\rangle_{AS})\\
&  =\langle\Gamma|_{AS}\partial_{\theta}(\rho_{RA}^{\theta}\otimes\Gamma
_{SB}^{\mathcal{N}^{\theta}})|\Gamma\rangle_{AS}\\
&  =\langle\Gamma|_{AS}[(\partial_{\theta}\rho_{RA}^{\theta})\otimes
\Gamma_{SB}^{\mathcal{N}^{\theta}}+\rho_{RA}^{\theta}\otimes(\partial_{\theta
}\Gamma_{SB}^{\mathcal{N}^{\theta}})]|\Gamma\rangle_{AS}\\
&  =\langle\Gamma|_{AS}[(\partial_{\theta}\rho_{RA}^{\theta})\otimes
\Gamma_{SB}^{\mathcal{N}^{\theta}}|\Gamma\rangle_{AS}+\langle\Gamma|_{AS}
\rho_{RA}^{\theta}\otimes(\partial_{\theta}\Gamma_{SB}^{\mathcal{N}^{\theta}
})|\Gamma\rangle_{AS}\\
&  =\mathcal{N}_{A\rightarrow B}^{\theta}(\partial_{\theta}\rho_{RA}^{\theta
})+(\partial_{\theta}\mathcal{N}_{A\rightarrow B}^{\theta})(\rho_{RA}^{\theta
}).
\end{align}
Let $X_{RB}$ be an arbitrary operator satisfying
\begin{equation}
\operatorname{Tr}[(X_{RB}X_{RB}^{\dag}+X_{RB}^{\dag}X_{RB})\mathcal{N}
_{A\rightarrow B}^{\theta}(\rho_{RA}^{\theta})]\leq1.
\label{eq:root-SLD-chain-arb-op}
\end{equation}
Working with the left-hand side of the inequality, we find that
\begin{align}
&  \operatorname{Tr}[(X_{RB}X_{RB}^{\dag}+X_{RB}^{\dag}X_{RB})\mathcal{N}
_{A\rightarrow B}^{\theta}(\rho_{RA}^{\theta})]\nonumber\\
&  =\operatorname{Tr}[(\mathcal{N}_{A\rightarrow B}^{\theta})^{\dag}
(X_{RB}X_{RB}^{\dag}+X_{RB}^{\dag}X_{RB})(\rho_{RA}^{\theta})]\\
&  \geq\operatorname{Tr}[(Z_{RA}Z_{RA}^{\dag}+Z_{RA}^{\dag}Z_{RA})(\rho
_{RA}^{\theta})],
\end{align}
where we set
\begin{equation}
Z_{RA}:=(\mathcal{N}_{A\rightarrow B}^{\theta})^{\dag}(X_{RB}).
\end{equation}
The equality follows because $(\mathcal{N}_{A\rightarrow B}^{\theta})^{\dag}$
is the Hilbert--Schmidt adjoint of $\mathcal{N}_{A\rightarrow B}^{\theta}$,
and the inequality follows because $\rho_{RA}^{\theta}\geq 0$ and
\begin{align}
(\mathcal{N}_{A\rightarrow B}^{\theta})^{\dag}(X_{RB}^{\dag})(\mathcal{N}
_{A\rightarrow B}^{\theta})^{\dag}(X_{RB})  &  \leq(\mathcal{N}_{A\rightarrow
B}^{\theta})^{\dag}(X_{RB}^{\dag}X_{RB}),\\
(\mathcal{N}_{A\rightarrow B}^{\theta})^{\dag}(X_{RB})(\mathcal{N}
_{A\rightarrow B}^{\theta})^{\dag}(X_{RB}^{\dag})  &  \leq(\mathcal{N}
_{A\rightarrow B}^{\theta})^{\dag}(X_{RB}X_{RB}^{\dag}),
\end{align}
which themselves follow from the Schwarz inequality for completely positive
unital maps \cite[Eq.~(3.14)]{Bhat07}. So we conclude that
\begin{equation}
\operatorname{Tr}[(Z_{RA}Z_{RA}^{\dag}+Z_{RA}^{\dag}Z_{RA})(\rho_{RA}^{\theta
})]\leq1. \label{eq:condition-on-adjoint-op-root-SLD-chain}
\end{equation}
Then consider that
\begin{align}
&  \left\vert \operatorname{Tr}[X_{RB}(\partial_{\theta}(\mathcal{N}
_{A\rightarrow B}^{\theta}(\rho_{RA}^{\theta})))]\right\vert \nonumber\\
&  =\left\vert \operatorname{Tr}[X_{RB}((\partial_{\theta}\mathcal{N}
_{A\rightarrow B}^{\theta})(\rho_{RA}^{\theta}))]+\operatorname{Tr}
[X_{RB}\mathcal{N}_{A\rightarrow B}^{\theta}(\partial_{\theta}
\rho_{RA}^{\theta})]\right\vert \\
&  =\left\vert \operatorname{Tr}[X_{RB}((\partial_{\theta}\mathcal{N}
_{A\rightarrow B}^{\theta})(\rho_{RA}^{\theta}))]+\operatorname{Tr}
[(\mathcal{N}_{A\rightarrow B}^{\theta})^{\dag}(X_{RB})(\partial_{\theta}
\rho_{RA}^{\theta})]\right\vert \\
&  \leq\left\vert \operatorname{Tr}[X_{RB}((\partial_{\theta}\mathcal{N}
_{A\rightarrow B}^{\theta})(\rho_{RA}^{\theta}))]\right\vert +\left\vert
\operatorname{Tr}[(\mathcal{N}_{A\rightarrow B}^{\theta})^{\dag}
(X_{RB})(\partial_{\theta}\rho_{RA}^{\theta})]\right\vert .
\end{align}
By applying \eqref{eq:root-SLD-opt-formula-channels}, we find that
\begin{equation}
\sqrt{2}\left\vert \operatorname{Tr}[X_{RB}((\partial_{\theta}\mathcal{N}
_{A\rightarrow B}^{\theta})(\rho_{RA}^{\theta}))]\right\vert \leq\sqrt{I_{F}
}(\theta;\{\mathcal{N}_{A\rightarrow B}^{\theta}\}_{\theta}).
\end{equation}
Since the operator $(\mathcal{N}_{A\rightarrow B}^{\theta})^{\dag}
(X_{RB})=Z_{RA}$ satisfies \eqref{eq:condition-on-adjoint-op-root-SLD-chain},
by applying the optimization in \eqref{eq:root-SLD-opt-formula}, we find that
\begin{equation}
\sqrt{2}\left\vert \operatorname{Tr}[(\mathcal{N}_{A\rightarrow B}^{\theta
})^{\dag}(X_{RB})(\partial_{\theta}\rho_{RA}^{\theta})]\right\vert \leq
\sqrt{I_{F}}(\theta;\{\rho_{RA}^{\theta}\}_{\theta}).
\end{equation}
So we conclude that
\begin{equation}
\sqrt{2}\left\vert \operatorname{Tr}[X_{RB}(\partial_{\theta}(\mathcal{N}
_{A\rightarrow B}^{\theta}(\rho_{RA}^{\theta})))]\right\vert \leq\sqrt{I_{F}
}(\theta;\{\mathcal{N}_{A\rightarrow B}^{\theta}\}_{\theta})+\sqrt{I_{F}
}(\theta;\{\rho_{RA}^{\theta}\}_{\theta}).
\end{equation}
Since $X_{RB}$ is an arbitrary operator satisfying
\eqref{eq:root-SLD-chain-arb-op}, we can optimize over all such operators to
conclude the chain rule inequality in \eqref{eq:chain-rule-root-SLD}.
\end{proof}

\begin{corollary}
\label{cor:amort-collapse-root-SLD}Let $\{\mathcal{N}_{A\rightarrow B}
^{\theta}\}_{\theta}$ be a family of differentiable quantum channels. Then the
following amortization collapse occurs for the root SLD\ Fisher information of
quantum channels:
\begin{equation}
\sqrt{I_{F}}^{\mathcal{A}}(\theta;\{\mathcal{N}_{A\rightarrow B}^{\theta
}\}_{\theta})=\sqrt{I_{F}}(\theta;\{\mathcal{N}_{A\rightarrow B}^{\theta
}\}_{\theta}),
\end{equation}
where
\begin{equation}
\sqrt{I_{F}}^{\mathcal{A}}(\theta;\{\mathcal{N}_{A\rightarrow B}^{\theta
}\}_{\theta}):=\sup_{\{\rho_{RA}^{\theta}\}_{\theta}}\left[  \sqrt{I_{F}
}(\theta;\{\mathcal{N}_{A\rightarrow B}^{\theta}(\rho_{RA}^{\theta}
)\}_{\theta})-\sqrt{I_{F}}(\theta;\{\rho_{RA}^{\theta}\}_{\theta})\right]  .
\end{equation}

\end{corollary}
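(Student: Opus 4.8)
The plan is to prove the two inequalities
\[
\sqrt{I_{F}}^{\mathcal{A}}(\theta;\{\mathcal{N}_{A\rightarrow B}^{\theta}\}_{\theta})\geq\sqrt{I_{F}}(\theta;\{\mathcal{N}_{A\rightarrow B}^{\theta}\}_{\theta})
\qquad\text{and}\qquad
\sqrt{I_{F}}^{\mathcal{A}}(\theta;\{\mathcal{N}_{A\rightarrow B}^{\theta}\}_{\theta})\leq\sqrt{I_{F}}(\theta;\{\mathcal{N}_{A\rightarrow B}^{\theta}\}_{\theta})
\]
separately and then combine them.

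For the ``$\geq$'' direction, I would first note that the root SLD Fisher information is itself a generalized Fisher information in the sense of Definition~\ref{def:gen-fish-info-states}: the data-processing inequality \eqref{eq:DP-SLD} for $I_{F}$ together with monotonicity of $x\mapsto\sqrt{x}$ on $[0,\infty]$ gives $\sqrt{I_{F}}(\theta;\{\rho_{A}^{\theta}\}_{\theta})\geq\sqrt{I_{F}}(\theta;\{\mathcal{N}_{A\rightarrow B}(\rho_{A}^{\theta})\}_{\theta})$ for any parameter-independent channel $\mathcal{N}_{A\rightarrow B}$. Moreover it is weakly faithful, since Proposition~\ref{prop:faithfulness-SLD-RLD-Fish} yields $\sqrt{I_{F}}(\theta;\{\rho_{A}\}_{\theta})=0$ for any constant family. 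Hence Proposition~\ref{prop:amort->=-ch-Fish-gen} applies verbatim with $\mathbf{I}_{F}=\sqrt{I_{F}}$ and gives the desired inequality. (Equivalently, one can argue directly: restricting the supremum in the definition of $\sqrt{I_{F}}^{\mathcal{A}}$ to parameter-independent input families makes the subtracted term vanish by weak faithfulness, leaving $\sup_{\rho_{RA}}\sqrt{I_{F}}(\theta;\{\mathcal{N}_{A\rightarrow B}^{\theta}(\rho_{RA})\}_{\theta})$, which equals $\sqrt{I_{F}}(\theta;\{\mathcal{N}_{A\rightarrow B}^{\theta}\}_{\theta})$ by Definition~\ref{def:gen-fish-channels} and Remark~\ref{rem:restrict-to-pure-bipartite}.)

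For the ``$\leq$'' direction, the real content is the chain rule already established in Proposition~\ref{prop:chain-rule-root-SLD}. Given an arbitrary differentiable input family $\{\rho_{RA}^{\theta}\}_{\theta}$, I would rearrange \eqref{eq:chain-rule-root-SLD} to
\[
\sqrt{I_{F}}(\theta;\{\mathcal{N}_{A\rightarrow B}^{\theta}(\rho_{RA}^{\theta})\}_{\theta})-\sqrt{I_{F}}(\theta;\{\rho_{RA}^{\theta}\}_{\theta})\leq\sqrt{I_{F}}(\theta;\{\mathcal{N}_{A\rightarrow B}^{\theta}\}_{\theta}),
\]
which is valid whenever $\sqrt{I_{F}}(\theta;\{\rho_{RA}^{\theta}\}_{\theta})$ is finite; when it is infinite the left-hand side is $-\infty$ (equivalently, such families contribute nothing to the optimization in \eqref{eq:amortized-fisher-info}), so the bound is vacuous in that case. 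Taking the supremum over all input families $\{\rho_{RA}^{\theta}\}_{\theta}$ on the left-hand side then yields $\sqrt{I_{F}}^{\mathcal{A}}(\theta;\{\mathcal{N}_{A\rightarrow B}^{\theta}\}_{\theta})\leq\sqrt{I_{F}}(\theta;\{\mathcal{N}_{A\rightarrow B}^{\theta}\}_{\theta})$, and combining with the reverse inequality from the previous paragraph finishes the proof.

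Since the technical heavy lifting is entirely contained in Proposition~\ref{prop:chain-rule-root-SLD}, I do not anticipate a genuine obstacle here. The only points requiring a little care are purely bookkeeping: confirming that $\sqrt{I_{F}}$ indeed qualifies as a weakly faithful generalized Fisher information (so that Proposition~\ref{prop:amort->=-ch-Fish-gen} is available), and handling the $+\infty$ cases consistently with the way the chain rule in Proposition~\ref{prop:chain-rule-root-SLD} is stated (it is trivially true whenever either finiteness condition fails) and with the convention for the difference appearing in \eqref{eq:amortized-fisher-info} when the input family's Fisher information diverges.
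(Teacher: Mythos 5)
Your proposal is correct and follows essentially the same route as the paper: the lower bound via Proposition~\ref{prop:amort->=-ch-Fish-gen} together with weak faithfulness of $\sqrt{I_{F}}$, and the upper bound by rearranging the chain rule of Proposition~\ref{prop:chain-rule-root-SLD} and taking a supremum over input families, with the infinite cases dispatched trivially. The extra care you take in verifying that $\sqrt{I_{F}}$ qualifies as a weakly faithful generalized Fisher information is sound bookkeeping that the paper leaves implicit.
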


\begin{proof}
If the finiteness condition in\ \eqref{eq:finiteness-condition-SLD-fish-ch}
does not hold, then the equality trivially holds. So let us suppose that the
finiteness condition in \eqref{eq:finiteness-condition-SLD-fish-ch}\ holds.
The inequality $\geq$ follows from Proposition~\ref{prop:amort->=-ch-Fish-gen}
and the fact that the root SLD\ Fisher information is faithful (see
\eqref{eq:SLD-RLD-Fish-faithful}). The opposite inequality $\leq$ is a
consequence of the chain rule from Proposition~\ref{prop:chain-rule-root-SLD}.
Let $\{\rho_{RA}^{\theta}\}_{\theta}$ be a family of quantum states on a
systems $RA$. Then it follows from Proposition~\ref{prop:chain-rule-root-SLD}
that
\begin{equation}
\sqrt{I_{F}}(\theta;\{\mathcal{N}_{A\rightarrow B}^{\theta}(\rho_{RA}^{\theta
})\}_{\theta})-\sqrt{I_{F}}(\theta;\{\rho_{RA}^{\theta}\}_{\theta})\leq
\sqrt{I_{F}}(\theta;\{\mathcal{N}_{A\rightarrow B}^{\theta}\}_{\theta}).
\end{equation}
Since the family $\{\rho_{RA}^{\theta}\}_{\theta}$ is arbitrary, we can take a
supremum of the left-hand side over all such families, and conclude that
\begin{equation}
\sqrt{I_{F}}^{\mathcal{A}}(\theta;\{\mathcal{N}_{A\rightarrow B}^{\theta
}\}_{\theta})\leq\sqrt{I_{F}}(\theta;\{\mathcal{N}_{A\rightarrow B}^{\theta
}\}_{\theta}).
\end{equation}
This concludes the proof.
\end{proof}

\begin{corollary}
\label{cor:subadd-serial-concat-root-SLD-Fish}Let $\{\mathcal{N}_{A\rightarrow
B}^{\theta}\}_{\theta}$ and $\{\mathcal{M}_{B\rightarrow C}^{\theta}
\}_{\theta}$ be differentiable families of quantum channels. Then the root
SLD\ Fisher information of quantum channels is subadditive with respect to
serial composition, in the following sense:
\begin{equation}
\sqrt{I_{F}}(\theta;\{\mathcal{M}_{B\rightarrow C}^{\theta}\circ
\mathcal{N}_{A\rightarrow B}^{\theta}\}_{\theta})\leq\sqrt{I_{F}}
(\theta;\{\mathcal{N}_{A\rightarrow B}^{\theta}\}_{\theta})+\sqrt{I_{F}
}(\theta;\{\mathcal{M}_{B\rightarrow C}^{\theta}\}_{\theta}).
\label{eq:serial-composition-root-SLD-ch}
\end{equation}

\end{corollary}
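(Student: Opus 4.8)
**The plan is to derive the serial-composition subadditivity from the chain rule of Proposition~\ref{prop:chain-rule-root-SLD}, exploiting the amortization collapse of Corollary~\ref{cor:amort-collapse-root-SLD}.**

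\medskip

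\noindent First I would write out, via Definition~\ref{def:gen-fish-channels} and Remark~\ref{rem:restrict-to-pure-bipartite}, the root SLD Fisher information of the composed family as an optimization over a pure bipartite input state $\psi_{RA}$:
\begin{equation}
\sqrt{I_{F}}(\theta;\{\mathcal{M}_{B\rightarrow C}^{\theta}\circ\mathcal{N}_{A\rightarrow B}^{\theta}\}_{\theta})=\sup_{\psi_{RA}}\sqrt{I_{F}}(\theta;\{(\mathcal{M}_{B\rightarrow C}^{\theta}\circ\mathcal{N}_{A\rightarrow B}^{\theta})(\psi_{RA})\}_{\theta}).
\end{equation}
For a fixed such $\psi_{RA}$, set $\rho_{RB}^{\theta}:=\mathcal{N}_{A\rightarrow B}^{\theta}(\psi_{RA})$, a differentiable family of states on $RB$. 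Then $(\mathcal{M}_{B\rightarrow C}^{\theta}\circ\mathcal{N}_{A\rightarrow B}^{\theta})(\psi_{RA})=\mathcal{M}_{B\rightarrow C}^{\theta}(\rho_{RB}^{\theta})$, so the chain rule from Proposition~\ref{prop:chain-rule-root-SLD}, applied with the channel family $\{\mathcal{M}_{B\rightarrow C}^{\theta}\}_{\theta}$ and the state family $\{\rho_{RB}^{\theta}\}_{\theta}$, gives
\begin{equation}
\sqrt{I_{F}}(\theta;\{\mathcal{M}_{B\rightarrow C}^{\theta}(\rho_{RB}^{\theta})\}_{\theta})\leq\sqrt{I_{F}}(\theta;\{\mathcal{M}_{B\rightarrow C}^{\theta}\}_{\theta})+\sqrt{I_{F}}(\theta;\{\rho_{RB}^{\theta}\}_{\theta}).
\end{equation}

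\medskip

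\noindent Next I would bound the last term. By definition of $\rho_{RB}^{\theta}$ and of the channel Fisher information (plus the fact that $\psi_{RA}$ is parameter-independent, hence has zero Fisher information), we have
\begin{equation}
\sqrt{I_{F}}(\theta;\{\rho_{RB}^{\theta}\}_{\theta})=\sqrt{I_{F}}(\theta;\{\mathcal{N}_{A\rightarrow B}^{\theta}(\psi_{RA})\}_{\theta})\leq\sqrt{I_{F}}(\theta;\{\mathcal{N}_{A\rightarrow B}^{\theta}\}_{\theta}),
\end{equation}
where the inequality is simply that the supremum over inputs in Definition~\ref{def:gen-fish-channels} dominates the particular choice $\psi_{RA}$. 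Combining the two displays yields
\begin{equation}
\sqrt{I_{F}}(\theta;\{(\mathcal{M}_{B\rightarrow C}^{\theta}\circ\mathcal{N}_{A\rightarrow B}^{\theta})(\psi_{RA})\}_{\theta})\leq\sqrt{I_{F}}(\theta;\{\mathcal{N}_{A\rightarrow B}^{\theta}\}_{\theta})+\sqrt{I_{F}}(\theta;\{\mathcal{M}_{B\rightarrow C}^{\theta}\}_{\theta}).
\end{equation}
Since the right-hand side is independent of $\psi_{RA}$, taking the supremum over all pure bipartite $\psi_{RA}$ on the left produces exactly \eqref{eq:serial-composition-root-SLD-ch}. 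As with the earlier proofs, if either finiteness condition fails the right-hand side is $+\infty$ and the inequality is trivial, so I would dispatch that case in one sentence at the start.

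\medskip

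\noindent I do not anticipate a genuine obstacle here: the only subtlety is making sure the chain rule is invoked with the \emph{correct} pairing — the "outer" channel $\mathcal{M}$ acting on the state family produced by the "inner" channel $\mathcal{N}$ — and that the reference system $R$ used for $\{\rho_{RB}^{\theta}\}$ is allowed to be arbitrarily large (which it is, since Proposition~\ref{prop:chain-rule-root-SLD} is stated for arbitrary $\rho_{RA}^{\theta}$ with unbounded reference). Everything else is bookkeeping. Alternatively, one could phrase this as an immediate corollary of the amortization collapse in Corollary~\ref{cor:amort-collapse-root-SLD} together with the chain rule, but the direct argument above is cleanest and most self-contained.
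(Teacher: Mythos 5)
Your proposal is correct and follows essentially the same route as the paper's proof: fix an input state, apply the chain rule of Proposition~\ref{prop:chain-rule-root-SLD} with $\mathcal{M}^{\theta}$ acting on the state family $\{\mathcal{N}^{\theta}(\psi_{RA})\}_{\theta}$, bound that state family's Fisher information by the supremum defining $\sqrt{I_{F}}(\theta;\{\mathcal{N}^{\theta}\}_{\theta})$, and then optimize over inputs. (The parenthetical appeal to the parameter-independence of $\psi_{RA}$ is unnecessary for that last bound, which follows from the definition of the channel quantity as a supremum alone, but this is harmless.)
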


\begin{proof}
If the finiteness condition in\ \eqref{eq:finiteness-condition-SLD-fish-ch}
does not hold for either channel, then the inequality trivially holds. So let
us suppose that the finiteness condition in
\eqref{eq:finiteness-condition-SLD-fish-ch}\ holds for both channels. Pick an
arbitrary input state $\omega_{RA}$. Now apply
Proposition~\ref{prop:chain-rule-root-SLD}\ to find that
\begin{align}
&  \sqrt{I_{F}}(\theta;\{\mathcal{M}_{B\rightarrow C}^{\theta}(\mathcal{N}
_{A\rightarrow B}^{\theta}(\omega_{RA}))\}_{\theta})\nonumber\\
&  \leq\sqrt{I_{F}}(\theta;\{\mathcal{N}_{A\rightarrow B}^{\theta}(\omega
_{RA})\}_{\theta})+\sqrt{I_{F}}(\theta;\{\mathcal{M}_{B\rightarrow C}^{\theta
}\}_{\theta})\\
&  \leq\sup_{\omega_{RA}}\sqrt{I_{F}}(\theta;\{\mathcal{N}_{A\rightarrow
B}^{\theta}(\omega_{RA})\}_{\theta})+\sqrt{I_{F}}(\theta;\{\mathcal{M}
_{B\rightarrow C}^{\theta}\}_{\theta})\\
&  =\sqrt{I_{F}}(\theta;\{\mathcal{N}_{A\rightarrow B}^{\theta}\}_{\theta
})+\sqrt{I_{F}}(\theta;\{\mathcal{M}_{B\rightarrow C}^{\theta}\}_{\theta}).
\end{align}
Since the inequality holds for all input states, we conclude that
\begin{equation}
\sup_{\omega_{RA}}\sqrt{I_{F}}(\theta;\{\mathcal{M}_{B\rightarrow C}^{\theta
}(\mathcal{N}_{A\rightarrow B}^{\theta}(\omega_{RA}))\}_{\theta})\leq
\sqrt{I_{F}}(\theta;\{\mathcal{N}_{A\rightarrow B}^{\theta}\}_{\theta}
)+\sqrt{I_{F}}(\theta;\{\mathcal{M}_{B\rightarrow C}^{\theta}\}_{\theta}),
\end{equation}
which implies \eqref{eq:serial-composition-root-SLD-ch}.
\end{proof}

\bigskip

The following bound in \eqref{eq:Heisenberg-bnd-SLD-channels} was reported
recently in \cite{Yuan2017}. Here, we see how it is a consequence of the
QCRB\ in \eqref{eq:QCRB}, the meta-converse from
Theorem~\ref{thm:meta-converse}, and the amortization collapse from
Corollary~\ref{cor:amort-collapse-root-SLD}. At the same time, our approach
offers a technical improvement over the result of \cite{Yuan2017}, in that the
families of quantum channels to which the bound applies need only be
differentiable rather than second-order differentiable, the latter being
required by the approach of \cite{Yuan2017}.

\begin{conclusion}
As a direct consequence of the QCRB\ in \eqref{eq:QCRB}, the meta-converse
from Theorem~\ref{thm:meta-converse}, and the amortization collapse from
Corollary~\ref{cor:amort-collapse-root-SLD}, we conclude the following bound
on the MSE\ of an unbiased estimator $\hat{\theta}$ for all differentiable
quantum channel families:
\begin{equation}
\operatorname{Var}(\hat{\theta})\geq\frac{1}{n^{2}I_{F}(\theta;\{\mathcal{N}
_{A\rightarrow B}^{\theta}\}_{\theta})}.
\label{eq:Heisenberg-bnd-SLD-channels}
\end{equation}
This bound thus poses a ``Heisenberg'' limitation on sequential estimation protocols for all
differentiable quantum channel families satisfying the finiteness condition in \eqref{eq:finiteness-condition-SLD-fish-ch}.
\end{conclusion}

\subsection{RLD\ Fisher information limit on quantum channel parameter
estimation}

\label{sec:RLD-Fish-limits}

\subsubsection{RLD\ Fisher information of quantum channels and its properties}

We now recall and establish some properties of the RLD\ Fisher information of
quantum channels. Following \cite{Hayashi2011} and the general prescription in
Definition~\ref{def:gen-fish-channels}, it is defined as follows:
\begin{equation}
\widehat{I}_{F}(\theta;\{\mathcal{N}_{A\rightarrow B}^{\theta}\}_{\theta
}):=\sup_{\rho_{RA}}\widehat{I}_{F}(\theta;\{\mathcal{N}_{A\rightarrow
B}^{\theta}(\rho_{RA})\}_{\theta}), \label{eq:def-RLD-channel}
\end{equation}
but note that the optimization can be restricted to pure bipartite states, due
to Remark~\ref{rem:restrict-to-pure-bipartite}. Recall that the RLD\ Fisher
information of quantum channels has an explicit formula, as given in \eqref{eq:RLD-Fish-ch}.

The following additivity relation was established in \cite{Hayashi2011}, and
we review its proof in Appendix~\ref{app:proofs-RLD-fish-props}.

\begin{proposition}
\label{prop:add-RLD-Fish-ch}Let $\{\mathcal{N}_{A\rightarrow B}^{\theta
}\}_{\theta}$ and $\{\mathcal{M}_{C\rightarrow D}^{\theta}\}_{\theta}$ be
differentiable families of quantum channels. Then the RLD\ Fisher information
of quantum channels is additive in the following sense:
\begin{equation}
\widehat{I}_{F}(\theta;\{\mathcal{N}_{A\rightarrow B}^{\theta}\otimes
\mathcal{M}_{C\rightarrow D}^{\theta}\}_{\theta})=\widehat{I}_{F}
(\theta;\{\mathcal{N}_{A\rightarrow B}^{\theta}\}_{\theta})+\widehat{I}
_{F}(\theta;\{\mathcal{M}_{C\rightarrow D}^{\theta}\}_{\theta}).
\end{equation}

\end{proposition}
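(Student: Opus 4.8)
The plan is to prove the additivity identity directly from the explicit Choi-operator formula for the RLD Fisher information of channels in \eqref{eq:RLD-Fish-ch}. Write $\Gamma_{1}\coloneqq\Gamma_{R_{1}B}^{\mathcal{N}^{\theta}}$ and $\Gamma_{2}\coloneqq\Gamma_{R_{2}D}^{\mathcal{M}^{\theta}}$ for the Choi operators of the two families, abbreviate $\dot{\Gamma}_{i}\coloneqq\partial_{\theta}\Gamma_{i}$, and let $\Pi_{i}$ denote the projection onto $\operatorname{supp}(\Gamma_{i})$. Since $|\Gamma\rangle_{R_{1}R_{2}AC}=|\Gamma\rangle_{R_{1}A}\otimes|\Gamma\rangle_{R_{2}C}$, the Choi operator of the product family factorizes (up to a permutation of tensor factors) as $\Gamma^{\mathcal{N}^{\theta}\otimes\mathcal{M}^{\theta}}=\Gamma_{1}\otimes\Gamma_{2}$, so that $(\Gamma_{1}\otimes\Gamma_{2})^{-1}=\Gamma_{1}^{-1}\otimes\Gamma_{2}^{-1}$ on the joint support, $\partial_{\theta}(\Gamma_{1}\otimes\Gamma_{2})=\dot{\Gamma}_{1}\otimes\Gamma_{2}+\Gamma_{1}\otimes\dot{\Gamma}_{2}$ by the product rule, and $\Pi_{\Gamma_{1}\otimes\Gamma_{2}}^{\perp}=\Pi_{1}^{\perp}\otimes I+\Pi_{1}\otimes\Pi_{2}^{\perp}$.

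First I would dispose of the support condition. Multiplying $\partial_{\theta}(\Gamma_{1}\otimes\Gamma_{2})$ on the left by $\Pi_{\Gamma_{1}\otimes\Gamma_{2}}^{\perp}$ and using $\Pi_{i}^{\perp}\Gamma_{i}=0$ and $\Pi_{i}^{\perp}\Pi_{i}=0$ leaves $(\Pi_{1}^{\perp}\dot{\Gamma}_{1})\otimes\Gamma_{2}+\Gamma_{1}\otimes(\Pi_{2}^{\perp}\dot{\Gamma}_{2})$; projecting further onto $\operatorname{supp}(\Gamma_{1})^{\perp}$ in the first factor isolates the first summand, which must therefore vanish, and since $\Gamma_{2}\neq0$ this forces $\Pi_{1}^{\perp}\dot{\Gamma}_{1}=0$, and symmetrically $\Pi_{2}^{\perp}\dot{\Gamma}_{2}=0$; the converse implication is immediate. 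Since each $\dot{\Gamma}_{i}$ is Hermitian, this shows that the support condition in \eqref{eq:finiteness-condition-RLD-fish-ch} holds for $\mathcal{N}^{\theta}\otimes\mathcal{M}^{\theta}$ if and only if it holds for both $\mathcal{N}^{\theta}$ and $\mathcal{M}^{\theta}$. Consequently both sides of the claimed identity equal $+\infty$ in exactly the same cases, and it remains to treat the case in which all three support conditions hold.

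In that case, I would expand $\partial_{\theta}(\Gamma_{1}\otimes\Gamma_{2})\,(\Gamma_{1}\otimes\Gamma_{2})^{-1}\,\partial_{\theta}(\Gamma_{1}\otimes\Gamma_{2})$ into four terms and reduce each using $\Pi_{i}\dot{\Gamma}_{i}=\dot{\Gamma}_{i}\Pi_{i}=\dot{\Gamma}_{i}$, $\Gamma_{i}^{-1}\Gamma_{i}=\Pi_{i}$, and $\Pi_{i}\Gamma_{i}=\Gamma_{i}$, which yields
\[
\dot{\Gamma}_{1}\Gamma_{1}^{-1}\dot{\Gamma}_{1}\otimes\Gamma_{2}+2\,\dot{\Gamma}_{1}\otimes\dot{\Gamma}_{2}+\Gamma_{1}\otimes\dot{\Gamma}_{2}\Gamma_{2}^{-1}\dot{\Gamma}_{2}.
\]
Taking the partial trace over both output systems, trace preservation gives $\operatorname{Tr}_{B}[\Gamma_{1}]=I_{R_{1}}$ and $\operatorname{Tr}_{D}[\Gamma_{2}]=I_{R_{2}}$, and since these are independent of $\theta$ we also get $\operatorname{Tr}_{B}[\dot{\Gamma}_{1}]=\operatorname{Tr}_{D}[\dot{\Gamma}_{2}]=0$. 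Hence the cross term drops out and the remaining two terms become $X\otimes I_{R_{2}}$ and $I_{R_{1}}\otimes Y$, where $X\coloneqq\operatorname{Tr}_{B}[\dot{\Gamma}_{1}\Gamma_{1}^{-1}\dot{\Gamma}_{1}]\geq0$ and $Y\coloneqq\operatorname{Tr}_{D}[\dot{\Gamma}_{2}\Gamma_{2}^{-1}\dot{\Gamma}_{2}]\geq0$. Finally, for positive semi-definite $X$ and $Y$ the eigenvalues of $X\otimes I+I\otimes Y$ are the pairwise sums of those of $X$ and $Y$, so $\|X\otimes I+I\otimes Y\|_{\infty}=\|X\|_{\infty}+\|Y\|_{\infty}$; applying \eqref{eq:RLD-Fish-ch} to each factor then gives $\widehat{I}_{F}(\theta;\{\mathcal{N}^{\theta}\}_{\theta})+\widehat{I}_{F}(\theta;\{\mathcal{M}^{\theta}\}_{\theta})$, which completes the argument.

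The computation is essentially bookkeeping; the one step that deserves care is tracking the support projections through the triple-product expansion and checking that the cross term vanishes after the partial trace, which is exactly where trace preservation of the channels is used. There is no genuine analytic obstacle here. Alternatively, the inequality $\geq$ alone follows more cheaply by restricting the optimization in \eqref{eq:def-RLD-channel} to product input states and invoking additivity of the RLD Fisher information on product states (Proposition~\ref{prop:additivity-SLD-RLD-states}), but the Choi-operator computation above yields the full equality in a single pass.
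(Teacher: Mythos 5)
Your proof is correct and follows essentially the same route as the paper's: reduce to the Choi-operator formula \eqref{eq:RLD-Fish-ch}, check that the support condition for the product family is equivalent to the conjunction of the individual support conditions, expand the triple product so that the cross term vanishes under the partial trace by trace preservation, and conclude with the identity $\left\Vert X\otimes I+I\otimes Y\right\Vert_{\infty}=\left\Vert X\right\Vert_{\infty}+\left\Vert Y\right\Vert_{\infty}$ for positive semi-definite $X$ and $Y$ (Lemma~\ref{lem:additivity-op-norm} in the paper). The only cosmetic difference is how the support-condition equivalence is organized; the substance is identical.
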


The RLD\ Fisher information of quantum states and channels obeys the following
chain rule:

\begin{proposition}
[Chain rule]\label{prop:chain-rule-RLD}Let $\{\mathcal{N}_{A\rightarrow
B}^{\theta}\}_{\theta}$ be a differentiable family of quantum channels, and
let $\{\rho_{RA}^{\theta}\}_{\theta}$ be a differentiable family of quantum
states on systems $RA$, with the system $R$ of arbitrary size. Then the
following chain rule holds
\begin{equation}
\widehat{I}_{F}(\theta;\{\mathcal{N}_{A\rightarrow B}^{\theta}(\rho
_{RA}^{\theta})\}_{\theta})\leq\widehat{I}_{F}(\theta;\{\mathcal{N}
_{A\rightarrow B}^{\theta}\}_{\theta})+\widehat{I}_{F}(\theta;\{\rho
_{RA}^{\theta}\}_{\theta}).
\end{equation}

\end{proposition}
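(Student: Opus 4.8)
The plan is to run the argument through the semi-definite programming characterization of the RLD Fisher information of states together with the post-selected teleportation identity \eqref{eq:PS-TP-identity}, in parallel with the proof of the root SLD chain rule (Proposition~\ref{prop:chain-rule-root-SLD}). Write $\sigma_{RB}^{\theta}:=\mathcal{N}_{A\rightarrow B}^{\theta}(\rho_{RA}^{\theta})$. If $\widehat{I}_{F}(\theta;\{\mathcal{N}_{A\rightarrow B}^{\theta}\}_{\theta})=+\infty$ or $\widehat{I}_{F}(\theta;\{\rho_{RA}^{\theta}\}_{\theta})=+\infty$ --- equivalently, if one of the support conditions $\operatorname{supp}(\partial_{\theta}\rho_{RA}^{\theta})\subseteq\operatorname{supp}(\rho_{RA}^{\theta})$ or $\operatorname{supp}(\partial_{\theta}\Gamma_{RB}^{\mathcal{N}^{\theta}})\subseteq\operatorname{supp}(\Gamma_{RB}^{\mathcal{N}^{\theta}})$ fails --- the inequality is trivial, so suppose both hold. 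Introduce a system $S$ isomorphic to $A$ and the parameter-independent, completely positive (single-Kraus) map $\mathcal{E}(\cdot):=\langle\Gamma|_{AS}(\cdot)|\Gamma\rangle_{AS}$; then \eqref{eq:PS-TP-identity} gives $\sigma_{RB}^{\theta}=\mathcal{E}(\rho_{RA}^{\theta}\otimes\Gamma_{SB}^{\mathcal{N}^{\theta}})$, and since $\mathcal{E}$ is linear and $\theta$-independent it also gives $\partial_{\theta}\sigma_{RB}^{\theta}=\mathcal{E}(\partial_{\theta}(\rho_{RA}^{\theta}\otimes\Gamma_{SB}^{\mathcal{N}^{\theta}}))$.

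First I would exhibit a feasible point of the state SDP for $\sigma_{RB}^{\theta}$. The two support conditions imply $\operatorname{supp}(\partial_{\theta}(\rho_{RA}^{\theta}\otimes\Gamma_{SB}^{\mathcal{N}^{\theta}}))\subseteq\operatorname{supp}(\rho_{RA}^{\theta}\otimes\Gamma_{SB}^{\mathcal{N}^{\theta}})$, so Lemma~\ref{lem:min-XYinvX} guarantees that $M_{RASB}:=[\partial_{\theta}(\rho_{RA}^{\theta}\otimes\Gamma_{SB}^{\mathcal{N}^{\theta}})]\,(\rho_{RA}^{\theta}\otimes\Gamma_{SB}^{\mathcal{N}^{\theta}})^{-1}\,[\partial_{\theta}(\rho_{RA}^{\theta}\otimes\Gamma_{SB}^{\mathcal{N}^{\theta}})]$ is well defined and that the $2\times2$ block operator with diagonal blocks $M_{RASB}$ and $\rho_{RA}^{\theta}\otimes\Gamma_{SB}^{\mathcal{N}^{\theta}}$ and off-diagonal blocks $\partial_{\theta}(\rho_{RA}^{\theta}\otimes\Gamma_{SB}^{\mathcal{N}^{\theta}})$ is positive semi-definite. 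Applying $\mathcal{E}$ blockwise (a completely positive map preserves positivity of such block operators) and using the two displayed identities above yields
\[
\begin{bmatrix}\mathcal{E}(M_{RASB}) & \partial_{\theta}\sigma_{RB}^{\theta}\\ \partial_{\theta}\sigma_{RB}^{\theta} & \sigma_{RB}^{\theta}\end{bmatrix}\geq0 .
\]
Hence $\mathcal{E}(M_{RASB})$ is feasible for the state SDP for $\sigma_{RB}^{\theta}$ (so in particular $\operatorname{supp}(\partial_{\theta}\sigma_{RB}^{\theta})\subseteq\operatorname{supp}(\sigma_{RB}^{\theta})$ holds automatically), and therefore $\widehat{I}_{F}(\theta;\{\sigma_{RB}^{\theta}\}_{\theta})\leq\operatorname{Tr}[\mathcal{E}(M_{RASB})]$.

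It then remains to bound $\operatorname{Tr}[\mathcal{E}(M_{RASB})]$. Using $\partial_{\theta}(\rho_{RA}^{\theta}\otimes\Gamma_{SB}^{\mathcal{N}^{\theta}})=(\partial_{\theta}\rho_{RA}^{\theta})\otimes\Gamma_{SB}^{\mathcal{N}^{\theta}}+\rho_{RA}^{\theta}\otimes(\partial_{\theta}\Gamma_{SB}^{\mathcal{N}^{\theta}})$ and the support conditions to simplify the cross terms, one finds
\[
M_{RASB}=M_{RA}^{\star}\otimes\Gamma_{SB}^{\mathcal{N}^{\theta}}+2\,(\partial_{\theta}\rho_{RA}^{\theta})\otimes(\partial_{\theta}\Gamma_{SB}^{\mathcal{N}^{\theta}})+\rho_{RA}^{\theta}\otimes N_{SB}^{\star},
\]
where $M_{RA}^{\star}:=(\partial_{\theta}\rho_{RA}^{\theta})(\rho_{RA}^{\theta})^{-1}(\partial_{\theta}\rho_{RA}^{\theta})$ and $N_{SB}^{\star}:=(\partial_{\theta}\Gamma_{SB}^{\mathcal{N}^{\theta}})(\Gamma_{SB}^{\mathcal{N}^{\theta}})^{-1}(\partial_{\theta}\Gamma_{SB}^{\mathcal{N}^{\theta}})\geq0$. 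By \eqref{eq:PS-TP-identity}, $\mathcal{E}(M_{RA}^{\star}\otimes\Gamma_{SB}^{\mathcal{N}^{\theta}})=\mathcal{N}_{A\rightarrow B}^{\theta}(M_{RA}^{\star})$, whose trace equals $\operatorname{Tr}[M_{RA}^{\star}]=\widehat{I}_{F}(\theta;\{\rho_{RA}^{\theta}\}_{\theta})$ (Definition~\ref{def:RLD-Fish-info-states}). Differentiating \eqref{eq:PS-TP-identity} in the Choi operator shows $\mathcal{E}((\partial_{\theta}\rho_{RA}^{\theta})\otimes(\partial_{\theta}\Gamma_{SB}^{\mathcal{N}^{\theta}}))=(\partial_{\theta}\mathcal{N}_{A\rightarrow B}^{\theta})(\partial_{\theta}\rho_{RA}^{\theta})$, which is traceless because $\partial_{\theta}\mathcal{N}_{A\rightarrow B}^{\theta}$ is trace-annihilating (each $\mathcal{N}^{\theta}$ is trace preserving). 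Finally, a transpose-trick computation with \eqref{eq:transpose-trick} and \eqref{eq:max-ent-partial-trace} gives $\operatorname{Tr}[\mathcal{E}(\rho_{RA}^{\theta}\otimes N_{SB}^{\star})]=\operatorname{Tr}[\rho_{A}^{\theta}\,\widetilde{N}^{T}]$ with $\rho_{A}^{\theta}:=\operatorname{Tr}_{R}[\rho_{RA}^{\theta}]$ and $\widetilde{N}:=\operatorname{Tr}_{B}[N_{SB}^{\star}]\geq0$; since $\rho_{A}^{\theta}\geq0$ has unit trace and $\widetilde{N}^{T}\geq0$, this is at most $\|\widetilde{N}^{T}\|_{\infty}=\|\widetilde{N}\|_{\infty}=\widehat{I}_{F}(\theta;\{\mathcal{N}_{A\rightarrow B}^{\theta}\}_{\theta})$ by the explicit formula of Proposition~\ref{prop:geo-fish-explicit-formula}. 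Summing the three contributions gives $\operatorname{Tr}[\mathcal{E}(M_{RASB})]\leq\widehat{I}_{F}(\theta;\{\rho_{RA}^{\theta}\}_{\theta})+\widehat{I}_{F}(\theta;\{\mathcal{N}_{A\rightarrow B}^{\theta}\}_{\theta})$, which is the desired chain rule.

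The step I expect to be the main obstacle is precisely this last one. Because the RLD Fisher information of channels in \eqref{eq:RLD-Fish-ch} is an operator norm rather than a trace, one cannot simply combine data processing under $\mathcal{E}$ with additivity of the state RLD Fisher information over the factorization $\{\rho_{RA}^{\theta}\otimes\Gamma_{SB}^{\mathcal{N}^{\theta}}\}_{\theta}$: that would yield $\operatorname{Tr}[\widetilde{N}]$, which can be far larger than $\|\widetilde{N}\|_{\infty}$ (and in fact $\mathcal{E}$ is not even trace preserving on arbitrary inputs). What rescues the argument is that the input marginal $\rho_{A}^{\theta}$ enters only through the ``steered'' pairing $\operatorname{Tr}[\rho_{A}^{\theta}\widetilde{N}^{T}]$, which is bounded by $\|\widetilde{N}\|_{\infty}$ uniformly over inputs; keeping the transpose and complex-conjugation conventions straight in that transpose-trick computation is the one bookkeeping point that needs care.
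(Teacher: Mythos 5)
Your proof is correct and follows essentially the same route as the paper's: the post-selected teleportation identity, an upper bound on $\widehat{I}_{F}$ of the output via the sandwich $\langle\Gamma|_{AS}(\cdot)|\Gamma\rangle_{AS}$, the same three-term expansion of $M_{RASB}$ using the support conditions, and the same trace/operator-norm bounding of each term. The only cosmetic difference is that your SDP-feasibility step (block positivity preserved under the completely positive map $\mathcal{E}$, then Lemma~\ref{lem:min-XYinvX}) is an inline re-derivation of the transformer inequality, Lemma~\ref{lem:transformer-ineq-basic}, which the paper invokes directly.
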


\begin{proof}
If the finiteness conditions in
\eqref{eq:support-condition-RLD}\ and\ \eqref{eq:finiteness-condition-RLD-fish-ch}
do not hold, then the inequality is trivially satisfied. So let us suppose
that the finiteness conditions
\eqref{eq:support-condition-RLD}\ and\ \eqref{eq:finiteness-condition-RLD-fish-ch}
hold. Recall the following post-selected teleportation identity from
\eqref{eq:PS-TP-identity}:
\begin{equation}
\mathcal{N}_{A\rightarrow B}^{\theta}(\rho_{RA}^{\theta})=\langle\Gamma
|_{AS}\rho_{RA}^{\theta}\otimes\Gamma_{SB}^{\mathcal{N}^{\theta}}
|\Gamma\rangle_{AS}. \label{eq:TP-identity}
\end{equation}
Then we can write
\begin{align}
&  \widehat{I}_{F}(\theta;\{\mathcal{N}_{A\rightarrow B}^{\theta}(\rho
_{RA}^{\theta})\}_{\theta})\nonumber\\
&  =\operatorname{Tr}[(\partial_{\theta}\mathcal{N}_{A\rightarrow B}^{\theta
}(\rho_{RA}^{\theta}))^{2}(\mathcal{N}_{A\rightarrow B}^{\theta}(\rho
_{RA}^{\theta}))^{-1}]\\
&  =\operatorname{Tr}[(\partial_{\theta}(\langle\Gamma|_{AS}\rho_{RA}^{\theta
}\otimes\Gamma_{SB}^{\mathcal{N}^{\theta}}|\Gamma\rangle_{AS}))^{2}
(\langle\Gamma|_{AS}\rho_{RA}^{\theta}\otimes\Gamma_{SB}^{\mathcal{N}^{\theta
}}|\Gamma\rangle_{AS})^{-1}]\\
&  =\operatorname{Tr}[((\langle\Gamma|_{AS}\partial_{\theta}(\rho_{RA}
^{\theta}\otimes\Gamma_{SB}^{\mathcal{N}^{\theta}})|\Gamma\rangle_{AS}
))^{2}(\langle\Gamma|_{AS}\rho_{RA}^{\theta}\otimes\Gamma_{SB}^{\mathcal{N}
^{\theta}}|\Gamma\rangle_{AS})^{-1}]\\
&  \leq\operatorname{Tr}[\langle\Gamma|_{AS}(\partial_{\theta}(\rho
_{RA}^{\theta}\otimes\Gamma_{SB}^{\mathcal{N}^{\theta}}))(\rho_{RA}^{\theta
}\otimes\Gamma_{SB}^{\mathcal{N}^{\theta}})^{-1}(\partial_{\theta}(\rho
_{RA}^{\theta}\otimes\Gamma_{SB}^{\mathcal{N}^{\theta}}))|\Gamma\rangle
_{AS}]\\
&  =\operatorname{Tr}_{RB}[\langle\Gamma|_{AS}(\partial_{\theta}(\rho
_{RA}^{\theta}\otimes\Gamma_{SB}^{\mathcal{N}^{\theta}}))(\rho_{RA}^{\theta
}\otimes\Gamma_{SB}^{\mathcal{N}^{\theta}})^{-1}(\partial_{\theta}(\rho
_{RA}^{\theta}\otimes\Gamma_{SB}^{\mathcal{N}^{\theta}}))|\Gamma\rangle
_{AS}]\\
&  =\langle\Gamma|_{AS}\operatorname{Tr}_{RB}[(\partial_{\theta}(\rho
_{RA}^{\theta}\otimes\Gamma_{SB}^{\mathcal{N}^{\theta}}))(\rho_{RA}^{\theta
}\otimes\Gamma_{SB}^{\mathcal{N}^{\theta}})^{-1}(\partial_{\theta}(\rho
_{RA}^{\theta}\otimes\Gamma_{SB}^{\mathcal{N}^{\theta}}))]|\Gamma\rangle_{AS}.
\label{eq:proof-chain-rule-final-line}
\end{align}
The second equality follows from applying \eqref{eq:TP-identity}, and the
inequality is a consequence of the transformer inequality in
Lemma~\ref{lem:transformer-ineq-basic}, with
\begin{align}
L  &  =\langle\Gamma|_{AS}\otimes I_{RB},\\
X  &  =\partial_{\theta}(\rho_{RA}^{\theta}\otimes\Gamma_{SB}^{\mathcal{N}
^{\theta}}),\\
Y  &  =\rho_{RA}^{\theta}\otimes\Gamma_{SB}^{\mathcal{N}^{\theta}}.
\end{align}
Now consider that
\[
\partial_{\theta}(\rho_{RA}^{\theta}\otimes\Gamma_{SB}^{\mathcal{N}^{\theta}
})=(\partial_{\theta}\rho_{RA}^{\theta})\otimes\Gamma_{SB}^{\mathcal{N}
^{\theta}}+\rho_{RA}^{\theta}\otimes(\partial_{\theta}\Gamma_{SB}
^{\mathcal{N}^{\theta}}).
\]
Right multiplying this by $(\rho_{RA}^{\theta}\otimes\Gamma_{SB}
^{\mathcal{N}^{\theta}})^{-1}$ gives
\begin{align}
&  (\partial_{\theta}(\rho_{RA}^{\theta}\otimes\Gamma_{SB}^{\mathcal{N}
^{\theta}}))(\rho_{RA}^{\theta}\otimes\Gamma_{SB}^{\mathcal{N}^{\theta}}
)^{-1}\nonumber\\
&  =(\partial_{\theta}\rho_{RA}^{\theta})(\rho_{RA}^{\theta})^{-1}
\otimes\Gamma_{SB}^{\mathcal{N}^{\theta}}(\Gamma_{SB}^{\mathcal{N}^{\theta}
})^{-1}+\rho_{RA}^{\theta}(\rho_{RA}^{\theta})^{-1}\otimes(\partial_{\theta
}\Gamma_{SB}^{\mathcal{N}^{\theta}})(\Gamma_{SB}^{\mathcal{N}^{\theta}}
)^{-1}\\
&  =(\partial_{\theta}\rho_{RA}^{\theta})(\rho_{RA}^{\theta})^{-1}\otimes
\Pi_{\Gamma^{\mathcal{N}^{\theta}}}+\Pi_{\rho_{RA}^{\theta}}\otimes
(\partial_{\theta}\Gamma_{SB}^{\mathcal{N}^{\theta}})(\Gamma_{SB}
^{\mathcal{N}^{\theta}})^{-1}.
\end{align}
Right multiplying the last line by $(\partial_{\theta}(\rho_{RA}^{\theta
}\otimes\Gamma_{SB}^{\mathcal{N}^{\theta}}))$ gives
\begin{align}
&  \left[  (\partial_{\theta}\rho_{RA}^{\theta})(\rho_{RA}^{\theta}
)^{-1}\otimes\Pi_{\Gamma^{\mathcal{N}^{\theta}}}+\Pi_{\rho_{RA}^{\theta}
}\otimes(\partial_{\theta}\Gamma_{SB}^{\mathcal{N}^{\theta}})(\Gamma
_{SB}^{\mathcal{N}^{\theta}})^{-1}\right]  (\partial_{\theta}(\rho
_{RA}^{\theta}\otimes\Gamma_{SB}^{\mathcal{N}^{\theta}}))\nonumber\\
&  =\left[  (\partial_{\theta}\rho_{RA}^{\theta})(\rho_{RA}^{\theta}
)^{-1}\otimes\Pi_{\Gamma^{\mathcal{N}^{\theta}}}+\Pi_{\rho_{RA}^{\theta}
}\otimes(\partial_{\theta}\Gamma_{SB}^{\mathcal{N}^{\theta}})(\Gamma
_{SB}^{\mathcal{N}^{\theta}})^{-1}\right] \nonumber\\
&  \qquad\times\left[  (\partial_{\theta}\rho_{RA}^{\theta})\otimes\Gamma
_{SB}^{\mathcal{N}^{\theta}}+\rho_{RA}^{\theta}\otimes(\partial_{\theta}
\Gamma_{SB}^{\mathcal{N}^{\theta}})\right] \\
&  =(\partial_{\theta}\rho_{RA}^{\theta})(\rho_{RA}^{\theta})^{-1}
(\partial_{\theta}\rho_{RA}^{\theta})\otimes\Gamma_{SB}^{\mathcal{N}^{\theta}
}+(\partial_{\theta}\rho_{RA}^{\theta})(\rho_{RA}^{\theta})^{-1}\rho
_{RA}^{\theta}\otimes\Pi_{\Gamma^{\mathcal{N}^{\theta}}}(\partial_{\theta
}\Gamma_{SB}^{\mathcal{N}^{\theta}})\nonumber\\
&  \qquad+\Pi_{\rho_{RA}^{\theta}}(\partial_{\theta}\rho_{RA}^{\theta}
)\otimes(\partial_{\theta}\Gamma_{SB}^{\mathcal{N}^{\theta}})(\Gamma
_{SB}^{\mathcal{N}^{\theta}})^{-1}\Gamma_{SB}^{\mathcal{N}^{\theta}}+\rho
_{RA}^{\theta}\otimes(\partial_{\theta}\Gamma_{SB}^{\mathcal{N}^{\theta}
})(\Gamma_{SB}^{\mathcal{N}^{\theta}})^{-1}(\partial_{\theta}\Gamma
_{SB}^{\mathcal{N}^{\theta}})\nonumber\\
&  =(\partial_{\theta}\rho_{RA}^{\theta})(\rho_{RA}^{\theta})^{-1}
(\partial_{\theta}\rho_{RA}^{\theta})\otimes\Gamma_{SB}^{\mathcal{N}^{\theta}
}+(\partial_{\theta}\rho_{RA}^{\theta})\Pi_{\rho_{RA}^{\theta}}\otimes
\Pi_{\Gamma^{\mathcal{N}^{\theta}}}(\partial_{\theta}\Gamma_{SB}
^{\mathcal{N}^{\theta}})\nonumber\\
&  \qquad+\Pi_{\rho_{RA}^{\theta}}(\partial_{\theta}\rho_{RA}^{\theta}
)\otimes(\partial_{\theta}\Gamma_{SB}^{\mathcal{N}^{\theta}})\Pi
_{\Gamma^{\mathcal{N}^{\theta}}}+\rho_{RA}^{\theta}\otimes(\partial_{\theta
}\Gamma_{SB}^{\mathcal{N}^{\theta}})(\Gamma_{SB}^{\mathcal{N}^{\theta}}
)^{-1}(\partial_{\theta}\Gamma_{SB}^{\mathcal{N}^{\theta}}).
\end{align}
Since the finiteness conditions $\Pi_{\rho_{RA}^{\theta}}^{\perp}
(\partial_{\theta}\rho_{RA}^{\theta})=(\partial_{\theta}\rho_{RA}^{\theta}
)\Pi_{\rho_{RA}^{\theta}}^{\perp}=0$ and $\Pi_{\Gamma^{\mathcal{N}^{\theta}}
}^{\perp}(\partial_{\theta}\Gamma_{SB}^{\mathcal{N}^{\theta}})=(\partial
_{\theta}\Gamma_{SB}^{\mathcal{N}^{\theta}})\Pi_{\Gamma^{\mathcal{N}^{\theta}
}}^{\perp}=0$ hold, we can \textquotedblleft add in\textquotedblright\ extra
zero terms to the two middle terms above to conclude that
\begin{multline}
(\partial_{\theta}(\rho_{RA}^{\theta}\otimes\Gamma_{SB}^{\mathcal{N}^{\theta}
}))(\rho_{RA}^{\theta}\otimes\Gamma_{SB}^{\mathcal{N}^{\theta}})^{-1}
(\partial_{\theta}(\rho_{RA}^{\theta}\otimes\Gamma_{SB}^{\mathcal{N}^{\theta}
}))=(\partial_{\theta}\rho_{RA}^{\theta})(\rho_{RA}^{\theta})^{-1}
(\partial_{\theta}\rho_{RA}^{\theta})\otimes\Gamma_{SB}^{\mathcal{N}^{\theta}
}\\
+2(\partial_{\theta}\rho_{RA}^{\theta})\otimes(\partial_{\theta}\Gamma
_{SB}^{\mathcal{N}^{\theta}})+\rho_{RA}^{\theta}\otimes(\partial_{\theta
}\Gamma_{SB}^{\mathcal{N}^{\theta}})(\Gamma_{SB}^{\mathcal{N}^{\theta}}
)^{-1}(\partial_{\theta}\Gamma_{SB}^{\mathcal{N}^{\theta}}).
\end{multline}
Now taking the partial trace over $RB$, we find the following for each term:
\begin{align}
\operatorname{Tr}_{RB}[(\partial_{\theta}\rho_{RA}^{\theta})(\rho_{RA}
^{\theta})^{-1}(\partial_{\theta}\rho_{RA}^{\theta})\otimes\Gamma
_{SB}^{\mathcal{N}^{\theta}}]  &  =\operatorname{Tr}_{R}[(\partial_{\theta
}\rho_{RA}^{\theta})(\rho_{RA}^{\theta})^{-1}(\partial_{\theta}\rho
_{RA}^{\theta})]\otimes I_{S},\\
\operatorname{Tr}_{RB}[2(\partial_{\theta}\rho_{RA}^{\theta})\otimes
(\partial_{\theta}\Gamma_{SB}^{\mathcal{N}^{\theta}})]  &  =2\operatorname{Tr}
_{R}[(\partial_{\theta}\rho_{RA}^{\theta})]\otimes\operatorname{Tr}
_{B}[(\partial_{\theta}\Gamma_{SB}^{\mathcal{N}^{\theta}})]\\
&  =2\operatorname{Tr}_{R}[(\partial_{\theta}\rho_{RA}^{\theta})]\otimes
(\partial_{\theta}\operatorname{Tr}_{B}[\Gamma_{SB}^{\mathcal{N}^{\theta}}])\\
&  =2\operatorname{Tr}_{R}[(\partial_{\theta}\rho_{RA}^{\theta})]\otimes
(\partial_{\theta}(I_{S})])\\
&  =0,\\
\operatorname{Tr}_{RB}[\rho_{RA}^{\theta}\otimes(\partial_{\theta}\Gamma
_{SB}^{\mathcal{N}^{\theta}})(\Gamma_{SB}^{\mathcal{N}^{\theta}}
)^{-1}(\partial_{\theta}\Gamma_{SB}^{\mathcal{N}^{\theta}})]  &  =\rho
_{A}^{\theta}\otimes\operatorname{Tr}_{B}[(\partial_{\theta}\Gamma
_{SB}^{\mathcal{N}^{\theta}})(\Gamma_{SB}^{\mathcal{N}^{\theta}}
)^{-1}(\partial_{\theta}\Gamma_{SB}^{\mathcal{N}^{\theta}})].
\end{align}
Now applying the sandwich $\langle\Gamma|_{AS}(\cdot)|\Gamma\rangle_{AS}$, the
first and last terms become as follows:
\begin{align}
&  \langle\Gamma|_{AS}\operatorname{Tr}_{R}[(\partial_{\theta}\rho
_{RA}^{\theta})(\rho_{RA}^{\theta})^{-1}(\partial_{\theta}\rho_{RA}^{\theta
})]\otimes I_{S}|\Gamma\rangle_{AS}\nonumber\\
&  =\operatorname{Tr}[(\partial_{\theta}\rho_{RA}^{\theta})(\rho_{RA}^{\theta
})^{-1}(\partial_{\theta}\rho_{RA}^{\theta})]\\
&  =\operatorname{Tr}[(\partial_{\theta}\rho_{RA}^{\theta})^{2}(\rho
_{RA}^{\theta})^{-1}],
\end{align}
and
\begin{multline}
\langle\Gamma|_{AS}\rho_{A}^{\theta}\otimes\operatorname{Tr}_{B}
[(\partial_{\theta}\Gamma_{SB}^{\mathcal{N}^{\theta}})(\Gamma_{SB}
^{\mathcal{N}^{\theta}})^{-1}(\partial_{\theta}\Gamma_{SB}^{\mathcal{N}
^{\theta}})]|\Gamma\rangle_{AS}\\
=\operatorname{Tr}[(\rho_{S}^{\theta})^{T}\operatorname{Tr}_{B}[(\partial
_{\theta}\Gamma_{SB}^{\mathcal{N}^{\theta}})(\Gamma_{SB}^{\mathcal{N}^{\theta
}})^{-1}(\partial_{\theta}\Gamma_{SB}^{\mathcal{N}^{\theta}})]].
\end{multline}
Plugging back into \eqref{eq:proof-chain-rule-final-line}, we find that
\begin{align}
&  \langle\Gamma|_{AS}\operatorname{Tr}_{RB}[(\partial_{\theta}(\rho
_{RA}^{\theta}\otimes\Gamma_{SB}^{\mathcal{N}^{\theta}}))(\rho_{RA}^{\theta
}\otimes\Gamma_{SB}^{\mathcal{N}^{\theta}})^{-1}(\partial_{\theta}(\rho
_{RA}^{\theta}\otimes\Gamma_{SB}^{\mathcal{N}^{\theta}}))]|\Gamma\rangle
_{AS}\nonumber\\
&  =\operatorname{Tr}[(\partial_{\theta}\rho_{RA}^{\theta})^{2}(\rho
_{RA}^{\theta})^{-1}]+\operatorname{Tr}[(\rho_{S}^{\theta})^{T}
\operatorname{Tr}_{B}[(\partial_{\theta}\Gamma_{SB}^{\mathcal{N}^{\theta}
})(\Gamma_{SB}^{\mathcal{N}^{\theta}})^{-1}(\partial_{\theta}\Gamma
_{SB}^{\mathcal{N}^{\theta}})]]\\
&  \leq\operatorname{Tr}[(\partial_{\theta}\rho_{RA}^{\theta})^{2}(\rho
_{RA}^{\theta})^{-1}]+\left\Vert \operatorname{Tr}_{B}[(\partial_{\theta
}\Gamma_{SB}^{\mathcal{N}^{\theta}})(\Gamma_{SB}^{\mathcal{N}^{\theta}}
)^{-1}(\partial_{\theta}\Gamma_{SB}^{\mathcal{N}^{\theta}})]\right\Vert
_{\infty}\\
&  =\widehat{I}_{F}(\theta;\{\rho_{RA}^{\theta}\}_{\theta})+\widehat{I}
_{F}(\theta;\{\mathcal{N}_{A\rightarrow B}^{\theta}\}_{\theta}).
\end{align}
This concludes the proof.
\end{proof}

\begin{corollary}
\label{cor:amort-collapse-RLD-fish}Let $\{\mathcal{N}_{A\rightarrow B}
^{\theta}\}_{\theta}$ be a differentiable family of quantum channels. Then
amortization does not increase the RLD\ Fisher information of quantum
channels, in the following sense:
\begin{equation}
\widehat{I}_{F}^{\mathcal{A}}(\theta;\{\mathcal{N}_{A\rightarrow B}^{\theta
}\}_{\theta})=\widehat{I}_{F}(\theta;\{\mathcal{N}_{A\rightarrow B}^{\theta
}\}_{\theta}).
\end{equation}

\end{corollary}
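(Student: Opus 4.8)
The plan is to follow exactly the template used in the proof of Corollary~\ref{cor:amort-collapse-root-SLD}, exploiting the fact that the RLD Fisher information of channels enjoys the same two structural features that drove that argument: it is weakly faithful (by Proposition~\ref{prop:faithfulness-SLD-RLD-Fish}, since a parameter-independent family has $\partial_\theta\rho_\theta=0$ and hence vanishing RLD Fisher information), and it obeys the chain rule of Proposition~\ref{prop:chain-rule-RLD}. The corollary is therefore essentially a corollary of that chain rule, with no substantial new work required.

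First I would establish the inequality $\widehat{I}_{F}^{\mathcal{A}}(\theta;\{\mathcal{N}_{A\rightarrow B}^{\theta}\}_{\theta})\geq\widehat{I}_{F}(\theta;\{\mathcal{N}_{A\rightarrow B}^{\theta}\}_{\theta})$ by a direct appeal to Proposition~\ref{prop:amort->=-ch-Fish-gen}, whose hypothesis (weak faithfulness) is met. For the reverse inequality, I would take an arbitrary differentiable input state family $\{\rho_{RA}^{\theta}\}_{\theta}$ with $R$ of unbounded dimension and apply Proposition~\ref{prop:chain-rule-RLD} to obtain
\[
\widehat{I}_{F}(\theta;\{\mathcal{N}_{A\rightarrow B}^{\theta}(\rho_{RA}^{\theta})\}_{\theta})-\widehat{I}_{F}(\theta;\{\rho_{RA}^{\theta}\}_{\theta})\leq\widehat{I}_{F}(\theta;\{\mathcal{N}_{A\rightarrow B}^{\theta}\}_{\theta}).
\]
Since the right-hand side does not depend on the chosen family, taking the supremum over all such families on the left-hand side gives $\widehat{I}_{F}^{\mathcal{A}}(\theta;\{\mathcal{N}_{A\rightarrow B}^{\theta}\}_{\theta})\leq\widehat{I}_{F}(\theta;\{\mathcal{N}_{A\rightarrow B}^{\theta}\}_{\theta})$, and combining the two bounds finishes the argument.

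There is no real obstacle here; the only points that need a word of care are bookkeeping. If the support condition in \eqref{eq:finiteness-condition-RLD-fish-ch} fails, then $\widehat{I}_{F}(\theta;\{\mathcal{N}_{A\rightarrow B}^{\theta}\}_{\theta})=+\infty$ and, by the $\geq$ direction just noted, $\widehat{I}_{F}^{\mathcal{A}}$ is also $+\infty$, so the equality is trivial; hence one may assume the support condition holds. Likewise, input families with $\widehat{I}_{F}(\theta;\{\rho_{RA}^{\theta}\}_{\theta})=+\infty$ contribute either $-\infty$ or an indeterminate $\infty-\infty$ to the supremum defining the amortized quantity and may be discarded, so it suffices to run the chain-rule bound over families with finite RLD Fisher information. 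The substantive content of the collapse lives entirely in Proposition~\ref{prop:chain-rule-RLD}, whose proof (via the post-selected teleportation identity and the transformer inequality) is where the actual effort was spent.
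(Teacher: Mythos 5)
Your proposal is correct and follows essentially the same route as the paper's proof: the lower bound via Proposition~\ref{prop:amort->=-ch-Fish-gen} together with faithfulness, the upper bound by applying the chain rule of Proposition~\ref{prop:chain-rule-RLD} to an arbitrary input family and taking the supremum, and the trivial handling of the case where the finiteness condition in \eqref{eq:finiteness-condition-RLD-fish-ch} fails. No gaps.
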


\begin{proof}
If the finiteness condition in \eqref{eq:finiteness-condition-RLD-fish-ch}
does not hold, then the equality trivially holds. So let us suppose that the
finiteness condition in \eqref{eq:finiteness-condition-RLD-fish-ch} holds. The
inequality $\geq$ follows from Proposition~\ref{prop:amort->=-ch-Fish-gen} and
the fact that the RLD\ Fisher information is faithful (see
\eqref{eq:SLD-RLD-Fish-faithful}). The opposite inequality $\leq$ is a
consequence of the chain rule from Proposition~\ref{prop:chain-rule-RLD}. Let
$\{\rho_{RA}^{\theta}\}_{\theta}$ be a family of quantum states on systems
$RA$. Then it follows from Proposition~\ref{prop:chain-rule-RLD} that
\begin{equation}
\widehat{I}_{F}(\theta;\{\mathcal{N}_{A\rightarrow B}^{\theta}(\rho
_{RA}^{\theta})\}_{\theta})-\widehat{I}_{F}(\theta;\{\rho_{RA}^{\theta
}\}_{\theta})\leq\widehat{I}_{F}(\theta;\{\mathcal{N}_{A\rightarrow B}
^{\theta}\}_{\theta}).
\end{equation}
Since the family $\{\rho_{RA}^{\theta}\}_{\theta}$ is arbitrary, we can take a
supremum over the left-hand side over all such families, and conclude that
\begin{equation}
\widehat{I}_{F}^{\mathcal{A}}(\theta;\{\mathcal{N}_{A\rightarrow B}^{\theta
}\}_{\theta})\leq\widehat{I}_{F}(\theta;\{\mathcal{N}_{A\rightarrow B}
^{\theta}\}_{\theta}).
\end{equation}
This concludes the proof.
\end{proof}

\begin{corollary}
\label{cor:subadd-serial-concat-RLD-Fish}Let $\{\mathcal{N}_{A\rightarrow
B}^{\theta}\}_{\theta}$ and $\{\mathcal{M}_{B\rightarrow C}^{\theta}
\}_{\theta}$ be differentiable families of quantum channels. Then the
RLD\ Fisher information of quantum channels is subadditive with respect to
serial composition, in the following sense:
\begin{equation}
\widehat{I}_{F}(\theta;\{\mathcal{M}_{B\rightarrow C}^{\theta}\circ
\mathcal{N}_{A\rightarrow B}^{\theta}\}_{\theta})\leq\widehat{I}_{F}
(\theta;\{\mathcal{N}_{A\rightarrow B}^{\theta}\}_{\theta})+\widehat{I}
_{F}(\theta;\{\mathcal{M}_{B\rightarrow C}^{\theta}\}_{\theta}).
\label{eq:serial-composition-RLD-ch}
\end{equation}

\end{corollary}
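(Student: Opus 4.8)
The plan is to mirror the proof of Corollary~\ref{cor:subadd-serial-concat-root-SLD-Fish}, using the RLD chain rule of Proposition~\ref{prop:chain-rule-RLD} in place of the root SLD chain rule. First I would dispose of the degenerate case: if the finiteness condition in \eqref{eq:finiteness-condition-RLD-fish-ch} fails for either $\{\mathcal{N}_{A\rightarrow B}^{\theta}\}_{\theta}$ or $\{\mathcal{M}_{B\rightarrow C}^{\theta}\}_{\theta}$, then the right-hand side of \eqref{eq:serial-composition-RLD-ch} is $+\infty$ and the inequality holds trivially. So assume henceforth that both channel families satisfy the relevant finiteness condition.

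Next I would fix an arbitrary input state $\omega_{RA}$ (by Remark~\ref{rem:restrict-to-pure-bipartite} it suffices to take pure $\omega_{RA}$ with $R\cong A$, but this is not needed). Since $\{\mathcal{N}_{A\rightarrow B}^{\theta}\}_{\theta}$ is a differentiable family of channels, the induced family $\{\mathcal{N}_{A\rightarrow B}^{\theta}(\omega_{RA})\}_{\theta}$ of states on $RB$ is differentiable, so Proposition~\ref{prop:chain-rule-RLD} applies with this state family fed into $\{\mathcal{M}_{B\rightarrow C}^{\theta}\}_{\theta}$, giving
\[
\widehat{I}_{F}(\theta;\{\mathcal{M}_{B\rightarrow C}^{\theta}(\mathcal{N}_{A\rightarrow B}^{\theta}(\omega_{RA}))\}_{\theta})\leq\widehat{I}_{F}(\theta;\{\mathcal{M}_{B\rightarrow C}^{\theta}\}_{\theta})+\widehat{I}_{F}(\theta;\{\mathcal{N}_{A\rightarrow B}^{\theta}(\omega_{RA})\}_{\theta}).
\]
I would then bound the last term using the definition \eqref{eq:def-RLD-channel}, namely $\widehat{I}_{F}(\theta;\{\mathcal{N}_{A\rightarrow B}^{\theta}(\omega_{RA})\}_{\theta})\leq\sup_{\rho_{RA}}\widehat{I}_{F}(\theta;\{\mathcal{N}_{A\rightarrow B}^{\theta}(\rho_{RA})\}_{\theta})=\widehat{I}_{F}(\theta;\{\mathcal{N}_{A\rightarrow B}^{\theta}\}_{\theta})$, to obtain an inequality whose right-hand side no longer depends on $\omega_{RA}$.

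Finally, since $\mathcal{M}_{B\rightarrow C}^{\theta}\circ\mathcal{N}_{A\rightarrow B}^{\theta}$ is itself a differentiable family of channels and the displayed bound holds for every $\omega_{RA}$, I would take the supremum over $\omega_{RA}$ of the left-hand side; by \eqref{eq:def-RLD-channel} this supremum equals $\widehat{I}_{F}(\theta;\{\mathcal{M}_{B\rightarrow C}^{\theta}\circ\mathcal{N}_{A\rightarrow B}^{\theta}\}_{\theta})$, which yields \eqref{eq:serial-composition-RLD-ch}. There is no genuine obstacle here: all the substance lies in the RLD chain rule of Proposition~\ref{prop:chain-rule-RLD}, which is already available, and the only points requiring a small amount of care are the trivial finiteness case and the observation that differentiability of the constituent families is inherited both by $\{\mathcal{N}_{A\rightarrow B}^{\theta}(\omega_{RA})\}_{\theta}$ and by the serial composition $\{\mathcal{M}_{B\rightarrow C}^{\theta}\circ\mathcal{N}_{A\rightarrow B}^{\theta}\}_{\theta}$, so that the chain rule and the channel definition legitimately apply.
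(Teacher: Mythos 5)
Your proposal is correct and follows essentially the same route as the paper's own proof: dispose of the infinite case via the finiteness condition, apply the RLD chain rule of Proposition~\ref{prop:chain-rule-RLD} to the state family $\{\mathcal{N}_{A\rightarrow B}^{\theta}(\omega_{RA})\}_{\theta}$ fed into $\{\mathcal{M}_{B\rightarrow C}^{\theta}\}_{\theta}$, bound the resulting state term by the channel quantity, and take the supremum over inputs. No gaps.
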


\begin{proof}
If the finiteness condition in \eqref{eq:finiteness-condition-RLD-fish-ch}
does not hold for both channels, then the inequality is trivially satisfied.
So let us suppose that the finiteness condition in
\eqref{eq:finiteness-condition-RLD-fish-ch} holds for both channels. Pick an
arbitrary input state $\omega_{RA}$. Now apply
Proposition~\ref{prop:chain-rule-RLD}\ to find that
\begin{align}
&  \widehat{I}_{F}(\theta;\{\mathcal{M}_{B\rightarrow C}^{\theta}
(\mathcal{N}_{A\rightarrow B}^{\theta}(\omega_{RA}))\}_{\theta})\nonumber\\
&  \leq\widehat{I}_{F}(\theta;\{\mathcal{N}_{A\rightarrow B}^{\theta}
(\omega_{RA})\}_{\theta})+\widehat{I}_{F}(\theta;\{\mathcal{M}_{B\rightarrow
C}^{\theta}\}_{\theta})\\
&  \leq\sup_{\omega_{RA}}\widehat{I}_{F}(\theta;\{\mathcal{N}_{A\rightarrow
B}^{\theta}(\omega_{RA})\}_{\theta})+\widehat{I}_{F}(\theta;\{\mathcal{M}
_{B\rightarrow C}^{\theta}\}_{\theta})\\
&  =\widehat{I}_{F}(\theta;\{\mathcal{N}_{A\rightarrow B}^{\theta}\}_{\theta
})+\widehat{I}_{F}(\theta;\{\mathcal{M}_{B\rightarrow C}^{\theta}\}_{\theta}).
\end{align}
Since the inequality holds for all input states, we conclude that
\begin{equation}
\sup_{\omega_{RA}}\widehat{I}_{F}(\theta;\{\mathcal{M}_{B\rightarrow
C}^{\theta}(\mathcal{N}_{A\rightarrow B}^{\theta}(\omega_{RA}))\}_{\theta
})\leq\widehat{I}_{F}(\theta;\{\mathcal{N}_{A\rightarrow B}^{\theta}
\}_{\theta})+\widehat{I}_{F}(\theta;\{\mathcal{M}_{B\rightarrow C}^{\theta
}\}_{\theta}),
\end{equation}
which implies \eqref{eq:serial-composition-RLD-ch}.
\end{proof}

\subsubsection{RLD Fisher information bound for general channel parameter
estimation}

\begin{conclusion}
\label{conc:RLD-bnd}As a direct consequence of the QCRB\ in
\eqref{eq:RLD-QCRB}, the meta-converse from Theorem~\ref{thm:meta-converse},
and the amortization collapse from Corollary~\ref{cor:amort-collapse-RLD-fish}
, we conclude the following bound on the MSE\ of an unbiased estimator
$\hat{\theta}$ for all quantum channel families $\{\mathcal{N}_{A\rightarrow
B}^{\theta}\}_{\theta}$:
\begin{equation}
\operatorname{Var}(\hat{\theta})\geq\frac{1}{n\widehat{I}_{F}(\theta
;\{\mathcal{N}_{A\rightarrow B}^{\theta}\}_{\theta})}. \label{eq:RLD-bnd-ch}
\end{equation}
This bound thus poses a strong limitation on sequential estimation protocols
for all differentiable quantum channel families satisfying the finiteness
condition in \eqref{eq:finiteness-condition-RLD-fish-ch}.
\end{conclusion}

Conclusion~\ref{conc:RLD-bnd} strengthens one of the results of
\cite{Hayashi2011}. There, it was proved that the RLD\ Fisher information of
quantum channels is a limitation for parallel estimation protocols, but
Conclusion~\ref{conc:RLD-bnd} establishes it as a limitation for the more general
sequential estimation protocols.

Conclusion~\ref{conc:RLD-bnd} establishes \eqref{eq:finiteness-condition-RLD-fish-ch} as a sufficient condition for the unattainability of Heisenberg scaling. 
In a recent paper \cite[Theorem 1]{Zhou2020} concurrent to ours, a necessary and sufficient condition for the unattainability of Heisenberg scaling with general sequential estimation protocols has been established.

\subsection{Example: Estimating the parameters of the generalized amplitude
damping channel} \label{subsec:estimating-gadc}

We now apply the bound in \eqref{eq:RLD-bnd-ch} to a\ particular example, the
generalized amplitude damping channel \cite{NC00}. This channel has been
studied previously in the context of quantum estimation theory
\cite{Fujiwara_2003, Fujiwara2004}, where the SLD\ Fisher information of
quantum channels was studied. Our goal now is to compute the RLD\ Fisher
information of this channel with respect to its parameters.

Recall that a generalized amplitude damping channel is defined in terms of its
loss $\gamma\in(0,1)$ and noise $N\in\left(  0,1\right)  $ as
\begin{equation}
\mathcal{A}_{\gamma,N}(\rho):=K_{1}\rho K_{1}^{\dag}+K_{2}\rho K_{2}^{\dag
}+K_{3}\rho K_{3}^{\dag}+K_{4}\rho K_{4}^{\dag}, \label{eq:GADC}
\end{equation}
where
\begin{align}
K_{1}  &  :=\sqrt{1-N}\left(  |0\rangle\!\langle0|+\sqrt{1-\gamma}
|1\rangle\!\langle1|\right)  ,\\
K_{2}  &  :=\sqrt{\gamma\left(  1-N\right)  }|0\rangle\!\langle1|,\\
K_{3}  &  :=\sqrt{N}\left(  \sqrt{1-\gamma}|0\rangle\!\langle0|+|1\rangle
\langle1|\right)  ,\\
K_{4}  &  :=\sqrt{\gamma N}|1\rangle\!\langle0|.
\end{align}
The Choi operator of the channel is then given by
\begin{align}
\Gamma_{RB}^{\mathcal{A}_{\gamma,N}}  &  :=(\operatorname{id}_{R}
\otimes\mathcal{A}_{\gamma,N})(\Gamma_{RA})\\
&  =\left(  1-\gamma N\right)  |00\rangle\!\langle00|+\sqrt{1-\gamma}\left(
|00\rangle\!\langle11|+|11\rangle\!\langle00|\right)  +\gamma N|01\rangle
\langle01|\nonumber\\
&  \qquad+\gamma\left(  1-N\right)  |10\rangle\!\langle10|+\left(
1-\gamma\left(  1-N\right)  \right)  |11\rangle\!\langle11|\\
&  =
\begin{bmatrix}
1-\gamma N & 0 & 0 & \sqrt{1-\gamma}\\
0 & \gamma N & 0 & 0\\
0 & 0 & \gamma\left(  1-N\right)  & 0\\
\sqrt{1-\gamma} & 0 & 0 & 1-\gamma\left(  1-N\right)
\end{bmatrix}
.
\end{align}

\subsubsection{Estimating loss}

Let us apply this approach to the generalized amplitude damping channel, and
in particular, with the goal of finding limits on estimating the loss
parameter $\gamma\in\left(  0,1\right)  $. By direct evaluation, we find that
\begin{equation}
\partial_{\gamma}\Gamma_{RB}^{\mathcal{A}_{\gamma,N}}=
\begin{bmatrix}
-N & 0 & 0 & -\frac{1}{2\sqrt{1-\gamma}}\\
0 & N & 0 & 0\\
0 & 0 & 1-N & 0\\
-\frac{1}{2\sqrt{1-\gamma}} & 0 & 0 & -\left(  1-N\right)
\end{bmatrix}
.
\end{equation}
Then we evaluate the expression in \eqref{eq:RLD-Fish-ch}, which for our case
is as follows:
\begin{equation}
\widehat{I}_{F}(\gamma;\{\mathcal{A}_{\gamma,N}\}_{\gamma})=\left\Vert
\operatorname{Tr}_{B}\left[  \left(  \partial_{\gamma}\Gamma_{RB}
^{\mathcal{A}_{\gamma,N}}\right)  \left(  \Gamma_{RB}^{\mathcal{A}_{\gamma,N}
}\right)  ^{-1}\left(  \partial_{\gamma}\Gamma_{RB}^{\mathcal{A}_{\gamma,N}
}\right)  \right]  \right\Vert _{\infty}.
\end{equation}
Using the fact that
\begin{equation}
\left(  \Gamma_{RB}^{\mathcal{A}_{\gamma,N}}\right)  ^{-1}=
\begin{bmatrix}
\frac{1-\gamma\left(  1-N\right)  }{\left(  1-N\right)  N\gamma^{2}} & 0 & 0 &
\frac{-\sqrt{1-\gamma}}{\left(  1-N\right)  N\gamma^{2}}\\
0 & \frac{1}{\gamma N} & 0 & 0\\
0 & 0 & \frac{1}{\gamma\left(  1-N\right)  } & 0\\
\frac{-\sqrt{1-\gamma}}{\left(  1-N\right)  N\gamma^{2}} & 0 & 0 &
\frac{1-\gamma N}{\left(  1-N\right)  N\gamma^{2}}
\end{bmatrix}
, \label{eq:GADC-inverse}
\end{equation}
we find that
\begin{equation}
\operatorname{Tr}_{B}\left[  \left(  \partial_{\gamma}\Gamma_{RB}
^{\mathcal{A}_{\gamma,N}}\right)  \left(  \Gamma_{RB}^{\mathcal{A}_{\gamma,N}
}\right)  ^{-1}\left(  \partial_{\gamma}\Gamma_{RB}^{\mathcal{A}_{\gamma,N}
}\right)  \right]  =
\begin{bmatrix}
f_{1}(\gamma,N) & 0\\
0 & f_{2}(\gamma,N)
\end{bmatrix}
,
\end{equation}
where
\begin{align}
f_{1}(\gamma,N)  &  :=\frac{\frac{1}{N-\gamma N}+\frac{1}{1-N}-4}{4 \gamma^2},\\
f_{2}(\gamma,N)  &  :=\frac{\frac{1}{(1-\gamma) (1-N)}+\frac{1}{N}-4}{4 \gamma^2}.
\end{align}
Note that if $N\leq1/2$, then $f_{1}(\gamma,N)\geq f_{2}(\gamma,N)$, while if
$N>1/2$, then $f_{1}(\gamma,N)<f_{2}(\gamma,N)$. It then follows that
\begin{equation}
\widehat{I}_{F}(\gamma;\{\mathcal{A}_{\gamma,N}\}_{\gamma})=\left\{
\begin{array}
[c]{cc}
f_{1}(\gamma,N) & N\leq1/2\\
f_{2}(\gamma,N) & N>1/2
\end{array}
\right.  . \label{eq:estimate-gamma-GADC-RLD}
\end{equation}

Thus, it follows from \eqref{eq:RLD-bnd-ch} that the formula in
\eqref{eq:estimate-gamma-GADC-RLD} provides a fundamental limitation on any
protocol that attempts to estimate the loss parameter $\gamma$. For the noise
parameter $N$ equal to $0.2$ and $0.45$, Figure~\ref{fig:estimate-loss}
depicts the logarithm of this bound, as well as the logarithm of the
achievable bound from the SLD\ Fisher information of channels, corresponding
to a parallel strategy that estimates $\gamma$. The RLD\ bound becomes better
as $N$ approaches $1/2$, and we find numerically that the RLD\ and SLD\ bounds
coincide at $N=1/2$.

\begin{figure}[ptb]
\begin{subfigure}{.5\textwidth}
\centering
\includegraphics[width=.9\linewidth]{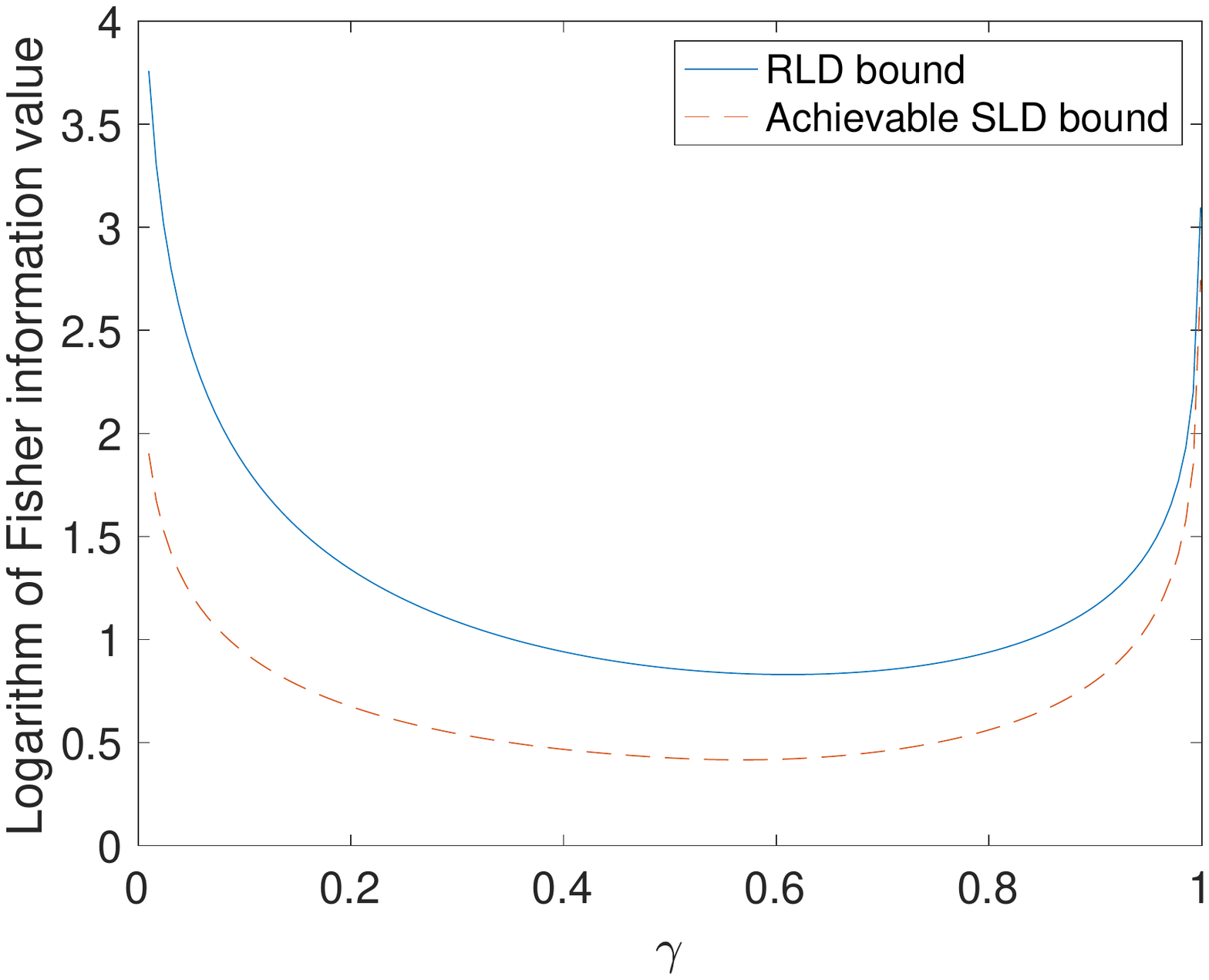}
\caption{}
\label{fig:estimate-loss-N-0.2}
\end{subfigure}
\begin{subfigure}{.5\textwidth}
\centering
\includegraphics[width=.9\linewidth]{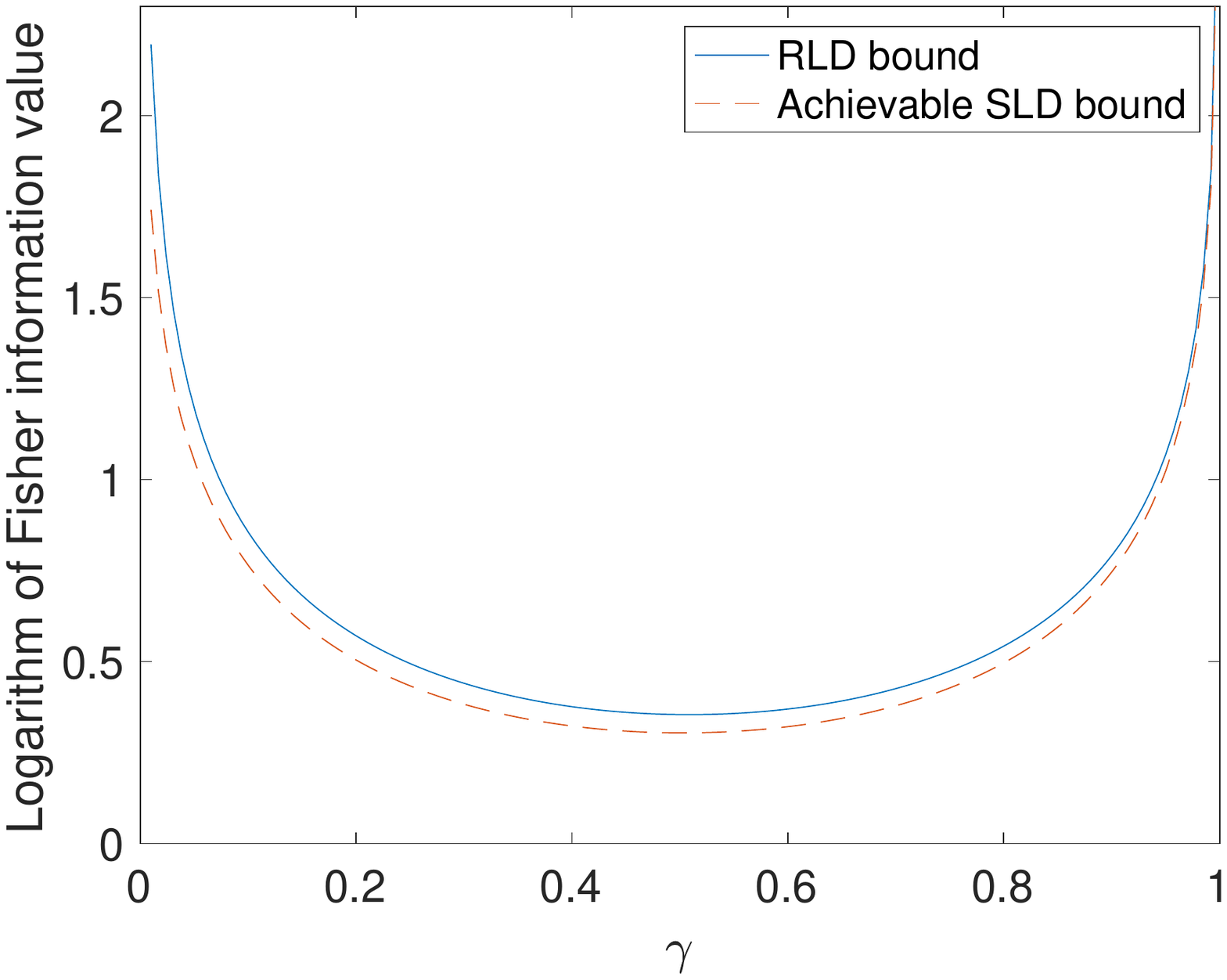}
\caption{}
\label{fig:estimate-loss-N-0.45}
\end{subfigure}
\caption{(a)~Logarithm of RLD bound and achievable SLD bound versus loss
$\gamma$ for noise $N=0.2$, when estimating the loss $\gamma$. (b)~Logarithm
of the RLD bound and achievable SLD bound versus loss $\gamma$ for noise
$N=0.45$, when estimating the loss $\gamma$.} 
\label{fig:estimate-loss} 
\end{figure}

\subsubsection{Estimating noise}

Now suppose that we are interested in estimating the noise parameter $N$\ of a
generalized amplitude damping channel. We find that
\begin{equation}
\partial_{N}\Gamma_{RB}^{\mathcal{A}_{\gamma,N}}=-\gamma\left(  I_{2}
\otimes\sigma_{Z}\right)  .
\end{equation}
Then by exploiting \eqref{eq:GADC-inverse}, we find that
\begin{equation}
\operatorname{Tr}_{B}\left[  \left(  \partial_{N}\Gamma_{RB}^{\mathcal{A}
_{\gamma,N}}\right)  \left(  \Gamma_{RB}^{\mathcal{A}_{\gamma,N}}\right)
^{-1}\left(  \partial_{N}\Gamma_{RB}^{\mathcal{A}_{\gamma,N}}\right)  \right]
=
\begin{bmatrix}
\frac{1}{N(1-N)} & 0\\
0 & \frac{1}{N(1-N)}
\end{bmatrix}
.
\end{equation}
Thus we have
\begin{equation}
\widehat{I}_{F}(N;\{\mathcal{A}_{\gamma,N}\}_{N})=\frac{1}{N(1-N)}.
\end{equation}

For the loss parameter $\gamma$ equal to $0.5$ and $0.8$,
Figure~\ref{fig:estimate-noise} depicts the logarithm of the RLD bound, as
well as the logarithm of the achievable bound from the SLD\ Fisher information
of channels, corresponding to a parallel strategy that estimates $N$. The
RLD\ bound becomes better as $\gamma$ approaches $1$.

\begin{figure}[ptb]
\begin{subfigure}{.5\textwidth}
\centering
\includegraphics[width=\linewidth]{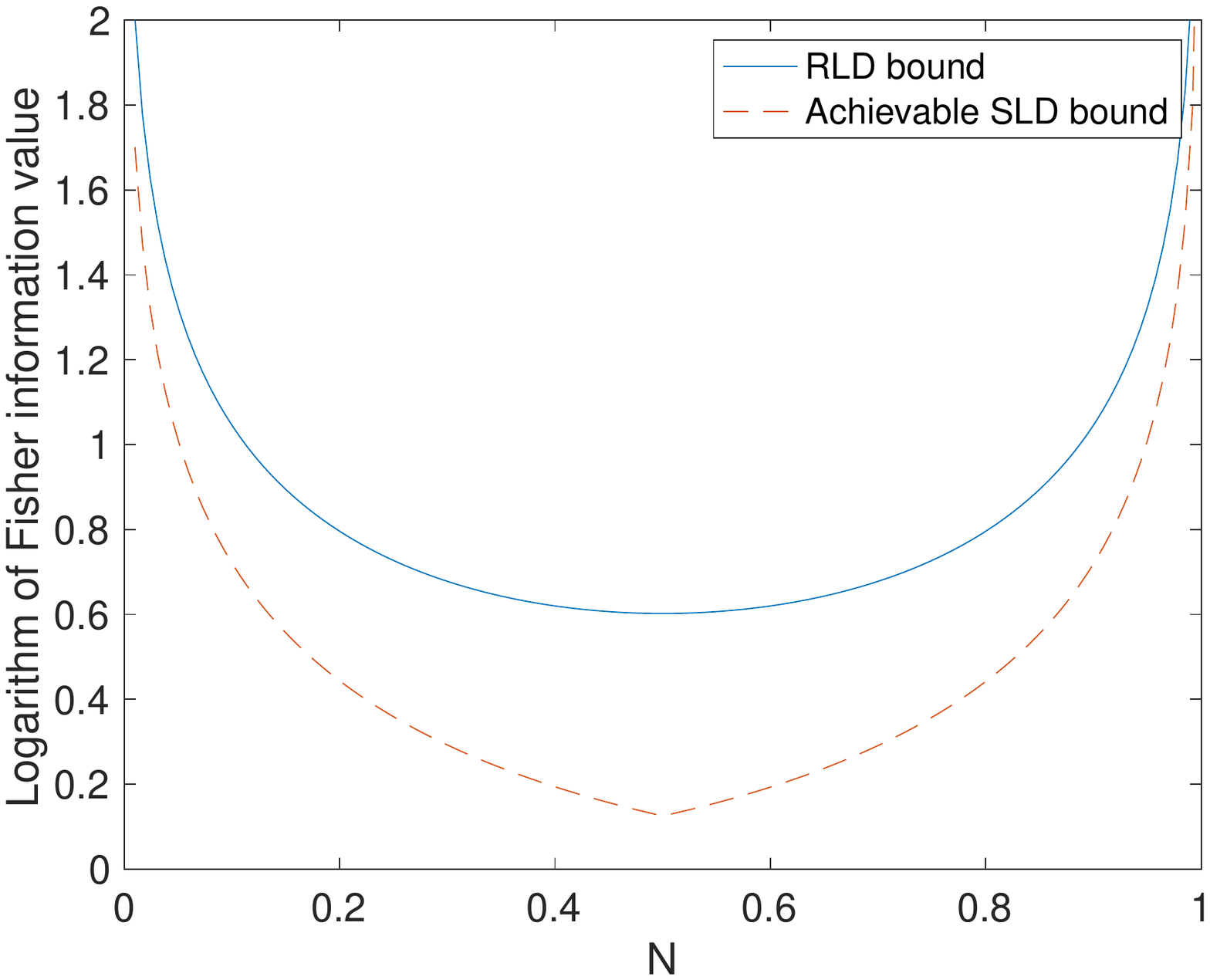}
\caption{}
\label{fig:estimate-noise-g-0.5}
\end{subfigure}
\begin{subfigure}{.5\textwidth}
\centering
\includegraphics[width=\linewidth]{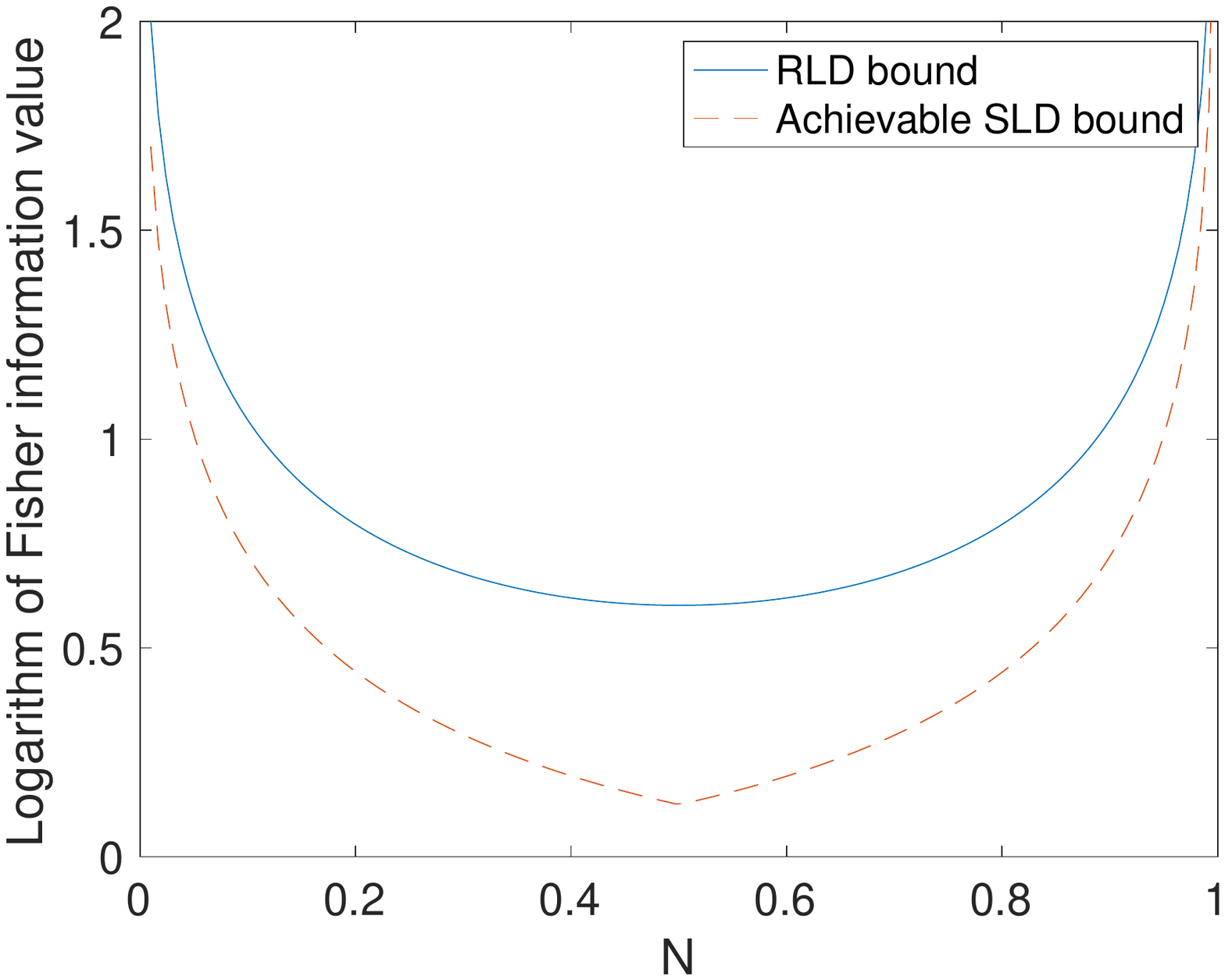}
\caption{}
\label{fig:estimate-loss-g-0.8}
\end{subfigure}
\caption{(a)~Logarithm of RLD bound and achievable SLD bound versus noise $N$
for loss $\gamma=0.5$, when estimating the noise $N$. (b)~Logarithm of the RLD
bound and achievable SLD bound versus noise $N$ for loss $\gamma=0.8$, when
estimating the noise $N$.}
\label{fig:estimate-noise}
\end{figure}

\subsubsection{Estimating a phase in loss and noise}

Now let us suppose that we have a combination of a coherent process and the
generalized amplitude damping channel. In particular, let us suppose that a
phase $\phi$ is encoded in a unitary $e^{-i\phi\sigma_{Z}}$, and this is
followed by the generalized amplitude damping channel. Then this process is
\begin{equation}
\mathcal{A}_{\phi,\gamma,N}(\rho):=\mathcal{A}_{\gamma,N}(e^{-i\phi\sigma_{Z}
}\rho e^{i\phi\sigma_{Z}}).
\end{equation}
The goal is to estimate the phase $\phi$.

The Choi operator is given by
\begin{equation}
\Gamma_{RB}^{\mathcal{A}_{\phi,\gamma,N}}:=
\begin{bmatrix}
1-\gamma N & 0 & 0 & e^{-i2\phi}\sqrt{1-\gamma}\\
0 & \gamma N & 0 & 0\\
0 & 0 & \gamma\left(  1-N\right)  & 0\\
e^{i2\phi}\sqrt{1-\gamma} & 0 & 0 & 1-\gamma\left(  1-N\right)
\end{bmatrix}
,
\end{equation}
and we find that
\begin{equation}
\partial_{\phi}\Gamma_{RB}^{\mathcal{A}_{\phi,\gamma,N}}=
\begin{bmatrix}
0 & 0 & 0 & -2ie^{-i2\phi}\sqrt{1-\gamma}\\
0 & 0 & 0 & 0\\
0 & 0 & 0 & 0\\
2ie^{i2\phi}\sqrt{1-\gamma} & 0 & 0 & 0
\end{bmatrix}
\end{equation}
Using the fact that
\begin{equation}
\left(  \Gamma_{RB}^{\mathcal{A}_{\phi}}\right)  ^{-1}=
\begin{bmatrix}
\frac{1-\gamma\left(  1-N\right)  }{\left(  1-N\right)  N\gamma^{2}} & 0 & 0 &
\frac{-e^{-2i\phi}\sqrt{1-\gamma}}{\left(  1-N\right)  N\gamma^{2}}\\
0 & \frac{1}{\gamma N} & 0 & 0\\
0 & 0 & \frac{1}{\gamma\left(  1-N\right)  } & 0\\
\frac{-e^{2i\phi}\sqrt{1-\gamma}}{\left(  1-N\right)  N\gamma^{2}} & 0 & 0 &
\frac{1-\gamma N}{\left(  1-N\right)  N\gamma^{2}}
\end{bmatrix}
,
\end{equation}
we find that
\begin{equation}
\operatorname{Tr}_{B}\left[  \left(  \partial_{\phi}\Gamma_{RB}^{\mathcal{A}
_{\phi,\gamma,N}}\right)  \left(  \Gamma_{RB}^{\mathcal{A}_{\phi,\gamma,N}
}\right)  ^{-1}\left(  \partial_{\phi}\Gamma_{RB}^{\mathcal{A}_{\phi,\gamma
,N}}\right)  \right]  =
\begin{bmatrix}
\frac{4\left(  1-\gamma\right)  \left(  1-\gamma N\right)  }{\left(
1-N\right)  N\gamma^{2}} & 0\\
0 & \frac{4\left(  1-\gamma\right)  \left(  1-\gamma\left(  1-N\right)
\right)  }{\left(  1-N\right)  N\gamma^{2}}
\end{bmatrix}
.
\end{equation}
Then if $N>1/2$, we have that
\begin{equation}
\left\Vert \operatorname{Tr}_{B}\left[  \left(  \partial_{\phi}\Gamma
_{RB}^{\mathcal{A}_{\phi,\gamma,N}}\right)  \left(  \Gamma_{RB}^{\mathcal{A}
_{\phi,\gamma,N}}\right)  ^{-1}\left(  \partial_{\phi}\Gamma_{RB}
^{\mathcal{A}_{\phi,\gamma,N}}\right)  \right]  \right\Vert _{\infty}
=\frac{4\left(  1-\gamma\right)  \left(  1-\gamma\left(  1-N\right)  \right)
}{\left(  1-N\right)  N\gamma^{2}},
\end{equation}
while if $N\leq1/2$, then
\begin{equation}
\left\Vert \operatorname{Tr}_{B}\left[  \left(  \partial_{\phi}\Gamma
_{RB}^{\mathcal{A}_{\phi,\gamma,N}}\right)  \left(  \Gamma_{RB}^{\mathcal{A}
_{\phi,\gamma,N}}\right)  ^{-1}\left(  \partial_{\phi}\Gamma_{RB}
^{\mathcal{A}_{\phi,\gamma,N}}\right)  \right]  \right\Vert _{\infty}
=\frac{4\left(  1-\gamma\right)  \left(  1-\gamma N\right)  }{\left(
1-N\right)  N\gamma^{2}}.
\end{equation}
So we conclude that
\begin{equation}
\widehat{I}_{F}(\phi;\{\mathcal{A}_{\phi,\gamma,N}\}_{\phi})=\frac{4\left(
1-\gamma\right)  \left(  1-\gamma\left(  N+\left(  1-2N\right)
u(2N-1)\right)  \right)  }{\left(  1-N\right)  N\gamma^{2}},
\end{equation}
where
\begin{equation}
u(x)=\left\{
\begin{array}
[c]{cc}
1 & x>0\\
0 & x\leq0
\end{array}
\right.  .
\end{equation}

For the noise parameter $N$ equal to $0.2$ and $0.45$,
Figure~\ref{fig:estimate-phase} depicts the logarithm of the RLD bound, as
well as the logarithm of the achievable bound from the SLD\ Fisher information
of channels, corresponding to a parallel strategy that estimates the phase
$\phi$ at $\phi=0.1$. The RLD\ bound becomes better as $\gamma$ approaches $1$.

\begin{figure}[ptb]
\begin{subfigure}{.5\textwidth}
\centering
\includegraphics[width=.9\linewidth]{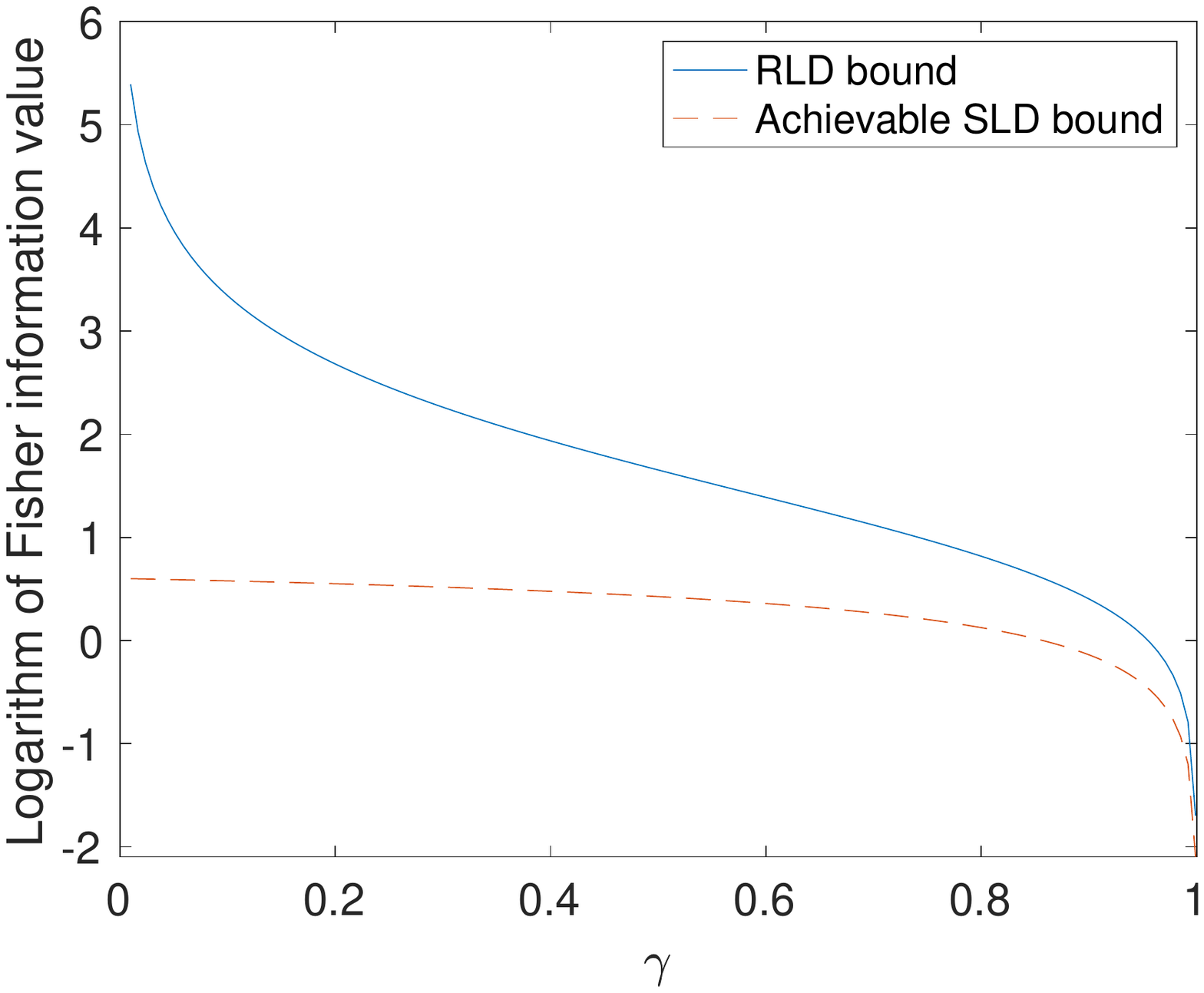}
\caption{}
\label{fig:estimate-phase-N-0.2}
\end{subfigure}
\begin{subfigure}{.5\textwidth}
\centering
\includegraphics[width=.9\linewidth]{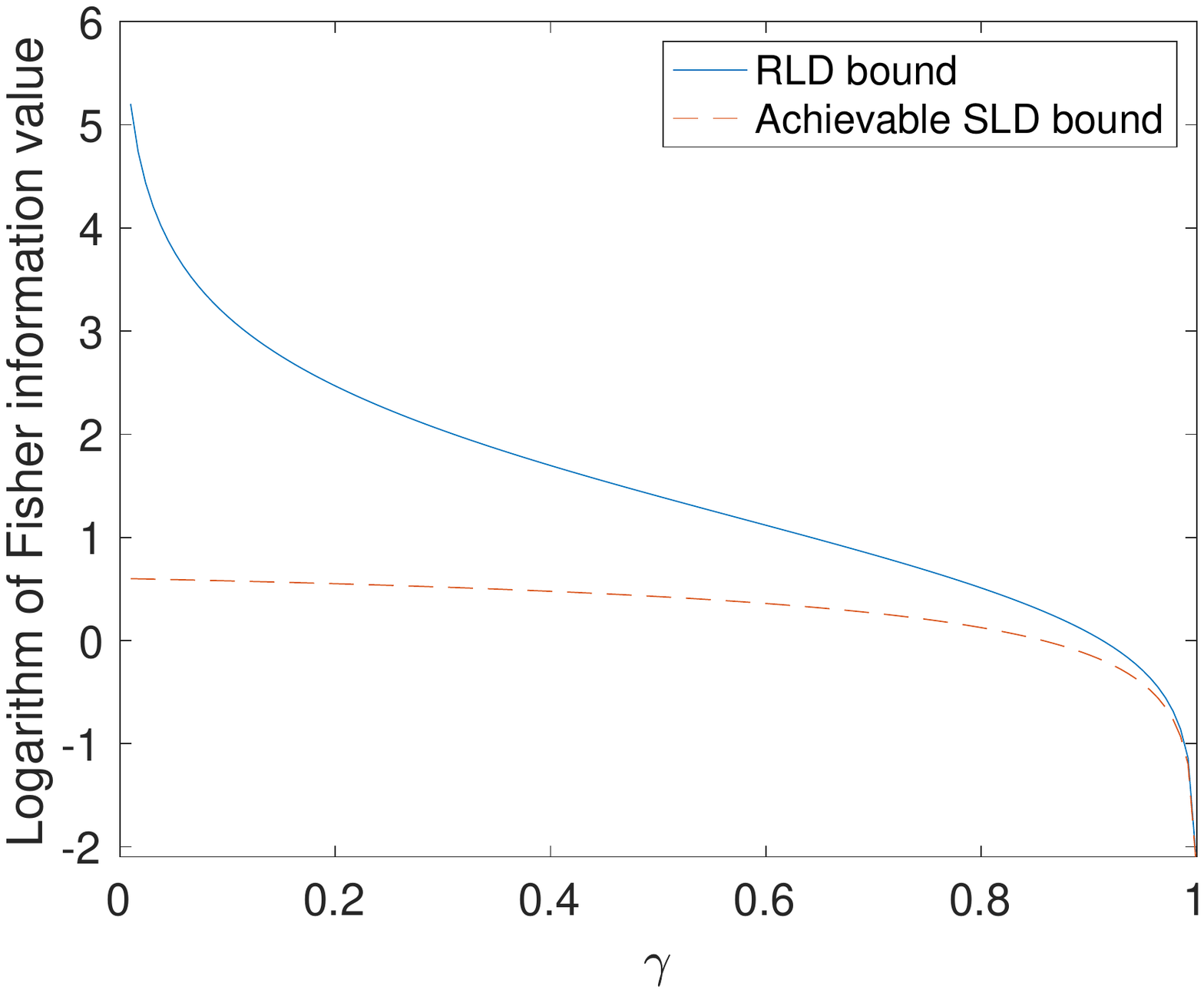}
\caption{}
\label{fig:estimate-phase-N-0.45}
\end{subfigure}
\caption{(a)~Logarithm of RLD bound and achievable SLD bound versus loss
$\gamma$ for noise $N=0.2$, when estimating the phase $\phi= 0.1$.
(b)~Logarithm of the RLD bound and achievable SLD bound versus loss $\gamma$
for noise $N=0.45$, when estimating the phase $\phi= 0.1$.}
\label{fig:estimate-phase}
\end{figure}

\section{Limits on quantum channel discrimination}

\label{sec:bounds-on-disc}

In this section, we shift to quantum channel discrimination, which has some
close ties to the theory of quantum channel estimation, as discussed in
Section~\ref{sec:setting-q-ch-est-disc}. The main tool that we use for the
analysis here is the geometric R\'enyi relative entropy, which we review in
what follows and in more detail in Appendix~\ref{app:geo-ren-props}.

\subsection{Geometric R\'enyi relative entropy} \label{subsec:geo-rel-ent}

The geometric R\'{e}nyi relative entropy is a key distinguishability measure
that we employ in the context of quantum channel discrimination, and it is
even connected to the RLD\ Fisher information, as we discuss in the
forthcoming Section~\ref{sec:connections}. The geometric R\'{e}nyi relative
entropy has its roots in the early work \cite{PR98}, and the specific form
given below was introduced by \cite{Mat13,Matsumoto2018}. It has been reviewed
briefly in \cite{T15book}\ and in more detail in \cite{HM17} (in particular,
see \cite[Example~4.5]{HM17}). See also \cite{KW20book} for a more recent review. It has been used effectively in recent work to
obtain upper bounds on quantum channel capacities \cite{Fang2019a} and rates
of channel discrimination in the asymmetric setting \cite[Appendix~D]
{Fang2019a}. This latter paper has thus established the geometric R\'enyi
relative entropy as a useful tool in bounding rates of operational tasks.

We define the geometric R\'enyi relative entropy as follows:

\begin{definition}
[Geometric R\'{e}nyi relative entropy]
\label{def:geometric-renyi-rel-ent-main-text}Let $\rho$ be a state, $\sigma$ a
positive semi-definite operator, and $\alpha\in(0,1)\cup(1,\infty)$. The
geometric R\'{e}nyi relative quasi-entropy is defined as
\begin{equation}
\widehat{Q}_{\alpha}(\rho\Vert\sigma):=\lim_{\varepsilon\rightarrow0^{+}
}\operatorname{Tr}\!\left[  \sigma_{\varepsilon}\!\left(  \sigma_{\varepsilon
}^{-\frac{1}{2}}\rho\sigma_{\varepsilon}^{-\frac{1}{2}}\right)  ^{\alpha
}\right]  , \label{eq:def-geometric-renyi-rel-quasi-ent-main-text}
\end{equation}
where $\sigma_{\varepsilon}:=\sigma+\varepsilon I$. The geometric R\'{e}nyi
relative entropy is then defined as
\begin{equation}
\widehat{D}_{\alpha}(\rho\Vert\sigma):=\frac{1}{\alpha-1} \ln\widehat
{Q}_{\alpha}(\rho\Vert\sigma).
\label{eq:def-geometric-renyi-rel-ent-main-text}
\end{equation}

\end{definition}

It is called the geometric R\'{e}nyi relative entropy \cite{Fang2019a}
\ because it can be written in terms of the weighted operator geometric mean
as
\begin{equation}
\widehat{Q}_{\alpha}(\rho\Vert\sigma)=\operatorname{Tr}[G_{\alpha}(\sigma
,\rho)],
\end{equation}
where the weighted operator geometric mean is defined as
\begin{equation}
G_{\alpha}(\sigma,\rho):=\lim_{\varepsilon\rightarrow0^{+}}\sigma
_{\varepsilon}^{\frac{1}{2}}\!\left(  \sigma_{\varepsilon}^{-\frac{1}{2}}
\rho\sigma_{\varepsilon}^{-\frac{1}{2}}\right)  ^{\alpha}\sigma_{\varepsilon
}^{\frac{1}{2}}.
\end{equation}
See, e.g., \cite{LL01} for a review of operator geometric means.

When the condition $\operatorname{supp}(\rho)\subseteq\operatorname{supp}
(\sigma)$ holds, the geometric R\'{e}nyi relative entropy can be written for
$\alpha\in(0,1)\cup(1,\infty)$ as
\begin{equation}
\widehat{Q}_{\alpha}(\rho\Vert\sigma):=\operatorname{Tr}\!\left[
\sigma\!\left(  \sigma^{-\frac{1}{2}}\rho\sigma^{-\frac{1}{2}}\right)
^{\alpha}\right]  . \label{eq:geo-quasi-ent-main-txt}
\end{equation}
For $\alpha\in(0,1)$, if the condition $\operatorname{supp}(\rho
)\subseteq\operatorname{supp}(\sigma)$ does not hold, then the explicit
formula for it is more complicated, given by \cite{Mat14,Mat14condconv}
\begin{equation}
\widehat{Q}_{\alpha}(\rho\Vert\sigma):=\operatorname{Tr}\!\left[
\sigma\!\left(  \sigma^{-\frac{1}{2}}\tilde{\rho}\sigma^{-\frac{1}{2}}\right)
^{\alpha}\right]  , \label{eq:geo-ren-exp-form-weird-case}
\end{equation}
where
\begin{align}
\tilde{\rho}  &  :=\rho_{0,0}-\rho_{0,1}\rho_{1,1}^{-1}\rho_{0,1}^{\dag},\\
\rho_{0,0}  &  :=\Pi_{\sigma}\rho\Pi_{\sigma},\quad\rho_{0,1}:=\Pi_{\sigma
}\rho\Pi_{\sigma}^{\perp},\quad\rho_{1,1}:=\Pi_{\sigma}^{\perp}\rho\Pi
_{\sigma}^{\perp},
\end{align}
$\Pi_{\sigma}$ is the projection onto the support of $\sigma$, $\Pi_{\sigma
}^{\perp}$ the projection onto the kernel of $\sigma$, and all inverses are
evaluated on the supports of the operators. We detail how this explicit
formula follows from \eqref{eq:def-geometric-renyi-rel-quasi-ent-main-text} in
Appendix~\ref{app:geo-ren-props}. For $\alpha\in(1,\infty)$, if the condition
$\operatorname{supp}(\rho)\subseteq\operatorname{supp}(\sigma)$ does not hold,
then it is equal to $+\infty$.

A special case of the geometric R\'{e}nyi relative entropy of interest to us
here, for $\alpha=1/2$, involves the geometric fidelity \cite{Mat10fid,Mat14}:
\begin{equation}
\widehat{D}_{1/2}(\rho\Vert\sigma):=-2\ln\widehat{Q}_{1/2}(\rho\Vert
\sigma)=-\ln\widehat{F}(\rho,\sigma),
\end{equation}
where the geometric fidelity of $\rho$ and $\sigma$ is defined as
\begin{equation}
\widehat{F}(\rho,\sigma):=\left(  \lim_{\varepsilon\rightarrow0^{+}
}\operatorname{Tr}\!\left[  \sigma_{\varepsilon}\!\left(  \sigma_{\varepsilon
}^{-\frac{1}{2}}\rho\sigma_{\varepsilon}^{-\frac{1}{2}}\right)  ^{\frac{1}{2}}\right]
\right)  ^{2} .
\end{equation}
A recent paper has explored the geometric fidelity (therein called Matsumoto fidelity) and its relation to semi-definite programming \cite{CS20}.

The geometric R\'{e}nyi relative entropy has a number of fundamental
properties that make it a worthwhile quantity to study. Although it is not
known to have an information-theoretic interpretation on its own, it is an upper bound
on other information quantities that are connected to operational tasks. The
important properties of geometric R\'{e}nyi relative entropy are as follows:

\begin{itemize}
\item Convergence to the Belavkin--Staszewski relative entropy
\cite{Belavkin1982}\ in the limit $\alpha\rightarrow1$:
\begin{equation}
\lim_{\alpha\rightarrow1}\widehat{D}_{\alpha}(\rho\Vert\sigma)=\widehat
{D}(\rho\Vert\sigma),
\end{equation}
where the Belavkin--Staszewski relative entropy $\widehat{D}(\rho\Vert\sigma)$
is defined as
\begin{equation}
\widehat{D}(\rho\Vert\sigma):=\left\{
\begin{array}
[c]{cc}
\operatorname{Tr}[\rho\ln(\rho^{\frac{1}{2}}\sigma^{-1}\rho^{\frac{1}{2}})] &
\text{if }\operatorname{supp}(\rho)\subseteq\operatorname{supp}(\sigma)\\
+\infty & \text{otherwise}
\end{array}
\right.  .
\end{equation}

\item Convergence to the max-relative entropy \cite{Datta2009b}\ in the limit
$\alpha\rightarrow\infty$:
\begin{equation}
\lim_{\alpha\rightarrow\infty}\widehat{D}_{\alpha}(\rho\Vert\sigma)=D_{\max
}(\rho\Vert\sigma), \label{eq:geo-renyi-to-dmax-main-text}
\end{equation}
as proven in Appendix~\ref{app:geo-ren-props}, where
\begin{equation}
D_{\max}(\rho\Vert\sigma):=\left\{
\begin{array}
[c]{cc}
\ln\inf\{\lambda\geq0:\rho\leq\lambda\sigma\} & \text{if }\operatorname{supp}
(\rho)\subseteq\operatorname{supp}(\sigma)\\
+\infty & \text{otherwise}
\end{array}
\right.  .
\end{equation}

\item For all $\alpha\in(0,1)\cup(1,2]$, data-processing inequality
\cite{Mat13,Matsumoto2018}:
\begin{equation}
\widehat{D}_{\alpha}(\rho\Vert\sigma)\geq\widehat{D}_{\alpha}(\mathcal{N}
(\rho)\Vert\mathcal{N}(\sigma)), \label{eq:DP-geo-Renyi-main-txt}
\end{equation}
where $\rho$ is quantum state, $\sigma$ is a positive semi-definite operator,
and $\mathcal{N}$ is a quantum channel.

\item Monotonicity in $\alpha$ for $\alpha\in(0,1)\cup(1,\infty)$. That is,
\begin{equation}
\widehat{D}_{\alpha}(\rho\Vert\sigma)\leq\widehat{D}_{\beta}(\rho\Vert\sigma),
\label{eq:mono-geo-renyi-main-text}
\end{equation}
for $0<\alpha\leq\beta$, as proven in Appendix~\ref{app:geo-ren-props}.

\item Not smaller than the sandwiched R\'{e}nyi relative entropy for all
$\alpha\in(0,1)\cup(1,\infty)$ \cite{T15book,WWW19}:
\begin{equation}
\widetilde{D}_{\alpha}(\rho\Vert\sigma)\leq\widehat{D}_{\alpha}(\rho
\Vert\sigma), \label{eq:sandwiched-to-geo-bound-main-text}
\end{equation}
where the sandwiched R\'{e}nyi relative entropy is defined as
\cite{MDSFT13,WWY14}
\begin{equation}
\widetilde{D}_{\alpha}(\rho\Vert\sigma):=\lim_{\varepsilon\rightarrow0^{+}
}\frac{1}{\alpha-1}\ln\operatorname{Tr}[(\sigma_{\varepsilon}^{\frac{1-\alpha
}{2\alpha}}\rho\sigma_{\varepsilon}^{\frac{1-\alpha}{2\alpha}})^{\alpha}].
\end{equation}
Special case for $\alpha=1/2$: geometric fidelity is not larger than the
fidelity \cite{Uhl76}:
\begin{equation}
\widehat{F}(\rho,\sigma) \leq F(\rho,\sigma) := \left\Vert \sqrt{\rho} \sqrt{\sigma} \right\Vert _{1}^{2}
.
\end{equation}

\item The geometric R\'{e}nyi relative entropy of two quantum states $\rho$
and $\sigma$ is computable via a semi-definite program \cite{Fang2019a}. 
\end{itemize}

As indicated above, we provide a detailed review of the geometric R\'{e}nyi
relative entropy and its properties in Appendix~\ref{app:geo-ren-props}.

\subsection{Properties of geometric R\'{e}nyi relative entropy of quantum
channels} \label{subsec:geo-rel-ent-properties}

In this section, we discuss some properties of the geometric R\'{e}nyi
relative entropy of quantum channels. These properties were established in
\cite{Fang2019a}\ for the interval $\alpha\in(1,2]$ and implicitly under suitable support
conditions on the Choi operators of the channels, but the interval $\alpha
\in(0,1)$ was not discussed in \cite{Fang2019a}, nor the case when the support
conditions do not hold. Our main observation here is that the same properties
hold for the full interval $\alpha\in(0,1)\cup(1,2]$ and without support
conditions, by following essentially the same proofs from \cite{Fang2019a}.
For completeness, we provide proofs in
Appendix~\ref{app:geo-renyi-channels-app}.

As observed in \cite{CMW14,LKDW18}, any state distinguishability measure can
be generalized to quantum channels by optimizing over all input states to the
channel. Thus, the geometric R\'{e}nyi relative entropy of quantum channels is
defined as follows:

\begin{definition}
For a quantum channel $\mathcal{N}_{A\rightarrow B}$ and a completely positive
map $\mathcal{M}_{A\rightarrow B}$, their geometric R\'{e}nyi relative entropy
is defined for $\alpha\in(0,1)\cup(1,\infty)$ as
\begin{equation}
\widehat{D}_{\alpha}(\mathcal{N}\Vert\mathcal{M}):=\sup_{\rho_{RA}}\widehat
{D}_{\alpha}(\mathcal{N}_{A\rightarrow B}(\rho_{RA})\Vert\mathcal{M}
_{A\rightarrow B}(\rho_{RA})).
\end{equation}

\end{definition}

By applying Remark~\ref{rem:restrict-to-pure-bipartite}, the formula
simplifies as follows for the data-processing interval $\alpha\in
(0,1)\cup(1,2]$:
\begin{equation}
\widehat{D}_{\alpha}(\mathcal{N}\Vert\mathcal{M})=\sup_{\psi_{RA}}\widehat
{D}_{\alpha}(\mathcal{N}_{A\rightarrow B}(\psi_{RA})\Vert\mathcal{M}
_{A\rightarrow B}(\psi_{RA})),
\end{equation}
where the supremum is with respect to all pure bipartite states $\psi_{RA}$
with system $R$ isomorphic to system $A$.

In fact, the formula simplifies further:

\begin{proposition}
\label{prop:explicit-formula-geo-renyi-ch}Let $\mathcal{N}_{A\rightarrow B}$
and $\mathcal{M}_{A\rightarrow B}$ be quantum channels, and let $\Gamma
_{RB}^{\mathcal{N}}$ and $\Gamma_{RB}^{\mathcal{M}}$ be their respective Choi
operators. For $\alpha\in(0,1)\cup(1,2]$, the geometric R\'{e}nyi relative
entropy of quantum channels $\mathcal{N}_{A\rightarrow B}$ and $\mathcal{M}
_{A\rightarrow B}$ has the following explicit form:
\begin{equation}
\widehat{D}_{\alpha}(\mathcal{N}\Vert\mathcal{M})=\frac{1}{\alpha-1}
\ln\widehat{Q}_{\alpha}(\mathcal{N}\Vert\mathcal{M}),
\end{equation}
where
\begin{equation}
\widehat{Q}_{\alpha}(\mathcal{N}\Vert\mathcal{M}):=\left\{
\begin{array}
[c]{cc}
\lambda_{\min}(\operatorname{Tr}_{B}[G_{\alpha}(\Gamma_{RB}^{\mathcal{M}
},\Gamma_{RB}^{\mathcal{N}})]) &
\begin{array}
[c]{c}
\text{if }\alpha\in(0,1)\\
\text{and }\operatorname{supp}(\Gamma_{RB}^{\mathcal{N}})\subseteq
\operatorname{supp}(\Gamma_{RB}^{\mathcal{M}})
\end{array}
\\
& \\
\left\Vert \operatorname{Tr}_{B}[G_{\alpha}(\Gamma_{RB}^{\mathcal{M}}
,\Gamma_{RB}^{\mathcal{N}})]\right\Vert _{\infty} &
\begin{array}
[c]{c}
\text{if }\alpha\in(1,2]\\
\text{and }\operatorname{supp}(\Gamma_{RB}^{\mathcal{N}})\subseteq
\operatorname{supp}(\Gamma_{RB}^{\mathcal{M}})
\end{array}
\\
& \\
\lambda_{\min}(\operatorname{Tr}_{B}[G_{\alpha}(\Gamma_{RB}^{\mathcal{M}
},\widetilde{\Gamma_{RB}^{\mathcal{N}}})]) &
\begin{array}
[c]{c}
\text{if }\alpha\in(0,1)\\
\text{and }\operatorname{supp}(\Gamma_{RB}^{\mathcal{N}})\not \subseteq
\operatorname{supp}(\Gamma_{RB}^{\mathcal{M}})
\end{array}
\\
& \\
+\infty &
\begin{array}
[c]{c}
\text{if }\alpha\in(1,2]\\
\text{and }\operatorname{supp}(\Gamma_{RB}^{\mathcal{N}})\not \subseteq
\operatorname{supp}(\Gamma_{RB}^{\mathcal{M}})
\end{array}
\end{array}
\right.  , \label{eq:geo-ren-ch-exp-form-all-cases}
\end{equation}
$\lambda_{\min}$ denotes the minimum eigenvalue of its argument,
\begin{align}
G_{\alpha}(X,Y)  &  :=X^{1/2}(X^{-1/2}YX^{-1/2})^{\alpha}X^{1/2},\\
\widetilde{\Gamma_{RB}^{\mathcal{N}}}  &  :=(\Gamma_{RB}^{\mathcal{N}}
)_{0,0}-(\Gamma_{RB}^{\mathcal{N}})_{0,1}(\Gamma_{RB}^{\mathcal{N}}
)_{1,1}^{-1}[(\Gamma_{RB}^{\mathcal{N}})_{0,1}]^{\dag},\\
(\Gamma_{RB}^{\mathcal{N}})_{0,0}  &  :=\Pi_{\Gamma^{\mathcal{M}}}\Gamma
_{RB}^{\mathcal{N}}\Pi_{\Gamma^{\mathcal{M}}},\\
(\Gamma_{RB}^{\mathcal{N}})_{0,1}  &  :=\Pi_{\Gamma^{\mathcal{M}}}\Gamma
_{RB}^{\mathcal{N}}\Pi_{\Gamma^{\mathcal{M}}}^{\perp},\\
(\Gamma_{RB}^{\mathcal{N}})_{1,1}  &  :=\Pi_{\Gamma^{\mathcal{M}}}^{\perp
}\Gamma_{RB}^{\mathcal{N}}\Pi_{\Gamma^{\mathcal{M}}}^{\perp},
\end{align}
$\Pi_{\Gamma^{\mathcal{M}}}$ is the projection onto the support of
$\Gamma_{RB}^{\mathcal{M}}$, $\Pi_{\Gamma^{\mathcal{M}}}^{\perp}$ is the
projection onto its kernel, and all inverses are taken on the support. For
$\alpha\in(0,1)$, we have the following alternative form:
\begin{equation}
\widehat{Q}_{\alpha}(\mathcal{N}\Vert\mathcal{M})=\lim_{\varepsilon
\rightarrow0^{+}}\lambda_{\min}\left(  \operatorname{Tr}_{B}[G_{\alpha}
(\Gamma_{RB}^{\mathcal{M}_{\varepsilon}},\Gamma_{RB}^{\mathcal{N}})]\right)  ,
\end{equation}
where $\Gamma_{RB}^{\mathcal{M}_{\varepsilon}}:=\Gamma_{RB}^{\mathcal{M}
}+\varepsilon I_{RB}$.
\end{proposition}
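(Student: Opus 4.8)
The plan is to reduce the channel quantity to an eigenvalue optimization over the reference state by means of the transformer relations for the weighted operator geometric mean. First I would invoke the reduction already recorded just before the proposition: for $\alpha\in(0,1)\cup(1,2]$ it suffices to optimize $\widehat{D}_\alpha(\mathcal{N}_{A\to B}(\psi_{RA})\Vert\mathcal{M}_{A\to B}(\psi_{RA}))$ over pure bipartite states $\psi_{RA}$ with $R\cong A$. Writing such a state as $\psi_{RA}=(X_R\otimes I_A)\Gamma_{RA}(X_R^\dagger\otimes I_A)$ with $\operatorname{Tr}[X_R^\dagger X_R]=1$, and using that $X_R$ acts on the reference system and hence commutes with the channel action, one gets $\mathcal{N}_{A\to B}(\psi_{RA})=L\,\Gamma_{RB}^{\mathcal{N}}\,L^\dagger$ and $\mathcal{M}_{A\to B}(\psi_{RA})=L\,\Gamma_{RB}^{\mathcal{M}}\,L^\dagger$ with $L:=X_R\otimes I_B$. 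Consequently $\widehat{Q}_\alpha(\mathcal{N}(\psi_{RA})\Vert\mathcal{M}(\psi_{RA}))=\operatorname{Tr}[G_\alpha(L\Gamma_{RB}^{\mathcal{M}}L^\dagger,L\Gamma_{RB}^{\mathcal{N}}L^\dagger)]$, and the task is to compare this with $\operatorname{Tr}[L\,G_\alpha(\Gamma_{RB}^{\mathcal{M}},\Gamma_{RB}^{\mathcal{N}})\,L^\dagger]=\operatorname{Tr}[X_R^\dagger X_R\,A_R]$, where $A_R:=\operatorname{Tr}_B[G_\alpha(\Gamma_{RB}^{\mathcal{M}},\Gamma_{RB}^{\mathcal{N}})]$.

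Next I would apply the transformer inequality for $G_\alpha$ in the direction appropriate to the range of $\alpha$ (cf.\ Lemma~\ref{lem:transformer-ineq-basic}): since $x\mapsto x^\alpha$ is operator monotone for $\alpha\in(0,1)$ and operator convex for $\alpha\in(1,2]$, one has $L\,G_\alpha(P,Q)\,L^\dagger\le G_\alpha(LPL^\dagger,LQL^\dagger)$ in the first case and the reverse in the second, with equality when $L$ (equivalently $X_R$) is invertible. Taking traces and assuming the support condition $\operatorname{supp}(\Gamma^{\mathcal{N}})\subseteq\operatorname{supp}(\Gamma^{\mathcal{M}})$ (which forces $\operatorname{supp}(\mathcal{N}(\psi))\subseteq\operatorname{supp}(\mathcal{M}(\psi))$ for every such $\psi$), this gives $\operatorname{Tr}[X_R^\dagger X_R A_R]\le\widehat{Q}_\alpha(\mathcal{N}(\psi)\Vert\mathcal{M}(\psi))$ for $\alpha\in(0,1)$ and the opposite inequality for $\alpha\in(1,2]$. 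Because $X_R^\dagger X_R$ sweeps out all density operators on $\mathbb{C}^d$ as $\psi_{RA}$ varies, with the full-rank ones dense and attaining equality through invertibility of $L$, the optimization yields $\inf_\psi\widehat{Q}_\alpha(\mathcal{N}(\psi)\Vert\mathcal{M}(\psi))=\lambda_{\min}(A_R)$ when $\alpha\in(0,1)$ and $\sup_\psi\widehat{Q}_\alpha(\mathcal{N}(\psi)\Vert\mathcal{M}(\psi))=\lambda_{\max}(A_R)=\Vert A_R\Vert_\infty$ when $\alpha\in(1,2]$; combined with $\widehat{D}_\alpha=\frac{1}{\alpha-1}\ln\widehat{Q}_\alpha$ (whose prefactor is negative for $\alpha<1$ and positive for $\alpha>1$, so the outer optimization flips accordingly) this is exactly the asserted formula in the two support-condition cases. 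For $\alpha\in(1,2]$ with $\operatorname{supp}(\Gamma^{\mathcal{N}})\not\subseteq\operatorname{supp}(\Gamma^{\mathcal{M}})$, picking any full-rank $X_R$ makes $\operatorname{supp}(\mathcal{N}(\psi))\not\subseteq\operatorname{supp}(\mathcal{M}(\psi))$, hence $\widehat{D}_\alpha(\mathcal{N}(\psi)\Vert\mathcal{M}(\psi))=+\infty$, so $\widehat{Q}_\alpha(\mathcal{N}\Vert\mathcal{M})=+\infty$. For $\alpha\in(0,1)$ without the support condition, I would rerun the argument with $\Gamma_{RB}^{\mathcal{N}}$ replaced by its shorted operator $\widetilde{\Gamma_{RB}^{\mathcal{N}}}$ relative to $\operatorname{supp}(\Gamma_{RB}^{\mathcal{M}})$, invoking the state-level formula \eqref{eq:geo-ren-exp-form-weird-case} together with the congruence-covariance of the shorted operator under invertible $L$ to identify the two sides. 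Finally, the alternative limit formula for $\alpha\in(0,1)$ follows by applying the support-condition formula just proved to the full-rank completely positive map $\mathcal{M}_\varepsilon$ with Choi operator $\Gamma_{RB}^{\mathcal{M}}+\varepsilon I_{RB}$, noting that at the state level $\widehat{Q}_\alpha(\rho\Vert\sigma)=\inf_{\varepsilon>0}\widehat{Q}_\alpha(\rho\Vert\sigma+\varepsilon I)$ by monotonicity of $G_\alpha$ in its first argument, and interchanging the two infima (over $\psi$ and over $\varepsilon$).

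The step I expect to be the main obstacle is the careful handling of rank-deficiency and support conditions needed to make the transformer relations genuinely applicable. One must pass to full-rank $X_R$ to obtain transformer \emph{equality}, then recover the extremal eigenvalues in the limit by a density/continuity argument, and separately check that for non-invertible $L$ the transformer inequality really has the stated direction and is consistent with the $\varepsilon$-regularization built into the definitions of $G_\alpha$ and $\widehat{Q}_\alpha$. The most delicate point is the case $\alpha\in(0,1)$ with $\operatorname{supp}(\Gamma^{\mathcal{N}})\not\subseteq\operatorname{supp}(\Gamma^{\mathcal{M}})$, where the Schur-complement construction has to be pushed through the conjugation by $L$ and one must confirm that the projection onto $\operatorname{supp}(L\Gamma^{\mathcal{M}}L^\dagger)$ behaves as required; once the transformer lemma and these support-bookkeeping facts are in place, the remaining manipulations are routine.
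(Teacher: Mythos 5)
Your proposal is correct and, for three of the four cases, coincides with the paper's own proof: the parameterization $\psi_{RA}=X_R\Gamma_{RA}X_R^{\dag}$, the transformer equality for invertible $X_R$ reducing the quasi-entropy to $\operatorname{Tr}[X_R^{\dag}X_R\operatorname{Tr}_B[G_{\alpha}(\Gamma_{RB}^{\mathcal{M}},\Gamma_{RB}^{\mathcal{N}})]]$, the optimization over density operators giving $\left\Vert\cdot\right\Vert_{\infty}$ for $\alpha\in(1,2]$ and $\lambda_{\min}$ for $\alpha\in(0,1)$ (with the sign of $\frac{1}{\alpha-1}$ turning the supremum into an infimum), the maximally entangled input forcing $+\infty$ when $\alpha\in(1,2]$ and the support condition fails, and the $\varepsilon$-regularized limit expression obtained by exchanging the infima over $\psi$ and $\varepsilon$. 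Where you genuinely diverge is the case $\alpha\in(0,1)$ with $\operatorname{supp}(\Gamma_{RB}^{\mathcal{N}})\not\subseteq\operatorname{supp}(\Gamma_{RB}^{\mathcal{M}})$: the paper re-derives the Schur-complement formula from scratch at the channel level, pushing the double limit in $\varepsilon$ and $\delta$ through the block decomposition and the approximate diagonalization of Lemma~\ref{lem:sisi-zhou-lem}, whereas you quote the state-level formula \eqref{eq:geo-ren-exp-form-weird-case} and transport it through the congruence by $L=X_R\otimes I_B$ via covariance of the shorted operator. Your route is shorter and sound, but the covariance claim is precisely the step you must supply a proof for: since $(L(S))^{\perp}=L^{-\dag}(S^{\perp})\neq L(S^{\perp})$ in general, one cannot obtain $\widetilde{L\Gamma_{RB}^{\mathcal{N}}L^{\dag}}=L\widetilde{\Gamma_{RB}^{\mathcal{N}}}L^{\dag}$ by conjugating blocks; the clean argument uses the Anderson--Trapp variational characterization of the shorted operator as $\max\{B:0\leq B\leq A,\ \operatorname{ran}(B)\subseteq S\}$, which is manifestly covariant under invertible congruence and under which the two shorting subspaces correspond. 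Two smaller points: the transformer relations you need are those of the unnumbered lemma opening Appendix~\ref{app:geo-renyi-channels-app} (for general $G_{\alpha}$), not Lemma~\ref{lem:transformer-ineq-basic}; and the restriction to invertible $X_R$ should be justified by noting that for $\alpha\in(0,1)$ the quasi-entropy is an infimum of continuous functions of the output pair, hence upper semicontinuous, so its infimum over the dense set of full-rank inputs equals the infimum over all inputs (the paper elides this point as well).
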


\begin{proof}
See Appendix~\ref{app:geo-renyi-channels-app}.
\end{proof}

\bigskip
It is known from \cite{Fang2019a} that the geometric R\'enyi relative entropy of quantum channels $\mathcal{N}_{A \to B}$ and $\mathcal{M}_{A \to B}$ converges to the Belavkin--Staszewski relative entropy of channels in the limit as $\alpha \to 1$:
\begin{equation}
    \lim_{\alpha \to 1} \widehat{D}_\alpha(\mathcal{N}_{A \to B} \Vert  \mathcal{M}_{A \to B}) = \widehat{D}(\mathcal{N}_{A \to B} \Vert  \mathcal{M}_{A \to B}) ,
\end{equation}
and the Belavkin--Staszewski relative entropy of channels has the following explicit expression:
\begin{equation}
    \widehat{D}(\mathcal{N}_{A \to B} \Vert  \mathcal{M}_{A \to B}) := \left \Vert \operatorname{Tr}_B \left[(\Gamma^{\mathcal{N}}_{RB})^{1/2}
    \log_2 \left(
    (\Gamma^{\mathcal{N}}_{RB})^{1/2}
    (\Gamma^{\mathcal{M}}_{RB})^{-1}
    (\Gamma^{\mathcal{N}}_{RB})^{1/2}
    \right)
    (\Gamma^{\mathcal{N}}_{RB})^{1/2}
    \right] \right \Vert_{\infty}
\end{equation}
if $\operatorname{supp}(\Gamma^{\mathcal{N}}_{RB}) \subseteq \operatorname{supp}(\Gamma^{\mathcal{M}}_{RB})$ and $\widehat{D}(\mathcal{N}_{A \to B} \Vert  \mathcal{M}_{A \to B}) := +\infty$ otherwise.

\begin{proposition}
[Chain rule]\label{prop:chain-rule-geo-renyi-ch}For $\rho_{RA}$ a quantum
state, $\sigma_{RA}$ a positive semi-definite operator, $\mathcal{N}
_{A\rightarrow B}$ a quantum channel, and $\mathcal{M}_{A\rightarrow B}$ a
completely positive map, the following chain rule holds for $\alpha
\in(0,1)\cup(1,2]$:
\begin{equation}
\widehat{D}_{\alpha}(\mathcal{N}_{A\rightarrow B}(\rho_{RA})\Vert
\mathcal{M}_{A\rightarrow B}(\sigma_{RA}))\leq\widehat{D}_{\alpha}
(\mathcal{N}\Vert\mathcal{M})+\widehat{D}_{\alpha}(\rho_{RA}\Vert\sigma_{RA}).
\end{equation}

\end{proposition}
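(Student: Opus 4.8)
The plan is to reduce the chain rule to a single application of the post-selected teleportation identity \eqref{eq:PS-TP-identity} together with the transformer inequality (Lemma~\ref{lem:transformer-ineq-basic}), following the same strategy used in the proof of the RLD chain rule (Proposition~\ref{prop:chain-rule-RLD}). First I would dispose of the degenerate cases: if $\alpha\in(1,2]$ and $\operatorname{supp}(\mathcal{N}_{A\to B}(\rho_{RA}))\not\subseteq\operatorname{supp}(\mathcal{M}_{A\to B}(\sigma_{RA}))$, then one checks that the support condition must also fail for $\Gamma^{\mathcal{N}}_{RB}$ relative to $\Gamma^{\mathcal{M}}_{RB}$ (or for $\rho_{RA}$ relative to $\sigma_{RA}$), so the right-hand side is $+\infty$ and the inequality is trivial; for $\alpha\in(0,1)$ the quasi-entropies are always finite via the extended formula \eqref{eq:geo-ren-exp-form-weird-case}, so no support issue arises there. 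Assuming the support conditions hold, write
\begin{equation}
\mathcal{N}_{A\rightarrow B}(\rho_{RA})=\langle\Gamma|_{AS}\,\rho_{RA}\otimes\Gamma^{\mathcal{N}}_{SB}\,|\Gamma\rangle_{AS},
\qquad
\mathcal{M}_{A\rightarrow B}(\sigma_{RA})=\langle\Gamma|_{AS}\,\sigma_{RA}\otimes\Gamma^{\mathcal{M}}_{SB}\,|\Gamma\rangle_{AS}.
\end{equation}
Setting $L:=\langle\Gamma|_{AS}\otimes I_{RB}$, $X:=\rho_{RA}\otimes\Gamma^{\mathcal{N}}_{SB}$, and $Y:=\sigma_{RA}\otimes\Gamma^{\mathcal{M}}_{SB}$, the transformer inequality for the operator geometric mean $G_\alpha$ (which holds for $\alpha\in(0,1)\cup(1,2]$ since $x\mapsto x^\alpha$ is operator concave on that range, operator monotone, etc.) gives $L\,G_\alpha(Y,X)\,L^\dagger \leq (\text{or }\geq) G_\alpha(LYL^\dagger, LXL^\dagger)$ with the direction depending on whether $\alpha\in(0,1)$ or $\alpha\in(1,2]$; taking the trace and using $G_\alpha(\sigma_{RA}\otimes\Gamma^{\mathcal{M}}_{SB},\,\rho_{RA}\otimes\Gamma^{\mathcal{N}}_{SB})=G_\alpha(\sigma_{RA},\rho_{RA})\otimes G_\alpha(\Gamma^{\mathcal{M}}_{SB},\Gamma^{\mathcal{N}}_{SB})$ (tensor multiplicativity of the geometric mean) yields, after the partial trace bookkeeping analogous to that in the proof of Proposition~\ref{prop:chain-rule-RLD},
\begin{equation}
\widehat{Q}_\alpha(\mathcal{N}_{A\rightarrow B}(\rho_{RA})\Vert\mathcal{M}_{A\rightarrow B}(\sigma_{RA}))\leq \operatorname{Tr}\!\left[(\sigma^\theta_S)^T\,\operatorname{Tr}_B[G_\alpha(\Gamma^{\mathcal{M}}_{SB},\Gamma^{\mathcal{N}}_{SB})]\right]\cdot\operatorname{Tr}[G_\alpha(\sigma_{RA},\rho_{RA})]
\end{equation}
for $\alpha\in(1,2]$ (with the inequality reversed but the logarithm flipping sign for $\alpha\in(0,1)$, so the final bound comes out the same way). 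Bounding the first factor by $\|\operatorname{Tr}_B[G_\alpha(\Gamma^{\mathcal{M}}_{SB},\Gamma^{\mathcal{N}}_{SB})]\|_\infty=\widehat{Q}_\alpha(\mathcal{N}\Vert\mathcal{M})$ (for $\alpha\in(1,2]$), and the second factor is $\widehat{Q}_\alpha(\rho_{RA}\Vert\sigma_{RA})$; for $\alpha\in(0,1)$ one bounds below by the minimum eigenvalue, which is exactly $\widehat{Q}_\alpha(\mathcal{N}\Vert\mathcal{M})$ from Proposition~\ref{prop:explicit-formula-geo-renyi-ch}. Taking $\frac{1}{\alpha-1}\ln(\cdot)$ and using $\ln(ab)=\ln a+\ln b$ converts the product bound into the additive chain rule, with the sign of $\alpha-1$ correctly matching the direction of the operator inequality in each regime.

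The main obstacle I anticipate is the careful handling of the two $\alpha$-regimes simultaneously: the transformer inequality for $G_\alpha$ points in opposite directions for $\alpha\in(0,1)$ versus $\alpha\in(1,2]$, and correspondingly $\widehat{Q}_\alpha(\mathcal{N}\Vert\mathcal{M})$ is a \emph{minimum} eigenvalue in the first case and an \emph{operator norm} in the second (Proposition~\ref{prop:explicit-formula-geo-renyi-ch}), while $\frac{1}{\alpha-1}$ also changes sign. One must verify that all three sign flips conspire to give the inequality in the stated direction in both cases. A secondary technical point is justifying the manipulations when the support conditions fail for $\alpha\in(0,1)$: there one should work with the regularized $\Gamma^{\mathcal{M}}_{SB}+\varepsilon I$ and $\sigma_{RA}+\varepsilon I$, apply the argument for the full-support case, and take $\varepsilon\to 0^+$ using the limit formulas in Proposition~\ref{prop:explicit-formula-geo-renyi-ch} and Definition~\ref{def:geometric-renyi-rel-ent-main-text}; the tensor-multiplicativity $G_\alpha(Y_1\otimes Y_2, X_1\otimes X_2)=G_\alpha(Y_1,X_1)\otimes G_\alpha(Y_2,X_2)$ and continuity of $G_\alpha$ under the regularization should make this routine but does require a line or two of care. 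The detailed version of all this is relegated to Appendix~\ref{app:geo-renyi-channels-app}.
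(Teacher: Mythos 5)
Your proposal is correct and follows essentially the same route as the paper's proof in Appendix~\ref{app:geo-renyi-channels-app}: post-selected teleportation identity, transformer inequality for the weighted geometric mean, tensor multiplicativity of $G_\alpha$, and then the operator norm (for $\alpha\in(1,2]$) or minimum eigenvalue with $\varepsilon,\delta$ regularization (for $\alpha\in(0,1)$) of the partial trace of $G_\alpha(\Gamma^{\mathcal{M}},\Gamma^{\mathcal{N}})$, with the sign of $\frac{1}{\alpha-1}$ flipping the inequality back in the second regime. The only slip is the citation: the relevant tool is the transformer inequality for $G_\alpha$ in \eqref{eq:transform-ineq-alpha>1}--\eqref{eq:transform-ineq-alpha<1} (from \cite{KA80} and \cite[Lemma~47]{Fang2019a}), not Lemma~\ref{lem:transformer-ineq-basic}, and the intermediate expression involving $(\sigma_S^\theta)^T$ is a leftover from the RLD argument that does not appear here.
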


\begin{proof}
See Appendix~\ref{app:geo-renyi-channels-app}.
\end{proof}

\begin{corollary}
\label{cor:amort-collapse-geo-renyi-ch}The geometric R\'{e}nyi relative
entropy does not increase under amortization for all $\alpha\in(0,1)\cup
(1,2]$:
\begin{equation}
\widehat{D}_{\alpha}(\mathcal{N}\Vert\mathcal{M})=\widehat{D}_{\alpha
}^{\mathcal{A}}(\mathcal{N}\Vert\mathcal{M}),
\end{equation}
where the amortized geometric R\'{e}nyi relative entropy is defined from the
general approach given in \cite{Berta2018c}:
\begin{equation}
\widehat{D}_{\alpha}^{\mathcal{A}}(\mathcal{N}\Vert\mathcal{M}):=\sup
_{\rho_{RA},\sigma_{RA}}\left[  \widehat{D}_{\alpha}(\mathcal{N}_{A\rightarrow
B}(\rho_{RA})\Vert\mathcal{M}_{A\rightarrow B}(\sigma_{RA}))-\widehat
{D}_{\alpha}(\rho_{RA}\Vert\sigma_{RA})\right]  .
\end{equation}

\end{corollary}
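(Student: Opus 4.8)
The plan is to follow the same two-step template used for the amortization collapses in Corollaries~\ref{cor:amort-collapse-root-SLD} and~\ref{cor:amort-collapse-RLD-fish}, with the chain rule of Proposition~\ref{prop:chain-rule-geo-renyi-ch} playing the role of the Fisher-information chain rules. All of the substantive work is already packaged in that proposition; what remains is a short bookkeeping argument together with faithfulness of $\widehat{D}_\alpha$ on identical arguments.

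For the inequality $\widehat{D}_\alpha^{\mathcal{A}}(\mathcal{N}\Vert\mathcal{M}) \geq \widehat{D}_\alpha(\mathcal{N}\Vert\mathcal{M})$, I would restrict the supremum in the definition of $\widehat{D}_\alpha^{\mathcal{A}}$ to pairs with $\rho_{RA} = \sigma_{RA}$ equal to a common quantum state $\omega_{RA}$. Since the support condition $\operatorname{supp}(\omega_{RA}) \subseteq \operatorname{supp}(\omega_{RA})$ holds trivially, the formula \eqref{eq:geo-quasi-ent-main-txt} gives $\widehat{Q}_\alpha(\omega_{RA}\Vert\omega_{RA}) = \operatorname{Tr}[\omega_{RA}(\omega_{RA}^{-1/2}\omega_{RA}\omega_{RA}^{-1/2})^{\alpha}] = \operatorname{Tr}[\omega_{RA}\Pi_{\omega_{RA}}] = 1$, hence $\widehat{D}_\alpha(\omega_{RA}\Vert\omega_{RA}) = \tfrac{1}{\alpha-1}\ln 1 = 0$. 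Therefore
\begin{equation}
\widehat{D}_{\alpha}^{\mathcal{A}}(\mathcal{N}\Vert\mathcal{M}) \geq \sup_{\omega_{RA}}\left[\widehat{D}_{\alpha}(\mathcal{N}_{A\rightarrow B}(\omega_{RA})\Vert\mathcal{M}_{A\rightarrow B}(\omega_{RA})) - 0\right] = \widehat{D}_{\alpha}(\mathcal{N}\Vert\mathcal{M}),
\end{equation}
the last equality being the definition of $\widehat{D}_\alpha(\mathcal{N}\Vert\mathcal{M})$.

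For the reverse inequality, I would apply Proposition~\ref{prop:chain-rule-geo-renyi-ch} to an arbitrary quantum state $\rho_{RA}$ and positive semi-definite $\sigma_{RA}$, which yields $\widehat{D}_{\alpha}(\mathcal{N}_{A\rightarrow B}(\rho_{RA})\Vert\mathcal{M}_{A\rightarrow B}(\sigma_{RA})) - \widehat{D}_{\alpha}(\rho_{RA}\Vert\sigma_{RA}) \leq \widehat{D}_{\alpha}(\mathcal{N}\Vert\mathcal{M})$; taking the supremum over all such $\rho_{RA},\sigma_{RA}$ on the left gives $\widehat{D}_{\alpha}^{\mathcal{A}}(\mathcal{N}\Vert\mathcal{M}) \leq \widehat{D}_{\alpha}(\mathcal{N}\Vert\mathcal{M})$. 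Combining the two bounds gives the claimed equality. I do not anticipate a genuine obstacle here; the only point requiring a word of care is that the restriction to the data-processing interval $\alpha \in (0,1)\cup(1,2]$ is inherited from Proposition~\ref{prop:chain-rule-geo-renyi-ch}, and that pairs for which $\widehat{D}_\alpha(\rho_{RA}\Vert\sigma_{RA}) = +\infty$ (so the difference is formally of the type $\infty - \infty$) are handled by the standard convention from \cite{Berta2018c} that they contribute $-\infty$ to the supremum, equivalently by restricting to pairs with finite $\widehat{D}_\alpha(\rho_{RA}\Vert\sigma_{RA})$; in either reading the chain-rule bound is unaffected.
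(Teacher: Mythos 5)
Your proposal is correct and follows exactly the route the paper takes: the paper's proof of this corollary simply refers back to the arguments for Corollaries~\ref{cor:amort-collapse-root-SLD} and \ref{cor:amort-collapse-RLD-fish}, which are the same two steps you give (the trivial direction from choosing equal input states so that $\widehat{D}_{\alpha}(\omega_{RA}\Vert\omega_{RA})=0$, and the reverse direction from the chain rule of Proposition~\ref{prop:chain-rule-geo-renyi-ch}). Your remark about the $\alpha$-interval being inherited from the chain rule and about the $\infty-\infty$ convention is a sensible extra word of care, but no gap remains.
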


\begin{proof}
The proof is the same as that given for
Corollaries~\ref{cor:amort-collapse-root-SLD}\ and
\ref{cor:amort-collapse-RLD-fish}.
\end{proof}

\begin{proposition}
\label{prop:subadd-serial-comp-geo-renyi}The geometric R\'{e}nyi relative
entropy is subadditive under serial concatenation of quantum channels for
$\alpha\in(0,1)\cup(1,2]$, in the following sense:
\begin{equation}
\widehat{D}_{\alpha}(\mathcal{N}_{2}\circ\mathcal{N}_{1}\Vert\mathcal{M}
_{2}\circ\mathcal{M}_{1})\leq\widehat{D}_{\alpha}(\mathcal{N}_{2}
\Vert\mathcal{M}_{2})+\widehat{D}_{\alpha}(\mathcal{N}_{1}\Vert\mathcal{M}
_{1}),
\end{equation}
where $\mathcal{N}_{1}$ and $\mathcal{N}_{2}$ are quantum channels and
$\mathcal{M}_{1}$ and $\mathcal{M}_{2}$ are completely positive maps.
\end{proposition}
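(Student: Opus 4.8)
The plan is to obtain this as a direct consequence of the chain rule in Proposition~\ref{prop:chain-rule-geo-renyi-ch}, in exactly the same way that Corollaries~\ref{cor:subadd-serial-concat-root-SLD-Fish} and \ref{cor:subadd-serial-concat-RLD-Fish} were deduced from their respective chain rules. First I would dispense with the trivial case: if either $\widehat{D}_{\alpha}(\mathcal{N}_{2}\Vert\mathcal{M}_{2})$ or $\widehat{D}_{\alpha}(\mathcal{N}_{1}\Vert\mathcal{M}_{1})$ equals $+\infty$, then the claimed inequality holds automatically, so I may assume both are finite. Next, I would invoke Remark~\ref{rem:restrict-to-pure-bipartite} to restrict the optimization defining $\widehat{D}_{\alpha}(\mathcal{N}_{2}\circ\mathcal{N}_{1}\Vert\mathcal{M}_{2}\circ\mathcal{M}_{1})$ to pure bipartite states $\psi_{RA}$ with $R\cong A$; note that $\mathcal{M}_{2}\circ\mathcal{M}_{1}$ is completely positive as a composition of completely positive maps, so the channel-divergence formalism indeed applies.

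Fix such a state $\psi_{RA}$, let $B_{1}$ denote the common output system of $\mathcal{N}_{1}$ and $\mathcal{M}_{1}$, and set $\rho_{RB_{1}}:=\mathcal{N}_{1}(\psi_{RA})$, which is a quantum state, together with $\sigma_{RB_{1}}:=\mathcal{M}_{1}(\psi_{RA})$, which is a positive semi-definite operator. Then I would apply Proposition~\ref{prop:chain-rule-geo-renyi-ch} with the channel $\mathcal{N}_{2}$, the completely positive map $\mathcal{M}_{2}$, the state $\rho_{RB_{1}}$, and the positive semi-definite operator $\sigma_{RB_{1}}$, obtaining
\begin{equation}
\widehat{D}_{\alpha}(\mathcal{N}_{2}(\mathcal{N}_{1}(\psi_{RA}))\Vert\mathcal{M}_{2}(\mathcal{M}_{1}(\psi_{RA})))\leq\widehat{D}_{\alpha}(\mathcal{N}_{2}\Vert\mathcal{M}_{2})+\widehat{D}_{\alpha}(\mathcal{N}_{1}(\psi_{RA})\Vert\mathcal{M}_{1}(\psi_{RA})).
\end{equation}
Bounding the last term by $\widehat{D}_{\alpha}(\mathcal{N}_{1}\Vert\mathcal{M}_{1})$ using the definition of the geometric R\'enyi relative entropy of channels as a supremum over input states, and then taking the supremum of the left-hand side over all pure bipartite $\psi_{RA}$, yields the claimed inequality.

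The argument involves no hard computation; the only points requiring a moment of care are that the chain rule is being used in the generality where the second arguments are completely positive maps and positive semi-definite operators (which it already covers, since it was stated precisely in that generality and is valid for all $\alpha\in(0,1)\cup(1,2]$ without support hypotheses) and that support conditions need not be tracked separately, because Proposition~\ref{prop:chain-rule-geo-renyi-ch} already absorbs them into its statement. I therefore do not expect any genuine obstacle here; the content of the result is entirely carried by the chain rule.
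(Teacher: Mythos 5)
Your proposal is correct and follows essentially the same route as the paper, which itself proves Proposition~\ref{prop:subadd-serial-comp-geo-renyi} by the argument pattern of Corollaries~\ref{cor:subadd-serial-concat-root-SLD-Fish} and \ref{cor:subadd-serial-concat-RLD-Fish}: fix an input, apply the chain rule of Proposition~\ref{prop:chain-rule-geo-renyi-ch} with $\rho_{RB_1}=\mathcal{N}_1(\psi_{RA})$ and $\sigma_{RB_1}=\mathcal{M}_1(\psi_{RA})$, bound the residual term by the channel divergence, and take the supremum over inputs. Your added remarks on the trivial infinite case and on the chain rule's generality for completely positive maps are harmless and consistent with the paper.
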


\begin{proof}
The proof is the same as that given for
Corollaries~\ref{cor:subadd-serial-concat-root-SLD-Fish}\ and
\ref{cor:subadd-serial-concat-RLD-Fish}.
\end{proof}

\medskip Just as the geometric fidelity of quantum states is a special case of
geometric R\'{e}nyi relative entropy, so is the geometric fidelity of quantum
channels:
\begin{equation}
\widehat{F}(\mathcal{N},\mathcal{M}):=\inf_{\psi_{RA}}\widehat{F}
(\mathcal{N}_{A\rightarrow B}(\psi_{RA}),\mathcal{M}_{A\rightarrow B}
(\psi_{RA})),
\end{equation}
where $\mathcal{N}_{A\rightarrow B}$ is a quantum channel and $\mathcal{M}
_{A\rightarrow B}$ is a completely positive map. By employing
Proposition~\ref{prop:explicit-formula-geo-renyi-ch}, we find the following
formula for the geometric fidelity of channels:
\begin{equation}
\widehat{F}(\mathcal{N},\mathcal{M})=\left[  \lim_{\varepsilon\rightarrow
0^{+}}\lambda_{\min}\left(  \operatorname{Tr}_{B}[(\Gamma_{RB}^{\mathcal{M}
_{\varepsilon}})^{1/2}((\Gamma_{RB}^{\mathcal{M}_{\varepsilon}})^{-1/2}
\Gamma_{RB}^{\mathcal{N}}(\Gamma_{RB}^{\mathcal{M}_{\varepsilon}}
)^{-1/2})^{1/2}(\Gamma_{RB}^{\mathcal{M}_{\varepsilon}})^{1/2}\right)
\right]  ^{2}. \label{eq:explicit-formula-geo-fid-channels}
\end{equation}
By exploiting this formula, we arrive at the following semi-definite program
for the geometric fidelity of quantum channels:

\begin{proposition}
The geometric channel fidelity of a quantum channel $\mathcal{N}$ and a
full-rank completely positive map $\mathcal{M}$ can be calculated by means of
the following semi-definite program:
\begin{equation}
\sqrt{\widehat{F}}(\mathcal{N},\mathcal{M})=\sup_{\mu\geq0,X_{RB}\geq0}\mu,
\label{eq:sdp-primal-geo-fid-1}
\end{equation}
subject to
\begin{equation}
\begin{bmatrix}
\Gamma_{RB}^{\mathcal{N}} & X_{RB}\\
X_{RB} & \Gamma_{RB}^{\mathcal{M}}
\end{bmatrix}
\geq0,\qquad\mu I_{R}\leq\operatorname{Tr}_{B}[X_{RB}].
\label{eq:sdp-primal-geo-fid-2}
\end{equation}
The dual program is given by
\begin{equation}
\inf_{\rho_{R}\geq0,Y_{RB},W_{RB},Z_{RB}}\operatorname{Tr}[\Gamma
_{RB}^{\mathcal{N}}Y_{RB}]+\operatorname{Tr}[\Gamma_{RB}^{\mathcal{M}}Z_{RB}]
\end{equation}
subject to
\begin{equation}
\begin{bmatrix}
Y_{RB} & W_{RB}^{\dag}\\
W_{RB} & Z_{RB}
\end{bmatrix}
\geq0,\qquad W_{RB}+W_{RB}^{\dag}\geq\rho_{R}\otimes I_{B},\qquad
\operatorname{Tr}[\rho_{R}]=1.
\end{equation}

\end{proposition}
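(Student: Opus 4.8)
The plan is to reduce everything to the explicit formula \eqref{eq:explicit-formula-geo-fid-channels}. Since $\mathcal{M}$ is full-rank, its Choi operator $\Gamma_{RB}^{\mathcal{M}}$ is positive definite, so $\Gamma_{RB}^{\mathcal{M}_{\varepsilon}}=\Gamma_{RB}^{\mathcal{M}}+\varepsilon I_{RB}\to\Gamma_{RB}^{\mathcal{M}}$ and every operation appearing inside the limit in \eqref{eq:explicit-formula-geo-fid-channels} (inversion, square root, product, partial trace, minimum eigenvalue) is continuous at $\Gamma_{RB}^{\mathcal{M}}$. Hence the limit can be evaluated, giving
\begin{equation}
\sqrt{\widehat{F}}(\mathcal{N},\mathcal{M})=\lambda_{\min}\!\left(\operatorname{Tr}_{B}[G_{1/2}(\Gamma_{RB}^{\mathcal{M}},\Gamma_{RB}^{\mathcal{N}})]\right),
\end{equation}
where $G_{1/2}(X,Y)=X^{1/2}(X^{-1/2}YX^{-1/2})^{1/2}X^{1/2}$ is the operator geometric mean of $X$ and $Y$; this is also exactly the support-condition case of Proposition~\ref{prop:explicit-formula-geo-renyi-ch} for $\alpha=1/2$, since $\Pi_{\Gamma^{\mathcal{M}}}^{\perp}=0$ here so that $\widetilde{\Gamma_{RB}^{\mathcal{N}}}=\Gamma_{RB}^{\mathcal{N}}$.

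Next I would invoke the classical maximal characterization of the operator geometric mean (see, e.g., \cite{LL01}, and as already used in \cite{Fang2019a}): for $A$ positive definite and $B$ positive semi-definite, $G_{1/2}(A,B)$ is the Löwner-largest Hermitian operator $X$ with $\left[\begin{smallmatrix}A & X\\ X & B\end{smallmatrix}\right]\geq0$, and swapping the two diagonal blocks is a congruence that preserves positivity, so the same $X=G_{1/2}(A,B)$ is the largest Hermitian operator satisfying $\left[\begin{smallmatrix}B & X\\ X & A\end{smallmatrix}\right]\geq0$. Applied with $A=\Gamma_{RB}^{\mathcal{M}}$ and $B=\Gamma_{RB}^{\mathcal{N}}$: every Hermitian $X_{RB}$ feasible for the block-operator inequality in \eqref{eq:sdp-primal-geo-fid-2} obeys $X_{RB}\leq G_{1/2}(\Gamma_{RB}^{\mathcal{M}},\Gamma_{RB}^{\mathcal{N}})$, while $X_{RB}=G_{1/2}(\Gamma_{RB}^{\mathcal{M}},\Gamma_{RB}^{\mathcal{N}})\geq0$ is itself feasible, so imposing $X_{RB}\geq0$ in \eqref{eq:sdp-primal-geo-fid-1}--\eqref{eq:sdp-primal-geo-fid-2} loses nothing. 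Because $\operatorname{Tr}_{B}$ and $\lambda_{\min}$ are both monotone with respect to the Löwner order, and $\lambda_{\min}(M)=\sup\{\mu\in\mathbb{R}:\mu I\leq M\}$, the displayed expression equals the supremum of $\mu$ over $\mu I_{R}\leq\operatorname{Tr}_{B}[X_{RB}]$ and the feasible $X_{RB}$, which is precisely the primal program \eqref{eq:sdp-primal-geo-fid-1}--\eqref{eq:sdp-primal-geo-fid-2}.

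To obtain the dual I would recast this primal in the standard form recalled in \eqref{eq:standard-SDP-form-RLD-ch-helper}, with $\left[\begin{smallmatrix}Y_{RB} & W_{RB}^{\dag}\\ W_{RB} & Z_{RB}\end{smallmatrix}\right]\geq0$ the multiplier dual to the $2\times2$ block-operator constraint and $\rho_{R}\geq0$ the multiplier dual to $\operatorname{Tr}_{B}[X_{RB}]\geq\mu I_{R}$. In the Lagrangian, the coefficient of the scalar $\mu$ forces $\operatorname{Tr}[\rho_{R}]=1$, and, since $X_{RB}$ effectively ranges over all Hermitian operators (its positivity being redundant by the previous paragraph), the coefficient $W_{RB}+W_{RB}^{\dag}+\rho_{R}\otimes I_{B}$ of $X_{RB}$ must vanish; after the harmless congruence $W_{RB}\mapsto -W_{RB}$, which leaves $\left[\begin{smallmatrix}Y_{RB} & W_{RB}^{\dag}\\ W_{RB} & Z_{RB}\end{smallmatrix}\right]\geq0$ intact exactly as in the RLD-channel SDP derivation earlier in this section, this becomes $W_{RB}+W_{RB}^{\dag}\geq\rho_{R}\otimes I_{B}$. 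The surviving part of the Lagrangian is $\operatorname{Tr}[\Gamma_{RB}^{\mathcal{N}}Y_{RB}]+\operatorname{Tr}[\Gamma_{RB}^{\mathcal{M}}Z_{RB}]$, yielding the stated dual. Strong duality (no gap, dual infimum attained) then follows from Slater's condition: full-rankness of $\mathcal{M}$ makes $\Gamma_{RB}^{\mathcal{M}}$ positive definite, so strictly feasible interior points exist for both programs.

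I expect the genuinely load-bearing steps to be the first two: justifying that the full-rank hypothesis lets the $\varepsilon$-regularization in \eqref{eq:explicit-formula-geo-fid-channels} be removed, and deploying the maximal (SDP) characterization of the operator geometric mean together with the block-swap symmetry. Everything after that — converting $\lambda_{\min}$ into a supremum over $\mu$, and the Lagrangian bookkeeping that produces the dual — is routine and closely parallels the RLD Fisher information SDP computations already carried out in this section.
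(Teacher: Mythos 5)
Your proposal is correct and follows essentially the same route as the paper: both start from the explicit formula \eqref{eq:explicit-formula-geo-fid-channels}, use the maximal (block-matrix) characterization of the operator geometric mean together with $\lambda_{\min}(L)=\sup\{\mu\geq 0:\mu I\leq L\}$ for the primal, and obtain the dual by standard-form SDP duality with strong duality from strict feasibility. The only cosmetic differences are that you make explicit the block-swap symmetry and the redundancy of $X_{RB}\geq 0$ (which the paper leaves implicit in its citation of the geometric-mean characterization), and your remark that the coefficient of $X_{RB}$ ``must vanish'' should read as the inequality $W_{RB}+W_{RB}^{\dag}\geq\rho_{R}\otimes I_{B}$ that you in fact write, since the primal retains the constraint $X_{RB}\geq 0$.
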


\begin{proof}
As argued above, the geometric fidelity of quantum channels is given by the
expression in \eqref{eq:explicit-formula-geo-fid-channels}, which involves the
standard operator geometric mean of $\Gamma_{RB}^{\mathcal{M}}$ and
$\Gamma_{RB}^{\mathcal{N}}$:
\begin{equation}
G_{\frac{1}{2}}(\Gamma_{RB}^{\mathcal{M}},\Gamma_{RB}^{\mathcal{N}}) :=(\Gamma_{RB}
^{\mathcal{M}})^{1/2}((\Gamma_{RB}^{\mathcal{M}})^{-1/2}\Gamma_{RB}
^{\mathcal{N}}(\Gamma_{RB}^{\mathcal{M}})^{-1/2})^{1/2}(\Gamma_{RB}
^{\mathcal{M}})^{1/2}
\end{equation}
and the minimum eigenvalue of its partial trace over system $B$. The following
characterization of $G_{\frac{1}{2}}(\Gamma_{RB}^{\mathcal{M}},\Gamma_{RB}^{\mathcal{N}})$ is
well known \cite{Bhat07}
\begin{equation}
G_{\frac{1}{2}}(\Gamma_{RB}^{\mathcal{M}},\Gamma_{RB}^{\mathcal{N}}) = \sup\left\{  X_{RB}\geq0:
\begin{bmatrix}
\Gamma_{RB}^{\mathcal{N}} & X_{RB}\\
X_{RB} & \Gamma_{RB}^{\mathcal{M}}
\end{bmatrix}
\geq0\right\}  , \label{eq:geometric-mean-opt-char}
\end{equation}
where the ordering is with respect to the operator order (L\"{o}wner order).
Additionally, the minimum eigenvalue of a positive semi-definite operator $L$
is given by
\begin{equation}
\lambda_{\min}(L)=\sup\left\{  \mu\geq0:\mu I\leq L\right\}  .
\label{eq:min-eig-opt-char}
\end{equation}
Putting together \eqref{eq:geometric-mean-opt-char} and
\eqref{eq:min-eig-opt-char}, we conclude \eqref{eq:sdp-primal-geo-fid-1}--\eqref{eq:sdp-primal-geo-fid-2}.

To find the dual program, consider that the dual characterization of the
minimum eigenvalue $\lambda_{\min}(L)$ of an operator $L$ is as follows:
\begin{equation}
\lambda_{\min}(L)=\inf_{\rho\geq0,\operatorname{Tr}[\rho]=1}\operatorname{Tr}
[L\rho],
\end{equation}
so that
\begin{equation}
\sqrt{\widehat{F}}(\mathcal{N},\mathcal{M})=\inf_{\rho_{R}\geq
0,\operatorname{Tr}[\rho_{R}]=1}\sup_{X_{RB}\geq0}\operatorname{Tr}[\rho
_{R}\operatorname{Tr}_{B}[X_{RB}]] \label{eq:simplified-opt-geo-fid-ch}
\end{equation}
subject to
\begin{equation}
\begin{bmatrix}
\Gamma_{RB}^{\mathcal{N}} & X_{RB}\\
X_{RB} & \Gamma_{RB}^{\mathcal{M}}
\end{bmatrix}
\geq0.
\end{equation}
For fixed $\rho_{R}$, we can then consider finding the dual of the following program:
\begin{equation}
\sup_{X_{RB}\geq0}\operatorname{Tr}[\rho_{R}\operatorname{Tr}_B[X_{RB}]]
\label{eq:intermediary-SDP}
\end{equation}
subject to
\begin{equation}
\begin{bmatrix}
\Gamma_{RB}^{\mathcal{N}} & X_{RB}\\
X_{RB} & \Gamma_{RB}^{\mathcal{M}}
\end{bmatrix}
\geq0.
\end{equation}
Considering that
\begin{align}
\begin{bmatrix}
\Gamma_{RB}^{\mathcal{N}} & X_{RB}\\
X_{RB} & \Gamma_{RB}^{\mathcal{M}}
\end{bmatrix}
\geq0\qquad &  \Longleftrightarrow\qquad
\begin{bmatrix}
\Gamma_{RB}^{\mathcal{N}} & -X_{RB}\\
-X_{RB} & \Gamma_{RB}^{\mathcal{M}}
\end{bmatrix}
\geq0\\
&  \Longleftrightarrow\qquad
\begin{bmatrix}
\Gamma_{RB}^{\mathcal{N}} & 0\\
0 & \Gamma_{RB}^{\mathcal{M}}
\end{bmatrix}
\geq
\begin{bmatrix}
0 & X_{RB}\\
X_{RB} & 0
\end{bmatrix}
,
\end{align}
the standard form of the SDP\ is
\begin{equation}
\sup_{X\geq0}\left\{  \operatorname{Tr}[AX]:\Phi(X)\leq B\right\}  ,
\end{equation}
with
\begin{equation}
A=\rho_{R}\otimes I_{B},\quad\Phi(X_{RB})=
\begin{bmatrix}
0 & X_{RB}\\
X_{RB} & 0
\end{bmatrix}
,\quad B=
\begin{bmatrix}
\Gamma_{RB}^{\mathcal{N}} & 0\\
0 & \Gamma_{RB}^{\mathcal{M}}
\end{bmatrix}
.
\end{equation}
Then the dual map $\Phi^{\dag}$ is given by
\begin{equation}
\operatorname{Tr}[Y\Phi(X)]=\operatorname{Tr}[\Phi^{\dag}(Y)X],
\end{equation}
so that
\begin{align}
\operatorname{Tr}\left[
\begin{bmatrix}
Y_{RB} & W_{RB}^{\dag}\\
W_{RB} & Z_{RB}
\end{bmatrix}
\Phi(X_{RB})\right]   &  =\operatorname{Tr}\left[
\begin{bmatrix}
Y_{RB} & W_{RB}^{\dag}\\
W_{RB} & Z_{RB}
\end{bmatrix}
\begin{bmatrix}
0 & X_{RB}\\
X_{RB} & 0
\end{bmatrix}
\right] \\
&  =\operatorname{Tr}\left[  \left(  W_{RB}+W_{RB}^{\dag}\right)
X_{RB}\right]  ,
\end{align}
and we thus identify
\begin{equation}
\Phi^{\dag}(Y)=W_{RB}+W_{RB}^{\dag}.
\end{equation}
Then plugging in to the standard form of the dual program
\begin{equation}
\inf_{Y\geq0}\left\{  \operatorname{Tr}[BY]:\Phi^{\dag}(Y)\geq A\right\}  ,
\end{equation}
we find that it is given by
\begin{equation}
\inf\operatorname{Tr}[\Gamma_{RB}^{\mathcal{N}}Y_{RB}]+\operatorname{Tr}
[\Gamma_{RB}^{\mathcal{M}}Z_{RB}] \label{eq:intermediary-SDP-dual}
\end{equation}
subject to
\begin{equation}
\begin{bmatrix}
Y_{RB} & W_{RB}^{\dag}\\
W_{RB} & Z_{RB}
\end{bmatrix}
\geq0,\qquad W_{RB}+W_{RB}^{\dag}\geq\rho_{R}\otimes I_{B}.
\end{equation}
So applying strong duality to assert equality of \eqref{eq:intermediary-SDP}
and \eqref{eq:intermediary-SDP-dual}\ and combining this with
\eqref{eq:simplified-opt-geo-fid-ch}, the geometric fidelity of quantum
channels $\mathcal{N}$ and $\mathcal{M}$ can be computed as
\begin{equation}
\inf_{\rho_{R}\geq0}\operatorname{Tr}[\Gamma_{RB}^{\mathcal{N}}Y_{RB}
]+\operatorname{Tr}[\Gamma_{RB}^{\mathcal{M}}Z_{RB}]
\end{equation}
subject to
\begin{equation}
\begin{bmatrix}
Y_{RB} & W_{RB}^{\dag}\\
W_{RB} & Z_{RB}
\end{bmatrix}
\geq0,\qquad W_{RB}+W_{RB}^{\dag}\geq\rho_{R}\otimes I_{B},\qquad
\operatorname{Tr}[\rho_{R}]=1.
\end{equation}
Strong duality holds because we can choose $\rho_{R}=\pi_{R}$ (maximally mixed
state) $W_{RB}=I_{RB}$ and $Y_{RB}=Z_{RB}=2I_{RB}$ so that all constraints in
the dual program are strict. This concludes the proof.
\end{proof}

\subsection{Geometric fidelity of quantum channels as a limit on symmetric
channel discrimination}

One main use of the geometric fidelity of quantum channels is as a limit on
the error exponent of symmetric channel discrimination:

\begin{conclusion}
As a direct consequence of Eq.~(159)\ of \cite{Berta2018c}, the inequality in
\eqref{eq:sandwiched-to-geo-bound-main-text}, the meta-converse from
\cite[Lemma~14]{Berta2018c}, and the amortization collapse in
Corollary~\ref{cor:amort-collapse-geo-renyi-ch}, the following bound holds for
the non-asymptotic Chernoff error exponent $\xi_{n}(p,\mathcal{N}
,\mathcal{M})$ of symmetric channel discrimination of quantum channels
$\mathcal{N}_{A\rightarrow B}$ and $\mathcal{M}_{A\rightarrow B}$:
\begin{equation}
\xi_{n}(p,\mathcal{N},\mathcal{M})\leq\widehat{D}_{1/2}(\mathcal{N}
\Vert\mathcal{M})-\frac{1}{n}\ln[p(1-p)],
\end{equation}
where $\xi_{n}(p,\mathcal{N},\mathcal{M})$ is defined in
\eqref{eq:non-asymptotic-Chernoff-exp}. Thus, we conclude the following bound
on the asymptotic exponent:
\begin{equation}
\overline{\xi}(\mathcal{N},\mathcal{M})\leq\widehat{D}_{1/2}(\mathcal{N}
\Vert\mathcal{M}).
\end{equation}

\end{conclusion}

This result is a significant improvement over the bound from
\cite[Proposition~21]{Berta2018c} because $\widehat{D}_{1/2}(\mathcal{N}
\Vert\mathcal{M})\leq\min\{D_{\max}(\mathcal{N}\Vert\mathcal{M}),D_{\max
}(\mathcal{M}\Vert\mathcal{N})\}$, due to
\eqref{eq:geo-renyi-to-dmax-main-text}\ and
\eqref{eq:mono-geo-renyi-main-text}. It is also efficiently computable, so
that it improves as well upon the amortized fidelity bound from
\cite[Proposition~21]{Berta2018c} and \cite{CE18}.

An achievable rate for symmetric channel discrimination is given by the
Chernoff information of quantum channels \cite{Berta2018c}, defined as
\begin{equation}
C(\mathcal{N}\Vert\mathcal{M}):=\sup_{\psi_{RA},\alpha\in\left(  0,1\right)
}\left(  1-\alpha\right)  D_{\alpha}(\mathcal{N}_{A\rightarrow B}(\psi
_{RA})\Vert\mathcal{M}_{A\rightarrow B}(\psi_{RA})),
\end{equation}
where the Petz--R\'{e}nyi relative entropy $D_{\alpha}(\rho\Vert\sigma)$ of a
quantum state and a positive semi-definite operator $\sigma$ is defined for
$\alpha\in(0,1)\cup(1,\infty)$ as \cite{P85,P86}
\begin{equation}
D_{\alpha}(\rho\Vert\sigma)=\left\{
\begin{array}
[c]{cc}
\frac{1}{\alpha-1}\ln\operatorname{Tr}[\rho^{\alpha}\sigma^{1-\alpha}] &
\begin{array}
[c]{c}
\text{if }\alpha\in(0,1)\text{ or}\\
\alpha\in(1,\infty)\text{ and }\operatorname{supp}(\rho)\subseteq
\operatorname{supp}(\sigma)
\end{array}
\\
+\infty & \text{otherwise}
\end{array}
\right.  .
\end{equation}
This corresponds to a parallel discrimination strategy in which we feed in one
share of a state $\psi_{RA}$ to each use of the channel and then perform a
collective measurement on all of the output systems (this is even a special
case of what is depicted in Figure~\ref{fig:parallel-scheme}). That is, we
have that
\begin{equation}
C(\mathcal{N}\Vert\mathcal{M})\leq\underline{\xi}(\mathcal{N},\mathcal{M}
)\leq\overline{\xi}(\mathcal{N},\mathcal{M})\leq\widehat{D}_{1/2}
(\mathcal{N}\Vert\mathcal{M}).
\end{equation}

In Figures~\ref{fig:ch-disc-loss} and \ref{fig:ch-disc-noise}, we compare the
achievable lower bound given by $C(\mathcal{N}\Vert\mathcal{M})$ with the
general upper bound set by $\widehat{D}_{1/2}(\mathcal{N}\Vert\mathcal{M})$
for the case of the generalized amplitude damping channel defined in
\eqref{eq:GADC}, for various values of the loss and noise parameters.

\begin{figure}[ptb]
\begin{subfigure}{.5\textwidth}
\centering
\includegraphics[width=1.0\linewidth]{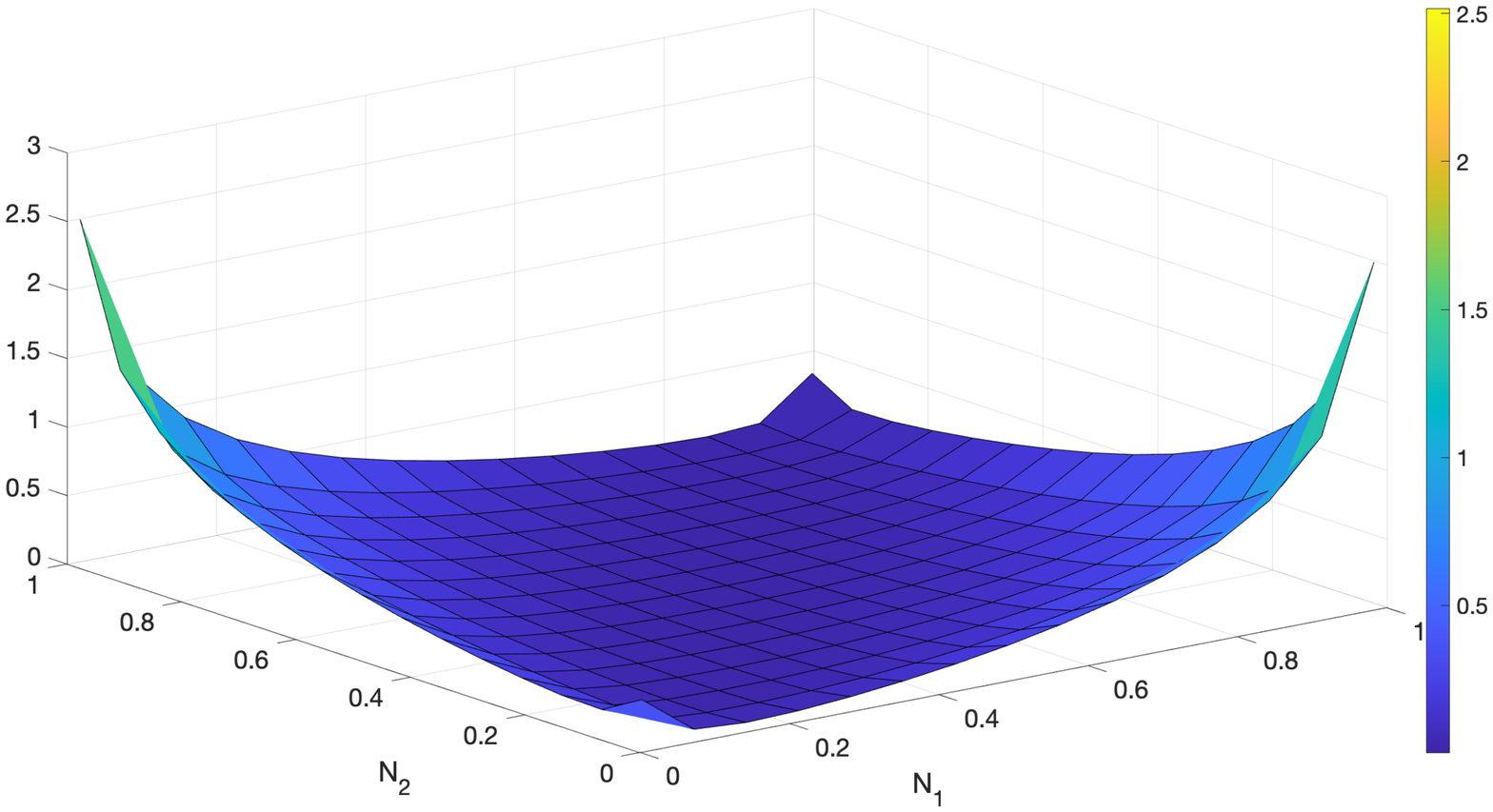}
\caption{}
\label{fig:ch-disc-loss-fixed-0.8-0.7}
\end{subfigure}
\begin{subfigure}{.5\textwidth}
\centering
\includegraphics[width=1.0\linewidth]{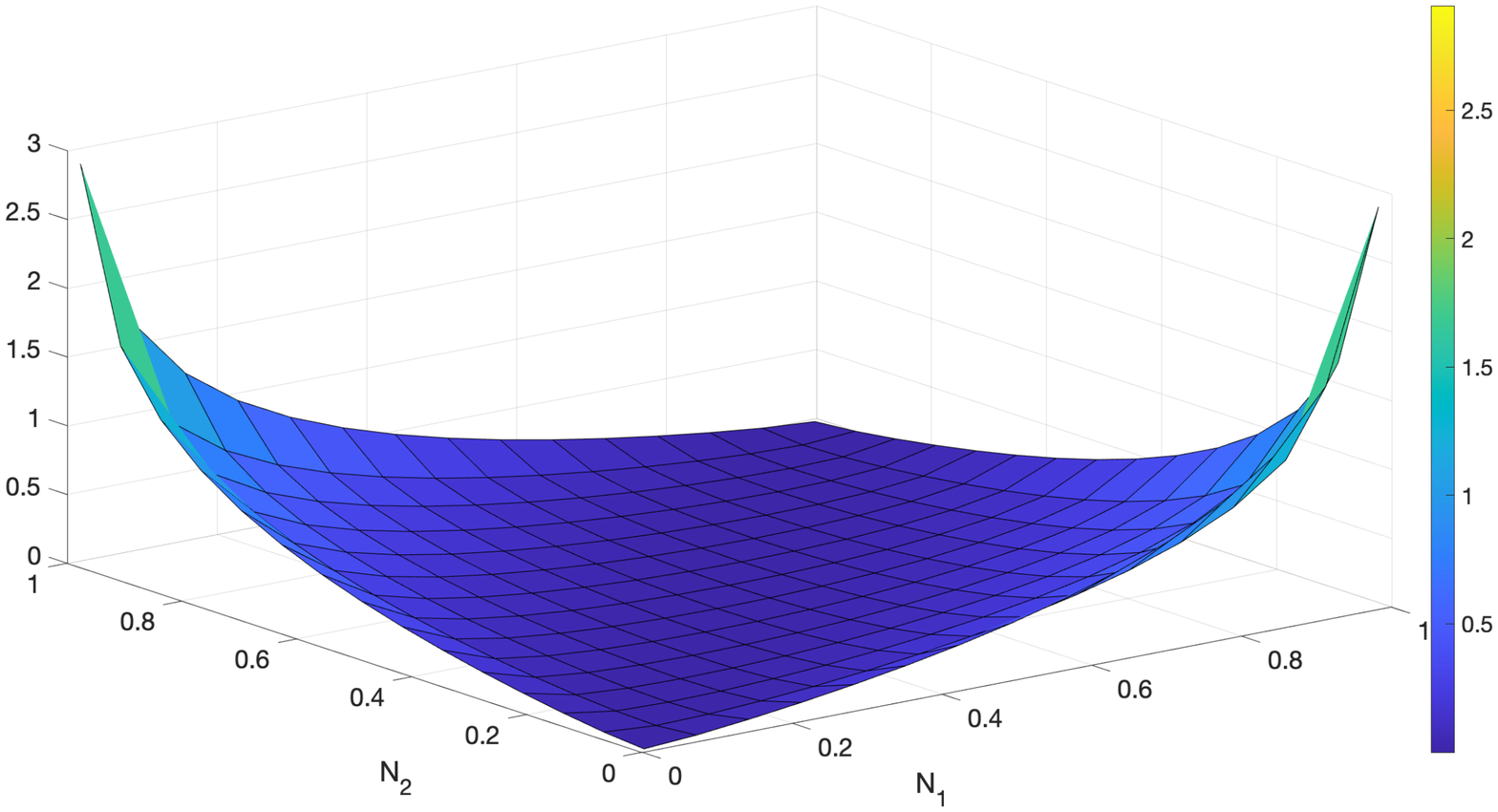}
\caption{}
\label{fig:ch-disc-loss-fixed-0.5-0.5}
\end{subfigure}
\caption{(a)~Difference of the geometric fidelity upper bound and Chernoff
information lower bound for generalized amplitude damping channels with fixed
loss $\gamma_{1} = 0.8$ and $\gamma_{2} = 0.7$. (b)~Same plot but fixed loss
$\gamma_{1} = 0.5$ and $\gamma_{2} = 0.5$.}
\label{fig:ch-disc-loss}
\end{figure}

\begin{figure}[ptb]
\begin{subfigure}{.5\textwidth}
\centering
\includegraphics[width=1.0\linewidth]{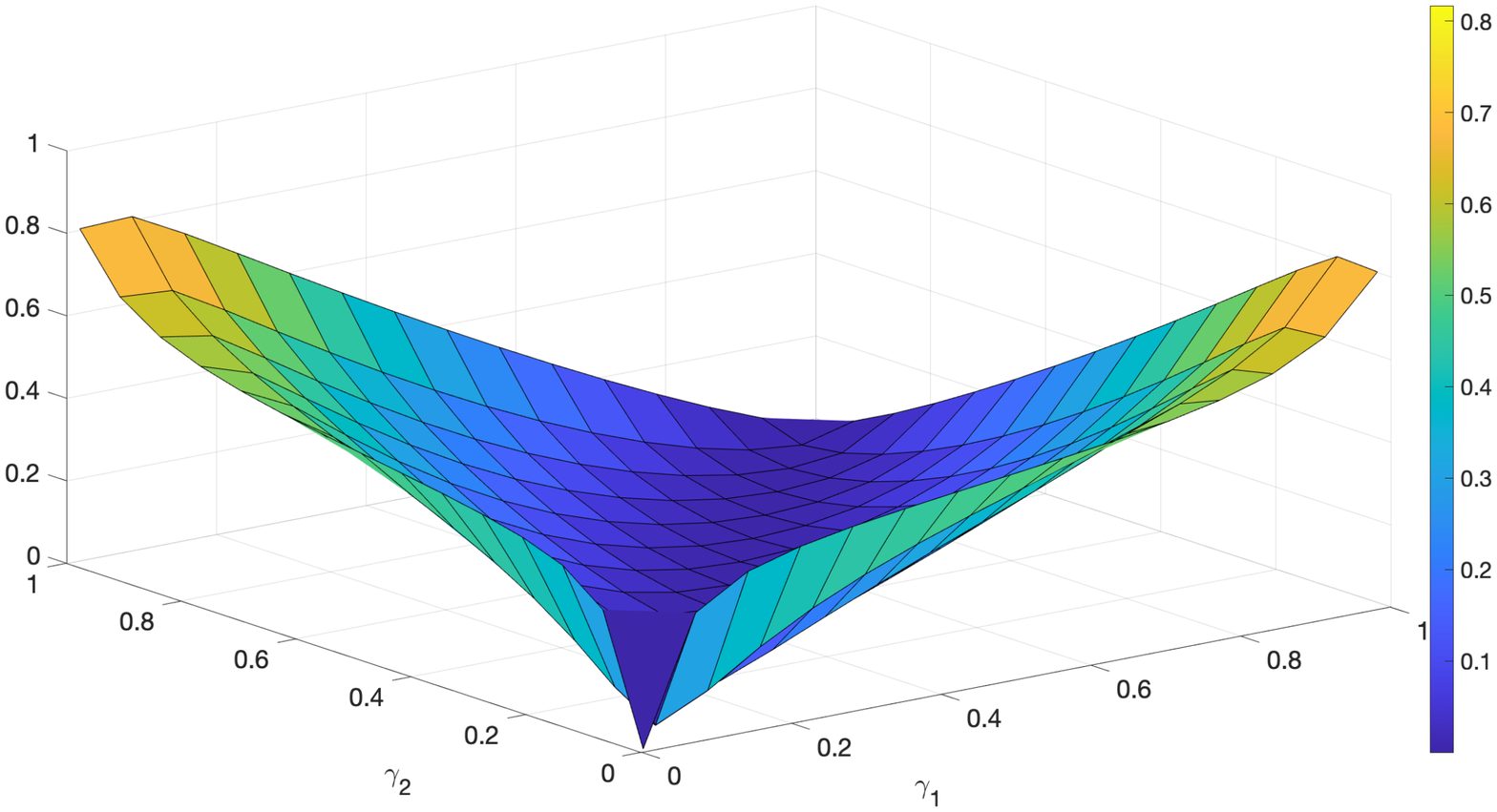}
\caption{}
\label{fig:ch-disc-loss-N-0.2}
\end{subfigure}
\begin{subfigure}{.5\textwidth}
\centering
\includegraphics[width=1.0\linewidth]{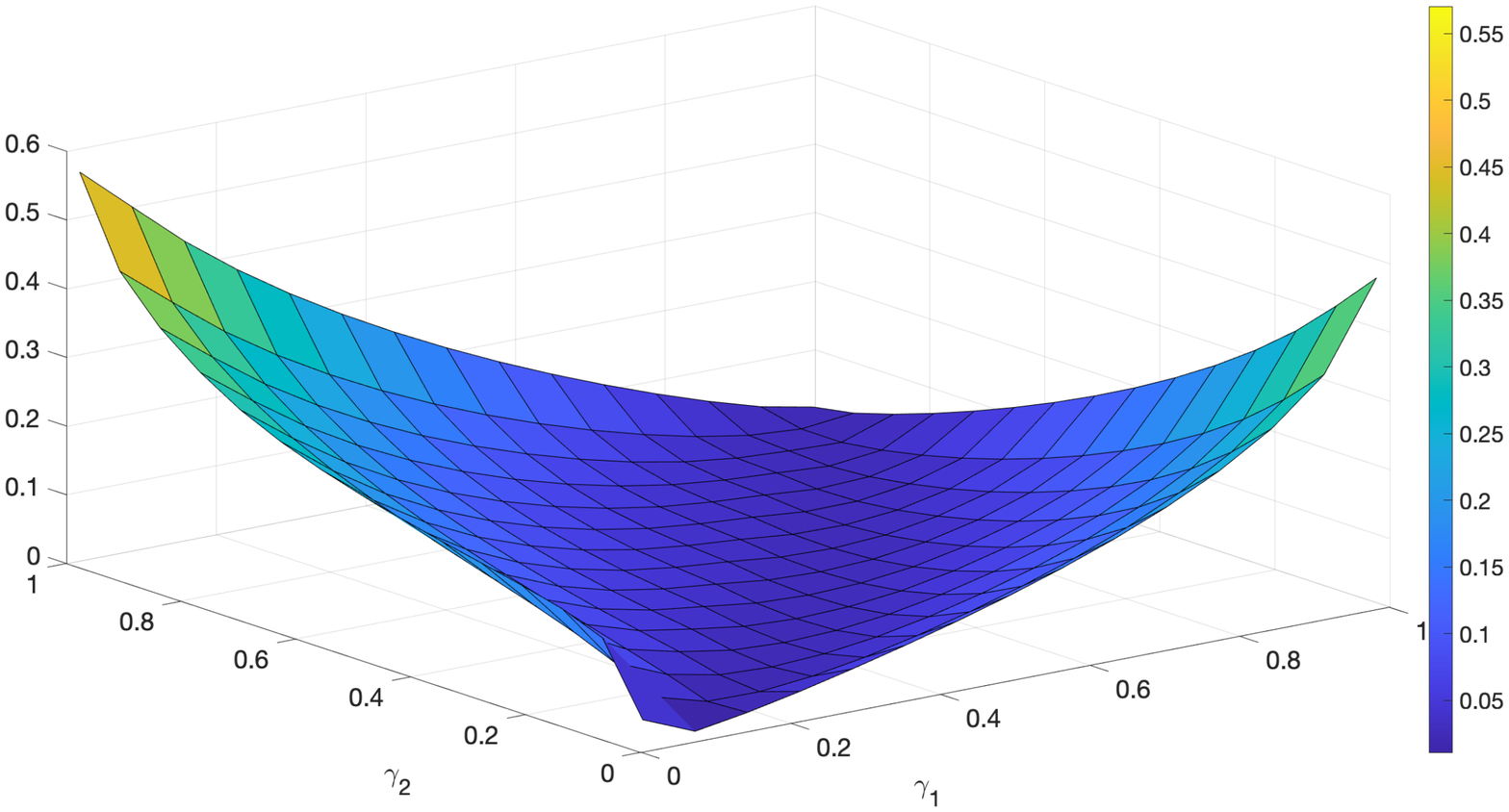}
\caption{}
\label{fig:ch-disc-loss-N-0.3}
\end{subfigure}
\caption{a)~Difference of the geometric fidelity upper bound and Chernoff
information lower bound for generalized amplitude damping channels with fixed
noise $N_{1} = 0.2$ and $N_{2} = 0.2$. (b)~Same plot but fixed noise $N_{1} =
0.3$ and $N_{2} = 0.5$.}
\label{fig:ch-disc-noise}
\end{figure}

\subsection{Belavkin--Staszewski divergence sphere as a limit on error
exponent of quantum channel discrimination}

In this section, we establish a limit on the asymptotic Hoeffding error
exponent for quantum channel discrimination. We find a generic upper bound for
arbitrary quantum channels in terms of what we call the Belavkin--Staszewski
divergence sphere formula.

\begin{proposition}
For quantum channels $\mathcal{N}$ and $\mathcal{M}$, the Belavkin--Staszewski
divergence sphere is an upper bound on their asymptotic Hoeffding error
exponent for quantum channel discrimination:
\begin{equation}
\overline{B}(r,\mathcal{N},\mathcal{M})\leq\inf_{\mathcal{T}:\widehat
{D}(\mathcal{T}\Vert\mathcal{M})\leq r}\widehat{D}(\mathcal{T}\Vert
\mathcal{N}). \label{eq:hoeffding-bound-BS-rel-ent-opt}
\end{equation}

\end{proposition}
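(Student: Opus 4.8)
The plan is to reduce the $n$-use sequential problem to a single channel use by means of the meta-converse and amortization collapse of the geometric R\'enyi relative entropy, feed the result into an elementary binary hypothesis-testing estimate, and finally pass to the limit $\alpha\to 1^{+}$. If the right-hand side of \eqref{eq:hoeffding-bound-BS-rel-ent-opt} is $+\infty$ there is nothing to prove, so assume it is finite and fix an arbitrary quantum channel $\mathcal{T}$ with $\widehat{D}(\mathcal{T}\Vert\mathcal{M})<r$; the aim is to prove $\overline{B}(r,\mathcal{N},\mathcal{M})\le\widehat{D}(\mathcal{T}\Vert\mathcal{N})$, after which one takes the infimum over such $\mathcal{T}$ and relaxes the strictness of the constraint.

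First I would fix $\alpha\in(1,2]$ and an arbitrary $n$-round sequential protocol $\{\mathcal{S}^{(n)},\Lambda^{\hat\theta}\}$ obeying the constraint $\beta_n\le e^{-nr}$. Running the \emph{same} protocol (same input state and interleaving channels) with the round channel replaced by $\mathcal{T}$, $\mathcal{N}$, or $\mathcal{M}$ produces pre-measurement states $\tau_n$, $\omega_n^{\mathcal{N}}$, $\omega_n^{\mathcal{M}}$. Telescoping exactly as in the proof of Theorem~\ref{thm:meta-converse}, but using the chain rule of Proposition~\ref{prop:chain-rule-geo-renyi-ch} in place of the Fisher-information chain rule, together with the amortization collapse of Corollary~\ref{cor:amort-collapse-geo-renyi-ch} and data processing \eqref{eq:DP-geo-Renyi-main-txt}, gives
\begin{equation}
\widehat{D}_{\alpha}(\tau_n\Vert\omega_n^{\mathcal{N}})\le n\,\widehat{D}_{\alpha}(\mathcal{T}\Vert\mathcal{N}),\qquad \widehat{D}_{\alpha}(\tau_n\Vert\omega_n^{\mathcal{M}})\le n\,\widehat{D}_{\alpha}(\mathcal{T}\Vert\mathcal{M}).
\end{equation}
Applying \eqref{eq:DP-geo-Renyi-main-txt} once more with the binary measurement channel $(\cdot)\mapsto(\operatorname{Tr}[\Lambda_{R_nB_n}(\cdot)],\operatorname{Tr}[(I-\Lambda_{R_nB_n})(\cdot)])$, and using that the geometric R\'enyi relative entropy coincides with the classical R\'enyi divergence on commuting operators, I would set $q_n:=\operatorname{Tr}[\Lambda_{R_nB_n}\tau_n]$ and, dropping one nonnegative summand in each binary R\'enyi expression, deduce the scalar inequalities $q_n^{\alpha}\beta_n^{1-\alpha}\le e^{(\alpha-1)n\widehat{D}_{\alpha}(\mathcal{T}\Vert\mathcal{M})}$ and $(1-q_n)^{\alpha}\alpha_n^{1-\alpha}\le e^{(\alpha-1)n\widehat{D}_{\alpha}(\mathcal{T}\Vert\mathcal{N})}$.

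The first inequality and $\beta_n\le e^{-nr}$ give $q_n\le e^{\frac{\alpha-1}{\alpha}n(\widehat{D}_{\alpha}(\mathcal{T}\Vert\mathcal{M})-r)}$; since $\widehat{D}_{\alpha}(\mathcal{T}\Vert\mathcal{M})\to\widehat{D}(\mathcal{T}\Vert\mathcal{M})<r$ as $\alpha\to1^{+}$ (monotonicity \eqref{eq:mono-geo-renyi-main-text} and the $\alpha\to1$ limit), one may fix $\alpha$ close enough to $1$ that $\widehat{D}_{\alpha}(\mathcal{T}\Vert\mathcal{M})<r$, whence $q_n\to0$ \emph{uniformly} over all admissible protocols. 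The second inequality then yields $\alpha_n\ge(1-q_n)^{\alpha/(\alpha-1)}e^{-n\widehat{D}_{\alpha}(\mathcal{T}\Vert\mathcal{N})}$, so $-\tfrac1n\ln\alpha_n\le\widehat{D}_{\alpha}(\mathcal{T}\Vert\mathcal{N})+o(1)$ with an $o(1)$ independent of the protocol; hence $B_n(r,\mathcal{N},\mathcal{M})\le\widehat{D}_{\alpha}(\mathcal{T}\Vert\mathcal{N})+o(1)$ and $\overline{B}(r,\mathcal{N},\mathcal{M})\le\widehat{D}_{\alpha}(\mathcal{T}\Vert\mathcal{N})$. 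Letting $\alpha\to1^{+}$ and using $\widehat{D}_{\alpha}(\mathcal{T}\Vert\mathcal{N})\to\widehat{D}(\mathcal{T}\Vert\mathcal{N})$ gives $\overline{B}(r,\mathcal{N},\mathcal{M})\le\widehat{D}(\mathcal{T}\Vert\mathcal{N})$ for every channel $\mathcal{T}$ with $\widehat{D}(\mathcal{T}\Vert\mathcal{M})<r$. Taking the infimum over such $\mathcal{T}$, and then relaxing the strictness via a routine convexity/continuity argument (e.g.\ replacing a near-optimal feasible $\mathcal{T}$ by $(1-\varepsilon)\mathcal{T}+\varepsilon\mathcal{M}$, which is strictly feasible and whose divergence from $\mathcal{N}$ tends to $\widehat{D}(\mathcal{T}\Vert\mathcal{N})$, by joint convexity of $\widehat{D}$), yields \eqref{eq:hoeffding-bound-BS-rel-ent-opt}.

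The conceptual core is the meta-converse reduction, and it is essentially bookkeeping once Theorem~\ref{thm:meta-converse}, Proposition~\ref{prop:chain-rule-geo-renyi-ch}, and Corollary~\ref{cor:amort-collapse-geo-renyi-ch} are in hand. I expect the genuine work to lie in the elementary estimates: the delicate points are to verify that the correction terms ($q_n$ and the factor $(1-q_n)^{\alpha/(\alpha-1)}$) vanish \emph{uniformly} over all admissible protocols so that one may legitimately pass to $\limsup_n$, and to organize the two limits ($\alpha\to1^{+}$ and the infimum over $\mathcal{T}$, together with the convexity argument relaxing the constraint from $<r$ to $\le r$) so that they are mutually consistent and correctly handle the support/infinite-value edge cases.
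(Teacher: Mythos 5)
Your argument is correct, and it reaches the stated bound by a genuinely different route from the paper. The paper's proof keeps the same skeleton --- introduce an intermediate channel $\mathcal{T}$ with $\widehat{D}(\mathcal{T}\Vert\mathcal{M})<r$ and play the given test off against both $\mathcal{M}$ and $\mathcal{N}$ --- but it invokes the strong converse for asymmetric channel discrimination \cite[Theorem~49]{Fang2019a} as a black box: the type-II constraint forces the type-I error of the induced $\mathcal{T}$-versus-$\mathcal{M}$ test to converge to one, so the complementary test for $\mathcal{T}$ versus $\mathcal{N}$ has vanishing type-I error, and a second application of the strong converse caps its type-II exponent by $\widehat{D}(\mathcal{T}\Vert\mathcal{N})$. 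What you do instead is unpack that strong converse inline: the meta-converse of \cite[Lemma~14]{Berta2018c} together with the amortization collapse of Corollary~\ref{cor:amort-collapse-geo-renyi-ch} gives $\widehat{D}_{\alpha}(\tau_n\Vert\omega_n^{\mathcal{N}})\le n\,\widehat{D}_{\alpha}(\mathcal{T}\Vert\mathcal{N})$ and likewise for $\mathcal{M}$, and the two scalar inequalities you extract from the binary measurement are exactly right (for $\alpha\in(1,2]$ the dropped summands are nonnegative and the sign flips from $1-\alpha<0$ work out as you state). Your version buys explicit, protocol-uniform rates --- $q_n$ decays exponentially at a rate depending only on $\alpha$, $\mathcal{T}$, $\mathcal{M}$, $r$ --- which makes the passage to $\limsup_n$ and then $\alpha\to1^{+}$ airtight, at the cost of being longer; the paper's version is shorter but leans on an external theorem whose proof uses essentially the same Rényi machinery. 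One small point to tidy: your perturbation $(1-\varepsilon)\mathcal{T}+\varepsilon\mathcal{M}$ for relaxing the strict constraint controls $\widehat{D}(\cdot\Vert\mathcal{N})$ via joint convexity only when $\widehat{D}(\mathcal{M}\Vert\mathcal{N})<\infty$; when it is infinite one should instead note that any $\mathcal{T}$ achieving a finite value of $\widehat{D}(\mathcal{T}\Vert\mathcal{N})$ already satisfies the relevant support condition, or handle the boundary as the paper does via the $\delta\to0$ limit. This is a cosmetic issue of the same order as the paper's own appeal to ``continuity in the first argument.''
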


\begin{proof}
The argument is the same as that given in \cite[Exercise~3.15]{H06},
\cite[Eq.~(16)]{Hayashi09}, and \cite[Proposition~30]{Berta2018c}, but here we
use the fact that the Belavkin--Staszewski relative entropy of quantum
channels is a strong converse upper bound for asymmetric quantum channel
discrimination \cite[Theorem~49]{Fang2019a}. Fix $\delta>0$. Let $\mathcal{T}$
be a quantum channel such that $\widehat{D}(\mathcal{T}\Vert\mathcal{M})\leq
r-\delta$. By construction, it follows that $r>\widehat{D}(\mathcal{T}
\Vert\mathcal{M})$. Let $(\{\mathcal{S}^{(n)},\Lambda^{\hat{\theta}}\})_{n}$
denote a sequence of channel discrimination strategies for $\mathcal{T}$ and
$\mathcal{M}$, and let us denote the associated Type~I and II error
probabilities by
\begin{equation}
\alpha_{n}^{\mathcal{T}\Vert\mathcal{M}}(\{\mathcal{S}^{(n)},\Lambda
^{\hat{\theta}}\}),\qquad\beta_{n}^{\mathcal{T}\Vert\mathcal{M}}
(\{\mathcal{S}^{(n)},\Lambda^{\hat{\theta}}\}),
\end{equation}
respectively. By applying \cite[Theorem~49]{Fang2019a}, that the
Belavkin--Staszewski relative entropy is a strong converse upper bound for
asymmetric channel discrimination of $\mathcal{T}$ and $\mathcal{M}$, if
$(\{\mathcal{S}^{(n)},\Lambda^{\hat{\theta}}\})_{n}$ is a sequence of channel
discrimination strategies for these channels such that
\begin{equation}
\limsup_{n\rightarrow\infty}-\frac{1}{n}\ln\beta_{n}^{\mathcal{T}
\Vert\mathcal{M}}(\{\mathcal{S}^{(n)},\Lambda^{\hat{\theta}}\})\geq r,
\end{equation}
then necessarily, we have that
\begin{equation}
\limsup_{n\rightarrow\infty}\alpha_{n}^{\mathcal{T}\Vert\mathcal{M}
}(\{\mathcal{S}^{(n)},\Lambda^{\hat{\theta}}\})=1.
\label{eq:bs-hoeffding-err-to-1}
\end{equation}
However, this implies that $\{\mathcal{S}^{(n)},I-\Lambda^{\hat{\theta}}\}$
can be used as a channel discrimination strategy for the channels
$\mathcal{T}$ and $\mathcal{M}$, and let us denote the associated Type~I and
II error probabilities by
\begin{equation}
\alpha_{n}^{\mathcal{T}\Vert\mathcal{N}}(\{\mathcal{S}^{(n)},I-\Lambda
^{\hat{\theta}}\}),\qquad\beta_{n}^{\mathcal{T}\Vert\mathcal{N}}
(\{\mathcal{S}^{(n)},I-\Lambda^{\hat{\theta}}\}).
\end{equation}
By applying \eqref{eq:bs-hoeffding-err-to-1}, we conclude that
\begin{equation}
\limsup_{n\rightarrow\infty}\alpha_{n}^{\mathcal{T}\Vert\mathcal{N}
}(\{\mathcal{S}^{(n)},\Lambda^{\hat{\theta}}\})=0,
\end{equation}
and by again invoking the strong converse from \cite[Theorem~49]{Fang2019a},
it is necessary that
\begin{equation}
\limsup_{n\rightarrow\infty}-\frac{1}{n}\ln\beta_{n}^{\mathcal{T}
\Vert\mathcal{N}}(\{\mathcal{S}^{(n)},\Lambda^{\hat{\theta}}\})\leq\widehat
{D}(\mathcal{T}\Vert\mathcal{N}).
\end{equation}
Thus, we find the following bound holding for an arbitrary quantum channel
$\mathcal{T}$ for which $r>\widehat{D}(\mathcal{T}\Vert\mathcal{M})$:
\begin{equation}
\overline{B}(r,\mathcal{N},\mathcal{M})\leq\inf_{\mathcal{T}:\widehat
{D}(\mathcal{T}\Vert\mathcal{M})\leq r-\delta}\widehat{D}(\mathcal{T}
\Vert\mathcal{N}).
\end{equation}
Since $\delta>0$ is arbitrary in the above argument, we can employ the facts
that the Belavkin--Staszewski relative entropy is continuous in its first
argument to arrive at the bound stated in \eqref{eq:hoeffding-bound-BS-rel-ent-opt}.
\end{proof}

\subsection{Bounds for sequential channel discrimination with repetition}

In this section, we establish upper bounds on the asymptotic error exponents
for sequential channel discrimination with repetition, as defined in
Section~\ref{sec:seq-ch-w-rep}. The main idea is to exploit the amortization
collapse for the geometric R\'{e}nyi channel divergence from
Corollary~\ref{cor:amort-collapse-geo-renyi-ch}, the meta-converse from
\cite[Lemma~14]{Berta2018c}, and the finite-sample bounds from \cite{AMV12}.

\begin{proposition}
For quantum channels $\mathcal{N}$ and $\mathcal{M}$, the following asymptotic
Chernoff exponent for sequential channel discrimination with repetition is
bounded for all $p\in(0,1)$ as follows:
\begin{equation}
\limsup_{n\rightarrow\infty}\limsup_{m\rightarrow\infty}\xi_{n,m}
(p,\mathcal{N},\mathcal{M})\leq\widehat{C}(\mathcal{N}\Vert\mathcal{M}),
\label{eq:seq-ch-disc-w-rep-chernoff-up-bnd}
\end{equation}
where the upper bound in \eqref{eq:seq-ch-disc-w-rep-chernoff-up-bnd}\ holds
only for the particular order of limits of $n$ and $m$ given and the geometric
Chernoff information of quantum channels is defined as follows:
\begin{equation}
\widehat{C}(\mathcal{N}\Vert\mathcal{M}):=\sup_{\alpha\in\left(  0,1\right)
}\left(  1-\alpha\right)  \widehat{D}_{\alpha}(\mathcal{N}\Vert\mathcal{M}).
\label{eq:seq-ch-disc-w-rep-hoeffding-up-bnd}
\end{equation}
The following asymptotic Hoeffding exponent for sequential channel
discrimination with repetition is bounded as follows:
\begin{equation}
\limsup_{n\rightarrow\infty}\limsup_{m\rightarrow\infty}B_{n,m}(r,\mathcal{N}
,\mathcal{M})\leq\sup_{\alpha\in(0,1)}\frac{\alpha-1}{\alpha}\left(
r-\widehat{D}_{\alpha}(\mathcal{N}\Vert\mathcal{M})\right)  .
\end{equation}

\end{proposition}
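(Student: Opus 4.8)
The plan is to reduce the statements about sequential channel discrimination with repetition to statements about ordinary symmetric and asymmetric (Hoeffding) channel discrimination, and then to invoke the known finite-blocklength bounds together with the amortization collapse of the geometric R\'enyi channel divergence. First I would observe that an $(n,m)$ protocol for sequential channel discrimination with repetition produces the states $(\omega_{R_{n}B_{n}}^{\theta=1})^{\otimes m}$ and $(\omega_{R_{n}B_{n}}^{\theta=2})^{\otimes m}$, where $\omega^{\theta}_{R_nB_n}$ is the output of a fixed $n$-round sequential strategy $\mathcal{S}^{(n)}$. So for fixed $\mathcal{S}^{(n)}$, the inner optimization over the final $m$-system measurement is exactly an ordinary $m$-copy state discrimination problem between $\omega^{\theta=1}_{R_nB_n}$ and $\omega^{\theta=2}_{R_nB_n}$. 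Applying the finite-sample Chernoff-type and Hoeffding-type bounds of \cite{AMV12} (on state discrimination with many i.i.d.\ copies, in terms of the Petz--R\'enyi divergences, which are upper bounded by the geometric R\'enyi divergences via \eqref{eq:sandwiched-to-geo-bound-main-text}-type monotonicity) to these two states, I would bound $-\frac{1}{m}\ln p_e^{(n,m)}$ for fixed $\mathcal{S}^{(n)}$ in the limit $m\to\infty$ by $\sup_{\alpha\in(0,1)}(1-\alpha)\widehat{D}_\alpha(\omega^{\theta=1}_{R_nB_n}\Vert\omega^{\theta=2}_{R_nB_n})$, and similarly for the Hoeffding exponent.

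Next I would take the supremum over all $n$-round sequential strategies $\mathcal{S}^{(n)}$ and divide by the remaining factor of $n$. This is where the meta-converse of \cite[Lemma~14]{Berta2018c} and the amortization collapse from Corollary~\ref{cor:amort-collapse-geo-renyi-ch} enter: exactly as in the proof of Theorem~\ref{thm:meta-converse}, a telescoping argument over the $n$ rounds shows that $\widehat{D}_\alpha(\omega^{\theta=1}_{R_nB_n}\Vert\omega^{\theta=2}_{R_nB_n})\leq n\cdot\widehat{D}_\alpha^{\mathcal{A}}(\mathcal{N}\Vert\mathcal{M}) = n\cdot\widehat{D}_\alpha(\mathcal{N}\Vert\mathcal{M})$, for all $\alpha\in(0,1)$ where data processing holds. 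Dividing by $n$ and taking the supremum over $\alpha$ then yields $\limsup_m \xi_{n,m}\leq \sup_{\alpha\in(0,1)}(1-\alpha)\widehat{D}_\alpha(\mathcal{N}\Vert\mathcal{M}) = \widehat{C}(\mathcal{N}\Vert\mathcal{M})$ uniformly in $n$, so the subsequent $\limsup_n$ is harmless; the Hoeffding case is analogous, giving the bound $\sup_{\alpha\in(0,1)}\frac{\alpha-1}{\alpha}(r-\widehat{D}_\alpha(\mathcal{N}\Vert\mathcal{M}))$.

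The main obstacle I anticipate is making the interchange of the $m\to\infty$ limit with the supremum over strategies $\mathcal{S}^{(n)}$ rigorous, and more specifically ensuring that the finite-sample bounds of \cite{AMV12} are applied uniformly over the (infinite) family of states $\{\omega^{\theta}_{R_nB_n}\}$ arising from all $n$-round strategies, possibly on reference systems of unbounded dimension. The cleanest way around this is to bound the $m$-copy error probability for each fixed strategy first by the single-letter quantity $\widehat{C}(\omega^{\theta=1}\Vert\omega^{\theta=2})$ (which already only requires the per-strategy i.i.d.\ bound, with no uniformity needed), then apply the meta-converse and amortization collapse at the level of the single-letter quantities — i.e., doing the supremum over $\mathcal{S}^{(n)}$ \emph{after} the $m$-limit is taken — so that one only ever needs the amortization collapse, which holds for arbitrary-dimension reference systems. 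A secondary technical point is checking that the relevant $\alpha$-range in the \cite{AMV12} bounds (which is $\alpha\in(0,1)$ for the Chernoff exponent, and the corresponding transformed range for Hoeffding) lies within the data-processing interval $(0,1)\cup(1,2]$ of the geometric R\'enyi divergence so that the amortization collapse of Corollary~\ref{cor:amort-collapse-geo-renyi-ch} is available; since $(0,1)$ is contained in that interval, this is fine.
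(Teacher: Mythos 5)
Your proposal follows essentially the same route as the paper's proof: for a fixed $n$-round strategy, apply the finite-sample Chernoff/Hoeffding converse bounds of \cite{AMV12} to the resulting $m$-copy state-discrimination problem, upper bound the Petz--R\'enyi-based exponent by the geometric one, invoke the meta-converse together with the amortization collapse of Corollary~\ref{cor:amort-collapse-geo-renyi-ch} to pass from $\widehat{D}_{\alpha}(\omega^{\theta=1}\Vert\omega^{\theta=2})$ to $n\,\widehat{D}_{\alpha}(\mathcal{N}\Vert\mathcal{M})$, and then take the limits in the stated order. The uniformity-over-strategies subtlety you flag when interchanging $\limsup_{m}$ with the supremum over $\mathcal{S}^{(n)}$ is a fair point, but the paper's own proof treats it at the same level of rigor, so your argument matches the published one in both structure and substance.
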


\begin{proof}
The method for establishing both bounds is the same. By applying
\cite[Theorem~4.7]{AMV12}, the following upper bound holds for the error
exponent in the Chernoff setting:
\begin{multline}
-\frac{1}{nm}\ln p_{e}^{(n,m)}(\mathcal{S}^{(n)},\Lambda^{\hat{\theta},m}
)\leq\frac{1}{m}C(\omega_{R_{n}B_{n}}^{\theta=1}\Vert\omega_{R_{n}B_{n}
}^{\theta=2})\\
+\frac{3\left(  d^{2}-1\right)  }{2}\frac{\ln n}{nm}+\frac{c}{nm}+\frac
{1}{nm\left(  12n+1\right)  },
\end{multline}
where $C(\omega_{R_{n}B_{n}}^{\theta=1}\Vert\omega_{R_{n}B_{n}}^{\theta=2})$
is the Chernoff information of the final states of the discrimination
protocol, $d$ is the dimension of this output state, and $c$ is a constant
that depends on the final output states. Now applying the meta-converse from
\cite[Lemma~14]{Berta2018c}, we find that
\begin{align}
C(\omega_{R_{n}B_{n}}^{\theta=1}\Vert\omega_{R_{n}B_{n}}^{\theta=2})  &
\leq\widehat{C}(\omega_{R_{n}B_{n}}^{\theta=1}\Vert\omega_{R_{n}B_{n}}
^{\theta=2})\\
&  =\sup_{\alpha\in\left(  0,1\right)  }\left(  1-\alpha\right)  \widehat
{D}_{\alpha}(\omega_{R_{n}B_{n}}^{\theta=1}\Vert\omega_{R_{n}B_{n}}^{\theta
=2})\\
&  \leq m\sup_{\alpha\in\left(  0,1\right)  }\left(  1-\alpha\right)
\widehat{D}_{\alpha}^{\mathcal{A}}(\mathcal{N}\Vert\mathcal{M})\\
&  =m\sup_{\alpha\in\left(  0,1\right)  }\left(  1-\alpha\right)  \widehat
{D}_{\alpha}(\mathcal{N}\Vert\mathcal{M})\\
&  =m\widehat{C}(\mathcal{N}\Vert\mathcal{M}).
\end{align}
The second-to-last equality follows from
Corollary~\ref{cor:amort-collapse-geo-renyi-ch}. Combining with the above, we
find the following bound
\begin{equation}
-\frac{1}{nm}\ln p_{e}^{(n,m)}(\mathcal{S}^{(n)},\Lambda^{\hat{\theta},m}
)\leq\widehat{C}(\mathcal{N}\Vert\mathcal{M})+\frac{3\left(  d^{2}-1\right)
}{2}\frac{\ln n}{nm}+\frac{c}{nm}+\frac{1}{nm\left(  12n+1\right)  }.
\end{equation}
By taking the limit as $m\rightarrow\infty$, we get the following uniform
bound:
\begin{equation}
\limsup_{m\rightarrow\infty}\left[  -\frac{1}{nm}\ln p_{e}^{(n,m)}
(\mathcal{S}^{(n)},\Lambda^{\hat{\theta},m})\right]  \leq\widehat
{C}(\mathcal{N}\Vert\mathcal{M}).
\end{equation}
Then taking the limit as $n\rightarrow\infty$, we arrive at
\eqref{eq:seq-ch-disc-w-rep-chernoff-up-bnd}. The proof of
\eqref{eq:seq-ch-disc-w-rep-hoeffding-up-bnd}\ is essentially the same, except
that we start from the other bound in \cite[Theorem~4.7]{AMV12} (having to do
with the Hoeffding exponent).
\end{proof}

\section{Connections between estimation and discrimination of quantum
channels}

\label{sec:connections}In this section, we outline connections between channel
estimation and discrimination, which indicate how one could derive many of the
results in Sections~\ref{sec:SLD-fisher-info-limits} and
\ref{sec:RLD-Fish-limits} based on properties of the quantum fidelity and
geometric R\'{e}nyi relative entropy. To do so, one however needs the stronger
assumption that the family of states or channels is second-order
differentiable with respect to the parameter $\theta$. This is the main reason
that we have avoided this approach in our earlier developments, because we
have shown that it is possible to develop them under the assumption of
first-order differentiability only. Nevertheless, the connections are
interesting and so we go through them here.

\subsection{Limit formulas for SLD\ and RLD\ Fisher informations}

The starting point is the following limit formula for the SLD\ Fisher information:

\begin{proposition}
\label{prop:SLD-fish-limit-fidelity}Let $\{\rho_{\theta}\}_{\theta}$ be a
second-order differentiable family of quantum states. Then the following holds
\begin{align}
I_{F}(\theta;\{\rho_{\theta}\}_{\theta})  &  =\lim_{\varepsilon\rightarrow
0}\lim_{\delta\rightarrow0}\frac{8}{\delta^{2}}\left(  1-\sqrt{F}(\rho
_{\theta}^{\varepsilon},\rho_{\theta+\delta}^{\varepsilon})\right)
,\label{eq:SLD-fish-bures-connection}\\
&  =\lim_{\varepsilon\rightarrow0}\lim_{\delta\rightarrow0}\frac{4}{\delta
^{2}}\left(  -\ln F(\rho_{\theta}^{\varepsilon},\rho_{\theta+\delta
}^{\varepsilon})\right)  , \label{eq:SLD-fish-bures-connection-2}
\end{align}
where
\begin{equation}
\rho_{\theta}^{\varepsilon}:=\left(  1-\varepsilon\right)  \rho_{\theta
}+\varepsilon\pi_{d},
\end{equation}
with $\pi_{d}$ the maximally mixed state.
\end{proposition}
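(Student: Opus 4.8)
The plan is to reduce everything to the known $\varepsilon$-regularized SLD Fisher information $I_F(\theta;\{\rho_\theta^\varepsilon\}_\theta)$ for full-rank states, for which the standard formulas in \eqref{eq:SLD-Fish-info-formula}--\eqref{eq:basis-independent-formula-SLD-2} hold without any finiteness caveats, and then to invoke Proposition~\ref{prop:physical-consistency-SLD-Fish-states} to pass the outer limit $\varepsilon\rightarrow 0$ through. Concretely, for fixed $\varepsilon>0$ the family $\{\rho_\theta^\varepsilon\}_\theta$ is second-order differentiable and full-rank (on the support of $\pi_d$), so it suffices to establish, for any second-order differentiable full-rank family $\{\sigma_\theta\}_\theta$, the two local expansions
\begin{align}
1-\sqrt{F}(\sigma_\theta,\sigma_{\theta+\delta}) &= \frac{\delta^2}{8} I_F(\theta;\{\sigma_\theta\}_\theta) + o(\delta^2),\\
-\ln F(\sigma_\theta,\sigma_{\theta+\delta}) &= \frac{\delta^2}{4} I_F(\theta;\{\sigma_\theta\}_\theta) + o(\delta^2).
\end{align}
Taking $\delta\rightarrow 0$ then yields \eqref{eq:SLD-fish-bures-connection} and \eqref{eq:SLD-fish-bures-connection-2} with $\sigma_\theta=\rho_\theta^\varepsilon$, and the $\varepsilon\rightarrow 0$ limit is handled by Proposition~\ref{prop:physical-consistency-SLD-Fish-states}. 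Note the two expansions are equivalent to each other via $-\ln F = -\ln(1-(1-\sqrt F))^2 = 2(1-\sqrt F)+o(\delta^2)$ near $\delta=0$, so really only one of them needs a from-scratch derivation.

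For the first expansion I would use the well-known fact that the fidelity $F(\sigma_\theta,\sigma_{\theta+\delta})$, equivalently the Bures distance $D_B^2 = 2(1-\sqrt F)$, has the SLD Fisher information as its Riemannian metric: this is the classical result of Bures/Uhlmann as developed by Braunstein--Caves and Hübner. The cleanest self-contained route is to write $\sqrt{F}(\sigma_\theta,\sigma_{\theta+\delta}) = \|\sqrt{\sigma_\theta}\sqrt{\sigma_{\theta+\delta}}\|_1$ and Taylor-expand $\sigma_{\theta+\delta} = \sigma_\theta + \delta\,\partial_\theta\sigma_\theta + \tfrac{\delta^2}{2}\partial_\theta^2\sigma_\theta + o(\delta^2)$, then use first-order perturbation theory for operator square roots (or, equivalently, pass to the purification picture where $\sqrt F$ becomes the overlap of the canonical purifications and second-order perturbation of the purification vector gives exactly the four-times fidelity-susceptibility $= I_F$). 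A somewhat more economical alternative, since the paper already has the extended-Hilbert-space machinery, is to use the integral representation $\sqrt{\sigma_\theta} = \frac{1}{\pi}\int_0^\infty dt\, (\sigma_\theta + t)^{-1}\sigma_\theta\, t^{-1/2}$ to differentiate the square root and recognize the operator $(\sigma_\theta\otimes I + I\otimes\sigma_\theta^T)^{-1}$ appearing in \eqref{eq:basis-independent-SLD-formula-extra}. Either way, the zeroth-order term is $\operatorname{Tr}[\sigma_\theta]=1$, the first-order term vanishes because $\operatorname{Tr}[\partial_\theta\sigma_\theta]=0$ and the trace-norm is stationary at a positive operator, and the surviving second-order term assembles into $-\tfrac{\delta^2}{8}I_F(\theta;\{\sigma_\theta\}_\theta)$.

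The main obstacle I anticipate is not the local fidelity expansion itself (which is classical) but rigorously justifying the interchange of the two limits, i.e. showing that $\lim_{\varepsilon\to 0}\lim_{\delta\to 0}$ of the displayed difference quotient equals $I_F(\theta;\{\rho_\theta\}_\theta)$ even when $\{\rho_\theta\}_\theta$ is rank-deficient and the finiteness condition \eqref{eq:finiteness-condition-SLD-Fish-states} fails (in which case the right-hand side is $+\infty$). For fixed $\varepsilon$ the inner limit is finite and equals $I_F(\theta;\{\rho_\theta^\varepsilon\}_\theta)$ by the argument above; then $\lim_{\varepsilon\to 0}I_F(\theta;\{\rho_\theta^\varepsilon\}_\theta) = I_F(\theta;\{\rho_\theta\}_\theta)$ is precisely Proposition~\ref{prop:physical-consistency-SLD-Fish-states}, including the $+\infty$ case. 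So the key point to verify carefully is that the inner limit is genuinely attained and that no uniformity in $\varepsilon$ is needed — the stated order of limits ($\delta$ first, then $\varepsilon$) is exactly what makes this work, and I would emphasize in the write-up that the limits cannot in general be exchanged. The remaining routine checks are the second-order differentiability hypothesis being used only to control the $o(\delta^2)$ remainder uniformly for fixed $\varepsilon$, and the equivalence of the $-\ln F$ form via the elementary expansion noted above. I would defer the detailed computations to an appendix (in the same spirit as the other physical-consistency proofs), citing Braunstein--Caves and Hübner for the fidelity-susceptibility identity.
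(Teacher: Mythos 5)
Your proposal matches the paper's proof essentially step for step: the paper also reduces to the full-rank regularized family, invokes the known full-rank fidelity-susceptibility expansion (citing H\"ubner and Sommers--Zyczkowski), handles the outer $\varepsilon\rightarrow 0$ limit via Proposition~\ref{prop:physical-consistency-SLD-Fish-states} (including the $+\infty$ case), and derives the $-\ln F$ form from the $1-\sqrt{F}$ form by the same logarithm expansion, which it carries out with explicit remainder bounds in Lemma~\ref{lem:different-exps-fisher-info} of Appendix~\ref{app:fish-info-fid-limit}. The approach and the division of labor between cited classical results and the appendix are the same as yours, so no further changes are needed.
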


The first expression without the $\varepsilon\rightarrow0$ limit was given in
\cite{Hub92}, where it was assumed that the family $\{\rho_{\theta}\}_{\theta
}$ is full rank. A different proof was then given in \cite{Sommers_2003}, in
which the full rank assumption is made as well. We can then apply these former
results and Proposition~\ref{prop:physical-consistency-SLD-Fish-states}\ to
arrive at the limiting expression in \eqref{eq:SLD-fish-bures-connection}. The
limit in \eqref{eq:SLD-fish-bures-connection-2} is also well known (see, e.g.,
\cite[Section~6]{Hayashi_2002}\ and \cite{TW16}), and we recall a proof of
this due to \cite{Mos17}\ in Appendix~\ref{app:fish-info-fid-limit}.

The exchange of limits in \eqref{eq:SLD-fish-bures-connection}\ has implicitly
been the subject of more recent investigations \cite{Liu_2014,Saf2017,Seveso_2019,Zhou2019a}, starting with \cite{Liu_2014}
and concluding with \cite{Zhou2019a}. The main claim of \cite{Zhou2019a} is
that the limit exchange is possible for any second-order differentiable family
if one modifies \eqref{eq:SLD-fish-bures-connection} from a forward shift to a
central shift:
\begin{align}
I_{F}(\theta;\{\rho_{\theta}\}_{\theta})  &  =\lim_{\delta\rightarrow0}
\lim_{\varepsilon\rightarrow0}\frac{8}{\delta^{2}}\left(  1-\sqrt{F}
(\rho_{\theta-\delta/2}^{\varepsilon},\rho_{\theta+\delta/2}^{\varepsilon
})\right) \\
&  =\lim_{\delta\rightarrow0}\frac{8}{\delta^{2}}\left(  1-\sqrt{F}
(\rho_{\theta-\delta/2},\rho_{\theta+\delta/2})\right)  .
\end{align}
Implicitly the finiteness condition in
\eqref{eq:finiteness-condition-SLD-Fish-states} has been assumed in the
derivation of \cite{Zhou2019a}.

The RLD\ Fisher information has been connected to the geometric R\'{e}nyi
relative entropy via a limit formula of the form in
\eqref{eq:SLD-fish-bures-connection} (see \cite[Section~11]{Mat10fid} and
\cite[Section~6.4]{Mat13,Matsumoto2018}). In this case, we have the following:

\begin{proposition}
\label{prop:RLD-fish-limit-geo-renyi}Let $\{\rho_{\theta}\}_{\theta}$ be a
second-order differentiable family of quantum states. Then the following
equalities hold for all $\alpha\in(0,1)\cup(1,\infty)$:
\begin{align}
\widehat{I}_{F}(\theta;\{\rho_{\theta}\}_{\theta})  &  =\lim_{\varepsilon
\rightarrow0}\lim_{\delta\rightarrow0}\frac{2}{\alpha\left(  \alpha-1\right)
\delta^{2}}\left(  \widehat{Q}_{\alpha}(\rho_{\theta+\delta}^{\varepsilon
}\Vert\rho_{\theta}^{\varepsilon})-1\right)
,\label{eq:geo-renyi-limit-to-RLD}\\
&  =\lim_{\varepsilon\rightarrow0}\lim_{\delta\rightarrow0}\frac{2}{\delta
^{2}\alpha}\widehat{D}_{\alpha}(\rho_{\theta+\delta}^{\varepsilon}\Vert
\rho_{\theta}^{\varepsilon}), \label{eq:geo-renyi-limit-to-RLD-2}
\end{align}
where
\begin{equation}
\rho_{\theta}^{\varepsilon}:=\left(  1-\varepsilon\right)  \rho_{\theta
}+\varepsilon\pi_{d},
\end{equation}
with $\pi_{d}$ the maximally mixed state. Additionally, we have that
\begin{equation}
\widehat{I}_{F}(\theta;\{\rho_{\theta}\}_{\theta})=\lim_{\varepsilon
\rightarrow0}\lim_{\delta\rightarrow0}\frac{2}{\delta^{2}}\widehat{D}
(\rho_{\theta+\delta}^{\varepsilon}\Vert\rho_{\theta}^{\varepsilon}).
\label{eq:RLD-from-BS-rel-ent}
\end{equation}

\end{proposition}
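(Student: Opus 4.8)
The plan is to fix $\varepsilon>0$ first, perform a second-order Taylor expansion in $\delta$ for the \emph{regularized} family, and only afterwards send $\varepsilon\to 0$ via Proposition~\ref{prop:physical-consistency-RLD-Fish-states}; in this order the two limits never have to be exchanged and the inner limit is an elementary perturbation computation. So fix $\varepsilon>0$ and write $\sigma:=\rho_\theta^\varepsilon$, $\rho_\delta:=\rho_{\theta+\delta}^\varepsilon$, $\dot\sigma:=(1-\varepsilon)\partial_\theta\rho_\theta$, $\ddot\sigma:=(1-\varepsilon)\partial_\theta^2\rho_\theta$. Since $\sigma$ is full rank, the support condition holds automatically, so by \eqref{eq:geo-quasi-ent-main-txt} we have $\widehat{Q}_\alpha(\rho_\delta\Vert\sigma)=\operatorname{Tr}[\sigma(\sigma^{-1/2}\rho_\delta\sigma^{-1/2})^\alpha]$, and by Definition~\ref{def:RLD-Fish-info-states}, $\widehat{I}_F(\theta;\{\rho_\theta^\varepsilon\}_\theta)=\operatorname{Tr}[(\dot\sigma)^2\sigma^{-1}]$. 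Second-order differentiability gives the norm expansion $\rho_\delta=\sigma+\delta\dot\sigma+\tfrac{\delta^2}{2}\ddot\sigma+o(\delta^2)$ (the $C^2$ hypothesis is genuinely used here: a merely $C^1$ remainder would be $o(\delta)$, which is insufficient after dividing by $\delta^2$), hence, with $A:=\sigma^{-1/2}\dot\sigma\sigma^{-1/2}$ and $B:=\sigma^{-1/2}\ddot\sigma\sigma^{-1/2}$,
\[
X_\delta:=\sigma^{-1/2}\rho_\delta\sigma^{-1/2}-I=\delta A+\tfrac{\delta^2}{2}B+o(\delta^2),\qquad X_\delta^2=\delta^2A^2+o(\delta^2).
\]

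The key analytic step is the operator Taylor expansion $f(I+X)=f(1)I+f'(1)X+\tfrac12 f''(1)X^2+o(\|X\|^2)$, valid for any function $f$ that is $C^2$ near $1$ and any self-adjoint $X$ of small norm; the point is that $I+X$ commutes with $X$, so $f(I+X)$ is computed eigenvalue by eigenvalue and the remainder is inherited uniformly from the scalar Taylor remainder. Applying this with $f(x)=x^\alpha$ (smooth near $1$ for all $\alpha\in(0,1)\cup(1,\infty)$) and $X=X_\delta$, then tracing against $\sigma$ and using $\operatorname{Tr}[\sigma A]=\operatorname{Tr}[\dot\sigma]=0$, $\operatorname{Tr}[\sigma B]=\operatorname{Tr}[\ddot\sigma]=0$ (both because $\operatorname{Tr}[\rho_\theta^\varepsilon]\equiv 1$) and $\operatorname{Tr}[\sigma A^2]=\operatorname{Tr}[(\dot\sigma)^2\sigma^{-1}]=\widehat{I}_F(\theta;\{\rho_\theta^\varepsilon\}_\theta)$, I obtain
\[
\widehat{Q}_\alpha(\rho_{\theta+\delta}^\varepsilon\Vert\rho_\theta^\varepsilon)=1+\tfrac{\alpha(\alpha-1)}{2}\,\delta^2\,\widehat{I}_F(\theta;\{\rho_\theta^\varepsilon\}_\theta)+o(\delta^2).
\]
Dividing by $\tfrac{\alpha(\alpha-1)}{2}\delta^2$ and letting $\delta\to0$ yields the inner limit in \eqref{eq:geo-renyi-limit-to-RLD} as $\widehat{I}_F(\theta;\{\rho_\theta^\varepsilon\}_\theta)$; then $\varepsilon\to0$ with Proposition~\ref{prop:physical-consistency-RLD-Fish-states} gives \eqref{eq:geo-renyi-limit-to-RLD}. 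For \eqref{eq:geo-renyi-limit-to-RLD-2}, substitute this into $\widehat{D}_\alpha=\tfrac{1}{\alpha-1}\ln\widehat{Q}_\alpha$ and expand $\ln(1+y)=y+O(y^2)$ with $y=O(\delta^2)$, obtaining $\widehat{D}_\alpha(\rho_{\theta+\delta}^\varepsilon\Vert\rho_\theta^\varepsilon)=\tfrac{\alpha}{2}\delta^2\widehat{I}_F(\theta;\{\rho_\theta^\varepsilon\}_\theta)+o(\delta^2)$, and conclude by the same two-step limit.

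For the Belavkin--Staszewski formula \eqref{eq:RLD-from-BS-rel-ent}, I would run the same argument with $f(x)=\eta(x):=x\ln x$ in place of $x^\alpha$, using the representation $\widehat{D}(\rho\Vert\sigma)=\operatorname{Tr}[\sigma\,\eta(\sigma^{-1/2}\rho\sigma^{-1/2})]$, which follows from the stated form $\operatorname{Tr}[\rho\ln(\rho^{1/2}\sigma^{-1}\rho^{1/2})]$ by a short similarity-transform identity. Since $\eta(1)=0$ and $\eta'(1)=\eta''(1)=1$, the same computation gives $\widehat{D}(\rho_{\theta+\delta}^\varepsilon\Vert\rho_\theta^\varepsilon)=\tfrac12\delta^2\widehat{I}_F(\theta;\{\rho_\theta^\varepsilon\}_\theta)+o(\delta^2)$, whence \eqref{eq:RLD-from-BS-rel-ent}; alternatively one can take $\alpha\to1$ in the expansion for \eqref{eq:geo-renyi-limit-to-RLD-2} together with $\lim_{\alpha\to1}\widehat{D}_\alpha=\widehat{D}$, at the cost of a mild uniformity-in-$\alpha$ argument. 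The only real obstacle is making the operator Taylor expansion and the $o(\delta^2)$ bookkeeping rigorous; once $\varepsilon$ is fixed this is standard finite-dimensional perturbation theory, which is exactly why regularizing before taking $\delta\to0$, and only then sending $\varepsilon\to0$, is the natural route.
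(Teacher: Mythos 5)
Your proposal is correct and follows essentially the same route as the paper's proof: regularize to a full-rank family, expand $\widehat{Q}_{\alpha}(\rho_{\theta+\delta}^{\varepsilon}\Vert\rho_{\theta}^{\varepsilon})=\operatorname{Tr}[\sigma(I+X_{\delta})^{\alpha}]$ to second order using the scalar Taylor expansion of $x^{\alpha}$ (and of $\eta(x)=x\ln x$ for the Belavkin--Staszewski case) via functional calculus, kill the first-order term with $\operatorname{Tr}[\partial_{\theta}\rho_{\theta}]=0$, identify the quadratic term with $\operatorname{Tr}[(\partial_{\theta}\rho_{\theta}^{\varepsilon})^{2}(\rho_{\theta}^{\varepsilon})^{-1}]$, and finish with Proposition~\ref{prop:physical-consistency-RLD-Fish-states} and the $\ln(1+y)$ expansion for the logarithmic variants. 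The only cosmetic difference is that you track the second-derivative term $\ddot{\sigma}$ explicitly and package the remainder as $o(\delta^{2})$, whereas the paper absorbs everything into $d\rho_{\theta}$ with an $O((d\rho_{\theta})^{3})$ remainder; both are sound.
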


\begin{proof}
Due to the particular order of limits given above, we can assume that
$\rho_{\theta}$ is full rank. Let us define
\begin{equation}
d\rho_{\theta}:=\rho_{\theta+\delta}-\rho_{\theta},
\end{equation}
and observe that
\begin{equation}
\operatorname{Tr}[d\rho_{\theta}]=0.
\end{equation}
Then by plugging into \eqref{eq:geo-quasi-ent-main-txt}, we find that
\begin{align}
\widehat{Q}_{\alpha}(\rho_{\theta+\delta}\Vert\rho_{\theta})  &
=\operatorname{Tr}\!\left[  \rho_{\theta}\left(  \rho_{\theta}^{-1/2}
\rho_{\theta+\delta}\rho_{\theta}^{-1/2}\right)  ^{\alpha}\right] \\
&  =\operatorname{Tr}\!\left[  \rho_{\theta}\left(  \rho_{\theta}^{-1/2}
(\rho_{\theta}+d\rho_{\theta})\rho_{\theta}^{-1/2}\right)  ^{\alpha}\right] \\
&  =\operatorname{Tr}\!\left[  \rho_{\theta}\left(  I+\rho_{\theta}^{-1/2}
d\rho_{\theta}\rho_{\theta}^{-1/2}\right)  ^{\alpha}\right]  .
\label{eq:geometric-fidelity-fisher-explicit-states}
\end{align}
Now, by using the expansion
\begin{equation}
\left(  1+x\right)  ^{\alpha}=1+\alpha x+\frac{1}{2}\left(  \alpha-1\right)
\alpha x^{2}+O(x^{3}),
\end{equation}
we evaluate the innermost expression of
\eqref{eq:geometric-fidelity-fisher-explicit-states}:
\begin{multline}
\left(  I+\rho_{\theta}^{-1/2}d\rho_{\theta}\rho_{\theta}^{-1/2}\right)
^{\alpha}=I+\alpha\rho_{\theta}^{-1/2}d\rho_{\theta}\rho_{\theta}^{-1/2}\\
+\frac{1}{2}\left(  \alpha-1\right)  \alpha\left(  \rho_{\theta}^{-1/2}
d\rho_{\theta}\rho_{\theta}^{-1/2}\right)  ^{2}+O\left(  \left(  d\rho
_{\theta}\right)  ^{3}\right)  .
\end{multline}
Now left-multiplying by $\rho_{\theta}$ and taking the trace gives
\begin{align}
&  \widehat{Q}_{\alpha}(\rho_{\theta+\delta}\Vert\rho_{\theta})\nonumber\\
&  =\operatorname{Tr}\!\left[  \rho_{\theta}\left(  I+\rho_{\theta}^{-1/2}
d\rho_{\theta}\rho_{\theta}^{-1/2}\right)  ^{\alpha}\right] \\
&  =\operatorname{Tr}\!\left[  \rho_{\theta}\left(  I+\alpha\rho_{\theta}
^{-1/2}d\rho_{\theta}\rho_{\theta}^{-1/2}+\frac{1}{2}\left(  \alpha-1\right)
\alpha\left(  \rho_{\theta}^{-1/2}d\rho_{\theta}\rho_{\theta}^{-1/2}\right)
^{2}+O\left(  \left(  d\rho_{\theta}\right)  ^{3}\right)  \right)  \right] \\
&  =\operatorname{Tr}[\rho_{\theta}]+\alpha\operatorname{Tr}[d\rho_{\theta
}]+\frac{1}{2}\left(  \alpha-1\right)  \alpha\operatorname{Tr}\!\left[
\rho_{\theta}\left(  \rho_{\theta}^{-1/2}d\rho_{\theta}\rho_{\theta}
^{-1/2}\right)  ^{2}\right]  +O\left(  \left(  d\rho_{\theta}\right)
^{3}\right) \\
&  =1+\frac{1}{2}\left(  \alpha-1\right)  \alpha\operatorname{Tr}\left[
d\rho_{\theta}\rho_{\theta}^{-1}d\rho_{\theta}\right]  +O\left(  \left(
d\rho_{\theta}\right)  ^{3}\right)  .
\end{align}
So then
\[
\frac{2}{\alpha\left(  \alpha-1\right)  \delta^{2}}\left(  \widehat{Q}
_{\alpha}(\rho_{\theta+\delta}\Vert\rho_{\theta})-1\right)  =\operatorname{Tr}
\!\left[  \frac{d\rho_{\theta}}{\delta}\rho_{\theta}^{-1}\frac{d\rho_{\theta}
}{\delta}\right]  +\frac{2}{\alpha\left(  \alpha-1\right)  \delta^{2}}O\left(
\left(  d\rho_{\theta}\right)  ^{3}\right)  .
\]
For a second-order differentiable family, the following limit holds
\begin{equation}
\lim_{\delta\rightarrow0}\frac{1}{\delta^{2}}O\left(  \left\Vert d\rho
_{\theta}\right\Vert _{\infty}^{3}\right)  =\lim_{\delta\rightarrow0}
\delta\ O\left(  \left[  \left\Vert d\rho_{\theta}/\delta\right\Vert _{\infty
}\right]  ^{3}\right)  =0.
\end{equation}
Then we find that
\begin{align}
\lim_{\delta\rightarrow0}\frac{2}{\alpha\left(  \alpha-1\right)  \delta^{2}
}\left(  \widehat{Q}_{\alpha}(\rho_{\theta+\delta}\Vert\rho_{\theta
})-1\right)   &  =\lim_{\delta\rightarrow0}\operatorname{Tr}\!\left[
\frac{d\rho_{\theta}}{\delta}\rho_{\theta}^{-1}\frac{d\rho_{\theta}}{\delta
}\right] \\
&  =\operatorname{Tr}\!\left[  (\partial_{\theta}\rho_{\theta})^{2}\rho_{\theta
}^{-1}\right] \\
&  =\widehat{I}_{F}(\theta;\{\rho_{\theta}\}_{\theta}),
\end{align}
as claimed.

The equality between \eqref{eq:geo-renyi-limit-to-RLD} and
\eqref{eq:geo-renyi-limit-to-RLD-2} is similar to the equality between
\eqref{eq:SLD-fish-bures-connection} and
\eqref{eq:SLD-fish-bures-connection-2} and is shown in
Appendix~\ref{app:fish-info-fid-limit}. Defining $\eta(x)=x\ln x$, the last
equality in \eqref{eq:RLD-from-BS-rel-ent} follows because
\begin{align}
  \widehat{D}_{\alpha}(\rho_{\theta+\delta}\Vert\rho_{\theta})
&  =\operatorname{Tr}[\rho_{\theta+\delta}\ln\rho_{\theta+\delta}^{1/2}
\rho_{\theta}^{-1}\rho_{\theta+\delta}^{1/2}]\\
&  =\operatorname{Tr}[\rho_{\theta}\eta(\rho_{\theta}^{-1/2}\rho
_{\theta+\delta}\rho_{\theta}^{-1/2})]\\
&  =\operatorname{Tr}[\rho_{\theta}\eta(\rho_{\theta}^{-1/2}(\rho_{\theta
}+d\rho_{\theta})\rho_{\theta}^{-1/2})]\\
&  =\operatorname{Tr}[\rho_{\theta}\eta(I+\rho_{\theta}^{-1/2}d\rho_{\theta
}\rho_{\theta}^{-1/2})]\\
&  =\operatorname{Tr}[\rho_{\theta}(\rho_{\theta}^{-1/2}d\rho_{\theta}
\rho_{\theta}^{-1/2}+[\rho_{\theta}^{-1/2}d\rho_{\theta}\rho_{\theta}
^{-1/2}]^{2}/2)]+O((d\rho_{\theta})^{3})\\
&  =\operatorname{Tr}[d\rho_{\theta}]+\operatorname{Tr}[d\rho_{\theta}
\rho_{\theta}^{-1}d\rho_{\theta}]/2+O((d\rho_{\theta})^{3})\\
&  =\operatorname{Tr}[d\rho_{\theta}\rho_{\theta}^{-1}d\rho_{\theta
}]/2+O((d\rho_{\theta})^{3}),
\end{align}
where we used that $\eta(1+x)=x+x^{2}/2+O(x^{3})$. The reasoning to arrive at
\eqref{eq:RLD-from-BS-rel-ent}\ is similar to what was given previously.
\end{proof}

\subsection{Linking properties of Fisher informations and R\'{e}nyi relative
entropies}

The limit formulas in Propositions~\ref{prop:SLD-fish-limit-fidelity} and
\ref{prop:RLD-fish-limit-geo-renyi}\ allow us to connect properties of the
SLD\ and RLD\ Fisher informations to the fidelity and geometric R\'{e}nyi
relative entropy, respectively. This only occurs when the family of states or
channels is second-order differentiable, because the limit formulas in
Propositions~\ref{prop:SLD-fish-limit-fidelity} and
\ref{prop:RLD-fish-limit-geo-renyi} only apply under such a circumstance.

We list the connections now:

\begin{itemize}
\item Data processing for the SLD\ and RLD\ Fisher informations in
\eqref{eq:DP-SLD}--\eqref{eq:DP-RLD} follows from data processing for the
fidelity and the geometric R\'{e}nyi relative entropy in
\eqref{eq:DP-geo-Renyi-main-txt}, respectively.

\item Additivity of the SLD\ and RLD\ Fisher informations in
\eqref{eq:additivity-SLD-states} and \eqref{eq:additivity-RLD-states} follows
from the limit formulas in \eqref{eq:SLD-fish-bures-connection-2} and
\eqref{eq:geo-renyi-limit-to-RLD-2}, respectively, and additivity of these quantities.

\item The decomposition of SLD\ and RLD\ Fisher informations for
classical--quantum states in Proposition~\ref{prop:cq-decomp-SLD-RLD}\ follows
from the limit formulas in \eqref{eq:SLD-fish-bures-connection} and
\eqref{eq:geo-renyi-limit-to-RLD}, respectively, and also because the
underlying quantities have the same decomposition for classical--quantum states.

\item The amortization collapse in Theorem~\ref{thm:amort-collapse-cq}\ for
the SLD\ Fisher information of classical--quantum channels is a consequence of
the amortization collapse for the sandwiched R\'{e}nyi relative entropy given
in \cite[Lemma~26]{Berta2018c}.

\item The chain rule for the root SLD\ Fisher information in
Proposition~\ref{prop:chain-rule-root-SLD}\ is a consequence of the limit
formula in \eqref{eq:SLD-fish-bures-connection}, the triangle inequality for
the Bures distance, and the related chain-rule inequality given in
\cite[Lemma~44]{Berta2018c}.

\item The additivity of the RLD\ Fisher information in
Proposition~\ref{prop:add-RLD-Fish-ch} is a consequence of the limit formula
in \eqref{eq:geo-renyi-limit-to-RLD-2}\ and the additivity of the geometric
R\'{e}nyi relative entropy of quantum channels (the latter can either be shown
directly or as a consequence of
Proposition~\ref{prop:subadd-serial-comp-geo-renyi}).

\item The simple formula for the RLD\ Fisher information in
Proposition~\ref{prop:geo-fish-explicit-formula}\ can be seen as a consequence
of the limit formula in \eqref{eq:geo-renyi-limit-to-RLD-2} and the simple formula
for the geometric R\'{e}nyi relative entropy of quantum channels. This is
shown explicitly in Appendix~\ref{app:geo-renyi-limit-to-RLD-fish}.

\item The chain rule for the RLD\ Fisher information in
Proposition~\ref{prop:chain-rule-RLD}\ is a consequence of the limit formula
in \eqref{eq:geo-renyi-limit-to-RLD-2}\ and the chain rule for the geometric
R\'{e}nyi relative entropy from Proposition~\ref{prop:chain-rule-geo-renyi-ch}.
\end{itemize}

\subsection{Semi-definite programs for channel fidelity and SLD\ Fisher
information of quantum channels}

In this section, we show how the fidelity of quantum channels can be computed
by means of a semi-definite program. This was already shown in \cite{Yuan2017}, but here we arrive at semi-definite programs that are functions of the Choi
operators of the channels involved. Once the semi-definite program for
fidelity of channels is established, one can then use it and generalizations of the limit
formulas from Proposition~\ref{prop:SLD-fish-limit-fidelity} to approximate the
SLD\ Fisher information of quantum channels.

Our starting point is the following semi-definite program and its dual for the
root fidelity of quantum states \cite{Wat13}:

\begin{proposition}
Let $\rho$ and $\sigma$ be quantum states. Then their root fidelity $\sqrt
{F}(\rho,\sigma)=\left\Vert \sqrt{\rho}\sqrt{\sigma}\right\Vert _{1}$ can be
calculated by means of the following semi-definite program
\begin{equation}
\sup_{Q}\left\{  \operatorname{Re}[\operatorname{Tr}[Q]]:
\begin{bmatrix}
\rho & Q^{\dag}\\
Q & \sigma
\end{bmatrix}
\geq0\right\}  ,
\end{equation}
and its dual is given by
\begin{equation}
\frac{1}{2}\inf_{W,Z}\left\{  \operatorname{Tr}[\rho W]+\operatorname{Tr}
[\sigma Z]:
\begin{bmatrix}
W & I\\
I & Z
\end{bmatrix}
\geq0\right\}  . \label{eq:dual-root-fid-states-SDP}
\end{equation}

\end{proposition}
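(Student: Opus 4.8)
The plan is to follow the approach of \cite{Wat13}, which rests on two standard ingredients: the variational characterization of the trace norm, $\|X\|_{1} = \sup_{\|K\|_{\infty}\leq 1}\operatorname{Re}[\operatorname{Tr}[KX]]$, and the characterization of positive semi-definiteness of a two-by-two operator block matrix with positive semi-definite diagonal blocks. First I would recall that, for $\rho,\sigma \geq 0$,
\begin{equation}
\begin{bmatrix}
\rho & Q^{\dag}\\
Q & \sigma
\end{bmatrix}
\geq 0
\qquad\Longleftrightarrow\qquad
Q = \sqrt{\sigma}\,K\,\sqrt{\rho}\ \text{ for some }K\text{ with }\|K\|_{\infty}\leq 1 .
\end{equation}
When $\rho$ or $\sigma$ is rank-deficient, this equivalence must be read with $K$ acting between the supports, and established by a limiting argument applied to $\rho+\varepsilon I$ and $\sigma+\varepsilon I$ followed by $\varepsilon\to 0$. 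Substituting the representation of $Q$ into the objective gives $\operatorname{Re}[\operatorname{Tr}[Q]] = \operatorname{Re}[\operatorname{Tr}[K\sqrt{\rho}\sqrt{\sigma}]]$, and optimizing over all contractions $K$ yields $\|\sqrt{\rho}\sqrt{\sigma}\|_{1} = \sqrt{F}(\rho,\sigma)$. This establishes the primal formula.

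For the dual, I would cast the primal as a standard-form semi-definite program and compute its Lagrange dual mechanically, in the same spirit as the dual derivations already carried out in the paper (for instance the one for the geometric channel fidelity). Pairing the block-positivity constraint with a dual variable of the form $\begin{bmatrix} W & I \\ I & Z \end{bmatrix}\geq 0$ (after an innocuous sign change that leaves the optimal value unaffected) and computing the adjoint action produces precisely the dual objective $\tfrac{1}{2}(\operatorname{Tr}[\rho W] + \operatorname{Tr}[\sigma Z])$. Strong duality then follows from Slater's condition: the primal has a strictly feasible point (take $\rho,\sigma$ full rank by the $\varepsilon$-regularization above, then $Q=0$ lies in the interior of the feasible cone), and the dual has the strictly feasible point $W=Z=2I$.

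The step I expect to be the main obstacle is careful bookkeeping in two places. The first is tracking the factor of $1/2$ through the dual: it appears because both $Q$ and $Q^{\dag}$ enter the block matrix, so the coupling term gets effectively doubled and must be halved in the dual objective. The second is the non-full-rank case, where the equivalence between the block-positivity constraint and the contraction representation needs to be stated on the appropriate supports and justified by a continuity/limiting argument rather than invoked directly; one must also check that the primal and dual values behave continuously under the $\varepsilon$-regularization. Neither point is conceptually deep, but both require precision to land the statement exactly as written.
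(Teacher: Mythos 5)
Your proposal is correct, and it is essentially the argument of the cited reference \cite{Wat13}: the paper itself offers no proof of this proposition, simply importing it from there, and the contraction characterization of block positivity combined with the variational formula for the trace norm is exactly the standard route. Your bookkeeping of the factor $1/2$ in the dual and your use of Slater's condition (note that strict feasibility of the dual point $W=Z=2I$ alone already gives strong duality and primal attainment, so the $\varepsilon$-regularization of $\rho$ and $\sigma$ is not actually needed for that step) are both sound.
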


Using this semi-definite program, we then find the following for the root
fidelity of quantum channels:

\begin{proposition}
\label{prop:SDP-ch-fid}Let $\mathcal{N}_{A\rightarrow B}$ and $\mathcal{M}
_{A\rightarrow B}$ be quantum channels with respective Choi operators
$\Gamma_{RB}^{\mathcal{N}}$ and $\Gamma_{RB}^{\mathcal{M}}$. Then their root
channel fidelity
\begin{equation}
\sqrt{F}(\mathcal{N}_{A\rightarrow B},\mathcal{M}_{A\rightarrow B}
):=\inf_{\psi_{RA}}\sqrt{F}(\mathcal{N}_{A\rightarrow B}(\psi_{RA}
),\mathcal{M}_{A\rightarrow B}(\psi_{RA}))
\end{equation}
can be calculated by means of the following semi-definite program:
\begin{equation}
\sup_{\lambda\geq0,Q_{RB}}\left\{  \lambda:\lambda I_{R}\leq\operatorname{Re}
[\operatorname{Tr}_{B}[Q_{RB}]],\quad
\begin{bmatrix}
\Gamma_{RB}^{\mathcal{N}} & Q_{RB}^{\dag}\\
Q_{RB} & \Gamma_{RB}^{\mathcal{M}}
\end{bmatrix}
\geq0\right\}  . \label{eq:SDP-root-fid-ch}
\end{equation}
and its dual is given by
\begin{equation}
\frac{1}{2}\inf_{\rho_{R},W_{RB},Z_{RB}}\operatorname{Tr}[\Gamma
_{RB}^{\mathcal{N}}W_{RB}]+\operatorname{Tr}[\Gamma_{RB}^{\mathcal{M}}Z_{RB}],
\end{equation}
subject to
\begin{equation}
\rho_{R}\geq0,\quad\operatorname{Tr}[\rho_{R}]=1,\quad
\begin{bmatrix}
W_{RB} & \rho_{R}\otimes I_{B}\\
\rho_{R}\otimes I_{B} & Z_{RB}
\end{bmatrix}
\geq0.
\end{equation}
The expression in \eqref{eq:SDP-root-fid-ch}\ is equal to
\begin{equation}
\sup_{Q_{RB}}\left\{  \lambda_{\min}\left(  \operatorname{Re}
[\operatorname{Tr}_{B}[Q_{RB}]]\right)  :
\begin{bmatrix}
\Gamma_{RB}^{\mathcal{N}} & Q_{RB}^{\dag}\\
Q_{RB} & \Gamma_{RB}^{\mathcal{M}}
\end{bmatrix}
\geq0\right\}  ,
\end{equation}
where $\lambda_{\min}$ denotes the minimum eigenvalue of its argument.
\end{proposition}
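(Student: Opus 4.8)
The plan is to reduce everything to the known semi-definite programs for the root fidelity of \emph{states} (stated just above the proposition and in \eqref{eq:dual-root-fid-states-SDP}), following the same pattern used above for the geometric channel fidelity. Throughout I will use Remark~\ref{rem:restrict-to-pure-bipartite} to restrict the optimization defining $\sqrt{F}(\mathcal{N}_{A\to B},\mathcal{M}_{A\to B})$ to pure bipartite inputs $\psi_{RA}$ with $R\cong A$, together with the representation $\psi_{RA}=(X_R\otimes I_A)\Gamma_{RA}(X_R^{\dag}\otimes I_A)$ with $\operatorname{Tr}[X_R^{\dag}X_R]=1$, which gives $\mathcal{N}_{A\to B}(\psi_{RA})=(X_R\otimes I_B)\Gamma_{RB}^{\mathcal{N}}(X_R^{\dag}\otimes I_B)$ and similarly for $\mathcal{M}$. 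By continuity of the fidelity it suffices to let $\psi_{RA}$ range over a dense set, so I may take $X_R$ invertible, whence $\rho_R:=X_R^{\dag}X_R$ ranges over the full-rank density operators (take $X_R=\sqrt{\rho_R}$).

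For the primal program I would start from $\sqrt{F}(\rho,\sigma)=\sup_{Q}\{\operatorname{Re}\operatorname{Tr}[Q]:\left[\begin{smallmatrix}\rho & Q^{\dag}\\ Q & \sigma\end{smallmatrix}\right]\geq0\}$ applied to $\rho=\mathcal{N}(\psi_{RA})$, $\sigma=\mathcal{M}(\psi_{RA})$, and substitute $Q_{RB}=(X_R\otimes I_B)Q'_{RB}(X_R^{\dag}\otimes I_B)$. Conjugating the block operator by $\operatorname{diag}(X_R\otimes I_B,X_R\otimes I_B)$ shows the constraint is equivalent to $\left[\begin{smallmatrix}\Gamma_{RB}^{\mathcal{N}} & (Q'_{RB})^{\dag}\\ Q'_{RB} & \Gamma_{RB}^{\mathcal{M}}\end{smallmatrix}\right]\geq0$, and since $\rho_R\otimes I_B$ is Hermitian one computes $\operatorname{Re}\operatorname{Tr}[Q_{RB}]=\operatorname{Re}\operatorname{Tr}[(\rho_R\otimes I_B)Q'_{RB}]=\operatorname{Tr}[\rho_R\operatorname{Re}[\operatorname{Tr}_B Q'_{RB}]]$. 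This yields $\sqrt{F}(\mathcal{N},\mathcal{M})=\inf_{\rho_R}\sup_{Q'_{RB}}\operatorname{Tr}[\rho_R\operatorname{Re}[\operatorname{Tr}_B Q'_{RB}]]$, with $\rho_R$ ranging over full-rank density operators and $Q'_{RB}$ feasible for the $\Gamma$-block constraint. Replacing $\inf$ over full-rank $\rho_R$ by $\inf$ over all $\rho_R\geq0$ with $\operatorname{Tr}\rho_R=1$ (a compact convex set) by density, and invoking Sion's minimax theorem — legitimate because the objective is bilinear and the two feasible sets are convex with the $\rho_R$-set compact — I can swap the two optimizations; evaluating $\inf_{\rho_R}\operatorname{Tr}[\rho_R L]=\lambda_{\min}(L)$ then gives $\sqrt{F}(\mathcal{N},\mathcal{M})=\sup_{Q_{RB}}\{\lambda_{\min}(\operatorname{Re}[\operatorname{Tr}_B Q_{RB}]):\left[\begin{smallmatrix}\Gamma_{RB}^{\mathcal{N}} & Q_{RB}^{\dag}\\ Q_{RB} & \Gamma_{RB}^{\mathcal{M}}\end{smallmatrix}\right]\geq0\}$. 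Since the geometric mean $G_{1/2}(\Gamma_{RB}^{\mathcal{M}},\Gamma_{RB}^{\mathcal{N}})$ is a feasible $Q_{RB}$ with positive semi-definite partial trace, the value is nonnegative, so I may rewrite $\lambda_{\min}(L)=\sup\{\lambda\geq0:\lambda I_R\leq L\}$ to arrive at the primal SDP \eqref{eq:SDP-root-fid-ch}.

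For the dual I would run the identical substitution starting from the dual states SDP \eqref{eq:dual-root-fid-states-SDP}, $\sqrt{F}(\rho,\sigma)=\tfrac12\inf_{W,Z}\{\operatorname{Tr}[\rho W]+\operatorname{Tr}[\sigma Z]:\left[\begin{smallmatrix}W & I\\ I & Z\end{smallmatrix}\right]\geq0\}$: with $\rho=\mathcal{N}(\psi_{RA})$, $\sigma=\mathcal{M}(\psi_{RA})$ and the change of variables $W'_{RB}:=(X_R^{\dag}\otimes I_B)W_{RB}(X_R\otimes I_B)$, $Z'_{RB}:=(X_R^{\dag}\otimes I_B)Z_{RB}(X_R\otimes I_B)$, one gets $\operatorname{Tr}[\mathcal{N}(\psi_{RA})W_{RB}]=\operatorname{Tr}[\Gamma_{RB}^{\mathcal{N}}W'_{RB}]$ (and likewise for $\mathcal{M}$), while conjugating $\left[\begin{smallmatrix}W & I_{RB}\\ I_{RB} & Z\end{smallmatrix}\right]\geq0$ by $\operatorname{diag}(X_R^{\dag}\otimes I_B,X_R^{\dag}\otimes I_B)$ produces $\left[\begin{smallmatrix}W'_{RB} & \rho_R\otimes I_B\\ \rho_R\otimes I_B & Z'_{RB}\end{smallmatrix}\right]\geq0$. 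Combining the two nested infima (over $\rho_R$ and over $W',Z'$) and relaxing $\rho_R>0$ to $\rho_R\geq0$ by continuity gives exactly the stated dual. Finally I would note that the primal and dual programs so obtained are a matched conic pair: weak duality between them is immediate, and strong duality (hence their equality, which is in any case forced since each equals the corresponding states-SDP value) can be confirmed via Slater's condition, e.g.\ $\rho_R=\pi_R$, $W_{RB}=Z_{RB}=2I_{RB}$ is strictly dual-feasible since $2I_{RB}-\tfrac12(\pi_R\otimes I_B)(2I_{RB})^{-1}(\pi_R\otimes I_B)=2I_{RB}-\tfrac14\pi_R^{2}\otimes I_B>0$.

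I expect the principal difficulty to lie in carefully justifying the two reductions: passing from the infimum over input states $\psi_{RA}$ to an infimum over density operators $\rho_R$ (which needs the density argument to restrict to invertible $X_R$ together with the observation that every full-rank $\rho_R$ is realized as $X_R^{\dag}X_R$), and swapping $\inf_{\rho_R}\sup_{Q_{RB}}$ via Sion's theorem, which first requires reducing the $\rho_R$-domain to the compact density simplex. The rewriting $\lambda_{\min}(L)=\sup\{\lambda\ge0:\lambda I\le L\}$ also needs the small observation that the optimal value is nonnegative, which the geometric-mean feasible point supplies.
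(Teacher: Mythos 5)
Your proposal is correct, and it reaches the primal program \eqref{eq:SDP-root-fid-ch} by a genuinely different route from the paper, while your derivation of the dual program coincides with the paper's. The paper first obtains the minimization program by substituting $\psi_{RA}=X_R\Gamma_{RA}X_R^{\dag}$ into the dual states SDP \eqref{eq:dual-root-fid-states-SDP} (exactly as you do), and then derives \eqref{eq:SDP-root-fid-ch} as the explicit conic dual of that program, via the standard forms $\sup_{X\geq0}\{\operatorname{Tr}[AX]:\Phi(X)\leq B\}$ and a page of block-matrix bookkeeping followed by elimination of the slack variables $P_{RB}$, $S_{RB}$ and $\lambda-\mu$. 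You instead transform the \emph{primal} states SDP directly, obtaining $\inf_{\rho_R}\sup_{Q'_{RB}}\operatorname{Tr}[\rho_R\operatorname{Re}[\operatorname{Tr}_B Q'_{RB}]]$, and swap the optimizations with Sion's theorem (the objective is bilinear, both feasible sets are convex, and the density simplex is compact, so this is legitimate; the passage from full-rank $\rho_R$ to all $\rho_R\geq0$ is also fine because the inner supremum is a convex function of $\rho_R$, so its infimum over the relative interior equals its infimum over the whole simplex). This lands you immediately on the $\lambda_{\min}$ form, from which \eqref{eq:SDP-root-fid-ch} follows since the optimal value is nonnegative ($Q_{RB}=0$ is feasible, even more simply than your geometric-mean witness). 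Your route is shorter and yields the $\lambda_{\min}$ expression as the natural output rather than as an afterthought; what the paper's route buys in exchange is an explicit verification that the two programs are literally a conic primal--dual pair, which justifies calling the minimization "the dual'' --- in your argument the equality of the two optimal values is instead inherited from the states-level strong duality of \cite{Wat13} applied pointwise in $\psi_{RA}$, and the primal--dual pairing itself is only asserted. That is a cosmetic rather than substantive gap, since every quantitative claim in the proposition is established by your argument.
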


\begin{proof}
See Appendix~\ref{app:SDP-root-fid-ch}.
\end{proof}

\begin{remark}
Now combining Proposition~\ref{prop:SDP-ch-fid}\ with the following limit
formula for SLD\ Fisher information of quantum channels
\begin{equation}
I_{F}(\theta;\{\mathcal{N}_{A\rightarrow B}^{\theta}\}_{\theta})=\lim
_{\delta\rightarrow0}\frac{8}{\delta^{2}}(1-\sqrt{F}(\mathcal{N}_{A\rightarrow
B}^{\theta},\mathcal{N}_{A\rightarrow B}^{\theta+\delta})),
\end{equation}
we can approximate $I_{F}(\theta;\{\mathcal{N}_{A\rightarrow B}^{\theta
}\}_{\theta})$ numerically by picking $\delta\approx10^{-3}$ or $\delta\approx10^{-4}$ and
calculating $\sqrt{F}(\mathcal{N}_{A\rightarrow B}^{\theta},\mathcal{N}
_{A\rightarrow B}^{\theta+\delta})$ by means of the semi-definite program in
Proposition~\ref{prop:SDP-ch-fid}.
\end{remark}

\section{Conclusion}

\label{sec:conclusion}

In this paper, we have used geometric distinguishability
measures to place limits on the related tasks of quantum channel estimation
and discrimination. By proving chain rules for the RLD Fisher information, as well as the
 root SLD Fisher information, we have established single-letter quantum Cramer--Rao
bounds on the performance of estimating a parameter encoded in a quantum channel.
In particular, the chain rule for the RLD Fisher information implies a simple condition to
determine if a particular family of channels can admit Heisenberg scaling in
error, complementing other conditions that have been presented previously in various settings
\cite{Fujiwara2008, Mat10, Hayashi2011, Demkowicz-Dobrzanski2017, Zhou2018}.

We have also used the geometric R\'enyi relative entropy to improve the
bounds of \cite{Berta2018c, CE18} in the realm of quantum channel discrimination,
particularly in both the Chernoff and  Hoeffding settings. Finally, we have detailed some conceptual and technical connections between estimation and discrimination. The conceptual connections are due to the fact that one task can be seen as a generalization of the other. The technical connections are due to the divergence measures that underlie each Fisher information quantity, whenever the family under question is second-order differentiable.

Extending our results to multiparameter estimation has been accomplished in \cite{UpcomingWork}. In future work, we will include energy constraints in our formalism and study the behavior of QFI quantities in the presence of energy constraints on the probe state. That is, the operational quantity to be developed further in future work is the energy-constrained generalized Fisher information of a quantum channel family, defined as follows:
\begin{equation}
    \mathbf{I}_{F, E} ( \theta; \{ \mathcal{N}^{\theta}_{A \rightarrow B} \}_{\theta} ) = \sup_{\rho_{RA}: \mathrm{Tr}[ H_A \rho_A ] \leq E} \mathbf{I}_{F} ( \theta; \{ \mathcal{N}^{\theta}_{A \rightarrow B} (\rho_{RA}) \}_{\theta} )
\end{equation}
where $H_A$ is a Hamiltonian acting on the input system of the channel $\mathcal{N}_{A \rightarrow B}^{\theta}$. This definition generalizes the energy-constrained channel divergence introduced in \cite{SWAT18}.
Furthermore, a relevant information quantity for sequential channel estimation with energy constraints is the following energy-constrained amortized Fisher information:
\begin{equation}
    \mathbf{I}_{F, E}^{\mathcal{A}} ( \theta; \{ \mathcal{N}^{\theta}_{A \rightarrow B} \}_{\theta} ) = \sup_{\{\rho^{\theta}_{RA}\}_\theta: \mathrm{Tr}[ H_A \rho^{\theta}_A ] \leq E} \mathbf{I}_{F} ( \theta; \{ \mathcal{N}^{\theta}_{A \rightarrow B} (\rho^{\theta}_{RA}) \}_{\theta} ) - \mathbf{I}_{F} ( \theta; \{  \rho^{\theta}_{RA} \}_{\theta} )
\end{equation}
We will study properties of these energy-constrained Fisher informations analogous to their corresponding unconstrained versions.

It is an interesting open question to determine whether sequential channel discrimination strategies offer any benefit over parallel discrimination strategies in the limit of a large number of channel uses and in the Chernoff and Hoeffding error exponent settings.  It is known that, in asymmetric quantum channel discrimination,  sequential strategies offer no advantage over parallel ones in the limit of a large number channel uses \cite{Hayashi09,Berta2018c,PhysRevResearch.1.033169,PhysRevLett.124.100501}. In a recent paper \cite{Zhou2020} concurrent to ours, it was established that sequential estimation strategies offer no advantage over parallel ones in the limit of a large number of channel uses whenever Heisenberg scaling is unattainable. What remains open is to determine whether sequential strategies can outperform parallel strategies in the case when Heisenberg scaling is attainable.  

We also leave open the question of determining an operational interpretation of  the RLD Fisher information
of channels as the optimal classical Fisher information needed to simulate the channel family in a local way (inspired by the question addressed in 
\cite{Matsumoto2005} for quantum state families). This task connects to coherence distillation of quantum channels from a resource-theoretic perspective \cite{Marvian2018}.

We acknowledge discussions with Sam Cree and Sumeet Khatri about geometric
R\'{e}nyi relative entropy. 
We also thank Sisi Zhou for discussions related to our paper. VK acknowledges support from the LSU Economic
Development Assistantship. VK and MMW acknowledge support from the US
National Science Foundation via grant number 1907615. MMW acknowledges support from Stanford QFARM and
AFOSR (FA9550-19-1-0369).

\phantomsection
\addcontentsline{toc}{section}{References}

\bibliographystyle{unsrt}
\bibliography{estimation-refs}

\appendix

\section{Technical lemmas}

Here we collect some technical lemmas used throughout the paper.

\begin{lemma}
\label{lem:min-XYinvX}Let $X$ be a linear operator and let $Y$ be a positive
definite operator. Then
\begin{equation}
X^{\dag}Y^{-1}X=\min\left\{  M:
\begin{bmatrix}
M & X^{\dag}\\
X & Y
\end{bmatrix}
\geq0\right\}  , \label{eq:G_-1_opt}
\end{equation}
where the ordering for the minimization is understood in the operator interval
sense (L\"{o}wner order).
\end{lemma}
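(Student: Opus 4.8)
The statement is the standard Schur-complement characterization of $X^\dag Y^{-1} X$ as a minimum over positive semi-definite block matrices. The plan is to prove two things: (1) the block matrix is positive semi-definite when $M = X^\dag Y^{-1} X$, so that value is feasible; and (2) any feasible $M$ satisfies $M \geq X^\dag Y^{-1} X$, so the minimum is attained there. The main tool is the Schur complement lemma: for a Hermitian block operator $\begin{bmatrix} M & X^\dag \\ X & Y\end{bmatrix}$ with $Y$ positive definite, positivity of the whole block is equivalent to positivity of the Schur complement $M - X^\dag Y^{-1} X$.

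First I would verify feasibility. Setting $M = X^\dag Y^{-1} X$, observe that
\begin{equation}
\begin{bmatrix}
X^\dag Y^{-1} X & X^\dag \\
X & Y
\end{bmatrix}
=
\begin{bmatrix}
X^\dag Y^{-1/2} \\
Y^{1/2}
\end{bmatrix}
\begin{bmatrix}
Y^{-1/2} X & Y^{1/2}
\end{bmatrix}
\geq 0,
\end{equation}
since any operator of the form $V^\dag V$ is positive semi-definite. This uses only that $Y$ is positive definite, so $Y^{1/2}$ and $Y^{-1/2}$ are well defined.

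Next I would establish the lower bound. Suppose $M$ is Hermitian and $\begin{bmatrix} M & X^\dag \\ X & Y \end{bmatrix} \geq 0$. Conjugating by the invertible operator $\begin{bmatrix} I & -X^\dag Y^{-1} \\ 0 & I \end{bmatrix}$ (which preserves positivity since conjugation by an invertible operator is a congruence) yields the block-diagonal operator $\begin{bmatrix} M - X^\dag Y^{-1} X & 0 \\ 0 & Y \end{bmatrix} \geq 0$, hence $M - X^\dag Y^{-1} X \geq 0$, i.e.\ $M \geq X^\dag Y^{-1} X$ in the L\"owner order. Combining this with feasibility shows that $X^\dag Y^{-1} X$ is indeed the minimum in the operator-interval sense, which is the claim.

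I do not anticipate a serious obstacle here; the only point requiring a little care is the meaning of ``$\min$'' in the L\"owner order — it must be checked that the claimed minimizer is both feasible and a lower bound for every feasible point, which is exactly what the two steps above provide (the L\"owner order is only a partial order, so the existence of a least element is a genuine assertion, not automatic). If one wanted to avoid invoking the Schur complement as a black box, the congruence computation above gives a self-contained proof of it in this special case.
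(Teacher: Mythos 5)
Your proof is correct and follows the same route as the paper, which simply invokes the Schur complement lemma (positivity of the block matrix with $Y>0$ is equivalent to $M\geq X^{\dag}Y^{-1}X$) as a black box. Your factorization argument for feasibility and the congruence by $\begin{bmatrix} I & -X^{\dag}Y^{-1}\\ 0 & I\end{bmatrix}$ for the lower bound together constitute a self-contained proof of exactly that lemma, so the content is identical.
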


\begin{proof}
This is a direct consequence of the Schur complement lemma, which states that
\begin{equation}
\begin{bmatrix}
M & X^{\dag}\\
X & Y
\end{bmatrix}
\geq0\qquad\Longleftrightarrow\qquad Y\geq0,\quad M\geq X^{\dag}Y^{-1}X.
\end{equation}
This concludes the proof.
\end{proof}

\begin{lemma}
\label{lem:freq-used-SDP-primal-dual}Let $K$ and $Z$ be Hermitian operators,
and let $W$ be a linear operator. Then the dual of the following semi-definite
program
\begin{equation}
\inf_{M}\left\{  \operatorname{Tr}[KM]:
\begin{bmatrix}
M & W^{\dag}\\
W & Z
\end{bmatrix}
\geq0\right\}  ,
\end{equation}
with $M$ Hermitian, is given by
\begin{equation}
\sup_{P,Q,R}\left\{  2\operatorname{Re}(\operatorname{Tr}[W^{\dag
}Q])-\operatorname{Tr}[ZR]:P\leq K,
\begin{bmatrix}
P & Q^{\dag}\\
Q & R
\end{bmatrix}
\geq0\right\}  ,
\end{equation}
where $Q$ is a linear operator and $P$ and $R$ are Hermitian.
\end{lemma}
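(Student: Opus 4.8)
The plan is to recognize the infimum program as an instance of the standard primal--dual SDP pair in the convention of \cite{Wat18}, namely $\sup_{X\geq 0}\{\operatorname{Tr}[AX]:\Phi(X)\leq B\}$ paired with $\inf_{Y\geq 0}\{\operatorname{Tr}[BY]:\Phi^{\dag}(Y)\geq A\}$, and then to read off the dual by computing a Hilbert--Schmidt adjoint. First I would observe that the block constraint $\begin{bmatrix} M & W^{\dag}\\ W & Z\end{bmatrix}\geq 0$ forces $M\geq 0$, since $M$ is the top-left corner of a positive semi-definite operator; hence the program is unchanged if we additionally impose $M\geq 0$, so that the variable genuinely ranges over positive semi-definite operators as the standard form requires. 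Next I would rewrite the block constraint as $\Phi^{\dag}(M)\geq A$, where $\Phi^{\dag}$ is the Hermiticity-preserving map $M\mapsto\begin{bmatrix} M & 0\\ 0 & 0\end{bmatrix}$ embedding into the larger block space and $A:=\begin{bmatrix} 0 & -W^{\dag}\\ -W & -Z\end{bmatrix}$. With $B:=K$ and $Y:=M$, the given program is precisely $\inf_{Y\geq 0}\{\operatorname{Tr}[BY]:\Phi^{\dag}(Y)\geq A\}$.

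Then I would compute $\Phi$, the Hilbert--Schmidt adjoint of $\Phi^{\dag}$: for a Hermitian block operator $X=\begin{bmatrix} P & Q^{\dag}\\ Q & R\end{bmatrix}$, one has $\operatorname{Tr}[X\,\Phi^{\dag}(M)]=\operatorname{Tr}[PM]$, so $\Phi(X)=P$. The dual program is therefore $\sup_{X\geq 0}\{\operatorname{Tr}[AX]:\Phi(X)\leq B\}$, i.e.\ the supremum of $\operatorname{Tr}[AX]$ over $X=\begin{bmatrix} P & Q^{\dag}\\ Q & R\end{bmatrix}\geq 0$ subject to $P\leq K$. A routine block multiplication gives $\operatorname{Tr}[AX]=-\operatorname{Tr}[W^{\dag}Q]-\operatorname{Tr}[WQ^{\dag}]-\operatorname{Tr}[ZR]$, and using $\operatorname{Tr}[WQ^{\dag}]=\overline{\operatorname{Tr}[W^{\dag}Q]}$ this equals $-2\operatorname{Re}(\operatorname{Tr}[W^{\dag}Q])-\operatorname{Tr}[ZR]$. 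Finally I would substitute $Q\mapsto -Q$, which is the unitary conjugation $X\mapsto(I\oplus(-I))X(I\oplus(-I))$ and hence preserves both positive semi-definiteness of the block matrix and Hermiticity of $P$ and $R$; this flips the sign of the real-part term and produces exactly the claimed dual $\sup_{P,Q,R}\{2\operatorname{Re}(\operatorname{Tr}[W^{\dag}Q])-\operatorname{Tr}[ZR]: P\leq K,\ \begin{bmatrix} P & Q^{\dag}\\ Q & R\end{bmatrix}\geq 0\}$.

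I would close by remarking that strong duality (equality of the two optimal values) follows from a Slater-type strict feasibility condition, e.g.\ existence of a feasible $X$ with $\begin{bmatrix} P & Q^{\dag}\\ Q & R\end{bmatrix}>0$ and $P<K$, or dually a strictly feasible $M$; this holds in each application of the lemma in the paper, but since the lemma is invoked only to obtain the \emph{form} of the dual, stating it as the conic dual suffices. This argument is entirely routine SDP manipulation; the only points requiring care are the bookkeeping that $M\geq 0$ is automatic, the conjugate-trace identity $\operatorname{Tr}[WQ^{\dag}]=\overline{\operatorname{Tr}[W^{\dag}Q]}$, and the legitimacy of the $Q\mapsto -Q$ substitution, so I anticipate no genuine obstacle beyond keeping the block-adjoint computation tidy and restricting to finite dimensions so that the duality machinery of \cite{Wat18} applies verbatim.
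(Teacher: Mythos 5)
Your proposal is correct and follows essentially the same route as the paper: both cast the program in the standard Watrous primal--dual form, identify the block-embedding map $\Phi^{\dag}(M)=\begin{bmatrix} M & 0\\ 0 & 0\end{bmatrix}$ with adjoint $\Phi(X)=P$, and handle the sign via conjugation by $I\oplus(-I)$ (the paper flips $W\mapsto -W$ in the constraint up front, you flip $Q\mapsto -Q$ at the end --- the same trick). Your explicit remarks that $M\geq 0$ is forced by the block constraint and that strong duality would require a separate Slater-type argument are minor refinements the paper leaves implicit.
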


\begin{proof}
The standard forms of a\ primal and dual semi-definite program, for $A$ and
$B$ Hermitian and $\Phi$ a Hermiticity-preserving map, are respectively as
follows \cite{Wat18}:
\begin{align}
&  \inf_{Y\geq0}\left\{  \operatorname{Tr}[BY]:\Phi^{\dag}(Y)\geq A\right\}
,\\
&  \sup_{X\geq0}\left\{  \operatorname{Tr}[AX]:\Phi(X)\leq B\right\}  ,
\end{align}
where $\Phi^{\dag}$ is the Hilbert--Schmidt adjoint of $\Phi$. Noting that
\begin{equation}
\begin{bmatrix}
M & W^{\dag}\\
W & Z
\end{bmatrix}
\geq0\quad\Longleftrightarrow\quad
\begin{bmatrix}
M & -W^{\dag}\\
-W & Z
\end{bmatrix}
\geq0\quad\Longleftrightarrow\quad
\begin{bmatrix}
M & 0\\
0 & 0
\end{bmatrix}
\geq
\begin{bmatrix}
0 & W^{\dag}\\
W & -Z
\end{bmatrix}
,
\end{equation}
we conclude the statement of the lemma after making the following
identifications:
\begin{align}
B  &  =K,\quad Y=M,\quad\Phi^{\dag}(M)=
\begin{bmatrix}
M & 0\\
0 & 0
\end{bmatrix}
,\\
A  &  =
\begin{bmatrix}
0 & W^{\dag}\\
W & -Z
\end{bmatrix}
,\quad X=
\begin{bmatrix}
P & Q^{\dag}\\
Q & R
\end{bmatrix}
,\quad\Phi(X)=P.
\end{align}
This concludes the proof.
\end{proof}

\begin{lemma}
\label{lem:transformer-ineq-basic}Let $X$ be a linear square operator, let $Y$
be a positive definite operator, and let $L$ be a linear operator. Then
\begin{equation}
LX^{\dag}L^{\dag}(LYL^{\dag})^{-1}LXL^{\dag}\leq LX^{\dag}Y^{-1}XL^{\dag},
\label{eq:transformer-ineq}
\end{equation}
where the inverse on the left hand side is taken on the image of $L$. If $L$
is invertible, then the following equality holds
\begin{equation}
LX^{\dag}L^{\dag}(LYL^{\dag})^{-1}LXL^{\dag}=LX^{\dag}Y^{-1}XL^{\dag}.
\label{eq:transformer-eq}
\end{equation}

\end{lemma}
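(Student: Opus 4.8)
The plan is to derive the inequality \eqref{eq:transformer-ineq} by invoking the Schur complement characterization of Lemma~\ref{lem:min-XYinvX} twice, in opposite directions, with a conjugation by $L$ in between. First I would record the base positivity fact: because $Y$ is positive definite, Lemma~\ref{lem:min-XYinvX} applied with $M=X^{\dag}Y^{-1}X$ shows that
\begin{equation}
\begin{bmatrix}
X^{\dag}Y^{-1}X & X^{\dag}\\
X & Y
\end{bmatrix}
\geq 0 .
\end{equation}
Conjugating this operator inequality by the block-diagonal operator $L\oplus L$ — which preserves positive semi-definiteness — yields
\begin{equation}
\begin{bmatrix}
L X^{\dag}Y^{-1}X L^{\dag} & L X^{\dag}L^{\dag}\\
L X L^{\dag} & L Y L^{\dag}
\end{bmatrix}
\geq 0 .
\end{equation}

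Next I would apply, in the reverse direction, the standard generalization of Lemma~\ref{lem:min-XYinvX} to the case where the bottom-right block is only positive semi-definite: if a two-by-two block operator is positive semi-definite with diagonal blocks $A$ and $C$ and upper-right block $B$, then $\operatorname{ran}(B^{\dag})\subseteq\operatorname{ran}(C)$ and $A\geq B\,C^{+}B^{\dag}$, where $C^{+}$ denotes the generalized inverse of $C$ supported on $\operatorname{ran}(C)$. (Both the range inclusion and the inequality follow by testing positivity of the block operator on vectors of the form $x\oplus(-C^{+}B^{\dag}x)$.) Applying this with $A=LX^{\dag}Y^{-1}XL^{\dag}$, $B=LX^{\dag}L^{\dag}$, and $C=LYL^{\dag}$ gives $LX^{\dag}Y^{-1}XL^{\dag}\geq (LX^{\dag}L^{\dag})(LYL^{\dag})^{+}(LXL^{\dag})$. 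To see that $(LYL^{\dag})^{+}$ is the same object as ``the inverse taken on the image of $L$'' appearing in the statement of \eqref{eq:transformer-ineq}, I would note the range identity $\operatorname{ran}(LYL^{\dag})=\operatorname{ran}(LY^{1/2})=\operatorname{ran}(L)$, which uses that $Y^{1/2}$ is invertible and that $\operatorname{ran}(TT^{\dag})=\operatorname{ran}(T)$ for any operator $T$; since also $\operatorname{ran}(LXL^{\dag})\subseteq\operatorname{ran}(L)$, the range condition is automatic and the generalized inverse really is a two-sided inverse on $\operatorname{ran}(L)$. This proves \eqref{eq:transformer-ineq}.

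For the equality statement \eqref{eq:transformer-eq}, if $L$ is invertible then $(LYL^{\dag})^{-1}=(L^{\dag})^{-1}Y^{-1}L^{-1}$, and the right-hand side of \eqref{eq:transformer-ineq} telescopes: $L X^{\dag}L^{\dag}(L^{\dag})^{-1}Y^{-1}L^{-1}L X L^{\dag}=L X^{\dag}Y^{-1}X L^{\dag}$. The only step that is not a purely formal manipulation is the second (reverse) use of the Schur complement: when $L$ fails to be invertible, $LYL^{\dag}$ need not be full rank, so one must pass to the generalized inverse and verify $\operatorname{ran}(LXL^{\dag})\subseteq\operatorname{ran}(LYL^{\dag})$. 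This is the main obstacle, and it is precisely what the range identity $\operatorname{ran}(LYL^{\dag})=\operatorname{ran}(L)$ settles.
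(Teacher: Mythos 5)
Your proof is correct and follows essentially the same route as the paper's: both start from positivity of the block operator $\begin{bmatrix} X^{\dag}Y^{-1}X & X^{\dag}\\ X & Y\end{bmatrix}$, conjugate by $L\oplus L$, and then read off the inequality from the Schur complement in the reverse direction (the paper phrases this by fixing an arbitrary admissible $M$ and specializing to $M=X^{\dag}Y^{-1}X$ at the end, which is the same thing). Your explicit verification that $\operatorname{ran}(LYL^{\dag})=\operatorname{ran}(L)\supseteq\operatorname{ran}(LXL^{\dag})$, so that the generalized inverse is legitimate when $L$ is not invertible, is a point the paper's proof passes over silently, and it is a welcome addition.
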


\begin{proof}
Fix an operator $M\geq0$ satisfying
\begin{equation}
\begin{bmatrix}
M & X^{\dag}\\
X & Y
\end{bmatrix}
\geq0. \label{eq:condition-on-M-Schur-comp}
\end{equation}
Since the maps $(\cdot)\rightarrow L(\cdot)L^{\dag}$ and $(\cdot
)\rightarrow\left(  I_{2}\otimes L\right)  (\cdot)\left(  I_{2}\otimes
L\right)  ^{\dag}$ are positive, the condition $M\geq0$ and that in
\eqref{eq:condition-on-M-Schur-comp}\ imply the following conditions:
\begin{align}
LML^{\dag}  &  \geq0,\\
\begin{bmatrix}
LML^{\dag} & LX^{\dag}L^{\dag}\\
LXL^{\dag} & LYL^{\dag}
\end{bmatrix}
&  =\left(  I_{2}\otimes L\right)
\begin{bmatrix}
M & X^{\dag}\\
X & Y
\end{bmatrix}
\left(  I_{2}\otimes L\right)  ^{\dag}\geq0.
\end{align}
Applying \eqref{eq:G_-1_opt}, we conclude that
\begin{align}
LML^{\dag}  &  \geq\min\left\{  W\geq0:
\begin{bmatrix}
W & LX^{\dag}L^{\dag}\\
LXL^{\dag} & LYL^{\dag}
\end{bmatrix}
\geq0\right\} \\
&  =LX^{\dag}L^{\dag}\left(  LYL^{\dag}\right)  ^{-1}LXL^{\dag}.
\end{align}
Since $M$ is an arbitrary operator that satisfies $M\geq0$ and
\eqref{eq:condition-on-M-Schur-comp}, we can pick it to be the smallest and
set it to $X^{\dag}Y^{-1}X$. Thus we conclude \eqref{eq:transformer-ineq}.

If $L$ is invertible, then consider that
\begin{align}
LX^{\dag}L^{\dag}(LYL^{\dag})^{-1}LXL^{\dag}  &  =LX^{\dag}L^{\dag}L^{-\dag
}Y^{-1}L^{-1}LXL^{\dag}\\
&  =LX^{\dag}Y^{-1}XL^{\dag},
\end{align}
so that \eqref{eq:transformer-eq} follows.
\end{proof}

\begin{lemma}
\label{lem:additivity-op-norm} For positive semi-definite operators $X$ and
$Y$,
\begin{equation}
\left\Vert X\otimes I+I\otimes Y\right\Vert _{\infty}=\left\Vert X\right\Vert
_{\infty}+\left\Vert Y\right\Vert _{\infty}.
\label{eq-app-a:op-norm-identity-additive}
\end{equation}

\end{lemma}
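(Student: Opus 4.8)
The plan is to reduce everything to the spectral picture, since both operators are positive semi-definite and hence their operator norms coincide with their largest eigenvalues. First I would write spectral decompositions $X=\sum_i \lambda_i |e_i\rangle\!\langle e_i|$ and $Y=\sum_j \mu_j |f_j\rangle\!\langle f_j|$, where $\{|e_i\rangle\}_i$ and $\{|f_j\rangle\}_j$ are orthonormal eigenbases of the respective spaces and all $\lambda_i,\mu_j\geq 0$ by positivity. Then the product vectors $\{|e_i\rangle\otimes|f_j\rangle\}_{i,j}$ form an orthonormal basis of the tensor-product space, and a direct computation shows that each such vector is an eigenvector of $X\otimes I + I\otimes Y$ with eigenvalue $\lambda_i+\mu_j$. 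Hence the full spectrum of $X\otimes I + I\otimes Y$ is $\{\lambda_i+\mu_j\}_{i,j}$.

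Next I would note that $X\otimes I + I\otimes Y$ is a sum of two positive semi-definite operators, hence positive semi-definite, so its operator norm equals its maximum eigenvalue. Therefore
\begin{equation}
\left\Vert X\otimes I+I\otimes Y\right\Vert_{\infty}=\max_{i,j}(\lambda_i+\mu_j)=\max_i\lambda_i+\max_j\mu_j=\left\Vert X\right\Vert_{\infty}+\left\Vert Y\right\Vert_{\infty},
\end{equation}
where the middle equality uses that maximizing over $i$ and $j$ independently decouples for a sum, and the last equality again uses positivity of $X$ and $Y$ so that $\left\Vert X\right\Vert_{\infty}=\max_i\lambda_i$ and $\left\Vert Y\right\Vert_{\infty}=\max_j\mu_j$.

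There is essentially no serious obstacle here: the only point requiring a word of care is the identification of the operator norm with the maximum eigenvalue, which is valid precisely because of the positive-semi-definiteness hypothesis (in general one would only get $\max_{i,j}|\lambda_i+\mu_j|$, which need not factor through $\|X\|_\infty$ and $\|Y\|_\infty$ without sign control). Alternatively, if one prefers a variational argument, one can use $\left\Vert A\right\Vert_{\infty}=\sup_{\left\Vert\psi\right\Vert_2=1}\langle\psi|A|\psi\rangle$ for positive semi-definite $A$, upper-bound $\langle\psi|(X\otimes I+I\otimes Y)|\psi\rangle\leq\left\Vert X\right\Vert_{\infty}+\left\Vert Y\right\Vert_{\infty}$ by Cauchy--Schwarz/operator inequalities $X\leq\left\Vert X\right\Vert_{\infty}I$ and $Y\leq\left\Vert Y\right\Vert_{\infty}I$, and achieve equality on $|e_{\max}\rangle\otimes|f_{\max}\rangle$ with top eigenvectors of $X$ and $Y$; either route completes the proof.
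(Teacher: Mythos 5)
Your proof is correct, but your primary route differs from the one in the paper. You diagonalize $X$ and $Y$ separately, observe that the product eigenbasis diagonalizes $X\otimes I + I\otimes Y$ with spectrum $\{\lambda_i+\mu_j\}_{i,j}$, and then note that the maximum of a sum over independent indices decouples. The paper instead works variationally: it writes $\left\Vert X\otimes I+I\otimes Y\right\Vert_{\infty}=\sup_{\Vert\psi\Vert_2=1}\langle\psi|(X\otimes I+I\otimes Y)|\psi\rangle$ (valid by positivity), obtains the lower bound $\geq\left\Vert X\right\Vert_{\infty}+\left\Vert Y\right\Vert_{\infty}$ by restricting the supremum to product vectors $|\phi\rangle\otimes|\varphi\rangle$, and obtains the matching upper bound from the triangle inequality $\left\Vert X\otimes I+I\otimes Y\right\Vert_{\infty}\leq\left\Vert X\otimes I\right\Vert_{\infty}+\left\Vert I\otimes Y\right\Vert_{\infty}=\left\Vert X\right\Vert_{\infty}+\left\Vert Y\right\Vert_{\infty}$. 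Your spectral computation buys slightly more (the entire spectrum of the sum, not just its top), while the paper's two-sided bound avoids any explicit diagonalization and is the kind of argument that ports directly to infinite dimensions. Your closing remark sketching the variational alternative is essentially the paper's proof, with the one difference that you invoke $X\leq\left\Vert X\right\Vert_{\infty}I$ for the upper bound where the paper uses the triangle inequality; these are interchangeable here. You are also right to flag that positivity is what licenses identifying the operator norm with the largest eigenvalue; without it the claim fails (e.g.\ $X=-Y$ diagonal), so that caveat is well placed.
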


\begin{proof}
This follows because
\begin{align}
&  \left\Vert X\otimes I+I\otimes Y\right\Vert _{\infty}\nonumber\\
&  =\sup_{|\psi\rangle:\left\Vert |\psi\rangle\right\Vert _{2}=1}\langle
\psi|\left(  X\otimes I+I\otimes Y\right)  |\psi\rangle\\
&  \geq\sup_{|\phi\rangle,|\varphi\rangle:\left\Vert |\phi\rangle\right\Vert
_{2}=\left\Vert |\varphi\rangle\right\Vert _{2}=1}\left(  \langle\phi
|\otimes\langle\varphi|\right)  \left(  X\otimes I+I\otimes Y\right)  \left(
|\phi\rangle\otimes|\varphi\rangle\right) \\
&  =\sup_{|\phi\rangle,|\varphi\rangle:\left\Vert |\phi\rangle\right\Vert
_{2}=\left\Vert |\varphi\rangle\right\Vert _{2}=1}\langle\phi|X|\phi
\rangle+\langle\varphi|Y|\varphi\rangle\\
&  =\sup_{|\phi\rangle:\left\Vert |\phi\rangle\right\Vert _{2}=1}\langle
\phi|X|\phi\rangle+\sup_{|\varphi\rangle:\left\Vert |\varphi\rangle\right\Vert
_{2}=1}\langle\varphi|Y|\varphi\rangle\\
&  =\left\Vert X\right\Vert _{\infty}+\left\Vert Y\right\Vert _{\infty}.
\end{align}
On the other hand, from the triangle inequality for the infinity norm, we have
that
\begin{align}
\left\Vert X\otimes I+I\otimes Y\right\Vert _{\infty}  &  \leq\left\Vert
X\otimes I\right\Vert _{\infty}+\left\Vert I\otimes Y\right\Vert _{\infty}\\
&  =\left\Vert X\right\Vert _{\infty}+\left\Vert Y\right\Vert _{\infty},
\end{align}
thus establishing \eqref{eq-app-a:op-norm-identity-additive}.
\end{proof}

\begin{lemma}
\label{lem:sing-val-lemma-pseudo-commute}Let $L$ be a square operator and $f$
a function such that the squares of the singular values of $L$ are in the
domain of $f$. Then
\begin{equation}
Lf(L^{\dag}L)=f(LL^{\dag})L.
\end{equation}

\end{lemma}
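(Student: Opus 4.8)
\textbf{Proof proposal for Lemma~\ref{lem:sing-val-lemma-pseudo-commute}.}

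The plan is to reduce the identity $Lf(L^{\dag}L)=f(LL^{\dag})L$ to the monomial case via the polar decomposition and then extend by polynomial approximation. First I would write the polar decomposition $L=U|L|$, where $|L|=\sqrt{L^{\dag}L}$ is positive semi-definite and $U$ is a partial isometry with $U^{\dag}U$ equal to the projection onto the support of $|L|$ (equivalently, onto the support of $L^{\dag}L$). The key structural observations are that $L^{\dag}L=|L|^{2}$ and $LL^{\dag}=U|L|^{2}U^{\dag}$, and that $U|L|=U|L|U^{\dag}U$, since $U^{\dag}U$ acts as the identity on the range of $|L|$. These let one commute $U$ past powers of $|L|^{2}$.

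Next I would prove the claim for $f(x)=x^{n}$ with $n$ a nonnegative integer by direct computation: $L(L^{\dag}L)^{n}=U|L|\,|L|^{2n}=U|L|^{2n+1}$, while $(LL^{\dag})^{n}L=(U|L|^{2}U^{\dag})^{n}U|L|=U|L|^{2n}U^{\dag}U|L|=U|L|^{2n}|L|=U|L|^{2n+1}$, where in the second-to-last step I used $U^{\dag}U|L|=|L|$. Hence the identity holds for monomials, and therefore for all polynomials $p$, giving $Lp(L^{\dag}L)=p(LL^{\dag})L$. To handle a general $f$ defined on the squared singular values, note that $f(L^{\dag}L)$ and $f(LL^{\dag})$ depend only on the values of $f$ on the spectrum of $L^{\dag}L$ (respectively $LL^{\dag}$), which is the finite set $\{\sigma_{i}^{2}\}$ of squared singular values of $L$ (including $0$ if $L$ is not invertible). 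Since these spectra coincide as sets, one can choose a single interpolating polynomial $p$ with $p(\sigma_{i}^{2})=f(\sigma_{i}^{2})$ for every $i$; then $f(L^{\dag}L)=p(L^{\dag}L)$ and $f(LL^{\dag})=p(LL^{\dag})$ by the spectral calculus, and the polynomial case yields the result.

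The only mildly delicate point — and the step I would treat most carefully — is the bookkeeping of supports when $L$ is singular: one must ensure that $f(0)$ is used consistently in $f(L^{\dag}L)$ and $f(LL^{\dag})$, and that the partial isometry relations $U^{\dag}U|L|=|L|$ and $U|L|^{2k}U^{\dag}=(LL^{\dag})^{k}$ hold on the correct subspaces. Because everything takes place in finite dimensions and the squared singular values of $L^{\dag}L$ and $LL^{\dag}$ literally agree (with the same multiplicity of the eigenvalue $0$ only up to the difference in ambient dimension, which is irrelevant since $f$ is applied through the functional calculus on each space separately), the interpolation argument goes through without difficulty. I expect no substantive obstacle beyond this careful handling of the kernel.
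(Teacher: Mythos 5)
Your proof is correct, but it takes a different route from the paper's. The paper uses the full singular value decomposition $L=UDV$ with $U$ and $V$ unitary, and then pushes $f$ through directly via the identity $f(W^{\dag}XW)=W^{\dag}f(X)W$ for unitary $W$, together with the fact that $D$ and $f(D^{2})$ commute; no reduction to polynomials is needed, and the kernel requires no special treatment because unitary conjugation intertwines the functional calculus for \emph{arbitrary} $f$. You instead use the polar decomposition $L=U|L|$, where $U$ is only a partial isometry, which forces the detour through monomials and polynomial interpolation: the relation $f(UXU^{\dag})=Uf(X)U^{\dag}$ can fail for a partial isometry when $f(0)\neq 0$, so you cannot conjugate $f$ directly and must instead verify the identity term by term where the projections $U^{\dag}U$ get absorbed by $|L|$. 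Your interpolation step is sound — for square $L$ the spectra of $L^{\dag}L$ and $LL^{\dag}$ coincide (your parenthetical about ambient dimension is moot here, and in any case only the set of eigenvalues matters for choosing the interpolating polynomial, and $0$ need not be in the domain of $f$ when $L$ is invertible since it is then not an eigenvalue). The trade-off is that your argument makes the support bookkeeping explicit at the cost of length, while the paper's SVD computation is shorter and avoids the issue entirely by keeping both factors unitary.
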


\begin{proof}
This is a direct consequence of the singular value decomposition theorem. Let
$L=UDV$ be a singular value decomposition of $L$, where $U$ and $V$ are
unitary operators and $D$ is a diagonal, positive semi-definite operator. Then
\begin{align}
Lf(L^{\dag}L)  &  =UDVf(\left(  UDV\right)  ^{\dag}UDV)\\
&  =UDVf(V^{\dag}DU^{\dag}UDV)\\
&  =UDVV^{\dag}f(D^{2})V\\
&  =UDf(D^{2})V\\
&  =Uf(D^{2})DV\\
&  =Uf(DVV^{\dag}D)U^{\dag}UDV\\
&  =f(UDVV^{\dag}DU^{\dag})UDV\\
&  =f(LL^{\dag})L.
\end{align}
This concludes the proof.
\end{proof}

\bigskip

The following lemma builds upon \cite[Lemma~3]{Zhou2019a}, wherein the
essential proof ideas are given.

\begin{lemma}
\label{lem:sisi-zhou-lem}Let $A$ be an invertible Hermitian operator, $B$ a
linear operator, $C$ a Hermitian operator, and let $\varepsilon>0$.\ Then with
\begin{align}
M(\varepsilon)  &  :=
\begin{bmatrix}
A & \varepsilon B\\
\varepsilon B^{\dag} & \varepsilon^{2}C
\end{bmatrix}
,\\
D(\varepsilon)  &  :=
\begin{bmatrix}
A+\varepsilon^{2}\operatorname{Re}[A^{-1}BB^{\dag}] & 0\\
0 & \varepsilon^{2}\left(  C-B^{\dag}A^{-1}B\right)
\end{bmatrix}
,\\
G  &  :=
\begin{bmatrix}
0 & -iA^{-1}B\\
iB^{\dag}A^{-1} & 0
\end{bmatrix}
,
\end{align}
the following inequality holds
\begin{equation}
\left\Vert M(\varepsilon)-e^{-i\varepsilon G}D(\varepsilon)e^{i\varepsilon
G}\right\Vert _{\infty}\leq o(\varepsilon^{2}). \label{eq:sis-zhou-lemm}
\end{equation}

\end{lemma}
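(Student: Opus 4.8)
The plan is to verify \eqref{eq:sis-zhou-lemm} by expanding both $M(\varepsilon)$ and $e^{-i\varepsilon G}D(\varepsilon)e^{i\varepsilon G}$ to second order in $\varepsilon$ and showing that the terms of order $\varepsilon^0$, $\varepsilon^1$, and $\varepsilon^2$ match exactly, so that only an $o(\varepsilon^2)$ remainder survives. First I would Taylor-expand the conjugation: since $G$ is a fixed Hermitian operator (independent of $\varepsilon$), we have $e^{-i\varepsilon G} = I - i\varepsilon G - \tfrac{1}{2}\varepsilon^2 G^2 + O(\varepsilon^3)$, and hence
\begin{equation}
e^{-i\varepsilon G}D(\varepsilon)e^{i\varepsilon G} = D(\varepsilon) - i\varepsilon[G,D(\varepsilon)] - \tfrac{1}{2}\varepsilon^2\big(G^2 D(\varepsilon) + D(\varepsilon)G^2 - 2 G D(\varepsilon) G\big) + O(\varepsilon^3\|D(\varepsilon)\|).
\end{equation}
Because $D(\varepsilon) = D_0 + \varepsilon^2 D_2$ with $D_0 = \operatorname{diag}(A,0)$ and $D_2 = \operatorname{diag}(\operatorname{Re}[A^{-1}BB^\dagger],\, C - B^\dagger A^{-1}B)$, the $\varepsilon^2$-terms coming from $D_2$ conjugated contribute only at order $\varepsilon^2$ (the conjugation corrections to $D_2$ are $O(\varepsilon^3)$), so to the order that matters we just need $D_0 + \varepsilon^2 D_2 - i\varepsilon[G,D_0] - \tfrac{1}{2}\varepsilon^2(G^2 D_0 + D_0 G^2 - 2GD_0 G)$.

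Next I would compute these block operators explicitly using the $2\times 2$ block form of $G$ and $D_0$. The commutator $[G,D_0]$: writing $G = \begin{bmatrix} 0 & -iA^{-1}B \\ iB^\dagger A^{-1} & 0\end{bmatrix}$ and $D_0 = \begin{bmatrix} A & 0 \\ 0 & 0\end{bmatrix}$, a direct block multiplication gives $GD_0 = \begin{bmatrix} 0 & 0 \\ iB^\dagger & 0\end{bmatrix}$ and $D_0 G = \begin{bmatrix} 0 & -iB \\ 0 & 0\end{bmatrix}$, so $-i\varepsilon[G,D_0] = -i\varepsilon\big(GD_0 - D_0 G\big) = \begin{bmatrix} 0 & \varepsilon B \\ \varepsilon B^\dagger & 0\end{bmatrix}$, which is exactly the off-diagonal part of $M(\varepsilon)$. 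For the $\varepsilon^2$ diagonal terms I would compute $G^2 = \begin{bmatrix} A^{-1}BB^\dagger A^{-1} & 0 \\ 0 & B^\dagger A^{-2}B\end{bmatrix}$ and then $G^2 D_0 + D_0 G^2 - 2GD_0 G$. The $(1,1)$ block of $G^2 D_0$ is $A^{-1}BB^\dagger A^{-1}\cdot A = A^{-1}BB^\dagger$, that of $D_0 G^2$ is $A\cdot A^{-1}BB^\dagger A^{-1} = BB^\dagger A^{-1}$, and that of $GD_0 G$ is $0$; the $(2,2)$ block of $G^2D_0 + D_0 G^2$ vanishes while that of $GD_0 G$ is $B^\dagger A^{-1}\cdot A\cdot A^{-1}B = B^\dagger A^{-1}B$. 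Therefore $-\tfrac12\varepsilon^2(G^2 D_0 + D_0 G^2 - 2GD_0 G) = \begin{bmatrix} -\tfrac12\varepsilon^2(A^{-1}BB^\dagger + BB^\dagger A^{-1}) & 0 \\ 0 & \varepsilon^2 B^\dagger A^{-1}B\end{bmatrix} = \begin{bmatrix} -\varepsilon^2\operatorname{Re}[A^{-1}BB^\dagger] & 0 \\ 0 & \varepsilon^2 B^\dagger A^{-1}B\end{bmatrix}$ (using Hermiticity of $A$ so that $(A^{-1}BB^\dagger)^\dagger = BB^\dagger A^{-1}$). Adding this to $\varepsilon^2 D_2$ cancels the $\operatorname{Re}[A^{-1}BB^\dagger]$ term in the $(1,1)$ block and turns $C - B^\dagger A^{-1}B$ into $C$ in the $(2,2)$ block, producing precisely the diagonal $\begin{bmatrix} A & 0 \\ 0 & \varepsilon^2 C\end{bmatrix}$. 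Combining with the off-diagonal part, $e^{-i\varepsilon G}D(\varepsilon)e^{i\varepsilon G} = M(\varepsilon) + (\text{remainder})$.

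Finally I would bound the remainder. The remainder consists of: (i) the $O(\varepsilon^3)$ tail of the exponential expansion applied to $D_0$, which is $O(\varepsilon^3\|G\|^3\|A\|)$; and (ii) the conjugation corrections to the $\varepsilon^2 D_2$ piece, of order $O(\varepsilon^3\|G\|\|D_2\|)$; both are $o(\varepsilon^2)$ in operator norm since $A$, $B$, $C$, $G$ are fixed. I would make the Taylor remainder rigorous via the integral form of Taylor's theorem for the matrix exponential (or simply the norm bound $\|e^{-i\varepsilon G} - I + i\varepsilon G + \tfrac12\varepsilon^2 G^2\| \le \tfrac{1}{6}\varepsilon^3\|G\|^3 e^{\varepsilon\|G\|}$), so that the cumulative error is indeed $o(\varepsilon^2)$, giving \eqref{eq:sis-zhou-lemm}. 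I expect the main obstacle to be purely bookkeeping: being careful that ``$o(\varepsilon^2)$'' is understood in the operator norm and keeping the block-matrix multiplications error-free, particularly the Hermiticity manipulations that let the symmetric combination $\tfrac12(A^{-1}BB^\dagger + BB^\dagger A^{-1})$ be identified with $\operatorname{Re}[A^{-1}BB^\dagger]$. There is no deep analytic difficulty here since $G$ does not depend on $\varepsilon$ and $D(\varepsilon)$ depends on it only polynomially; the lemma is essentially the statement that the congruence $M(\varepsilon) \mapsto$ block-diagonalization via Schur complement can be realized, to second order, by an infinitesimal unitary rotation generated by $G$.
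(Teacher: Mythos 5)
Your proposal is correct and is essentially the same argument as the paper's: a second-order Taylor expansion of the conjugating unitary followed by explicit block-matrix bookkeeping. The only (immaterial) difference is the direction of conjugation — the paper expands $e^{i\varepsilon G}M(\varepsilon)e^{-i\varepsilon G}$ and shows it equals $D(\varepsilon)+o(\varepsilon^{2})$, whereas you expand $e^{-i\varepsilon G}D(\varepsilon)e^{i\varepsilon G}$ and recover $M(\varepsilon)$; since conjugation by a unitary preserves the operator norm, the two statements are equivalent, and your block computations (the commutator producing the off-diagonal $\varepsilon B$ terms and the second-order terms cancelling $\operatorname{Re}[A^{-1}BB^{\dag}]$ and restoring $\varepsilon^{2}C$) all check out.
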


\begin{proof}
Observe that $G$ is Hermitian and consider that
\begin{equation}
e^{i\varepsilon G}M(\varepsilon)e^{-i\varepsilon G}=\left(  I+i\varepsilon
G-\frac{\varepsilon^{2}}{2}G^{2}\right)  M(\varepsilon)\left(  I-i\varepsilon
G-\frac{\varepsilon^{2}}{2}G^{2}\right)  +o(\varepsilon^{2}).
\end{equation}
Then we find that
\begin{multline}
\left(  I+i\varepsilon G-\frac{\varepsilon^{2}}{2}G^{2}\right)  M(\varepsilon
)\left(  I-i\varepsilon G-\frac{\varepsilon^{2}}{2}G^{2}\right)
=M(\varepsilon)+i\varepsilon\left[  GM(\varepsilon)-M(\varepsilon)G\right] \\
+\varepsilon^{2}\left[  GM(\varepsilon)G-\frac{1}{2}G^{2}M(\varepsilon
)-\frac{1}{2}M(\varepsilon)G^{2}\right]  +o(\varepsilon^{2}).
\end{multline}
Now observe that
\begin{align}
GM(\varepsilon)  &  =
\begin{bmatrix}
0 & -iA^{-1}B\\
iB^{\dag}A^{-1} & 0
\end{bmatrix}
\begin{bmatrix}
A & \varepsilon B\\
\varepsilon B^{\dag} & \varepsilon^{2}C
\end{bmatrix}
\\
&  =
\begin{bmatrix}
-i\varepsilon A^{-1}BB^{\dag} & -i\varepsilon^{2}A^{-1}BC\\
iB^{\dag} & i\varepsilon B^{\dag}A^{-1}B
\end{bmatrix}
\\
&  =
\begin{bmatrix}
-i\varepsilon A^{-1}BB^{\dag} & o(\varepsilon)\\
iB^{\dag} & i\varepsilon B^{\dag}A^{-1}B
\end{bmatrix}
,\\
M(\varepsilon)G  &  =\left[  GM(\varepsilon)\right]  ^{\dag}\\
&  =
\begin{bmatrix}
i\varepsilon BB^{\dag}A^{-1} & -iB\\
o(\varepsilon) & -i\varepsilon B^{\dag}A^{-1}B
\end{bmatrix}
,
\end{align}
which implies that
\begin{align}
&  i\varepsilon\left[  GM(\varepsilon)-M(\varepsilon)G\right] \nonumber\\
&  =i\varepsilon\left(
\begin{bmatrix}
-i\varepsilon A^{-1}BB^{\dag} & o(\varepsilon)\\
iB^{\dag} & i\varepsilon B^{\dag}A^{-1}B
\end{bmatrix}
-
\begin{bmatrix}
i\varepsilon BB^{\dag}A^{-1} & -iB\\
o(\varepsilon) & -i\varepsilon B^{\dag}A^{-1}B
\end{bmatrix}
\right) \\
&  =
\begin{bmatrix}
2\varepsilon^{2}\operatorname{Re}[A^{-1}BB^{\dag}] & -\varepsilon
B+o(\varepsilon^{2})\\
-\varepsilon B^{\dag}+o(\varepsilon^{2}) & -2\varepsilon^{2}B^{\dag}A^{-1}B
\end{bmatrix}
.
\end{align}
Also, observe that
\begin{align}
GM(\varepsilon)G  &  =
\begin{bmatrix}
o(1) & o(\varepsilon)\\
iB^{\dag} & o(1)
\end{bmatrix}
\begin{bmatrix}
0 & -iA^{-1}B\\
iB^{\dag}A^{-1} & 0
\end{bmatrix}
\\
&  =
\begin{bmatrix}
o(\varepsilon) & o(1)\\
o(1) & B^{\dag}A^{-1}B
\end{bmatrix}
,\\
G^{2}M(\varepsilon)  &  =G[GM(\varepsilon)]\\
&  =
\begin{bmatrix}
0 & -iA^{-1}B\\
iB^{\dag}A^{-1} & 0
\end{bmatrix}
\begin{bmatrix}
o(1) & o(\varepsilon)\\
iB^{\dag} & o(1)
\end{bmatrix}
\\
&  =
\begin{bmatrix}
A^{-1}BB^{\dag} & o(1)\\
o(1) & o(\varepsilon)
\end{bmatrix}
,\\
M(\varepsilon)G^{2}  &  =\left[  G^{2}M(\varepsilon)\right]  ^{\dag}\\
&  =
\begin{bmatrix}
BB^{\dag}A^{-1} & o(1)\\
o(1) & o(\varepsilon)
\end{bmatrix}
.
\end{align}
So then we find that
\begin{align}
&  \varepsilon^{2}\left[  GM(\varepsilon)G-\frac{1}{2}G^{2}M(\varepsilon
)-\frac{1}{2}M(\varepsilon)G^{2}\right] \nonumber\\
&  =\varepsilon^{2}\left(
\begin{bmatrix}
o(\varepsilon) & o(1)\\
o(1) & B^{\dag}A^{-1}B
\end{bmatrix}
-\frac{1}{2}
\begin{bmatrix}
A^{-1}BB^{\dag} & o(1)\\
o(1) & o(\varepsilon)
\end{bmatrix}
-\frac{1}{2}
\begin{bmatrix}
BB^{\dag}A^{-1} & o(1)\\
o(1) & o(\varepsilon)
\end{bmatrix}
\right) \\
&  =
\begin{bmatrix}
-\varepsilon^{2}\operatorname{Re}[A^{-1}BB^{\dag}]+o(\varepsilon^{3}) &
o(\varepsilon^{2})\\
o(\varepsilon^{2}) & \varepsilon^{2}B^{\dag}A^{-1}B+o(\varepsilon^{3})
\end{bmatrix}
.
\end{align}
So then
\begin{align}
&  \left(  I+i\varepsilon G-\frac{\varepsilon^{2}}{2}G^{2}\right)
M(\varepsilon)\left(  I-i\varepsilon G-\frac{\varepsilon^{2}}{2}G^{2}\right)
\nonumber\\
&  =M(\varepsilon)+i\varepsilon\left[  GM(\varepsilon)-M(\varepsilon)G\right]
\nonumber\\
&  \qquad+\varepsilon^{2}\left[  GM(\varepsilon)G-\frac{1}{2}G^{2}
M(\varepsilon)-\frac{1}{2}M(\varepsilon)G^{2}\right]  +o(\varepsilon^{2})\\
&  =
\begin{bmatrix}
A & \varepsilon B\\
\varepsilon B^{\dag} & \varepsilon^{2}C
\end{bmatrix}
+
\begin{bmatrix}
2\varepsilon^{2}\operatorname{Re}[A^{-1}BB^{\dag}] & -\varepsilon
B+o(\varepsilon^{2})\\
-\varepsilon B^{\dag}+o(\varepsilon^{2}) & -2\varepsilon^{2}B^{\dag}A^{-1}B
\end{bmatrix}
\nonumber\\
&  \qquad+
\begin{bmatrix}
-\varepsilon^{2}\operatorname{Re}[A^{-1}BB^{\dag}]+o(\varepsilon^{3}) &
o(\varepsilon^{2})\\
o(\varepsilon^{2}) & \varepsilon^{2}B^{\dag}A^{-1}B+o(\varepsilon^{3})
\end{bmatrix}
+o(\varepsilon^{2})\\
&  =
\begin{bmatrix}
A+\varepsilon^{2}\operatorname{Re}[A^{-1}BB^{\dag}] & 0\\
0 & \varepsilon^{2}\left(  C-B^{\dag}A^{-1}B\right)
\end{bmatrix}
+o(\varepsilon^{2})\\
&  =D(\varepsilon)+o(\varepsilon^{2}).
\end{align}
So we conclude that
\begin{equation}
e^{i\varepsilon G}M(\varepsilon)e^{-i\varepsilon G}=D(\varepsilon
)+o(\varepsilon^{2}),
\end{equation}
which in turn implies that
\begin{equation}
M(\varepsilon)=e^{-i\varepsilon G}D(\varepsilon)e^{i\varepsilon G}
+o(\varepsilon^{2}),
\end{equation}
from which we conclude the claim in \eqref{eq:sis-zhou-lemm}.
\end{proof}

\section{Basis-dependent and basis-independent formulas for SLD\ Fisher
information}

\label{app:basis-ind-dep-formulas-SLD-FI}

Here we review the proof of the following equality, mentioned in
\eqref{eq:basis-dependent-SLD-formula}--\eqref{eq:basis-independent-SLD-formula-extra},
which was reported in \cite{Saf18} and holds when $\Pi_{\rho_{\theta}}^{\perp
}(\partial_{\theta}\rho_{\theta})\Pi_{\rho_{\theta}}^{\perp}=0$:
\begin{align}
\frac{1}{2}I_{F}(\theta;\{\rho_{\theta}\}_{\theta})  &  =\sum_{j,k:\lambda
_{\theta}^{j}+\lambda_{\theta}^{k}>0}\frac{|\langle\psi_{\theta}^{j}
|(\partial_{\theta}\rho_{\theta})|\psi_{\theta}^{k}\rangle|^{2}}
{\lambda_{\theta}^{j}+\lambda_{\theta}^{k}}\\
&  =\langle\Gamma|((\partial_{\theta}\rho_{\theta})\otimes I)(\rho_{\theta
}\otimes I+I\otimes\rho_{\theta}^{T})^{-1}((\partial_{\theta}\rho_{\theta
})\otimes I)|\Gamma\rangle. \label{eq:app-fish-info-basis-ind-SLD}
\end{align}
Consider that
\begin{align}
&  \rho_{\theta}\otimes I+I\otimes\rho_{\theta}^{T}\nonumber\\
&  =\sum_{j:\lambda_{j}>0}\lambda_{\theta}^{j}|\psi_{\theta}^{j}\rangle
\langle\psi_{\theta}^{j}|\otimes I+I\otimes\left(  \sum_{k:\lambda_{k}
>0}\lambda_{\theta}^{k}|\psi_{\theta}^{k}\rangle\!\langle\psi_{\theta}
^{k}|\right)  ^{T}\\
&  =\sum_{j:\lambda_{j}>0}\lambda_{\theta}^{j}|\psi_{\theta}^{j}\rangle
\langle\psi_{\theta}^{j}|\otimes I+I\otimes\sum_{k:\lambda_{k}>0}
\lambda_{\theta}^{k}|\overline{\psi_{\theta}^{k}}\rangle\!\langle\overline
{\psi_{\theta}^{k}}|\\
&  =\sum_{j:\lambda_{j}>0,k}\lambda_{\theta}^{j}|\psi_{\theta}^{j}
\rangle\!\langle\psi_{\theta}^{j}|\otimes|\overline{\psi_{\theta}^{k}}
\rangle\!\langle\overline{\psi_{\theta}^{k}}|+\sum_{j,k:\lambda_{k}>0}
\lambda_{\theta}^{k}|\psi_{\theta}^{j}\rangle\!\langle\psi_{\theta}^{j}
|\otimes|\overline{\psi_{\theta}^{k}}\rangle\!\langle\overline{\psi_{\theta}
^{k}}|\\
&  =\sum_{j,k:\lambda_{j}+\lambda_{k}>0}\left(  \lambda_{\theta}^{j}
+\lambda_{\theta}^{k}\right)  |\psi_{\theta}^{j}\rangle\!\langle\psi_{\theta
}^{j}|\otimes|\overline{\psi_{\theta}^{k}}\rangle\!\langle\overline{\psi
_{\theta}^{k}}|,
\end{align}
where $|\overline{\psi_{\theta}^{k}}\rangle$ denotes the complex conjugate of
$|\psi_{\theta}^{k}\rangle$ with respect to the orthonormal basis
$\{|i\rangle\}_{i}$ for the unnormalized maximally entangled vector
$|\Gamma\rangle$. Then it follows that
\begin{equation}
\left(  \rho_{\theta}\otimes I+I\otimes\rho_{\theta}^{T}\right)  ^{-1}
=\sum_{j,k:\lambda_{j}+\lambda_{k}>0}\frac{1}{\lambda_{\theta}^{j}
+\lambda_{\theta}^{k}}|\psi_{\theta}^{j}\rangle\!\langle\psi_{\theta}
^{j}|\otimes|\overline{\psi_{\theta}^{k}}\rangle\!\langle\overline{\psi_{\theta
}^{k}}|,
\end{equation}
and we find that
\begin{align}
&  \langle\Gamma|((\partial_{\theta}\rho_{\theta})\otimes I)(\rho_{\theta
}\otimes I+I\otimes\rho_{\theta}^{T})^{-1}((\partial_{\theta}\rho_{\theta
})\otimes I)|\Gamma\rangle\nonumber\\
&  =\langle\Gamma|((\partial_{\theta}\rho_{\theta})\otimes I)\left(
\sum_{j,k:\lambda_{j}+\lambda_{k}>0}\frac{1}{\lambda_{\theta}^{j}
+\lambda_{\theta}^{k}}|\psi_{\theta}^{j}\rangle\!\langle\psi_{\theta}
^{j}|\otimes|\overline{\psi_{\theta}^{k}}\rangle\!\langle\overline{\psi_{\theta
}^{k}}|\right)  ((\partial_{\theta}\rho_{\theta})\otimes I)|\Gamma\rangle\\
&  =\sum_{j,k:\lambda_{j}+\lambda_{k}>0}\frac{1}{\lambda_{\theta}^{j}
+\lambda_{\theta}^{k}}\langle\Gamma|((\partial_{\theta}\rho_{\theta})\otimes
I)\left(  |\psi_{\theta}^{j}\rangle\!\langle\psi_{\theta}^{j}|\otimes
|\overline{\psi_{\theta}^{k}}\rangle\!\langle\overline{\psi_{\theta}^{k}
}|\right)  ((\partial_{\theta}\rho_{\theta})\otimes I)|\Gamma\rangle\\
&  =\sum_{j,k:\lambda_{j}+\lambda_{k}>0}\frac{1}{\lambda_{\theta}^{j}
+\lambda_{\theta}^{k}}\langle\Gamma|((\partial_{\theta}\rho_{\theta}
)|\psi_{\theta}^{j}\rangle\!\langle\psi_{\theta}^{j}|(\partial_{\theta}
\rho_{\theta})\otimes|\overline{\psi_{\theta}^{k}}\rangle\!\langle\overline
{\psi_{\theta}^{k}}|)|\Gamma\rangle\\
&  =\sum_{j,k:\lambda_{j}+\lambda_{k}>0}\frac{1}{\lambda_{\theta}^{j}
+\lambda_{\theta}^{k}}\langle\Gamma|((\partial_{\theta}\rho_{\theta}
)|\psi_{\theta}^{j}\rangle\!\langle\psi_{\theta}^{j}|(\partial_{\theta}
\rho_{\theta})|\psi_{\theta}^{k}\rangle\!\langle\psi_{\theta}^{k}|\otimes
I)|\Gamma\rangle\\
&  =\sum_{j,k:\lambda_{j}+\lambda_{k}>0}\frac{1}{\lambda_{\theta}^{j}
+\lambda_{\theta}^{k}}\operatorname{Tr}[(\partial_{\theta}\rho_{\theta}
)|\psi_{\theta}^{j}\rangle\!\langle\psi_{\theta}^{j}|(\partial_{\theta}
\rho_{\theta})|\psi_{\theta}^{k}\rangle\!\langle\psi_{\theta}^{k}|]\\
&  =\sum_{j,k:\lambda_{j}+\lambda_{k}>0}\frac{|\langle\psi_{\theta}
^{j}|(\partial_{\theta}\rho_{\theta})|\psi_{\theta}^{k}\rangle|^{2}}
{\lambda_{\theta}^{j}+\lambda_{\theta}^{k}},
\end{align}
where we used \eqref{eq:transpose-trick} and \eqref{eq:max-ent-partial-trace}.

Following the approach given in \cite{Saf18}, we can also see how the formula
in \eqref{eq:SLD-FI} arises from the differential equation in
\eqref{eq:SLD-op-def} and the formula in
\eqref{eq:app-fish-info-basis-ind-SLD}.\ Again, this development is only
relevant when the finiteness condition $\Pi_{\rho_{\theta}}^{\perp}
(\partial_{\theta}\rho_{\theta})\Pi_{\rho_{\theta}}^{\perp}=0$ holds. Consider
that the SLD\ operator $L_{\theta}$ is defined from the following differential
equation:
\begin{equation}
\partial_{\theta}\rho_{\theta}=\frac{1}{2}\left(  \rho_{\theta}L_{\theta
}+L_{\theta}\rho_{\theta}\right)  .
\end{equation}
Then this is equivalent to the following vectorized form:
\begin{align}
\left(  \partial_{\theta}\rho_{\theta}\otimes I\right)  |\Gamma\rangle &
=\left(  \frac{1}{2}\left(  \rho_{\theta}L_{\theta}+L_{\theta}\rho_{\theta
}\right)  \otimes I\right)  |\Gamma\rangle\\
&  =\frac{1}{2}\left(  \rho_{\theta}L_{\theta}\otimes I+L_{\theta}\rho
_{\theta}\otimes I\right)  |\Gamma\rangle\\
&  =\frac{1}{2}\left(  \rho_{\theta}L_{\theta}\otimes I+L_{\theta}\otimes
\rho_{\theta}^{T}\right)  |\Gamma\rangle\\
&  =\frac{1}{2}\left(  \rho_{\theta}\otimes I+I\otimes\rho_{\theta}
^{T}\right)  \left(  L_{\theta}\otimes I\right)  |\Gamma\rangle.
\end{align}
Consider that
\begin{equation}
(\Pi_{\rho_{\theta}}^{\perp}\otimes\Pi_{\rho_{\theta}^{T}}^{\perp})\left(
\partial_{\theta}\rho_{\theta}\otimes I\right)  |\Gamma\rangle=0
\end{equation}
because
\begin{align}
(\Pi_{\rho_{\theta}}^{\perp}\otimes\Pi_{\rho_{\theta}^{T}}^{\perp})\left(
\partial_{\theta}\rho_{\theta}\otimes I\right)  |\Gamma\rangle &  =\left(
\Pi_{\rho_{\theta}}^{\perp}(\partial_{\theta}\rho_{\theta})\otimes\Pi
_{\rho_{\theta}^{T}}^{\perp}\right)  |\Gamma\rangle\\
&  =\left(  \Pi_{\rho_{\theta}}^{\perp}(\partial_{\theta}\rho_{\theta}
)\otimes(\Pi_{\rho_{\theta}}^{\perp})^{T}\right)  |\Gamma\rangle\\
&  =\left(  \Pi_{\rho_{\theta}}^{\perp}(\partial_{\theta}\rho_{\theta}
)\Pi_{\rho_{\theta}}^{\perp}\otimes I\right)  |\Gamma\rangle\\
&  =0.
\end{align}
Thus, $\left(  \partial_{\theta}\rho_{\theta}\otimes I\right)  |\Gamma\rangle$
is only non-zero on the space onto which $I\otimes I-\Pi_{\rho_{\theta}
}^{\perp}\otimes\Pi_{\rho_{\theta}^{T}}^{\perp}$ projects, i.e.,
\begin{equation}
\left(  I\otimes I-\Pi_{\rho_{\theta}}^{\perp}\otimes\Pi_{\rho_{\theta}^{T}
}^{\perp}\right)  \left(  \partial_{\theta}\rho_{\theta}\otimes I\right)
|\Gamma\rangle=\left(  \partial_{\theta}\rho_{\theta}\otimes I\right)
|\Gamma\rangle. \label{eq:diff-vec-projection-app}
\end{equation}
Furthermore, note that the support of the operator $\rho_{\theta}\otimes
I+I\otimes\rho_{\theta}^{T}$ is given by
\begin{equation}
\Pi_{\rho_{\theta}}\otimes\Pi_{\rho_{\theta}^{T}}+\Pi_{\rho_{\theta}}^{\perp
}\otimes\Pi_{\rho_{\theta}^{T}}+\Pi_{\rho_{\theta}}\otimes\Pi_{\rho_{\theta
}^{T}}^{\perp}=I\otimes I-\Pi_{\rho_{\theta}}^{\perp}\otimes\Pi_{\rho_{\theta
}^{T}}^{\perp}.
\end{equation}
Thus, by applying the inverse of the operator $\frac{1}{2}\left(  \rho
_{\theta}\otimes I+I\otimes\rho_{\theta}^{T}\right)  $ on its support on both
sides, we find that
\begin{equation}
2\left(  \rho_{\theta}\otimes I+I\otimes\rho_{\theta}^{T}\right)  ^{-1}\left(
\partial_{\theta}\rho_{\theta}\otimes I\right)  |\Gamma\rangle=(I\otimes
I-\Pi_{\rho_{\theta}}^{\perp}\otimes\Pi_{\rho_{\theta}^{T}}^{\perp})\left(
L_{\theta}\otimes I\right)  |\Gamma\rangle. \label{eq:diff-eq-rewritten-app}
\end{equation}
Next, we use the fact that
\begin{equation}
\operatorname{Tr}[X^{\dag}Y]=\langle\Gamma|\left(  X\otimes I\right)  ^{\dag
}\left(  Y\otimes I\right)  |\Gamma\rangle,
\end{equation}
and we find that
\begin{align}
\operatorname{Tr}[L_{\theta}(\partial_{\theta}\rho_{\theta})]  &
=\langle\Gamma|\left(  \partial_{\theta}\rho_{\theta}\otimes I\right)  \left(
L_{\theta}\otimes I\right)  |\Gamma\rangle\\
&  =\langle\Gamma|\left(  \partial_{\theta}\rho_{\theta}\otimes I\right)
\left(  I\otimes I-\Pi_{\rho_{\theta}}^{\perp}\otimes\Pi_{\rho_{\theta}^{T}
}^{\perp}\right)  \left(  L_{\theta}\otimes I\right)  |\Gamma\rangle\\
&  =2\langle\Gamma|\left(  \partial_{\theta}\rho_{\theta}\otimes I\right)
\left(  \rho_{\theta}\otimes I+I\otimes\rho_{\theta}^{T}\right)  ^{-1}\left(
\partial_{\theta}\rho_{\theta}\otimes I\right)  |\Gamma\rangle,
\end{align}
where we used \eqref{eq:diff-vec-projection-app} and
\eqref{eq:diff-eq-rewritten-app}. This concludes the proof that
\begin{equation}
I_{F}(\theta;\{\rho_{A}^{\theta}\})=\operatorname{Tr}[L_{\theta}
(\partial_{\theta}\rho_{\theta})].
\end{equation}

\section{Physical consistency of SLD\ and RLD\ Fisher informations of quantum
states}

\label{app:phys-cons-SLD-RLD-Fish-states}

We begin by establishing the equivalence of the conditions in
\eqref{eq:finiteness-condition-SLD-Fish-states} and
\eqref{eq:finiteness-condition-SLD-Fish-alt}. Suppose that $\Pi_{\rho_{\theta
}}^{\perp}(\partial_{\theta}\rho_{\theta})\Pi_{\rho_{\theta}}^{\perp}=0$
holds. Then consider that
\begin{equation}
\Pi_{\rho_{\theta}}^{\perp}=\sum_{j:\lambda_{\theta}^{j}=0}|\psi_{\theta}
^{j}\rangle\!\langle\psi_{\theta}^{j}|,
\end{equation}
so that
\begin{align}
0  &  =\Pi_{\rho_{\theta}}^{\perp}(\partial_{\theta}\rho_{\theta})\Pi
_{\rho_{\theta}}^{\perp}\\
&  =\left(  \sum_{j:\lambda_{\theta}^{j}=0}|\psi_{\theta}^{j}\rangle
\langle\psi_{\theta}^{j}|\right)  (\partial_{\theta}\rho_{\theta})\left(
\sum_{k:\lambda_{\theta}^{k}=0}|\psi_{\theta}^{k}\rangle\!\langle\psi_{\theta
}^{k}|\right) \\
&  =\sum_{j:\lambda_{\theta}^{j}=0}\sum_{k:\lambda_{\theta}^{k}=0}
|\psi_{\theta}^{j}\rangle\!\langle\psi_{\theta}^{j}|(\partial_{\theta}
\rho_{\theta})|\psi_{\theta}^{k}\rangle\!\langle\psi_{\theta}^{k}|\\
&  =\sum_{j,k:\lambda_{\theta}^{j}+\lambda_{\theta}^{k}=0}\langle\psi_{\theta
}^{j}|(\partial_{\theta}\rho_{\theta})|\psi_{\theta}^{k}\rangle\ |\psi
_{\theta}^{j}\rangle\!\langle\psi_{\theta}^{k}|.
\end{align}
The last equality follows because $\lambda_{\theta}^{j}\geq0$ for all $j$, so
that $\lambda_{\theta}^{j}+\lambda_{\theta}^{k}=0$ is equivalent to
$\lambda_{\theta}^{j}=0\wedge\lambda_{\theta}^{k}=0$. Then it follows that
$\langle\psi_{\theta}^{j}|(\partial_{\theta}\rho_{\theta})|\psi_{\theta}
^{k}\rangle=0$ if $\lambda_{\theta}^{j}+\lambda_{\theta}^{k}=0$. This
establishes \eqref{eq:finiteness-condition-SLD-Fish-states} $\Rightarrow$
\eqref{eq:finiteness-condition-SLD-Fish-alt}. The opposite implication follows
from running the proof above backwards.

The equality in \eqref{eq:SLD-Fish-alt-formula-kernel-rho-theta}\ is
established in
\eqref{eq:alt-formula-SLD-step-1}--\eqref{eq:alt-formula-SLD-step-3}\ of the
proof given below.

\bigskip

\begin{proof}
[Proof of Proposition~\ref{prop:physical-consistency-SLD-Fish-states}]First,
it is helpful to write the spectral decomposition of $\rho_{\theta}$ as
follows:
\begin{equation}
\rho_{\theta}=\sum_{j\in\mathcal{S}}\lambda_{\theta}^{j}|\psi_{\theta}
^{j}\rangle\!\langle\psi_{\theta}^{j}|+\sum_{j\in\mathcal{K}}\lambda_{\theta
}^{j}|\psi_{\theta}^{j}\rangle\!\langle\psi_{\theta}^{j}|,
\end{equation}
where $\mathcal{S}$ is the set of indices for which $\lambda_{\theta}^{j}>0$
and $\mathcal{K}$ is the set of indices for which $\lambda_{\theta}^{j}=0$
($\mathcal{S}$ and $\mathcal{K}$ are meant to refer to support and kernel,
respectively).\ Let us define
\begin{equation}
\Pi_{\rho_{\theta}}:=\sum_{j\in\mathcal{S}}|\psi_{\theta}^{j}\rangle
\langle\psi_{\theta}^{j}|,\qquad\Pi_{\rho_{\theta}}^{\perp}:=I-\Pi
_{\rho_{\theta}}=\sum_{j\in\mathcal{K}}|\psi_{\theta}^{j}\rangle\!\langle
\psi_{\theta}^{j}|.
\end{equation}
Then
\begin{align}
\rho_{\theta}^{\varepsilon}  &  =\left(  1-\varepsilon\right)  \rho_{\theta
}+\varepsilon\pi\\
&  =\sum_{j\in\mathcal{S}}\left[  \left(  1-\varepsilon\right)  \lambda
_{\theta}^{j}\right]  |\psi_{\theta}^{j}\rangle\!\langle\psi_{\theta}^{j}
|+\frac{\varepsilon}{d}\sum_{j}|\psi_{\theta}^{j}\rangle\!\langle\psi_{\theta
}^{j}|\\
&  =\sum_{j\in\mathcal{S}}\left[  \left(  1-\varepsilon\right)  \lambda
_{\theta}^{j}+\frac{\varepsilon}{d}\right]  |\psi_{\theta}^{j}\rangle
\langle\psi_{\theta}^{j}|+\frac{\varepsilon}{d}\sum_{j\in\mathcal{K}}
|\psi_{\theta}^{j}\rangle\!\langle\psi_{\theta}^{j}|\\
&  =\sum_{j\in\mathcal{S}}\left[  \left(  1-\varepsilon\right)  \lambda
_{\theta}^{j}+\frac{\varepsilon}{d}\right]  |\psi_{\theta}^{j}\rangle
\langle\psi_{\theta}^{j}|+\frac{\varepsilon}{d}\Pi_{\rho_{\theta}}^{\perp}.
\end{align}
Let $\{\lambda_{\theta,\varepsilon}^{j}\}_{j}$ denote the eigenvalues of
$\rho_{\theta}^{\varepsilon}$, so that $\lambda_{\theta,\varepsilon}
^{j}=\left(  1-\varepsilon\right)  \lambda_{\theta}^{j}+\frac{\varepsilon}{d}$
for $j\in\mathcal{S}$ and $\lambda_{\theta,\varepsilon}^{j}=\frac{\varepsilon
}{d}$ for $j\in\mathcal{K}$. Observe that the state $\rho_{\theta
}^{\varepsilon}$ has full support. Also, observe that
\begin{equation}
\partial_{\theta}\rho_{\theta}^{\varepsilon}=\left(  1-\varepsilon\right)
\partial_{\theta}\rho_{\theta}.
\end{equation}
Plugging into the formula in \eqref{eq:SLD-Fish-info-formula}, we find that
\begin{align}
&  \frac{1}{2\left(  1-\varepsilon\right)  ^{2}}I_{F}(\theta;\{\rho_{\theta
}^{\varepsilon}\}_{\theta})\nonumber\\
&  =\frac{1}{\left(  1-\varepsilon\right)  ^{2}}\sum_{j,k}\frac{|\langle
\psi_{\theta}^{j}|(\partial_{\theta}\rho_{\theta}^{\varepsilon})|\psi_{\theta
}^{k}\rangle|^{2}}{\lambda_{\theta,\varepsilon}^{j}+\lambda_{\theta
,\varepsilon}^{k}}\\
&  =\sum_{j,k}\frac{|\langle\psi_{\theta}^{j}|(\partial_{\theta}\rho_{\theta
})|\psi_{\theta}^{k}\rangle|^{2}}{\lambda_{\theta,\varepsilon}^{j}
+\lambda_{\theta,\varepsilon}^{k}}\\
&  =\sum_{j\in\mathcal{S},k\in\mathcal{S}}\frac{|\langle\psi_{\theta}
^{j}|(\partial_{\theta}\rho_{\theta})|\psi_{\theta}^{k}\rangle|^{2}}
{\lambda_{\theta,\varepsilon}^{j}+\lambda_{\theta,\varepsilon}^{k}}+\sum
_{j\in\mathcal{S},k\in\mathcal{K}}\frac{|\langle\psi_{\theta}^{j}
|(\partial_{\theta}\rho_{\theta})|\psi_{\theta}^{k}\rangle|^{2}}
{\lambda_{\theta,\varepsilon}^{j}+\lambda_{\theta,\varepsilon}^{k}}\nonumber\\
&  \qquad+\sum_{j\in\mathcal{K},k\in\mathcal{S}}\frac{|\langle\psi_{\theta
}^{j}|(\partial_{\theta}\rho_{\theta})|\psi_{\theta}^{k}\rangle|^{2}}
{\lambda_{\theta,\varepsilon}^{j}+\lambda_{\theta,\varepsilon}^{k}}+\sum
_{j\in\mathcal{K},k\in\mathcal{K}}\frac{|\langle\psi_{\theta}^{j}
|(\partial_{\theta}\rho_{\theta})|\psi_{\theta}^{k}\rangle|^{2}}
{\lambda_{\theta,\varepsilon}^{j}+\lambda_{\theta,\varepsilon}^{k}}
\end{align}
Let us consider the terms one at a time, starting with the first one:
\begin{equation}
\sum_{j\in\mathcal{S},k\in\mathcal{S}}\frac{|\langle\psi_{\theta}
^{j}|(\partial_{\theta}\rho_{\theta})|\psi_{\theta}^{k}\rangle|^{2}}
{\lambda_{\theta,\varepsilon}^{j}+\lambda_{\theta,\varepsilon}^{k}}=\sum
_{j\in\mathcal{S},k\in\mathcal{S}}\frac{|\langle\psi_{\theta}^{j}
|(\partial_{\theta}\rho_{\theta})|\psi_{\theta}^{k}\rangle|^{2}}{\left(
1-\varepsilon\right)  \left[  \lambda_{\theta}^{j}+\lambda_{\theta}
^{k}\right]  +\frac{2\varepsilon}{d}}
\end{equation}
The second term simplifies as follows:
\begin{align}
&  \sum_{j\in\mathcal{S},k\in\mathcal{K}}\frac{|\langle\psi_{\theta}
^{j}|(\partial_{\theta}\rho_{\theta})|\psi_{\theta}^{k}\rangle|^{2}}
{\lambda_{\theta,\varepsilon}^{j}+\lambda_{\theta,\varepsilon}^{k}}\nonumber\\
&  =\sum_{j\in\mathcal{S},k\in\mathcal{K}}\frac{|\langle\psi_{\theta}
^{j}|(\partial_{\theta}\rho_{\theta})|\psi_{\theta}^{k}\rangle|^{2}}{\left(
1-\varepsilon\right)  \lambda_{\theta}^{j}+\frac{2\varepsilon}{d}}\\
&  =\sum_{j\in\mathcal{S},k\in\mathcal{K}}\frac{\langle\psi_{\theta}
^{j}|(\partial_{\theta}\rho_{\theta})|\psi_{\theta}^{k}\rangle\!\langle
\psi_{\theta}^{k}|(\partial_{\theta}\rho_{\theta})|\psi_{\theta}^{j}\rangle
}{\left(  1-\varepsilon\right)  \lambda_{\theta}^{j}+\frac{2\varepsilon}{d}}\\
&  =\sum_{j\in\mathcal{S}}\frac{1}{\left(  1-\varepsilon\right)
\lambda_{\theta}^{j}+\frac{2\varepsilon}{d}}\langle\psi_{\theta}^{j}
|(\partial_{\theta}\rho_{\theta})\left(  \sum_{k\in\mathcal{K}}|\psi_{\theta
}^{k}\rangle\!\langle\psi_{\theta}^{k}|\right)  (\partial_{\theta}\rho_{\theta
})|\psi_{\theta}^{j}\rangle\\
&  =\sum_{j\in\mathcal{S}}\frac{\langle\psi_{\theta}^{j}|(\partial_{\theta
}\rho_{\theta})\Pi_{\rho_{\theta}}^{\perp}(\partial_{\theta}\rho_{\theta
})|\psi_{\theta}^{j}\rangle}{\left(  1-\varepsilon\right)  \lambda_{\theta
}^{j}+\frac{2\varepsilon}{d}}.
\end{align}
Similarly, due to symmetry, we find the following for the third term:
\begin{equation}
\sum_{j\in\mathcal{K},k\in\mathcal{S}}\frac{|\langle\psi_{\theta}
^{j}|(\partial_{\theta}\rho_{\theta})|\psi_{\theta}^{k}\rangle|^{2}}
{\lambda_{\theta,\varepsilon}^{j}+\lambda_{\theta,\varepsilon}^{k}}=\sum
_{j\in\mathcal{S}}\frac{\langle\psi_{\theta}^{j}|(\partial_{\theta}
\rho_{\theta})\Pi_{\rho_{\theta}}^{\perp}(\partial_{\theta}\rho_{\theta}
)|\psi_{\theta}^{j}\rangle}{\left(  1-\varepsilon\right)  \lambda_{\theta}
^{j}+\frac{2\varepsilon}{d}}.
\end{equation}
For the last term, we find that
\begin{align}
&  \sum_{j\in\mathcal{K},k\in\mathcal{K}}\frac{|\langle\psi_{\theta}
^{j}|(\partial_{\theta}\rho_{\theta})|\psi_{\theta}^{k}\rangle|^{2}}
{\lambda_{\theta,\varepsilon}^{j}+\lambda_{\theta,\varepsilon}^{k}}\nonumber\\
&  =\sum_{j\in\mathcal{K},k\in\mathcal{K}}\frac{|\langle\psi_{\theta}
^{j}|(\partial_{\theta}\rho_{\theta})|\psi_{\theta}^{k}\rangle|^{2}}
{\frac{2\varepsilon}{d}}\\
&  =\frac{d}{2\varepsilon}\sum_{j\in\mathcal{K},k\in\mathcal{K}}|\langle
\psi_{\theta}^{j}|(\partial_{\theta}\rho_{\theta})|\psi_{\theta}^{k}
\rangle|^{2}\\
&  =\frac{d}{2\varepsilon}\sum_{j\in\mathcal{K},k\in\mathcal{K}}\langle
\psi_{\theta}^{j}|(\partial_{\theta}\rho_{\theta})|\psi_{\theta}^{k}
\rangle\!\langle\psi_{\theta}^{k}|(\partial_{\theta}\rho_{\theta})|\psi_{\theta
}^{j}\rangle\\
&  =\frac{d}{2\varepsilon}\sum_{j\in\mathcal{K},k\in\mathcal{K}}
\operatorname{Tr}[|\psi_{\theta}^{j}\rangle\!\langle\psi_{\theta}^{j}
|(\partial_{\theta}\rho_{\theta})|\psi_{\theta}^{k}\rangle\!\langle\psi_{\theta
}^{k}|(\partial_{\theta}\rho_{\theta})]\\
&  =\frac{d}{2\varepsilon}\operatorname{Tr}\left[  \left(  \sum_{j\in
\mathcal{K}}|\psi_{\theta}^{j}\rangle\!\langle\psi_{\theta}^{j}|\right)
(\partial_{\theta}\rho_{\theta})\left(  \sum_{k\in\mathcal{K}}|\psi_{\theta
}^{k}\rangle\!\langle\psi_{\theta}^{k}|\right)  (\partial_{\theta}\rho_{\theta
})\right] \\
&  =\frac{d}{2\varepsilon}\operatorname{Tr}\left[  \Pi_{\rho_{\theta}}^{\perp
}(\partial_{\theta}\rho_{\theta})\Pi_{\rho_{\theta}}^{\perp}(\partial_{\theta
}\rho_{\theta})\right] \\
&  =\frac{d}{2\varepsilon}\operatorname{Tr}\left[  \left(  \Pi_{\rho_{\theta}
}^{\perp}(\partial_{\theta}\rho_{\theta})\Pi_{\rho_{\theta}}^{\perp}\right)
^{2}\right] \\
&  =\frac{d}{2\varepsilon}\left\Vert \Pi_{\rho_{\theta}}^{\perp}
(\partial_{\theta}\rho_{\theta})\Pi_{\rho_{\theta}}^{\perp}\right\Vert
_{2}^{2},
\end{align}
where $\left\Vert A\right\Vert _{2}:=\sqrt{\operatorname{Tr}[A^{\dag}A]}$ is
the Hilbert--Schmidt norm of an operator $A$. Putting everything together, we
find that
\begin{multline}
I_{F}(\theta;\{\rho_{\theta}^{\varepsilon}\}_{\theta})=2\left(  1-\varepsilon
\right)  ^{2}\sum_{j\in\mathcal{S},k\in\mathcal{S}}\frac{|\langle\psi_{\theta
}^{j}|(\partial_{\theta}\rho_{\theta})|\psi_{\theta}^{k}\rangle|^{2}}{\left(
1-\varepsilon\right)  \left[  \lambda_{\theta}^{j}+\lambda_{\theta}
^{k}\right]  +\frac{2\varepsilon}{d}}\\
+4\left(  1-\varepsilon\right)  ^{2}\sum_{j\in\mathcal{S}}\frac{\langle
\psi_{\theta}^{j}|(\partial_{\theta}\rho_{\theta})\Pi_{\rho_{\theta}}^{\perp
}(\partial_{\theta}\rho_{\theta})|\psi_{\theta}^{j}\rangle}{\left(
1-\varepsilon\right)  \lambda_{\theta}^{j}+\frac{2\varepsilon}{d}}
+\frac{d\left(  1-\varepsilon\right)  ^{2}}{\varepsilon}\left\Vert \Pi
_{\rho_{\theta}}^{\perp}(\partial_{\theta}\rho_{\theta})\Pi_{\rho_{\theta}
}^{\perp}\right\Vert _{2}^{2}.
\end{multline}
Now consider that
\begin{equation}
\left\Vert \Pi_{\rho_{\theta}}^{\perp}(\partial_{\theta}\rho_{\theta}
)\Pi_{\rho_{\theta}}^{\perp}\right\Vert _{2}^{2}=0\qquad\Longleftrightarrow
\qquad\Pi_{\rho_{\theta}}^{\perp}(\partial_{\theta}\rho_{\theta})\Pi
_{\rho_{\theta}}^{\perp}=0.
\end{equation}
If this condition holds, then the last term vanishes and we find that
\begin{equation}
\lim_{\varepsilon\rightarrow0}I_{F}(\theta;\{\rho_{\theta}^{\varepsilon
}\}_{\theta})=2\sum_{j\in\mathcal{S},k\in\mathcal{S}}\frac{|\langle
\psi_{\theta}^{j}|(\partial_{\theta}\rho_{\theta})|\psi_{\theta}^{k}
\rangle|^{2}}{\lambda_{\theta}^{j}+\lambda_{\theta}^{k}}+4\sum_{j\in
\mathcal{S}}\frac{\langle\psi_{\theta}^{j}|(\partial_{\theta}\rho_{\theta}
)\Pi_{\rho_{\theta}}^{\perp}(\partial_{\theta}\rho_{\theta})|\psi_{\theta}
^{j}\rangle}{\lambda_{\theta}^{j}}.
\end{equation}
However, if this condition does not hold, then $\left\Vert \Pi_{\rho_{\theta}
}^{\perp}(\partial_{\theta}\rho_{\theta})\Pi_{\rho_{\theta}}^{\perp
}\right\Vert _{2}>0$ and the following limit holds
\begin{equation}
\lim_{\varepsilon\rightarrow0}I_{F}(\theta;\{\rho_{\theta}^{\varepsilon
}\}_{\theta})=+\infty.
\end{equation}

Now consider that
\begin{align}
&  2\sum_{j,k:\lambda_{j}+\lambda_{k}>0}\frac{|\langle\psi_{\theta}
^{j}|(\partial_{\theta}\rho_{\theta})|\psi_{\theta}^{k}\rangle|^{2}}
{\lambda_{\theta}^{j}+\lambda_{\theta}^{k}}\nonumber\\
&  =2\sum_{j,k:\left(  j\notin\mathcal{K}\wedge k\notin\mathcal{K}\right)
}\frac{|\langle\psi_{\theta}^{j}|(\partial_{\theta}\rho_{\theta})|\psi
_{\theta}^{k}\rangle|^{2}}{\lambda_{\theta}^{j}+\lambda_{\theta}^{k}
}\label{eq:alt-formula-SLD-step-1}\\
&  =2\left[  \sum_{j\in\mathcal{S},k\in\mathcal{S}}\frac{|\langle\psi_{\theta
}^{j}|(\partial_{\theta}\rho_{\theta})|\psi_{\theta}^{k}\rangle|^{2}}
{\lambda_{\theta}^{j}+\lambda_{\theta}^{k}}+\sum_{j\in\mathcal{S}
,k\in\mathcal{K}}\frac{|\langle\psi_{\theta}^{j}|(\partial_{\theta}
\rho_{\theta})|\psi_{\theta}^{k}\rangle|^{2}}{\lambda_{\theta}^{j}
+\lambda_{\theta}^{k}}+\sum_{j\in\mathcal{K},k\in\mathcal{S}}\frac
{|\langle\psi_{\theta}^{j}|(\partial_{\theta}\rho_{\theta})|\psi_{\theta}
^{k}\rangle|^{2}}{\lambda_{\theta}^{j}+\lambda_{\theta}^{k}}\right] \\
&  =2\left[  \sum_{j\in\mathcal{S},k\in\mathcal{S}}\frac{|\langle\psi_{\theta
}^{j}|(\partial_{\theta}\rho_{\theta})|\psi_{\theta}^{k}\rangle|^{2}}
{\lambda_{\theta}^{j}+\lambda_{\theta}^{k}}+2\sum_{j\in\mathcal{S}}
\frac{\langle\psi_{\theta}^{j}|(\partial_{\theta}\rho_{\theta})\Pi
_{\rho_{\theta}}^{\perp}(\partial_{\theta}\rho_{\theta})|\psi_{\theta}
^{j}\rangle}{\lambda_{\theta}^{j}}\right]  , \label{eq:alt-formula-SLD-step-3}
\end{align}
where we arrived at the last line by applying the previous reasoning. Thus, we
find that if $\Pi_{\rho_{\theta}}^{\perp}(\partial_{\theta}\rho_{\theta}
)\Pi_{\rho_{\theta}}^{\perp}=0$, then
\begin{equation}
\lim_{\varepsilon\rightarrow0}I_{F}(\theta;\{\rho_{\theta}^{\varepsilon
}\}_{\theta})=2\sum_{j,k:\lambda_{j}+\lambda_{k}>0}\frac{|\langle\psi_{\theta
}^{j}|(\partial_{\theta}\rho_{\theta})|\psi_{\theta}^{k}\rangle|^{2}}
{\lambda_{\theta}^{j}+\lambda_{\theta}^{k}}.
\end{equation}
This concludes the proof.
\end{proof}

\bigskip

\begin{proof}
[Proof of Proposition~\ref{prop:physical-consistency-RLD-Fish-states}] Following the notation from the previous proof, it follows that
\begin{align}
(\partial_{\theta}\rho_{\theta}^{\varepsilon})^{2}  &  =\left(  1-\varepsilon
\right)  ^{2}(\partial_{\theta}\rho_{\theta})^{2}\\
(\rho_{\theta}^{\varepsilon})^{-1}  &  =\sum_{j\in\mathcal{S}}\frac{1}{\left(
1-\varepsilon\right)  \lambda_{\theta}^{j}+\frac{\varepsilon}{d}}|\psi
_{\theta}^{j}\rangle\!\langle\psi_{\theta}^{j}|+\frac{d}{\varepsilon}\Pi
_{\rho_{\theta}}^{\perp},
\end{align}
so that
\begin{align}
\widehat{I}_{F}(\theta;\{\rho_{\theta}^{\varepsilon}\}_{\theta})  &
=\operatorname{Tr}[(\partial_{\theta}\rho_{\theta}^{\varepsilon})^{2}
(\rho_{\theta}^{\varepsilon})^{-1}]\\
&  =\left(  1-\varepsilon\right)  ^{2}\operatorname{Tr}\left[  (\partial
_{\theta}\rho_{\theta})^{2}\left(  \sum_{j\in\mathcal{S}}
\frac{1}{\left(  1-\varepsilon\right)  \lambda_{\theta}^{j}+\frac{\varepsilon
}{d}}|\psi_{\theta}^{j}\rangle\!\langle\psi_{\theta}^{j}|\right)  \right]
\nonumber\\
&  \qquad+\frac{d\left(  1-\varepsilon\right)  ^{2}}{\varepsilon
}\operatorname{Tr}[(\partial_{\theta}\rho_{\theta})^{2}\Pi_{\rho_{\theta}
}^{\perp}].
\end{align}
The condition $\operatorname{Tr}[(\partial_{\theta}\rho_{\theta})^{2}\Pi
_{\rho_{\theta}}^{\perp}]=0$ is equivalent to the condition $(\partial
_{\theta}\rho_{\theta})^{2}\Pi_{\rho_{\theta}}^{\perp}=0$ because both
$(\partial_{\theta}\rho_{\theta})^{2}$ and $\Pi_{\rho_{\theta}}^{\perp}$ are
positive semi-definite. The condition $(\partial_{\theta}\rho_{\theta})^{2}
\Pi_{\rho_{\theta}}^{\perp}=0$ is equivalent to the condition
$\operatorname{supp}((\partial_{\theta}\rho_{\theta})^{2})\subseteq
\operatorname{supp}(\rho_{\theta})$. Since $\operatorname{supp}((\partial
_{\theta}\rho_{\theta})^{2})=\operatorname{supp}(\partial_{\theta}\rho
_{\theta})$, this condition is in turn equivalent to $\operatorname{supp}
(\partial_{\theta}\rho_{\theta})\subseteq\operatorname{supp}(\rho_{\theta})$.
Thus,
\begin{equation}
\operatorname{supp}(\partial_{\theta}\rho_{\theta})\subseteq
\operatorname{supp}(\rho_{\theta})\qquad\Longleftrightarrow\qquad
\operatorname{Tr}[(\partial_{\theta}\rho_{\theta})^{2}\Pi_{\rho_{\theta}
}^{\perp}]=0,
\end{equation}
and we find that if $\operatorname{supp}(\partial_{\theta}\rho_{\theta
})\subseteq\operatorname{supp}(\rho_{\theta})$, then
\begin{align}
\lim_{\varepsilon\rightarrow0}\widehat{I}_{F}(\theta;\{\rho_{\theta
}^{\varepsilon}\}_{\theta})  &  =\lim_{\varepsilon\rightarrow0}\left(
1-\varepsilon\right)  ^{2}\operatorname{Tr}\left[  (\partial_{\theta}
\rho_{\theta})^{2}\left(  \sum_{j\in\mathcal{S}}\frac{1}{\left(
1-\varepsilon\right)  \lambda_{\theta}^{j}+\frac{\varepsilon}{d}}|\psi
_{\theta}^{j}\rangle\!\langle\psi_{\theta}^{j}|\right)  \right] \\
&  =\operatorname{Tr}\left[  (\partial_{\theta}\rho_{\theta})^{2}\left(
\sum_{j\in\mathcal{S}}\frac{1}{\lambda_{\theta}^{j}}|\psi_{\theta}^{j}
\rangle\!\langle\psi_{\theta}^{j}|\right)  \right] \\
&  =\operatorname{Tr}[(\partial_{\theta}\rho_{\theta})^{2}\rho_{\theta}^{-1}].
\end{align}
On the other hand, if $\operatorname{supp}(\partial_{\theta}\rho_{\theta
})\not \subseteq \operatorname{supp}(\rho_{\theta})$, then $\operatorname{Tr}
[(\partial_{\theta}\rho_{\theta})^{2}\Pi_{\rho_{\theta}}^{\perp}]>0$, and
$\lim_{\varepsilon\rightarrow0}\widehat{I}_{F}(\theta;\{\rho_{\theta
}^{\varepsilon}\}_{\theta})=+\infty$.
\end{proof}

\subsection{Pure-state family examples}

\label{sec:pure-state-fam-examps}

\begin{proposition}
\label{prop:SLD-Fish-pure-state}Let $\{|\phi_{\theta}\rangle\!\langle
\phi_{\theta}|\}_{\theta}$ be a differentiable family of pure states. Then the
SLD\ Fisher information is as follows:
\begin{equation}
I_{F}(\theta;\{|\phi_{\theta}\rangle\!\langle\phi_{\theta}|\}_{\theta})=4\left[
\langle\partial_{\theta}\phi_{\theta}|\partial_{\theta}\phi_{\theta}
\rangle-\left\vert \langle\partial_{\theta}\phi_{\theta}|\phi_{\theta}
\rangle\right\vert ^{2}\right]  .
\end{equation}

\end{proposition}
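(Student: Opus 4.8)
The plan is to start from the basis-dependent formula for the SLD Fisher information in \eqref{eq:SLD-Fish-info-formula}, which applies since the finiteness condition always holds for pure-state families (as noted in the discussion preceding \eqref{eq:pure-state-reduction-SLD-Fish}). Writing $\rho_\theta = |\phi_\theta\rangle\!\langle\phi_\theta|$, the spectral decomposition has a single nonzero eigenvalue $\lambda^1_\theta = 1$ with eigenvector $|\phi_\theta\rangle$, and all other eigenvalues equal zero. Thus the sum in \eqref{eq:SLD-Fish-info-formula} over pairs $(j,k)$ with $\lambda^\theta_j + \lambda^\theta_k > 0$ splits into the term $(j,k) = (1,1)$ and the cross terms where exactly one of $j,k$ equals $1$.

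First I would compute $\partial_\theta \rho_\theta = |\partial_\theta \phi_\theta\rangle\!\langle\phi_\theta| + |\phi_\theta\rangle\!\langle\partial_\theta\phi_\theta|$. Then I would evaluate the three contributing matrix elements: for $(j,k)=(1,1)$, $\langle\phi_\theta|(\partial_\theta\rho_\theta)|\phi_\theta\rangle = 2\operatorname{Re}\langle\partial_\theta\phi_\theta|\phi_\theta\rangle$; and for a kernel vector $|\psi^k_\theta\rangle$ (so $\langle\psi^k_\theta|\phi_\theta\rangle = 0$), $\langle\phi_\theta|(\partial_\theta\rho_\theta)|\psi^k_\theta\rangle = \langle\partial_\theta\phi_\theta|\psi^k_\theta\rangle$. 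Plugging into \eqref{eq:SLD-Fish-info-formula}, the $(1,1)$ term contributes $2 \cdot \frac{|2\operatorname{Re}\langle\partial_\theta\phi_\theta|\phi_\theta\rangle|^2}{2} = 2|2\operatorname{Re}\langle\partial_\theta\phi_\theta|\phi_\theta\rangle|^2$ — wait, more carefully $2 \cdot \frac{(2\operatorname{Re}\langle\partial_\theta\phi_\theta|\phi_\theta\rangle)^2}{1+1}$, and the two families of cross terms each contribute $2 \cdot \sum_{k: \lambda^\theta_k = 0} \frac{|\langle\partial_\theta\phi_\theta|\psi^k_\theta\rangle|^2}{1+0}$.

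Next I would simplify using the normalization identity: differentiating $\langle\phi_\theta|\phi_\theta\rangle = 1$ gives $2\operatorname{Re}\langle\partial_\theta\phi_\theta|\phi_\theta\rangle = 0$, so the $(1,1)$ term vanishes entirely. For the cross terms, I would use completeness: $\sum_{k:\lambda^\theta_k=0} |\psi^k_\theta\rangle\!\langle\psi^k_\theta| = I - |\phi_\theta\rangle\!\langle\phi_\theta|$, so $\sum_{k:\lambda^\theta_k=0} |\langle\partial_\theta\phi_\theta|\psi^k_\theta\rangle|^2 = \langle\partial_\theta\phi_\theta|(I - |\phi_\theta\rangle\!\langle\phi_\theta|)|\partial_\theta\phi_\theta\rangle = \langle\partial_\theta\phi_\theta|\partial_\theta\phi_\theta\rangle - |\langle\partial_\theta\phi_\theta|\phi_\theta\rangle|^2$. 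Adding the two identical cross-term families with their prefactor of $2$ each gives $I_F = 2 \cdot 2 \cdot \big[\langle\partial_\theta\phi_\theta|\partial_\theta\phi_\theta\rangle - |\langle\partial_\theta\phi_\theta|\phi_\theta\rangle|^2\big] = 4\big[\langle\partial_\theta\phi_\theta|\partial_\theta\phi_\theta\rangle - |\langle\partial_\theta\phi_\theta|\phi_\theta\rangle|^2\big]$, as claimed.

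I do not anticipate a genuine obstacle here — the argument is a direct and short computation. The only point requiring mild care is bookkeeping the symmetric pair of cross terms $(j,k) = (1,k)$ and $(j,k) = (k,1)$ so as not to double- or under-count, and making sure the vanishing of the $(1,1)$ term via the normalization constraint is invoked correctly. One could alternatively route the proof through the basis-independent formula \eqref{eq:basis-independent-formula-SLD} together with the limit formula of Proposition~\ref{prop:physical-consistency-SLD-Fish-states}, but the spectral approach above is the most transparent for a pure-state family.
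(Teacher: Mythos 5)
Your proposal is correct and follows essentially the same route as the paper's proof: a direct spectral computation from the explicit formula for the SLD Fisher information, using the normalization identity $2\operatorname{Re}\langle\partial_{\theta}\phi_{\theta}|\phi_{\theta}\rangle=0$ to kill the diagonal term and the completeness relation $\Pi_{\phi_{\theta}}^{\perp}=I-|\phi_{\theta}\rangle\!\langle\phi_{\theta}|$ to evaluate the cross terms. The only cosmetic difference is that the paper starts from the equivalent form in \eqref{eq:SLD-Fish-alt-formula-kernel-rho-theta} (packaging the kernel contribution as $4\langle\phi_{\theta}|(\partial_{\theta}\rho_{\theta})\Pi_{\rho_{\theta}}^{\perp}(\partial_{\theta}\rho_{\theta})|\phi_{\theta}\rangle$) whereas you enumerate the $(1,k)$ and $(k,1)$ pairs in \eqref{eq:SLD-Fish-info-formula} directly, and your bookkeeping of the factor of $4$ is correct.
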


\begin{proof}
First, observe that
\begin{equation}
\partial_{\theta}(|\phi_{\theta}\rangle\!\langle\phi_{\theta}|)=|\partial
_{\theta}\phi_{\theta}\rangle\!\langle\phi_{\theta}|+|\phi_{\theta}
\rangle\!\langle\partial_{\theta}\phi_{\theta}|,
\end{equation}
which, when combined with $\operatorname{Tr}[\partial_{\theta}(|\phi_{\theta
}\rangle\!\langle\phi_{\theta}|)]=\partial_{\theta}(\operatorname{Tr}
[|\phi_{\theta}\rangle\!\langle\phi_{\theta}|])=0$, implies that
\begin{equation}
0=\langle\phi_{\theta}|\partial_{\theta}\phi_{\theta}\rangle+\langle
\partial_{\theta}\phi_{\theta}|\phi_{\theta}\rangle=2\operatorname{Re}
[\langle\partial_{\theta}\phi_{\theta}|\phi_{\theta}\rangle].
\label{eq:real-part-pure-states-zero}
\end{equation}

Now consider that the finiteness condition $\Pi_{\phi_{\theta}}^{\perp
}(\partial_{\theta}|\phi_{\theta}\rangle\!\langle\phi_{\theta}|)\Pi
_{\phi_{\theta}}^{\perp}=0$ holds for all differentiable pure-state families,
where $\Pi_{\phi_{\theta}}^{\perp}=I-|\phi_{\theta}\rangle\!\langle\phi_{\theta
}|$. This is because $|\phi_{\theta}\rangle\!\langle\phi_{\theta}|\Pi
_{\phi_{\theta}}^{\perp}=\Pi_{\phi_{\theta}}^{\perp}|\phi_{\theta}
\rangle\!\langle\phi_{\theta}|=0$, so that
\begin{align}
\Pi_{\phi_{\theta}}^{\perp}(\partial_{\theta}|\phi_{\theta}\rangle\!\langle
\phi_{\theta}|)\Pi_{\phi_{\theta}}^{\perp}  &  =\Pi_{\phi_{\theta}}^{\perp
}(|\partial_{\theta}\phi_{\theta}\rangle\!\langle\phi_{\theta}|+|\phi_{\theta
}\rangle\!\langle\partial_{\theta}\phi_{\theta}|)\Pi_{\phi_{\theta}}^{\perp}\\
&  =\Pi_{\phi_{\theta}}^{\perp}|\partial_{\theta}\phi_{\theta}\rangle
\langle\phi_{\theta}|\Pi_{\phi_{\theta}}^{\perp}+\Pi_{\phi_{\theta}}^{\perp
}|\phi_{\theta}\rangle\!\langle\partial_{\theta}\phi_{\theta}|\Pi_{\phi_{\theta
}}^{\perp}\\
&  =0.
\end{align}
Then we can apply the general expression for the SLD\ Fisher information in
\eqref{eq:SLD-Fish-alt-formula-kernel-rho-theta}:
\begin{align}
&  I_{F}(\theta;\{|\phi_{\theta}\rangle\!\langle\phi_{\theta}|\}_{\theta
})\nonumber\\
&  =|\langle\phi_{\theta}|(\partial_{\theta}(|\phi_{\theta}\rangle\!\langle
\phi_{\theta}|))|\phi_{\theta}\rangle|^{2}+4\langle\phi_{\theta}
|(\partial_{\theta}(|\phi_{\theta}\rangle\!\langle\phi_{\theta}|))\Pi
_{\rho_{\theta}}^{\perp}(\partial_{\theta}(|\phi_{\theta}\rangle\!\langle
\phi_{\theta}|))|\phi_{\theta}\rangle\\
&  =|\langle\phi_{\theta}|(\partial_{\theta}(|\phi_{\theta}\rangle\!\langle
\phi_{\theta}|))|\phi_{\theta}\rangle|^{2}+4\langle\phi_{\theta}
|(\partial_{\theta}(|\phi_{\theta}\rangle\!\langle\phi_{\theta}|))\left(
I-|\phi_{\theta}\rangle\!\langle\phi_{\theta}|\right)  (\partial_{\theta}
(|\phi_{\theta}\rangle\!\langle\phi_{\theta}|))|\phi_{\theta}\rangle\\
&  =4\langle\phi_{\theta}|(\partial_{\theta}(|\phi_{\theta}\rangle\!\langle
\phi_{\theta}|))^{2}|\phi_{\theta}\rangle-3|\langle\phi_{\theta}
|(\partial_{\theta}(|\phi_{\theta}\rangle\!\langle\phi_{\theta}|))|\phi_{\theta
}\rangle|^{2}.
\end{align}
Then we find that
\begin{align}
\langle\phi_{\theta}|(\partial_{\theta}(|\phi_{\theta}\rangle\!\langle
\phi_{\theta}|))|\phi_{\theta}\rangle &  =\langle\phi_{\theta}|(|\partial
_{\theta}\phi_{\theta}\rangle\!\langle\phi_{\theta}|+|\phi_{\theta}
\rangle\!\langle\partial_{\theta}\phi_{\theta}|)|\phi_{\theta}\rangle\\
&  =\langle\phi_{\theta}|\partial_{\theta}\phi_{\theta}\rangle+\langle
\partial_{\theta}\phi_{\theta}|\phi_{\theta}\rangle\\
&  =0,
\end{align}
where we applied \eqref{eq:real-part-pure-states-zero}\ to get the last line.
This implies that
\begin{equation}
I_{F}(\theta;\{|\phi_{\theta}\rangle\!\langle\phi_{\theta}|\}_{\theta}
)=4\langle\phi_{\theta}|(\partial_{\theta}(|\phi_{\theta}\rangle\!\langle
\phi_{\theta}|))^{2}|\phi_{\theta}\rangle.
\label{eq:fisher-SLD-pure-states-almost-done-app}
\end{equation}
Now consider that
\begin{align}
&  \langle\phi_{\theta}|(\partial_{\theta}(|\phi_{\theta}\rangle\!\langle
\phi_{\theta}|))^{2}|\phi_{\theta}\rangle\nonumber\\
&  =\langle\phi_{\theta}|(|\partial_{\theta}\phi_{\theta}\rangle\!\langle
\phi_{\theta}|+|\phi_{\theta}\rangle\!\langle\partial_{\theta}\phi_{\theta
}|)(|\partial_{\theta}\phi_{\theta}\rangle\!\langle\phi_{\theta}|+|\phi_{\theta
}\rangle\!\langle\partial_{\theta}\phi_{\theta}|)|\phi_{\theta}\rangle\\
&  =\langle\phi_{\theta}|\partial_{\theta}\phi_{\theta}\rangle\!\langle
\phi_{\theta}|\partial_{\theta}\phi_{\theta}\rangle\!\langle\phi_{\theta}
|\phi_{\theta}\rangle+\langle\phi_{\theta}|\partial_{\theta}\phi_{\theta
}\rangle\!\langle\phi_{\theta}|\phi_{\theta}\rangle\!\langle\partial_{\theta}
\phi_{\theta}|\phi_{\theta}\rangle\nonumber\\
&  \qquad+\langle\phi_{\theta}|\phi_{\theta}\rangle\!\langle\partial_{\theta
}\phi_{\theta}|\partial_{\theta}\phi_{\theta}\rangle\!\langle\phi_{\theta}
|\phi_{\theta}\rangle+\langle\phi_{\theta}|\phi_{\theta}\rangle\!\langle
\partial_{\theta}\phi_{\theta}|\phi_{\theta}\rangle\!\langle\partial_{\theta
}\phi_{\theta}|\phi_{\theta}\rangle\\
&  =\left(  \langle\phi_{\theta}|\partial_{\theta}\phi_{\theta}\rangle\right)
^{2}+\left\vert \langle\phi_{\theta}|\partial_{\theta}\phi_{\theta}
\rangle\right\vert ^{2}+\langle\partial_{\theta}\phi_{\theta}|\partial
_{\theta}\phi_{\theta}\rangle+\left(  \langle\partial_{\theta}\phi_{\theta
}|\phi_{\theta}\rangle\right)  ^{2}\\
&  =\left(  \langle\phi_{\theta}|\partial_{\theta}\phi_{\theta}\rangle\right)
^{2}+2\left\vert \langle\phi_{\theta}|\partial_{\theta}\phi_{\theta}
\rangle\right\vert ^{2}+\left(  \langle\partial_{\theta}\phi_{\theta}
|\phi_{\theta}\rangle\right)  ^{2}+\langle\partial_{\theta}\phi_{\theta
}|\partial_{\theta}\phi_{\theta}\rangle-\left\vert \langle\phi_{\theta
}|\partial_{\theta}\phi_{\theta}\rangle\right\vert ^{2}\\
&  =\left\vert \langle\phi_{\theta}|\partial_{\theta}\phi_{\theta}
\rangle+\langle\partial_{\theta}\phi_{\theta}|\phi_{\theta}\rangle\right\vert
^{2}+\langle\partial_{\theta}\phi_{\theta}|\partial_{\theta}\phi_{\theta
}\rangle-\left\vert \langle\phi_{\theta}|\partial_{\theta}\phi_{\theta}
\rangle\right\vert ^{2}\\
&  =\langle\partial_{\theta}\phi_{\theta}|\partial_{\theta}\phi_{\theta
}\rangle-\left\vert \langle\phi_{\theta}|\partial_{\theta}\phi_{\theta}
\rangle\right\vert ^{2},
\end{align}
where we again applied \eqref{eq:real-part-pure-states-zero}\ to get the last
line. Substituting into \eqref{eq:fisher-SLD-pure-states-almost-done-app}, we
arrive at the statement of the proposition.
\end{proof}

\begin{proposition}
Let $\{|\phi_{\theta}\rangle\!\langle\phi_{\theta}|\}_{\theta}$ be a
differentiable family of pure states. If the family is constant, so that
$|\varphi_{\theta}\rangle=|\varphi\rangle$ for all $\theta$, then the
RLD\ Fisher information is equal to zero. Otherwise, the RLD\ Fisher
information is infinite.
\end{proposition}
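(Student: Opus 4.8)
The plan is to use the finiteness condition in the definition of the RLD Fisher information, Definition~\ref{def:RLD-Fish-info-states}, namely that $\widehat{I}_F(\theta;\{\rho_\theta\}_\theta)$ is finite if and only if $\operatorname{supp}(\partial_\theta \rho_\theta) \subseteq \operatorname{supp}(\rho_\theta)$, and to analyze this condition for a pure-state family. With $\rho_\theta = |\phi_\theta\rangle\!\langle\phi_\theta|$, the support of $\rho_\theta$ is the one-dimensional span of $|\phi_\theta\rangle$, so the finiteness condition becomes extremely restrictive. First I would compute $\partial_\theta \rho_\theta = |\partial_\theta \phi_\theta\rangle\!\langle\phi_\theta| + |\phi_\theta\rangle\!\langle\partial_\theta \phi_\theta|$, and then ask when the range of this operator lies inside $\operatorname{span}\{|\phi_\theta\rangle\}$.

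The key step is to decompose $|\partial_\theta \phi_\theta\rangle = c_\theta |\phi_\theta\rangle + |\phi_\theta^\perp\rangle$, where $c_\theta := \langle\phi_\theta|\partial_\theta\phi_\theta\rangle$ and $|\phi_\theta^\perp\rangle$ is orthogonal to $|\phi_\theta\rangle$. Substituting this into $\partial_\theta \rho_\theta$ and simplifying, one finds $\partial_\theta \rho_\theta = (c_\theta + \overline{c_\theta})|\phi_\theta\rangle\!\langle\phi_\theta| + |\phi_\theta^\perp\rangle\!\langle\phi_\theta| + |\phi_\theta\rangle\!\langle\phi_\theta^\perp|$; since $2\operatorname{Re}(c_\theta) = 0$ by the normalization identity \eqref{eq:real-part-pure-states-zero}, this reduces to $\partial_\theta \rho_\theta = |\phi_\theta^\perp\rangle\!\langle\phi_\theta| + |\phi_\theta\rangle\!\langle\phi_\theta^\perp|$. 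Applying $\partial_\theta \rho_\theta$ to $|\phi_\theta\rangle$ gives $|\phi_\theta^\perp\rangle$, so the range of $\partial_\theta \rho_\theta$ contains $|\phi_\theta^\perp\rangle$, and hence the support condition $\operatorname{supp}(\partial_\theta \rho_\theta) \subseteq \operatorname{supp}(\rho_\theta)$ holds if and only if $|\phi_\theta^\perp\rangle = 0$, i.e. if and only if $|\partial_\theta \phi_\theta\rangle$ is parallel to $|\phi_\theta\rangle$.

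Next I would show that this parallelism condition at every $\theta$ is equivalent to the family being constant up to global phase, so that the induced density-operator family $\{|\phi_\theta\rangle\!\langle\phi_\theta|\}_\theta$ is literally constant. Indeed, if $|\partial_\theta\phi_\theta\rangle = c_\theta|\phi_\theta\rangle$ with $c_\theta$ purely imaginary (again by \eqref{eq:real-part-pure-states-zero}), then $|\phi_\theta\rangle$ evolves only by a phase, so $|\phi_\theta\rangle\!\langle\phi_\theta|$ is independent of $\theta$; in that case $\partial_\theta \rho_\theta = 0$ and the RLD Fisher information is finite and equal to zero. Conversely, if the family is not constant, then $|\phi_\theta^\perp\rangle \neq 0$ for some $\theta$, the support condition fails there, and $\widehat{I}_F(\theta;\{\rho_\theta\}_\theta) = +\infty$ by Definition~\ref{def:RLD-Fish-info-states}.

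The main obstacle, though a mild one, is being careful about the distinction between ``$|\phi_\theta\rangle$ is constant'' and ``$|\phi_\theta\rangle\!\langle\phi_\theta|$ is constant'': the RLD Fisher information is a function of the density-operator family, so the correct dichotomy is phrased in terms of $\rho_\theta = |\phi_\theta\rangle\!\langle\phi_\theta|$, and the statement to prove already reflects this by saying ``if the family is constant, so that $|\varphi_\theta\rangle = |\varphi\rangle$ for all $\theta$.'' I would make explicit that allowing a $\theta$-dependent global phase changes neither $\rho_\theta$ nor $\partial_\theta \rho_\theta$, so without loss of generality one may assume the phase has been fixed, and then the parallelism condition genuinely forces $|\phi_\theta\rangle$ itself to be constant. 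With that caveat handled, the proof is short and follows directly from the support-condition analysis above.
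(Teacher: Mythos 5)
Your proof is correct, and it reaches the same dichotomy by a route that differs from the paper's in its final step. Both arguments rest on the same computation: using $2\operatorname{Re}\langle\partial_{\theta}\phi_{\theta}|\phi_{\theta}\rangle=0$ from \eqref{eq:real-part-pure-states-zero}, the only obstruction to the support condition \eqref{eq:support-condition-RLD} is the component $|\phi_{\theta}^{\perp}\rangle=\Pi_{\phi_\theta}^{\perp}|\partial_{\theta}\phi_{\theta}\rangle$. The paper packages this as the scalar $\operatorname{Tr}[\Pi_{\phi_{\theta}}^{\perp}(\partial_{\theta}\rho_{\theta})^{2}]=\langle\partial_{\theta}\phi_{\theta}|\partial_{\theta}\phi_{\theta}\rangle-|\langle\partial_{\theta}\phi_{\theta}|\phi_{\theta}\rangle|^{2}$, identifies it (up to the factor of $4$) with the SLD Fisher information of the pure-state family via Proposition~\ref{prop:SLD-Fish-pure-state}, and then invokes faithfulness (Proposition~\ref{prop:faithfulness-SLD-RLD-Fish}) to conclude that it vanishes iff the family is constant. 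You instead observe directly that $\partial_{\theta}\rho_{\theta}=|\phi_{\theta}^{\perp}\rangle\!\langle\phi_{\theta}|+|\phi_{\theta}\rangle\!\langle\phi_{\theta}^{\perp}|$, so that $(\partial_{\theta}\rho_{\theta})|\phi_{\theta}\rangle=|\phi_{\theta}^{\perp}\rangle$ forces $|\phi_{\theta}^{\perp}\rangle=0$ whenever the support condition holds, and $|\phi_{\theta}^{\perp}\rangle=0$ in turn forces $\partial_{\theta}\rho_{\theta}=0$. Your version is more self-contained (no appeal to the SLD formula or to faithfulness) and makes explicit the range/support mechanism behind the degeneracy; the paper's version buys brevity and highlights the link between the RLD finiteness condition and the SLD Fisher information of pure states. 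Your remark about the global phase is a worthwhile clarification that the paper leaves implicit, since the dichotomy is genuinely about the density-operator family $\{|\phi_{\theta}\rangle\!\langle\phi_{\theta}|\}_{\theta}$ rather than about the vectors themselves.
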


\begin{proof}
The RLD\ Fisher information is finite if and only if the finiteness condition
in \eqref{eq:support-condition-RLD}\ is satisfied. This condition is
equivalent to the following: $0=\operatorname{Tr}[\Pi_{\phi_{\theta}}^{\perp
}(\partial_{\theta}(|\phi_{\theta}\rangle\!\langle\phi_{\theta}|))^{2}]$. Now
consider that
\begin{align}
&  \operatorname{Tr}[\Pi_{\phi_{\theta}}^{\perp}(\partial_{\theta}
(|\phi_{\theta}\rangle\!\langle\phi_{\theta}|))^{2}]\nonumber\\
&  =\operatorname{Tr}[\Pi_{\phi_{\theta}}^{\perp}(|\partial_{\theta}
\phi_{\theta}\rangle\!\langle\phi_{\theta}|+|\phi_{\theta}\rangle\!\langle
\partial_{\theta}\phi_{\theta}|)((|\partial_{\theta}\phi_{\theta}
\rangle\!\langle\phi_{\theta}|+|\phi_{\theta}\rangle\!\langle\partial_{\theta}
\phi_{\theta}|))]\\
&  =\operatorname{Tr}[\Pi_{\phi_{\theta}}^{\perp}|\partial_{\theta}
\phi_{\theta}\rangle\!\langle\phi_{\theta}|\partial_{\theta}\phi_{\theta}
\rangle\!\langle\phi_{\theta}|]+\operatorname{Tr}[\Pi_{\phi_{\theta}}^{\perp
}|\partial_{\theta}\phi_{\theta}\rangle\!\langle\phi_{\theta}|\phi_{\theta
}\rangle\!\langle\partial_{\theta}\phi_{\theta}|]\nonumber\\
&  \qquad+\operatorname{Tr}[\Pi_{\phi_{\theta}}^{\perp}|\phi_{\theta}
\rangle\!\langle\partial_{\theta}\phi_{\theta}|\partial_{\theta}\phi_{\theta
}\rangle\!\langle\phi_{\theta}|]+\operatorname{Tr}[\Pi_{\phi_{\theta}}^{\perp
}|\phi_{\theta}\rangle\!\langle\partial_{\theta}\phi_{\theta}|\phi_{\theta
}\rangle\!\langle\partial_{\theta}\phi_{\theta}|]\\
&  =\langle\partial_{\theta}\phi_{\theta}|\Pi_{\phi_{\theta}}^{\perp}
|\partial_{\theta}\phi_{\theta}\rangle\\
&  =\langle\partial_{\theta}\phi_{\theta}|\partial_{\theta}\phi_{\theta
}\rangle-\left\vert \langle\partial_{\theta}\phi_{\theta}|\phi_{\theta}
\rangle\right\vert ^{2}.
\end{align}
From Proposition~\ref{prop:SLD-Fish-pure-state}, it follows that
$\langle\partial_{\theta}\phi_{\theta}|\partial_{\theta}\phi_{\theta}
\rangle-\left\vert \langle\partial_{\theta}\phi_{\theta}|\phi_{\theta}
\rangle\right\vert ^{2}=I_{F}(\theta;\{|\phi_{\theta}\rangle\!\langle
\phi_{\theta}|\}_{\theta})$. Then, by the faithfulness of SLD\ Fisher
information from Proposition~\ref{prop:faithfulness-SLD-RLD-Fish}, it follows
that $\{|\phi_{\theta}\rangle\!\langle\phi_{\theta}|\}_{\theta}$ is a constant family.
\end{proof}

\section{Additivity of SLD\ and RLD\ Fisher informations}

\label{app:additivity-SLD-RLD}

\begin{proof}
[Proof of Proposition~\ref{prop:additivity-SLD-RLD-states}]Let us begin with
the SLD\ Fisher information. We are trying to prove the following
statement:\ Let $\{\rho_{A}^{\theta}\}_{\theta}$ and $\{\sigma_{B}^{\theta
}\}_{\theta}$ be differentiable families of quantum states. The SLD\ Fisher
information is additive in the following sense:
\begin{equation}
I_{F}(\theta;\{\rho_{A}^{\theta}\otimes\sigma_{B}^{\theta}\}_{\theta}
)=I_{F}(\theta;\{\rho_{A}^{\theta}\}_{\theta})+I_{F}(\theta;\{\sigma
_{B}^{\theta}\}_{\theta}). \label{eq:additivity-SLD-Fish-app}
\end{equation}
Let us first consider the finiteness condition in
\eqref{eq:finiteness-condition-SLD-Fish-states}. For the quantities on the
right-hand side of \eqref{eq:additivity-SLD-Fish-app}, the finiteness
conditions are
\begin{equation}
\Pi_{\rho_{A}^{\theta}}^{\perp}(\partial_{\theta}\rho_{A}^{\theta})\Pi
_{\rho_{A}^{\theta}}^{\perp}=0\qquad\wedge\qquad\Pi_{\sigma_{B}^{\theta}
}^{\perp}(\partial_{\theta}\sigma_{B}^{\theta})\Pi_{\sigma_{B}^{\theta}
}^{\perp}=0. \label{eq:finiteness-conditions-additivity}
\end{equation}
For the quantity on the left-hand side of \eqref{eq:additivity-SLD-Fish-app},
the finiteness condition is
\begin{equation}
\Pi_{\rho_{A}^{\theta}\otimes\sigma_{B}^{\theta}}^{\perp}(\partial_{\theta
}(\rho_{A}^{\theta}\otimes\sigma_{B}^{\theta}))\Pi_{\rho_{A}^{\theta}
\otimes\sigma_{B}^{\theta}}^{\perp}=0.
\label{eq:finiteness-conditions-add-joint-state}
\end{equation}
We now show that these conditions are equivalent. Consider that
\begin{equation}
\Pi_{\rho_{A}^{\theta}\otimes\sigma_{B}^{\theta}}=\Pi_{\rho_{A}^{\theta}
}\otimes\Pi_{\sigma_{B}^{\theta}}.
\end{equation}
This implies that
\begin{align}
\Pi_{\rho_{A}^{\theta}\otimes\sigma_{B}^{\theta}}^{\perp}  &  =I_{AB}
-\Pi_{\rho_{A}^{\theta}}\otimes\Pi_{\sigma_{B}^{\theta}}\\
&  =\Pi_{\rho_{A}^{\theta}}^{\perp}\otimes\Pi_{\sigma_{B}^{\theta}}^{\perp
}+\Pi_{\rho_{A}^{\theta}}^{\perp}\otimes\Pi_{\sigma_{B}^{\theta}}+\Pi
_{\rho_{A}^{\theta}}\otimes\Pi_{\sigma_{B}^{\theta}}^{\perp}.
\end{align}
Consider that
\begin{equation}
\partial_{\theta}(\rho_{A}^{\theta}\otimes\sigma_{B}^{\theta})=(\partial
_{\theta}\rho_{A}^{\theta})\otimes\sigma_{B}^{\theta}+\rho_{A}^{\theta}
\otimes(\partial_{\theta}\sigma_{B}^{\theta}).
\end{equation}
Then
\begin{align}
&  \Pi_{\rho_{A}^{\theta}\otimes\sigma_{B}^{\theta}}^{\perp}(\partial_{\theta
}(\rho_{A}^{\theta}\otimes\sigma_{B}^{\theta}))\Pi_{\rho_{A}^{\theta}
\otimes\sigma_{B}^{\theta}}^{\perp}\nonumber\\
&  =\left(  \Pi_{\rho_{A}^{\theta}}^{\perp}\otimes\Pi_{\sigma_{B}^{\theta}
}^{\perp}+\Pi_{\rho_{A}^{\theta}}^{\perp}\otimes\Pi_{\sigma_{B}^{\theta}}
+\Pi_{\rho_{A}^{\theta}}\otimes\Pi_{\sigma_{B}^{\theta}}^{\perp}\right)
\left(  (\partial_{\theta}\rho_{A}^{\theta})\otimes\sigma_{B}^{\theta}
+\rho_{A}^{\theta}\otimes(\partial_{\theta}\sigma_{B}^{\theta})\right)
\nonumber\\
&  \qquad\times\left(  \Pi_{\rho_{A}^{\theta}}^{\perp}\otimes\Pi_{\sigma
_{B}^{\theta}}^{\perp}+\Pi_{\rho_{A}^{\theta}}^{\perp}\otimes\Pi_{\sigma
_{B}^{\theta}}+\Pi_{\rho_{A}^{\theta}}\otimes\Pi_{\sigma_{B}^{\theta}}^{\perp
}\right) \\
&  =\left(  \Pi_{\rho_{A}^{\theta}}^{\perp}\otimes\Pi_{\sigma_{B}^{\theta}
}\right)  \left(  (\partial_{\theta}\rho_{A}^{\theta})\otimes\sigma
_{B}^{\theta}\right)  \left(  \Pi_{\rho_{A}^{\theta}}^{\perp}\otimes
\Pi_{\sigma_{B}^{\theta}}\right) \nonumber\\
&  \qquad+\left(  \Pi_{\rho_{A}^{\theta}}\otimes\Pi_{\sigma_{B}^{\theta}
}^{\perp}\right)  \left(  \rho_{A}^{\theta}\otimes(\partial_{\theta}\sigma
_{B}^{\theta})\right)  \left(  \Pi_{\rho_{A}^{\theta}}\otimes\Pi_{\sigma
_{B}^{\theta}}^{\perp}\right) \\
&  =\Pi_{\rho_{A}^{\theta}}^{\perp}(\partial_{\theta}\rho_{A}^{\theta}
)\Pi_{\rho_{A}^{\theta}}^{\perp}\otimes\sigma_{B}^{\theta}+\rho_{A}^{\theta
}\otimes\Pi_{\sigma_{B}^{\theta}}^{\perp}(\partial_{\theta}\sigma_{B}^{\theta
})\Pi_{\sigma_{B}^{\theta}}^{\perp}.
\end{align}
From this we see that $\Pi_{\rho_{A}^{\theta}\otimes\sigma_{B}^{\theta}
}^{\perp}(\partial_{\theta}(\rho_{A}^{\theta}\otimes\sigma_{B}^{\theta}
))\Pi_{\rho_{A}^{\theta}\otimes\sigma_{B}^{\theta}}^{\perp}=0$ if
\eqref{eq:finiteness-conditions-additivity} holds. Now suppose that $\Pi
_{\rho_{A}^{\theta}\otimes\sigma_{B}^{\theta}}^{\perp}(\partial_{\theta}
(\rho_{A}^{\theta}\otimes\sigma_{B}^{\theta}))\Pi_{\rho_{A}^{\theta}
\otimes\sigma_{B}^{\theta}}^{\perp}=0$ holds.\ Then we can sandwich this
equation by $I_{A}\otimes\Pi_{\sigma_{B}^{\theta}}$ and perform a partial
trace over $B$ to conclude that $\Pi_{\rho_{A}^{\theta}}^{\perp}
(\partial_{\theta}\rho_{A}^{\theta})\Pi_{\rho_{A}^{\theta}}^{\perp}=0$, i.e.,
\begin{equation}
\left(  I\otimes\Pi_{\sigma_{B}^{\theta}}\right)  \left[  \Pi_{\rho
_{A}^{\theta}\otimes\sigma_{B}^{\theta}}^{\perp}(\partial_{\theta}(\rho
_{A}^{\theta}\otimes\sigma_{B}^{\theta}))\Pi_{\rho_{A}^{\theta}\otimes
\sigma_{B}^{\theta}}^{\perp}\right]  \left(  I\otimes\Pi_{\sigma_{B}^{\theta}
}\right)  =\Pi_{\rho_{A}^{\theta}}^{\perp}(\partial_{\theta}\rho_{A}^{\theta
})\Pi_{\rho_{A}^{\theta}}^{\perp}\otimes\sigma_{B}^{\theta}.
\end{equation}
Similarly, we can sandwich by $\Pi_{\rho_{A}^{\theta}}\otimes I_{B}$ and
perform a partial trace over $A$ to conclude that $\Pi_{\sigma_{B}^{\theta}
}^{\perp}(\partial_{\theta}\sigma_{B}^{\theta})\Pi_{\sigma_{B}^{\theta}
}^{\perp}=0$.

Due to the equivalence of the conditions in
\eqref{eq:finiteness-conditions-additivity} and
\eqref{eq:finiteness-conditions-add-joint-state}, it follows that the
left-hand side of \eqref{eq:additivity-SLD-Fish-app} is infinite if and only
if the right-hand side of \eqref{eq:additivity-SLD-Fish-app} is infinite. So
we can analyze the case in which the quantities are finite by making use of
the explicit formula in \eqref{eq:SLD-Fish-info-formula}.

Consider the following spectral decompositions of $\rho_{A}^{\theta}$ and
$\sigma_{B}^{\theta}$:
\begin{equation}
\rho_{A}^{\theta}=\sum_{x}\lambda_{x}^{\theta}|\psi_{x}^{\theta}\rangle
\langle\psi_{x}^{\theta}|,\qquad\sigma_{B}^{\theta}=\sum_{y}\mu_{y}^{\theta
}|\varphi_{y}^{\theta}\rangle\!\langle\varphi_{y}^{\theta}|.
\end{equation}
Plugging into the formula for SLD\ Fisher information from
\eqref{eq:SLD-Fish-info-formula}, while observing that
\begin{equation}
\partial_{\theta}(\rho_{A}^{\theta}\otimes\sigma_{B}^{\theta})=(\partial
_{\theta}\rho_{A}^{\theta})\otimes\sigma_{B}^{\theta}+\rho_{A}^{\theta}
\otimes(\partial_{\theta}\sigma_{B}^{\theta}),
\end{equation}
we find that
\begin{align}
&  I_{F}(\theta;\{\rho_{A}^{\theta}\otimes\sigma_{B}^{\theta}\}_{\theta
})\nonumber\\
&  =2\sum_{\substack{x,y,x^{\prime},y^{\prime}:\\\lambda_{x}^{\theta}\mu
_{y}^{\theta}+\lambda_{x^{\prime}}^{\theta}\mu_{y^{\prime}}^{\theta}>0}
}\frac{|\langle\psi_{x}^{\theta}|_{A}\langle\varphi_{y}^{\theta}|_{B}\left(
\partial_{\theta}(\rho_{A}^{\theta}\otimes\sigma_{B}^{\theta})\right)
|\psi_{x^{\prime}}^{\theta}\rangle|\varphi_{y^{\prime}}^{\theta}\rangle|^{2}
}{\lambda_{x}^{\theta}\mu_{y}^{\theta}+\lambda_{x^{\prime}}^{\theta}
\mu_{y^{\prime}}^{\theta}}\\
&  =2\sum_{\substack{x,y,x^{\prime},y^{\prime}:\\\lambda_{x}^{\theta}\mu
_{y}^{\theta}+\lambda_{x^{\prime}}^{\theta}\mu_{y^{\prime}}^{\theta}>0}
}\frac{|\langle\psi_{x}^{\theta}|_{A}\langle\varphi_{y}^{\theta}|_{B}\left(
(\partial_{\theta}\rho_{A}^{\theta})\otimes\sigma_{B}^{\theta}+\rho
_{A}^{\theta}\otimes(\partial_{\theta}\sigma_{B}^{\theta})\right)
|\psi_{x^{\prime}}^{\theta}\rangle|\varphi_{y^{\prime}}^{\theta}\rangle|^{2}
}{\lambda_{x}^{\theta}\mu_{y}^{\theta}+\lambda_{x^{\prime}}^{\theta}
\mu_{y^{\prime}}^{\theta}} \label{eq:fish-add-SLD-states}
\end{align}
Then consider that
\begin{align}
&  \left\vert \langle\psi_{x}^{\theta}|_{A}\langle\varphi_{y}^{\theta}
|_{B}\left(  (\partial_{\theta}\rho_{A}^{\theta})\otimes\sigma_{B}^{\theta
}+\rho_{A}^{\theta}\otimes(\partial_{\theta}\sigma_{B}^{\theta})\right)
|\psi_{x^{\prime}}^{\theta}\rangle|\varphi_{y^{\prime}}^{\theta}
\rangle\right\vert ^{2}\nonumber\\
&  =\left\vert \mu_{y}^{\theta}\delta_{y,y^{\prime}}\langle\psi_{x}^{\theta
}|_{A}(\partial_{\theta}\rho_{A}^{\theta})|\psi_{x^{\prime}}^{\theta}
\rangle+\lambda_{x}^{\theta}\delta_{x,x^{\prime}}\langle\varphi_{y}^{\theta
}|_{B}(\partial_{\theta}\sigma_{B}^{\theta})|\varphi_{y^{\prime}}^{\theta
}\rangle\right\vert ^{2}\\
&  =\left(  \mu_{y}^{\theta}\right)  ^{2}\delta_{y,y^{\prime}}\left\vert
\langle\psi_{x}^{\theta}|_{A}(\partial_{\theta}\rho_{A}^{\theta}
)|\psi_{x^{\prime}}^{\theta}\rangle\right\vert ^{2}\nonumber\\
&  \qquad+\mu_{y}^{\theta}\lambda_{x}^{\theta}\delta_{y,y^{\prime}}
\delta_{x,x^{\prime}}2\operatorname{Re}\left[  \langle\psi_{x}^{\theta}
|_{A}(\partial_{\theta}\rho_{A}^{\theta})|\psi_{x^{\prime}}^{\theta}
\rangle\!\langle\varphi_{y}^{\theta}|_{B}(\partial_{\theta}\sigma_{B}^{\theta
})|\varphi_{y^{\prime}}^{\theta}\rangle\right] \nonumber\\
&  \qquad+\left(  \lambda_{x}^{\theta}\right)  ^{2}\delta_{x,x^{\prime}
}\left\vert \langle\varphi_{y}^{\theta}|_{B}(\partial_{\theta}\sigma
_{B}^{\theta})|\varphi_{y^{\prime}}^{\theta}\rangle\right\vert ^{2}.
\end{align}
Plugging back into \eqref{eq:fish-add-SLD-states}\ and evaluating each of the
three terms separately, we find that
\begin{align}
&  2\sum_{\substack{x,y,x^{\prime},y^{\prime}:\\\lambda_{x}^{\theta}\mu
_{y}^{\theta}+\lambda_{x^{\prime}}^{\theta}\mu_{y^{\prime}}^{\theta}>0}
}\frac{\left(  \mu_{y}^{\theta}\right)  ^{2}\delta_{y,y^{\prime}}\left\vert
\langle\psi_{x}^{\theta}|_{A}(\partial_{\theta}\rho_{A}^{\theta}
)|\psi_{x^{\prime}}^{\theta}\rangle\right\vert ^{2}}{\lambda_{x}^{\theta}
\mu_{y}^{\theta}+\lambda_{x^{\prime}}^{\theta}\mu_{y^{\prime}}^{\theta}
}\nonumber\\
&  =2\sum_{\substack{x,y,x^{\prime},y^{\prime}:\\\mu_{y}^{\theta}\left(
\lambda_{x}^{\theta}+\lambda_{x^{\prime}}^{\theta}\right)  >0}}\frac{\left(
\mu_{y}^{\theta}\right)  ^{2}\left\vert \langle\psi_{x}^{\theta}|_{A}
(\partial_{\theta}\rho_{A}^{\theta})|\psi_{x^{\prime}}^{\theta}\rangle
\right\vert ^{2}}{\mu_{y}^{\theta}\left(  \lambda_{x}^{\theta}+\lambda
_{x^{\prime}}^{\theta}\right)  }\\
&  =2\sum_{\substack{x,y,x^{\prime},y^{\prime}:\\\lambda_{x}^{\theta}
+\lambda_{x^{\prime}}^{\theta}>0,\mu_{y}^{\theta}>0}}\frac{\left(  \mu
_{y}^{\theta}\right)  ^{2}\left\vert \langle\psi_{x}^{\theta}|_{A}
(\partial_{\theta}\rho_{A}^{\theta})|\psi_{x^{\prime}}^{\theta}\rangle
\right\vert ^{2}}{\mu_{y}^{\theta}\left(  \lambda_{x}^{\theta}+\lambda
_{x^{\prime}}^{\theta}\right)  }\\
&  =2\sum_{x,y,x^{\prime}:\lambda_{x}^{\theta}+\lambda_{x^{\prime}}^{\theta
}>0,\mu_{y}^{\theta}>0}\frac{\mu_{y}^{\theta}\left\vert \langle\psi
_{x}^{\theta}|_{A}(\partial_{\theta}\rho_{A}^{\theta})|\psi_{x^{\prime}
}^{\theta}\rangle\right\vert ^{2}}{\lambda_{x}^{\theta}+\lambda_{x^{\prime}
}^{\theta}}\\
&  =2\sum_{x,x^{\prime}:\lambda_{x}^{\theta}+\lambda_{x^{\prime}}^{\theta}
>0}\frac{\left\vert \langle\psi_{x}^{\theta}|_{A}(\partial_{\theta}\rho
_{A}^{\theta})|\psi_{x^{\prime}}^{\theta}\rangle\right\vert ^{2}}{\lambda
_{x}^{\theta}+\lambda_{x^{\prime}}^{\theta}}\sum_{y:\mu_{y}^{\theta}>0}\mu
_{y}^{\theta}\\
&  =2\sum_{x,x^{\prime}:\lambda_{x}^{\theta}+\lambda_{x^{\prime}}^{\theta}
>0}\frac{\left\vert \langle\psi_{x}^{\theta}|_{A}(\partial_{\theta}\rho
_{A}^{\theta})|\psi_{x^{\prime}}^{\theta}\rangle\right\vert ^{2}}{\lambda
_{x}^{\theta}+\lambda_{x^{\prime}}^{\theta}}\\
&  =I_{F}(\theta;\{\rho_{A}^{\theta}\}_{\theta}).
\end{align}
For the second term:
\begin{align}
&  2\sum_{\substack{x,y,x^{\prime},y^{\prime}:\\\lambda_{x}^{\theta}\mu
_{y}^{\theta}+\lambda_{x^{\prime}}^{\theta}\mu_{y^{\prime}}^{\theta}>0}
}\frac{\mu_{y}^{\theta}\lambda_{x}^{\theta}\delta_{y,y^{\prime}}
\delta_{x,x^{\prime}}2\operatorname{Re}\left[  \langle\psi_{x}^{\theta}
|_{A}(\partial_{\theta}\rho_{A}^{\theta})|\psi_{x^{\prime}}^{\theta}
\rangle\!\langle\varphi_{y}^{\theta}|_{B}(\partial_{\theta}\sigma_{B}^{\theta
})|\varphi_{y^{\prime}}^{\theta}\rangle\right]  }{\lambda_{x}^{\theta}\mu
_{y}^{\theta}+\lambda_{x^{\prime}}^{\theta}\mu_{y^{\prime}}^{\theta}
}\nonumber\\
&  =\sum_{\substack{x,y:\\\lambda_{x}^{\theta}\mu_{y}^{\theta}>0}}\frac
{\mu_{y}^{\theta}\lambda_{x}^{\theta}2\operatorname{Re}\left[  \langle\psi
_{x}^{\theta}|_{A}(\partial_{\theta}\rho_{A}^{\theta})|\psi_{x}^{\theta
}\rangle\!\langle\varphi_{y}^{\theta}|_{B}(\partial_{\theta}\sigma_{B}^{\theta
})|\varphi_{y}^{\theta}\rangle\right]  }{\lambda_{x}^{\theta}\mu_{y}^{\theta}
}\\
&  =\sum_{\substack{x,y:\\\lambda_{x}^{\theta}>0,\mu_{y}^{\theta}>0}}\frac
{\mu_{y}^{\theta}\lambda_{x}^{\theta}2\operatorname{Re}\left[  \langle\psi
_{x}^{\theta}|_{A}(\partial_{\theta}\rho_{A}^{\theta})|\psi_{x}^{\theta
}\rangle\!\langle\varphi_{y}^{\theta}|_{B}(\partial_{\theta}\sigma_{B}^{\theta
})|\varphi_{y}^{\theta}\rangle\right]  }{\lambda_{x}^{\theta}\mu_{y}^{\theta}
}\\
&  =2\sum_{\substack{x,y:\\\lambda_{x}^{\theta}>0,\mu_{y}^{\theta}
>0}}\operatorname{Re}\left[  \langle\psi_{x}^{\theta}|_{A}(\partial_{\theta
}\rho_{A}^{\theta})|\psi_{x}^{\theta}\rangle\!\langle\varphi_{y}^{\theta}
|_{B}(\partial_{\theta}\sigma_{B}^{\theta})|\varphi_{y}^{\theta}\rangle\right]
\\
&  =2\operatorname{Re}\left[  \sum_{x:\lambda_{x}^{\theta}>0}\langle\psi
_{x}^{\theta}|_{A}(\partial_{\theta}\rho_{A}^{\theta})|\psi_{x}^{\theta
}\rangle\sum_{y:\mu_{y}^{\theta}>0}\langle\varphi_{y}^{\theta}|_{B}
(\partial_{\theta}\sigma_{B}^{\theta})|\varphi_{y}^{\theta}\rangle\right] \\
&  =2\operatorname{Re}\left[  \sum_{x}\langle\psi_{x}^{\theta}|_{A}
(\partial_{\theta}\rho_{A}^{\theta})|\psi_{x}^{\theta}\rangle\sum_{y}
\langle\varphi_{y}^{\theta}|_{B}(\partial_{\theta}\sigma_{B}^{\theta}
)|\varphi_{y}^{\theta}\rangle\right] \\
&  =2\operatorname{Re}\left[  \operatorname{Tr}[\partial_{\theta}\rho
_{A}^{\theta}]\operatorname{Tr}[\partial_{\theta}\sigma_{B}^{\theta}]\right]
\\
&  =0.
\end{align}
The third-to-last equality follows because
\eqref{eq:finiteness-conditions-additivity} holds, so that we can add these to
the sums to complete the basis for the trace. The last equality follows
because $\operatorname{Tr}[\partial_{\theta}\rho_{A}^{\theta}
]=\operatorname{Tr}[\partial_{\theta}\sigma_{B}^{\theta}]=0$. The analysis
involving the last term $\left(  \lambda_{x}^{\theta}\right)  ^{2}
\delta_{x,x^{\prime}}\left\vert \langle\varphi_{y}^{\theta}|_{B}
(\partial_{\theta}\sigma_{B}^{\theta})|\varphi_{y^{\prime}}^{\theta}
\rangle\right\vert ^{2}$ is similar to that of the first term, and it
evaluates to $I_{F}(\theta;\{\sigma_{B}^{\theta}\}_{\theta})$.

Now let us turn to the RLD\ Fisher information. We are trying to prove the
following statement: Let $\{\rho_{A}^{\theta}\}_{\theta}$ and $\{\sigma
_{B}^{\theta}\}_{\theta}$ be differentiable families of quantum states. The
RLD\ Fisher information is additive in the following sense:
\begin{equation}
\widehat{I}_{F}(\theta;\{\rho_{A}^{\theta}\otimes\sigma_{B}^{\theta}
\}_{\theta})=\widehat{I}_{F}(\theta;\{\rho_{A}^{\theta}\}_{\theta}
)+\widehat{I}_{F}(\theta;\{\sigma_{B}^{\theta}\}_{\theta}).
\label{eq:additivity-RLD-Fish-app}
\end{equation}
Let us begin by considering the finiteness condition in
\eqref{eq:support-condition-RLD} for RLD\ Fisher information. For the
quantities on the right-hand side of \eqref{eq:additivity-RLD-Fish-app}, the
finiteness conditions are
\begin{equation}
\Pi_{\rho_{A}^{\theta}}^{\perp}(\partial_{\theta}\rho_{A}^{\theta}
)=0\qquad\wedge\qquad\Pi_{\sigma_{B}^{\theta}}^{\perp}(\partial_{\theta}
\sigma_{B}^{\theta})=0. \label{eq:finiteness-conditions-additivity-RLD}
\end{equation}
For the quantity on the left-hand side of \eqref{eq:additivity-RLD-Fish-app},
the finiteness condition is
\begin{equation}
\Pi_{\rho_{A}^{\theta}\otimes\sigma_{B}^{\theta}}^{\perp}(\partial_{\theta
}(\rho_{A}^{\theta}\otimes\sigma_{B}^{\theta}))=0.
\label{eq:finiteness-conditions-add-joint-state-RLD}
\end{equation}
We now show that these conditions are equivalent. Consider that
\begin{equation}
\Pi_{\rho_{A}^{\theta}\otimes\sigma_{B}^{\theta}}=\Pi_{\rho_{A}^{\theta}
}\otimes\Pi_{\sigma_{B}^{\theta}}.
\end{equation}
This implies that
\begin{align}
\Pi_{\rho_{A}^{\theta}\otimes\sigma_{B}^{\theta}}^{\perp}  &  =I_{AB}
-\Pi_{\rho_{A}^{\theta}}\otimes\Pi_{\sigma_{B}^{\theta}}\\
&  =\Pi_{\rho_{A}^{\theta}}^{\perp}\otimes\Pi_{\sigma_{B}^{\theta}}^{\perp
}+\Pi_{\rho_{A}^{\theta}}^{\perp}\otimes\Pi_{\sigma_{B}^{\theta}}+\Pi
_{\rho_{A}^{\theta}}\otimes\Pi_{\sigma_{B}^{\theta}}^{\perp}.
\end{align}
Consider that
\begin{equation}
\partial_{\theta}(\rho_{A}^{\theta}\otimes\sigma_{B}^{\theta})=(\partial
_{\theta}\rho_{A}^{\theta})\otimes\sigma_{B}^{\theta}+\rho_{A}^{\theta}
\otimes(\partial_{\theta}\sigma_{B}^{\theta}).
\end{equation}
Then we find that
\begin{align}
&  \Pi_{\rho_{A}^{\theta}\otimes\sigma_{B}^{\theta}}^{\perp}\partial_{\theta
}(\rho_{A}^{\theta}\otimes\sigma_{B}^{\theta})\nonumber\\
&  =\left(  \Pi_{\rho_{A}^{\theta}}^{\perp}\otimes\Pi_{\sigma_{B}^{\theta}
}^{\perp}+\Pi_{\rho_{A}^{\theta}}^{\perp}\otimes\Pi_{\sigma_{B}^{\theta}}
+\Pi_{\rho_{A}^{\theta}}\otimes\Pi_{\sigma_{B}^{\theta}}^{\perp}\right)
\left(  (\partial_{\theta}\rho_{A}^{\theta})\otimes\sigma_{B}^{\theta}
+\rho_{A}^{\theta}\otimes(\partial_{\theta}\sigma_{B}^{\theta})\right) \\
&  =\left(  \Pi_{\rho_{A}^{\theta}}^{\perp}\otimes\Pi_{\sigma_{B}^{\theta}
}\right)  \left(  (\partial_{\theta}\rho_{A}^{\theta})\otimes\sigma
_{B}^{\theta}\right)  +\left(  \Pi_{\rho_{A}^{\theta}}\otimes\Pi_{\sigma
_{B}^{\theta}}^{\perp}\right)  \left(  \rho_{A}^{\theta}\otimes(\partial
_{\theta}\sigma_{B}^{\theta})\right) \\
&  =\Pi_{\rho_{A}^{\theta}}^{\perp}(\partial_{\theta}\rho_{A}^{\theta}
)\otimes\sigma_{B}^{\theta}+\rho_{A}^{\theta}\otimes\Pi_{\sigma_{B}^{\theta}
}^{\perp}(\partial_{\theta}\sigma_{B}^{\theta}).
\end{align}
From this we see that $\Pi_{\rho_{A}^{\theta}\otimes\sigma_{B}^{\theta}
}^{\perp}(\partial_{\theta}(\rho_{A}^{\theta}\otimes\sigma_{B}^{\theta}))=0$
if \eqref{eq:finiteness-conditions-additivity-RLD} holds. Now suppose that
$\Pi_{\rho_{A}^{\theta}\otimes\sigma_{B}^{\theta}}^{\perp}(\partial_{\theta
}(\rho_{A}^{\theta}\otimes\sigma_{B}^{\theta}))=0$ holds.\ Then we can
left-multiply this equation by $I_{A}\otimes\Pi_{\sigma_{B}^{\theta}}$ and
perform a partial trace over $B$ to conclude that $\Pi_{\rho_{A}^{\theta}
}^{\perp}(\partial_{\theta}\rho_{A}^{\theta})=0$, i.e.,
\begin{equation}
\left(  I\otimes\Pi_{\sigma_{B}^{\theta}}\right)  \left[  \Pi_{\rho
_{A}^{\theta}\otimes\sigma_{B}^{\theta}}^{\perp}(\partial_{\theta}(\rho
_{A}^{\theta}\otimes\sigma_{B}^{\theta}))\right]  =\Pi_{\rho_{A}^{\theta}
}^{\perp}(\partial_{\theta}\rho_{A}^{\theta})\otimes\sigma_{B}^{\theta}.
\end{equation}
Similarly, we can left-multiply by $\Pi_{\rho_{A}^{\theta}}\otimes I_{B}$ and
perform a partial trace over $A$ to conclude that $\Pi_{\sigma_{B}^{\theta}
}^{\perp}(\partial_{\theta}\sigma_{B}^{\theta})=0$.

Due to the equivalence of the conditions in
\eqref{eq:finiteness-conditions-additivity-RLD} and
\eqref{eq:finiteness-conditions-add-joint-state-RLD}, it follows that the
left-hand side of \eqref{eq:additivity-RLD-Fish-app} is infinite if and only
if the right-hand side of \eqref{eq:additivity-RLD-Fish-app} is infinite. So
we can analyze the case in which the quantities are finite by making use of
the explicit formula in Definition~\ref{def:RLD-Fish-info-states}.

Observe that
\begin{align}
(\partial_{\theta}(\rho_{A}^{\theta}\otimes\sigma_{B}^{\theta}))^{2}  &
=((\partial_{\theta}\rho_{A}^{\theta})\otimes\sigma_{B}^{\theta}+\rho
_{A}^{\theta}\otimes(\partial_{\theta}\sigma_{B}^{\theta}))^{2}\\
&  =(\partial_{\theta}\rho_{A}^{\theta})^{2}\otimes(\sigma_{B}^{\theta}
)^{2}+(\partial_{\theta}\rho_{A}^{\theta})\rho_{A}^{\theta}\otimes\sigma
_{B}^{\theta}(\partial_{\theta}\sigma_{B}^{\theta})\nonumber\\
&  \qquad+\rho_{A}^{\theta}(\partial_{\theta}\rho_{A}^{\theta})\otimes
(\partial_{\theta}\sigma_{B}^{\theta})\sigma_{B}^{\theta}+(\rho_{A}^{\theta
})^{2}\otimes(\partial_{\theta}\sigma_{B}^{\theta})^{2}.
\end{align}
Then consider that
\begin{align}
&  \widehat{I}_{F}(\theta;\{\rho_{A}^{\theta}\otimes\sigma_{B}^{\theta
}\}_{\theta})\nonumber\\
&  =\operatorname{Tr}[(\partial_{\theta}(\rho_{A}^{\theta}\otimes\sigma
_{B}^{\theta}))^{2}(\rho_{A}^{\theta}\otimes\sigma_{B}^{\theta})^{-1}]\\
&  =\operatorname{Tr}[((\partial_{\theta}\rho_{A}^{\theta})^{2}\otimes
(\sigma_{B}^{\theta})^{2})((\rho_{A}^{\theta})^{-1}\otimes(\sigma_{B}^{\theta
})^{-1})]\nonumber\\
&  \qquad+\operatorname{Tr}[((\partial_{\theta}\rho_{A}^{\theta})\rho
_{A}^{\theta}\otimes\sigma_{B}^{\theta}(\partial_{\theta}\sigma_{B}^{\theta
}))((\rho_{A}^{\theta})^{-1}\otimes(\sigma_{B}^{\theta})^{-1})]\nonumber\\
&  \qquad+\operatorname{Tr}[(\rho_{A}^{\theta}(\partial_{\theta}\rho
_{A}^{\theta})\otimes(\partial_{\theta}\sigma_{B}^{\theta})\sigma_{B}^{\theta
})((\rho_{A}^{\theta})^{-1}\otimes(\sigma_{B}^{\theta})^{-1})]\nonumber\\
&  \qquad+\operatorname{Tr}[((\rho_{A}^{\theta})^{2}\otimes(\partial_{\theta
}\sigma_{B}^{\theta})^{2})((\rho_{A}^{\theta})^{-1}\otimes(\sigma_{B}^{\theta
})^{-1})]\\
&  =\operatorname{Tr}[(\partial_{\theta}\rho_{A}^{\theta})^{2}(\rho
_{A}^{\theta})^{-1}]\operatorname{Tr}[\sigma_{B}^{\theta}]+2\operatorname{Tr}
[\Pi_{\rho_{A}^{\theta}}(\partial_{\theta}\rho_{A}^{\theta})]\operatorname{Tr}
[\Pi_{\sigma_{B}^{\theta}}(\partial_{\theta}\sigma_{B}^{\theta})]\nonumber\\
&  \qquad+\operatorname{Tr}[\rho_{A}^{\theta}]\operatorname{Tr}[(\partial
_{\theta}\sigma_{B}^{\theta})^{2}(\sigma_{B}^{\theta})^{-1})]\\
&  =\operatorname{Tr}[(\partial_{\theta}\rho_{A}^{\theta})^{2}(\rho
_{A}^{\theta})^{-1}]+2\operatorname{Tr}[(\partial_{\theta}\rho_{A}^{\theta
})]\operatorname{Tr}[(\partial_{\theta}\sigma_{B}^{\theta})]+\operatorname{Tr}
[(\partial_{\theta}\sigma_{B}^{\theta})^{2}(\sigma_{B}^{\theta})^{-1})]\\
&  =\widehat{I}_{F}(\theta;\{\rho_{A}^{\theta}\}_{\theta})+\widehat{I}
_{F}(\theta;\{\sigma_{B}^{\theta}\}_{\theta}).
\end{align}
The second-to-last equality follows because $\Pi_{\rho_{A}^{\theta}}^{\perp
}(\partial_{\theta}\rho_{A}^{\theta})=0$ and $\Pi_{\sigma_{B}^{\theta}}
^{\perp}(\partial_{\theta}\sigma_{B}^{\theta})=0$, so that $\operatorname{Tr}
[\Pi_{\rho_{A}^{\theta}}(\partial_{\theta}\rho_{A}^{\theta}
)]=\operatorname{Tr}[(\partial_{\theta}\rho_{A}^{\theta})]$ and
$\operatorname{Tr}[\Pi_{\sigma_{B}^{\theta}}(\partial_{\theta}\sigma
_{B}^{\theta})]=\operatorname{Tr}[(\partial_{\theta}\sigma_{B}^{\theta})]$.
The final equality follows because $\operatorname{Tr}[(\partial_{\theta}
\rho_{A}^{\theta})]=\operatorname{Tr}[(\partial_{\theta}\sigma_{B}^{\theta
})]=0$.
\end{proof}

\section{SLD and RLD\ Fisher informations for classical--quantum
states\label{app:decomp-fisher-cqq}}

\begin{proof}
[Proof of Proposition~\ref{prop:cq-decomp-SLD-RLD}]We begin with the
SLD\ Fisher information, with the goal being to prove the following
statement:\ For a differentiable family of classical--quantum states:
\begin{equation}
\left\{  \sum_{x}p_{\theta}(x)|x\rangle\!\langle x|_{X}\otimes\rho_{\theta}
^{x}\right\}  _{\theta},
\end{equation}
the SLD Fisher information can be evaluated as follows:
\begin{equation}
I_{F}\!\left(  \theta;\left\{  \sum_{x}p_{\theta}(x)|x\rangle\!\langle
x|_{X}\otimes\rho_{\theta}^{x}\right\}  _{\theta}\right)  =I_{F}
(\theta;\{p_{\theta}\}_{\theta})+\sum_{x:p_{\theta}(x)>0}p_{\theta}
(x)I_{F}(\theta;\{\rho_{\theta}^{x}\}_{\theta}). \label{eq:QFI-breakdown-cqq}
\end{equation}
We first consider the finiteness conditions for the left- and right-hand sides
of \eqref{eq:QFI-breakdown-cqq} and show that they are equivalent. For the
right-hand side, the finiteness conditions are
\begin{equation}
\operatorname{supp}(\partial_{\theta}p_{\theta})\subseteq\operatorname{supp}
(p_{\theta})\quad\wedge\quad\Pi_{\rho_{\theta}^{x}}^{\perp}(\partial_{\theta
}\rho_{\theta}^{x})\Pi_{\rho_{\theta}^{x}}^{\perp}=0\quad\forall x:p_{\theta
}(x)>0, \label{eq:cq-finiteness-cond-SLD-right-side}
\end{equation}
while for the left-hand side, the finiteness condition is
\begin{equation}
\Pi_{\rho_{XB}^{\theta}}^{\perp}(\partial_{\theta}\rho_{XB}^{\theta})\Pi
_{\rho_{XB}^{\theta}}^{\perp}=0, \label{eq:cq-finiteness-cond-SLD-left-side}
\end{equation}
where
\begin{equation}
\rho_{XB}^{\theta}:=\sum_{x}p_{\theta}(x)|x\rangle\!\langle x|_{X}\otimes
\rho_{\theta}^{x}.
\end{equation}
Consider that
\begin{equation}
\Pi_{\rho_{XB}^{\theta}}=\sum_{x:p_{\theta}(x)>0}|x\rangle\!\langle
x|_{X}\otimes\Pi_{\rho_{\theta}^{x}},
\end{equation}
which implies that
\begin{align}
\Pi_{\rho_{XB}^{\theta}}^{\perp}  &  =I_{XB}-\Pi_{\rho_{XB}^{\theta}}\\
&  =\sum_{x:p_{\theta}(x)=0}|x\rangle\!\langle x|_{X}\otimes I+\sum
_{x:p_{\theta}(x)>0}|x\rangle\!\langle x|_{X}\otimes\Pi_{\rho_{\theta}^{x}
}^{\perp}.
\end{align}
Also, we have that
\begin{align}
\partial_{\theta}\rho_{XB}^{\theta}  &  =\partial_{\theta}\left(  \sum
_{x}p_{\theta}(x)|x\rangle\!\langle x|_{X}\otimes\rho_{\theta}^{x}\right) \\
&  =\partial_{\theta}\left(  \sum_{x}|x\rangle\!\langle x|_{X}\otimes p_{\theta
}(x)\rho_{\theta}^{x}\right) \\
&  =\sum_{x}|x\rangle\!\langle x|_{X}\otimes\partial_{\theta}(p_{\theta}
(x)\rho_{\theta}^{x})\\
&  =\sum_{x}|x\rangle\!\langle x|_{X}\otimes\left[  \partial_{\theta}(p_{\theta
}(x))\rho_{\theta}^{x}+p_{\theta}(x)(\partial_{\theta}\rho_{\theta}
^{x})\right]  .\\
&  =\sum_{x}\partial_{\theta}(p_{\theta}(x))|x\rangle\!\langle x|_{X}\otimes
\rho_{\theta}^{x}+\sum_{x}p_{\theta}(x)|x\rangle\!\langle x|_{X}\otimes
(\partial_{\theta}\rho_{\theta}^{x}).
\end{align}
We then find that
\begin{align}
0  &  =\Pi_{\rho_{XB}^{\theta}}^{\perp}(\partial_{\theta}\rho_{XB}^{\theta
})\Pi_{\rho_{XB}^{\theta}}^{\perp}\nonumber\\
&  =\left(  \sum_{x:p_{\theta}(x)=0}|x\rangle\!\langle x|_{X}\otimes
I+\sum_{x:p_{\theta}(x)>0}|x\rangle\!\langle x|_{X}\otimes\Pi_{\rho_{\theta}
^{x}}^{\perp}\right)  \times\nonumber\\
&  \left(  \sum_{x}\partial_{\theta}(p_{\theta}(x))|x\rangle\!\langle
x|_{X}\otimes\rho_{\theta}^{x}+\sum_{x}p_{\theta}(x)|x\rangle\!\langle
x|_{X}\otimes(\partial_{\theta}\rho_{\theta}^{x})\right)  \times\nonumber\\
&  \left(  \sum_{x:p_{\theta}(x)=0}|x\rangle\!\langle x|_{X}\otimes
I+\sum_{x:p_{\theta}(x)>0}|x\rangle\!\langle x|_{X}\otimes\Pi_{\rho_{\theta}
^{x}}^{\perp}\right) \\
&  =\sum_{x:p_{\theta}(x)=0}\partial_{\theta}(p_{\theta}(x))|x\rangle\!\langle
x|_{X}\otimes\rho_{\theta}^{x}+\sum_{x:p_{\theta}(x)>0}p_{\theta}
(x)|x\rangle\!\langle x|_{X}\otimes\Pi_{\rho_{\theta}^{x}}^{\perp}
(\partial_{\theta}\rho_{\theta}^{x})\Pi_{\rho_{\theta}^{x}}^{\perp}.
\label{eq:cq-finiteness-cond-SLD-proof-step}
\end{align}
Now sandwiching by $\sum_{x:p_{\theta}(x)=0}|x\rangle\!\langle x|_{X}\otimes I$
on both sides (which projects out the second sum above) and tracing over the
second system, we conclude that
\begin{equation}
\sum_{x:p_{\theta}(x)=0}\partial_{\theta}(p_{\theta}(x))|x\rangle\!\langle
x|_{X}=0.
\end{equation}
This is the same as $\operatorname{supp}(\partial_{\theta}p_{\theta}
)\subseteq\operatorname{supp}(p_{\theta})$. Instead sandwiching by
$\sum_{x:p_{\theta}(x)>0}|x\rangle\!\langle x|_{X}\otimes I$, we are left with
the following conditions:
\begin{equation}
\Pi_{\rho_{\theta}^{x}}^{\perp}(\partial_{\theta}\rho_{\theta}^{x})\Pi
_{\rho_{\theta}^{x}}^{\perp}=0\quad\forall x:p_{\theta}(x)>0.
\end{equation}
Thus, the finiteness condition in \eqref{eq:cq-finiteness-cond-SLD-left-side}
implies the finiteness condition in
\eqref{eq:cq-finiteness-cond-SLD-right-side}. The other implication follows
from plugging \eqref{eq:cq-finiteness-cond-SLD-right-side}\ into \eqref{eq:cq-finiteness-cond-SLD-proof-step}.

Since the finiteness of the left-hand side of \eqref{eq:QFI-breakdown-cqq} is
equivalent to the finiteness of the right-hand side of
\eqref{eq:QFI-breakdown-cqq}, we can now focus on establishing the equality
under these conditions. For the state
\begin{equation}
\sum_{x}p_{\theta}(x)|x\rangle\!\langle x|_{X}\otimes\rho_{\theta}^{x},
\end{equation}
let its spectral decomposition be as follows:
\begin{equation}
\sum_{x}p_{\theta}(x)|x\rangle\!\langle x|_{X}\otimes\sum_{y}\lambda_{\theta
}^{x,y}|\psi_{\theta}^{x,y}\rangle\!\langle\psi_{\theta}^{x,y}|=\sum
_{x,y}p_{\theta}(x)\lambda_{\theta}^{x,y}|x\rangle\!\langle x|_{X}\otimes
|\psi_{\theta}^{x,y}\rangle\!\langle\psi_{\theta}^{x,y}|.
\end{equation}
Plugging into the SLD Fisher information formula in
\eqref{eq:SLD-Fish-info-formula}, we find that
\begin{align}
&  I_{F}\!\left(  \theta;\left\{  \sum_{x}p_{\theta}(x)|x\rangle\!\langle
x|_{X}\otimes\rho_{\theta}^{x}\right\}  _{\theta}\right) \nonumber\\
&  =2\sum_{\substack{x,y,x^{\prime},y^{\prime}:\\p_{\theta}(x)\lambda_{\theta
}^{x,y}+p_{\theta}(x^{\prime})\lambda_{\theta}^{x^{\prime},y^{\prime}}
>0}}\frac{|\langle x|_{X}\langle\psi_{\theta}^{x,y}|\left(  \partial_{\theta
}\sum_{x^{\prime\prime}}p_{\theta}(x^{\prime\prime})|x^{\prime\prime}
\rangle\!\langle x^{\prime\prime}|_{X}\otimes\rho_{\theta}^{x^{\prime\prime}
}\right)  |x^{\prime}\rangle_{X}|\psi_{\theta}^{x^{\prime},y^{\prime}}
\rangle|^{2}}{p_{\theta}(x)\lambda_{\theta}^{x,y}+p_{\theta}(x^{\prime
})\lambda_{\theta}^{x^{\prime},y^{\prime}}}\\
&  =2\sum_{\substack{x,y,x^{\prime},y^{\prime}:\\p_{\theta}(x)\lambda_{\theta
}^{x,y}+p_{\theta}(x^{\prime})\lambda_{\theta}^{x^{\prime},y^{\prime}}
>0}}\frac{|\langle x|_{X}\langle\psi_{\theta}^{x,y}|\left(  \sum
_{x^{\prime\prime}}|x^{\prime\prime}\rangle\!\langle x^{\prime\prime}
|_{X}\otimes\partial_{\theta}(p_{\theta}(x^{\prime\prime})\rho_{\theta
}^{x^{\prime\prime}})\right)  |x^{\prime}\rangle_{X}|\psi_{\theta}^{x^{\prime
},y^{\prime}}\rangle|^{2}}{p_{\theta}(x)\lambda_{\theta}^{x,y}+p_{\theta
}(x^{\prime})\lambda_{\theta}^{x^{\prime},y^{\prime}}}\\
&  =2\sum_{\substack{x,y,y^{\prime}:\\p_{\theta}(x)(\lambda_{\theta}
^{x,y}+\lambda_{\theta}^{x,y^{\prime}})>0}}\frac{|\langle\psi_{\theta}
^{x,y}|\partial_{\theta}(p_{\theta}(x)\rho_{\theta}^{x})|\psi_{\theta
}^{x,y^{\prime}}\rangle|^{2}}{p_{\theta}(x)\left(  \lambda_{\theta}
^{x,y}+\lambda_{\theta}^{x,y^{\prime}}\right)  }\\
&  =2\sum_{\substack{x,y,y^{\prime}:\\p_{\theta}(x)>0,\lambda_{\theta}
^{x,y}+\lambda_{\theta}^{x,y^{\prime}}>0}}\frac{|\langle\psi_{\theta}
^{x,y}|\partial_{\theta}(p_{\theta}(x)\rho_{\theta}^{x})|\psi_{\theta
}^{x,y^{\prime}}\rangle|^{2}}{p_{\theta}(x)\left(  \lambda_{\theta}
^{x,y}+\lambda_{\theta}^{x,y^{\prime}}\right)  }. \label{eq:numerator-QFI-cqq}
\end{align}
Now consider that
\begin{equation}
\partial_{\theta}(p_{\theta}(x)\rho_{\theta}^{x})=(\partial_{\theta}p_{\theta
}(x))\rho_{\theta}^{x}+p_{\theta}(x)(\partial_{\theta}\rho_{\theta}^{x}).
\end{equation}
Plugging into the numerator in \eqref{eq:numerator-QFI-cqq}, we find that
\begin{align}
&  \langle\psi_{\theta}^{x,y}|\partial_{\theta}(p_{\theta}(x)\rho_{\theta}
^{x})|\psi_{\theta}^{x,y^{\prime}}\rangle\nonumber\\
&  =\langle\psi_{\theta}^{x,y}|(\partial_{\theta}p_{\theta}(x))\rho_{\theta
}^{x}|\psi_{\theta}^{x,y^{\prime}}\rangle+\langle\psi_{\theta}^{x,y}
|p_{\theta}(x)(\partial_{\theta}\rho_{\theta}^{x})|\psi_{\theta}^{x,y^{\prime
}}\rangle\\
&  =\delta_{y,y^{\prime}}\lambda_{\theta}^{x,y}(\partial_{\theta}p_{\theta
}(x))+p_{\theta}(x)\langle\psi_{\theta}^{x,y}|(\partial_{\theta}\rho_{\theta
}^{x})|\psi_{\theta}^{x,y^{\prime}}\rangle.
\end{align}
Then we can evaluate the numerator in \eqref{eq:numerator-QFI-cqq}\ as
follows:
\begin{align}
&  |\langle\psi_{\theta}^{x,y}|\partial_{\theta}(p_{\theta}(x)\rho_{\theta
}^{x})|\psi_{\theta}^{x,y^{\prime}}\rangle|^{2}\nonumber\\
&  =\left\vert \delta_{y,y^{\prime}}\lambda_{\theta}^{x,y}(\partial_{\theta
}p_{\theta}(x))+p_{\theta}(x)\langle\psi_{\theta}^{x,y}|(\partial_{\theta}
\rho_{\theta}^{x})|\psi_{\theta}^{x,y^{\prime}}\rangle\right\vert ^{2}\\
&  =\delta_{y,y^{\prime}}(\lambda_{\theta}^{x,y})^{2}(\partial_{\theta
}p_{\theta}(x))^{2}+2\delta_{y,y^{\prime}}\lambda_{\theta}^{x,y}p_{\theta
}(x)(\partial_{\theta}p_{\theta}(x))\operatorname{Re}[\langle\psi_{\theta
}^{x,y}|(\partial_{\theta}\rho_{\theta}^{x})|\psi_{\theta}^{x,y^{\prime}
}\rangle]\nonumber\\
&  \qquad+\left[  p_{\theta}(x)\right]  ^{2}\left\vert \langle\psi_{\theta
}^{x,y}|(\partial_{\theta}\rho_{\theta}^{x})|\psi_{\theta}^{x,y^{\prime}
}\rangle\right\vert ^{2}.
\end{align}
We can then evaluate the sum in \eqref{eq:numerator-QFI-cqq}\ for each of the
three terms above, starting with the first one:
\begin{align}
&  2\sum_{\substack{x,y,y^{\prime}:\\p_{\theta}(x)>0,\lambda_{\theta}
^{x,y}+\lambda_{\theta}^{x,y^{\prime}}>0}}\frac{\left[  \partial_{\theta
}p_{\theta}(x)\right]  ^{2}\delta_{y,y^{\prime}}\left(  \lambda_{\theta}
^{x,y}\right)  ^{2}}{p_{\theta}(x)\left(  \lambda_{\theta}^{x,y}
+\lambda_{\theta}^{x,y^{\prime}}\right)  }\nonumber \\
&  =\sum_{x,y:p_{\theta}(x)>0,\lambda_{\theta}^{x,y}>0}\frac{\left[
\partial_{\theta}p_{\theta}(x)\right]  ^{2}\left(  \lambda_{\theta}
^{x,y}\right)  ^{2}}{p_{\theta}(x)\lambda_{\theta}^{x,y}}\\
&  =\sum_{x:p_{\theta}(x)>0}\frac{\left[  \partial_{\theta}p_{\theta
}(x)\right]  ^{2}}{p_{\theta}(x)}\sum_{y:\lambda_{\theta}^{x,y}>0}\left(
\lambda_{\theta}^{x,y}\right) \\
&  =\sum_{x:p_{\theta}(x)>0}\frac{\left[  \partial_{\theta}p_{\theta
}(x)\right]  ^{2}}{p_{\theta}(x)}\\
&  =I_{F}(\theta;\{p_{\theta}\}_{\theta}).
\end{align}
Consider the next term:
\begin{align}
&  2\sum_{\substack{x,y,y^{\prime}:\\p_{\theta}(x)>0,\lambda_{\theta}
^{x,y}+\lambda_{\theta}^{x,y^{\prime}}>0}}\frac{2\delta_{y,y^{\prime}}
\lambda_{\theta}^{x,y}p_{\theta}(x)(\partial_{\theta}p_{\theta}
(x))\operatorname{Re}\left[  \langle\psi_{\theta}^{x,y}|(\partial_{\theta}
\rho_{\theta}^{x})|\psi_{\theta}^{x,y^{\prime}}\rangle\right]  }{p_{\theta
}(x)\left(  \lambda_{\theta}^{x,y}+\lambda_{\theta}^{x,y^{\prime}}\right)
}\nonumber\\
&  =2\sum_{\substack{x,y:\\p_{\theta}(x)>0,\lambda_{\theta}^{x,y}>0}
}\frac{\lambda_{\theta}^{x,y}p_{\theta}(x)(\partial_{\theta}p_{\theta
}(x))\operatorname{Re}\left[  \langle\psi_{\theta}^{x,y}|(\partial_{\theta
}\rho_{\theta}^{x})|\psi_{\theta}^{x,y}\rangle\right]  }{p_{\theta}
(x)\lambda_{\theta}^{x,y}}\\
&  =2\sum_{x,y:p_{\theta}(x)>0,\lambda_{\theta}^{x,y}>0}(\partial_{\theta
}p_{\theta}(x))\operatorname{Re}\left[  \langle\psi_{\theta}^{x,y}
|(\partial_{\theta}\rho_{\theta}^{x})|\psi_{\theta}^{x,y}\rangle\right] \\
&  =2\sum_{x:p_{\theta}(x)>0}(\partial_{\theta}p_{\theta}(x))\operatorname{Re}
\left[  \sum_{y:\lambda_{\theta}^{x,y}>0}\langle\psi_{\theta}^{x,y}
|(\partial_{\theta}\rho_{\theta}^{x})|\psi_{\theta}^{x,y}\rangle\right] \\
&  =2\sum_{x:p_{\theta}(x)>0}(\partial_{\theta}p_{\theta}(x))\operatorname{Re}
\left[  \sum_{y}\langle\psi_{\theta}^{x,y}|(\partial_{\theta}\rho_{\theta}
^{x})|\psi_{\theta}^{x,y}\rangle\right] \\
&  =2\sum_{x:p_{\theta}(x)>0}(\partial_{\theta}p_{\theta}(x))\operatorname{Re}
\left[  \operatorname{Tr}[\partial_{\theta}\rho_{\theta}^{x}]\right] \\
&  =0.
\end{align}
The third-to-last equality holds because $\Pi_{\rho_{\theta}^{x}}^{\perp
}(\partial_{\theta}\rho_{\theta}^{x})\Pi_{\rho_{\theta}^{x}}^{\perp}
=0\quad\forall x:p_{\theta}(x)>0$, implying that we can add these terms to the
sum to get the full trace in the next line. The last line follows because
$\operatorname{Tr}[\partial_{\theta}\rho_{\theta}^{x}]=\partial_{\theta
}\operatorname{Tr}[\rho_{\theta}^{x}]=0$. Now consider the final term:
\begin{align}
&  2\sum_{\substack{x,y,y^{\prime}:\\p_{\theta}(x)>0,\lambda_{\theta}
^{x,y}+\lambda_{\theta}^{x,y^{\prime}}>0}}\frac{\left[  p_{\theta}(x)\right]
^{2}\left\vert \langle\psi_{\theta}^{x,y}|(\partial_{\theta}\rho_{\theta}
^{x})|\psi_{\theta}^{x,y^{\prime}}\rangle\right\vert ^{2}}{p_{\theta
}(x)\left(  \lambda_{\theta}^{x,y}+\lambda_{\theta}^{x,y^{\prime}}\right)
}\nonumber\\
&  =2\sum_{\substack{x,y,y^{\prime}:\\p_{\theta}(x)>0,\lambda_{\theta}
^{x,y}+\lambda_{\theta}^{x,y^{\prime}}>0}}\frac{p_{\theta}(x)\left\vert
\langle\psi_{\theta}^{x,y}|(\partial_{\theta}\rho_{\theta}^{x})|\psi_{\theta
}^{x,y^{\prime}}\rangle\right\vert ^{2}}{  \lambda_{\theta}
^{x,y}+\lambda_{\theta}^{x,y^{\prime}}  }\\
&  =\sum_{x:p_{\theta}(x)>0}p_{\theta}(x)\left(  2\sum_{y,y^{\prime}
:\lambda_{\theta}^{x,y}+\lambda_{\theta}^{x,y^{\prime}}>0}\frac{\left\vert
\langle\psi_{\theta}^{x,y}|(\partial_{\theta}\rho_{\theta}^{x})|\psi_{\theta
}^{x,y^{\prime}}\rangle\right\vert ^{2}}{  \lambda_{\theta}
^{x,y}+\lambda_{\theta}^{x,y^{\prime}}  }\right) \\
&  =\sum_{x:p_{\theta}(x)>0}p_{\theta}(x)I_{F}(\theta;\{\rho_{\theta}
^{x}\}_{\theta}).
\end{align}
So we conclude the formula in \eqref{eq:QFI-breakdown-cqq}\ after putting all
of the above together.

We now turn to the RLD\ Fisher information, with the goal being to prove the
following statement: For a differentiable family of classical--quantum states:
\begin{equation}
\left\{  \sum_{x}p_{\theta}(x)|x\rangle\!\langle x|_{X}\otimes\rho_{\theta}
^{x}\right\}  _{\theta},
\end{equation}
the RLD Fisher information can be evaluated as follows:
\begin{equation}
\widehat{I}_{F}\!\left(  \theta;\left\{  \sum_{x}p_{\theta}(x)|x\rangle\!\langle
x|_{X}\otimes\rho_{\theta}^{x}\right\}  _{\theta}\right)  =I_{F}
(\theta;\{p_{\theta}\}_{\theta})+\sum_{x:p_{\theta}(x)>0}p_{\theta}
(x)\widehat{I}_{F}(\theta;\{\rho_{\theta}^{x}\}_{\theta}).
\label{eq:RLD-QFI-breakdown-cqq}
\end{equation}
The beginning of the proof is similar to the previous proof for SLD\ Fisher
information, and so we use the same notation used there. We first consider the
finiteness conditions for the left- and right-hand sides of
\eqref{eq:RLD-QFI-breakdown-cqq} and show that they are equivalent. For the
right-hand side, the finiteness conditions are
\begin{equation}
\operatorname{supp}(\partial_{\theta}p_{\theta})\subseteq\operatorname{supp}
(p_{\theta})\quad\wedge\quad(\partial_{\theta}\rho_{\theta}^{x})\Pi
_{\rho_{\theta}^{x}}^{\perp}=0\quad\forall x:p_{\theta}(x)>0,
\label{eq:cq-finiteness-cond-RLD-right-side}
\end{equation}
while for the left-hand side, the finiteness condition is
\begin{equation}
(\partial_{\theta}\rho_{XB}^{\theta})\Pi_{\rho_{XB}^{\theta}}^{\perp}=0.
\label{eq:cq-finiteness-cond-RLD-left-side}
\end{equation}
We find that
\begin{align}
0  &  =(\partial_{\theta}\rho_{XB}^{\theta})\Pi_{\rho_{XB}^{\theta}}^{\perp
}\nonumber\\
&  =\left(  \sum_{x}\partial_{\theta}(p_{\theta}(x))|x\rangle\!\langle
x|_{X}\otimes\rho_{\theta}^{x}+\sum_{x}p_{\theta}(x)|x\rangle\!\langle
x|_{X}\otimes(\partial_{\theta}\rho_{\theta}^{x})\right)  \times\nonumber\\
&  \left(  \sum_{x:p_{\theta}(x)=0}|x\rangle\!\langle x|_{X}\otimes
I+\sum_{x:p_{\theta}(x)>0}|x\rangle\!\langle x|_{X}\otimes\Pi_{\rho_{\theta}
^{x}}^{\perp}\right) \\
&  =\sum_{x:p_{\theta}(x)=0}\partial_{\theta}(p_{\theta}(x))|x\rangle\!\langle
x|_{X}\otimes\rho_{\theta}^{x}+\sum_{x:p_{\theta}(x)>0}p_{\theta}
(x)|x\rangle\!\langle x|_{X}\otimes(\partial_{\theta}\rho_{\theta}^{x})\Pi
_{\rho_{\theta}^{x}}^{\perp}. \label{eq:cq-finiteness-cond-RLD-proof-step}
\end{align}
Now sandwiching by $\sum_{x:p_{\theta}(x)=0}|x\rangle\!\langle x|_{X}\otimes I$
on both sides (which projects out the second sum above) and tracing over the
second system, we conclude that
\begin{equation}
\sum_{x:p_{\theta}(x)=0}\partial_{\theta}(p_{\theta}(x))|x\rangle\!\langle
x|_{X}=0.
\end{equation}
This is the same as $\operatorname{supp}(\partial_{\theta}p_{\theta}
)\subseteq\operatorname{supp}(p_{\theta})$. Instead sandwiching by
$\sum_{x:p_{\theta}(x)>0}|x\rangle\!\langle x|_{X}\otimes I$, we are left with
the following conditions:
\begin{equation}
(\partial_{\theta}\rho_{\theta}^{x})\Pi_{\rho_{\theta}^{x}}^{\perp}
=0\quad\forall x:p_{\theta}(x)>0.
\end{equation}
Thus, the finiteness condition in \eqref{eq:cq-finiteness-cond-RLD-left-side}
implies the finiteness condition in
\eqref{eq:cq-finiteness-cond-RLD-right-side}. The other implication follows
from plugging \eqref{eq:cq-finiteness-cond-RLD-right-side}\ into \eqref{eq:cq-finiteness-cond-RLD-proof-step}.

Since the finiteness of the left-hand side of \eqref{eq:RLD-QFI-breakdown-cqq}
is equivalent to the finiteness of the right-hand side of
\eqref{eq:RLD-QFI-breakdown-cqq}, we can now focus on establishing the
equality under these conditions. Consider that
\begin{align}
&  \left(  \partial_{\theta}\left(  \sum_{x}p_{\theta}(x)|x\rangle\!\langle
x|_{X}\otimes\rho_{\theta}^{x}\right)  \right)  ^{2}\\
&  =\left(  \sum_{x}|x\rangle\!\langle x|_{X}\otimes\left[  \partial_{\theta
}(p_{\theta}(x))\rho_{\theta}^{x}+p_{\theta}(x)(\partial_{\theta}\rho_{\theta
}^{x})\right]  \right)  ^{2}\\
&  =\left(  \sum_{x}|x\rangle\!\langle x|_{X}\otimes\left[  \left[
\partial_{\theta}(p_{\theta}(x))\right]  ^{2}[\rho_{\theta}^{x}]^{2}
+p_{\theta}(x)\partial_{\theta}(p_{\theta}(x))\left\{  \rho_{\theta}
^{x},(\partial_{\theta}\rho_{\theta}^{x})\right\}  +[p_{\theta}(x)]^{2}
(\partial_{\theta}\rho_{\theta}^{x})^{2}\right]  \right)  .
\end{align}
Then we find that
\begin{align}
&  \widehat{I}_{F}\!\left(  \theta;\left\{  \sum_{x}p_{\theta}(x)|x\rangle
\langle x|_{X}\otimes\rho_{\theta}^{x}\right\}  _{\theta}\right) \nonumber\\
&  =\operatorname{Tr}\left[  \left(  \partial_{\theta}\left(  \sum
_{x}p_{\theta}(x)|x\rangle\!\langle x|_{X}\otimes\rho_{\theta}^{x}\right)
\right)  ^{2}\left(  \sum_{x:p_{\theta}(x)>0}p_{\theta}(x)|x\rangle\!\langle
x|_{X}\otimes\rho_{\theta}^{x}\right)  ^{-1}\right] \\
&  =\operatorname{Tr}\left[  \left(  \partial_{\theta}\left(  \sum
_{x}p_{\theta}(x)|x\rangle\!\langle x|_{X}\otimes\rho_{\theta}^{x}\right)
\right)  ^{2}\left(  \sum_{x:p_{\theta}(x)>0}|x\rangle\!\langle x|_{X}
\otimes\lbrack p_{\theta}(x)]^{-1}[\rho_{\theta}^{x}]^{-1}\right)  \right] \\
&  =\sum_{x:p_{\theta}(x)>0}\operatorname{Tr}[\left[  \partial_{\theta
}(p_{\theta}(x))\right]  ^{2}[\rho_{\theta}^{x}]^{2}[p_{\theta}(x)]^{-1}
[\rho_{\theta}^{x}]^{-1}]\nonumber\\
&  \qquad+\sum_{x:p_{\theta}(x)>0}\operatorname{Tr}[p_{\theta}(x)\partial
_{\theta}(p_{\theta}(x))\left\{  \rho_{\theta}^{x},(\partial_{\theta}
\rho_{\theta}^{x})\right\}  [p_{\theta}(x)]^{-1}[\rho_{\theta}^{x}
]^{-1}]\nonumber\\
&  \qquad+\sum_{x:p_{\theta}(x)>0}\operatorname{Tr}[[p_{\theta}(x)]^{2}
(\partial_{\theta}\rho_{\theta}^{x})^{2}[p_{\theta}(x)]^{-1}[\rho_{\theta}
^{x}]^{-1}]\\
&  =\sum_{x:p_{\theta}(x)>0}\left[  \frac{\left[  \partial_{\theta}(p_{\theta
}(x))\right]  ^{2}}{p_{\theta}(x)}\operatorname{Tr}[\rho_{\theta}
^{x}]+2\partial_{\theta}(p_{\theta}(x))\operatorname{Tr}[(\partial_{\theta
}\rho_{\theta}^{x})\Pi_{\rho_{\theta}^{x}}]+p_{\theta}(x)\operatorname{Tr}
[(\partial_{\theta}\rho_{\theta}^{x})^{2}[\rho_{\theta}^{x}]^{-1}]\right] \\
&  =I_{F}(\theta;\{p_{\theta}\}_{\theta})+2\sum_{x:p_{\theta}(x)>0}
\partial_{\theta}(p_{\theta}(x))\operatorname{Tr}[\partial_{\theta}
\rho_{\theta}^{x}]+\sum_{x:p_{\theta}(x)>0}p_{\theta}(x)\widehat{I}_{F}
(\theta;\{\rho_{\theta}^{x}\}_{\theta})\\
&  =I_{F}(\theta;\{p_{\theta}\}_{\theta})+\sum_{x:p_{\theta}(x)>0}p_{\theta
}(x)\widehat{I}_{F}(\theta;\{\rho_{\theta}^{x}\}_{\theta}).
\end{align}
The second-to-last equality follows because $\operatorname{Tr}[(\partial
_{\theta}\rho_{\theta}^{x})\Pi_{\rho_{\theta}^{x}}^{\perp}]=0$ and so we can
add this term to the sum. The last equality follows because $\operatorname{Tr}
[\partial_{\theta}\rho_{\theta}^{x}]=\partial_{\theta}\operatorname{Tr}
[\rho_{\theta}^{x}]=0$.
\end{proof}

\section{Proof of Proposition~\ref{prop:SDP-SLD-Fish} (Bilinear program\ for
SLD\ Fisher information of quantum channels)}

\label{app:SDP-SLD-Fish-ch}

Recall that the Fisher information of channels is defined as the following
optimization over pure state inputs:
\begin{equation}
I_{F}(\theta;\{\mathcal{N}_{A\rightarrow B}^{\theta}\})=\sup_{\psi_{RA}}
I_{F}(\theta;\{\mathcal{N}_{A\rightarrow B}^{\theta}(\psi_{RA})\}).
\end{equation}
It suffices to optimize over pure state inputs $\psi_{RA}$ such that the
reduced state $\psi_{R}>0$, because this set is dense in the set of all pure
bipartite states. Now consider a fixed input state $\psi_{RA}$, and recall
that it can be written as follows:
\begin{equation}
\psi_{RA}=Z_{R}\Gamma_{RA}Z_{R}^{\dag},
\end{equation}
where $Z_{R}$ is an invertible operator satisfying $\operatorname{Tr}
[Z_{R}^{\dag}Z_{R}]=1$. Then the output state is as follows:
\begin{equation}
\omega_{RB}^{\theta}:=\mathcal{N}_{A\rightarrow B}^{\theta}(\psi_{RA}
)=Z_{R}\Gamma_{RB}^{\mathcal{N}^{\theta}}Z_{R}^{\dag},
\end{equation}
and we find that
\begin{multline}
\frac{1}{2}I_{F}(\theta;\{\mathcal{N}_{A\rightarrow B}^{\theta}(\psi
_{RA})\})\label{eq:SDP-SLD-Fisher-fixed-state}\\
=\inf\left\{  \mu:
\begin{bmatrix}
\mu & \langle\Gamma|_{RR^{\prime}BB^{\prime}}\left(  \partial_{\theta}
\omega_{RB}^{\theta}\otimes I_{R^{\prime}B^{\prime}}\right) \\
\left(  \partial_{\theta}\omega_{RB}^{\theta}\otimes I_{R^{\prime}B^{\prime}
}\right)  |\Gamma\rangle_{RR^{\prime}BB^{\prime}} & \omega_{RB}^{\theta
}\otimes I_{R^{\prime}B^{\prime}}+I_{RB}\otimes(\omega_{R^{\prime}B^{\prime}
}^{\theta})^{T}
\end{bmatrix}
\geq0\right\}  ,
\end{multline}
by applying Proposition~\ref{prop:SLD-Fish-states-SDP}. Now consider that
\begin{align}
&
\begin{bmatrix}
\mu & \langle\Gamma|_{RR^{\prime}BB^{\prime}}\left(  \partial_{\theta}
\omega_{RB}^{\theta}\otimes I_{R^{\prime}B^{\prime}}\right) \\
\left(  \partial_{\theta}\omega_{RB}^{\theta}\otimes I_{R^{\prime}B^{\prime}
}\right)  |\Gamma\rangle_{RR^{\prime}BB^{\prime}} & \omega_{RB}^{\theta
}\otimes I_{R^{\prime}B^{\prime}}+I_{RB}\otimes(\omega_{R^{\prime}B^{\prime}
}^{\theta})^{T}
\end{bmatrix}
\nonumber\\
&  =
\begin{bmatrix}
\mu & \langle\Gamma|_{RR^{\prime}BB^{\prime}}\left(  Z_{R}(\partial_{\theta
}\Gamma_{RB}^{\mathcal{N}^{\theta}})Z_{R}^{\dag}\otimes I_{R^{\prime}
B^{\prime}}\right) \\
\left(  Z_{R}(\partial_{\theta}\Gamma_{RB}^{\mathcal{N}^{\theta}})Z_{R}^{\dag
}\otimes I_{R^{\prime}B^{\prime}}\right)  |\Gamma\rangle_{RR^{\prime
}BB^{\prime}} & Z_{R}\Gamma_{RB}^{\mathcal{N}^{\theta}}Z_{R}^{\dag}\otimes
I_{R^{\prime}B^{\prime}}+I_{RB}\otimes\overline{Z}_{R^{\prime}}(\Gamma
_{R^{\prime}B^{\prime}}^{\mathcal{N}^{\theta}})^{T}Z_{R^{\prime}}^{T}
\end{bmatrix}
\label{eq:bilinear-SLD-Fish-ch-proof-1}\\
&  =
\begin{bmatrix}
1 & 0\\
0 & Z_{R}\otimes I_{B}\otimes\overline{Z}_{R^{\prime}}\otimes I_{B^{\prime}}
\end{bmatrix}
\times\nonumber\\
&  \qquad
\begin{bmatrix}
\mu & \langle\Gamma|_{RR^{\prime}BB^{\prime}}\left(  (\partial_{\theta}
\Gamma_{RB}^{\mathcal{N}^{\theta}})\otimes I_{R^{\prime}B^{\prime}}\right) \\
\left(  (\partial_{\theta}\Gamma_{RB}^{\mathcal{N}^{\theta}})\otimes
I_{R^{\prime}B^{\prime}}\right)  |\Gamma\rangle_{RR^{\prime}BB^{\prime}} &
\Gamma_{RB}^{\mathcal{N}^{\theta}}\otimes\sigma_{R^{\prime}}^{-T}\otimes
I_{B^{\prime}}+\sigma_{R}^{-1}\otimes I_{B}\otimes(\Gamma_{R^{\prime}
B^{\prime}}^{\mathcal{N}^{\theta}})^{T}
\end{bmatrix}
\times\nonumber\\
&  \qquad
\begin{bmatrix}
1 & 0\\
0 & Z_{R}\otimes I_{B}\otimes\overline{Z}_{R^{\prime}}\otimes I_{B^{\prime}}
\end{bmatrix}
^{\dag}, \label{eq:bilinear-SLD-Fish-ch-proof-2}
\end{align}
where we define
\begin{equation}
\sigma_{R}:=Z_{R}^{\dag}Z_{R},
\end{equation}
and we applied the following observations:
\begin{align}
&  \left(  Z_{R}(\partial_{\theta}\Gamma_{RB}^{\mathcal{N}^{\theta}}
)Z_{R}^{\dag}\otimes I_{R^{\prime}B^{\prime}}\right)  |\Gamma\rangle
_{RR^{\prime}BB^{\prime}}\nonumber\\
&  =\left(  Z_{R}(\partial_{\theta}\Gamma_{RB}^{\mathcal{N}^{\theta}}
)\otimes\overline{Z}_{R^{\prime}}\otimes I_{B^{\prime}}\right)  |\Gamma
\rangle_{RR^{\prime}BB^{\prime}}\\
&  =\left(  Z_{R}\otimes\overline{Z}_{R^{\prime}}\right)  \left(
(\partial_{\theta}\Gamma_{RB}^{\mathcal{N}^{\theta}})\otimes I_{R^{\prime
}B^{\prime}}\right)  |\Gamma\rangle_{RR^{\prime}BB^{\prime}},
\end{align}
\begin{align}
&  Z_{R}\Gamma_{RB}^{\mathcal{N}^{\theta}}Z_{R}^{\dag}\otimes I_{R^{\prime
}B^{\prime}}+I_{RB}\otimes\overline{Z}_{R^{\prime}}(\Gamma_{R^{\prime
}B^{\prime}}^{\mathcal{N}^{\theta}})^{T}Z_{R^{\prime}}^{T}\nonumber\\
&  =Z_{R}\Gamma_{RB}^{\mathcal{N}^{\theta}}Z_{R}^{\dag}\otimes\overline
{Z}_{R^{\prime}}\left(  \overline{Z}_{R^{\prime}}\right)  ^{-1}\left(
Z_{R^{\prime}}^{T}\right)  ^{-1}Z_{R^{\prime}}^{T}\otimes I_{B^{\prime}
}\nonumber\\
&  \quad+Z_{R}\left(  Z_{R}\right)  ^{-1}\left(  Z_{R}^{\dag}\right)
^{-1}Z_{R}^{\dag}\otimes I_{B}\otimes\overline{Z}_{R^{\prime}}(\Gamma
_{R^{\prime}B^{\prime}}^{\mathcal{N}^{\theta}})^{T}Z_{R^{\prime}}^{T}\\
&  =Z_{R}\Gamma_{RB}^{\mathcal{N}^{\theta}}Z_{R}^{\dag}\otimes\overline
{Z}_{R^{\prime}}\sigma_{R}^{-T}Z_{R^{\prime}}^{T}\otimes I_{B^{\prime}}
+Z_{R}\sigma_{R}^{-1}Z_{R}^{\dag}\otimes I_{B}\otimes\overline{Z}_{R^{\prime}
}(\Gamma_{R^{\prime}B^{\prime}}^{\mathcal{N}^{\theta}})^{T}Z_{R^{\prime}}
^{T}\\
&  =\left(  Z_{R}\otimes I_{B}\otimes\overline{Z}_{R^{\prime}}\otimes
I_{B^{\prime}}\right)  \left(  \Gamma_{RB}^{\mathcal{N}^{\theta}}\otimes
\sigma_{R}^{-T}\otimes I_{B^{\prime}}\right)  \left(  Z_{R}\otimes
I_{B}\otimes\overline{Z}_{R^{\prime}}\otimes I_{B^{\prime}}\right)  ^{\dag
}\nonumber\\
&  \quad+\left(  Z_{R}\otimes I_{B}\otimes\overline{Z}_{R^{\prime}}\otimes
I_{B^{\prime}}\right)  \left(  \sigma_{R}^{-1}\otimes I_{B}\otimes
(\Gamma_{R^{\prime}B^{\prime}}^{\mathcal{N}^{\theta}})^{T}\right)  \left(
Z_{R}\otimes I_{B}\otimes\overline{Z}_{R^{\prime}}\otimes I_{B^{\prime}
}\right)  ^{\dag}\\
&  =\left(  Z_{R}\otimes I_{B}\otimes\overline{Z}_{R^{\prime}}\otimes
I_{B^{\prime}}\right)  \left(  \Gamma_{RB}^{\mathcal{N}^{\theta}}\otimes
\sigma_{R}^{-T}\otimes I_{B^{\prime}}+\sigma_{R}^{-1}\otimes I_{B}
\otimes(\Gamma_{R^{\prime}B^{\prime}}^{\mathcal{N}^{\theta}})^{T}\right)
\nonumber\\
&  \quad\times\left(  Z_{R}\otimes I_{B}\otimes\overline{Z}_{R^{\prime}
}\otimes I_{B^{\prime}}\right)  ^{\dag}.
\end{align}
Since the first matrix in
\eqref{eq:bilinear-SLD-Fish-ch-proof-1}--\eqref{eq:bilinear-SLD-Fish-ch-proof-2}
above is positive semi-definite if and only if the last one is, the
semi-definite program in \eqref{eq:SDP-SLD-Fisher-fixed-state} becomes as
follows:
\begin{equation}
\inf\left\{  \mu:
\begin{bmatrix}
\mu & \langle\Gamma|_{RR^{\prime}BB^{\prime}}\left(  (\partial_{\theta}
\Gamma_{RB}^{\mathcal{N}^{\theta}})\otimes I_{R^{\prime}B^{\prime}}\right) \\
\left(  (\partial_{\theta}\Gamma_{RB}^{\mathcal{N}^{\theta}})\otimes
I_{R^{\prime}B^{\prime}}\right)  |\Gamma\rangle_{RR^{\prime}BB^{\prime}} &
\Gamma_{RB}^{\mathcal{N}^{\theta}}\otimes\sigma_{R^{\prime}}^{-T}\otimes
I_{B^{\prime}}+\sigma_{R}^{-1}\otimes I_{B}\otimes(\Gamma_{R^{\prime}
B^{\prime}}^{\mathcal{N}^{\theta}})^{T}
\end{bmatrix}
\geq0\right\}  . \label{eq:SDP-first-primal-SLD-Fish-ch}
\end{equation}
By invoking Lemma~\ref{lem:freq-used-SDP-primal-dual}, the dual of this
program is given by
\begin{multline}
\sup_{\lambda,|\varphi\rangle_{RBR^{\prime}B^{\prime}},W_{RBR^{\prime
}B^{\prime}}}2\operatorname{Re}[\langle\varphi|_{RBR^{\prime}B^{\prime}
}(\partial_{\theta}\Gamma_{RB}^{\mathcal{N}^{\theta}})|\Gamma\rangle
_{RR^{\prime}BB^{\prime}}]\label{eq:SDP-first-dual-SLD-Fish-ch}\\
-\operatorname{Tr}[(\Gamma_{RB}^{\mathcal{N}^{\theta}}\otimes\sigma
_{R^{\prime}}^{-T}\otimes I_{B^{\prime}}+\sigma_{R}^{-1}\otimes I_{B}
\otimes(\Gamma_{R^{\prime}B^{\prime}}^{\mathcal{N}^{\theta}})^{T}
)W_{RBR^{\prime}B^{\prime}}]
\end{multline}
subject to
\begin{equation}
\lambda\leq1,\qquad
\begin{bmatrix}
\lambda & \langle\varphi|_{RBR^{\prime}B^{\prime}}\\
|\varphi\rangle_{RBR^{\prime}B^{\prime}} & W_{RBR^{\prime}B^{\prime}}
\end{bmatrix}
\geq0. \label{eq:SDP-first-dual-SLD-Fish-ch-constraints}
\end{equation}
Strong duality holds, so that \eqref{eq:SDP-first-dual-SLD-Fish-ch} is equal
to \eqref{eq:SDP-first-primal-SLD-Fish-ch}, because we are free to choose
values $\lambda$, $|\varphi\rangle_{RBR^{\prime}B^{\prime}}$, and
$W_{RBR^{\prime}B^{\prime}}$ such that the constraints in
\eqref{eq:SDP-first-dual-SLD-Fish-ch-constraints}\ are strict. Employing the
unitary swap operators $F_{RR^{\prime}}$ and $F_{BB^{\prime}}$, we can rewrite
the second term in the objective function as follows:
\begin{align}
&  \operatorname{Tr}[(\Gamma_{RB}^{\mathcal{N}^{\theta}}\otimes\sigma
_{R^{\prime}}^{-T}\otimes I_{B^{\prime}}+\sigma_{R}^{-1}\otimes I_{B}
\otimes(\Gamma_{R^{\prime}B^{\prime}}^{\mathcal{N}^{\theta}})^{T}
)W_{RBR^{\prime}B^{\prime}}]\nonumber\\
&  =\operatorname{Tr}[(\Gamma_{RB}^{\mathcal{N}^{\theta}}\otimes
\sigma_{R^{\prime}}^{-T}\otimes I_{B^{\prime}})W_{RBR^{\prime}B^{\prime}
}]+\operatorname{Tr}[(\sigma_{R}^{-1}\otimes I_{B}\otimes(\Gamma_{R^{\prime
}B^{\prime}}^{\mathcal{N}^{\theta}})^{T})W_{RBR^{\prime}B^{\prime}}]\\
&  =\operatorname{Tr}[(\left(  F_{RR^{\prime}}\otimes F_{BB^{\prime}}\right)
(\sigma_{R}^{-T}\otimes I_{B}\otimes\Gamma_{R^{\prime}B^{\prime}}
^{\mathcal{N}^{\theta}})\left(  F_{RR^{\prime}}\otimes F_{BB^{\prime}}\right)
W_{RBR^{\prime}B^{\prime}}]\nonumber\\
&  \qquad+\operatorname{Tr}[\sigma_{R}^{-1}\operatorname{Tr}_{BR^{\prime
}B^{\prime}}[(\Gamma_{R^{\prime}B^{\prime}}^{\mathcal{N}^{\theta}}
)^{T}W_{RBR^{\prime}B^{\prime}}]]\\
&  =\operatorname{Tr}[((\sigma_{R}^{-T}\otimes I_{B}\otimes\Gamma_{R^{\prime
}B^{\prime}}^{\mathcal{N}^{\theta}})\left(  F_{RR^{\prime}}\otimes
F_{BB^{\prime}}\right)  W_{RBR^{\prime}B^{\prime}}\left(  F_{RR^{\prime}
}\otimes F_{BB^{\prime}}\right)  ]\nonumber\\
&  \qquad+\operatorname{Tr}[\sigma_{R}^{-1}\operatorname{Tr}_{BR^{\prime
}B^{\prime}}[(\Gamma_{R^{\prime}B^{\prime}}^{\mathcal{N}^{\theta}}
)^{T}W_{RBR^{\prime}B^{\prime}}]]\\
&  =\operatorname{Tr}[\sigma_{R}^{-T}\operatorname{Tr}_{BR^{\prime}B^{\prime}
}[\Gamma_{R^{\prime}B^{\prime}}^{\mathcal{N}^{\theta}}\left(  F_{RR^{\prime}
}\otimes F_{BB^{\prime}}\right)  W_{RBR^{\prime}B^{\prime}}\left(
F_{RR^{\prime}}\otimes F_{BB^{\prime}}\right)  ]]\nonumber\\
&  \qquad+\operatorname{Tr}[\sigma_{R}^{-1}\operatorname{Tr}_{BR^{\prime
}B^{\prime}}[(\Gamma_{R^{\prime}B^{\prime}}^{\mathcal{N}^{\theta}}
)^{T}W_{RBR^{\prime}B^{\prime}}]]\\
&  =\operatorname{Tr}[\sigma_{R}^{-1}(\operatorname{Tr}_{BR^{\prime}B^{\prime
}}[\Gamma_{R^{\prime}B^{\prime}}^{\mathcal{N}^{\theta}}\left(  F_{RR^{\prime}
}\otimes F_{BB^{\prime}}\right)  W_{RBR^{\prime}B^{\prime}}\left(
F_{RR^{\prime}}\otimes F_{BB^{\prime}}\right)  ])^{T}]\nonumber\\
&  \qquad+\operatorname{Tr}[\sigma_{R}^{-1}\operatorname{Tr}_{BR^{\prime
}B^{\prime}}[(\Gamma_{R^{\prime}B^{\prime}}^{\mathcal{N}^{\theta}}
)^{T}W_{RBR^{\prime}B^{\prime}}]]\\
&  =\operatorname{Tr}[\sigma_{R}^{-1}K_{R}],
\end{align}
where
\begin{multline}
K_{R}=(\operatorname{Tr}_{BR^{\prime}B^{\prime}}[\Gamma_{R^{\prime}B^{\prime}
}^{\mathcal{N}^{\theta}}\left(  F_{RR^{\prime}}\otimes F_{BB^{\prime}}\right)
W_{RBR^{\prime}B^{\prime}}\left(  F_{RR^{\prime}}\otimes F_{BB^{\prime}
}\right)  ])^{T}\\
+\operatorname{Tr}_{BR^{\prime}B^{\prime}}[(\Gamma_{R^{\prime}B^{\prime}
}^{\mathcal{N}^{\theta}})^{T}W_{RBR^{\prime}B^{\prime}}].
\end{multline}
So the SDP\ in \eqref{eq:SDP-first-dual-SLD-Fish-ch}\ can be written as
\begin{equation}
\sup_{\lambda,|\varphi\rangle_{RBR^{\prime}B^{\prime}},W_{RBR^{\prime
}B^{\prime}}}2\operatorname{Re}[\langle\varphi|_{RBR^{\prime}B^{\prime}
}(\partial_{\theta}\Gamma_{RB}^{\mathcal{N}^{\theta}})|\Gamma\rangle
_{RR^{\prime}BB^{\prime}}]-\operatorname{Tr}[\sigma_{R}^{-1}K_{R}]
\label{eq:SDP-first-dual-SLD-Fish-ch-2}
\end{equation}
subject to
\begin{equation}
\lambda\leq1,\qquad
\begin{bmatrix}
\lambda & \langle\varphi|_{RBR^{\prime}B^{\prime}}\\
|\varphi\rangle_{RBR^{\prime}B^{\prime}} & W_{RBR^{\prime}B^{\prime}}
\end{bmatrix}
\geq0.
\end{equation}
Now noting from Lemma~\ref{lem:min-XYinvX} that
\begin{equation}
\sigma_{R}^{-1}=\inf\left\{  Y_{R}:
\begin{bmatrix}
\sigma_{R} & I_{R}\\
I_{R} & Y_{R}
\end{bmatrix}
\geq0\right\}  ,
\end{equation}
and that $\sigma_{R}^{-1}$ and $K_{R}$ are positive semi-definite, we can
rewrite the SDP in \eqref{eq:SDP-first-dual-SLD-Fish-ch-2}\ as
\begin{multline}
\sup_{\lambda,|\varphi\rangle_{RBR^{\prime}B^{\prime}},W_{RBR^{\prime
}B^{\prime}}}\left(  2\operatorname{Re}[\langle\varphi|_{RBR^{\prime}
B^{\prime}}(\partial_{\theta}\Gamma_{RB}^{\mathcal{N}^{\theta}})|\Gamma
\rangle_{RR^{\prime}BB^{\prime}}]-\inf_{Y_{R}}\operatorname{Tr}[Y_{R}
K_{R}]\right) \label{eq:almost-there-SLD-Fisher-channels-SDP}\\
=\sup_{\lambda,|\varphi\rangle_{RBR^{\prime}B^{\prime}},W_{RBR^{\prime
}B^{\prime}},Y_{R}}\left(  2\operatorname{Re}[\langle\varphi|_{RBR^{\prime
}B^{\prime}}(\partial_{\theta}\Gamma_{RB}^{\mathcal{N}^{\theta}}
)|\Gamma\rangle_{RR^{\prime}BB^{\prime}}]-\operatorname{Tr}[Y_{R}
K_{R}]\right)
\end{multline}
subject to
\begin{equation}
\lambda\leq1,\quad
\begin{bmatrix}
\lambda & \langle\varphi|_{RBR^{\prime}B^{\prime}}\\
|\varphi\rangle_{RBR^{\prime}B^{\prime}} & W_{RBR^{\prime}B^{\prime}}
\end{bmatrix}
\geq0,\quad
\begin{bmatrix}
\sigma_{R} & I_{R}\\
I_{R} & Y_{R}
\end{bmatrix}
\geq0.
\end{equation}
Then we can finally include the maximization over input states $\sigma_{R}$
(satisfying $\sigma_{R}\geq0$ and $\operatorname{Tr}[\sigma_{R}]=1$) to arrive
at the form given in \eqref{eq:SDP-SLD-ch-Fish}.

\section{Proof of Propositions \ref{prop:geo-fish-explicit-formula}\ and
\ref{prop:add-RLD-Fish-ch} (Formula for RLD\ Fisher information of quantum
channels and its additivity)}

\label{app:proofs-RLD-fish-props}

\begin{proof}
[Proof of Proposition~\ref{prop:geo-fish-explicit-formula}]From
\eqref{eq:finiteness-condition-RLD-fish-ch}, the finiteness condition for the
RLD Fisher information $I_{F}(\theta;\{\mathcal{N}_{A\rightarrow B}^{\theta
}\}_{\theta})$ of the family $\{\mathcal{N}_{A\rightarrow B}^{\theta
}\}_{\theta}$ of channels is that $\Pi_{\Gamma^{\mathcal{N}^{\theta}}}^{\perp
}(\partial_{\theta}\Gamma_{RB}^{\mathcal{N}^{\theta}})=0$, where $\Gamma
_{RB}^{\mathcal{N}^{\theta}}$ is the Choi state of the channel $\mathcal{N}
_{A\rightarrow B}^{\theta}$. So we suppose that this condition holds. This
condition implies that $(\partial_{\theta}\Gamma_{RB}^{\mathcal{N}^{\theta}
})(\Gamma_{RB}^{\mathcal{N}^{\theta}})^{-1}(\partial_{\theta}\Gamma
_{RB}^{\mathcal{N}^{\theta}})$ is a well-defined operator with the inverse
taken on the support of $(\Gamma_{RB}^{\mathcal{N}^{\theta}})^{-1}$. Recall
that any pure state $\psi_{RA}$ can be written as
\begin{equation}
\psi_{RA}=Z_{R}\Gamma_{RA}Z_{R}^{\dag},
\end{equation}
where
\begin{align}
\Gamma_{RA}  &  =|\Gamma\rangle\!\langle\Gamma|_{RA},\\
|\Gamma\rangle_{RA}  &  =\sum_{i}|i\rangle_{R}|i\rangle_{A},
\end{align}
and $Z_{R}$ is a square operator satisfying $\operatorname{Tr}[Z_{R}^{\dag
}Z_{R}]=1$. This implies that
\begin{equation}
\mathcal{N}_{A\rightarrow B}^{\theta}(\psi_{RA})=\mathcal{N}_{A\rightarrow
B}^{\theta}(Z_{R}\Gamma_{RA}Z_{R}^{\dag})=Z_{R}\mathcal{N}_{A\rightarrow
B}^{\theta}(\Gamma_{RA})Z_{R}^{\dag}=Z_{R}\Gamma_{RB}^{\mathcal{N}^{\theta}
}Z_{R}^{\dag}.
\end{equation}
It suffices to optimize over pure states $\psi_{RA}$ such that $\psi_{A}>0$
because these states are dense in the set of all pure bipartite states. Then
consider that
\begin{align}
&  \sup_{\psi_{RA}}\widehat{I}_{F}(\theta;\{\mathcal{N}_{A\rightarrow
B}^{\theta}(\psi_{RA})\}_{\theta})\nonumber\\
&  =\sup_{\psi_{RA}}\operatorname{Tr}[(\partial_{\theta}\mathcal{N}
_{A\rightarrow B}^{\theta}(\psi_{RA}))^{2}(\mathcal{N}_{A\rightarrow
B}^{\theta}(\psi_{RA}))^{-1}]\\
&  =\sup_{Z_{R}:\operatorname{Tr}[Z_{R}^{\dag}Z_{R}]=1}\operatorname{Tr}
[(\partial_{\theta}Z_{R}\Gamma_{RB}^{\mathcal{N}^{\theta}}Z_{R}^{\dag}
)^{2}(Z_{R}\Gamma_{RB}^{\mathcal{N}^{\theta}}Z_{R}^{\dag})^{-1}]\\
&  =\sup_{Z_{R}:\operatorname{Tr}[Z_{R}^{\dag}Z_{R}]=1}\operatorname{Tr}
[(Z_{R}(\partial_{\theta}\Gamma_{RB}^{\mathcal{N}^{\theta}})Z_{R}^{\dag}
)^{2}(Z_{R}\Gamma_{RB}^{\mathcal{N}^{\theta}}Z_{R}^{\dag})^{-1}]\\
&  =\sup_{Z_{R}:\operatorname{Tr}[Z_{R}^{\dag}Z_{R}]=1}\operatorname{Tr}
[(Z_{R}(\partial_{\theta}\Gamma_{RB}^{\mathcal{N}^{\theta}})Z_{R}^{\dag
})(Z_{R}\Gamma_{RB}^{\mathcal{N}^{\theta}}Z_{R}^{\dag})^{-1}(Z_{R}
(\partial_{\theta}\Gamma_{RB}^{\mathcal{N}^{\theta}})Z_{R}^{\dag})]\\
&  =\sup_{Z_{R}:\operatorname{Tr}[Z_{R}^{\dag}Z_{R}]=1}\operatorname{Tr}
[Z_{R}(\partial_{\theta}\Gamma_{RB}^{\mathcal{N}^{\theta}})(\Gamma
_{RB}^{\mathcal{N}^{\theta}})^{-1}(\partial_{\theta}\Gamma_{RB}^{\mathcal{N}
^{\theta}})Z_{R}^{\dag}]\\
&  =\sup_{Z_{R}:\operatorname{Tr}[Z_{R}^{\dag}Z_{R}]=1}\operatorname{Tr}
[Z_{R}^{\dag}Z_{R}\operatorname{Tr}_{B}[(\partial_{\theta}\Gamma
_{RB}^{\mathcal{N}^{\theta}})(\Gamma_{RB}^{\mathcal{N}^{\theta}}
)^{-1}(\partial_{\theta}\Gamma_{RB}^{\mathcal{N}^{\theta}})]]\\
&  =\left\Vert \operatorname{Tr}_{B}[(\partial_{\theta}\Gamma_{RB}
^{\mathcal{N}^{\theta}})(\Gamma_{RB}^{\mathcal{N}^{\theta}})^{-1}
(\partial_{\theta}\Gamma_{RB}^{\mathcal{N}^{\theta}})]\right\Vert _{\infty}.
\end{align}
The fifth equality is a consequence of the transformer equality in
Lemma~\ref{lem:transformer-ineq-basic}, with $L=Z_{R}$, $X=\partial_{\theta
}\Gamma_{RB}^{\mathcal{N}^{\theta}}$, and $Y=\Gamma_{RB}^{\mathcal{N}^{\theta
}}$. The last equality is a consequence of the characterization of the
infinity norm of a positive semi-definite operator $Y$ as $\left\Vert
Y\right\Vert _{\infty}=\sup_{\rho>0,\operatorname{Tr}[\rho]=1}
\operatorname{Tr}[Y\rho]$.
\end{proof}

\bigskip

\begin{proof}
[Proof of Proposition~\ref{prop:add-RLD-Fish-ch}]The proof begins by
considering the finiteness condition in
\eqref{eq:finiteness-condition-RLD-fish-ch} and showing that finiteness of the
left-hand side is equivalent to finiteness of the right-hand side. The
manipulations are the same as given in the proof of
Proposition~\ref{prop:additivity-SLD-RLD-states}, and so we omit showing them
again. So we can focus on the case when the quantities are finite and exploit
the explicit formula from Proposition~\ref{prop:geo-fish-explicit-formula} to
evaluate the left-hand side directly. Consider that
\begin{multline}
\widehat{I}_{F}(\theta;\{\mathcal{N}_{A\rightarrow B}^{\theta}\otimes
\mathcal{M}_{C\rightarrow D}^{\theta}\}_{\theta})\\
=\left\Vert \operatorname{Tr}_{BD}[(\partial_{\theta}(\Gamma_{RB}
^{\mathcal{N}^{\theta}}\otimes\Gamma_{SD}^{\mathcal{M}^{\theta}}))(\Gamma
_{RB}^{\mathcal{N}^{\theta}}\otimes\Gamma_{SD}^{\mathcal{M}^{\theta}}
)^{-1}(\partial_{\theta}(\Gamma_{RB}^{\mathcal{N}^{\theta}}\otimes\Gamma
_{SD}^{\mathcal{M}^{\theta}}))]\right\Vert _{\infty},
\end{multline}
because the Choi operator of the tensor-product channel $\mathcal{N}
_{A\rightarrow B}^{\theta}\otimes\mathcal{M}_{C\rightarrow D}^{\theta}$ is
$\Gamma_{RB}^{\mathcal{N}^{\theta}}\otimes\Gamma_{SD}^{\mathcal{M}^{\theta}}$.
Then
\begin{equation}
\partial_{\theta}(\Gamma_{RB}^{\mathcal{N}^{\theta}}\otimes\Gamma
_{SD}^{\mathcal{M}^{\theta}})=(\partial_{\theta}\Gamma_{RB}^{\mathcal{N}
^{\theta}})\otimes\Gamma_{SD}^{\mathcal{M}^{\theta}}+\Gamma_{RB}
^{\mathcal{N}^{\theta}}\otimes\partial_{\theta}(\Gamma_{SD}^{\mathcal{M}
^{\theta}}),
\end{equation}
and right multiplying by $(\Gamma_{RB}^{\mathcal{N}^{\theta}}\otimes
\Gamma_{SD}^{\mathcal{M}^{\theta}})^{-1}$ gives
\begin{align}
&  (\partial_{\theta}(\Gamma_{RB}^{\mathcal{N}^{\theta}}\otimes\Gamma
_{SD}^{\mathcal{M}^{\theta}}))(\Gamma_{RB}^{\mathcal{N}^{\theta}}\otimes
\Gamma_{SD}^{\mathcal{M}^{\theta}})^{-1}\nonumber\\
&  =\left[  (\partial_{\theta}\Gamma_{RB}^{\mathcal{N}^{\theta}})\otimes
\Gamma_{SD}^{\mathcal{M}^{\theta}}+\Gamma_{RB}^{\mathcal{N}^{\theta}}
\otimes\partial_{\theta}(\Gamma_{SD}^{\mathcal{M}^{\theta}})\right]
(\Gamma_{RB}^{\mathcal{N}^{\theta}}\otimes\Gamma_{SD}^{\mathcal{M}^{\theta}
})^{-1}\\
&  =(\partial_{\theta}\Gamma_{RB}^{\mathcal{N}^{\theta}})(\Gamma
_{RB}^{\mathcal{N}^{\theta}})^{-1}\otimes\Gamma_{SD}^{\mathcal{M}^{\theta}
}(\Gamma_{SD}^{\mathcal{M}^{\theta}})^{-1}+\Gamma_{RB}^{\mathcal{N}^{\theta}
}(\Gamma_{RB}^{\mathcal{N}^{\theta}})^{-1}\otimes\partial_{\theta}(\Gamma
_{SD}^{\mathcal{M}^{\theta}})(\Gamma_{SD}^{\mathcal{M}^{\theta}})^{-1}\\
&  =(\partial_{\theta}\Gamma_{RB}^{\mathcal{N}^{\theta}})(\Gamma
_{RB}^{\mathcal{N}^{\theta}})^{-1}\otimes\Pi_{\Gamma^{\mathcal{M}^{\theta}}
}+\Pi_{\Gamma^{\mathcal{N}^{\theta}}}\otimes\partial_{\theta}(\Gamma
_{SD}^{\mathcal{M}^{\theta}})(\Gamma_{SD}^{\mathcal{M}^{\theta}})^{-1}.
\end{align}
Right multiplying again by $(\partial_{\theta}(\Gamma_{RB}^{\mathcal{N}
^{\theta}}\otimes\Gamma_{SD}^{\mathcal{M}^{\theta}}))$ gives
\begin{align}
&  \left[  (\partial_{\theta}\Gamma_{RB}^{\mathcal{N}^{\theta}})(\Gamma
_{RB}^{\mathcal{N}^{\theta}})^{-1}\otimes\Pi_{\Gamma^{\mathcal{M}^{\theta}}
}+\Pi_{\Gamma^{\mathcal{N}^{\theta}}}\otimes\partial_{\theta}(\Gamma
_{SD}^{\mathcal{M}^{\theta}})(\Gamma_{SD}^{\mathcal{M}^{\theta}})^{-1}\right]
(\partial_{\theta}(\Gamma_{RB}^{\mathcal{N}^{\theta}}\otimes\Gamma
_{SD}^{\mathcal{M}^{\theta}}))\nonumber\\
&  =\left[  (\partial_{\theta}\Gamma_{RB}^{\mathcal{N}^{\theta}})(\Gamma
_{RB}^{\mathcal{N}^{\theta}})^{-1}\otimes\Pi_{\Gamma^{\mathcal{M}^{\theta}}
}+\Pi_{\Gamma^{\mathcal{N}^{\theta}}}\otimes\partial_{\theta}(\Gamma
_{SD}^{\mathcal{M}^{\theta}})(\Gamma_{SD}^{\mathcal{M}^{\theta}})^{-1}\right]
\nonumber\\
&  \qquad\times\left[  (\partial_{\theta}\Gamma_{RB}^{\mathcal{N}^{\theta}
})\otimes\Gamma_{SD}^{\mathcal{M}^{\theta}}+\Gamma_{RB}^{\mathcal{N}^{\theta}
}\otimes\partial_{\theta}(\Gamma_{SD}^{\mathcal{M}^{\theta}})\right] \\
&  =(\partial_{\theta}\Gamma_{RB}^{\mathcal{N}^{\theta}})(\Gamma
_{RB}^{\mathcal{N}^{\theta}})^{-1}(\partial_{\theta}\Gamma_{RB}^{\mathcal{N}
^{\theta}})\otimes\Gamma_{SD}^{\mathcal{M}^{\theta}}+(\partial_{\theta}
\Gamma_{RB}^{\mathcal{N}^{\theta}})(\Gamma_{RB}^{\mathcal{N}^{\theta}}
)^{-1}\Gamma_{RB}^{\mathcal{N}^{\theta}}\otimes\Pi_{\Gamma^{\mathcal{M}
^{\theta}}}\partial_{\theta}(\Gamma_{SD}^{\mathcal{M}^{\theta}})\nonumber\\
&  \qquad+\Pi_{\Gamma^{\mathcal{N}^{\theta}}}(\partial_{\theta}\Gamma
_{RB}^{\mathcal{N}^{\theta}})\otimes\partial_{\theta}(\Gamma_{SD}
^{\mathcal{M}^{\theta}})(\Gamma_{SD}^{\mathcal{M}^{\theta}})^{-1}\Gamma
_{SD}^{\mathcal{M}^{\theta}}+\Gamma_{RB}^{\mathcal{N}^{\theta}}\otimes
\partial_{\theta}(\Gamma_{SD}^{\mathcal{M}^{\theta}})(\Gamma_{SD}
^{\mathcal{M}^{\theta}})^{-1}\partial_{\theta}(\Gamma_{SD}^{\mathcal{M}
^{\theta}})\\
&  =(\partial_{\theta}\Gamma_{RB}^{\mathcal{N}^{\theta}})(\Gamma
_{RB}^{\mathcal{N}^{\theta}})^{-1}(\partial_{\theta}\Gamma_{RB}^{\mathcal{N}
^{\theta}})\otimes\Gamma_{SD}^{\mathcal{M}^{\theta}}+(\partial_{\theta}
\Gamma_{RB}^{\mathcal{N}^{\theta}})\Pi_{\Gamma^{\mathcal{N}^{\theta}}}
\otimes\Pi_{\Gamma^{\mathcal{M}^{\theta}}}\partial_{\theta}(\Gamma
_{SD}^{\mathcal{M}^{\theta}})\nonumber\\
&  \qquad+\Pi_{\Gamma^{\mathcal{N}^{\theta}}}(\partial_{\theta}\Gamma
_{RB}^{\mathcal{N}^{\theta}})\otimes\partial_{\theta}(\Gamma_{SD}
^{\mathcal{M}^{\theta}})\Pi_{\Gamma^{\mathcal{M}^{\theta}}}+\Gamma
_{RB}^{\mathcal{N}^{\theta}}\otimes\partial_{\theta}(\Gamma_{SD}
^{\mathcal{M}^{\theta}})(\Gamma_{SD}^{\mathcal{M}^{\theta}})^{-1}
\partial_{\theta}(\Gamma_{SD}^{\mathcal{M}^{\theta}})\\
&  =(\partial_{\theta}\Gamma_{RB}^{\mathcal{N}^{\theta}})(\Gamma
_{RB}^{\mathcal{N}^{\theta}})^{-1}(\partial_{\theta}\Gamma_{RB}^{\mathcal{N}
^{\theta}})\otimes\Gamma_{SD}^{\mathcal{M}^{\theta}}+2(\partial_{\theta}
\Gamma_{RB}^{\mathcal{N}^{\theta}})\otimes\partial_{\theta}(\Gamma
_{SD}^{\mathcal{M}^{\theta}})\nonumber\\
&  \qquad+\Gamma_{RB}^{\mathcal{N}^{\theta}}\otimes\partial_{\theta}
(\Gamma_{SD}^{\mathcal{M}^{\theta}})(\Gamma_{SD}^{\mathcal{M}^{\theta}}
)^{-1}\partial_{\theta}(\Gamma_{SD}^{\mathcal{M}^{\theta}}),
\end{align}
where the last line follows because we can \textquotedblleft add
in\textquotedblright\ zero-valued terms like $(\partial_{\theta}\Gamma
_{RB}^{\mathcal{N}^{\theta}})\Pi_{\Gamma^{\mathcal{N}^{\theta}}}^{\perp}
=\Pi_{\Gamma^{\mathcal{N}^{\theta}}}^{\perp}(\partial_{\theta}\Gamma
_{RB}^{\mathcal{N}^{\theta}})=0$\ and $\Pi_{\Gamma^{\mathcal{M}^{\theta}}
}^{\perp}\partial_{\theta}(\Gamma_{SD}^{\mathcal{M}^{\theta}})=\partial
_{\theta}(\Gamma_{SD}^{\mathcal{M}^{\theta}})\Pi_{\Gamma^{\mathcal{M}^{\theta
}}}^{\perp}=0$, due to the finiteness condition in
\eqref{eq:finiteness-condition-RLD-fish-ch}\ holding. Now taking the trace
over systems $BD$ for each term, we find that
\begin{multline}
\operatorname{Tr}_{BD}[(\partial_{\theta}\Gamma_{RB}^{\mathcal{N}^{\theta}
})(\Gamma_{RB}^{\mathcal{N}^{\theta}})^{-1}(\partial_{\theta}\Gamma
_{RB}^{\mathcal{N}^{\theta}})\otimes\Gamma_{SD}^{\mathcal{M}^{\theta}}]\\
=\operatorname{Tr}_{B}[(\partial_{\theta}\Gamma_{RB}^{\mathcal{N}^{\theta}
})(\Gamma_{RB}^{\mathcal{N}^{\theta}})^{-1}(\partial_{\theta}\Gamma
_{RB}^{\mathcal{N}^{\theta}})]\otimes I_{S},
\end{multline}
\begin{align}
\operatorname{Tr}_{BD}[2(\partial_{\theta}\Gamma_{RB}^{\mathcal{N}^{\theta}
})\otimes\partial_{\theta}(\Gamma_{SD}^{\mathcal{M}^{\theta}})]  &
=2\operatorname{Tr}_{B}[(\partial_{\theta}\Gamma_{RB}^{\mathcal{N}^{\theta}
})]\otimes\operatorname{Tr}_{D}[\partial_{\theta}(\Gamma_{SD}^{\mathcal{M}
^{\theta}})]\\
&  =2(\partial_{\theta}\operatorname{Tr}_{B}[\Gamma_{RB}^{\mathcal{N}^{\theta
}}])\otimes\partial_{\theta}(\operatorname{Tr}_{D}[\Gamma_{SD}^{\mathcal{M}
^{\theta}}])\\
&  =2(\partial_{\theta}I_{R})\otimes(\partial_{\theta}I_{S})\\
&  =0,
\end{align}
\begin{multline}
\operatorname{Tr}_{BD}[\Gamma_{RB}^{\mathcal{N}^{\theta}}\otimes
\partial_{\theta}(\Gamma_{SD}^{\mathcal{M}^{\theta}})(\Gamma_{SD}
^{\mathcal{M}^{\theta}})^{-1}\partial_{\theta}(\Gamma_{SD}^{\mathcal{M}
^{\theta}})]\\
=I_{R}\otimes\operatorname{Tr}_{D}[\partial_{\theta}(\Gamma_{SD}
^{\mathcal{M}^{\theta}})(\Gamma_{SD}^{\mathcal{M}^{\theta}})^{-1}
\partial_{\theta}(\Gamma_{SD}^{\mathcal{M}^{\theta}})].
\end{multline}
So we conclude that
\begin{multline}
\operatorname{Tr}_{BD}[(\partial_{\theta}(\Gamma_{RB}^{\mathcal{N}^{\theta}
}\otimes\Gamma_{SD}^{\mathcal{M}^{\theta}}))(\Gamma_{RB}^{\mathcal{N}^{\theta
}}\otimes\Gamma_{SD}^{\mathcal{M}^{\theta}})^{-1}(\partial_{\theta}
(\Gamma_{RB}^{\mathcal{N}^{\theta}}\otimes\Gamma_{SD}^{\mathcal{M}^{\theta}
}))]\\
=\operatorname{Tr}_{B}[(\partial_{\theta}\Gamma_{RB}^{\mathcal{N}^{\theta}
})(\Gamma_{RB}^{\mathcal{N}^{\theta}})^{-1}(\partial_{\theta}\Gamma
_{RB}^{\mathcal{N}^{\theta}})]\otimes I_{S}\\
+I_{R}\otimes\operatorname{Tr}_{D}[\partial_{\theta}(\Gamma_{SD}
^{\mathcal{M}^{\theta}})(\Gamma_{SD}^{\mathcal{M}^{\theta}})^{-1}
\partial_{\theta}(\Gamma_{SD}^{\mathcal{M}^{\theta}})]
\end{multline}
Consider now from Lemma~\ref{lem:additivity-op-norm} that
\begin{equation}
\left\Vert X\otimes I+I\otimes Y\right\Vert _{\infty}=\left\Vert X\right\Vert
_{\infty}+\left\Vert Y\right\Vert _{\infty},
\label{eq:op-norm-identity-additive}
\end{equation}
for positive semi-definite operators $X$ and $Y$. Now applying
\eqref{eq:op-norm-identity-additive}, we find that
\begin{align}
&  \widehat{I}_{F}(\theta;\{\mathcal{N}_{A\rightarrow B}^{\theta}
\otimes\mathcal{M}_{C\rightarrow D}^{\theta}\}_{\theta})\\
&  =\left\Vert \operatorname{Tr}_{BD}[(\partial_{\theta}(\Gamma_{RB}
^{\mathcal{N}^{\theta}}\otimes\Gamma_{SD}^{\mathcal{M}^{\theta}}))(\Gamma
_{RB}^{\mathcal{N}^{\theta}}\otimes\Gamma_{SD}^{\mathcal{M}^{\theta}}
)^{-1}(\partial_{\theta}(\Gamma_{RB}^{\mathcal{N}^{\theta}}\otimes\Gamma
_{SD}^{\mathcal{M}^{\theta}}))]\right\Vert _{\infty}\\
&  =\left\Vert \operatorname{Tr}_{B}[(\partial_{\theta}\Gamma_{RB}
^{\mathcal{N}^{\theta}})(\Gamma_{RB}^{\mathcal{N}^{\theta}})^{-1}
(\partial_{\theta}\Gamma_{RB}^{\mathcal{N}^{\theta}})]\otimes I_{S}
+I_{R}\otimes\operatorname{Tr}_{D}[\partial_{\theta}(\Gamma_{SD}
^{\mathcal{M}^{\theta}})(\Gamma_{SD}^{\mathcal{M}^{\theta}})^{-1}
\partial_{\theta}(\Gamma_{SD}^{\mathcal{M}^{\theta}})]\right\Vert _{\infty}\\
&  =\left\Vert \operatorname{Tr}_{B}[(\partial_{\theta}\Gamma_{RB}
^{\mathcal{N}^{\theta}})(\Gamma_{RB}^{\mathcal{N}^{\theta}})^{-1}
(\partial_{\theta}\Gamma_{RB}^{\mathcal{N}^{\theta}})]\right\Vert _{\infty
}+\left\Vert \operatorname{Tr}_{D}[\partial_{\theta}(\Gamma_{SD}
^{\mathcal{M}^{\theta}})(\Gamma_{SD}^{\mathcal{M}^{\theta}})^{-1}
\partial_{\theta}(\Gamma_{SD}^{\mathcal{M}^{\theta}})]\right\Vert _{\infty}\\
&  =\widehat{I}_{F}(\theta;\{\mathcal{N}_{A\rightarrow B}^{\theta}\}_{\theta
})+\widehat{I}_{F}(\theta;\{\mathcal{M}_{C\rightarrow D}^{\theta}\}_{\theta}).
\end{align}
This concludes the proof.
\end{proof}

\section{Geometric R\'enyi relative entropy and its properties}

\label{app:geo-ren-props}

Before going into detail for the geometric R\'{e}nyi relative entropy, we
first briefly recall some quantum R\'{e}nyi relative entropies.

The Petz--R\'{e}nyi relative entropy \cite{P85,P86}\ is defined as follows for
a state $\rho$, a positive semi-definite operator $\sigma$, and $\alpha
\in(0,1)\cup(1,\infty)$:
\begin{equation}
D_{\alpha}(\rho\Vert\sigma):=\frac{1}{\alpha-1}\ln Q_{\alpha}(\rho\Vert
\sigma),
\end{equation}
where the Petz--R\'{e}nyi relative quasi-entropy is defined as
\begin{equation}
Q_{\alpha}(\rho\Vert\sigma):=\left\{
\begin{array}
[c]{cc}
\operatorname{Tr}[\rho^{\alpha}\sigma^{1-\alpha}] &
\begin{array}
[c]{c}
\text{if }\alpha\in(0,1)\text{ or}\\
\operatorname{supp}(\rho)\subseteq\operatorname{supp}(\sigma)\text{ and
}\alpha\in(1,\infty)
\end{array}
\\
+\infty & \text{otherwise}
\end{array}
\right.  .
\end{equation}
The full definition with the support condition was given in \cite{TCR09}. The
Petz--R\'{e}nyi relative entropy obeys the data-processing inequality for
$\alpha\in(0,1)\cup(1,2]$:
\begin{equation}
D_{\alpha}(\rho\Vert\sigma)\geq D_{\alpha}(\mathcal{N}(\rho)\Vert
\mathcal{N}(\sigma)),
\end{equation}
where $\mathcal{N}$ is a quantum channel \cite{P85,P86}. Note that the
following limit holds \cite{MH11}
\begin{equation}
D_{\alpha}(\rho\Vert\sigma)=\lim_{\varepsilon\rightarrow0^{+}}D_{\alpha}
(\rho\Vert\sigma_{\varepsilon}),
\end{equation}
where $\sigma_{\varepsilon}:=\sigma+\varepsilon I$.

The sandwiched R\'{e}nyi relative entropy \cite{MDSFT13,WWY14}\ is defined as
follows for a state $\rho$, a positive semi-definite operator $\sigma$, and
$\alpha\in(0,1)\cup(1,\infty)$:
\begin{equation}
\widetilde{D}_{\alpha}(\rho\Vert\sigma):=\frac{1}{\alpha-1}\ln\widetilde
{Q}_{\alpha}(\rho\Vert\sigma),
\end{equation}
where the sandwiched R\'{e}nyi relative quasi-entropy is defined as
\begin{equation}
\widetilde{Q}_{\alpha}(\rho\Vert\sigma):=\left\{
\begin{array}
[c]{cc}
\operatorname{Tr}\!\left[  \left(  \sigma^{\frac{1-\alpha}{2\alpha}}\rho
\sigma^{\frac{1-\alpha}{2\alpha}}\right)  ^{\alpha}\right]  &
\begin{array}
[c]{c}
\text{if }\alpha\in(0,1)\text{ or}\\
\operatorname{supp}(\rho)\subseteq\operatorname{supp}(\sigma)\text{ and
}\alpha\in(1,\infty)
\end{array}
\\
+\infty & \text{otherwise}
\end{array}
\right.  .
\end{equation}
Note that the following limit holds \cite{MDSFT13}
\begin{equation}
\widetilde{D}_{\alpha}(\rho\Vert\sigma)=\lim_{\varepsilon\rightarrow0^{+}
}\widetilde{D}_{\alpha}(\rho\Vert\sigma_{\varepsilon}).
\label{eq:limit-sandwiched-to-def}
\end{equation}

Let us also recall the quantum relative entropy \cite{U62}:
\begin{equation}
D(\rho\Vert\sigma):=\left\{
\begin{array}
[c]{cc}
\operatorname{Tr}[\rho(\ln\rho-\ln\sigma)] & \text{if }\operatorname{supp}
(\rho)\subseteq\operatorname{supp}(\sigma)\\
+\infty & \text{otherwise}
\end{array}
\right.  ,
\end{equation}
and note that the following limit holds (see, e.g., \cite{Wbook17})
\begin{equation}
D(\rho\Vert\sigma)=\lim_{\varepsilon\rightarrow0^{+}}D(\rho\Vert
\sigma_{\varepsilon}). \label{eq:rel-ent-limit-eps-0}
\end{equation}
It is known that the Petz-- \cite{P85,P86} and sandwiched \cite{MDSFT13,WWY14}
R\'{e}nyi relative entropies converge to the quantum relative entropy in the
limit $\alpha\rightarrow1$:
\begin{equation}
\lim_{\alpha\rightarrow1}\widetilde{D}_{\alpha}(\rho\Vert\sigma)=\lim
_{\alpha\rightarrow1}D_{\alpha}(\rho\Vert\sigma)=D(\rho\Vert\sigma).
\label{eq:alpha-1-limit-to-rel-ent}
\end{equation}

The max-relative entropy is defined as \cite{Datta2009b}
\begin{equation}
D_{\max}(\rho\Vert\sigma):=\inf\left\{  \lambda\geq0:\rho\leq e^{\lambda}
\sigma\right\}  ,
\end{equation}
and the following limit is known \cite{MDSFT13}
\begin{equation}
\lim_{\alpha\rightarrow\infty}\widetilde{D}_{\alpha}(\rho\Vert\sigma)=D_{\max
}(\rho\Vert\sigma).
\end{equation}

We now recall the definition of the geometric R\'{e}nyi relative entropy:

\begin{definition}
[Geometric R\'{e}nyi relative entropy]\label{def:geometric-renyi-rel-ent}Let
$\rho$ be a state, $\sigma$ a positive semi-definite operator, and $\alpha
\in(0,1)\cup(1,\infty)$. The geometric R\'{e}nyi relative quasi-entropy is
defined as
\begin{equation}
\widehat{Q}_{\alpha}(\rho\Vert\sigma):=\lim_{\varepsilon\rightarrow0^{+}
}\operatorname{Tr}\!\left[  \sigma_{\varepsilon}\!\left(  \sigma_{\varepsilon
}^{-\frac{1}{2}}\rho\sigma_{\varepsilon}^{-\frac{1}{2}}\right)  ^{\alpha
}\right]  , \label{eq:def-geometric-renyi-rel-quasi-ent}
\end{equation}
where $\sigma_{\varepsilon}:=\sigma+\varepsilon I$, and the geometric
R\'{e}nyi relative entropy is then defined as
\begin{equation}
\widehat{D}_{\alpha}(\rho\Vert\sigma):=\frac{1}{\alpha-1} \ln\widehat
{Q}_{\alpha}(\rho\Vert\sigma). \label{eq:def-geometric-renyi-rel-ent}
\end{equation}

\end{definition}

In Definition~\ref{def:geometric-renyi-rel-ent}, we have defined the geometric
R\'{e}nyi relative entropy as a limit, in contrast to how the Petz--R\'{e}nyi
relative entropy and the sandwiched R\'{e}nyi relative entropy are usually
defined (see, e.g., \cite{Berta2018c}). The geometric R\'{e}nyi relative
entropy is a bit more complicated than these other R\'{e}nyi relative
entropies for $\alpha\in(0,1)$, and so defining it as such gives us a more
compact expression to work with.
Proposition~\ref{prop:explicit-form-geometric-renyi}\ below gives explicit
formulas to work with in all cases for which the geometric R\'{e}nyi relative
entropy is defined.

\begin{proposition}
\label{prop:explicit-form-geometric-renyi}For any state $\rho$, positive
semi-definite operator $\sigma$, and $\alpha\in(0,1)\cup(1,\infty)$, the
following equality holds
\begin{equation}
\widehat{Q}_{\alpha}(\rho\Vert\sigma)=\left\{
\begin{array}
[c]{cc}
\operatorname{Tr}\!\left[  \sigma\!\left(  \sigma^{-\frac{1}{2}}\rho
\sigma^{-\frac{1}{2}}\right)  ^{\alpha}\right]  &
\begin{array}
[c]{c}
\text{if }\alpha\in\left(  0,1\right)  \cup(1,\infty)\\
\text{and }\operatorname{supp}(\rho)\subseteq\operatorname{supp}(\sigma)
\end{array}
\\
& \\
\operatorname{Tr}\!\left[  \sigma\!\left(  \sigma^{-\frac{1}{2}}\tilde{\rho
}\sigma^{-\frac{1}{2}}\right)  ^{\alpha}\right]  &
\begin{array}
[c]{c}
\text{if }\alpha\in\left(  0,1\right) \\
\text{and }\operatorname{supp}(\rho)\not \subseteq \operatorname{supp}(\sigma)
\end{array}
\\
& \\
+\infty &
\begin{array}
[c]{c}
\text{if }\alpha\in(1,\infty)\text{ and}\\
\operatorname{supp}(\rho)\not \subseteq \operatorname{supp}(\sigma)\text{.}
\end{array}
\end{array}
\right.  , \label{eq:geometric-rel-quasi-explicit-1}
\end{equation}
where
\begin{align}
\tilde{\rho}  &  :=\rho_{0,0}-\rho_{0,1}\rho_{1,1}^{-1}\rho_{0,1}^{\dag}
,\quad\rho=
\begin{bmatrix}
\rho_{0,0} & \rho_{0,1}\\
\rho_{0,1}^{\dag} & \rho_{1,1}
\end{bmatrix}
,\\
\rho_{0,0}  &  :=\Pi_{\sigma}\rho\Pi_{\sigma},\quad\rho_{0,1}:=\Pi_{\sigma
}\rho\Pi_{\sigma}^{\perp},\quad\rho_{1,1}:=\Pi_{\sigma}^{\perp}\rho\Pi
_{\sigma}^{\perp},
\end{align}
$\Pi_{\sigma}$ is the projection onto the support of $\sigma$, $\Pi_{\sigma
}^{\perp}$ is the projection onto the kernel of $\sigma$, and the inverses
$\sigma^{-\frac{1}{2}}$ and $\rho_{1,1}^{-1}$ are generalized inverses (taken
on the support of $\sigma$ and $\rho_{1,1}$, respectively). We also have the
alternative expressions below for certain cases:
\begin{equation}
\widehat{Q}_{\alpha}(\rho\Vert\sigma)=\left\{
\begin{array}
[c]{cc}
\operatorname{Tr}\!\left[  \rho\!\left(  \rho^{-\frac{1}{2}}\sigma\rho
^{-\frac{1}{2}}\right)  ^{1-\alpha}\right]  &
\begin{array}
[c]{c}
\text{if }\alpha\in\left(  0,1\right) \\
\text{and }\operatorname{supp}(\sigma)\subseteq\operatorname{supp}(\rho)
\end{array}
\\
& \\
\operatorname{Tr}\!\left[  \rho\!\left(  \rho^{\frac{1}{2}}\sigma^{-1}
\rho^{\frac{1}{2}}\right)  ^{\alpha-1}\right]  &
\begin{array}
[c]{c}
\text{if }\alpha\in(1,\infty)\\
\text{and }\operatorname{supp}(\rho)\subseteq\operatorname{supp}(\sigma)
\end{array}
\end{array}
\right.  , \label{eq:geometric-rel-quasi-explicit-2}
\end{equation}
where the inverses $\rho^{-\frac{1}{2}}$ and $\sigma^{-1}$ are generalized inverses.
\end{proposition}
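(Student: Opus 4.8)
The plan is to verify each of the claimed explicit formulas by reducing to a computation in a basis adapted to the supports of $\rho$ and $\sigma$, and then controlling the $\varepsilon\to 0^+$ limit term by term. First I would treat the easiest case: $\operatorname{supp}(\rho)\subseteq\operatorname{supp}(\sigma)$. Here, writing $\Pi_\sigma$ for the support projection of $\sigma$ and working on $\operatorname{supp}(\sigma)$, one has $\sigma_\varepsilon^{-1/2}\rho\,\sigma_\varepsilon^{-1/2} \to \sigma^{-1/2}\rho\,\sigma^{-1/2}$ (generalized inverse) as $\varepsilon\to 0^+$, since $\rho$ has no component in the kernel of $\sigma$; the function $x\mapsto x^\alpha$ is continuous on $[0,\infty)$ for $\alpha>0$, and $\sigma_\varepsilon\to\sigma$, so $\operatorname{Tr}[\sigma_\varepsilon(\sigma_\varepsilon^{-1/2}\rho\,\sigma_\varepsilon^{-1/2})^\alpha]\to\operatorname{Tr}[\sigma(\sigma^{-1/2}\rho\,\sigma^{-1/2})^\alpha]$. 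This gives the first line of \eqref{eq:geometric-rel-quasi-explicit-1}, valid for all $\alpha\in(0,1)\cup(1,\infty)$. The alternative forms in \eqref{eq:geometric-rel-quasi-explicit-2} then follow from the transformer/commutation identity in Lemma~\ref{lem:sing-val-lemma-pseudo-commute} (with $L=\sigma^{1/2}\rho^{-1/2}$ or similar), rewriting $\operatorname{Tr}[\sigma(\sigma^{-1/2}\rho\sigma^{-1/2})^\alpha]$ by pushing $\rho^{1/2}$ or $\rho^{-1/2}$ through the power; one must check the requisite support condition ($\operatorname{supp}(\sigma)\subseteq\operatorname{supp}(\rho)$ for $\alpha\in(0,1)$, $\operatorname{supp}(\rho)\subseteq\operatorname{supp}(\sigma)$ for $\alpha>1$) is exactly what makes the relevant generalized inverses behave like true inverses on the subspace in play.

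Next I would handle $\operatorname{supp}(\rho)\not\subseteq\operatorname{supp}(\sigma)$ with $\alpha\in(1,\infty)$, proving $\widehat{Q}_\alpha(\rho\|\sigma)=+\infty$. Decompose with respect to $\operatorname{supp}(\sigma)\oplus\ker(\sigma)$ and use the block form of $\rho$; since $\rho\not\subseteq\operatorname{supp}(\sigma)$, the block $\rho_{1,1}=\Pi_\sigma^\perp\rho\Pi_\sigma^\perp\neq 0$. On the kernel part, $\sigma_\varepsilon$ restricts to $\varepsilon I$, so $\sigma_\varepsilon^{-1/2}\rho\,\sigma_\varepsilon^{-1/2}$ has a block that scales like $\varepsilon^{-1}\rho_{1,1}$, which blows up; taking the $\alpha$-th power ($\alpha>1$) and then $\operatorname{Tr}[\sigma_\varepsilon(\cdot)^\alpha]$, the leading contribution behaves like $\varepsilon\cdot\varepsilon^{-\alpha}\operatorname{Tr}[\rho_{1,1}^\alpha]\sim\varepsilon^{1-\alpha}\to\infty$. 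I would make this rigorous by a lower bound: restrict the trace to an eigenvector of $\rho_{1,1}$ with positive eigenvalue and estimate from below, using operator monotonicity/convexity of $x^\alpha$ as needed. This pins down the third line of \eqref{eq:geometric-rel-quasi-explicit-1}.

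The main obstacle, and the step I would spend the most care on, is the case $\operatorname{supp}(\rho)\not\subseteq\operatorname{supp}(\sigma)$ with $\alpha\in(0,1)$, where the answer is the finite but subtle Schur-complement expression $\widehat{Q}_\alpha(\rho\|\sigma)=\operatorname{Tr}[\sigma(\sigma^{-1/2}\tilde\rho\,\sigma^{-1/2})^\alpha]$ with $\tilde\rho=\rho_{0,0}-\rho_{0,1}\rho_{1,1}^{-1}\rho_{0,1}^\dagger$. Here I would use the block decomposition and the key algebraic fact that the operator $\sigma_\varepsilon^{-1/2}\rho\,\sigma_\varepsilon^{-1/2}$ can be conjugated, via a small ($O(\varepsilon)$) rotation that decouples the blocks, into block-diagonal form: on the support block the $(1,1)$-entry converges to $\sigma^{-1/2}\tilde\rho\,\sigma^{-1/2}$ (the Schur complement emerging precisely from this decoupling), while the kernel block scales like $\varepsilon^{-1}\rho_{1,1}$ but is multiplied by a factor $\varepsilon$ from $\sigma_\varepsilon$ and by $\varepsilon^{-\alpha}$ issues — critically, since $\alpha<1$, the term $\operatorname{Tr}[\varepsilon I\cdot(\varepsilon^{-1}\rho_{1,1})^\alpha]\sim\varepsilon^{1-\alpha}\to 0$, so the kernel block contributes nothing in the limit. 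This is exactly the structure captured by Lemma~\ref{lem:sisi-zhou-lem}: setting $A=\Pi_\sigma\sigma_\varepsilon\Pi_\sigma$-type data, $B$ the off-diagonal block, and $C$ the kernel block, the lemma gives $M(\varepsilon)=e^{-i\varepsilon G}D(\varepsilon)e^{i\varepsilon G}+o(\varepsilon^2)$ with $D(\varepsilon)$ block-diagonal. Applying $x\mapsto x^\alpha$ and $\operatorname{Tr}[\sigma_\varepsilon\,\cdot\,]$, using continuity of the trace-functional and unitary invariance to discard the conjugation and the $o(\varepsilon^2)$ remainder, yields the claimed formula. The delicate points are: (i) verifying the hypotheses of Lemma~\ref{lem:sisi-zhou-lem} in the right scaling regime (one needs to apply it to $\sigma_\varepsilon^{-1/2}\rho\,\sigma_\varepsilon^{-1/2}$ after appropriate rescaling so the blocks have the assumed $\varepsilon$-dependence), (ii) justifying that $x^\alpha$ applied to an $o(\varepsilon^2)$ perturbation of a bounded-plus-$O(\varepsilon^{-1})$ operator still yields a controllable error after multiplying by $\sigma_\varepsilon=O(\varepsilon)$ on the divergent block — here $\alpha<1$ is essential and one may need a separate Hölder-type estimate for the $x^\alpha$ continuity modulus — and (iii) checking that $\tilde\rho\ge 0$ so that $(\sigma^{-1/2}\tilde\rho\,\sigma^{-1/2})^\alpha$ is well-defined, which follows from positivity of $\rho$ and the standard fact that Schur complements of positive operators are positive.
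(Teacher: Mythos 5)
Your plan is sound and, for the hardest case ($\alpha\in(0,1)$ with $\operatorname{supp}(\rho)\not\subseteq\operatorname{supp}(\sigma)$), it is essentially the paper's argument: block decomposition along $\operatorname{supp}(\sigma)\oplus\ker(\sigma)$, approximate block-diagonalization via the small rotation of Lemma~\ref{lem:sisi-zhou-lem}, operator monotonicity of $x^{\alpha}$ to sandwich the perturbed operator, the kernel block dying like $\varepsilon^{1-\alpha}$, and the Schur complement emerging in the support block. The one place you genuinely diverge is the case $\alpha\in(1,\infty)$, $\operatorname{supp}(\rho)\not\subseteq\operatorname{supp}(\sigma)$: you propose a direct blow-up lower bound of order $\varepsilon^{1-\alpha}$, whereas the paper simply invokes $\widehat{D}_{\alpha}\geq\widetilde{D}_{\alpha}$ (Proposition~\ref{prop:geometric-to-sandwiched}) together with the known divergence of the sandwiched quantity. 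Your route is more self-contained; the paper's is shorter because the comparison inequality is already in hand. If you pursue the direct route, use the scalar Jensen bound $\langle v|X^{\alpha}|v\rangle\geq\langle v|X|v\rangle^{\alpha}$ for a unit vector $|v\rangle\in\ker(\sigma)$ (valid for all $\alpha\geq1$) rather than operator convexity of $x^{\alpha}$, which fails for $\alpha>2$.

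Two technical details you should add. First, Lemma~\ref{lem:sisi-zhou-lem} requires its block $A$ (here $\rho_{1,1}$) to be invertible, and the Schur complement $\rho_{0,1}\rho_{1,1}^{-1}\rho_{0,1}^{\dag}$ must be made sense of when $\rho_{1,1}$ is singular; the paper handles both by regularizing $\rho\mapsto\rho_{\delta}=(1-\delta)\rho+\delta\pi$, invoking the limit-exchange Lemma~\ref{lem:limit-exchange-geom-renyi-a-0-to1} to reorder the $\delta$ and $\varepsilon$ limits, and then noting that $\operatorname{im}(\rho_{0,1}^{\dag})\subseteq\operatorname{supp}(\rho_{1,1})$ so the $\delta\to0^{+}$ limit recovers the generalized-inverse Schur complement. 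Second, for the alternative formula at $\alpha\in(0,1)$ under $\operatorname{supp}(\sigma)\subseteq\operatorname{supp}(\rho)$ (where $\operatorname{supp}(\rho)\subseteq\operatorname{supp}(\sigma)$ need not hold), the commutation identity of Lemma~\ref{lem:sing-val-lemma-pseudo-commute} alone does not suffice; you again need the limit exchange to run the $\varepsilon$-regularization on the $\rho$ side, i.e., to exploit the symmetry $\rho\leftrightarrow\sigma$, $\alpha\leftrightarrow1-\alpha$. Neither point undermines your strategy, but both must appear in a complete write-up.
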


One should observe that when $\operatorname{supp}(\rho)\subseteq
\operatorname{supp}(\sigma)$ and $\alpha\in(0,1)$, the expression
$\operatorname{Tr}\!\left[  \sigma\!\left(  \sigma^{-\frac{1}{2}}\rho
\sigma^{-\frac{1}{2}}\right)  ^{\alpha}\right]  $ is actually a special case
of $\operatorname{Tr}\!\left[  \sigma\!\left(  \sigma^{-\frac{1}{2}}
\tilde{\rho}\sigma^{-\frac{1}{2}}\right)  ^{\alpha}\right]  $, because the
operators $\rho_{0,1}$ and $\rho_{1,1}$ are both equal to zero in this case,
so that $\Pi_{\sigma}\rho=\rho\Pi_{\sigma}=\rho$ and $\tilde{\rho}=\rho_{0,0}
$. The expression $\operatorname{Tr}\!\left[  \sigma\!\left(  \sigma
^{-\frac{1}{2}}\tilde{\rho}\sigma^{-\frac{1}{2}}\right)  ^{\alpha}\right]  $
for $\alpha=1/2$ and $\operatorname{supp}(\rho)\not \subseteq
\operatorname{supp}(\sigma)$ was identified in \cite[Section~3]{Mat14} and
later generalized to all $\alpha\in(0,1)$ in \cite[Section~2]{Mat14condconv}.

The main intuition behind some of the formulas in
Proposition~\ref{prop:explicit-form-geometric-renyi} is as follows. If $\rho$
and $\sigma$ are positive definite, then the following equalities hold
\begin{align}
\operatorname{Tr}\!\left[  \sigma\!\left(  \sigma^{-\frac{1}{2}}\rho
\sigma^{-\frac{1}{2}}\right)  ^{\alpha}\right]   &  =\operatorname{Tr}
\!\left[  \rho\!\left(  \rho^{-\frac{1}{2}}\sigma\rho^{-\frac{1}{2}}\right)
^{1-\alpha}\right] \\
&  =\operatorname{Tr}\!\left[  \rho\!\left(  \rho^{\frac{1}{2}}\sigma^{-1}
\rho^{\frac{1}{2}}\right)  ^{\alpha-1}\right]  ,
\end{align}
for all $\alpha\in(0,1)\cup(1,\infty)$, as shown in
Proposition~\ref{prop:alt-rep-geometric-renyi-quasi} below. If the support
condition $\operatorname{supp}(\rho)\subseteq\operatorname{supp}(\sigma)$
holds, then we can think of $\operatorname{supp}(\sigma)$ as being the whole
Hilbert space and $\sigma$ being invertible on the whole space. So then
generalized inverses like $\sigma^{-\frac{1}{2}}$ or $\sigma^{-1}$ are true
inverses on $\operatorname{supp}(\sigma)$, and the expression
$\operatorname{Tr}[\sigma(\sigma^{-1/2}\rho\sigma^{-1/2})^{\alpha}]$ is
sensible for $\alpha\in(0,1)\cup(1,\infty)$, with the only inverse in the
expression being $\sigma^{-\frac{1}{2}}$. Similarly, the expression
$\operatorname{Tr}[\rho(\rho^{1/2}\sigma^{-1}\rho^{1/2})^{\alpha-1}]$ is
sensible for $\alpha\in(1,\infty)$, with the only inverse in the expression
being $\sigma^{-1}$. On the other hand, if the support condition
$\operatorname{supp}(\sigma)\subseteq\operatorname{supp}(\rho)$ holds, then we
can think of $\operatorname{supp}(\rho)$ as being the whole Hilbert space and
$\rho$ being invertible on the whole space. So then the generalized inverse
$\rho^{-\frac{1}{2}}$ is a true inverse on $\operatorname{supp}(\rho)$, and
the expression $\operatorname{Tr}[\rho(\rho^{-1/2}\sigma\rho^{-1/2}
)^{1-\alpha}]$ is sensible for $\alpha\in(0,1)$, with the only inverse in the
expression being $\rho^{-\frac{1}{2}}$. After developing a few properties of
the geometric R\'{e}nyi relative entropy, we prove
Proposition~\ref{prop:explicit-form-geometric-renyi}.

Due to the fact that Definition~\ref{def:geometric-renyi-rel-ent} does not
involve an inverse of the state $\rho$, the following equality holds for all
$\alpha\in\left(  0,1\right)  \cup\left(  1,\infty\right)  $:
\begin{equation}
\widehat{Q}_{\alpha}(\rho\Vert\sigma)=\lim_{\varepsilon\rightarrow0^{+}}
\lim_{\delta\rightarrow0^{+}}\operatorname{Tr}\!\left[  \sigma_{\varepsilon
}\!\left(  \sigma_{\varepsilon}^{-\frac{1}{2}}\rho_{\delta}\sigma
_{\varepsilon}^{-\frac{1}{2}}\right)  ^{\alpha}\right]  ,
\label{eq:limit-eq-geometric-renyi}
\end{equation}
where
\begin{equation}
\rho_{\delta}:=\left(  1-\delta\right)  \rho+\delta\pi,
\end{equation}
and $\pi$ is the maximally mixed state. The equality in
\eqref{eq:limit-eq-geometric-renyi}\ is useful for establishing the
data-processing inequality for the geometric R\'{e}nyi relative entropy
(Theorem~\ref{thm:data-proc-geometric-renyi} below), as well as its
monotonicity with respect to $\alpha$
(Proposition~\ref{prop:geometric-renyi-props} below). Note that we can
exchange the order of the limits in \eqref{eq:limit-eq-geometric-renyi}\ for
$\alpha\in(0,1)$, which we show later on in
Lemma~\ref{lem:limit-exchange-geom-renyi-a-0-to1}.

The geometric R\'{e}nyi relative entropy is named as such because it can be
written in terms of the \textit{weighted operator geometric mean}. The
weighted operator geometric mean of two positive definite operators $X$ and
$Y$ is defined as follows:
\begin{equation}
G_{\beta}(X,Y):=X^{\frac{1}{2}}\left(  X^{-\frac{1}{2}}YX^{-\frac{1}{2}
}\right)  ^{\beta}X^{\frac{1}{2}}, \label{eq:weighted-op-geo-mean}
\end{equation}
where $\beta\in\mathbb{R}$ is the weight parameter. We recover the standard
operator geometric mean by setting $\beta=1/2$. By using the definition in
\eqref{eq:weighted-op-geo-mean}, we see that the geometric R\'{e}nyi relative
quasi-entropy can be written in terms of the weighted operator geometric mean
as
\begin{align}
\widehat{Q}_{\alpha}(\rho\Vert\sigma)  &  =\operatorname{Tr}\!\left[
\sigma^{\frac{1}{2}}\left(  \sigma^{-\frac{1}{2}}\rho\sigma^{-\frac{1}{2}
}\right)  ^{\alpha}\sigma^{\frac{1}{2}}\right] \\
&  =\operatorname{Tr}[G_{\alpha}(\sigma,\rho)],
\end{align}
whenever $\operatorname{supp}(\rho)\subseteq\operatorname{supp}(\sigma)$.

Whenever\ $\rho$ and $\sigma$ are positive definite, an alternative way of
writing the geometric R\'{e}nyi relative quasi-entropy is given by the
following proposition:

\begin{proposition}
\label{prop:alt-rep-geometric-renyi-quasi}Let $\rho$ be a positive definite
state and $\sigma$ a positive definite operator. For all $\alpha\in\left(
0,1\right)  \cup\left(  1,\infty\right)  $, the following equalities hold
\begin{align}
\widehat{Q}_{\alpha}(\rho\Vert\sigma)  &  =\operatorname{Tr}\!\left[
\rho\left(  \rho^{-\frac{1}{2}}\sigma\rho^{-\frac{1}{2}}\right)  ^{1-\alpha
}\right] \label{eq:alt-op-geo-mean-4-geo-ent}\\
&  =\operatorname{Tr}[G_{1-\alpha}(\rho,\sigma)]\\
&  =\operatorname{Tr}\!\left[  \rho\left(  \rho^{\frac{1}{2}}\sigma^{-1}
\rho^{\frac{1}{2}}\right)  ^{\alpha-1}\right]  .
\end{align}

\end{proposition}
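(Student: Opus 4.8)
The plan is to prove the equalities at the level of operators, where almost all of the work reduces to a symmetry property of the weighted operator geometric mean, and the positive-definiteness of $\rho$ and $\sigma$ removes all support/domain concerns. Since $\operatorname{supp}(\rho)=\operatorname{supp}(\sigma)=\mathcal{H}$, Proposition~\ref{prop:explicit-form-geometric-renyi} and the definition of $G_{\alpha}$ give $\widehat{Q}_{\alpha}(\rho\Vert\sigma)=\operatorname{Tr}[\sigma^{1/2}(\sigma^{-1/2}\rho\sigma^{-1/2})^{\alpha}\sigma^{1/2}]=\operatorname{Tr}[G_{\alpha}(\sigma,\rho)]$, where all inverses are genuine inverses.

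First I would establish the operator identity $G_{\beta}(\sigma,\rho)=G_{1-\beta}(\rho,\sigma)$ for every $\beta\in\mathbb{R}$. Set $L:=\sigma^{-1/2}\rho^{1/2}$, so that $LL^{\dagger}=\sigma^{-1/2}\rho\sigma^{-1/2}$ and $L^{\dagger}L=\rho^{1/2}\sigma^{-1}\rho^{1/2}$. Applying Lemma~\ref{lem:sing-val-lemma-pseudo-commute} with $f(x)=x^{\beta}$ yields $L(L^{\dagger}L)^{\beta}=(LL^{\dagger})^{\beta}L$, and since $L$ is invertible with $L^{-1}=\rho^{-1/2}\sigma^{1/2}$ this gives $(\sigma^{-1/2}\rho\sigma^{-1/2})^{\beta}=\sigma^{-1/2}\rho^{1/2}(\rho^{1/2}\sigma^{-1}\rho^{1/2})^{\beta}\rho^{-1/2}\sigma^{1/2}$. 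Conjugating both sides by $\sigma^{1/2}$ gives $G_{\beta}(\sigma,\rho)=\rho^{1/2}(\rho^{1/2}\sigma^{-1}\rho^{1/2})^{\beta}\rho^{-1/2}\sigma$. The remaining step is to reabsorb the trailing factor: using $(\rho^{1/2}\sigma^{-1}\rho^{1/2})^{\beta}=(\rho^{-1/2}\sigma\rho^{-1/2})^{-\beta}$ together with $\rho^{-1/2}\sigma=(\rho^{-1/2}\sigma\rho^{-1/2})\rho^{1/2}$, one obtains $G_{\beta}(\sigma,\rho)=\rho^{1/2}(\rho^{-1/2}\sigma\rho^{-1/2})^{-\beta}(\rho^{-1/2}\sigma\rho^{-1/2})\rho^{1/2}=\rho^{1/2}(\rho^{-1/2}\sigma\rho^{-1/2})^{1-\beta}\rho^{1/2}=G_{1-\beta}(\rho,\sigma)$.

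Taking traces of $G_{\alpha}(\sigma,\rho)=G_{1-\alpha}(\rho,\sigma)$ then immediately yields $\widehat{Q}_{\alpha}(\rho\Vert\sigma)=\operatorname{Tr}[G_{1-\alpha}(\rho,\sigma)]=\operatorname{Tr}[\rho(\rho^{-1/2}\sigma\rho^{-1/2})^{1-\alpha}]$, which are the first two claimed expressions. For the third, I would simply note that $\rho^{1/2}\sigma^{-1}\rho^{1/2}=(\rho^{-1/2}\sigma\rho^{-1/2})^{-1}$, so that $(\rho^{1/2}\sigma^{-1}\rho^{1/2})^{\alpha-1}=(\rho^{-1/2}\sigma\rho^{-1/2})^{1-\alpha}$ and hence $\operatorname{Tr}[\rho(\rho^{1/2}\sigma^{-1}\rho^{1/2})^{\alpha-1}]=\operatorname{Tr}[\rho(\rho^{-1/2}\sigma\rho^{-1/2})^{1-\alpha}]$, which has already been identified with $\widehat{Q}_{\alpha}(\rho\Vert\sigma)$.

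There is no substantial obstacle here: the positive-definiteness hypothesis removes every support concern, and the only point requiring a moment's care is the final reabsorption step in the operator identity, where one must recognize $\rho^{-1/2}\sigma=(\rho^{-1/2}\sigma\rho^{-1/2})\rho^{1/2}$ in order to combine the $-\beta$ power with the stray factor into the single power $1-\beta$. An alternative, purely trace-based route would sidestep the operator identity by peeling off one factor via $X^{\beta}=X\cdot X^{\beta-1}$, rearranging with cyclicity of the trace, and then invoking Lemma~\ref{lem:sing-val-lemma-pseudo-commute} directly on each of the three expressions; this works as well but is more cumbersome, so I would present the operator-identity argument.
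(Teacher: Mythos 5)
Your proof is correct and follows essentially the same route as the paper: both arguments reduce the first two equalities to the symmetry $G_{\beta}(\sigma,\rho)=G_{1-\beta}(\rho,\sigma)$, proved via Lemma~\ref{lem:sing-val-lemma-pseudo-commute}, and both obtain the third equality from $(\rho^{1/2}\sigma^{-1}\rho^{1/2})^{\alpha-1}=(\rho^{-1/2}\sigma\rho^{-1/2})^{1-\alpha}$. The only cosmetic difference is that the paper proves the symmetry starting from $G_{1-\beta}(Y,X)$ with $L=X^{1/2}Y^{-1/2}$ and $f(x)=x^{-\beta}$, whereas you start from $G_{\beta}(\sigma,\rho)$ with $L=\sigma^{-1/2}\rho^{1/2}$ and $f(x)=x^{\beta}$ and reabsorb the stray factor at the end; these are mirror images of the same calculation.
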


\begin{proof}
The first two equalities follow from a fundamental property of the weighted
operator geometric mean given in Lemma~\ref{lem:geo-mean-symmetry} below. The
last equality follows because $(\rho^{-1/2}\sigma\rho^{-1/2})^{1-\alpha}
=(\rho^{1/2}\sigma^{-1}\rho^{1/2})^{\alpha-1}$ whenever $\rho$ and $\sigma$
are positive definite.
\end{proof}

\begin{lemma}
\label{lem:geo-mean-symmetry}Let $X$ and $Y$ be positive definite operators
and $\beta\in\mathbb{R}$. Then the following equality holds
\begin{equation}
G_{\beta}(X,Y)=G_{1-\beta}(Y,X), \label{eq:geometric-mean-identity}
\end{equation}
with $G_{\beta}(X,Y)$ defined in \eqref{eq:weighted-op-geo-mean}.
\end{lemma}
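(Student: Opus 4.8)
The plan is to reduce the identity to a single application of Lemma~\ref{lem:sing-val-lemma-pseudo-commute} (the pseudo-commutation relation $Lf(L^\dagger L)=f(LL^\dagger)L$), combined with elementary rearrangements that are available because $X$ and $Y$ are positive definite and hence invertible, with $X^{1/2},Y^{1/2}$ also invertible.

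First I would rewrite the inner operator of $G_\beta(X,Y)$ as $X^{-1/2}YX^{-1/2}=LL^\dagger$ where $L:=X^{-1/2}Y^{1/2}$, noting that $L$ is invertible and $L^\dagger L=Y^{1/2}X^{-1}Y^{1/2}$. Applying Lemma~\ref{lem:sing-val-lemma-pseudo-commute} with $f(t)=t^\beta$ — which is legitimate for every real $\beta$ since the eigenvalues of $L^\dagger L$ and $LL^\dagger$ are strictly positive and therefore lie in the domain of $t\mapsto t^\beta$ — gives $L(L^\dagger L)^\beta=(LL^\dagger)^\beta L$, hence $(LL^\dagger)^\beta=L(L^\dagger L)^\beta L^{-1}$. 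Substituting this into the definition of $G_\beta(X,Y)$ and cancelling $X^{1/2}X^{-1/2}=I$ on the left and $(X^{-1/2}Y^{1/2})^{-1}X^{1/2}X^{1/2}=Y^{-1/2}X$ on the right yields
\[
G_\beta(X,Y)=Y^{1/2}\,\bigl(Y^{1/2}X^{-1}Y^{1/2}\bigr)^{\beta}\,Y^{-1/2}X .
\]

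Next I would introduce the positive definite operator $W:=Y^{-1/2}XY^{-1/2}$ and record two elementary facts: $Y^{1/2}X^{-1}Y^{1/2}=W^{-1}$, so that $\bigl(Y^{1/2}X^{-1}Y^{1/2}\bigr)^{\beta}=W^{-\beta}$; and $Y^{-1/2}X=WY^{1/2}$. Inserting these into the displayed identity collapses it to
\[
G_\beta(X,Y)=Y^{1/2}W^{-\beta}WY^{1/2}=Y^{1/2}W^{1-\beta}Y^{1/2}=Y^{1/2}\bigl(Y^{-1/2}XY^{-1/2}\bigr)^{1-\beta}Y^{1/2}=G_{1-\beta}(Y,X),
\]
which is exactly \eqref{eq:geometric-mean-identity}.

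There is no genuine obstacle in this argument; the only point requiring a moment's attention is checking the hypothesis of Lemma~\ref{lem:sing-val-lemma-pseudo-commute} for arbitrary real $\beta$, which is ensured precisely by the positive definiteness of $X$ and $Y$. (Should one ever need a positive semi-definite version, one would have to argue by continuity, e.g. replacing $X,Y$ with $X+\varepsilon I,Y+\varepsilon I$ and letting $\varepsilon\to 0^+$ with $\beta\in[0,1]$; but the statement as posed concerns only positive definite operators, so this is not needed here.)
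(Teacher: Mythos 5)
Your proof is correct and follows essentially the same route as the paper's: both arguments hinge on a single application of Lemma~\ref{lem:sing-val-lemma-pseudo-commute} to a product of half-powers of $X$ and $Y$ (you take $L=X^{-1/2}Y^{1/2}$ with $f(t)=t^{\beta}$, the paper takes $L=X^{1/2}Y^{-1/2}$ with $f(x)=x^{-\beta}$), followed by elementary cancellations and the splitting of one power of the inner operator, which you perform at the end via $W^{-\beta}W=W^{1-\beta}$ and the paper performs at the start. The differences are purely cosmetic.
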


\begin{proof}
To see \eqref{eq:geometric-mean-identity}, consider that
\begin{align}
G_{1-\beta}(Y,X)  &  =Y^{\frac{1}{2}}\left(  Y^{-\frac{1}{2}}XY^{-\frac{1}{2}
}\right)  ^{1-\beta}Y^{\frac{1}{2}}\\
&  =Y^{\frac{1}{2}}\left(  Y^{-\frac{1}{2}}XY^{-\frac{1}{2}}\right)  \left(
Y^{-\frac{1}{2}}XY^{-\frac{1}{2}}\right)  ^{-\beta}Y^{\frac{1}{2}}\\
&  =X^{\frac{1}{2}}X^{\frac{1}{2}}Y^{-\frac{1}{2}}\left(  Y^{-\frac{1}{2}
}X^{\frac{1}{2}}X^{\frac{1}{2}}Y^{-\frac{1}{2}}\right)  ^{-\beta}Y^{\frac
{1}{2}}\\
&  =X^{\frac{1}{2}}\left(  X^{\frac{1}{2}}Y^{-\frac{1}{2}}Y^{-\frac{1}{2}
}X^{\frac{1}{2}}\right)  ^{-\beta}X^{\frac{1}{2}}Y^{-\frac{1}{2}}Y^{\frac
{1}{2}}\\
&  =X^{\frac{1}{2}}\left(  X^{-\frac{1}{2}}YX^{-\frac{1}{2}}\right)  ^{\beta
}X^{\frac{1}{2}}\\
&  =G_{\beta}(X,Y).
\end{align}
The fourth equality follows from Lemma~\ref{lem:sing-val-lemma-pseudo-commute}, by setting $L=X^{\frac{1}{2}}Y^{-\frac{1}{2}}$ and $f(x)=x^{-\beta}$ therein.
\end{proof}

\bigskip

We now show that the order of limits in \eqref{eq:limit-eq-geometric-renyi}
does not matter when $\alpha\in(0,1)$:

\begin{lemma}
\label{lem:limit-exchange-geom-renyi-a-0-to1}Let $\rho$ be a state and
$\sigma$ a positive semi-definite operator. For $\alpha\in(0,1)$, the
following equality holds
\begin{align}
\widehat{Q}_{\alpha}(\rho\Vert\sigma)  &  =\lim_{\varepsilon\rightarrow0^{+}
}\lim_{\delta\rightarrow0^{+}}\operatorname{Tr}\!\left[  \sigma_{\varepsilon
}\!\left(  \sigma_{\varepsilon}^{-\frac{1}{2}}\rho_{\delta}\sigma
_{\varepsilon}^{-\frac{1}{2}}\right)  ^{\alpha}\right] \\
&  =\inf_{\varepsilon,\delta>0}\operatorname{Tr}\!\left[  \sigma_{\varepsilon
}\!\left(  \sigma_{\varepsilon}^{-\frac{1}{2}}\rho_{\delta}\sigma
_{\varepsilon}^{-\frac{1}{2}}\right)  ^{\alpha}\right] \\
&  =\lim_{\delta\rightarrow0^{+}}\lim_{\varepsilon\rightarrow0^{+}
}\operatorname{Tr}\!\left[  \sigma_{\varepsilon}\!\left(  \sigma_{\varepsilon
}^{-\frac{1}{2}}\rho_{\delta}\sigma_{\varepsilon}^{-\frac{1}{2}}\right)
^{\alpha}\right]  ,
\end{align}
where $\rho_{\delta}:=\left(  1-\delta\right)  \rho+\delta\pi$, $\delta
\in\left(  0,1\right)  $, $\pi$ is the maximally mixed state, $\sigma
_{\varepsilon}:=\sigma+\varepsilon I$, and $\varepsilon>0$.
\end{lemma}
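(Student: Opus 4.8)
\textbf{Proof plan for Lemma~\ref{lem:limit-exchange-geom-renyi-a-0-to1}.}
The plan is to establish the chain of equalities by exhibiting the double-limit expression as an infimum over $(\varepsilon,\delta)$ of a quantity that is monotone nonincreasing in each variable separately. Once we know that
\begin{equation}
f(\varepsilon,\delta) := \operatorname{Tr}\!\left[  \sigma_{\varepsilon}\!\left(  \sigma_{\varepsilon}^{-\frac{1}{2}}\rho_{\delta}\sigma_{\varepsilon}^{-\frac{1}{2}}\right)  ^{\alpha}\right]
\end{equation}
is nonincreasing as $\varepsilon \downarrow 0$ for each fixed $\delta>0$, and likewise nonincreasing as $\delta \downarrow 0$ for each fixed $\varepsilon>0$, a standard fact about bivariate monotone functions gives that both iterated limits equal the joint infimum $\inf_{\varepsilon,\delta>0} f(\varepsilon,\delta)$; and the leftmost expression is $\widehat{Q}_\alpha(\rho\Vert\sigma)$ by \eqref{eq:limit-eq-geometric-renyi} (using $\alpha\in(0,1)$ so that no support issues obstruct the defining limit). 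So the whole lemma reduces to proving the two separate monotonicity statements in $\varepsilon$ and in $\delta$.

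First I would handle monotonicity in $\varepsilon$. The key is the integral representation / operator monotonicity machinery: for $\alpha\in(0,1)$, the map $Y \mapsto \operatorname{Tr}[G_{1-\alpha}(\rho_\delta, Y)]$ on positive definite $Y$ is operator monotone (nondecreasing), since $x\mapsto x^{1-\alpha}$ is operator monotone and the weighted geometric mean $G_{1-\alpha}(\rho_\delta,\cdot)$ inherits this. By Proposition~\ref{prop:alt-rep-geometric-renyi-quasi}, $f(\varepsilon,\delta) = \operatorname{Tr}[G_{1-\alpha}(\rho_\delta,\sigma_\varepsilon)]$ (valid because $\rho_\delta$ is positive definite), and since $\sigma_{\varepsilon'} \geq \sigma_\varepsilon$ whenever $\varepsilon' \geq \varepsilon$, we get $f(\varepsilon',\delta) \geq f(\varepsilon,\delta)$, i.e., $f$ is nonincreasing as $\varepsilon\downarrow 0$. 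This direction is essentially the same argument already used implicitly in the discussion around \eqref{eq:def-geometric-renyi-rel-quasi-ent} and in \cite{Fang2019a}.

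Next, monotonicity in $\delta$. Here I would use the expression $f(\varepsilon,\delta) = \operatorname{Tr}[G_\alpha(\sigma_\varepsilon,\rho_\delta)]$, and the fact that for $\alpha\in(0,1)$ the map $X \mapsto \operatorname{Tr}[G_\alpha(\sigma_\varepsilon, X)]$ is \emph{concave} in $X$ on positive definite operators (concavity of the weighted operator geometric mean for weight in $(0,1)$ is classical, cf.\ \cite{LL01}). Writing $\rho_\delta = (1-\delta)\rho + \delta\pi$ as a convex combination and using concavity plus the fact that $G_\alpha(\sigma_\varepsilon,\cdot)$ is also positively homogeneous of degree one in its second argument, one shows $\operatorname{Tr}[G_\alpha(\sigma_\varepsilon,\rho_\delta)] \geq (1-\delta)\operatorname{Tr}[G_\alpha(\sigma_\varepsilon,\rho)] + \delta\operatorname{Tr}[G_\alpha(\sigma_\varepsilon,\pi)]$; but more directly, the cleaner route is to note that $\rho_{\delta'} \geq \frac{1-\delta'}{1-\delta}\,\rho_\delta$ is false in general, so instead I would argue via the concave, monotone structure: since $\delta\mapsto\rho_\delta$ traces a line segment and $\operatorname{Tr}[G_\alpha(\sigma_\varepsilon,\cdot)]$ is concave and (being homogeneous degree one and monotone) its restriction to that segment is concave; combined with the endpoint behavior this forces the restriction to be nondecreasing in $\delta$ on $(0,1]$ once one checks the derivative sign at $\delta\to 0^+$, or alternatively one simply invokes operator concavity together with $\rho_\delta \leq \rho + \delta\pi$ and monotonicity to sandwich. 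The hard part will be pinning down this $\delta$-monotonicity cleanly without circularity — the safest path is to combine operator concavity of $X\mapsto G_\alpha(\sigma_\varepsilon,X)$ with its monotonicity and homogeneity to conclude that $\delta \mapsto \operatorname{Tr}[G_\alpha(\sigma_\varepsilon,\rho_\delta)]$ is nonincreasing as $\delta\downarrow 0$, which is exactly what is needed. With both monotonicities in hand, the three-way equality follows immediately from the elementary lemma on iterated limits of separately monotone bivariate functions.
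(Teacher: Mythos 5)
Your overall skeleton (reduce the limit exchange to separate monotonicity in $\varepsilon$ and in $\delta$, then identify both iterated limits with an infimum) is the same as the paper's, and your $\varepsilon$-monotonicity step is exactly the paper's argument: rewrite $f(\varepsilon,\delta)=\operatorname{Tr}[G_{1-\alpha}(\rho_\delta,\sigma_\varepsilon)]$ via Proposition~\ref{prop:alt-rep-geometric-renyi-quasi} and use operator monotonicity of $x^{1-\alpha}$. The gap is in the $\delta$-monotonicity step, and it is not merely a presentational issue: the claim that $\delta\mapsto f(\varepsilon,\delta)$ is nondecreasing on $(0,1)$ is \emph{false}. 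Take $d=2$, $\rho=\sigma=|0\rangle\!\langle 0|$, $\alpha=1/2$; then everything commutes and $f(\varepsilon,\delta)=\sqrt{(1+\varepsilon)(1-\delta/2)}+\sqrt{\varepsilon\delta/2}$. For small fixed $\varepsilon$ this increases on roughly $(0,2\varepsilon)$ and then decreases, with $f(\varepsilon,1^-)\approx 1/\sqrt{2}$ while $\lim_{\delta\to 0^+}f(\varepsilon,\delta)=\sqrt{1+\varepsilon}\to 1$. So $\lim_{\delta\to 0^+}f(\varepsilon,\delta)\neq\inf_{\delta>0}f(\varepsilon,\delta)$, and the ``iterated limits equal the joint infimum'' reduction cannot be run on $f$ itself. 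Your proposed rescue via concavity of $X\mapsto\operatorname{Tr}[G_\alpha(\sigma_\varepsilon,X)]$ along the segment from $\rho$ to $\pi$ does not close this: a concave function on a segment need not be monotone, the endpoint comparison goes the wrong way (the maximizer of $\tau\mapsto\operatorname{Tr}[G_\alpha(\sigma_\varepsilon,\tau)]$ over states is not $\pi$), and ``checking the derivative sign at $\delta\to 0^+$'' is precisely the degenerate-support issue you are trying to resolve.

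The paper's fix is to run the monotonicity argument on the surrogate $\rho'_\delta:=\rho+\delta\pi$ instead of $\rho_\delta=(1-\delta)\rho+\delta\pi$. The map $\delta\mapsto\rho'_\delta$ \emph{is} operator monotone (its increments are multiples of $\pi\geq 0$), so operator monotonicity of $x^\alpha$ gives genuine $\delta$-monotonicity of $f'(\varepsilon,\delta):=\operatorname{Tr}[\sigma_\varepsilon(\sigma_\varepsilon^{-1/2}\rho'_\delta\sigma_\varepsilon^{-1/2})^\alpha]$, the $\varepsilon$-monotonicity goes through as before, and both iterated limits of $f'$ equal $\inf_{\varepsilon>0}\inf_{\delta>0}f'$, where the two infima commute. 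One then transfers back to $f$ via the two-sided operator bound $(1-\delta)\rho'_\delta\leq\rho_\delta\leq\rho'_\delta$, which by operator monotonicity and homogeneity of $x^\alpha$ yields $(1-\delta)^\alpha f'(\varepsilon,\delta)\leq f(\varepsilon,\delta)\leq f'(\varepsilon,\delta)$ \emph{uniformly in} $\varepsilon$; since $(1-\delta)^\alpha\to 1$, the iterated limits of $f$ and $f'$ coincide in either order. (The same counterexample above shows that the infimum characterization in the lemma is really a statement about the monotone surrogate $f'$ rather than about $f$ itself, which is exactly why the surrogate is indispensable.) If you replace your $\delta$-step with this surrogate-plus-sandwich argument, the rest of your plan goes through.
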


\begin{proof}
First consider that
\begin{equation}
\left(  1-\delta\right)  \rho_{\delta}^{\prime}\leq\rho_{\delta}\leq
\rho_{\delta}^{\prime},
\end{equation}
where
\begin{equation}
\rho_{\delta}^{\prime}:=\rho+\delta\pi.
\end{equation}
By operator monotonicity of $x^{\alpha}$ for $\alpha\in\left(  0,1\right)  $,
we conclude that
\begin{align}
\left(  1-\delta\right)  ^{\alpha}\operatorname{Tr}\!\left[  \sigma
_{\varepsilon}\!\left(  \sigma_{\varepsilon}^{-\frac{1}{2}}\rho_{\delta
}^{\prime}\sigma_{\varepsilon}^{-\frac{1}{2}}\right)  ^{\alpha}\right]   &
\leq\operatorname{Tr}\!\left[  \sigma_{\varepsilon}\!\left(  \sigma
_{\varepsilon}^{-\frac{1}{2}}\rho_{\delta}\sigma_{\varepsilon}^{-\frac{1}{2}
}\right)  ^{\alpha}\right] \\
&  \leq\operatorname{Tr}\!\left[  \sigma_{\varepsilon}\!\left(  \sigma
_{\varepsilon}^{-\frac{1}{2}}\rho_{\delta}^{\prime}\sigma_{\varepsilon
}^{-\frac{1}{2}}\right)  ^{\alpha}\right]  .
\end{align}
These bounds are uniform and independent of $\varepsilon$, and so it follows
that
\begin{align}
\lim_{\varepsilon\rightarrow0^{+}}\lim_{\delta\rightarrow0^{+}}
\operatorname{Tr}\!\left[  \sigma_{\varepsilon}\!\left(  \sigma_{\varepsilon
}^{-\frac{1}{2}}\rho_{\delta}\sigma_{\varepsilon}^{-\frac{1}{2}}\right)
^{\alpha}\right]   &  =\lim_{\varepsilon\rightarrow0^{+}}\lim_{\delta
\rightarrow0^{+}}\operatorname{Tr}\!\left[  \sigma_{\varepsilon}\!\left(
\sigma_{\varepsilon}^{-\frac{1}{2}}\rho_{\delta}^{\prime}\sigma_{\varepsilon
}^{-\frac{1}{2}}\right)  ^{\alpha}\right]  ,\\
\lim_{\delta\rightarrow0^{+}}\lim_{\varepsilon\rightarrow0^{+}}
\operatorname{Tr}\!\left[  \sigma_{\varepsilon}\!\left(  \sigma_{\varepsilon
}^{-\frac{1}{2}}\rho_{\delta}\sigma_{\varepsilon}^{-\frac{1}{2}}\right)
^{\alpha}\right]   &  =\lim_{\delta\rightarrow0^{+}}\lim_{\varepsilon
\rightarrow0^{+}}\operatorname{Tr}\!\left[  \sigma_{\varepsilon}\!\left(
\sigma_{\varepsilon}^{-\frac{1}{2}}\rho_{\delta}^{\prime}\sigma_{\varepsilon
}^{-\frac{1}{2}}\right)  ^{\alpha}\right]  .
\end{align}
Again from the operator monotonicity of $x^{\alpha}$ for $\alpha\in\left(
0,1\right)  $, we conclude for fixed $\varepsilon>0$ that
\begin{equation}
\delta_{1}\leq\delta_{2}\qquad\Rightarrow\qquad\operatorname{Tr}\!\left[
\sigma_{\varepsilon}\!\left(  \sigma_{\varepsilon}^{-\frac{1}{2}}\rho
_{\delta_{1}}^{\prime}\sigma_{\varepsilon}^{-\frac{1}{2}}\right)  ^{\alpha
}\right]  \leq\operatorname{Tr}\!\left[  \sigma_{\varepsilon}\!\left(
\sigma_{\varepsilon}^{-\frac{1}{2}}\rho_{\delta_{2}}^{\prime}\sigma
_{\varepsilon}^{-\frac{1}{2}}\right)  ^{\alpha}\right]  ,
\end{equation}
where $\delta_{1}>0$. By exploiting the identity
\begin{equation}
\operatorname{Tr}\!\left[  \sigma_{\varepsilon}\!\left(  \sigma_{\varepsilon
}^{-\frac{1}{2}}\rho_{\delta}^{\prime}\sigma_{\varepsilon}^{-\frac{1}{2}
}\right)  ^{\alpha}\right]  =\operatorname{Tr}\!\left[  \rho_{\delta}^{\prime
}\!\left(  \left(  \rho_{\delta}^{\prime}\right)  ^{-\frac{1}{2}}
\sigma_{\varepsilon}\left(  \rho_{\delta}^{\prime}\right)  ^{-\frac{1}{2}
}\right)  ^{1-\alpha}\right]
\end{equation}
from Proposition~\ref{prop:alt-rep-geometric-renyi-quasi} and operator
monotonicity of $x^{1-\alpha}$ for $\alpha\in\left(  0,1\right)  $, we
conclude for fixed $\delta>0$ that
\begin{equation}
\varepsilon_{1}\leq\varepsilon_{2}\qquad\Rightarrow\qquad\operatorname{Tr}
\!\left[  \sigma_{\varepsilon_{1}}\!\left(  \sigma_{\varepsilon_{1}}
^{-\frac{1}{2}}\rho_{\delta}\sigma_{\varepsilon_{1}}^{-\frac{1}{2}}\right)
^{\alpha}\right]  \leq\operatorname{Tr}\!\left[  \sigma_{\varepsilon_{2}
}\!\left(  \sigma_{\varepsilon_{2}}^{-\frac{1}{2}}\rho_{\delta}^{\prime}
\sigma_{\varepsilon_{2}}^{-\frac{1}{2}}\right)  ^{\alpha}\right]  ,
\end{equation}
where $\varepsilon_{1}>0$. Thus, we find that
\begin{align}
\lim_{\varepsilon\rightarrow0^{+}}\lim_{\delta\rightarrow0^{+}}
\operatorname{Tr}\!\left[  \sigma_{\varepsilon}\!\left(  \sigma_{\varepsilon
}^{-\frac{1}{2}}\rho_{\delta}^{\prime}\sigma_{\varepsilon}^{-\frac{1}{2}
}\right)  ^{\alpha}\right]   &  =\inf_{\varepsilon>0}\inf_{\delta
>0}\operatorname{Tr}\!\left[  \sigma_{\varepsilon}\!\left(  \sigma
_{\varepsilon}^{-\frac{1}{2}}\rho_{\delta}^{\prime}\sigma_{\varepsilon
}^{-\frac{1}{2}}\right)  ^{\alpha}\right]  ,\\
\lim_{\delta\rightarrow0^{+}}\lim_{\varepsilon\rightarrow0^{+}}
\operatorname{Tr}\!\left[  \sigma_{\varepsilon}\!\left(  \sigma_{\varepsilon
}^{-\frac{1}{2}}\rho_{\delta}^{\prime}\sigma_{\varepsilon}^{-\frac{1}{2}
}\right)  ^{\alpha}\right]   &  =\inf_{\delta>0}\inf_{\varepsilon
>0}\operatorname{Tr}\!\left[  \sigma_{\varepsilon}\!\left(  \sigma
_{\varepsilon}^{-\frac{1}{2}}\rho_{\delta}^{\prime}\sigma_{\varepsilon
}^{-\frac{1}{2}}\right)  ^{\alpha}\right]  .
\end{align}
Since infima can be exchanged, we conclude the statement of the proposition.
\end{proof}

\bigskip

A first property of the geometric R\'{e}nyi relative entropy that we recall is
its relation to the sandwiched R\'{e}nyi relative entropy \cite{MDSFT13,WWY14}
. The inequality below was established for the interval $\alpha\in
(0,1)\cup(1,2]$ in \cite{T15book} (by making use of a general result in
\cite{Mat13,Matsumoto2018})\ and for the full interval $\alpha\in(1,\infty)$
in \cite{WWW19}. Below we follow the approach of \cite{WWW19} and offer a
unified proof in terms of the Araki--Lieb--Thirring inequality
\cite{Araki1990,LT76}.

\begin{proposition}
\label{prop:geometric-to-sandwiched}Let $\rho$ be a state and $\sigma$ a
positive semi-definite operator. The geometric R\'{e}nyi relative entropy is
not smaller than the sandwiched R\'{e}nyi relative entropy for all $\alpha
\in\left(  0,1\right)  \cup\left(  1,\infty\right)  $:
\begin{equation}
\widetilde{D}_{\alpha}(\rho\Vert\sigma)\leq\widehat{D}_{\alpha}(\rho
\Vert\sigma). \label{eq:geometric-renyi-to-sandwiched}
\end{equation}

\end{proposition}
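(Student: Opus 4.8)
The plan is to reduce the comparison between the geometric and sandwiched R\'enyi relative entropies to a single application of the Araki--Lieb--Thirring (ALT) inequality, after first disposing of the support technicalities. First I would handle the case where the support condition $\operatorname{supp}(\rho) \subseteq \operatorname{supp}(\sigma)$ fails: when $\alpha \in (1,\infty)$ both sides equal $+\infty$ by Proposition~\ref{prop:explicit-form-geometric-renyi} and the definition of $\widetilde{D}_\alpha$, so the inequality holds trivially; when $\alpha \in (0,1)$, one approximates $\sigma$ by $\sigma_\varepsilon := \sigma + \varepsilon I$, proves the inequality for $(\rho, \sigma_\varepsilon)$ where the support condition is automatic, and then takes $\varepsilon \to 0^+$, using the limit formula \eqref{eq:def-geometric-renyi-rel-quasi-ent} for the geometric side and \eqref{eq:limit-sandwiched-to-def} for the sandwiched side. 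So it suffices to prove \eqref{eq:geometric-renyi-to-sandwiched} under the assumption that $\sigma$ is positive definite (hence invertible), and by a further approximation argument (replacing $\rho$ by $\rho_\delta := (1-\delta)\rho + \delta\pi$ and using \eqref{eq:limit-eq-geometric-renyi}), we may as well assume $\rho$ is positive definite too.

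With both operators positive definite, the strategy is to compare the quasi-entropies directly. For $\alpha \in (1,\infty)$, I would use the alternative representation from Proposition~\ref{prop:alt-rep-geometric-renyi-quasi}, namely $\widehat{Q}_\alpha(\rho\Vert\sigma) = \operatorname{Tr}[\rho(\rho^{1/2}\sigma^{-1}\rho^{1/2})^{\alpha-1}]$, and write $\widetilde{Q}_\alpha(\rho\Vert\sigma) = \operatorname{Tr}[(\sigma^{\frac{1-\alpha}{2\alpha}}\rho\,\sigma^{\frac{1-\alpha}{2\alpha}})^\alpha]$. The idea is to rewrite $\widehat{Q}_\alpha$ in a form $\operatorname{Tr}[(A^{r} B^{r} A^{r})^{1/r} \cdots]$ suitable for ALT. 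Concretely, set $P := \sigma^{-1/2}\rho\,\sigma^{-1/2}$, so that $\widehat{Q}_\alpha(\rho\Vert\sigma) = \operatorname{Tr}[\sigma P^\alpha] = \operatorname{Tr}[G_\alpha(\sigma,\rho)]$; then using $G_\alpha(\sigma,\rho) = G_{1-\alpha}(\rho,\sigma)$ and manipulating, express everything in terms of $\rho^{1/2}$ and $\sigma$, and apply the ALT inequality $\operatorname{Tr}[(B^{1/2} A B^{1/2})^{\alpha}] \le \operatorname{Tr}[(B^{q/2} A^q B^{q/2})^{\alpha/q}]$ for the appropriate ordering of $\alpha$ versus $1$ (the inequality reverses direction for $\alpha \in (0,1)$ versus $\alpha \in (1,\infty)$, which is exactly what produces the single inequality $\widetilde{D}_\alpha \le \widehat{D}_\alpha$ in both regimes after dividing by $\alpha - 1$, which is negative for $\alpha < 1$). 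I would follow the bookkeeping of \cite{WWW19} here.

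The main obstacle I anticipate is getting the exponents and the orientation of the ALT inequality exactly right so that a \emph{single} inequality between the quasi-entropies translates, after the factor $\frac{1}{\alpha - 1}$ (which flips sign across $\alpha = 1$), into the \emph{same}-direction inequality $\widetilde{D}_\alpha \le \widehat{D}_\alpha$ for all $\alpha \in (0,1)\cup(1,\infty)$. This is the delicate point: one needs $\widehat{Q}_\alpha \ge \widetilde{Q}_\alpha$ for $\alpha > 1$ but $\widehat{Q}_\alpha \le \widetilde{Q}_\alpha$ for $\alpha < 1$, and ALT must be invoked with the correct pair of exponents in each case. A clean way to organize this is to substitute $K := \sigma^{1/2}\rho^{-1/2}$ or its adjoint and use Lemma~\ref{lem:sing-val-lemma-pseudo-commute} to move functions past $K$, reducing both quantities to traces of powers of $K^\dagger K$ and $KK^\dagger$ weighted appropriately; then ALT applies transparently. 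The secondary (routine) obstacle is verifying that all the approximation limits commute as claimed, which follows from operator monotonicity of $x \mapsto x^\beta$ for $\beta \in (0,1)$ together with Lemma~\ref{lem:limit-exchange-geom-renyi-a-0-to1} for the $\alpha \in (0,1)$ case and standard continuity arguments for $\alpha > 1$.
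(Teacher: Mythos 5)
Your proposal is correct and follows essentially the same route as the paper: both reduce the claim to a single application of the Araki--Lieb--Thirring inequality at the level of the quasi-entropies, with the reversal of ALT across $\alpha=1$ cancelling against the sign flip of the prefactor $\tfrac{1}{\alpha-1}$, followed by the $\varepsilon\to 0^{+}$ limit. The paper's choice of parameters ($q=1$, $r=\alpha$, $Y=\sigma_{\varepsilon}^{1/\alpha}$, $X=\sigma_{\varepsilon}^{-1/2}\rho\sigma_{\varepsilon}^{-1/2}$) makes the bookkeeping you worry about immediate, and it avoids your extra regularization of $\rho$, which is unnecessary since $\widehat{Q}_{\alpha}$ involves no inverse of $\rho$.
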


\begin{proof}
This is a direct consequence of the Araki--Lieb--Thirring inequality
\cite{Araki1990,LT76}. For positive semi-definite operators $X$ and $Y$,
$q\geq0$, and $r\in\left[  0,1\right]  $, the following inequality holds
\begin{equation}
\operatorname{Tr}\left[  \left(  Y^{\frac{1}{2}}XY^{\frac{1}{2}}\right)
^{rq}\right]  \geq\operatorname{Tr}\left[  \left(  Y^{\frac{r}{2}}
X^{r}Y^{\frac{r}{2}}\right)  ^{q}\right]  . \label{eq:ALT-1}
\end{equation}
For $r\geq1$, the following inequality holds
\begin{equation}
\operatorname{Tr}\left[  \left(  Y^{\frac{1}{2}}XY^{\frac{1}{2}}\right)
^{rq}\right]  \leq\operatorname{Tr}\left[  \left(  Y^{\frac{r}{2}}
X^{r}Y^{\frac{r}{2}}\right)  ^{q}\right]  . \label{eq:ALT-2}
\end{equation}
By employing it with $q=1$, $r=\alpha\in(0,1)$, $Y=\sigma_{\varepsilon}
^{\frac{1}{\alpha}}$, and $X=\sigma_{\varepsilon}^{-\frac{1}{2}}\rho
\sigma_{\varepsilon}^{-\frac{1}{2}}$, and recalling that $\sigma_{\varepsilon
}:=\sigma+\varepsilon I$, we find that
\begin{align}
\widehat{Q}_{\alpha}(\rho\Vert\sigma_{\varepsilon})  &  =\operatorname{Tr}
\!\left[  \sigma_{\varepsilon}\!\left(  \sigma_{\varepsilon}^{-\frac{1}{2}
}\rho\sigma_{\varepsilon}^{-\frac{1}{2}}\right)  ^{\alpha}\right] \\
&  =\operatorname{Tr}\!\left[  \left(  \sigma_{\varepsilon}^{\frac{1}{2\alpha
}}\right)  ^{\alpha}\left(  \sigma_{\varepsilon}^{-\frac{1}{2}}\rho
\sigma_{\varepsilon}^{-\frac{1}{2}}\right)  ^{\alpha}\left(  \sigma
_{\varepsilon}^{\frac{1}{2\alpha}}\right)  ^{\alpha}\right] \\
&  \leq\operatorname{Tr}\!\left[  \left(  \sigma_{\varepsilon}^{\frac
{1}{2\alpha}}\sigma_{\varepsilon}^{-\frac{1}{2}}\rho\sigma_{\varepsilon
}^{-\frac{1}{2}}\sigma_{\varepsilon}^{\frac{1}{2\alpha}}\right)  ^{\alpha
}\right] \\
&  =\operatorname{Tr}\!\left[  \left(  \sigma_{\varepsilon}^{\frac{1-\alpha
}{2\alpha}}\rho\sigma_{\varepsilon}^{\frac{1-\alpha}{2\alpha}}\right)
^{\alpha}\right] \\
&  =\widetilde{Q}_{\alpha}(\rho\Vert\sigma_{\varepsilon}),
\end{align}
which implies for $\alpha\in(0,1)$, by using definitions, that
\begin{equation}
\widetilde{D}_{\alpha}(\rho\Vert\sigma_{\varepsilon})\leq\widehat{D}_{\alpha
}(\rho\Vert\sigma_{\varepsilon}).
\end{equation}
Now taking the limit as $\varepsilon\rightarrow0^{+}$, employing
\eqref{eq:limit-sandwiched-to-def}\ and
Definition~\ref{def:geometric-renyi-rel-ent}, we arrive at the inequality in \eqref{eq:geometric-renyi-to-sandwiched}.

Since the Araki--Lieb--Thirring inequality is reversed for $r=\alpha\in\left(
1,\infty\right)  $, we can employ similar reasoning as above and definitions
to arrive at \eqref{eq:geometric-renyi-to-sandwiched} for $\alpha\in
(1,\infty)$.
\end{proof}

\bigskip

We are now ready to provide a proof of
Proposition~\ref{prop:explicit-form-geometric-renyi}.

\bigskip

\begin{proof}
[Proof of Proposition~\ref{prop:explicit-form-geometric-renyi}]First suppose
that $\alpha\in(1,\infty)$ and $\operatorname{supp}(\rho)\not \subseteq
\operatorname{supp}(\sigma)$. Then from \eqref{eq:limit-sandwiched-to-def} and
Proposition~\ref{prop:geometric-to-sandwiched}\ and the fact that the
sandwiched R\'{e}nyi relative quasi-entropy $\widetilde{Q}_{\alpha}(\rho
\Vert\sigma)=+\infty$ in this case, it follows that $\widehat{Q}_{\alpha}
(\rho\Vert\sigma)=+\infty$, thus establishing the third expression in \eqref{eq:geometric-rel-quasi-explicit-1}.

Now suppose that $\alpha\in\left(  0,1\right)  \cup(1,\infty)$ and
$\operatorname{supp}(\rho)\subseteq\operatorname{supp}(\sigma)$. Let us employ
the decomposition of the Hilbert space $\mathcal{H}$ as $\mathcal{H}
=\operatorname{supp}(\sigma)\oplus\ker(\sigma)$.\ Then we can write $\rho$ as
\begin{equation}
\rho=
\begin{pmatrix}
\rho_{0,0} & \rho_{0,1}\\
\rho_{0,1}^{\dag} & \rho_{1,1}
\end{pmatrix}
,\qquad\sigma=
\begin{pmatrix}
\sigma & 0\\
0 & 0
\end{pmatrix}
. \label{eq:rho-sig-decompose-for-geometric-formula}
\end{equation}
Writing $I=\Pi_{\sigma}+\Pi_{\sigma}^{\perp}$, where $\Pi_{\sigma}$ is the
projection onto the support of $\sigma$ and $\Pi_{\sigma}^{\perp}$ is the
projection onto the orthogonal complement of $\operatorname{supp}(\sigma)$, we
find that
\begin{equation}
\sigma_{\varepsilon}=
\begin{pmatrix}
\sigma+\varepsilon\Pi_{\sigma} & 0\\
0 & \varepsilon\Pi_{\sigma}^{\perp}
\end{pmatrix}
,
\end{equation}
which implies that
\begin{equation}
\sigma_{\varepsilon}^{-\frac{1}{2}}=
\begin{pmatrix}
\left(  \sigma+\varepsilon\Pi_{\sigma}\right)  ^{-\frac{1}{2}} & 0\\
0 & \varepsilon^{-\frac{1}{2}}\Pi_{\sigma}^{\perp}
\end{pmatrix}
. \label{eq:rho-sig-decompose-for-geometric-formula-3}
\end{equation}

The condition $\operatorname{supp}(\rho)\subseteq\operatorname{supp}(\sigma)$
implies that $\rho_{0,1}=0$ and $\rho_{1,1}=0$. Then
\begin{equation}
\sigma_{\varepsilon}^{-\frac{1}{2}}\rho\sigma_{\varepsilon}^{-\frac{1}{2}}=
\begin{pmatrix}
\left(  \sigma+\varepsilon\Pi_{\sigma}\right)  ^{-\frac{1}{2}}\rho
_{0,0}\left(  \sigma+\varepsilon\Pi_{\sigma}\right)  ^{-\frac{1}{2}} & 0\\
0 & 0
\end{pmatrix}
,
\end{equation}
so that
\begin{align}
&  \operatorname{Tr}\!\left[  \sigma_{\varepsilon}\!\left(  \sigma
_{\varepsilon}^{-\frac{1}{2}}\rho\sigma_{\varepsilon}^{-\frac{1}{2}}\right)
^{\alpha}\right] \nonumber\\
&  =\operatorname{Tr}\!\left[
\begin{pmatrix}
\sigma+\varepsilon\Pi_{\sigma} & 0\\
0 & \varepsilon\Pi_{\sigma}^{\perp}
\end{pmatrix}
\begin{pmatrix}
\left[  \left(  \sigma+\varepsilon\Pi_{\sigma}\right)  ^{-\frac{1}{2}}
\rho_{0,0}\left(  \sigma+\varepsilon\Pi_{\sigma}\right)  ^{-\frac{1}{2}
}\right]  ^{\alpha} & 0\\
0 & 0
\end{pmatrix}
\right] \\
&  =\operatorname{Tr}\!\left[  \left(  \sigma+\varepsilon\Pi_{\sigma}\right)
\left[  \left(  \sigma+\varepsilon\Pi_{\sigma}\right)  ^{-\frac{1}{2}}
\rho_{0,0}\left(  \sigma+\varepsilon\Pi_{\sigma}\right)  ^{-\frac{1}{2}
}\right]  ^{\alpha}\right]  .
\end{align}
Taking the limit $\varepsilon\rightarrow0^{+}$ then leads to
\begin{align}
\lim_{\varepsilon\rightarrow0^{+}}\operatorname{Tr}\!\left[  \sigma
_{\varepsilon}\!\left(  \sigma_{\varepsilon}^{-\frac{1}{2}}\rho\sigma
_{\varepsilon}^{-\frac{1}{2}}\right)  ^{\alpha}\right]   &  =\operatorname{Tr}
\!\left[  \sigma\!\left(  \sigma^{-\frac{1}{2}}\rho_{0,0}\sigma^{-\frac{1}{2}
}\right)  ^{\alpha}\right] \\
&  =\operatorname{Tr}\!\left[  \sigma\!\left(  \sigma^{-\frac{1}{2}}\rho
\sigma^{-\frac{1}{2}}\right)  ^{\alpha}\right]  ,
\label{eq:rho-sig-decompose-for-geometric-formula-last}
\end{align}
thus establishing the first expression in \eqref{eq:geometric-rel-quasi-explicit-1}.

We now establish \eqref{eq:geometric-rel-quasi-explicit-2}. For $\alpha
\in(1,\infty)$ and $\operatorname{supp}(\rho)\subseteq\operatorname{supp}
(\sigma)$, the same analysis implies that
\begin{equation}
\operatorname{Tr}\!\left[  \sigma_{\varepsilon}\!\left(  \sigma_{\varepsilon
}^{-\frac{1}{2}}\rho\sigma_{\varepsilon}^{-\frac{1}{2}}\right)  ^{\alpha
}\right]  =\operatorname{Tr}\!\left[  \hat{\sigma}_{\varepsilon}\!\left(
\hat{\sigma}_{\varepsilon}^{-\frac{1}{2}}\rho_{0,0}\hat{\sigma}_{\varepsilon
}^{-\frac{1}{2}}\right)  ^{\alpha}\right]  ,
\end{equation}
where
\begin{equation}
\hat{\sigma}_{\varepsilon}:=\sigma+\varepsilon\Pi_{\sigma}.
\end{equation}
Since
\begin{equation}
\left(  \hat{\sigma}_{\varepsilon}^{-\frac{1}{2}}\rho_{0,0}\hat{\sigma
}_{\varepsilon}^{-\frac{1}{2}}\right)  ^{\alpha}=\hat{\sigma}_{\varepsilon
}^{-\frac{1}{2}}\rho_{0,0}\hat{\sigma}_{\varepsilon}^{-\frac{1}{2}}\left(
\hat{\sigma}_{\varepsilon}^{-\frac{1}{2}}\rho_{0,0}\hat{\sigma}_{\varepsilon
}^{-\frac{1}{2}}\right)  ^{\alpha-1}
\end{equation}
for $\alpha>1$, we have that
\begin{align}
&  \operatorname{Tr}\!\left[  \hat{\sigma}_{\varepsilon}\hat{\sigma
}_{\varepsilon}^{-\frac{1}{2}}\rho_{0,0}\hat{\sigma}_{\varepsilon}^{-\frac
{1}{2}}\!\left(  \hat{\sigma}_{\varepsilon}^{-\frac{1}{2}}\rho_{0,0}
\hat{\sigma}_{\varepsilon}^{-\frac{1}{2}}\right)  ^{\alpha-1}\right]
\nonumber\\
&  =\operatorname{Tr}\!\left[  \hat{\sigma}_{\varepsilon}^{\frac{1}{2}}
\rho_{0,0}^{\frac{1}{2}}\rho_{0,0}^{\frac{1}{2}}\hat{\sigma}_{\varepsilon
}^{-\frac{1}{2}}\!\left(  \hat{\sigma}_{\varepsilon}^{-\frac{1}{2}}\rho
_{0,0}^{\frac{1}{2}}\rho_{0,0}^{\frac{1}{2}}\hat{\sigma}_{\varepsilon}
^{-\frac{1}{2}}\right)  ^{\alpha-1}\right] \\
&  =\operatorname{Tr}\!\left[  \hat{\sigma}_{\varepsilon}^{\frac{1}{2}}
\rho_{0,0}^{\frac{1}{2}}\!\left(  \rho_{0,0}^{\frac{1}{2}}\hat{\sigma
}_{\varepsilon}^{-\frac{1}{2}}\hat{\sigma}_{\varepsilon}^{-\frac{1}{2}}
\rho_{0,0}^{\frac{1}{2}}\right)  ^{\alpha-1}\rho_{0,0}^{\frac{1}{2}}
\hat{\sigma}_{\varepsilon}^{-\frac{1}{2}}\right] \\
&  =\operatorname{Tr}\!\left[  \rho_{0,0}^{\frac{1}{2}}\hat{\sigma
}_{\varepsilon}^{-\frac{1}{2}}\hat{\sigma}_{\varepsilon}^{\frac{1}{2}}
\rho_{0,0}^{\frac{1}{2}}\!\left(  \rho_{0,0}^{\frac{1}{2}}\hat{\sigma
}_{\varepsilon}^{-1}\rho_{0,0}^{\frac{1}{2}}\right)  ^{\alpha-1}\right] \\
&  =\operatorname{Tr}\!\left[  \rho_{0,0}\left(  \rho_{0,0}^{\frac{1}{2}}
\hat{\sigma}_{\varepsilon}^{-1}\rho_{0,0}^{\frac{1}{2}}\right)  ^{\alpha
-1}\right]  ,
\end{align}
where we applied Lemma~\ref{lem:sing-val-lemma-pseudo-commute}\ with
$f(x)=x^{\alpha-1}$ and $L=\rho_{0,0}^{\frac{1}{2}}\hat{\sigma}_{\varepsilon
}^{-\frac{1}{2}}$. Now taking the limit $\varepsilon\rightarrow0^{+}$, we
conclude that
\begin{align}
\lim_{\varepsilon\rightarrow0^{+}}\operatorname{Tr}\!\left[  \sigma
_{\varepsilon}\!\left(  \sigma_{\varepsilon}^{-\frac{1}{2}}\rho\sigma
_{\varepsilon}^{-\frac{1}{2}}\right)  ^{\alpha}\right]   &  =\lim
_{\varepsilon\rightarrow0^{+}}\operatorname{Tr}\!\left[  \rho_{0,0}\left(
\rho_{0,0}^{\frac{1}{2}}\hat{\sigma}_{\varepsilon}^{-1}\rho_{0,0}^{\frac{1}
{2}}\right)  ^{\alpha-1}\right] \\
&  =\operatorname{Tr}\!\left[  \rho_{0,0}\left(  \rho_{0,0}^{\frac{1}{2}
}\sigma^{-1}\rho_{0,0}^{\frac{1}{2}}\right)  ^{\alpha-1}\right] \\
&  =\operatorname{Tr}\!\left[  \rho\left(  \rho^{\frac{1}{2}}\sigma^{-1}
\rho^{\frac{1}{2}}\right)  ^{\alpha-1}\right]  ,
\end{align}
for the case $\alpha\in(1,\infty)$ and $\operatorname{supp}(\rho
)\subseteq\operatorname{supp}(\sigma)$, thus establishing \eqref{eq:geometric-rel-quasi-explicit-2}.

For the case that $\alpha\in(0,1)$ and $\operatorname{supp}(\sigma
)\subseteq\operatorname{supp}(\rho)$, we can employ the limit exchange from
Lemma~\ref{lem:limit-exchange-geom-renyi-a-0-to1}\ and a similar argument as
in
\eqref{eq:rho-sig-decompose-for-geometric-formula}--\eqref{eq:rho-sig-decompose-for-geometric-formula-last},
but with respect to the decomposition $\mathcal{H}=\operatorname{supp}
(\rho)\oplus\ker(\rho)$, to conclude that
\begin{equation}
\widehat{Q}_{\alpha}(\rho\Vert\sigma)=\operatorname{Tr}\!\left[  \rho\!\left(
\rho^{-\frac{1}{2}}\sigma\rho^{-\frac{1}{2}}\right)  ^{1-\alpha}\right]  ,
\end{equation}
thus establishing the second expression in
\eqref{eq:geometric-rel-quasi-explicit-1}. This case amounts to the exchange
$\rho\leftrightarrow\sigma$ and $\alpha\leftrightarrow1-\alpha$.

We finally consider the case $\alpha\in(0,1)$ and $\operatorname{supp}
(\rho)\not \subseteq \operatorname{supp}(\sigma)$, which is the most involved
case. Consider that
\begin{equation}
\sigma_{\varepsilon}:=\sigma+\varepsilon I=
\begin{bmatrix}
\hat{\sigma}_{\varepsilon} & 0\\
0 & \varepsilon\Pi_{\sigma}^{\perp}
\end{bmatrix}
,
\end{equation}
where $\hat{\sigma}_{\varepsilon}:=\sigma+\varepsilon\Pi_{\sigma}$. Let us
define
\begin{equation}
\rho_{\delta}:=\left(  1-\delta\right)  \rho+\delta\pi,
\end{equation}
with $\delta\in(0,1)$ and $\pi$ the maximally mixed state. By invoking
Lemma~\ref{lem:limit-exchange-geom-renyi-a-0-to1}, we conclude that the
following exchange of limits is possible for $\alpha\in(0,1)$:
\begin{equation}
\lim_{\varepsilon\rightarrow0^{+}}D_{\alpha}(\rho\Vert\sigma_{\varepsilon
})=\lim_{\varepsilon\rightarrow0^{+}}\lim_{\delta\rightarrow0^{+}}D_{\alpha
}(\rho_{\delta}\Vert\sigma_{\varepsilon})=\lim_{\delta\rightarrow0^{+}}
\lim_{\varepsilon\rightarrow0^{+}}D_{\alpha}(\rho_{\delta}\Vert\sigma
_{\varepsilon}).
\end{equation}
Now define
\begin{equation}
\rho_{0,0}^{\delta}:=\Pi_{\sigma}\rho_{\delta}\Pi_{\sigma},\quad\rho
_{0,1}^{\delta}:=\Pi_{\sigma}\rho_{\delta}\Pi_{\sigma}^{\perp},\quad\rho
_{1,1}^{\delta}:=\Pi_{\sigma}^{\perp}\rho_{\delta}\Pi_{\sigma}^{\perp},
\end{equation}
so that
\begin{equation}
\rho_{\delta}=
\begin{bmatrix}
\rho_{0,0}^{\delta} & \rho_{0,1}^{\delta}\\
(\rho_{0,1}^{\delta})^{\dag} & \rho_{1,1}^{\delta}
\end{bmatrix}
.
\end{equation}
Then
\begin{equation}
D_{\alpha}(\rho_{\delta}\Vert\sigma_{\varepsilon})=\frac{1}{\alpha-1}
\ln\operatorname{Tr}\!\left[  \sigma_{\varepsilon}\!\left(  \sigma
_{\varepsilon}^{-\frac{1}{2}}\rho_{\delta}\sigma_{\varepsilon}^{-\frac{1}{2}
}\right)  ^{\alpha}\right]  .
\end{equation}
Consider that
\begin{align}
\sigma_{\varepsilon}^{-\frac{1}{2}}\rho_{\delta}\sigma_{\varepsilon}
^{-\frac{1}{2}}  &  =
\begin{bmatrix}
\hat{\sigma}_{\varepsilon} & 0\\
0 & \varepsilon\Pi_{\sigma}^{\perp}
\end{bmatrix}
^{-\frac{1}{2}}
\begin{bmatrix}
\rho_{0,0}^{\delta} & \rho_{0,1}^{\delta}\\
(\rho_{0,1}^{\delta})^{\dag} & \rho_{1,1}^{\delta}
\end{bmatrix}
\begin{bmatrix}
\hat{\sigma}_{\varepsilon} & 0\\
0 & \varepsilon\Pi_{\sigma}^{\perp}
\end{bmatrix}
^{-\frac{1}{2}}\\
&  =
\begin{bmatrix}
\hat{\sigma}_{\varepsilon}^{-\frac{1}{2}} & 0\\
0 & \varepsilon^{-\frac{1}{2}}\Pi_{\sigma}^{\perp}
\end{bmatrix}
\begin{bmatrix}
\rho_{0,0}^{\delta} & \rho_{0,1}^{\delta}\\
(\rho_{0,1}^{\delta})^{\dag} & \rho_{1,1}^{\delta}
\end{bmatrix}
\begin{bmatrix}
\hat{\sigma}_{\varepsilon}^{-\frac{1}{2}} & 0\\
0 & \varepsilon^{-\frac{1}{2}}\Pi_{\sigma}^{\perp}
\end{bmatrix}
\\
&  =
\begin{bmatrix}
\hat{\sigma}_{\varepsilon}^{-\frac{1}{2}}\rho_{0,0}^{\delta}\hat{\sigma
}_{\varepsilon}^{-\frac{1}{2}} & \varepsilon^{-\frac{1}{2}}\hat{\sigma
}_{\varepsilon}^{-\frac{1}{2}}\rho_{0,1}^{\delta}\Pi_{\sigma}^{\perp}\\
\varepsilon^{-\frac{1}{2}}\Pi_{\sigma}^{\perp}(\rho_{0,1}^{\delta})^{\dag}
\hat{\sigma}_{\varepsilon}^{-\frac{1}{2}} & \varepsilon^{-1}\Pi_{\sigma
}^{\perp}\rho_{1,1}^{\delta}\Pi_{\sigma}^{\perp}
\end{bmatrix}
\\
&  =
\begin{bmatrix}
\hat{\sigma}_{\varepsilon}^{-\frac{1}{2}}\rho_{0,0}^{\delta}\hat{\sigma
}_{\varepsilon}^{-\frac{1}{2}} & \varepsilon^{-\frac{1}{2}}\hat{\sigma
}_{\varepsilon}^{-\frac{1}{2}}\rho_{0,1}^{\delta}\\
\varepsilon^{-\frac{1}{2}}(\rho_{0,1}^{\delta})^{\dag}\hat{\sigma
}_{\varepsilon}^{-\frac{1}{2}} & \varepsilon^{-1}\rho_{1,1}^{\delta}
\end{bmatrix}
.
\end{align}
So then
\begin{align}
&  \operatorname{Tr}\!\left[  \sigma_{\varepsilon}\!\left(  \sigma
_{\varepsilon}^{-\frac{1}{2}}\rho_{\delta}\sigma_{\varepsilon}^{-\frac{1}{2}
}\right)  ^{\alpha}\right] \nonumber\\
&  =\operatorname{Tr}\!\left[
\begin{bmatrix}
\hat{\sigma}_{\varepsilon} & 0\\
0 & \varepsilon\Pi_{\sigma}^{\perp}
\end{bmatrix}
\left(
\begin{bmatrix}
\hat{\sigma}_{\varepsilon}^{-\frac{1}{2}}\rho_{0,0}^{\delta}\hat{\sigma
}_{\varepsilon}^{-\frac{1}{2}} & \varepsilon^{-\frac{1}{2}}\hat{\sigma
}_{\varepsilon}^{-\frac{1}{2}}\rho_{0,1}^{\delta}\\
\varepsilon^{-\frac{1}{2}}(\rho_{0,1}^{\delta})^{\dag}\hat{\sigma
}_{\varepsilon}^{-\frac{1}{2}} & \varepsilon^{-1}\rho_{1,1}^{\delta}
\end{bmatrix}
\right)  ^{\alpha}\right] \\
&  =\operatorname{Tr}\!\left[
\begin{bmatrix}
\hat{\sigma}_{\varepsilon} & 0\\
0 & \varepsilon\Pi_{\sigma}^{\perp}
\end{bmatrix}
\left(  \varepsilon^{-1}
\begin{bmatrix}
\varepsilon\hat{\sigma}_{\varepsilon}^{-\frac{1}{2}}\rho_{0,0}^{\delta}
\hat{\sigma}_{\varepsilon}^{-\frac{1}{2}} & \varepsilon^{\frac{1}{2}}
\hat{\sigma}_{\varepsilon}^{-\frac{1}{2}}\rho_{0,1}^{\delta}\\
\varepsilon^{\frac{1}{2}}(\rho_{0,1}^{\delta})^{\dag}\hat{\sigma}
_{\varepsilon}^{-\frac{1}{2}} & \rho_{1,1}^{\delta}
\end{bmatrix}
\right)  ^{\alpha}\right] \\
&  =\operatorname{Tr}\!\left[
\begin{bmatrix}
\varepsilon^{-\alpha}\hat{\sigma}_{\varepsilon} & 0\\
0 & \varepsilon^{1-\alpha}\Pi_{\sigma}^{\perp}
\end{bmatrix}
\begin{bmatrix}
\varepsilon\hat{\sigma}_{\varepsilon}^{-\frac{1}{2}}\rho_{0,0}^{\delta}
\hat{\sigma}_{\varepsilon}^{-\frac{1}{2}} & \varepsilon^{\frac{1}{2}}
\hat{\sigma}_{\varepsilon}^{-\frac{1}{2}}\rho_{0,1}^{\delta}\\
\varepsilon^{\frac{1}{2}}(\rho_{0,1}^{\delta})^{\dag}\hat{\sigma}
_{\varepsilon}^{-\frac{1}{2}} & \rho_{1,1}^{\delta}
\end{bmatrix}
^{\alpha}\right]
\end{align}
Let us define
\begin{equation}
K(\varepsilon):=
\begin{bmatrix}
\varepsilon\hat{\sigma}_{\varepsilon}^{-\frac{1}{2}}\rho_{0,0}^{\delta}
\hat{\sigma}_{\varepsilon}^{-\frac{1}{2}} & \varepsilon^{\frac{1}{2}}
\hat{\sigma}_{\varepsilon}^{-\frac{1}{2}}\rho_{0,1}^{\delta}\\
\varepsilon^{\frac{1}{2}}(\rho_{0,1}^{\delta})^{\dag}\hat{\sigma}
_{\varepsilon}^{-\frac{1}{2}} & \rho_{1,1}^{\delta}
\end{bmatrix}
,
\end{equation}
so that we can write
\begin{equation}
\operatorname{Tr}\!\left[  \sigma_{\varepsilon}\!\left(  \sigma_{\varepsilon
}^{-\frac{1}{2}}\rho_{\delta}\sigma_{\varepsilon}^{-\frac{1}{2}}\right)
^{\alpha}\right]  =\operatorname{Tr}\!\left[
\begin{bmatrix}
\varepsilon^{-\alpha}\hat{\sigma}_{\varepsilon} & 0\\
0 & \varepsilon^{1-\alpha}\Pi_{\sigma}^{\perp}
\end{bmatrix}
\left(  K(\varepsilon)\right)  ^{\alpha}\right]  .
\end{equation}
Now let us invoke Lemma~\ref{lem:sisi-zhou-lem} with the substitutions
\begin{align}
A  &  \leftrightarrow\rho_{1,1}^{\delta},\\
B  &  \leftrightarrow(\rho_{0,1}^{\delta})^{\dag}\hat{\sigma}_{\varepsilon
}^{-\frac{1}{2}},\\
C  &  \leftrightarrow\hat{\sigma}_{\varepsilon}^{-\frac{1}{2}}\rho
_{0,0}^{\delta}\hat{\sigma}_{\varepsilon}^{-\frac{1}{2}},\\
\varepsilon &  \leftrightarrow\varepsilon^{\frac{1}{2}}.
\end{align}
Defining
\begin{align}
L(\varepsilon)  &  :=
\begin{bmatrix}
\varepsilon S(\rho^{\delta},\hat{\sigma}_{\varepsilon}) & 0\\
0 & \rho_{1,1}^{\delta}+\varepsilon R
\end{bmatrix}
,\\
S(\rho^{\delta},\hat{\sigma}_{\varepsilon})  &  :=\hat{\sigma}_{\varepsilon
}^{-\frac{1}{2}}\left(  \rho_{0,0}^{\delta}-\rho_{0,1}^{\delta}(\rho
_{1,1}^{\delta})^{-1}(\rho_{0,1}^{\delta})^{\dag}\right)  \hat{\sigma
}_{\varepsilon}^{-\frac{1}{2}},\\
R  &  :=\operatorname{Re}[(\rho_{1,1}^{\delta})^{-1}(\rho_{0,1}^{\delta
})^{\dag}(\hat{\sigma}_{\varepsilon})^{-1}(\rho_{0,1}^{\delta})],
\end{align}
we conclude from Lemma~\ref{lem:sisi-zhou-lem}\ that
\begin{equation}
\left\Vert K(\varepsilon)-e^{-i\sqrt{\varepsilon}G}L(\varepsilon
)e^{i\sqrt{\varepsilon}G}\right\Vert _{\infty}\leq o(\varepsilon),
\label{eq:op-norm-bound-geo-support}
\end{equation}
where $G$ in Lemma~\ref{lem:sisi-zhou-lem}\ is defined from $A$ and $B$ above.
The inequality in \eqref{eq:op-norm-bound-geo-support} in turn implies the
following operator inequalities:
\begin{equation}
e^{-i\sqrt{\varepsilon}G}L(\varepsilon)e^{i\sqrt{\varepsilon}G}-o(\varepsilon
)I\leq K(\varepsilon)\leq e^{-i\sqrt{\varepsilon}G}L(\varepsilon
)e^{i\sqrt{\varepsilon}G}+o(\varepsilon)I.
\label{eq:op-ineq-geo-supp-renyi-bnd}
\end{equation}
Observe that
\begin{equation}
e^{-i\sqrt{\varepsilon}G}L(\varepsilon)e^{i\sqrt{\varepsilon}G}+o(\varepsilon
)I=e^{-i\sqrt{\varepsilon}G}\left[  L(\varepsilon)+o(\varepsilon)I\right]
e^{i\sqrt{\varepsilon}G}.
\end{equation}
Now invoking these and the operator monotonicity of the function $x^{\alpha}$
for $\alpha\in(0,1)$, we find that
\begin{align}
&  \operatorname{Tr}\!\left[  \sigma_{\varepsilon}\left(  \sigma_{\varepsilon
}^{-\frac{1}{2}}\rho_{\delta}\sigma_{\varepsilon}^{-\frac{1}{2}}\right)
^{\alpha}\right] \label{eq:geo-renyi-sup-arg-exp-to-bnd}\\
&  =\operatorname{Tr}\!\left[
\begin{bmatrix}
\varepsilon^{-\alpha}\hat{\sigma}_{\varepsilon} & 0\\
0 & \varepsilon^{1-\alpha}\Pi_{\sigma}^{\perp}
\end{bmatrix}
\left(  K(\varepsilon)\right)  ^{\alpha}\right] \\
&  \leq\operatorname{Tr}\!\left[
\begin{bmatrix}
\varepsilon^{-\alpha}\hat{\sigma}_{\varepsilon} & 0\\
0 & \varepsilon^{1-\alpha}\Pi_{\sigma}^{\perp}
\end{bmatrix}
\left(  e^{-i\sqrt{\varepsilon}G}\left[  L(\varepsilon)+o(\varepsilon
)I\right]  e^{i\sqrt{\varepsilon}G}\right)  ^{\alpha}\right] \\
&  =\operatorname{Tr}\!\left[
\begin{bmatrix}
\varepsilon^{-\alpha}\hat{\sigma}_{\varepsilon} & 0\\
0 & \varepsilon^{1-\alpha}\Pi_{\sigma}^{\perp}
\end{bmatrix}
e^{-i\sqrt{\varepsilon}G}\left(  L(\varepsilon)+o(\varepsilon)I\right)
^{\alpha}e^{i\sqrt{\varepsilon}G}\right]  . \label{eq:up-op-bnd-suppo-geo-ren}
\end{align}
Consider that
\begin{align}
&  \left(  L(\varepsilon)+o(\varepsilon)I\right)  ^{\alpha}\nonumber\\
&  =
\begin{bmatrix}
\varepsilon S(\rho^{\delta},\hat{\sigma}_{\varepsilon})+o(\varepsilon)I & 0\\
0 & \rho_{1,1}^{\delta}+\varepsilon R+o(\varepsilon)I
\end{bmatrix}
^{\alpha}\\
&  =
\begin{bmatrix}
\left(  \varepsilon S(\rho^{\delta},\hat{\sigma}_{\varepsilon})+o(\varepsilon
)I\right)  ^{\alpha} & 0\\
0 & \left(  \rho_{1,1}^{\delta}+\varepsilon R+o(\varepsilon)I\right)
^{\alpha}
\end{bmatrix}
\\
&  =
\begin{bmatrix}
\varepsilon^{\alpha}\left(  S(\rho^{\delta},\hat{\sigma}_{\varepsilon
})+o(1)I\right)  ^{\alpha} & 0\\
0 & \left(  \rho_{1,1}^{\delta}+\varepsilon R+o(\varepsilon)I\right)
^{\alpha}
\end{bmatrix}
.
\end{align}
Now expanding $e^{i\sqrt{\varepsilon}G}$ to first order in order to evaluate
\eqref{eq:up-op-bnd-suppo-geo-ren}\ (higher order terms will end up being
irrelevant), we find that
\begin{align}
&  \operatorname{Tr}\!\left[
\begin{bmatrix}
\varepsilon^{-\alpha}\hat{\sigma}_{\varepsilon} & 0\\
0 & \varepsilon^{1-\alpha}\Pi_{\sigma}^{\perp}
\end{bmatrix}
e^{-i\sqrt{\varepsilon}G}\left(  L(\varepsilon)+o(\varepsilon)I\right)
^{\alpha}e^{i\sqrt{\varepsilon}G}\right] \nonumber\\
&  =\operatorname{Tr}\!\left[
\begin{bmatrix}
\varepsilon^{-\alpha}\hat{\sigma}_{\varepsilon} & 0\\
0 & \varepsilon^{1-\alpha}\Pi_{\sigma}^{\perp}
\end{bmatrix}
\left(  L(\varepsilon)+o(\varepsilon)I\right)  ^{\alpha}\right] \nonumber\\
&  \quad+\operatorname{Tr}\!\left[
\begin{bmatrix}
\varepsilon^{-\alpha}\hat{\sigma}_{\varepsilon} & 0\\
0 & \varepsilon^{1-\alpha}\Pi_{\sigma}^{\perp}
\end{bmatrix}
\left(  -i\sqrt{\varepsilon}G\right)  \left(  L(\varepsilon)+o(\varepsilon
)I\right)  ^{\alpha}\right] \nonumber\\
&  \quad+\operatorname{Tr}\!\left[
\begin{bmatrix}
\varepsilon^{-\alpha}\hat{\sigma}_{\varepsilon} & 0\\
0 & \varepsilon^{1-\alpha}\Pi_{\sigma}^{\perp}
\end{bmatrix}
\left(  L(\varepsilon)+o(\varepsilon)I\right)  ^{\alpha}\left(  i\sqrt
{\varepsilon}G\right)  \right]  +o(1)\\
&  =\operatorname{Tr}\!\left[
\begin{bmatrix}
\hat{\sigma}_{\varepsilon}\left(  S(\rho^{\delta},\hat{\sigma}_{\varepsilon
})+o(1)I\right)  ^{\alpha} & 0\\
0 & \varepsilon^{1-\alpha}\Pi_{\sigma}^{\perp}\left(  \rho_{1,1}^{\delta
}+\varepsilon R+o(\varepsilon)I\right)  ^{\alpha}
\end{bmatrix}
\right] \nonumber\\
&  \quad-i\sqrt{\varepsilon}\operatorname{Tr}\!\left[
\begin{bmatrix}
\left(  S(\rho^{\delta},\hat{\sigma}_{\varepsilon})+o(1)I\right)  ^{\alpha
}\hat{\sigma}_{\varepsilon} & 0\\
0 & \varepsilon^{1-\alpha}\left(  \rho_{1,1}^{\delta}+\varepsilon
R+o(\varepsilon)I\right)  ^{\alpha}\Pi_{\sigma}^{\perp}
\end{bmatrix}
G\right] \nonumber\\
&  \quad+i\sqrt{\varepsilon}\operatorname{Tr}\!\left[
\begin{bmatrix}
\hat{\sigma}_{\varepsilon}\left(  S(\rho^{\delta},\hat{\sigma}_{\varepsilon
})+o(1)I\right)  ^{\alpha} & 0\\
0 & \varepsilon^{1-\alpha}\Pi_{\sigma}^{\perp}\left(  \rho_{1,1}^{\delta
}+\varepsilon R+o(\varepsilon)I\right)  ^{\alpha}
\end{bmatrix}
G\right]  +o(1)\\
&  =\operatorname{Tr}\!\left[
\begin{bmatrix}
\hat{\sigma}_{\varepsilon}\left(  S(\rho^{\delta},\hat{\sigma}_{\varepsilon
})+o(1)I\right)  ^{\alpha} & 0\\
0 & \varepsilon^{1-\alpha}\Pi_{\sigma}^{\perp}\left(  \rho_{1,1}^{\delta
}+\varepsilon R+o(\varepsilon)I\right)  ^{\alpha}
\end{bmatrix}
\right]  +o(1)\\
&  =\operatorname{Tr}\!\left[  \hat{\sigma}_{\varepsilon}\left(
S(\rho^{\delta},\hat{\sigma}_{\varepsilon})+o(1)I\right)  ^{\alpha}\right]
+\varepsilon^{1-\alpha}\operatorname{Tr}[\Pi_{\sigma}^{\perp}\left(
\rho_{1,1}^{\delta}+\varepsilon R+o(\varepsilon)I\right)  ^{\alpha}]+o(1).
\end{align}
By observing the last line, we see that higher order terms for $e^{i\sqrt
{\varepsilon}G}$ include prefactors of $\varepsilon$ (or higher powers), which
vanish in the $\varepsilon\rightarrow0^{+}$ limit. Now taking the limit
$\varepsilon\rightarrow0^{+}$, we find that
\begin{multline}
\lim_{\varepsilon\rightarrow0^{+}}\operatorname{Tr}\!\left[
\begin{bmatrix}
\varepsilon^{-\alpha}\hat{\sigma}_{\varepsilon} & 0\\
0 & \varepsilon^{1-\alpha}\Pi_{\sigma}^{\perp}
\end{bmatrix}
e^{-i\sqrt{\varepsilon}G}\left(  L(\varepsilon)+o(\varepsilon)I\right)
^{\alpha}e^{i\sqrt{\varepsilon}G}\right]
\label{eq:eps-final-limit-geo-ren-supp}\\
=\operatorname{Tr}\!\left[  \sigma\left(  \sigma^{-\frac{1}{2}}\left(
\rho_{0,0}^{\delta}-\rho_{0,1}^{\delta}(\rho_{1,1}^{\delta})^{-1}(\rho
_{0,1}^{\delta})^{\dag}\right)  \sigma^{-\frac{1}{2}}\right)  ^{\alpha
}\right]  ,
\end{multline}
where the inverses are taken on the support of $\sigma$. By proceeding in a
similar way, but using the lower bound in
\eqref{eq:op-ineq-geo-supp-renyi-bnd}, we find the following lower bound on
\eqref{eq:geo-renyi-sup-arg-exp-to-bnd}:
\begin{equation}
\operatorname{Tr}\!\left[
\begin{bmatrix}
\varepsilon^{-\alpha}\hat{\sigma}_{\varepsilon} & 0\\
0 & \varepsilon^{1-\alpha}\Pi_{\sigma}^{\perp}
\end{bmatrix}
e^{-i\sqrt{\varepsilon}G}\left(  L(\varepsilon)-o(\varepsilon)I\right)
^{\alpha}e^{i\sqrt{\varepsilon}G}\right]  .
\end{equation}
Then by the same argument above, the lower bound on
\eqref{eq:geo-renyi-sup-arg-exp-to-bnd} after taking the limit $\varepsilon
\rightarrow0^{+}$ is the same as in \eqref{eq:eps-final-limit-geo-ren-supp}.
So we conclude that
\begin{equation}
\lim_{\varepsilon\rightarrow0^{+}}\operatorname{Tr}\!\left[  \sigma
_{\varepsilon}\left(  \sigma_{\varepsilon}^{-\frac{1}{2}}\rho_{\delta}
\sigma_{\varepsilon}^{-\frac{1}{2}}\right)  ^{\alpha}\right]
=\operatorname{Tr}\!\left[  \sigma\left(  \sigma^{-\frac{1}{2}}\left(
\rho_{0,0}^{\delta}-\rho_{0,1}^{\delta}(\rho_{1,1}^{\delta})^{-1}(\rho
_{0,1}^{\delta})^{\dag}\right)  \sigma^{-\frac{1}{2}}\right)  ^{\alpha
}\right]  .
\end{equation}
Now consider that
\begin{equation}
\lim_{\delta\rightarrow0^{+}}\rho_{0,0}^{\delta}-\rho_{0,1}^{\delta}
(\rho_{1,1}^{\delta})^{-1}(\rho_{0,1}^{\delta})^{\dag}=\rho_{0,0}-\rho
_{0,1}\rho_{1,1}^{-1}\rho_{0,1}^{\dag},
\end{equation}
where the inverse on the right is taken on the support of $\rho_{1,1}$. This
follows because the image of $\rho_{0,1}^{\dag}$ is contained in the support
of $\rho_{1,1}$. Thus, we take the limit $\delta\rightarrow0^{+}$, and find
that
\begin{equation}
\lim_{\delta\rightarrow0^{+}}\lim_{\varepsilon\rightarrow0^{+}}
\operatorname{Tr}\!\left[  \sigma_{\varepsilon}\left(  \sigma_{\varepsilon
}^{-\frac{1}{2}}\rho_{\delta}\sigma_{\varepsilon}^{-\frac{1}{2}}\right)
^{\alpha}\right]  =\operatorname{Tr}\!\left[  \sigma\left(  \sigma^{-\frac
{1}{2}}\left(  \rho_{0,0}-\rho_{0,1}\rho_{1,1}^{-1}\rho_{0,1}^{\dag}\right)
\sigma^{-\frac{1}{2}}\right)  ^{\alpha}\right]  ,
\end{equation}
where all inverses are taken on the support. This concludes the proof.
\end{proof}

\bigskip

If the state $\rho$ is pure, then the geometric R\'{e}nyi relative entropy
simplifies as follows, such that it is independent of $\alpha$:

\begin{proposition}
Let $\rho=|\psi\rangle\!\langle\psi|$ be a pure state and $\sigma$ a positive
semi-definite operator.\ Then the following equality holds for all $\alpha
\in(0,1)\cup(1,\infty)$:
\begin{equation}
\widehat{D}_{\alpha}(\rho\Vert\sigma)=\left\{
\begin{array}
[c]{cc}
\ln\langle\psi|\sigma^{-1}|\psi\rangle & \text{if }\operatorname{supp}
(|\psi\rangle\!\langle\psi|)\subseteq\operatorname{supp}(\sigma)\\
+\infty & \text{otherwise}
\end{array}
\right.  ,
\end{equation}
where $\sigma^{-1}$ is understood as a generalized inverse. If $\sigma$ is
also a rank-one operator, so that $\sigma=|\phi\rangle\!\langle\phi|$ and
$\left\Vert |\phi\rangle\right\Vert _{2}>0$, then the following equality holds
for all $\alpha\in(0,1)\cup(1,\infty)$:
\begin{equation}
\widehat{D}_{\alpha}(\rho\Vert\sigma)=\left\{
\begin{array}
[c]{cc}
-\ln\left\Vert |\phi\rangle\right\Vert _{2}^{2} & \text{if }\exists
c\in\mathbb{C}\text{ such that }|\psi\rangle=c|\phi\rangle\\
+\infty & \text{otherwise}
\end{array}
\right.  . \label{eq:geometric-renyi-pure-states-collapse}
\end{equation}
In particular, if $\sigma=|\phi\rangle\!\langle\phi|$ is a state so that
$\left\Vert |\phi\rangle\right\Vert _{2}^{2}=1$, then
\begin{equation}
\widehat{D}_{\alpha}(\rho\Vert\sigma)=\left\{
\begin{array}
[c]{cc}
0 & \text{if }|\psi\rangle=|\phi\rangle\\
+\infty & \text{otherwise}
\end{array}
\right.  .
\end{equation}

\end{proposition}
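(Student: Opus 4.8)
The plan is to reduce the computation directly to Proposition~\ref{prop:explicit-form-geometric-renyi}, which supplies explicit formulas for $\widehat{Q}_{\alpha}(\rho\Vert\sigma)$ in each relevant support case, and then to exploit that $\rho=|\psi\rangle\!\langle\psi|$ is rank one, so that every operator appearing is rank one and can be manipulated by hand.

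First I would handle the case $\operatorname{supp}(|\psi\rangle\!\langle\psi|)\subseteq\operatorname{supp}(\sigma)$, which covers all $\alpha\in(0,1)\cup(1,\infty)$. Here Proposition~\ref{prop:explicit-form-geometric-renyi} gives $\widehat{Q}_{\alpha}(\rho\Vert\sigma)=\operatorname{Tr}[\sigma(\sigma^{-1/2}\rho\sigma^{-1/2})^{\alpha}]$ with $\sigma^{-1/2}$ the generalized inverse. Setting $|\tilde{\psi}\rangle:=\sigma^{-1/2}|\psi\rangle$, the operator $\sigma^{-1/2}\rho\sigma^{-1/2}=|\tilde{\psi}\rangle\!\langle\tilde{\psi}|$ is rank one, so $(|\tilde{\psi}\rangle\!\langle\tilde{\psi}|)^{\alpha}=\Vert\tilde{\psi}\Vert_{2}^{2(\alpha-1)}|\tilde{\psi}\rangle\!\langle\tilde{\psi}|$. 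Using $\Vert\tilde{\psi}\Vert_{2}^{2}=\langle\psi|\sigma^{-1}|\psi\rangle$ and $\langle\psi|\sigma^{-1/2}\sigma\sigma^{-1/2}|\psi\rangle=\langle\psi|\Pi_{\sigma}|\psi\rangle=1$ (the last equality from the support condition together with $\langle\psi|\psi\rangle=1$), I obtain $\widehat{Q}_{\alpha}(\rho\Vert\sigma)=(\langle\psi|\sigma^{-1}|\psi\rangle)^{\alpha-1}$, hence $\widehat{D}_{\alpha}(\rho\Vert\sigma)=\tfrac{1}{\alpha-1}\ln(\langle\psi|\sigma^{-1}|\psi\rangle)^{\alpha-1}=\ln\langle\psi|\sigma^{-1}|\psi\rangle$.

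Next I would treat $\operatorname{supp}(|\psi\rangle\!\langle\psi|)\not\subseteq\operatorname{supp}(\sigma)$. For $\alpha\in(1,\infty)$ Proposition~\ref{prop:explicit-form-geometric-renyi} gives $\widehat{Q}_{\alpha}=+\infty$ outright, so $\widehat{D}_{\alpha}=+\infty$. For $\alpha\in(0,1)$ I would use the $\tilde{\rho}$-formula: decompose $|\psi\rangle=|\psi_{0}\rangle+|\psi_{1}\rangle$ with $|\psi_{0}\rangle:=\Pi_{\sigma}|\psi\rangle$ and $|\psi_{1}\rangle:=\Pi_{\sigma}^{\perp}|\psi\rangle\neq 0$; then $\rho_{0,0}=|\psi_{0}\rangle\!\langle\psi_{0}|$, $\rho_{0,1}=|\psi_{0}\rangle\!\langle\psi_{1}|$, $\rho_{1,1}=|\psi_{1}\rangle\!\langle\psi_{1}|$, and the generalized inverse is $\rho_{1,1}^{-1}=\Vert\psi_{1}\Vert_{2}^{-4}|\psi_{1}\rangle\!\langle\psi_{1}|$. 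A one-line computation gives $\rho_{0,1}\rho_{1,1}^{-1}\rho_{0,1}^{\dag}=|\psi_{0}\rangle\!\langle\psi_{0}|=\rho_{0,0}$, so $\tilde{\rho}=0$ and $\widehat{Q}_{\alpha}(\rho\Vert\sigma)=\operatorname{Tr}[\sigma\cdot 0]=0$. Since $\tfrac{1}{\alpha-1}<0$ for $\alpha\in(0,1)$, this yields $\widehat{D}_{\alpha}(\rho\Vert\sigma)=\tfrac{1}{\alpha-1}\ln 0=+\infty$, which is also the limiting reading of the first-case formula when $|\psi\rangle$ leaves $\operatorname{supp}(\sigma)$.

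Finally, the rank-one-$\sigma$ statements follow by specialization. When $\sigma=|\phi\rangle\!\langle\phi|$ with $\Vert|\phi\rangle\Vert_{2}>0$, the condition $\operatorname{supp}(|\psi\rangle\!\langle\psi|)\subseteq\operatorname{supp}(\sigma)$ is exactly the existence of $c\in\mathbb{C}$ with $|\psi\rangle=c|\phi\rangle$; then $\sigma^{-1}=\Vert|\phi\rangle\Vert_{2}^{-4}|\phi\rangle\!\langle\phi|$ gives $\langle\psi|\sigma^{-1}|\psi\rangle=|c|^{2}=\Vert|\phi\rangle\Vert_{2}^{-2}$ (using $\langle\psi|\psi\rangle=|c|^{2}\Vert|\phi\rangle\Vert_{2}^{2}=1$), so $\widehat{D}_{\alpha}(\rho\Vert\sigma)=-\ln\Vert|\phi\rangle\Vert_{2}^{2}$; otherwise we are in the divergent case. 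The ``in particular'' claim is then immediate: $\Vert|\phi\rangle\Vert_{2}^{2}=1$ forces $|c|=1$, hence $|\psi\rangle\!\langle\psi|=|\phi\rangle\!\langle\phi|$, and the value is $-\ln 1=0$. I expect the only delicate points to be the bookkeeping around generalized inverses and the convention $\ln 0=-\infty$ in the $\alpha\in(0,1)$ divergent subcase; the algebra is routine precisely because every operator involved is rank one.
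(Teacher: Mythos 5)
Your proof is correct, but it takes a genuinely different route from the paper's. The paper works directly from the $\varepsilon$-regularized definition: it shows by a rank-one computation that $\operatorname{Tr}[\sigma_{\varepsilon}(\sigma_{\varepsilon}^{-1/2}\rho\sigma_{\varepsilon}^{-1/2})^{\alpha}]=[\langle\psi|\sigma_{\varepsilon}^{-1}|\psi\rangle]^{\alpha-1}$ for every $\alpha$, so that $\widehat{D}_{\alpha}(\rho\Vert\sigma_{\varepsilon})=\ln\langle\psi|\sigma_{\varepsilon}^{-1}|\psi\rangle$ uniformly in $\alpha$, and then evaluates the $\varepsilon\to 0^{+}$ limit via a spectral decomposition of $\sigma$, where the term $\varepsilon^{-1}\langle\psi|Q_{y_{0}}|\psi\rangle$ produces the divergence when the support condition fails; this handles both R\'enyi ranges and the rank-one $\sigma$ case in one sweep. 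You instead invoke the already-established explicit formulas of Proposition~\ref{prop:explicit-form-geometric-renyi} and specialize each branch. Your computation in the supported case (reducing to $\Vert\sigma^{-1/2}|\psi\rangle\Vert_{2}^{2(\alpha-1)}$ via $(|\tilde\psi\rangle\!\langle\tilde\psi|)^{\alpha}=\Vert\tilde\psi\Vert_{2}^{2(\alpha-1)}|\tilde\psi\rangle\!\langle\tilde\psi|$ and $\langle\psi|\Pi_{\sigma}|\psi\rangle=1$) matches the paper's in substance. The genuinely new ingredient is your treatment of the divergent case for $\alpha\in(0,1)$: the observation that $\rho_{0,1}\rho_{1,1}^{-1}\rho_{0,1}^{\dag}=|\psi_{0}\rangle\!\langle\psi_{0}|=\rho_{0,0}$, hence $\tilde\rho=0$ and $\widehat{Q}_{\alpha}=0$, is a clean and correct use of the Schur-complement formula that the paper never exploits for this proposition. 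What the paper's route buys is self-containedness and an $\alpha$-independent limit argument; what yours buys is a shorter derivation given Proposition~\ref{prop:explicit-form-geometric-renyi}, at the cost of case-splitting over the two formulas and the sign of $\alpha-1$ (which you handle correctly, including the convention $\tfrac{1}{\alpha-1}\ln 0=+\infty$ for $\alpha\in(0,1)$). The rank-one $\sigma$ specialization is handled essentially identically in both arguments.
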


\begin{proof}
Defining $\sigma_{\varepsilon}:=\sigma+\varepsilon I$, consider that
\begin{align}
\operatorname{Tr}\!\left[  \sigma_{\varepsilon}\!\left(  \sigma_{\varepsilon
}^{-\frac{1}{2}}\rho\sigma_{\varepsilon}^{-\frac{1}{2}}\right)  ^{\alpha
}\right]   &  =\operatorname{Tr}\!\left[  \sigma_{\varepsilon}\!\left(
\sigma_{\varepsilon}^{-\frac{1}{2}}|\psi\rangle\!\langle\psi|\sigma
_{\varepsilon}^{-\frac{1}{2}}\right)  ^{\alpha}\right] \\
&  =\left(  \left\Vert \sigma_{\varepsilon}^{-\frac{1}{2}}|\psi\rangle
\right\Vert _{2}^{2}\right)  ^{\alpha}\operatorname{Tr}\!\left[
\sigma_{\varepsilon}\left(  \frac{\sigma_{\varepsilon}^{-\frac{1}{2}}
|\psi\rangle\!\langle\psi|\sigma_{\varepsilon}^{-\frac{1}{2}}}{\left\Vert
\sigma_{\varepsilon}^{-\frac{1}{2}}|\psi\rangle\right\Vert _{2}^{2}}\right)
^{\alpha}\right] \\
&  =\left(  \left\Vert \sigma_{\varepsilon}^{-\frac{1}{2}}|\psi\rangle
\right\Vert _{2}^{2}\right)  ^{\alpha}\operatorname{Tr}\!\left[
\sigma_{\varepsilon}\frac{\sigma_{\varepsilon}^{-\frac{1}{2}}|\psi
\rangle\!\langle\psi|\sigma_{\varepsilon}^{-\frac{1}{2}}}{\left\Vert
\sigma_{\varepsilon}^{-\frac{1}{2}}|\psi\rangle\right\Vert _{2}^{2}}\right] \\
&  =\left(  \left\Vert \sigma_{\varepsilon}^{-\frac{1}{2}}|\psi\rangle
\right\Vert _{2}^{2}\right)  ^{\alpha-1}\operatorname{Tr}\!\left[
\sigma_{\varepsilon}\sigma_{\varepsilon}^{-\frac{1}{2}}|\psi\rangle\!\langle
\psi|\sigma_{\varepsilon}^{-\frac{1}{2}}\right] \\
&  =\left(  \left\Vert \sigma_{\varepsilon}^{-\frac{1}{2}}|\psi\rangle
\right\Vert _{2}^{2}\right)  ^{\alpha-1}\operatorname{Tr}[|\psi\rangle
\langle\psi|]\\
&  =\left[  \langle\psi|\sigma_{\varepsilon}^{-1}|\psi\rangle\right]
^{\alpha-1}.
\end{align}
The third equality follows because $|\varphi\rangle\!\langle\varphi|^{\alpha
}=|\varphi\rangle\!\langle\varphi|$ for all $\alpha\in\left(  0,1\right)
\cup\left(  1,\infty\right)  $ when $\left\Vert |\varphi\rangle\right\Vert
_{2}=1$. Applying the above chain of equalities, we find that
\begin{align}
\frac{1}{\alpha-1}\ln\operatorname{Tr}\!\left[  \sigma_{\varepsilon}\left(
\sigma_{\varepsilon}^{-\frac{1}{2}}\rho\sigma_{\varepsilon}^{-\frac{1}{2}
}\right)  ^{\alpha}\right]   &  =\frac{1}{\alpha-1}\log_{2}\!\left[
\langle\psi|\sigma_{\varepsilon}^{-1}|\psi\rangle\right]  ^{\alpha-1}\\
&  =\ln\langle\psi|\sigma_{\varepsilon}^{-1}|\psi\rangle.
\end{align}

Now let a spectral decomposition of $\sigma$ be given by
\begin{equation}
\sigma=\sum_{y}\mu_{y}Q_{y},
\end{equation}
where $\mu_{y}$ are the non-negative eigenvalues and $Q_{y}$ are the
eigenprojections. In this decomposition, we are including values of $\mu_{y}$
for which $\mu_{y}=0$. Then it follows that
\begin{equation}
\sigma_{\varepsilon}=\sigma+\varepsilon I=\sum_{y}\left(  \mu_{y}
+\varepsilon\right)  Q_{y},
\end{equation}
and we find that
\begin{equation}
\sigma_{\varepsilon}^{-1}=\sum_{y}\left(  \mu_{y}+\varepsilon\right)
^{-1}Q_{y}.
\end{equation}
We can then conclude that
\begin{align}
\ln\langle\psi|\sigma_{\varepsilon}^{-1}|\psi\rangle &  =\ln\!\left[
\langle\psi|\sum_{y}\left(  \mu_{y}+\varepsilon\right)  ^{-1}Q_{y}|\psi
\rangle\right] \\
&  =\ln\!\left[  \sum_{y}\left(  \mu_{y}+\varepsilon\right)  ^{-1}\langle
\psi|Q_{y}|\psi\rangle\right] \\
&  =\ln\!\left[  \sum_{y:\mu_{y}\neq0}\left(  \mu_{y}+\varepsilon\right)
^{-1}\langle\psi|Q_{y}|\psi\rangle+\varepsilon^{-1}\langle\psi|Q_{y_{0}}
|\psi\rangle\right]  ,
\end{align}
where $y_{0}$ is the value of $y$ for which $\mu_{y}=0$ (if no such value of
$y$ exists, then $Q_{y_{0}}$ is equal to the zero operator). Thus, if
$\langle\psi|Q_{y_{0}}|\psi\rangle\neq0$ (equivalent to $|\psi\rangle$ being
outside the support of $\sigma$), then it follows that
\begin{equation}
\lim_{\varepsilon\rightarrow0^{+}}\ln\langle\psi|\sigma_{\varepsilon}
^{-1}|\psi\rangle=+\infty.
\end{equation}
Otherwise the expression converges as claimed.

Now suppose that $\sigma$ is a rank-one operator, so that $\sigma=|\phi
\rangle\!\langle\phi|$ and $\left\Vert |\phi\rangle\right\Vert _{2}>0$. By
defining
\begin{align}
|\phi^{\prime}\rangle &  :=\frac{|\phi\rangle}{\sqrt{\left\Vert |\phi
\rangle\right\Vert _{2}}},\\
N  &  :=\left\Vert |\phi\rangle\right\Vert _{2}^{2},
\end{align}
we find that
\begin{align}
\sigma_{\varepsilon}  &  =|\phi\rangle\!\langle\phi|+\varepsilon I\\
&  =N\ |\phi^{\prime}\rangle\!\langle\phi^{\prime}|+\varepsilon\left(
I-|\phi^{\prime}\rangle\!\langle\phi^{\prime}|+|\phi^{\prime}\rangle\!\langle
\phi^{\prime}|\right) \\
&  =\left(  N+\varepsilon\right)  \ |\phi^{\prime}\rangle\!\langle\phi^{\prime
}|+\varepsilon\left(  I-|\phi^{\prime}\rangle\!\langle\phi^{\prime}|\right)  ,
\end{align}
so that
\begin{align}
\sigma_{\varepsilon}^{-1}  &  =\left(  N+\varepsilon\right)  ^{-1}
|\phi^{\prime}\rangle\!\langle\phi^{\prime}|+\varepsilon^{-1}\left(
I-|\phi^{\prime}\rangle\!\langle\phi^{\prime}|\right) \\
&  =\left(  \left(  N+\varepsilon\right)  ^{-1}-\varepsilon^{-1}\right)
|\phi^{\prime}\rangle\!\langle\phi^{\prime}|+\varepsilon^{-1}I
\end{align}
and then
\begin{align}
\ln\left[  \langle\psi|\sigma_{\varepsilon}^{-1}|\psi\rangle\right]   &
=\ln\!\left[  \langle\psi|\left[  \left(  \left(  N+\varepsilon\right)
^{-1}-\varepsilon^{-1}\right)  |\phi^{\prime}\rangle\!\langle\phi^{\prime
}|+\varepsilon^{-1}I\right]  |\psi\rangle\right] \\
&  =\ln\!\left[  \left(  \left(  N+\varepsilon\right)  ^{-1}-\varepsilon
^{-1}\right)  \left\vert \langle\psi|\phi^{\prime}\rangle\right\vert
^{2}+\varepsilon^{-1}\right] \\
&  =\ln\!\left[  \frac{\left\vert \langle\psi|\phi^{\prime}\rangle\right\vert
^{2}}{N+\varepsilon}+\frac{1-\left\vert \langle\psi|\phi^{\prime}
\rangle\right\vert ^{2}}{\varepsilon}\right]  .
\end{align}
Note that we always have $\left\vert \langle\psi|\phi^{\prime}\rangle
\right\vert ^{2}\in\left[  0,1\right]  $ because $|\psi\rangle$ and
$|\phi^{\prime}\rangle$ are unit vectors. In the case that $\left\vert
\langle\psi|\phi^{\prime}\rangle\right\vert ^{2}\in\lbrack0,1)$, then we find
that
\begin{equation}
\lim_{\varepsilon\rightarrow0^{+}}\ln\!\left[  \frac{\left\vert \langle
\psi|\phi^{\prime}\rangle\right\vert ^{2}}{N+\varepsilon}+\frac{1-\left\vert
\langle\psi|\phi^{\prime}\rangle\right\vert ^{2}}{\varepsilon}\right]
=+\infty.
\end{equation}
Otherwise, if $\left\vert \langle\psi|\phi^{\prime}\rangle\right\vert ^{2}=1$,
then
\begin{align}
\lim_{\varepsilon\rightarrow0^{+}}\ln\!\left[  \frac{\left\vert \langle
\psi|\phi^{\prime}\rangle\right\vert ^{2}}{N+\varepsilon}+\frac{1-\left\vert
\langle\psi|\phi^{\prime}\rangle\right\vert ^{2}}{\varepsilon}\right]   &
=\lim_{\varepsilon\rightarrow0^{+}}\ln\!\left[  \frac{1}{N+\varepsilon}\right]
\\
&  =-\ln N,
\end{align}
concluding the proof.
\end{proof}

\bigskip

We note here that, for pure states $\rho$ and $\sigma$, the geometric
R\'{e}nyi relative entropy is either equal to zero or $+\infty$, depending on
whether $\rho=\sigma$. This behavior of the geometric R\'{e}nyi relative
entropy for pure states $\rho$ and $\sigma$ is very different from that of the
Petz--R\'{e}nyi and sandwiched R\'{e}nyi relative entropies. The latter
quantities always evaluate to a finite value if the pure states are non-orthogonal.

The geometric R\'{e}nyi relative entropy possesses a number of useful
properties, which we list in the proposition below.

\begin{proposition}
[Properties of the geometric R\'{e}nyi relative entropy]
\label{prop:geometric-renyi-props}For all states $\rho$, $\rho_{1}$, $\rho
_{2}$ and positive semi-definite operators $\sigma$, $\sigma_{1}$, $\sigma
_{2}$, the geometric R\'{e}nyi relative entropy satisfies the following properties.

\begin{enumerate}
\item Isometric invariance: For all $\alpha\in(0,1)\cup(1,\infty)$ and for all
isometries $V$,
\begin{equation}
\widehat{D}_{\alpha}(\rho\Vert\sigma)=\widehat{D}_{\alpha}(V\rho V^{\dag}\Vert
V\sigma V^{\dag}).
\end{equation}

\item Monotonicity in $\alpha$: For all $\alpha\in(0,1)\cup(1,\infty)$, the
geometric R\'{e}nyi relative entropy $\widehat{D}_{\alpha}$ is monotonically
increasing in $\alpha$; i.e., $\alpha<\beta$ implies $\widehat{D}_{\alpha
}(\rho\Vert\sigma)\leq\widehat{D}_{\beta}(\rho\Vert\sigma)$.

\item Additivity:\ For all $\alpha\in(0,1)\cup(1,\infty)$,
\begin{equation}
\widehat{D}_{\alpha}(\rho_{1}\otimes\rho_{2}\Vert\sigma_{1}\otimes\sigma
_{2})=\widehat{D}_{\alpha}(\rho_{1}\Vert\sigma_{1})+\widehat{D}_{\alpha}
(\rho_{2}\Vert\sigma_{2}). \label{eq:additivity-geometric-renyi}
\end{equation}

\item Direct-sum property: Let $p:\mathcal{X}\rightarrow\left[  0,1\right]  $
be a probability distribution over a finite alphabet $\mathcal{X}$ with
associated $\left\vert \mathcal{X}\right\vert $-dimensional system $X$, and
let $q:\mathcal{X}\rightarrow(0,\infty)$ be a positive function on
$\mathcal{X}$. Let $\left\{  \rho_{A}^{x}:x\in\mathcal{X}\right\}  $ be a set
of states on a system $A$, and let $\left\{  \sigma_{A}^{x}:x\in
\mathcal{X}\right\}  $ be a set of positive semi-definite operators on $A$.
Then,
\begin{equation}
\widehat{Q}_{\alpha}(\rho_{XA}\Vert\sigma_{XA})=\sum_{x\in\mathcal{X}
}p(x)^{\alpha}q(x)^{1-\alpha}\widehat{Q}_{\alpha}(\rho_{A}^{x}\Vert\sigma
_{A}^{x}), \label{eq:direct-sum-prop-geometric-renyi}
\end{equation}
where
\begin{align}
\rho_{XA}  &  :=\sum_{x\in\mathcal{X}}p(x)|x\rangle\!\langle x|_{X}\otimes
\rho_{A}^{x},\\
\sigma_{XA}  &  :=\sum_{x\in\mathcal{X}}q(x)|x\rangle\!\langle x|_{X}
\otimes\sigma_{A}^{x}.
\end{align}

\end{enumerate}
\end{proposition}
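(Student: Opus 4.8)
The plan is to obtain all four statements directly from the limiting formula $\widehat{Q}_{\alpha}(\rho\Vert\sigma)=\lim_{\varepsilon\rightarrow0^{+}}\operatorname{Tr}[\sigma_{\varepsilon}(\sigma_{\varepsilon}^{-1/2}\rho\sigma_{\varepsilon}^{-1/2})^{\alpha}]$ of Definition~\ref{def:geometric-renyi-rel-ent}, whose virtue is that it involves no inverse of $\rho$ and no support case-distinctions. Since $\widehat{D}_{\alpha}=\frac{1}{\alpha-1}\ln\widehat{Q}_{\alpha}$ and $\ln$ is continuous on $(0,\infty]$, it suffices to prove each property at the level of $\widehat{Q}_{\alpha}$, together with the consequence (valid because $\sigma_{\varepsilon}$ is positive definite, so Proposition~\ref{prop:explicit-form-geometric-renyi} applies) that $\widehat{D}_{\alpha}(\rho\Vert\sigma)=\lim_{\varepsilon\rightarrow0^{+}}\widehat{D}_{\alpha}(\rho\Vert\sigma_{\varepsilon})$, which lets me reduce arguments to the positive-definite case and pass to the limit.

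For isometric invariance I would fix an isometry $V$, put $\Pi:=VV^{\dag}$, and exploit $V^{\dag}V=I$ together with $V^{\dag}\Pi^{\perp}=\Pi^{\perp}V=0$. A block computation along $\operatorname{im}(\Pi)\oplus\ker(\Pi)$ gives $(V\sigma V^{\dag}+\varepsilon I)^{-1/2}=V\sigma_{\varepsilon}^{-1/2}V^{\dag}+\varepsilon^{-1/2}\Pi^{\perp}$; since $V\rho V^{\dag}$ is supported on $\operatorname{im}(\Pi)$, the $\Pi^{\perp}$ piece drops out of the sandwiching, leaving $V\sigma_{\varepsilon}^{-1/2}\rho\sigma_{\varepsilon}^{-1/2}V^{\dag}$, whose $\alpha$-th power is $V(\sigma_{\varepsilon}^{-1/2}\rho\sigma_{\varepsilon}^{-1/2})^{\alpha}V^{\dag}$ because an isometry conjugates spectral decompositions. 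Tracing against $V\sigma V^{\dag}+\varepsilon I$ and using $(V\sigma V^{\dag}+\varepsilon I)V=V\sigma_{\varepsilon}$ collapses the expression exactly to $\operatorname{Tr}[\sigma_{\varepsilon}(\sigma_{\varepsilon}^{-1/2}\rho\sigma_{\varepsilon}^{-1/2})^{\alpha}]$; then $\varepsilon\rightarrow0^{+}$ concludes. For monotonicity in $\alpha$, I would first take $\sigma>0$, so $\widehat{Q}_{\alpha}(\rho\Vert\sigma)=\operatorname{Tr}[\sigma(\sigma^{-1/2}\rho\sigma^{-1/2})^{\alpha}]$; diagonalizing $T:=\sigma^{-1/2}\rho\sigma^{-1/2}=\sum_{i}t_{i}|e_{i}\rangle\!\langle e_{i}|$ and setting $s_{i}:=\langle e_{i}|\sigma|e_{i}\rangle>0$ gives $\widehat{Q}_{\alpha}(\rho\Vert\sigma)=\sum_{i}s_{i}t_{i}^{\alpha}=\sum_{i}p_{i}^{\alpha}q_{i}^{1-\alpha}$ with $p_{i}:=s_{i}t_{i}$ (a probability distribution, as $\sum_{i}s_{i}t_{i}=\operatorname{Tr}[\rho]=1$) and $q_{i}:=s_{i}$, so $\widehat{D}_{\alpha}(\rho\Vert\sigma)$ equals the classical R\'enyi divergence $D_{\alpha}(p\Vert q)$, which is, by a standard fact, nondecreasing in $\alpha$. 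Applying this with $\sigma_{\varepsilon}$ in place of $\sigma$ and letting $\varepsilon\rightarrow0^{+}$ gives the general case.

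For the direct-sum property I would use that $\sigma_{XA}$, and therefore $(\sigma_{XA})_{\varepsilon}=\sum_{x}|x\rangle\!\langle x|_{X}\otimes(q(x)\sigma_{A}^{x}+\varepsilon I_{A})$, is block-diagonal in $X$ (here $q(x)>0$ is essential). Hence $(\sigma_{XA})_{\varepsilon}^{-1/2}\rho_{XA}(\sigma_{XA})_{\varepsilon}^{-1/2}$ and its $\alpha$-th power remain block-diagonal, and tracing against $(\sigma_{XA})_{\varepsilon}$ gives $\sum_{x}p(x)^{\alpha}\operatorname{Tr}_{A}[(q(x)\sigma_{A}^{x}+\varepsilon I)((q(x)\sigma_{A}^{x}+\varepsilon I)^{-1/2}\rho_{A}^{x}(q(x)\sigma_{A}^{x}+\varepsilon I)^{-1/2})^{\alpha}]$. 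Writing $q(x)\sigma_{A}^{x}+\varepsilon I=q(x)\,(\sigma_{A}^{x}+(\varepsilon/q(x))I)$ pulls out the scalar and rewrites the $x$-th summand as $p(x)^{\alpha}q(x)^{1-\alpha}\operatorname{Tr}[(\sigma_{A}^{x})_{\varepsilon/q(x)}((\sigma_{A}^{x})_{\varepsilon/q(x)}^{-1/2}\rho_{A}^{x}(\sigma_{A}^{x})_{\varepsilon/q(x)}^{-1/2})^{\alpha}]$; since the alphabet is finite and $\varepsilon/q(x)\rightarrow0^{+}$, letting $\varepsilon\rightarrow0^{+}$ yields $\sum_{x}p(x)^{\alpha}q(x)^{1-\alpha}\widehat{Q}_{\alpha}(\rho_{A}^{x}\Vert\sigma_{A}^{x})$.

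For additivity I would prove multiplicativity of $\widehat{Q}_{\alpha}$ under tensor products by splitting into cases via Proposition~\ref{prop:explicit-form-geometric-renyi}. When both support conditions $\operatorname{supp}(\rho_{i})\subseteq\operatorname{supp}(\sigma_{i})$ hold, $(\sigma_{1}\otimes\sigma_{2})^{-1/2}=\sigma_{1}^{-1/2}\otimes\sigma_{2}^{-1/2}$ for generalized inverses and $(T_{1}\otimes T_{2})^{\alpha}=T_{1}^{\alpha}\otimes T_{2}^{\alpha}$, so $\widehat{Q}_{\alpha}(\rho_{1}\otimes\rho_{2}\Vert\sigma_{1}\otimes\sigma_{2})=\widehat{Q}_{\alpha}(\rho_{1}\Vert\sigma_{1})\,\widehat{Q}_{\alpha}(\rho_{2}\Vert\sigma_{2})$; when $\alpha\in(1,\infty)$ and a support condition fails, both sides are $+\infty$; and when $\alpha\in(0,1)$ and a support condition fails, I would replace each $\rho_{i}$ by its Schur complement $\widetilde{\rho_{i}}$ relative to $\Pi_{\sigma_{i}}$ and use $\Pi_{\sigma_{1}\otimes\sigma_{2}}=\Pi_{\sigma_{1}}\otimes\Pi_{\sigma_{2}}$ together with the identity $\widetilde{\rho_{1}\otimes\rho_{2}}=\widetilde{\rho_{1}}\otimes\widetilde{\rho_{2}}$ to reduce to the supported case. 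I expect this last point to be the main obstacle: establishing that the Schur complement commutes with tensor products, which is cleanly argued by writing the Schur complement as the inverse of a diagonal block of the inverse for positive-definite operators and then passing to a limit. An alternative that avoids it is to apply the positive-definite multiplicativity to the pairs $(\rho_{i},\sigma_{i,\varepsilon})$ and squeeze $\sigma_{1,\varepsilon}\otimes\sigma_{2,\varepsilon}$ between $(\sigma_{1}\otimes\sigma_{2})_{\varepsilon^{2}}$ and $(\sigma_{1}\otimes\sigma_{2})_{c\varepsilon}$, invoking the monotonicity in the regularization parameter recorded in the proof of Lemma~\ref{lem:limit-exchange-geom-renyi-a-0-to1}; all other steps are routine block-matrix and functional-calculus manipulations.
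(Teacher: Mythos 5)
Your proof is correct, and items 1 and 4 essentially coincide with what the paper does (the paper's proofs of additivity and the direct-sum property are literally ``by direct evaluation,'' so your block-diagonal computation with the rescaled regularizer $\varepsilon/q(x)$ is simply the detail the paper omits). The genuine divergence is in the monotonicity argument. The paper differentiates $\frac{1}{\gamma}\ln\langle\varphi^{\rho}|X^{\gamma}\otimes I|\varphi^{\rho}\rangle$ with $X=\rho^{1/2}\sigma^{-1}\rho^{1/2}$ and invokes the operator Jensen inequality for $g(x)=x\ln x$; you instead diagonalize $T=\sigma^{-1/2}\rho\sigma^{-1/2}$ and observe that $\widehat{Q}_{\alpha}(\rho\Vert\sigma)=\sum_i p_i^{\alpha}q_i^{1-\alpha}$ with $p_i=s_it_i$, $q_i=s_i$ --- which is exactly the optimal classical preparation appearing in Proposition~\ref{prop:geometric-renyi-from-classical-preps} --- and then import the classical monotonicity of $D_{\alpha}$ in $\alpha$. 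This is a clean shortcut, provided you note two small points: (i) your $q$ is generally not normalized ($\sum_i q_i=\operatorname{Tr}[\sigma]$), so you should invoke the scaling identity $D_{\alpha}(p\Vert cq)=D_{\alpha}(p\Vert q)-\ln c$ to reduce to the normalized statement; (ii) the classical fact you are citing is itself usually proved by a Jensen-type argument, so the two proofs are cousins rather than strangers. On additivity, you correctly isolate the only delicate case, namely $\alpha\in(0,1)$ with a failed support condition; your squeeze alternative is the safer of your two options, since the monotonicity in the regularization parameter recorded in Lemma~\ref{lem:limit-exchange-geom-renyi-a-0-to1} is available precisely on $(0,1)$ (where operator anti-monotonicity of $x^{1-\alpha}$ applies), while the tensorization of the Schur complement, though provable via $\tilde{\rho}=(\Pi_{\sigma}\rho^{-1}\Pi_{\sigma})^{-1}$ for positive definite $\rho$ followed by a limit, requires more care than your sketch suggests because $\ker(\sigma_1\otimes\sigma_2)$ does not factorize.
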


\begin{proof}
\begin{enumerate}
\item \textit{Proof of isometric invariance}: Let us start by writing
$\widehat{D}_{\alpha}(\rho\Vert\sigma)$ as in
\eqref{eq:def-geometric-renyi-rel-quasi-ent}--\eqref{eq:def-geometric-renyi-rel-ent}:
\begin{equation}
\widehat{D}_{\alpha}(\rho\Vert\sigma)=\lim_{\varepsilon\rightarrow0^{+}}
\frac{1}{\alpha-1} \ln\operatorname{Tr}\!\left[  \sigma_{\varepsilon}\!\left(
\sigma_{\varepsilon}^{-\frac{1}{2}}\rho\sigma_{\varepsilon}^{-\frac{1}{2}
}\right)  ^{\alpha}\right]  .
\end{equation}
where
\begin{equation}
\sigma_{\varepsilon}:=\sigma+\varepsilon I.
\end{equation}
Let $V$ be an isometry. Then, defining
\begin{equation}
\omega_{\varepsilon}:=V\sigma V^{\dag}+\varepsilon I,
\end{equation}
we find that
\begin{equation}
\widehat{D}_{\alpha}(V\rho V^{\dag}\Vert V\sigma V^{\dag})=\lim_{\varepsilon
\rightarrow0^{+}}\frac{1}{\alpha-1} \ln\operatorname{Tr}\!\left[
\omega_{\varepsilon}\!\left(  \omega_{\varepsilon}^{-\frac{1}{2}}V\rho
V^{\dag}\omega_{\varepsilon}^{-\frac{1}{2}}\right)  ^{\alpha}\right]  .
\end{equation}
Now let $\Pi:=VV^{\dag}$ be the projection onto the image of $V$, so that $\Pi
V=V$, and let $\hat{\Pi}:= I-\Pi$. Then, we can write
\begin{equation}
\omega_{\varepsilon}=V\sigma V^{\dag}+\varepsilon\Pi+\varepsilon\hat{\Pi
}=V\sigma_{\varepsilon}V^{\dag}+\varepsilon\hat{\Pi}.
\end{equation}
Since $V\sigma_{\varepsilon}V^{\dag}$ and $\varepsilon\hat{\Pi}$ are supported
on orthogonal subspaces, we obtain
\begin{equation}
\omega_{\varepsilon}^{-\frac{1}{2}}=V\sigma_{\varepsilon}^{^{-\frac{1}{2}}
}V^{\dag}+\varepsilon^{-\frac{1}{2}}\hat{\Pi}.
\end{equation}
Consider then that
\begin{align}
\omega_{\varepsilon}^{-\frac{1}{2}}V\rho V^{\dag}\omega_{\varepsilon}
^{-\frac{1}{2}}  &  =\left(  V\sigma_{\varepsilon}^{^{-\frac{1}{2}}}V^{\dag
}+\varepsilon^{-\frac{1}{2}}\hat{\Pi}\right)  \Pi V\rho V^{\dag}\Pi\left(
V\sigma_{\varepsilon}^{^{-\frac{1}{2}}}V^{\dag}+\varepsilon^{-\frac{1}{2}}
\hat{\Pi}\right) \\
&  =\left(  V\sigma_{\varepsilon}^{^{-\frac{1}{2}}}V^{\dag}\right)  \Pi V\rho
V^{\dag}\Pi\left(  V\sigma_{\varepsilon}^{^{-\frac{1}{2}}}V^{\dag}\right) \\
&  =V\sigma_{\varepsilon}^{^{-\frac{1}{2}}}\rho\sigma_{\varepsilon}
^{^{-\frac{1}{2}}}V^{\dag},
\end{align}
where the second equality follows because $\hat{\Pi}\Pi=\Pi\hat{\Pi}=0$. Thus,
\begin{equation}
\left(  \omega_{\varepsilon}^{-\frac{1}{2}}V\rho V^{\dag}\omega_{\varepsilon
}^{-\frac{1}{2}}\right)  ^{\alpha}=V\left(  \sigma_{\varepsilon}^{^{-\frac
{1}{2}}}\rho\sigma_{\varepsilon}^{^{-\frac{1}{2}}}\right)  ^{\alpha}V^{\dag},
\end{equation}
and we find that
\begin{align}
\operatorname{Tr}\!\left[  \omega_{\varepsilon}\!\left(  \omega_{\varepsilon
}^{-\frac{1}{2}}V\rho V^{\dag}\omega_{\varepsilon}^{-\frac{1}{2}}\right)
^{\alpha}\right]   &  =\operatorname{Tr}\!\left[  \left(  V\sigma
_{\varepsilon}V^{\dag}+\varepsilon\hat{\Pi}\right)  V\left(  \sigma
_{\varepsilon}^{-\frac{1}{2}}\rho\sigma_{\varepsilon}^{-\frac{1}{2}}\right)
^{\alpha}V^{\dag}\right] \\
&  =\operatorname{Tr}\!\left[  \sigma_{\varepsilon}\!\left(  \sigma
_{\varepsilon}^{^{-\frac{1}{2}}}\rho\sigma_{\varepsilon}^{^{-\frac{1}{2}}
}\right)  ^{\alpha}\right]  .
\end{align}
Since the equality
\begin{equation}
\operatorname{Tr}\!\left[  \omega_{\varepsilon}\!\left(  \omega_{\varepsilon
}^{-\frac{1}{2}}V\rho V^{\dag}\omega_{\varepsilon}^{-\frac{1}{2}}\right)
^{\alpha}\right]  =\operatorname{Tr}\!\left[  \sigma_{\varepsilon}\!\left(
\sigma_{\varepsilon}^{^{-\frac{1}{2}}}\rho\sigma_{\varepsilon}^{^{-\frac{1}
{2}}}\right)  ^{\alpha}\right]
\end{equation}
holds for all $\varepsilon>0$, we conclude the proof of isometric invariance
by taking the limit $\varepsilon\rightarrow0^{+}$.

\item \textit{Proof of monotonicity in }$\alpha$: We prove this by showing
that the derivative is non-negative for all $\alpha>0$. By applying
\eqref{eq:limit-eq-geometric-renyi}, we can consider $\rho$ and $\sigma$ to be
positive definite without loss of generality. By applying
\eqref{eq:alt-op-geo-mean-4-geo-ent}, consider that
\begin{align}
\widehat{Q}_{\alpha}(\rho\Vert\sigma)  &  =\operatorname{Tr}\!\left[
\rho\!\left(  \rho^{-\frac{1}{2}}\sigma\rho^{-\frac{1}{2}}\right)  ^{1-\alpha
}\right] \\
&  =\operatorname{Tr}\!\left[  \rho\!\left(  \rho^{\frac{1}{2}}\sigma^{-1}
\rho^{\frac{1}{2}}\right)  ^{\alpha-1}\right]  .
\label{eq:alt-way-for-geometric-in-mono}
\end{align}
Now defining $|\varphi^{\rho}\rangle=(\rho^{\frac{1}{2}}\otimes I)|\Gamma
\rangle$ as a purification of $\rho$, and setting
\begin{align}
\gamma &  =\alpha-1,\\
X  &  =\rho^{\frac{1}{2}}\sigma^{-1}\rho^{\frac{1}{2}},
\end{align}
we can write the geometric R\'{e}nyi relative entropy as
\begin{equation}
\widehat{D}_{\alpha}(\rho\Vert\sigma)=\frac{1}{\gamma}\ln\langle\varphi^{\rho
}|X^{\gamma}\otimes I|\varphi^{\rho}\rangle,
\end{equation}
where we made use of \eqref{eq:alt-way-for-geometric-in-mono}. Then
$\frac{\partial}{\partial\alpha}=\frac{\partial}{\partial\gamma}\frac
{\partial\gamma}{\partial\alpha}=\frac{\partial}{\partial\gamma}$, and so we
find that
\begin{align}
&  \frac{\partial}{\partial\alpha}\widehat{D}_{\alpha}(\rho\Vert
\sigma)\nonumber\\
&  =\frac{\partial}{\partial\gamma}\left[  \frac{1}{\gamma}\ln\langle
\varphi^{\rho}|X^{\gamma}\otimes I|\varphi^{\rho}\rangle\right] \\
&  =\left[  -\frac{1}{\gamma^{2}}\ln\langle\varphi^{\rho}|X^{\gamma}\otimes
I|\varphi^{\rho}\rangle+\frac{1}{\gamma}\frac{\partial}{\partial\gamma}
\ln\langle\varphi^{\rho}|X^{\gamma}\otimes I|\varphi^{\rho}\rangle\right] \\
&  =\left[  -\frac{1}{\gamma^{2}}\ln\langle\varphi^{\rho}|X^{\gamma}\otimes
I|\varphi^{\rho}\rangle+\frac{1}{\gamma}\frac{\langle\varphi^{\rho}|X^{\gamma
}\ln X\otimes I|\varphi^{\rho}\rangle}{\langle\varphi^{\rho}|X^{\gamma}\otimes
I|\varphi^{\rho}\rangle}\right] \\
&  =\left[  \frac{-\langle\varphi^{\rho}|X^{\gamma}\otimes I|\varphi^{\rho
}\rangle\ln\langle\varphi^{\rho}|X^{\gamma}\otimes I|\varphi^{\rho}
\rangle+\gamma\langle\varphi^{\rho}|X^{\gamma}\ln X\otimes I|\varphi^{\rho
}\rangle}{\gamma^{2}\langle\varphi^{\rho}|X^{\gamma}\otimes I|\varphi^{\rho
}\rangle}\right] \\
&  =\left[  \frac{-\langle\varphi^{\rho}|X^{\gamma}\otimes I|\varphi^{\rho
}\rangle\ln\langle\varphi^{\rho}|X^{\gamma}\otimes I|\varphi^{\rho}
\rangle+\langle\varphi^{\rho}|X^{\gamma}\ln X^{\gamma}\otimes I|\varphi^{\rho
}\rangle}{\gamma^{2}\langle\varphi^{\rho}|X^{\gamma}\otimes I|\varphi^{\rho
}\rangle}\right]  .
\end{align}
Letting $g(x):=x\ln x$, we write
\begin{equation}
\frac{\partial}{\partial\alpha}\widehat{D}_{\alpha}(\rho\Vert\sigma
)=\frac{\langle\varphi^{\rho}|g(X^{\gamma}\otimes I)|\varphi^{\rho}
\rangle-g(\langle\varphi^{\rho}|(X^{\gamma}\otimes I)|\varphi^{\rho}\rangle
)}{\gamma^{2}\langle\varphi^{\rho}|X^{\gamma}\otimes I|\varphi^{\rho}\rangle}.
\end{equation}
Then, since $g(x)$ is operator convex, by the operator Jensen inequality
\cite{HP03}, we conclude that
\begin{equation}
\langle\varphi^{\rho}|g(X^{\gamma}\otimes I)|\varphi^{\rho}\rangle\geq
g(\langle\varphi^{\rho}|(X^{\gamma}\otimes I)|\varphi^{\rho}\rangle),
\end{equation}
which means that $\frac{\partial}{\partial\alpha}\widehat{D}_{\alpha}
(\rho\Vert\sigma)\geq0$. Therefore, $\widehat{D}_{\alpha}(\rho\Vert\sigma)$ is
monotonically increasing in $\alpha$, as required.

\item \textit{Proof of additivity}:\ The proof of
\eqref{eq:additivity-geometric-renyi}\ is found by direct evaluation.

\item \textit{Proof of direct-sum property}: The proof of
\eqref{eq:direct-sum-prop-geometric-renyi}\ is found by direct evaluation.
\end{enumerate}
\end{proof}

\bigskip

We now recall the data-processing inequality for the geometric R\'{e}nyi
relative entropy for $\alpha\in\left(  0,1\right)  \cup(1,2]$. This was
established by an operator-theoretic approach in \cite{PR98} and by an
operational method in \cite{Mat13,Matsumoto2018}. The operator-theoretic
method has its roots in \cite[Proposition~2.5]{HP91} and was reviewed in
\cite[Corollary 3.31]{HM17}. We follow the operator-theoretic approach here.

\begin{theorem}
[Data-processing inequality for geometric R\'{e}nyi relative entropy]
\label{thm:data-proc-geometric-renyi}Let $\rho$ be a state, $\sigma$ a
positive semi-definite operator, and $\mathcal{N}$ a quantum channel. Then,
for all $\alpha\in\left(  0,1\right)  \cup(1,2]$, the following inequality
holds
\begin{equation}
\widehat{D}_{\alpha}(\rho\Vert\sigma)\geq\widehat{D}_{\alpha}(\mathcal{N}
(\rho)\Vert\mathcal{N}(\sigma)).
\end{equation}

\end{theorem}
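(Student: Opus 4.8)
The plan is to reduce the statement, via the limit definition of the geometric R\'enyi relative quasi-entropy, to the case in which the second argument is positive definite, and then to exploit that $\widehat{Q}_\alpha$ is, up to a trace, the weighted operator geometric mean, for which an Ando-type transformer inequality is available. Concretely: since $\widehat{Q}_\alpha(\rho\Vert\sigma)=\lim_{\varepsilon\rightarrow0^+}\widehat{Q}_\alpha(\rho\Vert\sigma_\varepsilon)$ with $\sigma_\varepsilon:=\sigma+\varepsilon I$ by Definition~\ref{def:geometric-renyi-rel-ent} (for positive definite $\sigma_\varepsilon$ the support condition is automatic, so no inner limit remains), and likewise on the output side, it suffices to establish the inequality whenever $\sigma$ is positive definite. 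For such $\sigma$, Proposition~\ref{prop:explicit-form-geometric-renyi} gives $\widehat{Q}_\alpha(\rho\Vert\sigma)=\operatorname{Tr}[G_\alpha(\sigma,\rho)]$ with $G_\alpha(\sigma,\rho)=\sigma^{1/2}(\sigma^{-1/2}\rho\sigma^{-1/2})^\alpha\sigma^{1/2}$; moreover $\rho\leq\Vert\sigma^{-1}\Vert_\infty\sigma$ forces $\mathcal{N}(\rho)\leq\Vert\sigma^{-1}\Vert_\infty\mathcal{N}(\sigma)$, so $\operatorname{supp}(\mathcal{N}(\rho))\subseteq\operatorname{supp}(\mathcal{N}(\sigma))$ and hence $\widehat{Q}_\alpha(\mathcal{N}(\rho)\Vert\mathcal{N}(\sigma))=\operatorname{Tr}[G_\alpha(\mathcal{N}(\sigma),\mathcal{N}(\rho))]$ as well.

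The heart of the argument is the transformer inequality for the weighted geometric mean: for a quantum channel (more generally a two-positive trace-preserving map) $\mathcal{N}$,
\begin{equation}
\mathcal{N}(G_\alpha(\sigma,\rho))\leq G_\alpha(\mathcal{N}(\sigma),\mathcal{N}(\rho))\quad(\alpha\in(0,1)),\qquad \mathcal{N}(G_\alpha(\sigma,\rho))\geq G_\alpha(\mathcal{N}(\sigma),\mathcal{N}(\rho))\quad(\alpha\in(1,2]).
\end{equation}
For $\alpha\in(0,1)$ this is Ando's inequality for operator means; for $\alpha\in(1,2]$ it is its extension to the operator-convex regime, which I would either cite from \cite{PR98} and \cite[Corollary~3.31]{HM17} or derive by applying the operator Jensen inequality to the noncommutative perspective of the operator convex function $x\mapsto x^\alpha$ (which vanishes at $0$, so $f(0)\leq0$), in the spirit of Lemma~\ref{lem:transformer-ineq-basic}, which is exactly the case $\alpha=2$. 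Taking the trace of these operator inequalities and using trace preservation, $\operatorname{Tr}[G_\alpha(\sigma,\rho)]=\operatorname{Tr}[\mathcal{N}(G_\alpha(\sigma,\rho))]$, yields $\widehat{Q}_\alpha(\rho\Vert\sigma)\leq\widehat{Q}_\alpha(\mathcal{N}(\rho)\Vert\mathcal{N}(\sigma))$ for $\alpha\in(0,1)$ and $\widehat{Q}_\alpha(\rho\Vert\sigma)\geq\widehat{Q}_\alpha(\mathcal{N}(\rho)\Vert\mathcal{N}(\sigma))$ for $\alpha\in(1,2]$. Since $\widehat{D}_\alpha=\tfrac{1}{\alpha-1}\ln\widehat{Q}_\alpha$ with $\tfrac{1}{\alpha-1}<0$ on $(0,1)$ and $\tfrac{1}{\alpha-1}>0$ on $(1,2]$, in both cases the desired inequality $\widehat{D}_\alpha(\rho\Vert\sigma)\geq\widehat{D}_\alpha(\mathcal{N}(\rho)\Vert\mathcal{N}(\sigma))$ follows for positive definite $\sigma$. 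This is also where the restriction $\alpha\leq2$ enters, since $x\mapsto x^\alpha$ is operator convex precisely for $\alpha\in[1,2]$.

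Finally, I would drop the positive-definiteness hypothesis by applying the above with $\sigma$ replaced by $\sigma_\varepsilon=\sigma+\varepsilon I$ and letting $\varepsilon\rightarrow0^+$. The left-hand side converges to $\widehat{D}_\alpha(\rho\Vert\sigma)$ by definition. For the right-hand side one writes $\mathcal{N}(\sigma_\varepsilon)=\mathcal{N}(\sigma)+\varepsilon\mathcal{N}(I)$ and uses $0\leq\mathcal{N}(I)\leq d_A I$ together with the monotonicity of $G_\alpha$ (equivalently $\widehat{Q}_\alpha$) in its arguments to sandwich $\widehat{Q}_\alpha(\mathcal{N}(\rho)\Vert\mathcal{N}(\sigma_\varepsilon))$ between values that both converge to $\widehat{Q}_\alpha(\mathcal{N}(\rho)\Vert\mathcal{N}(\sigma))$; applying $\tfrac{1}{\alpha-1}\ln(\cdot)$ then pins the limit of $\widehat{D}_\alpha(\mathcal{N}(\rho)\Vert\mathcal{N}(\sigma_\varepsilon))$ to $\widehat{D}_\alpha(\mathcal{N}(\rho)\Vert\mathcal{N}(\sigma))$. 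The main obstacle is the operator inequality of the second paragraph in the range $\alpha\in(1,2]$; everything else — the explicit-formula bookkeeping, the support conditions, and the outer limit — is routine once one is careful about which Kubo--Ando monotonicity is being invoked.
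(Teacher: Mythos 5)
Your proposal is correct, but it takes a genuinely different route from the paper's proof. The paper reduces to data processing under partial trace via Stinespring dilation and isometric invariance, and then applies the operator Jensen inequality $f(\mathcal{M}(X))\leq\mathcal{M}(f(X))$ (for $f(x)=x^{\alpha}$, operator convex on $(1,2]$ and operator concave on $(0,1)$) to the adjoint of the Petz recovery channel $\mathcal{P}_{\sigma_{AB},\operatorname{Tr}_B}$, exploiting the identities $\mathcal{P}(\sigma_A)=\sigma_{AB}$ and $\mathcal{P}^{\dag}(\sigma_{AB}^{-1/2}\rho_{AB}\sigma_{AB}^{-1/2})=\sigma_A^{-1/2}\rho_A\sigma_A^{-1/2}$. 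You instead work directly at the level of the channel via the operator inequalities $\mathcal{N}(G_{\alpha}(\sigma,\rho))\leq G_{\alpha}(\mathcal{N}(\sigma),\mathcal{N}(\rho))$ for $\alpha\in(0,1)$ and the reverse for $\alpha\in(1,2]$; these are true (Ando's inequality for Kubo--Ando means in the first case, and its operator-convex counterpart, e.g.\ \cite[Corollary~3.31]{HM17}, in the second), and your handling of the supports and the $\varepsilon\to0^+$ limits is sound. The trade-off: your route avoids Stinespring and the Petz map entirely and immediately yields an operator-level (not merely trace-level) monotonicity statement, but it leans on a stronger imported fact — the transformer inequality for a general channel, which itself requires joint concavity/convexity of the weighted geometric mean (Ando/Effros) on top of the single-conjugation transformer inequality the paper proves in its appendix for maps of the form $L(\cdot)L^{\dag}$. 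If you want your argument self-contained at the level of the paper, you should spell out how the channel-level inequality follows from a Kraus decomposition, the conjugation transformer inequality, and the super/subadditivity of $G_{\alpha}$ in its pair of arguments; as written, "apply the operator Jensen inequality to the noncommutative perspective" compresses exactly the step where all the work lives.
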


\begin{proof}
From Stinespring's dilation theorem \cite{Sti55}, we know that the action of a
quantum channel $\mathcal{N}$ on any linear operator $X$ can be written as
\begin{equation}
\mathcal{N}(X)=\operatorname{Tr}_{E}[VXV^{\dag}],
\end{equation}
where $V$ is an isometry and $E$ is an auxiliary system with dimension
$d_{E}\geq\ $rank$(\Gamma_{AB}^{\mathcal{N}})$, with $\Gamma_{AB}
^{\mathcal{N}}$ the Choi operator for the channel $\mathcal{N}$. As stated in
Proposition~\ref{prop:geometric-renyi-props}, the geometric R\'{e}nyi relative
entropy $\widehat{D}_{\alpha}$ is isometrically invariant. Therefore, it
suffices to establish the data-processing inequality for $\widehat{D}_{\alpha
}$ under partial trace; i.e., it suffices to show that for any state
$\rho_{AB}$ and any positive semi-definite operator $\sigma_{AB}$,
\begin{equation}
\widehat{D}_{\alpha}(\rho_{AB}\Vert\sigma_{AB})\geq\widehat{D}_{\alpha}
(\rho_{A}\Vert\sigma_{A})\qquad\forall\alpha\in\left(  0,1\right)  \cup(1,2].
\end{equation}
We now proceed to prove this inequality. We prove it for $\rho_{AB}$, and
hence $\rho_{A}$, invertible, as well as for $\sigma_{AB}$ and $\sigma_{A}$
invertible. The result follows in the general case of $\rho_{AB}$ and/or
$\rho_{A}$ non-invertible, as well as $\sigma_{AB}$ and/or $\sigma_{A}$
non-invertible, by applying the result to the invertible operators $\left(
1-\delta\right)  \rho_{AB}+\delta\pi_{AB}$ and $\sigma_{AB}+\varepsilon
I_{AB}$, with $\delta\in(0,1)$ and $\varepsilon>0$, and taking the limit
$\delta\rightarrow0^{+}$ followed by $\varepsilon\rightarrow0^{+}$, because
\begin{align}
\widehat{D}_{\alpha}(\rho_{AB}\Vert\sigma_{AB})  &  =\lim_{\varepsilon
\rightarrow0^{+}}\lim_{\delta\rightarrow0^{+}}\widehat{D}_{\alpha}(\left(
1-\delta\right)  \rho_{AB}+\delta\pi_{AB}\Vert\sigma_{AB}+\varepsilon
I_{AB}),\\
\widehat{D}_{\alpha}(\rho_{A}\Vert\sigma_{A})  &  =\lim_{\varepsilon
\rightarrow0^{+}}\lim_{\delta\rightarrow0^{+}}\widehat{D}_{\alpha}(\left(
1-\delta\right)  \rho_{A}+\delta\pi_{A}\Vert\sigma_{A}+d_{B}\varepsilon
I_{A}),
\end{align}
which follows from \eqref{eq:limit-eq-geometric-renyi} and the fact that the
dimensional factor $d_{B}$ does not affect the limit in the second quantity above.

To establish the data-processing inequality, we make use of the Petz recovery
channel for partial trace \cite{Petz1986,Petz1988}, as well as the operator
Jensen inequality \cite{HP03}. Recall that the Petz recovery channel
$\mathcal{P}_{\sigma_{AB},\operatorname{Tr}_{B}}$ for partial trace is defined
as
\begin{equation}
\mathcal{P}_{\sigma_{AB},\operatorname{Tr}_{B}}(X_{A})\equiv\mathcal{P}
(X_{A}):=\sigma_{AB}^{\frac{1}{2}}\left(  \sigma_{A}^{-\frac{1}{2}}X_{A}
\sigma_{A}^{-\frac{1}{2}}\otimes I_{B}\right)  \sigma_{AB}^{\frac{1}{2}}.
\end{equation}
The Petz recovery channel has the following property:
\begin{equation}
\mathcal{P}(\sigma_{A})=\sigma_{AB}, \label{eq:petz-map-recovers-sig-geo-DP}
\end{equation}
which can be verified by inspection. Since $\mathcal{P}_{\sigma_{AB}
,\operatorname{Tr}_{B}}$ is completely positive and trace preserving, it
follows that its adjoint
\begin{equation}
\mathcal{P}^{\dag}(Y_{AB}):=\sigma_{A}^{-\frac{1}{2}}\operatorname{Tr}
_{B}[\sigma_{AB}^{\frac{1}{2}}Y_{AB}\sigma_{AB}^{\frac{1}{2}}]\sigma
_{A}^{-\frac{1}{2}},
\end{equation}
is completely positive and unital. Observe that
\begin{equation}
\mathcal{P}^{\dag}(\sigma_{AB}^{-\frac{1}{2}}\rho_{AB}\sigma_{AB}^{-\frac
{1}{2}})=\sigma_{A}^{-\frac{1}{2}}\rho_{A}\sigma_{A}^{-\frac{1}{2}}.
\label{eq:adjoint-petz-funny-prop-geo-DP}
\end{equation}
We then find for $\alpha\in(1,2]$ that
\begin{align}
\widehat{Q}_{\alpha}(\rho_{AB}\Vert\sigma_{AB})  &  =\operatorname{Tr}
\!\left[  \sigma_{AB}\!\left(  \sigma_{AB}^{-\frac{1}{2}}\rho_{AB}\sigma
_{AB}^{-\frac{1}{2}}\right)  ^{\alpha}\right] \\
&  =\operatorname{Tr}\!\left[  \mathcal{P}(\sigma_{A})\!\left(  \sigma
_{AB}^{-\frac{1}{2}}\rho_{AB}\sigma_{AB}^{-\frac{1}{2}}\right)  ^{\alpha
}\right] \\
&  =\operatorname{Tr}\!\left[  \sigma_{A}\mathcal{P}^{\dag}\!\left(
\sigma_{AB}^{-\frac{1}{2}}\rho_{AB}\sigma_{AB}^{-\frac{1}{2}}\right)
^{\alpha}\right] \\
&  \geq\operatorname{Tr}\!\left[  \sigma_{A}\!\left(  \mathcal{P}^{\dag
}\!\left(  \sigma_{AB}^{-\frac{1}{2}}\rho_{AB}\sigma_{AB}^{-\frac{1}{2}
}\right)  \right)  ^{\alpha}\right] \\
&  =\operatorname{Tr}\!\left[  \sigma_{A}\!\left(  \sigma_{A}^{-\frac{1}{2}
}\rho_{A}\sigma_{A}^{-\frac{1}{2}}\right)  ^{\alpha}\right] \\
&  =\widehat{Q}_{\alpha}(\rho_{A}\Vert\sigma_{A}).
\end{align}
The second equality follows from \eqref{eq:petz-map-recovers-sig-geo-DP}. The
sole inequality is a consequence of the operator Jensen inequality and the
fact that $x^{\alpha}$ is operator convex for $\alpha\in(1,2]$. Indeed, for
$\mathcal{M}$ a completely positive unital map, it follows from the operator
Jensen inequality that
\begin{equation}
f(\mathcal{M}(X))\leq\mathcal{M}(f(X)) \label{eq:Choi-thm-ext-op-jensen}
\end{equation}
for Hermitian $X$ and an operator convex function $f$. The second-to-last
equality follows from \eqref{eq:adjoint-petz-funny-prop-geo-DP}.

Applying the same reasoning as above, but using the fact that $x^{\alpha}$ is
operator concave for $\alpha\in(0,1)$, we find for $\alpha\in(0,1)$ that
\begin{equation}
\widehat{Q}_{\alpha}(\rho_{A}\Vert\sigma_{A})\geq\widehat{Q}_{\alpha}
(\rho_{AB}\Vert\sigma_{AB}).
\end{equation}
Putting together the above and employing definitions, we find that the
following inequality holds for $\alpha\in(0,1)\cup(1,2]$:
\begin{equation}
\widehat{D}_{\alpha}(\rho_{AB}\Vert\sigma_{AB})\geq\widehat{D}_{\alpha}
(\rho_{A}\Vert\sigma_{A}),
\end{equation}
concluding the proof.
\end{proof}

\bigskip

With the data-processing inequality for the geometric R\'{e}nyi relative
entropy in hand, we can easily establish some additional properties.

\begin{proposition}
[Additional Properties of the Geometric R\'enyi Relative Entropy]The geometric
R\'enyi relative entropy $\widehat{D}_{\alpha}$ satisfies the following
properties for all states $\rho$ and positive semi-definite operators $\sigma$
for $\alpha\in\left(  0,1\right)  \cup(1,2]$.

\begin{enumerate}
\item If $\operatorname{Tr}[\sigma]\leq\operatorname{Tr}[\rho]=1$, then
$\widehat{D}_{\alpha}(\rho\Vert\sigma)\geq0$.

\item Faithfulness: Suppose that $\operatorname{Tr}[\sigma]\leq
\operatorname{Tr}[\rho]=1$ and let $\alpha\in(0,1)\cup(1,\infty)$. Then
$\widehat{D}_{\alpha}(\rho\Vert\sigma)=0$ if and only if $\rho=\sigma$.

\item If $\rho\leq\sigma$, then $\widehat{D}_{\alpha}(\rho\Vert\sigma)\leq0$.

\item For any positive semi-definite operator $\sigma^{\prime}$ such that
$\sigma^{\prime}\geq\sigma$, the following inequality holds $\widehat
{D}_{\alpha}(\rho\Vert\sigma^{\prime})\leq\widehat{D}_{\alpha}(\rho\Vert
\sigma)$.
\end{enumerate}
\end{proposition}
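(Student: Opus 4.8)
The plan is to derive all four items from the data-processing inequality of Theorem~\ref{thm:data-proc-geometric-renyi}, the explicit formulas of Proposition~\ref{prop:explicit-form-geometric-renyi}, and monotonicity in $\alpha$ (Proposition~\ref{prop:geometric-renyi-props}). \emph{Item 1:} I would apply data processing with respect to the trace channel $X\mapsto\operatorname{Tr}[X]$, obtaining $\widehat{D}_{\alpha}(\rho\Vert\sigma)\geq\widehat{D}_{\alpha}(1\Vert t)$ with $t:=\operatorname{Tr}[\sigma]\leq1$; evaluating the defining limit gives $\widehat{Q}_{\alpha}(1\Vert t)=\lim_{\varepsilon\to0^{+}}(t+\varepsilon)(t+\varepsilon)^{-\alpha}=t^{1-\alpha}$, hence $\widehat{D}_{\alpha}(1\Vert t)=\tfrac{1}{\alpha-1}\ln t^{1-\alpha}=-\ln t\geq0$. \emph{Item 3:} here $\rho\leq\sigma$ forces $\operatorname{supp}(\rho)\subseteq\operatorname{supp}(\sigma)$, so Proposition~\ref{prop:explicit-form-geometric-renyi} gives $\widehat{Q}_{\alpha}(\rho\Vert\sigma)=\operatorname{Tr}[\sigma T^{\alpha}]$ with $T:=\sigma^{-1/2}\rho\sigma^{-1/2}$ satisfying $0\leq T\leq\Pi_{\sigma}$, so each eigenvalue $\lambda$ of $T$ lies in $[0,1]$, where $\lambda^{\alpha}\leq\lambda$ if $\alpha>1$ and $\lambda^{\alpha}\geq\lambda$ if $\alpha\in(0,1)$. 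Since $T^{\alpha}$ and $T$ are simultaneously diagonalizable, $T^{\alpha}\leq T$ (resp.\ $T^{\alpha}\geq T$); pairing against $\sigma\geq0$ together with $\operatorname{Tr}[\sigma T]=\operatorname{Tr}[\rho]=1$ gives $\widehat{Q}_{\alpha}(\rho\Vert\sigma)\leq1$ for $\alpha>1$ and $\geq1$ for $\alpha\in(0,1)$, so in both cases $\widehat{D}_{\alpha}(\rho\Vert\sigma)=\tfrac{1}{\alpha-1}\ln\widehat{Q}_{\alpha}(\rho\Vert\sigma)\leq0$.

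\emph{Item 4:} write $\sigma'=\sigma+\Delta$ with $\Delta:=\sigma'-\sigma\geq0$, and on $\mathcal{H}\otimes\mathbb{C}^{2}$ set $\tilde{\rho}:=\rho\otimes|0\rangle\!\langle0|$ and $\tilde{\sigma}:=\sigma\otimes|0\rangle\!\langle0|+\Delta\otimes|1\rangle\!\langle1|$. Since $\tilde\rho$ is supported on $\mathcal{H}\otimes|0\rangle$, a direct computation from the defining limit — the $|1\rangle$ block of $\tilde{\sigma}_{\varepsilon}^{-1/2}$ drops out — shows $\widehat{D}_{\alpha}(\tilde{\rho}\Vert\tilde{\sigma})=\widehat{D}_{\alpha}(\rho\Vert\sigma)$ for all $\alpha\in(0,1)\cup(1,\infty)$, with no support hypotheses. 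As $\operatorname{Tr}_{\mathbb{C}^{2}}[\tilde{\rho}]=\rho$ and $\operatorname{Tr}_{\mathbb{C}^{2}}[\tilde{\sigma}]=\sigma'$, data processing under partial trace gives $\widehat{D}_{\alpha}(\rho\Vert\sigma)=\widehat{D}_{\alpha}(\tilde{\rho}\Vert\tilde{\sigma})\geq\widehat{D}_{\alpha}(\rho\Vert\sigma')$; alternatively, for $\alpha\in(1,2]$ one may use $\widehat{Q}_{\alpha}(\rho\Vert\sigma)=\operatorname{Tr}[\rho(\rho^{1/2}\sigma^{-1}\rho^{1/2})^{\alpha-1}]$ together with operator antitonicity of inversion and operator monotonicity of $x\mapsto x^{\alpha-1}$ on $[0,1]$. \emph{Item 2, ``if'':} for $\rho=\sigma$, Proposition~\ref{prop:explicit-form-geometric-renyi} gives $\widehat{Q}_{\alpha}(\rho\Vert\rho)=\operatorname{Tr}[\rho\,\Pi_{\rho}^{\alpha}]=\operatorname{Tr}[\rho]=1$, hence $\widehat{D}_{\alpha}(\rho\Vert\rho)=0$.

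\emph{Item 2, ``only if'':} suppose $\widehat{D}_{\alpha}(\rho\Vert\sigma)=0$. For $\alpha\in(0,1)\cup(1,2]$ I would apply data processing to the two-outcome measurement channel $\mathcal{M}_{E}(X)=\operatorname{Tr}[EX]|0\rangle\!\langle0|+\operatorname{Tr}[(I-E)X]|1\rangle\!\langle1|$, for an arbitrary effect $0\leq E\leq I$; since $\mathcal{M}_{E}(\rho)$ and $\mathcal{M}_{E}(\sigma)$ commute, the explicit formula reduces to the classical R\'enyi divergence, so $0=\widehat{D}_{\alpha}(\rho\Vert\sigma)\geq\widehat{D}_{\alpha}(\mathcal{M}_{E}(\rho)\Vert\mathcal{M}_{E}(\sigma))=D_{\alpha}(p\Vert q)\geq0$, where $p=(\operatorname{Tr}[E\rho],\operatorname{Tr}[(I-E)\rho])$, $q=(\operatorname{Tr}[E\sigma],\operatorname{Tr}[(I-E)\sigma])$ and $\sum_{x}q_{x}\leq\sum_{x}p_{x}=1$; faithfulness of the classical R\'enyi divergence (a consequence of H\"older's and the reverse H\"older inequalities) forces $p=q$, i.e.\ $\operatorname{Tr}[E\rho]=\operatorname{Tr}[E\sigma]$, and since $E$ is arbitrary, $\rho=\sigma$. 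For $\alpha\in(2,\infty)$, monotonicity in $\alpha$ gives $\widehat{D}_{2}(\rho\Vert\sigma)\leq\widehat{D}_{\alpha}(\rho\Vert\sigma)=0$ while item~1 gives $\widehat{D}_{2}(\rho\Vert\sigma)\geq0$, so $\widehat{D}_{2}(\rho\Vert\sigma)=0$ and the case $\alpha=2$ just treated yields $\rho=\sigma$. I expect the main obstacle to be precisely this last step: data processing for the geometric R\'enyi relative entropy is available only on $(0,1)\cup(1,2]$, so faithfulness (and non-negativity) for $\alpha>2$ cannot be had directly and must be bootstrapped from $\alpha=2$ via monotonicity in $\alpha$ and item~1. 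A secondary care point is verifying that $\widehat{D}_{\alpha}$ collapses to the classical R\'enyi divergence on commuting operators, including the support bookkeeping for $\alpha\in(0,1)$ through the $\tilde{\rho}$-formula of Proposition~\ref{prop:explicit-form-geometric-renyi}.
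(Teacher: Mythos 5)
Your proof is correct, and items 1, 2 (``if''), and the $\alpha>2$ bootstrap coincide with the paper's argument (the paper likewise reduces $\alpha>2$ to $\alpha\leq2$ via monotonicity in $\alpha$; you make explicit the needed appeal to item~1 to upgrade $\widehat{D}_{2}\leq0$ to $\widehat{D}_{2}=0$, which the paper leaves implicit). Two places diverge genuinely. For item~3 the paper embeds the pair into block operators $\hat{\rho}=|0\rangle\!\langle0|\otimes\rho$ and $\hat{\sigma}=|0\rangle\!\langle0|\otimes\rho+|1\rangle\!\langle1|\otimes(\sigma-\rho)$ and invokes the direct-sum property plus data processing under partial trace, so its conclusion is confined to the data-processing range $\alpha\in(0,1)\cup(1,2]$; your spectral argument on $T=\sigma^{-1/2}\rho\sigma^{-1/2}$ (noting $0\leq T\leq\Pi_{\sigma}$, comparing $T^{\alpha}$ with $T$ eigenvalue by eigenvalue, and using $\operatorname{Tr}[\sigma T]=\operatorname{Tr}[\rho]=1$) is elementary and valid for all $\alpha\in(0,1)\cup(1,\infty)$, a mild strengthening. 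Your item~4 is the same block-embedding trick as the paper's, just verified by hand from the defining limit rather than by citing the direct-sum property. For item~2 (``only if'') the paper applies data processing to an informationally complete measurement and concludes from faithfulness of the classical R\'enyi divergence; you instead run over all two-outcome measurements $\{E,I-E\}$ and recover $\operatorname{Tr}[E\rho]=\operatorname{Tr}[E\sigma]$ for every effect $E$. Both routes rest on the same reduction of $\widehat{D}_{\alpha}$ to the classical divergence on commuting outputs (and on the classical faithfulness statement in the subnormalized case $\sum_{x}q_{x}\leq1$, which your H\"older argument supplies); your version avoids invoking the existence of an informationally complete POVM at the cost of quantifying over all effects. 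No gaps.
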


\begin{proof}
\begin{enumerate}
\item Apply the data processing inequality with the channel being the full
trace-out channel:
\begin{align}
\widehat{D}_{\alpha}(\rho\Vert\sigma)  &  \geq\widehat{D}_{\alpha
}(\operatorname{Tr}[\rho]\Vert\operatorname{Tr}[\sigma])\\
&  =\frac{1}{\alpha-1}\ln\left[  \left(  \operatorname{Tr}[\rho]\right)
^{\alpha}\left(  \operatorname{Tr}[\sigma]\right)  ^{1-\alpha}\right] \\
&  =-\ln\operatorname{Tr}[\sigma]\\
&  \geq0.
\end{align}

\item If $\rho=\sigma$, then it follows by direct evaluation that $\widehat
{D}_{\alpha}(\rho\Vert\sigma)=0$. Suppose first that $\left(  0,1\right)
\cup(1,2]$. Then $\widehat{D}_{\alpha}(\rho\Vert\sigma)=0$ implies that
$\widehat{D}_{\alpha}(\mathcal{M}(\rho)\Vert\mathcal{M}(\sigma))=0$ for all
measurement channels $\mathcal{M}$. This includes informationally complete
measurements \cite{P77,B91,RBSC04}. By applying the faithfulness of the
classical R\'{e}nyi relative entropy and the informationally completeness
property, we conclude that $\rho=\sigma$. To get the range outside the
data-processing interval of $\left(  0,1\right)  \cup(1,2]$, note that
$\widehat{D}_{\alpha}(\rho\Vert\sigma)=0$ for $\alpha>2$ implies by
monotonicity (Property~2\ of Proposition~\ref{prop:geometric-renyi-props})
that $\widehat{D}_{\alpha}(\rho\Vert\sigma)=0$ for $\alpha\leq2$. Then it
follows that $\rho=\sigma$. The other implication follows for $\alpha
\in(0,1)\cup(1,\infty)$ by direct evaluation.

\item Consider that $\rho\leq\sigma$ implies that $\sigma-\rho\geq0$. Then
define the following positive semi-definite operators:
\begin{align}
\hat{\rho}  &  :=|0\rangle\!\langle0|\otimes\rho,\\
\hat{\sigma}  &  :=|0\rangle\!\langle0|\otimes\rho+|1\rangle\!\langle
1|\otimes\left(  \sigma-\rho\right)  .
\end{align}
By exploiting the direct-sum property of geometric R\'{e}nyi relative entropy
(Proposition~\ref{prop:geometric-renyi-props}) and the data-processing
inequality (Theorem~\ref{thm:data-proc-geometric-renyi}), we find that
\begin{equation}
0=\widehat{D}_{\alpha}(\rho\Vert\rho)=\widehat{D}_{\alpha}(\hat{\rho}\Vert
\hat{\sigma})\geq\widehat{D}_{\alpha}(\rho\Vert\sigma),
\end{equation}
where the inequality follows from data processing with respect to partial
trace over the classical register.

\item Similar to the above proof, the condition $\sigma^{\prime}\geq\sigma$
implies that $\sigma^{\prime}-\sigma\geq0$. Then define the following positive
semi-definite operators:
\begin{align}
\hat{\rho}  &  :=|0\rangle\!\langle0|\otimes\rho,\\
\hat{\sigma}  &  :=|0\rangle\!\langle0|\otimes\sigma+|1\rangle\!\langle
1|\otimes\left(  \sigma^{\prime}-\sigma\right)  .
\end{align}
By exploiting the direct-sum property of geometric R\'{e}nyi relative entropy
(Proposition~\ref{prop:geometric-renyi-props}) and the data-processing
inequality (Theorem~\ref{thm:data-proc-geometric-renyi}), we find that
\begin{equation}
\widehat{D}_{\alpha}(\rho\Vert\sigma)=\widehat{D}_{\alpha}(\hat{\rho}\Vert
\hat{\sigma})\geq\widehat{D}_{\alpha}(\rho\Vert\sigma^{\prime}),
\end{equation}
where the inequality follows from data processing with respect to partial
trace over the classical register.
\end{enumerate}
\end{proof}

\bigskip

The data-processing inequality for the geometric R\'{e}nyi relative entropy
can be written using the geometric R\'{e}nyi relative quasi-entropy
$\widehat{Q}_{\alpha}(\rho\Vert\sigma)$ as
\begin{equation}
\frac{1}{\alpha-1} \ln\widehat{Q}_{\alpha}(\rho\Vert\sigma)\geq\frac{1}
{\alpha-1} \ln\widehat{Q}_{\alpha}(\mathcal{N}(\rho)\Vert\mathcal{N}(\sigma)).
\end{equation}
Since $\alpha-1$ is negative for $\alpha\in(0,1)$, we can use the monotonicity
of the function $\ln$ to obtain
\begin{align}
\widehat{Q}_{\alpha}(\rho\Vert\sigma)  &  \geq\widehat{Q}_{\alpha}
(\mathcal{N}(\rho)\Vert\mathcal{N}(\sigma)),\qquad\text{for }\alpha\in(1,2],\\
\widehat{Q}_{\alpha}(\rho\Vert\sigma)  &  \leq\widehat{Q}_{\alpha}
(\mathcal{N}(\rho)\Vert\mathcal{N}(\sigma)),\qquad\text{for }\alpha\in(0,1).
\end{align}
We can use this to establish some convexity statements for the geometric
R\'{e}nyi relative entropy.

\begin{proposition}
Let $p:\mathcal{X}\rightarrow\left[  0,1\right]  $ be a probability
distribution over a finite alphabet $\mathcal{X}$ with associated $\left\vert
\mathcal{X}\right\vert $-dimensional system $X$, let $\left\{  \rho_{A}
^{x}:x\in\mathcal{X}\right\}  $ be a set of states on system $A$, and let
$\left\{  \sigma_{A}^{x}:x\in\mathcal{X}\right\}  $ be a set of positive
semi-definite operators on $A$. Then, for $\alpha\in(1,2]$,
\begin{equation}
\widehat{Q}_{\alpha}\!\left(  \sum_{x\in\mathcal{X}}p(x)\rho_{A}
^{x}\middle\Vert\sum_{x\in\mathcal{X}}p(x)\sigma_{A}^{x}\right)  \leq
\sum_{x\in\mathcal{X}}p(x)\widehat{Q}_{\alpha}(\rho_{A}^{x}\Vert\sigma_{A}
^{x}), \label{eq:convex-quasi-geometric-renyi-rel-alpha-higher-1}
\end{equation}
and for $\alpha\in(0,1)$,
\begin{equation}
\widehat{Q}_{\alpha}\!\left(  \sum_{x\in\mathcal{X}}p(x)\rho_{A}
^{x}\middle\Vert\sum_{x\in\mathcal{X}}p(x)\sigma_{A}^{x}\right)  \geq
\sum_{x\in\mathcal{X}}p(x)\widehat{Q}_{\alpha}(\rho_{A}^{x}\Vert\sigma_{A}
^{x}).
\end{equation}
Consequently, the geometric R\'{e}nyi relative entropy $\widehat{D}_{\alpha}$
is jointly convex for $\alpha\in(0,1)$:
\begin{equation}
\widehat{D}_{\alpha}\!\left(  \sum_{x\in\mathcal{X}}p(x)\rho_{A}
^{x}\middle\Vert\sum_{x\in\mathcal{X}}p(x)\sigma_{A}^{x}\right)  \leq
\sum_{x\in\mathcal{X}}p(x)\widehat{D}_{\alpha}(\rho_{A}^{x}\Vert\sigma_{A}
^{x}). \label{eq:quasi-convex-quasi-geometric-renyi-rel-alpha-higher-1}
\end{equation}

\end{proposition}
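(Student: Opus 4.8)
The plan is to derive all three inequalities from two ingredients already in hand: the direct-sum property of the geometric R\'enyi relative quasi-entropy (Property~4 of Proposition~\ref{prop:geometric-renyi-props}) and the data-processing inequality under partial trace (Theorem~\ref{thm:data-proc-geometric-renyi}, restated for $\widehat{Q}_{\alpha}$ in the remarks just preceding the statement). First I would embed the convex combinations into block-diagonal operators on a joint classical--quantum system. Specifically, set
\begin{align}
\rho_{XA} & :=\sum_{x\in\mathcal{X}}p(x)|x\rangle\!\langle x|_{X}\otimes\rho_{A}^{x},\\
\sigma_{XA} & :=\sum_{x\in\mathcal{X}}p(x)|x\rangle\!\langle x|_{X}\otimes\sigma_{A}^{x},
\end{align}
so that, applying \eqref{eq:direct-sum-prop-geometric-renyi} with $q(x)=p(x)$, one gets $\widehat{Q}_{\alpha}(\rho_{XA}\Vert\sigma_{XA})=\sum_{x}p(x)^{\alpha}p(x)^{1-\alpha}\widehat{Q}_{\alpha}(\rho_{A}^{x}\Vert\sigma_{A}^{x})=\sum_{x}p(x)\widehat{Q}_{\alpha}(\rho_{A}^{x}\Vert\sigma_{A}^{x})$.

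Next I would apply the partial-trace channel $\operatorname{Tr}_{X}$, noting $\operatorname{Tr}_{X}[\rho_{XA}]=\sum_{x}p(x)\rho_{A}^{x}$ and $\operatorname{Tr}_{X}[\sigma_{XA}]=\sum_{x}p(x)\sigma_{A}^{x}$. The data-processing inequality for $\widehat{Q}_{\alpha}$ then gives $\widehat{Q}_{\alpha}(\rho_{XA}\Vert\sigma_{XA})\geq\widehat{Q}_{\alpha}(\operatorname{Tr}_{X}[\rho_{XA}]\Vert\operatorname{Tr}_{X}[\sigma_{XA}])$ when $\alpha\in(1,2]$ (since $x^{\alpha}$ is operator convex there, so $\widehat{Q}_{\alpha}$ decreases under channels), and the reversed inequality $\widehat{Q}_{\alpha}(\rho_{XA}\Vert\sigma_{XA})\leq\widehat{Q}_{\alpha}(\operatorname{Tr}_{X}[\rho_{XA}]\Vert\operatorname{Tr}_{X}[\sigma_{XA}])$ when $\alpha\in(0,1)$. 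Combining with the identity from the previous paragraph yields \eqref{eq:convex-quasi-geometric-renyi-rel-alpha-higher-1} and its $\alpha\in(0,1)$ counterpart immediately.

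For the joint convexity of $\widehat{D}_{\alpha}$ on $(0,1)$, I would start from the quasi-entropy inequality $\widehat{Q}_{\alpha}(\sum_{x}p(x)\rho_{A}^{x}\Vert\sum_{x}p(x)\sigma_{A}^{x})\geq\sum_{x}p(x)\widehat{Q}_{\alpha}(\rho_{A}^{x}\Vert\sigma_{A}^{x})$, apply the (monotone increasing) logarithm, and then invoke concavity of $\ln$ to obtain $\ln\widehat{Q}_{\alpha}(\sum_{x}p(x)\rho_{A}^{x}\Vert\sum_{x}p(x)\sigma_{A}^{x})\geq\sum_{x}p(x)\ln\widehat{Q}_{\alpha}(\rho_{A}^{x}\Vert\sigma_{A}^{x})$. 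Finally, multiplying through by $\tfrac{1}{\alpha-1}<0$ reverses the inequality and produces \eqref{eq:quasi-convex-quasi-geometric-renyi-rel-alpha-higher-1}. The only point requiring a little care is the bookkeeping of signs (the factor $\alpha-1$ is negative on $(0,1)$, which is exactly why data processing bends $\widehat{Q}_{\alpha}$ in the opposite direction there, and why the final logarithmic step flips back to give convexity of $\widehat{D}_{\alpha}$); there is no serious technical obstacle here since both key lemmas are already established.
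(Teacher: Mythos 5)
Your proposal is correct and follows essentially the same route as the paper's proof: embed the ensembles as classical--quantum block operators, apply the direct-sum property with $q=p$, trace out the classical register using the data-processing inequality (with the direction of the $\widehat{Q}_{\alpha}$ inequality flipping between $\alpha\in(1,2]$ and $\alpha\in(0,1)$ exactly as you note), and then obtain joint convexity of $\widehat{D}_{\alpha}$ on $(0,1)$ by taking the logarithm, using its concavity, and multiplying by the negative factor $1/(\alpha-1)$. Your explicit sign bookkeeping in the final step is just the careful spelling-out of the paper's terse remark, so nothing further is needed.
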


\begin{proof}
The first two inequalities follow directly from the direct-sum property of
geometric R\'{e}nyi relative entropy
(Proposition~\ref{prop:geometric-renyi-props}) and the data-processing
inequality (Theorem~\ref{thm:data-proc-geometric-renyi}). The last inequality
follows from the first by applying the logarithm and scaling by $1/\left(
\alpha-1\right)  $ and taking a maximum.
\end{proof}

\bigskip

Although the geometric R\'{e}nyi relative entropy is not jointly convex for
$\alpha\in(1,2]\,$, it is jointly quasi-convex, in the sense that
\begin{equation}
\widehat{D}_{\alpha}\!\left(  \sum_{x\in\mathcal{X}}p(x)\rho_{A}
^{x}\middle\Vert\sum_{x\in\mathcal{X}}p(x)\sigma_{A}^{x}\right)  \leq
\max_{x\in\mathcal{X}}\widehat{D}_{\alpha}(\rho_{A}^{x}\Vert\sigma_{A}^{x}),
\end{equation}
for any finite alphabet $\mathcal{X}$, probability distribution $p:\mathcal{X}
\rightarrow\left[  0,1\right]  $, set $\left\{  \rho_{A}^{x}:x\in
\mathcal{X}\right\}  $ of states, and set $\left\{  \sigma_{A}^{x}
:x\in\mathcal{X}\right\}  $ of positive semi-definite operators. Indeed, from
\eqref{eq:convex-quasi-geometric-renyi-rel-alpha-higher-1}, we immediately
obtain
\begin{equation}
\widehat{Q}_{\alpha}\!\left(  \sum_{x\in\mathcal{X}}p(x)\rho_{A}
^{x}\middle\Vert\sum_{x\in\mathcal{X}}p(x)\sigma_{A}^{x}\right)  \leq
\max_{x\in\mathcal{X}}\widehat{Q}_{\alpha}(\rho_{A}^{x}\Vert\sigma_{A}^{x}).
\end{equation}
Taking the logarithm and multiplying by $\frac{1}{\alpha-1}$ on both sides of
this inequality leads to \eqref{eq:quasi-convex-quasi-geometric-renyi-rel-alpha-higher-1}.

The geometric R\'{e}nyi relative entropy has another interpretation, which was
discovered in \cite{Mat13,Matsumoto2018} and is worthwhile to mention.

\begin{proposition}
[Geometric R\'{e}nyi relative entropy from classical preparations]
\label{prop:geometric-renyi-from-classical-preps}Let $\rho$ be a state and
$\sigma$ a positive semi-definite operator satisfying $\operatorname{supp}
(\rho)\subseteq\operatorname{supp}(\sigma)$. For all $\alpha\in(0,1)\cup
(1,2]$, the geometric R\'{e}nyi relative entropy is equal to the smallest
value that the classical R\'{e}nyi relative entropy can take by minimizing
over classical--quantum channels that realize the state $\rho$ and the
positive semi-definite operator $\sigma$. That is, the following equality
holds
\begin{equation}
\widehat{D}_{\alpha}(\rho\Vert\sigma)=\inf_{\left\{  p,q,\mathcal{P}\right\}
}\left\{  D_{\alpha}(p\Vert q):\mathcal{P}(p)=\rho,\mathcal{P}(q)=\sigma
\right\}  , \label{eq:matsumoto-geo-renyi-preps}
\end{equation}
where the classical R\'{e}nyi relative entropy is defined as
\begin{equation}
D_{\alpha}(p\Vert q) := \frac{1}{\alpha- 1} \sum_{x \in\mathcal{X}}
p(x)^{\alpha} q(x)^{1-\alpha},
\end{equation}
the channel $\mathcal{P}$ is a classical--quantum channel, $p:\mathcal{X}
\rightarrow\left[  0,1\right]  $ is a probability distribution over a finite
alphabet $\mathcal{X}$, and $q:\mathcal{X}\rightarrow(0,\infty)$ is a positive
function on $\mathcal{X}$.
\end{proposition}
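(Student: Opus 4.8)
The plan is to establish \eqref{eq:matsumoto-geo-renyi-preps} by showing that the right-hand side is simultaneously a lower and an upper bound for $\widehat{D}_{\alpha}(\rho\Vert\sigma)$. For the lower bound, I would first observe that a probability distribution $p$ and a positive function $q$, regarded as commuting (diagonal) positive semi-definite operators, satisfy $\widehat{D}_{\alpha}(p\Vert q)=D_{\alpha}(p\Vert q)$; this follows from the direct-sum property in Proposition~\ref{prop:geometric-renyi-props} taken with a trivial quantum system, together with the explicit formula of Proposition~\ref{prop:explicit-form-geometric-renyi}. Then, for any classical--quantum channel $\mathcal{P}$ with $\mathcal{P}(p)=\rho$ and $\mathcal{P}(q)=\sigma$, the data-processing inequality of Theorem~\ref{thm:data-proc-geometric-renyi} (valid precisely for $\alpha\in(0,1)\cup(1,2]$) gives
\begin{equation}
D_{\alpha}(p\Vert q)=\widehat{D}_{\alpha}(p\Vert q)\geq\widehat{D}_{\alpha}(\mathcal{P}(p)\Vert\mathcal{P}(q))=\widehat{D}_{\alpha}(\rho\Vert\sigma),
\end{equation}
and taking the infimum over all admissible triples $\{p,q,\mathcal{P}\}$ yields $\inf\{\cdots\}\geq\widehat{D}_{\alpha}(\rho\Vert\sigma)$.

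For the reverse inequality I would exhibit one triple that attains equality. Restricting to $\operatorname{supp}(\sigma)$ (where $\operatorname{supp}(\rho)\subseteq\operatorname{supp}(\sigma)$ lets us treat $\sigma$ as invertible), set $T:=\sigma^{-1/2}\rho\,\sigma^{-1/2}$, a positive semi-definite operator with $\rho=\sigma^{1/2}T\sigma^{1/2}$. Diagonalize $T=UDU^{\dag}$ with $U$ unitary on $\operatorname{supp}(\sigma)$ and $D=\operatorname{diag}(t_{x})_{x\in\mathcal{X}}$, where $\mathcal{X}$ labels an orthonormal basis $\{|x\rangle\}$ of $\operatorname{supp}(\sigma)$, and define the (generally non-orthogonal) vectors $|v_{x}\rangle:=\sigma^{1/2}U|x\rangle$, so that $\sigma=\sum_{x}|v_{x}\rangle\!\langle v_{x}|$ and $\rho=\sum_{x}t_{x}|v_{x}\rangle\!\langle v_{x}|$. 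I would then set $q(x):=\langle v_{x}|v_{x}\rangle>0$ and $p(x):=t_{x}\,q(x)$ (so that $\sum_{x}p(x)=\operatorname{Tr}[\rho]=1$), and let $\mathcal{P}$ be the classical--quantum channel sending $|x\rangle\!\langle x|\mapsto|v_{x}\rangle\!\langle v_{x}|/q(x)$. Direct computation gives $\mathcal{P}(p)=\rho$ and $\mathcal{P}(q)=\sigma$, and using $p(x)^{\alpha}q(x)^{1-\alpha}=t_{x}^{\alpha}q(x)$ one finds
\begin{equation}
\sum_{x}p(x)^{\alpha}q(x)^{1-\alpha}=\sum_{x}t_{x}^{\alpha}\langle v_{x}|v_{x}\rangle=\operatorname{Tr}[U D^{\alpha}U^{\dag}\sigma]=\operatorname{Tr}[T^{\alpha}\sigma]=\widehat{Q}_{\alpha}(\rho\Vert\sigma),
\end{equation}
where the last equality is the explicit formula of Proposition~\ref{prop:explicit-form-geometric-renyi}. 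Hence $D_{\alpha}(p\Vert q)=\widehat{D}_{\alpha}(\rho\Vert\sigma)$, so the infimum is at most $\widehat{D}_{\alpha}(\rho\Vert\sigma)$, and combining with the first part proves \eqref{eq:matsumoto-geo-renyi-preps}.

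I expect the main obstacle to be identifying the optimal preparation: the crucial point is that $\rho$ and $\sigma$ are simultaneously ``diagonalized'' by the common, non-orthogonal family $\{|v_{x}\rangle\}$ obtained from diagonalizing the sandwiched operator $\sigma^{-1/2}\rho\,\sigma^{-1/2}$, which is exactly what makes the classical test realized by $\mathcal{P}$ sufficient and hence makes the data-processing bound tight. Beyond this, the remaining work is routine: verifying that $\mathcal{P}$ is a bona fide trace-preserving classical--quantum channel, and checking that the degenerate cases ($t_{x}=0$ when $\operatorname{supp}(\rho)\subsetneq\operatorname{supp}(\sigma)$, which merely makes $p(x)=0$) cause no problem for the classical R\'enyi quantity. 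Note also that the achievability construction itself works for all $\alpha\in(0,1)\cup(1,\infty)$; it is only the lower bound, via data processing, that forces the restriction to $\alpha\in(0,1)\cup(1,2]$ in the statement.
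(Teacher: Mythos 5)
Your proposal is correct and follows essentially the same route as the paper: the lower bound via the data-processing inequality for $\widehat{D}_{\alpha}$ applied to the embedded classical pair, and achievability via the preparation channel built from the spectral decomposition of $\sigma^{-1/2}\rho\,\sigma^{-1/2}$, with $q(x)=\operatorname{Tr}[\Pi_x\sigma]$ and $p(x)=\lambda_x q(x)$ (your rank-one refinement $|v_x\rangle=\sigma^{1/2}U|x\rangle$ is the same construction written with one-dimensional eigenprojections). The only difference is cosmetic, and your closing remarks about which half of the argument forces the restriction to $\alpha\in(0,1)\cup(1,2]$ are accurate.
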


\begin{proof}
First, let us define the classical (diagonal)\ state $\omega(p)$ and diagonal
positive semi-definite operator $\omega(q)$ as an embedding of the respective
probability distribution $p$ and positive function $q$:
\begin{equation}
\omega(p):=\sum_{x\in\mathcal{X}}p(x)|x\rangle\!\langle x|,\qquad\omega
(q):=\sum_{x\in\mathcal{X}}q(x)|x\rangle\!\langle x|,
\end{equation}
and suppose that there exists a quantum channel $\mathcal{P}$ such that
\begin{equation}
\mathcal{P}(\omega(p))=\rho,\qquad\mathcal{P}(\omega(q))=\sigma.
\label{eq:constraint-p-q-geo-renyi}
\end{equation}
Then consider the following chain of inequalities:
\begin{align}
D_{\alpha}(p\Vert q)  &  =\widehat{D}_{\alpha}(\omega(p)\Vert\omega(q))\\
&  \geq\widehat{D}_{\alpha}(\mathcal{P}(\omega(p))\Vert\mathcal{P}
(\omega(q)))\\
&  =\widehat{D}_{\alpha}(\rho\Vert\sigma).
\end{align}
The first equality follows because the geometric R\'{e}nyi relative entropy
reduces to the classical R\'{e}nyi relative entropy for commuting operators.
The inequality is a consequence of the data-processing inequality for the
geometric R\'{e}nyi relative entropy
(Theorem~\ref{thm:data-proc-geometric-renyi}). The final equality follows from
the constraint in \eqref{eq:constraint-p-q-geo-renyi}. Since the inequality
holds for arbitrary $p$, $q$, and $\mathcal{P}$ satisfying
\eqref{eq:constraint-p-q-geo-renyi}, we conclude that
\begin{equation}
\inf_{\left\{  p,q,\mathcal{P}\right\}  }\left\{  D_{\alpha}(p\Vert
q):\mathcal{P}(p)=\rho,\mathcal{P}(q)=\sigma\right\}  \geq\widehat{D}_{\alpha
}(\rho\Vert\sigma). \label{eq:ineq-classical-prep-bigger-geo-ren}
\end{equation}

The equality in \eqref{eq:matsumoto-geo-renyi-preps}\ then follows by
demonstrating a specific distribution $p$, positive function $q$, and
preparation channel $\mathcal{P}$ that saturate the inequality in
\eqref{eq:ineq-classical-prep-bigger-geo-ren}. The optimal choices of $p$,
$q$, and $\mathcal{P}$ are given by
\begin{align}
p(x)  &  :=\lambda_{x}q(x),\label{eq:optimal-choices-classical-preps-1}\\
q(x)  &  :=\operatorname{Tr}[\Pi_{x}\sigma],\\
\mathcal{P}(\cdot)  &  :=\sum_{x}\langle x|(\cdot)|x\rangle\frac{\sigma
^{\frac{1}{2}}\Pi_{x}\sigma^{\frac{1}{2}}}{q(x)},
\label{eq:optimal-choices-classical-preps-3}
\end{align}
where the spectral decomposition of the positive semi-definite operator
$\sigma^{-\frac{1}{2}}\rho\sigma^{-\frac{1}{2}}$ is given by
\begin{equation}
\sigma^{-\frac{1}{2}}\rho\sigma^{-\frac{1}{2}}=\sum_{x}\lambda_{x}\Pi_{x}.
\label{eq:def-Delta-geo-renyi}
\end{equation}
The choice of $p(x)$ above is a probability distribution because
\begin{align}
\sum_{x}p(x)  &  =\sum_{x}\lambda_{x}q(x) =\sum_{x}\lambda_{x}
\operatorname{Tr}[\Pi_{x}\sigma] =\operatorname{Tr}[\sigma^{-\frac{1}{2}}
\rho\sigma^{-\frac{1}{2}}\sigma] =\operatorname{Tr}[\Pi_{\sigma}\rho] =1.
\end{align}
The preparation channel $\mathcal{P}$ is a classical--quantum channel that
measures the input in the basis $\{|x\rangle\}_{x}$ and prepares the state
$\frac{\sigma^{\frac{1}{2}}\Pi_{x}\sigma^{\frac{1}{2}}}{q(x)}$ if the
measurement outcome is $x$. We find that
\begin{align}
\mathcal{P}(\omega(p))  &  =\sum_{x}\frac{p(x)}{q(x)}\sigma^{\frac{1}{2}}
\Pi_{x}\sigma^{\frac{1}{2}} =\sum_{x}\frac{\lambda_{x}q(x)}{q(x)}\sigma
^{\frac{1}{2}}\Pi_{x}\sigma^{\frac{1}{2}} =\sigma^{\frac{1}{2}}\left(
\sum_{x}\lambda_{x}\Pi_{x}\right)  \sigma^{\frac{1}{2}}\\
&  =\sigma^{\frac{1}{2}}\left(  \sigma^{-\frac{1}{2}}\rho\sigma^{-\frac{1}{2}
}\right)  \sigma^{\frac{1}{2}} =\Pi_{\sigma}\rho\Pi_{\sigma} =\rho,
\end{align}
and
\begin{align}
\mathcal{P}(\omega(q))  &  =\sum_{x}\frac{q(x)}{q(x)}\sigma^{\frac{1}{2}}
\Pi_{x}\sigma^{\frac{1}{2}} =\sigma^{\frac{1}{2}}\left(  \sum_{x}\Pi
_{x}\right)  \sigma^{\frac{1}{2}} =\sigma.
\end{align}
Finally, consider the classical R\'{e}nyi relative quasi-entropy:
\begin{align}
\sum_{x}p(x)^{\alpha}q(x)^{1-\alpha}  &  =\sum_{x}\left(  \lambda
_{x}q(x)\right)  ^{\alpha}q(x)^{1-\alpha} =\sum_{x}\lambda_{x}^{\alpha}q(x)
=\sum_{x}\lambda_{x}^{\alpha}\operatorname{Tr}[\Pi_{x}\sigma]\\
&  =\operatorname{Tr}\!\left[  \sigma\!\left(  \sum_{x}\lambda_{x}^{\alpha}
\Pi_{x}\right)  \right]  =\operatorname{Tr}\!\left[  \sigma\!\left(
\sigma^{-\frac{1}{2}}\rho\sigma^{-\frac{1}{2}}\right)  ^{\alpha}\right]
=\widehat{Q}_{\alpha}(\rho\Vert\sigma),
\end{align}
where the second-to-last equality follows from the spectral decomposition in
\eqref{eq:def-Delta-geo-renyi} and the form of the geometric R\'{e}nyi
relative quasi-entropy from
Proposition~\ref{prop:explicit-form-geometric-renyi}. As a consequence of the
equality
\begin{equation}
\sum_{x}p(x)^{\alpha}q(x)^{1-\alpha}=\widehat{Q}_{\alpha}(\rho\Vert\sigma),
\end{equation}
and the fact that these choices of $p$, $q$, and $\mathcal{P}$ satisfy the
constraints $\mathcal{P}(p)=\rho$ and $\mathcal{P}(q)=\sigma$, we conclude
that
\begin{equation}
D_{\alpha}(p\Vert q)=\widehat{D}_{\alpha}(\rho\Vert\sigma).
\end{equation}
Combining this equality with \eqref{eq:ineq-classical-prep-bigger-geo-ren}, we
conclude the equality in \eqref{eq:matsumoto-geo-renyi-preps}.
\end{proof}

\bigskip

The following proposition recalls the ordering between the sandwiched, Petz--,
and geometric R\'{e}nyi relative entropies for the interval $\alpha
\in(0,1)\cup(1,2]$. The first inequality in
Proposition~\ref{prop:sand-Petz-geo-ineqs} was established for $\alpha
\in(1,2]$ in \cite{WWY14}\ and for $\alpha\in(0,1)$ in \cite{DL14limit}, by
employing the Araki--Lieb--Thirring inequality \cite{Araki1990,LT76}. The
second inequality was established by \cite{Mat13,Matsumoto2018} and reviewed
in \cite{T15book}. It follows by applying similar reasoning as in the proof of
Proposition~\ref{prop:geometric-renyi-from-classical-preps}.

\begin{proposition}
\label{prop:sand-Petz-geo-ineqs}Let $\rho$ be a state and $\sigma$ a positive
semi-definite operator. For $\alpha\in(0,1)\cup(1,2]$, the following
inequalities hold
\begin{equation}
\widetilde{D}_{\alpha}(\rho\Vert\sigma)\leq D_{\alpha}(\rho\Vert\sigma
)\leq\widehat{D}_{\alpha}(\rho\Vert\sigma),
\label{eq:sandwiched-Petz-geometric-ineqs}
\end{equation}
for the sandwiched ($\widetilde{D}_{\alpha}$), Petz ($D_{\alpha}$), and
geometric ($\widehat{D}_{\alpha}$) R\'{e}nyi relative entropies.
\end{proposition}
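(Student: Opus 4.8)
The plan is to prove the two inequalities in \eqref{eq:sandwiched-Petz-geometric-ineqs} separately, reducing in both cases to the situation where $\sigma$ is positive definite. To make this reduction legitimate I would replace $\sigma$ by $\sigma_\varepsilon := \sigma + \varepsilon I$, prove the inequality for each $\varepsilon>0$, and then let $\varepsilon \to 0^+$: this works because all three quantities have the limit representations $\widetilde{D}_\alpha(\rho\Vert\sigma) = \lim_{\varepsilon\to0^+}\widetilde{D}_\alpha(\rho\Vert\sigma_\varepsilon)$ (Eq.~\eqref{eq:limit-sandwiched-to-def}), $D_\alpha(\rho\Vert\sigma) = \lim_{\varepsilon\to0^+}D_\alpha(\rho\Vert\sigma_\varepsilon)$ (the Petz limit from \cite{MH11}), and $\widehat{D}_\alpha(\rho\Vert\sigma) = \lim_{\varepsilon\to0^+}\widehat{D}_\alpha(\rho\Vert\sigma_\varepsilon)$. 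The last of these follows from Definition~\ref{def:geometric-renyi-rel-ent} once one notes that $\sigma_\varepsilon$ is positive definite, so $\operatorname{supp}(\rho)\subseteq\operatorname{supp}(\sigma_\varepsilon)$ trivially and Proposition~\ref{prop:explicit-form-geometric-renyi} gives $\widehat{Q}_\alpha(\rho\Vert\sigma_\varepsilon) = \operatorname{Tr}[\sigma_\varepsilon(\sigma_\varepsilon^{-1/2}\rho\,\sigma_\varepsilon^{-1/2})^\alpha]$ with no further regularization; hence the defining $\varepsilon$-limit of $\widehat Q_\alpha(\rho\Vert\sigma)$ is literally $\lim_{\varepsilon\to0^+}\widehat Q_\alpha(\rho\Vert\sigma_\varepsilon)$. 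Since a pointwise inequality holding for every $\varepsilon>0$ passes to the limit, it suffices to argue with $\rho,\sigma$ positive definite.

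For the first inequality $\widetilde{D}_\alpha \le D_\alpha$, I would apply the Araki--Lieb--Thirring inequality \eqref{eq:ALT-1}--\eqref{eq:ALT-2} with $q=1$, $r=\alpha$, $X=\rho$, and $Y = \sigma^{(1-\alpha)/\alpha}$. Then $(Y^{1/2}XY^{1/2})^{r} = (\sigma^{(1-\alpha)/2\alpha}\rho\,\sigma^{(1-\alpha)/2\alpha})^{\alpha}$ has trace $\widetilde{Q}_\alpha(\rho\Vert\sigma)$, while $Y^{r/2}X^rY^{r/2} = \sigma^{(1-\alpha)/2}\rho^{\alpha}\sigma^{(1-\alpha)/2}$ has trace $\operatorname{Tr}[\rho^\alpha\sigma^{1-\alpha}] = Q_\alpha(\rho\Vert\sigma)$. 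For $\alpha\in(1,2]$ one has $r>1$, so \eqref{eq:ALT-2} gives $\widetilde{Q}_\alpha \le Q_\alpha$, and multiplying $\ln(\cdot)$ by the positive factor $1/(\alpha-1)$ yields $\widetilde{D}_\alpha \le D_\alpha$; for $\alpha\in(0,1)$ one has $r\in(0,1)$, so \eqref{eq:ALT-1} gives $\widetilde{Q}_\alpha \ge Q_\alpha$, but now the factor $1/(\alpha-1)$ is negative and flips the direction, giving $\widetilde{D}_\alpha \le D_\alpha$ again. This is the same maneuver already used in the proof of Proposition~\ref{prop:geometric-to-sandwiched}, with a different choice of $Y$.

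For the second inequality $D_\alpha \le \widehat{D}_\alpha$, I would invoke the classical-preparation characterization of Proposition~\ref{prop:geometric-renyi-from-classical-preps}: since $\operatorname{supp}(\rho)\subseteq\operatorname{supp}(\sigma)$, $\widehat{D}_\alpha(\rho\Vert\sigma) = \inf\{\,D_\alpha(p\Vert q) : \mathcal{P}(\omega(p))=\rho,\ \mathcal{P}(\omega(q))=\sigma\,\}$, the infimum being over classical--quantum channels $\mathcal{P}$ and diagonal embeddings $\omega(p),\omega(q)$ of a probability distribution $p$ and a strictly positive function $q$. For any feasible triple, the data-processing inequality for the Petz--R\'enyi relative entropy (valid precisely on $\alpha\in(0,1)\cup(1,2]$), together with the fact that $D_\alpha$ evaluated on the commuting pair $\omega(p),\omega(q)$ reduces by direct computation to the classical R\'enyi divergence $D_\alpha(p\Vert q)$, gives $D_\alpha(\rho\Vert\sigma) = D_\alpha(\mathcal{P}(\omega(p))\Vert\mathcal{P}(\omega(q))) \le D_\alpha(\omega(p)\Vert\omega(q)) = D_\alpha(p\Vert q)$; taking the infimum over feasible triples yields $D_\alpha(\rho\Vert\sigma)\le\widehat{D}_\alpha(\rho\Vert\sigma)$. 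Combining this (applied to $\sigma_\varepsilon$, for which the support condition is automatic) with the limit step of the first paragraph completes both inequalities.

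The computations are routine, so the main points requiring care are bookkeeping: tracking the Araki--Lieb--Thirring exponents and, especially, the direction of the log-monotonicity in the two subcases $\alpha\gtrless1$; and justifying the $\varepsilon\to0^+$ reduction for $\widehat{D}_\alpha$, which is subtler than for $\widetilde D_\alpha$ and $D_\alpha$ because $\widehat D_\alpha$ is itself defined through such a limit — one must confirm that the iterated limits collapse, which for positive-definite $\sigma_\varepsilon$ is immediate from Proposition~\ref{prop:explicit-form-geometric-renyi}. I do not anticipate any genuine obstacle beyond this.
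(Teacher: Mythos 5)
Your proof is correct and follows essentially the same route as the paper: the Araki--Lieb--Thirring inequality with $q=1$, $r=\alpha$, $X=\rho$, $Y=\sigma^{(1-\alpha)/\alpha}$ for the first inequality, and the classical-preparation characterization of $\widehat{D}_{\alpha}$ (Proposition~\ref{prop:geometric-renyi-from-classical-preps}) combined with data processing for the Petz--R\'enyi relative entropy for the second. Your additional $\varepsilon$-regularization wrapper is a harmless (and in the degenerate-support case for $\alpha\in(0,1)$, slightly more careful) refinement of what the paper does implicitly.
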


\begin{proof}
As stated above, the first inequality follows from the Araki--Lieb--Thirring
inequalities in \eqref{eq:ALT-1}--\eqref{eq:ALT-2}\ by picking $q=1$,
$r=\alpha$, $X=\rho$, and $Y=\sigma^{\frac{1-\alpha}{\alpha}}$. So we recall
the proof of the second inequality here. Suppose that $\mathcal{P}$ is a
classical--quantum channel, $p:\mathcal{X}\rightarrow\left[  0,1\right]  $ is
a probability distribution over a finite alphabet $\mathcal{X}$, and
$q:\mathcal{X}\rightarrow(0,\infty)$ is a positive function on $\mathcal{X}$
satisfying
\begin{equation}
\mathcal{P}(\omega(p))=\rho,\qquad\mathcal{P}(\omega(q))=\sigma,
\label{eq:geo-constraint-up-bnd-Petz-R}
\end{equation}
where
\begin{equation}
\omega(p):=\sum_{x\in\mathcal{X}}p(x)|x\rangle\!\langle x|,\qquad\omega
(q):=\sum_{x\in\mathcal{X}}q(x)|x\rangle\!\langle x|.
\end{equation}
Then consider the following chain of inequalities:
\begin{align}
D_{\alpha}(p\Vert q)  &  =D_{\alpha}(\omega(p)\Vert\omega(q))\\
&  \geq D_{\alpha}(\mathcal{P}(\omega(p))\Vert\mathcal{P}(\omega(q)))\\
&  =D_{\alpha}(\rho\Vert\sigma).
\end{align}
The first equality follows because the Petz--R\'{e}nyi relative entropy
reduces to the classical R\'{e}nyi relative entropy for commuting operators.
The inequality follows from the data-processing inequality for the
Petz--R\'{e}nyi relative entropy for $\alpha\in(0,1)\cup(1,2]$ \cite{P85,P86}.
The final equality follows from the constraint in
\eqref{eq:geo-constraint-up-bnd-Petz-R}. Since the inequality above holds for
all $p$, $q$, and $\mathcal{P}$ satisfying
\eqref{eq:geo-constraint-up-bnd-Petz-R}, we conclude that
\begin{equation}
\inf_{\left\{  p,q,\mathcal{P}\right\}  }\left\{  D_{\alpha}(p\Vert
q):\mathcal{P}(p)=\rho,\mathcal{P}(q)=\sigma\right\}  \geq D_{\alpha}
(\rho\Vert\sigma).
\end{equation}
Now applying Proposition~\ref{prop:geometric-renyi-from-classical-preps}, we
conclude the second inequality in \eqref{eq:sandwiched-Petz-geometric-ineqs}.
\end{proof}

\subsection{Belavkin--Staszewski relative entropy}

A different quantum generalization of the classical relative entropy is given
by the Belavkin--Staszewski\footnote{The name Staszewski is pronounced
Stah$\cdot$shev$\cdot$ski, with emphasis on the second syllable.} relative
entropy \cite{Belavkin1982}:

\begin{definition}
[Belavkin--Staszewski relative entropy]\label{def:belavkin-sta-rel-ent}The
Belavkin--Staszewski relative entropy of a quantum state $\rho$ and a positive
semi-definite operator $\sigma$ is defined as
\begin{equation}
\widehat{D}(\rho\Vert\sigma):=\left\{
\begin{array}
[c]{cc}
\operatorname{Tr}\!\left[  \rho\ln\!\left(  \rho^{\frac{1}{2}}\sigma^{-1}
\rho^{\frac{1}{2}}\right)  \right]  & \text{if }\operatorname{supp}
(\rho)\subseteq\operatorname{supp}(\sigma)\\
+\infty & \text{otherwise}
\end{array}
\right.  ,
\end{equation}
where the inverse $\sigma^{-1}$ is understood in the generalized sense and the
logarithm is evaluated on the support of $\rho$.
\end{definition}

This quantum generalization of classical relative entropy is not known to
possess an information-theoretic meaning. However, it is quite useful for obtaining
upper bounds on quantum channel capacities and quantum channel discrimination
rates \cite{Fang2019a}.

An important property of the Belavkin--Staszewski relative entropy is that it
is the limit of the geometric R\'{e}nyi relative entropy as $\alpha
\rightarrow1$ \cite{Mat13,Matsumoto2018}. The proposition below was known for
positive definite operators, but it is not clear to us whether it has been
established in the general case.

\begin{proposition}
\label{prop:BS-rel-ent-to-geometric}Let $\rho$ be a state and $\sigma$ a
positive semi-definite operator.\ Then, in the limit $\alpha\rightarrow1$, the
geometric R\'{e}nyi relative entropy converges to the Belavkin--Staszewski
relative entropy:
\begin{equation}
\lim_{\alpha\rightarrow1}\widehat{D}_{\alpha}(\rho\Vert\sigma)=\widehat
{D}(\rho\Vert\sigma).
\end{equation}

\end{proposition}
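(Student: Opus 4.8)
The plan is to reduce the general case to the case of positive definite operators, where the convergence can be established by a direct Taylor expansion, and to handle the support issue by the usual regularization. First I would dispose of the case $\operatorname{supp}(\rho)\not\subseteq\operatorname{supp}(\sigma)$: by Proposition~\ref{prop:explicit-form-geometric-renyi}, for $\alpha\in(1,\infty)$ we have $\widehat{Q}_\alpha(\rho\|\sigma)=+\infty$, so $\widehat{D}_\alpha(\rho\|\sigma)=+\infty$ for all $\alpha>1$, and hence the limit from the right is $+\infty=\widehat{D}(\rho\|\sigma)$. Monotonicity in $\alpha$ (Property~2 of Proposition~\ref{prop:geometric-renyi-props}) then forces the limit from the left to be $+\infty$ as well, since $\widehat{D}_\alpha(\rho\|\sigma)\geq\widehat{D}_{1/2}(\rho\|\sigma)$ and the values increase toward the (infinite) supremum. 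Actually a cleaner route: by monotonicity the left limit equals $\sup_{\alpha<1}\widehat{D}_\alpha(\rho\|\sigma)$; I would argue this supremum is $+\infty$ in the non-support case by using the limit formula \eqref{eq:limit-eq-geometric-renyi} and a lower bound coming from the $\rho_{1,1}$ block, analogous to the divergence computation in the proof of Proposition~\ref{prop:physical-consistency-RLD-Fish-states}.

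Now assume $\operatorname{supp}(\rho)\subseteq\operatorname{supp}(\sigma)$. I would like to further assume $\rho$ and $\sigma$ are positive definite; to justify this, use \eqref{eq:limit-eq-geometric-renyi}, writing $\widehat{Q}_\alpha(\rho\|\sigma)=\lim_{\varepsilon\to0^+}\lim_{\delta\to0^+}\operatorname{Tr}[\sigma_\varepsilon(\sigma_\varepsilon^{-1/2}\rho_\delta\sigma_\varepsilon^{-1/2})^\alpha]$ with $\sigma_\varepsilon=\sigma+\varepsilon I$ and $\rho_\delta=(1-\delta)\rho+\delta\pi$. One needs the $\alpha\to1$ limit to commute with the $\varepsilon,\delta\to0^+$ limits; I would obtain this from monotonicity of $\widehat{D}_\alpha$ in $\alpha$ together with monotonicity of $\widehat{D}_\alpha(\rho_\delta\|\sigma_\varepsilon)$ in $\varepsilon$ and $\delta$ (which follow from operator monotonicity of $x\mapsto x^{1-\alpha}$ and $x\mapsto x^\alpha$, exactly as used in Lemma~\ref{lem:limit-exchange-geom-renyi-a-0-to1}), invoking a Moore--Osgood / monotone double-limit exchange. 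The same monotonicity arguments show $\widehat{D}(\rho\|\sigma)=\lim_{\varepsilon\to0^+}\lim_{\delta\to0^+}\widehat{D}(\rho_\delta\|\sigma_\varepsilon)$, using \eqref{eq:rel-ent-limit-eps-0}-type reasoning applied to the Belavkin--Staszewski quantity (which one checks is monotone in $\sigma$ by the same block argument used in the faithfulness proofs).

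For positive definite $\rho,\sigma$, the computation is then routine. Using the representation of Proposition~\ref{prop:alt-rep-geometric-renyi-quasi}, write $\widehat{Q}_\alpha(\rho\|\sigma)=\operatorname{Tr}[\rho(\rho^{1/2}\sigma^{-1}\rho^{1/2})^{\alpha-1}]$, set $X:=\rho^{1/2}\sigma^{-1}\rho^{1/2}$ (positive definite), and let $|\varphi^\rho\rangle:=(\rho^{1/2}\otimes I)|\Gamma\rangle$ be a purification of $\rho$, so that $\widehat{Q}_\alpha(\rho\|\sigma)=\langle\varphi^\rho|X^{\alpha-1}\otimes I|\varphi^\rho\rangle$. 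Then
\begin{equation}
\widehat{D}_\alpha(\rho\|\sigma)=\frac{1}{\alpha-1}\ln\langle\varphi^\rho|X^{\alpha-1}\otimes I|\varphi^\rho\rangle,
\end{equation}
and as $\alpha\to1$ the numerator $\ln\langle\varphi^\rho|X^{\alpha-1}\otimes I|\varphi^\rho\rangle\to\ln 1=0$, while expanding $X^{\alpha-1}=I+(\alpha-1)\ln X+O((\alpha-1)^2)$ gives $\langle\varphi^\rho|X^{\alpha-1}\otimes I|\varphi^\rho\rangle=1+(\alpha-1)\langle\varphi^\rho|\ln X\otimes I|\varphi^\rho\rangle+O((\alpha-1)^2)$, hence by L'Hôpital (or directly from $\ln(1+t)=t+O(t^2)$),
\begin{equation}
\lim_{\alpha\to1}\widehat{D}_\alpha(\rho\|\sigma)=\langle\varphi^\rho|\ln X\otimes I|\varphi^\rho\rangle=\operatorname{Tr}[\rho\ln(\rho^{1/2}\sigma^{-1}\rho^{1/2})]=\widehat{D}(\rho\|\sigma).
\end{equation}
The main obstacle is not this last calculation but rigorously justifying the interchange of the $\alpha\to1$ limit with the regularization limits $\varepsilon,\delta\to0^+$ in the non-positive-definite case; I expect to handle it cleanly by leaning on the monotonicity properties already established (monotonicity of $\widehat{D}_\alpha$ in $\alpha$ from Proposition~\ref{prop:geometric-renyi-props}, and monotonicity in $\varepsilon$ and $\delta$ from operator monotonicity of power functions as in Lemma~\ref{lem:limit-exchange-geom-renyi-a-0-to1}), which together with uniformity of the bounds permit the double-limit exchange.
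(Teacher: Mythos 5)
Your final Taylor-expansion computation for positive definite $\rho,\sigma$ is correct and is essentially the paper's calculation (the paper phrases it as the derivative $\left.\tfrac{d}{d\alpha}\ln\widehat{Q}_{\alpha}(\rho\Vert\sigma)\right\vert_{\alpha=1}$, using $\left.\tfrac{d}{d\alpha}X^{\alpha}\right\vert_{\alpha=1}=X\ln_{\ast}X$ and Lemma~\ref{lem:sing-val-lemma-pseudo-commute} to rewrite the result as $\operatorname{Tr}[\rho\ln(\rho^{1/2}\sigma^{-1}\rho^{1/2})]$). However, two of your reductions have genuine gaps. First, in the case $\operatorname{supp}(\rho)\not\subseteq\operatorname{supp}(\sigma)$, your initial claim that monotonicity in $\alpha$ ``forces'' the left limit to be $+\infty$ is a non sequitur: monotonicity only says $\lim_{\alpha\to1^{-}}\widehat{D}_{\alpha}(\rho\Vert\sigma)=\sup_{\alpha<1}\widehat{D}_{\alpha}(\rho\Vert\sigma)$, and the fact that $\widehat{D}_{\alpha}=+\infty$ for $\alpha>1$ is perfectly consistent with this supremum being finite (a jump at $\alpha=1$). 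Your ``cleaner route'' via the $\rho_{1,1}$ block is only a gesture; to close it you would need to actually show divergence, e.g.\ by noting that for $\alpha\in(0,1)$ one has $\widehat{Q}_{\alpha}(\rho\Vert\sigma)\to\operatorname{Tr}[\tilde{\rho}]<1$ as $\alpha\to1^{-}$, so that $\tfrac{1}{\alpha-1}\ln\widehat{Q}_{\alpha}\to+\infty$. The paper instead gets this for free from Proposition~\ref{prop:geometric-to-sandwiched} ($\widehat{D}_{\alpha}\geq\widetilde{D}_{\alpha}$) together with the known divergence $\lim_{\alpha\to1^{-}}\widetilde{D}_{\alpha}(\rho\Vert\sigma)=+\infty$ when the support condition fails.

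Second, in the case $\operatorname{supp}(\rho)\subseteq\operatorname{supp}(\sigma)$, your reduction to positive definite operators creates a double-limit-exchange problem that your proposed tools do not resolve: the monotonicity in $\varepsilon$ and $\delta$ exploited in Lemma~\ref{lem:limit-exchange-geom-renyi-a-0-to1} rests on operator monotonicity of $x\mapsto x^{\alpha}$ for $\alpha\in(0,1)$, which fails for $\alpha>1$, so the ``monotone double-limit exchange'' is not available on both sides of $\alpha=1$, and you would additionally need uniformity in $\alpha$ near $1$ that you have not established. This detour is unnecessary: Proposition~\ref{prop:explicit-form-geometric-renyi} already gives the closed form $\widehat{Q}_{\alpha}(\rho\Vert\sigma)=\operatorname{Tr}[\sigma(\sigma^{-1/2}\rho\sigma^{-1/2})^{\alpha}]$ with generalized inverses for all $\alpha\in(0,1)\cup(1,\infty)$ under the support condition, together with $\widehat{Q}_{1}(\rho\Vert\sigma)=\operatorname{Tr}[\Pi_{\sigma}\rho]=1$, so the limit is a single derivative at $\alpha=1$ with no regularization at all. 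I recommend restructuring your argument along those lines.
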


\begin{proof}
Suppose at first that $\operatorname{supp}(\rho)\subseteq\operatorname{supp}
(\sigma)$. Then $\widehat{D}_{\alpha}(\rho\Vert\sigma)$ is finite for all
$\alpha\in(0,1)\cup(1,\infty)$, and we can write the following explicit
formula for the geometric R\'{e}nyi relative entropy by employing
Proposition~\ref{prop:explicit-form-geometric-renyi}:
\begin{align}
\widehat{D}_{\alpha}(\rho\Vert\sigma)  &  =\frac{1}{\alpha-1} \ln\widehat
{Q}_{\alpha}(\rho\Vert\sigma)\\
&  =\frac{1}{\alpha-1} \ln\operatorname{Tr}\!\left[  \sigma\left(
\sigma^{-\frac{1}{2}}\rho\sigma^{-\frac{1}{2}}\right)  ^{\alpha}\right]  .
\end{align}
Our assumption implies that $\operatorname{Tr}[\Pi_{\sigma}\rho]=1$, and we
find that
\begin{align}
\widehat{Q}_{1}(\rho\Vert\sigma)  &  =\operatorname{Tr}\!\left[  \sigma\left(
\sigma^{-\frac{1}{2}}\rho\sigma^{-\frac{1}{2}}\right)  \right] \\
&  =\operatorname{Tr}[\Pi_{\sigma}\rho]\\
&  =1.
\end{align}
Since $\ln1=0$, we can write
\begin{equation}
\widehat{D}_{\alpha}(\rho\Vert\sigma)=\frac{ \ln\widehat{Q}_{\alpha}(\rho
\Vert\sigma)- \ln\widehat{Q}_{1}(\rho\Vert\sigma)}{\alpha-1},
\end{equation}
so that
\begin{align}
\lim_{\alpha\rightarrow1}\widehat{D}_{\alpha}(\rho\Vert\sigma)  &
=\lim_{\alpha\rightarrow1}\frac{ \ln\widehat{Q}_{\alpha}(\rho\Vert\sigma)-
\ln\widehat{Q}_{1}(\rho\Vert\sigma)}{\alpha-1}\\
&  =\left.  \frac{d}{d\alpha} \ln\widehat{Q}_{\alpha}(\rho\Vert\sigma
)\right\vert _{\alpha=1}\\
&  =\frac{\left.  \frac{d}{d\alpha}\widehat{Q}_{\alpha}
(\rho\Vert\sigma)\right\vert _{\alpha=1}}{\widehat{Q}_{1}(\rho\Vert\sigma)}\\
&  =\left.  \frac{d}{d\alpha}\widehat{Q}_{\alpha}(\rho
\Vert\sigma)\right\vert _{\alpha=1}.
\end{align}
Then
\begin{align*}
\left.  \frac{d}{d\alpha}\widehat{Q}_{\alpha}(\rho\Vert\sigma)\right\vert
_{\alpha=1}  &  =\left.  \frac{d}{d\alpha}\operatorname{Tr}\left[
\sigma\left(  \sigma^{-\frac{1}{2}}\rho\sigma^{-\frac{1}{2}}\right)  ^{\alpha
}\right]  \right\vert _{\alpha=1}\\
&  =\left.  \operatorname{Tr}\left[  \sigma\frac{d}{d\alpha}\left(
\sigma^{-\frac{1}{2}}\rho\sigma^{-\frac{1}{2}}\right)  ^{\alpha}\right]
\right\vert _{\alpha=1}.
\end{align*}
For a positive semi-definite operator $X$ with spectral decomposition
\begin{equation}
X=\sum_{z}\nu_{z}\Pi_{z},
\end{equation}
it follows that
\begin{align}
\left.  \frac{d}{d\alpha}X^{\alpha}\right\vert _{\alpha=1}  &  =\left.
\frac{d}{d\alpha}\sum_{z}\nu_{z}^{\alpha}\Pi_{z}\right\vert _{\alpha=1}\\
&  =\sum_{z}\left(  \left.  \frac{d}{d\alpha}\nu_{z}^{\alpha}\right\vert
_{\alpha=1}\right)  \Pi_{z}\\
&  =\sum_{z}\left(  \left.  \nu_{z}^{\alpha}\ln\nu_{z}^{\alpha}\right\vert
_{\alpha=1}\right)  \Pi_{z}\\
&  =\sum_{z}\left(  \nu_{z}\ln\nu_{z}\right)  \Pi_{z}\\
&  =X\ln_{\ast}X,
\end{align}
where
\begin{equation}
\ln_{\ast}(x):=\left\{
\begin{array}
[c]{cc}
\ln(x) & x>0\\
0 & x=0
\end{array}
\right.  . \label{eq:log-star-fnc-geometric-renyi}
\end{equation}
Thus we find that
\begin{align}
&  \left.  \operatorname{Tr}\left[  \sigma\frac{d}{d\alpha}\left(
\sigma^{-\frac{1}{2}}\rho\sigma^{-\frac{1}{2}}\right)  ^{\alpha}\right]
\right\vert _{\alpha=1}\nonumber\\
&  =\operatorname{Tr}\!\left[  \sigma\!\left(  \sigma^{-\frac{1}{2}}\rho
\sigma^{-\frac{1}{2}}\right)  \ln_{\ast}\!\left(  \sigma^{-\frac{1}{2}}
\rho\sigma^{-\frac{1}{2}}\right)  \right]
\label{eq:special-log-func-steps-BS-1}\\
&  =\operatorname{Tr}\!\left[  \sigma^{\frac{1}{2}}\rho^{\frac{1}{2}}
\rho^{\frac{1}{2}}\sigma^{-\frac{1}{2}}\ln_{\ast}\!\left(  \sigma^{-\frac
{1}{2}}\rho^{\frac{1}{2}}\rho^{\frac{1}{2}}\sigma^{-\frac{1}{2}}\right)
\right] \\
&  =\operatorname{Tr}\!\left[  \sigma^{\frac{1}{2}}\rho^{\frac{1}{2}}\ln
_{\ast}\!\left(  \rho^{\frac{1}{2}}\sigma^{-\frac{1}{2}}\sigma^{-\frac{1}{2}
}\rho^{\frac{1}{2}}\right)  \rho^{\frac{1}{2}}\sigma^{-\frac{1}{2}}\right] \\
&  =\operatorname{Tr}\!\left[  \rho^{\frac{1}{2}}\Pi_{\sigma}\rho^{\frac{1}
{2}}\ln_{\ast}\!\left(  \rho^{\frac{1}{2}}\sigma^{-\frac{1}{2}}\sigma
^{-\frac{1}{2}}\rho^{\frac{1}{2}}\right)  \right] \\
&  =\operatorname{Tr}\!\left[  \rho\ln\!\left(  \rho^{\frac{1}{2}}\sigma
^{-1}\rho^{\frac{1}{2}}\right)  \right]  .
\label{eq:special-log-func-steps-BS-last}
\end{align}
The third equality follows from Lemma~\ref{lem:sing-val-lemma-pseudo-commute}.
The final equality follows from the assumption $\operatorname{supp}
(\rho)\subseteq\operatorname{supp}(\sigma)$ and by applying the interpretation
of the logarithm exactly as stated in
Definition~\ref{def:belavkin-sta-rel-ent}. Then we find that
\begin{align}
\lim_{\alpha\rightarrow1}\widehat{D}_{\alpha}(\rho\Vert\sigma)  &  = \operatorname{Tr}\!\left[  \rho\ln\!\left(  \rho^{\frac{1}{2}}
\sigma^{-1}\rho^{\frac{1}{2}}\right)  \right] ,
\end{align}
for the case in which $\operatorname{supp}(\rho)\subseteq\operatorname{supp}
(\sigma)$.

Now suppose that $\alpha\in(1,\infty)$ and $\operatorname{supp}(\rho
)\not \subseteq \operatorname{supp}(\sigma)$. Then $\widehat{D}_{\alpha}
(\rho\Vert\sigma)=+\infty$, so that $\lim_{\alpha\rightarrow1^{+}}\widehat
{D}_{\alpha}(\rho\Vert\sigma)=+\infty$, consistent with the definition of the
Belavkin--Staszewski relative entropy in this case (see
Definition~\ref{def:belavkin-sta-rel-ent}).

Suppose that $\alpha\in(0,1)$ and $\operatorname{supp}(\rho)\not \subseteq
\operatorname{supp}(\sigma)$. Employing
Proposition~\ref{prop:geometric-to-sandwiched}, we have that $\widehat
{D}_{\alpha}(\rho\Vert\sigma)\geq\widetilde{D}_{\alpha}(\rho\Vert\sigma)$ for
all $\alpha\in(0,1)$. Since $\lim_{\alpha\rightarrow1^{-}}\widetilde
{D}_{\alpha}(\rho\Vert\sigma)=+\infty$ in this case \cite[Corollary~III.2]
{MO15}, it follows that $\lim_{\alpha\rightarrow1^{-}}\widehat{D}_{\alpha
}(\rho\Vert\sigma)=+\infty$.

Therefore,
\begin{align}
&  \lim_{\alpha\rightarrow1^{-}}\widehat{D}_{\alpha}(\rho\Vert\sigma
)\nonumber\\
&  =\left\{
\begin{array}
[c]{cc}
\operatorname{Tr}\!\left[  \rho\ln\!\left(  \rho^{\frac{1}{2}}\sigma^{-1}
\rho^{\frac{1}{2}}\right)  \right]  & \text{if }\operatorname{supp}
(\rho)\subseteq\operatorname{supp}(\sigma)\\
+\infty & \text{otherwise}
\end{array}
\right. \\
&  =\widehat{D}(\rho\Vert\sigma).
\end{align}
To conclude, we have established that $\lim_{\alpha\rightarrow1^{+}}
\widehat{D}_{\alpha}(\rho\Vert\sigma)=\lim_{\alpha\rightarrow1^{-}}\widehat
{D}_{\alpha}(\rho\Vert\sigma)=\widehat{D}(\rho\Vert\sigma)$, which means that
\begin{equation}
\lim_{\alpha\rightarrow1}\widehat{D}_{\alpha}(\rho\Vert\sigma)=\widehat
{D}(\rho\Vert\sigma),
\end{equation}
as required.
\end{proof}

\bigskip

The following inequality relates the quantum relative entropy to the
Belavkin--Staszewski relative entropy \cite{HP91}:

\begin{proposition}
\label{cor:BS-to-q-rel-ent}Let $\rho$ be a state and $\sigma$ a positive
semi-definite operator. Then the quantum relative entropy is never larger than
the Belavkin--Staszewski relative entropy:
\begin{equation}
D(\rho\Vert\sigma)\leq\widehat{D}(\rho\Vert\sigma).
\label{eq:BS-rel-ent-to-usual-rel-ent}
\end{equation}

\end{proposition}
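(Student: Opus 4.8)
The plan is to derive the inequality $D(\rho\Vert\sigma)\leq\widehat{D}(\rho\Vert\sigma)$ as a limiting consequence of the monotonicity of the R\'enyi relative entropies in $\alpha$, together with the two convergence results already established. First I would dispose of the trivial case: if $\operatorname{supp}(\rho)\not\subseteq\operatorname{supp}(\sigma)$, then $\widehat{D}(\rho\Vert\sigma)=+\infty$ by Definition~\ref{def:belavkin-sta-rel-ent} and the inequality holds automatically. So assume $\operatorname{supp}(\rho)\subseteq\operatorname{supp}(\sigma)$, in which case both sides are finite.

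Next I would invoke the ordering between the sandwiched, Petz, and geometric R\'enyi relative entropies. In particular, from Proposition~\ref{prop:sand-Petz-geo-ineqs} we have, for every $\alpha\in(0,1)\cup(1,2]$,
\begin{equation}
\widetilde{D}_{\alpha}(\rho\Vert\sigma)\leq\widehat{D}_{\alpha}(\rho\Vert\sigma).
\end{equation}
(Using the Petz--R\'enyi quantity $D_{\alpha}$ as the intermediate term would work equally well.) Taking the limit $\alpha\to1$ on both sides, the left-hand side converges to the quantum relative entropy $D(\rho\Vert\sigma)$ by \eqref{eq:alpha-1-limit-to-rel-ent}, and the right-hand side converges to the Belavkin--Staszewski relative entropy $\widehat{D}(\rho\Vert\sigma)$ by Proposition~\ref{prop:BS-rel-ent-to-geometric}. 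Since the inequality is preserved under taking limits, we obtain $D(\rho\Vert\sigma)\leq\widehat{D}(\rho\Vert\sigma)$, which is \eqref{eq:BS-rel-ent-to-usual-rel-ent}.

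I would write this out cleanly as the following chain: for $\alpha\in(0,1)$,
\begin{align}
D(\rho\Vert\sigma) & =\lim_{\alpha\rightarrow1^{-}}\widetilde{D}_{\alpha}(\rho\Vert\sigma)\\
& \leq\lim_{\alpha\rightarrow1^{-}}\widehat{D}_{\alpha}(\rho\Vert\sigma)\\
& =\widehat{D}(\rho\Vert\sigma),
\end{align}
where the first equality is \eqref{eq:alpha-1-limit-to-rel-ent}, the inequality is Proposition~\ref{prop:sand-Petz-geo-ineqs} applied for each $\alpha\in(0,1)$, and the last equality is Proposition~\ref{prop:BS-rel-ent-to-geometric}. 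The main obstacle, such as it is, is purely bookkeeping: one must make sure the limits used are legitimate (they are, since all three quantities have well-established $\alpha\to1$ limits stated earlier in the excerpt) and that the support case distinction is handled correctly so that ``$+\infty\leq+\infty$'' is not silently invoked where it should not be. There is no analytic difficulty here; the entire content is that the geometric R\'enyi relative entropy dominates the sandwiched (or Petz) one uniformly in $\alpha$, and all three interpolate to their respective $\alpha=1$ values. One could alternatively give a direct operator-theoretic proof using the operator concavity of $\ln$ and the operator Jensen inequality (this is essentially the original argument of \cite{HP91}), but routing through the already-proven Propositions is shorter and self-contained within the paper.
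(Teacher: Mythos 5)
Your proposal is correct and follows essentially the same route as the paper's own proof: the paper likewise disposes of the case $\operatorname{supp}(\rho)\not\subseteq\operatorname{supp}(\sigma)$ trivially, then applies the inequality $\widetilde{D}_{\alpha}(\rho\Vert\sigma)\leq\widehat{D}_{\alpha}(\rho\Vert\sigma)$ (via Proposition~\ref{prop:geometric-to-sandwiched}) and takes the limit $\alpha\rightarrow1$ using \eqref{eq:alpha-1-limit-to-rel-ent} and Proposition~\ref{prop:BS-rel-ent-to-geometric}. The only cosmetic difference is that you cite Proposition~\ref{prop:sand-Petz-geo-ineqs} rather than Proposition~\ref{prop:geometric-to-sandwiched} for the intermediate inequality, which is immaterial.
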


\begin{proof}
If $\operatorname{supp}(\rho)\not \subseteq \operatorname{supp}(\sigma)$, then
there is nothing to prove in this case because both
\begin{equation}
D(\rho\Vert\sigma)=\widehat{D}(\rho\Vert\sigma)=+\infty,
\end{equation}
and so the inequality in \eqref{eq:BS-rel-ent-to-usual-rel-ent} holds
trivially in this case. So let us suppose instead that $\operatorname{supp}
(\rho)\subseteq\operatorname{supp}(\sigma)$. From
Propositions~\ref{prop:geometric-to-sandwiched} and
\ref{prop:explicit-form-geometric-renyi}, we conclude for all $\alpha
\in\left(  0,1\right)  \cup\left(  1,\infty\right)  $ that
\begin{equation}
\widetilde{D}_{\alpha}(\rho\Vert\sigma)\leq\widehat{D}_{\alpha}(\rho
\Vert\sigma). \label{eq:progress-BS-to-rel-ent-ineq}
\end{equation}
From \eqref{eq:alpha-1-limit-to-rel-ent}, we know that
\begin{equation}
\lim_{\alpha\rightarrow1}\widetilde{D}_{\alpha}(\rho\Vert\sigma)=D(\rho
\Vert\sigma).
\end{equation}
While from Proposition~\ref{prop:BS-rel-ent-to-geometric}, we know that
\begin{equation}
\lim_{\alpha\rightarrow1}\widehat{D}_{\alpha}(\rho\Vert\sigma)=\widehat
{D}(\rho\Vert\sigma).
\end{equation}
Thus, applying the limit $\alpha\rightarrow1$ to
\eqref{eq:progress-BS-to-rel-ent-ineq} and the two equalities above, we
conclude \eqref{eq:BS-rel-ent-to-usual-rel-ent}.
\end{proof}

\bigskip

Similar to \eqref{eq:rel-ent-limit-eps-0},
Definition~\ref{def:belavkin-sta-rel-ent}\ is consistent with the following limit:

\begin{proposition}
For any state $\rho$ and positive semi-definite operator $\sigma$, the
following limit holds
\begin{equation}
\widehat{D}(\rho\Vert\sigma)=\lim_{\varepsilon\rightarrow0^{+}}\lim
_{\delta\rightarrow0^{+}}\operatorname{Tr}\!\left[  \rho_{\delta}\log
_{2}\!\left(  \rho_{\delta}^{\frac{1}{2}}\sigma_{\varepsilon}^{-1}\rho
_{\delta}^{\frac{1}{2}}\right)  \right]  , \label{eq:limit-formula-BS-entropy}
\end{equation}
where $\delta\in\left(  0,1\right)  $ and
\begin{equation}
\rho_{\delta}:=\left(  1-\delta\right)  \rho+\delta\pi,\qquad\sigma
_{\varepsilon}:=\sigma+\varepsilon I,
\end{equation}
with $\pi$ the maximally mixed state.
\end{proposition}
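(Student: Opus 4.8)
The plan is to reduce \eqref{eq:limit-formula-BS-entropy} to a two-step limit: the inner limit $\delta\rightarrow0^{+}$ handled by continuity, and the outer limit $\varepsilon\rightarrow0^{+}$ handled by the same block-decomposition technique already used in this appendix. First, since $\delta>0$ forces $\rho_{\delta}$ to have full rank and $\sigma_{\varepsilon}:=\sigma+\varepsilon I$ is positive definite, the support condition $\operatorname{supp}(\rho_{\delta})\subseteq\operatorname{supp}(\sigma_{\varepsilon})$ holds trivially, so the inner trace expression in \eqref{eq:limit-formula-BS-entropy} is exactly $\widehat{D}(\rho_{\delta}\Vert\sigma_{\varepsilon})$ by Definition~\ref{def:belavkin-sta-rel-ent} (up to the harmless constant $\ln 2$ relating the two logarithm conventions). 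The claim thus becomes
\begin{equation}
\widehat{D}(\rho\Vert\sigma)=\lim_{\varepsilon\rightarrow0^{+}}\lim_{\delta\rightarrow0^{+}}\widehat{D}(\rho_{\delta}\Vert\sigma_{\varepsilon}).
\end{equation}
I would then record the auxiliary identity, valid whenever $\operatorname{supp}(\tau)\subseteq\operatorname{supp}(\omega)$,
\begin{equation}
\widehat{D}(\tau\Vert\omega)=\operatorname{Tr}\!\left[\omega\,\eta\!\left(\omega^{-\frac{1}{2}}\tau\omega^{-\frac{1}{2}}\right)\right],\qquad\eta(x):=x\ln_{\ast}x,
\end{equation}
with $\ln_{\ast}$ as in \eqref{eq:log-star-fnc-geometric-renyi} and $\omega^{-1/2}$ a generalized inverse; this is obtained exactly as in the chain \eqref{eq:special-log-func-steps-BS-1}--\eqref{eq:special-log-func-steps-BS-last} in the proof of Proposition~\ref{prop:BS-rel-ent-to-geometric}, via Lemma~\ref{lem:sing-val-lemma-pseudo-commute}. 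Its value here is that $\eta$ is continuous on $[0,\infty)$.

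For the inner limit, fix $\varepsilon>0$. Since $\sigma_{\varepsilon}$ is positive definite, $\tau\mapsto\sigma_{\varepsilon}^{-1/2}\tau\sigma_{\varepsilon}^{-1/2}$ is continuous, hence so is $\tau\mapsto\eta(\sigma_{\varepsilon}^{-1/2}\tau\sigma_{\varepsilon}^{-1/2})$ by continuity of the functional calculus, and therefore $\tau\mapsto\widehat{D}(\tau\Vert\sigma_{\varepsilon})$ is continuous on the positive semi-definite cone. As $\rho_{\delta}\rightarrow\rho$ when $\delta\rightarrow0^{+}$, it follows that $\lim_{\delta\rightarrow0^{+}}\widehat{D}(\rho_{\delta}\Vert\sigma_{\varepsilon})=\widehat{D}(\rho\Vert\sigma_{\varepsilon})$.

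It then remains to handle the outer limit, i.e., to show $\lim_{\varepsilon\rightarrow0^{+}}\widehat{D}(\rho\Vert\sigma_{\varepsilon})=\widehat{D}(\rho\Vert\sigma)$, which I would split into two cases. If $\operatorname{supp}(\rho)\not\subseteq\operatorname{supp}(\sigma)$, then $\widehat{D}(\rho\Vert\sigma)=+\infty$; combining the bound $D(\rho\Vert\sigma_{\varepsilon})\leq\widehat{D}(\rho\Vert\sigma_{\varepsilon})$ from Proposition~\ref{cor:BS-to-q-rel-ent} with $\lim_{\varepsilon\rightarrow0^{+}}D(\rho\Vert\sigma_{\varepsilon})=D(\rho\Vert\sigma)=+\infty$ from \eqref{eq:rel-ent-limit-eps-0} forces $\widehat{D}(\rho\Vert\sigma_{\varepsilon})\rightarrow+\infty$. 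If instead $\operatorname{supp}(\rho)\subseteq\operatorname{supp}(\sigma)$, I would decompose with respect to $\operatorname{supp}(\sigma)\oplus\ker(\sigma)$, writing $\rho=\rho_{0,0}\oplus0$ and $\sigma_{\varepsilon}=(\sigma+\varepsilon\Pi_{\sigma})\oplus\varepsilon\Pi_{\sigma}^{\perp}$; the kernel block contributes $\operatorname{Tr}[\varepsilon\Pi_{\sigma}^{\perp}\eta(0)]=0$, while the support block equals $\operatorname{Tr}[(\sigma+\varepsilon\Pi_{\sigma})\,\eta((\sigma+\varepsilon\Pi_{\sigma})^{-1/2}\rho_{0,0}(\sigma+\varepsilon\Pi_{\sigma})^{-1/2})]$, which converges to $\operatorname{Tr}[\sigma\,\eta(\sigma^{-1/2}\rho\sigma^{-1/2})]=\widehat{D}(\rho\Vert\sigma)$ as $\varepsilon\rightarrow0^{+}$, since $(\sigma+\varepsilon\Pi_{\sigma})^{-1/2}\rightarrow\sigma^{-1/2}$ on $\operatorname{supp}(\sigma)$; this is the same manipulation used in \eqref{eq:rho-sig-decompose-for-geometric-formula}--\eqref{eq:rho-sig-decompose-for-geometric-formula-last}. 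Chaining the inner and outer limits then yields \eqref{eq:limit-formula-BS-entropy}. The main obstacle is precisely this outer limit: the generalized inverse $\sigma^{-1/2}$ depends discontinuously on $\sigma$, so the case distinction on whether the support condition holds — handled respectively by the $\widehat{D}\geq D$ comparison and by the block decomposition — is essential, whereas the inner limit and the $\eta$-rewriting are routine given the machinery already developed.
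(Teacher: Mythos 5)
Your proposal is correct and follows essentially the same route as the paper's own proof: the rewriting via $\eta(x)=x\ln_{\ast}x$ and Lemma~\ref{lem:sing-val-lemma-pseudo-commute}, continuity of $\eta$ for the $\delta\rightarrow0^{+}$ limit, the $\operatorname{supp}(\sigma)\oplus\ker(\sigma)$ block decomposition for the $\varepsilon\rightarrow0^{+}$ limit when $\operatorname{supp}(\rho)\subseteq\operatorname{supp}(\sigma)$, and the comparison $D\leq\widehat{D}$ from Proposition~\ref{cor:BS-to-q-rel-ent} together with \eqref{eq:rel-ent-limit-eps-0} in the degenerate case. The only difference is cosmetic ordering --- you perform the inner limit before the case split, whereas the paper splits cases first --- so no further comment is needed.
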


\begin{proof}
Suppose first that $\operatorname{supp}(\rho)\subseteq\operatorname{supp}
(\sigma)$. We follow an approach similar to that given in the proof of
Proposition~\ref{prop:explicit-form-geometric-renyi}. Let us employ the
decomposition of the Hilbert space into $\operatorname{supp}(\sigma)\oplus
\ker(\sigma)$.\ Then we can write $\rho$ and $\sigma$ as in
\eqref{eq:rho-sig-decompose-for-geometric-formula}, so that
\begin{equation}
\sigma_{\varepsilon}^{-1}=
\begin{pmatrix}
\left(  \sigma+\varepsilon\Pi_{\sigma}\right)  ^{-1} & 0\\
0 & \varepsilon^{-1}\Pi_{\sigma}^{\perp}
\end{pmatrix}
,
\end{equation}
where we have followed the developments in
\eqref{eq:rho-sig-decompose-for-geometric-formula}--\eqref{eq:rho-sig-decompose-for-geometric-formula-3}.
The condition $\operatorname{supp}(\rho)\subseteq\operatorname{supp}(\sigma)$
implies that $\rho_{0,1}=0$ and $\rho_{1,1}=0$. It thus follows that
$\lim_{\delta\rightarrow0^{+}}\rho_{\delta}=\rho_{0,0}$. We then find that
\begin{align}
\operatorname{Tr}\!\left[  \rho_{\delta}\ln\!\left(  \rho_{\delta}^{\frac
{1}{2}}\sigma_{\varepsilon}^{-1}\rho_{\delta}^{\frac{1}{2}}\right)  \right]
&  =\operatorname{Tr}\!\left[  \rho_{\delta}^{\frac{1}{2}}\sigma_{\varepsilon
}^{\frac{1}{2}}\sigma_{\varepsilon}^{-\frac{1}{2}}\rho_{\delta}^{\frac{1}{2}
}\ln\!\left(  \rho_{\delta}^{\frac{1}{2}}\sigma_{\varepsilon}^{-\frac{1}{2}
}\sigma_{\varepsilon}^{-\frac{1}{2}}\rho_{\delta}^{\frac{1}{2}}\right)
\right] \\
&  =\operatorname{Tr}\!\left[  \rho_{\delta}^{\frac{1}{2}}\sigma_{\varepsilon
}^{\frac{1}{2}}\ln\!\left(  \sigma_{\varepsilon}^{-\frac{1}{2}}\rho_{\delta
}^{\frac{1}{2}}\rho_{\delta}^{\frac{1}{2}}\sigma_{\varepsilon}^{-\frac{1}{2}
}\right)  \sigma_{\varepsilon}^{-\frac{1}{2}}\rho_{\delta}^{\frac{1}{2}
}\right] \\
&  =\operatorname{Tr}\!\left[  \ln\!\left(  \sigma_{\varepsilon}^{-\frac{1}
{2}}\rho_{\delta}\sigma_{\varepsilon}^{-\frac{1}{2}}\right)  \left(
\sigma_{\varepsilon}^{-\frac{1}{2}}\rho_{\delta}\sigma_{\varepsilon}
^{-\frac{1}{2}}\right)  \sigma_{\varepsilon}\right] \\
&  =\operatorname{Tr}\!\left[  \sigma_{\varepsilon}\left(  \sigma
_{\varepsilon}^{-\frac{1}{2}}\rho_{\delta}\sigma_{\varepsilon}^{-\frac{1}{2}
}\right)  \ln\!\left(  \sigma_{\varepsilon}^{-\frac{1}{2}}\rho_{\delta}
\sigma_{\varepsilon}^{-\frac{1}{2}}\right)  \right] \\
&  =\operatorname{Tr}\!\left[  \sigma_{\varepsilon}\eta\!\left(
\sigma_{\varepsilon}^{-\frac{1}{2}}\rho_{\delta}\sigma_{\varepsilon}
^{-\frac{1}{2}}\right)  \right]  ,
\end{align}
where the second equality follows from applying
Lemma~\ref{lem:sing-val-lemma-pseudo-commute}\ with $f=\ln$ and $L=\rho
_{\delta}^{\frac{1}{2}}\sigma_{\varepsilon}^{-\frac{1}{2}}$. The
second-to-last equality follows because $\sigma_{\varepsilon}^{-\frac{1}{2}
}\rho_{\delta}\sigma_{\varepsilon}^{-\frac{1}{2}}$ commutes with $\ln
(\sigma_{\varepsilon}^{-\frac{1}{2}}\rho_{\delta}\sigma_{\varepsilon}
^{-\frac{1}{2}})$, and by employing cyclicity of trace. In the last line, we
made use of the following function:
\begin{equation}
\eta(x):=x\ln x,
\end{equation}
defined for all $x\in\lbrack0,\infty)$ with $\eta(0)=0$. By appealing to the
continuity of the function $\eta(x)$ on $x\in\lbrack0,\infty)$ and the fact
that $\lim_{\delta\rightarrow0^{+}}\rho_{\delta}=\rho_{0,0}$, we find that
\begin{equation}
\lim_{\delta\rightarrow0^{+}}\operatorname{Tr}\!\left[  \sigma_{\varepsilon
}\eta\!\left(  \sigma_{\varepsilon}^{-\frac{1}{2}}\rho_{\delta}\sigma
_{\varepsilon}^{-\frac{1}{2}}\right)  \right]  =\operatorname{Tr}\!\left[
\sigma_{\varepsilon}\eta\!\left(  \sigma_{\varepsilon}^{-\frac{1}{2}}
\rho_{0,0}\sigma_{\varepsilon}^{-\frac{1}{2}}\right)  \right]  .
\end{equation}
Now recall the function $\ln_{\ast}$ defined in
\eqref{eq:log-star-fnc-geometric-renyi}. Using it, we can write
\begin{align}
&  \operatorname{Tr}\!\left[  \sigma_{\varepsilon}\eta\!\left(  \sigma
_{\varepsilon}^{-\frac{1}{2}}\rho_{0,0}\sigma_{\varepsilon}^{-\frac{1}{2}
}\right)  \right] \nonumber\\
&  =\operatorname{Tr}\!\left[  \sigma_{\varepsilon}\sigma_{\varepsilon
}^{-\frac{1}{2}}\rho_{0,0}\sigma_{\varepsilon}^{-\frac{1}{2}}\ln_{\ast
}\!\left(  \sigma_{\varepsilon}^{-\frac{1}{2}}\rho_{0,0}\sigma_{\varepsilon
}^{-\frac{1}{2}}\right)  \right] \\
&  =\operatorname{Tr}\!\left[  \sigma_{\varepsilon}^{\frac{1}{2}}\rho
_{0,0}^{\frac{1}{2}}\rho_{0,0}^{\frac{1}{2}}\sigma_{\varepsilon}^{-\frac{1}
{2}}\ln_{\ast}\!\left(  \sigma_{\varepsilon}^{-\frac{1}{2}}\rho_{0,0}
^{\frac{1}{2}}\rho_{0,0}^{\frac{1}{2}}\sigma_{\varepsilon}^{-\frac{1}{2}
}\right)  \right] \\
&  =\operatorname{Tr}\!\left[  \sigma_{\varepsilon}^{\frac{1}{2}}\rho
_{0,0}^{\frac{1}{2}}\ln_{\ast}\!\left(  \rho_{0,0}^{\frac{1}{2}}
\sigma_{\varepsilon}^{-\frac{1}{2}}\sigma_{\varepsilon}^{-\frac{1}{2}}
\rho_{0,0}^{\frac{1}{2}}\right)  \rho_{0,0}^{\frac{1}{2}}\sigma_{\varepsilon
}^{-\frac{1}{2}}\right] \\
&  =\operatorname{Tr}\!\left[  \rho_{0,0}\ln_{\ast}\!\left(  \rho_{0,0}
^{\frac{1}{2}}\sigma_{\varepsilon}^{-1}\rho_{0,0}^{\frac{1}{2}}\right)
\right] \\
&  =\operatorname{Tr}\!\left[  \rho_{0,0}\ln_{\ast}\!\left(  \rho_{0,0}
^{\frac{1}{2}}\left(  \sigma+\varepsilon\Pi_{\sigma}\right)  ^{-1}\rho
_{0,0}^{\frac{1}{2}}\right)  \right]  ,
\end{align}
where the last line follows because
\begin{align}
&  \rho_{0,0}^{\frac{1}{2}}\left(  \sigma+\varepsilon\Pi_{\sigma}\right)
^{-1}\rho_{0,0}^{\frac{1}{2}}\nonumber\\
&  =
\begin{pmatrix}
\rho_{0,0}^{\frac{1}{2}} & 0\\
0 & 0
\end{pmatrix}
\begin{pmatrix}
\left(  \sigma+\varepsilon\Pi_{\sigma}\right)  ^{-1} & 0\\
0 & \varepsilon^{-1}\Pi_{\sigma}^{\perp}
\end{pmatrix}
\begin{pmatrix}
\rho_{0,0}^{\frac{1}{2}} & 0\\
0 & 0
\end{pmatrix}
\\
&  =
\begin{pmatrix}
\rho_{0,0}^{\frac{1}{2}}\left(  \sigma+\varepsilon\Pi_{\sigma}\right)
^{-1}\rho_{0,0}^{\frac{1}{2}} & 0\\
0 & 0
\end{pmatrix}
.
\end{align}
Now taking the limit as $\varepsilon\rightarrow0^{+}$, and appealing to
continuity of $\ln_{\ast}(x)$ and $x^{-1}$ for $x>0$, we find that
\begin{align}
&  \lim_{\varepsilon\rightarrow0^{+}}\operatorname{Tr}\!\left[  \rho_{0,0}
\ln_{\ast}\!\left(  \rho_{0,0}^{\frac{1}{2}}\left(  \sigma+\varepsilon
\Pi_{\sigma}\right)  ^{-1}\rho_{0,0}^{\frac{1}{2}}\right)  \right] \nonumber\\
&  =\operatorname{Tr}\!\left[  \rho_{0,0}\ln_{\ast}\!\left(  \rho_{0,0}
^{\frac{1}{2}}\sigma^{-1}\rho_{0,0}^{\frac{1}{2}}\right)  \right] \\
&  =\operatorname{Tr}\!\left[  \rho\ln\!\left(  \rho^{\frac{1}{2}}\sigma
^{-1}\rho^{\frac{1}{2}}\right)  \right]
\end{align}
where the formula in the last line is interpreted exactly as stated in
Definition~\ref{def:belavkin-sta-rel-ent}. Thus, we conclude that
\begin{equation}
\lim_{\varepsilon\rightarrow0^{+}}\lim_{\delta\rightarrow0^{+}}
\operatorname{Tr}\!\left[  \rho_{\delta}\ln\!\left(  \rho_{\delta}^{\frac
{1}{2}}\sigma_{\varepsilon}^{-1}\rho_{\delta}^{\frac{1}{2}}\right)  \right]
=\operatorname{Tr}\!\left[  \rho\ln\!\left(  \rho^{\frac{1}{2}}\sigma^{-1}
\rho^{\frac{1}{2}}\right)  \right]  .
\end{equation}

Now suppose that $\operatorname{supp}(\rho)\not \subseteq \operatorname{supp}
(\sigma)$. Then applying Proposition~\ref{cor:BS-to-q-rel-ent}, we find that
the following inequality holds for all $\delta\in(0,1)$ and $\varepsilon>0$:
\begin{equation}
\widehat{D}(\rho_{\delta}\Vert\sigma_{\varepsilon})\geq D(\rho_{\delta}
\Vert\sigma_{\varepsilon}).
\end{equation}
Now taking limits and applying \eqref{eq:rel-ent-limit-eps-0}, we find that
\begin{align}
\lim_{\varepsilon\rightarrow0^{+}}\lim_{\delta\rightarrow0^{+}}\widehat
{D}(\rho_{\delta}\Vert\sigma_{\varepsilon})  &  \geq\lim_{\varepsilon
\rightarrow0^{+}}\lim_{\delta\rightarrow0^{+}}D(\rho_{\delta}\Vert
\sigma_{\varepsilon})\\
&  =\lim_{\varepsilon\rightarrow0^{+}}D(\rho\Vert\sigma_{\varepsilon})\\
&  =+\infty.
\end{align}
This concludes the proof.
\end{proof}

\bigskip

By taking the limit $\alpha\rightarrow1$ in the statement of the
data-processing inequality for $\widehat{D}_{\alpha}$, and applying
Proposition~\ref{prop:BS-rel-ent-to-geometric}, we immediately obtain the
data-processing inequality for the Belavkin--Staszewski relative entropy. This
was shown by a different method in \cite{HP91}.

\begin{corollary}
[Data-Processing Inequality for Belavkin--Staszewski Relative Entropy]
\label{cor:DP-BS-rel-ent}Let $\rho$ be a state, $\sigma$ a positive
semi-definite operator, and $\mathcal{N}$ a quantum channel. Then
\begin{equation}
\widehat{D}(\rho\Vert\sigma)\geq\widehat{D}(\mathcal{N}(\rho)\Vert
\mathcal{N}(\sigma)).
\end{equation}

\end{corollary}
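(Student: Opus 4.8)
The statement to prove is the data-processing inequality for the Belavkin--Staszewski relative entropy: $\widehat{D}(\rho\Vert\sigma)\geq\widehat{D}(\mathcal{N}(\rho)\Vert\mathcal{N}(\sigma))$ for any quantum channel $\mathcal{N}$, state $\rho$, and positive semi-definite operator $\sigma$. The plan is to obtain this as a limiting case of the data-processing inequality for the geometric R\'enyi relative entropy, which has already been established in Theorem~\ref{thm:data-proc-geometric-renyi} for the interval $\alpha\in(0,1)\cup(1,2]$. This is the quickest route because all the technical heavy lifting---the operator Jensen inequality applied through the Petz recovery channel for the map $\mathcal{N}$---was done in the proof of that theorem.

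First I would invoke Theorem~\ref{thm:data-proc-geometric-renyi} to write, for each fixed $\alpha$ in the data-processing interval (say $\alpha\in(1,2]$, or more simply for a sequence $\alpha_n\downarrow 1$),
\begin{equation}
\widehat{D}_{\alpha}(\rho\Vert\sigma)\geq\widehat{D}_{\alpha}(\mathcal{N}(\rho)\Vert\mathcal{N}(\sigma)).
\end{equation}
Then I would take the limit $\alpha\rightarrow 1$ on both sides. By Proposition~\ref{prop:BS-rel-ent-to-geometric}, the left side converges to $\widehat{D}(\rho\Vert\sigma)$ and the right side converges to $\widehat{D}(\mathcal{N}(\rho)\Vert\mathcal{N}(\sigma))$, since that proposition gives $\lim_{\alpha\rightarrow 1}\widehat{D}_{\alpha}(\tau\Vert\omega)=\widehat{D}(\tau\Vert\omega)$ for any state $\tau$ and positive semi-definite operator $\omega$, including in the case where the support condition fails (where both sides are $+\infty$). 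Since the inequality is preserved under taking limits, we conclude
\begin{equation}
\widehat{D}(\rho\Vert\sigma)\geq\widehat{D}(\mathcal{N}(\rho)\Vert\mathcal{N}(\sigma)),
\end{equation}
which is the claim.

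There is essentially no obstacle here given the prior results; the only point requiring a moment of care is the case $\operatorname{supp}(\rho)\not\subseteq\operatorname{supp}(\sigma)$, but then $\widehat{D}(\rho\Vert\sigma)=+\infty$ and the inequality holds trivially, and this is also consistent with the limiting argument since $\widehat{D}_{\alpha}(\rho\Vert\sigma)=+\infty$ for $\alpha>1$ in that case. One should also note that the limit in Proposition~\ref{prop:BS-rel-ent-to-geometric} can be taken along values $\alpha\in(1,2]$ (indeed $\alpha\downarrow 1$), which stays inside the interval where data processing for $\widehat{D}_\alpha$ is valid, so there is no tension between the two inputs. If one instead preferred a self-contained argument, the alternative would be to repeat the Petz-recovery-plus-operator-Jensen argument from Theorem~\ref{thm:data-proc-geometric-renyi} directly with the operator convex function $x\mapsto x\ln x$ in place of $x\mapsto x^{\alpha}$, but that is strictly more work and the limiting argument is cleaner.
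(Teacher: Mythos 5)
Your proposal is correct and matches the paper's own argument: the paper obtains this corollary precisely by taking the limit $\alpha\rightarrow 1$ in the data-processing inequality of Theorem~\ref{thm:data-proc-geometric-renyi} and invoking Proposition~\ref{prop:BS-rel-ent-to-geometric}. Your additional remarks on the support condition and on staying within the data-processing interval are sound but not needed beyond what the paper states.
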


Some basic properties of the Belavkin--Staszewski relative entropy are as follows:

\begin{proposition}
[Basic Properties of Belavkin--Staszewski Relative Entropy]The
Belavkin--Staszewski relative entropy satisfies the following properties for
states $\rho,\rho_{1},\rho_{2}$ and positive semi-definite operators
$\sigma,\sigma_{1},\sigma_{2}$.

\begin{enumerate}
\item \textit{Isometric invariance}: For any isometry $V$,
\begin{equation}
\widehat{D}(V\rho V^{\dagger}\Vert V\sigma V^{\dagger})=\widehat{D}(\rho
\Vert\sigma).
\end{equation}

\item
\begin{enumerate}
\item If $\operatorname{Tr}[\sigma]\leq1$, then $\widehat{D}(\rho\Vert
\sigma)\geq0$.

\item \textit{Faithfulness}: Suppose that $\operatorname{Tr}[\sigma
]\leq\operatorname{Tr}[\rho]=1$. Then $\widehat{D}(\rho\Vert\sigma)=0$ if and
only if $\rho=\sigma$.

\item If $\rho\leq\sigma$, then $\widehat{D}(\rho\Vert\sigma)\leq0$.

\item If $\sigma\leq\sigma^{\prime}$, then $\widehat{D}(\rho\Vert\sigma
)\geq\widehat{D}(\rho\Vert\sigma^{\prime})$.
\end{enumerate}

\item \textit{Additivity}:
\begin{equation}
\widehat{D}(\rho_{1}\otimes\rho_{2}\Vert\sigma_{1}\otimes\sigma_{2}
)=\widehat{D}(\rho_{1}\Vert\sigma_{1})+D(\rho_{2}\Vert\sigma_{2}).
\label{eq-BS_rel_ent_additivity}
\end{equation}
As a special case, for any $\beta\in(0,\infty)$,
\begin{equation}
\widehat{D}(\rho\Vert\beta\sigma)=\widehat{D}(\rho\Vert\sigma)+\log
_{2}\!\left(  \frac{1}{\beta}\right)  . \label{eq-BS_rel_ent_scalar_mult}
\end{equation}

\item \textit{Direct-sum property}: Let $p:\mathcal{X}\rightarrow\lbrack0,1]$
be a probability distribution over a finite alphabet $\mathcal{X}$ with
associated $|\mathcal{X}|$-dimensional system $X$, and let $q:\mathcal{X}
\rightarrow\lbrack0,\infty)$ be a positive function on $\mathcal{X}$. Let
$\{\rho_{A}^{x}:x\in\mathcal{X}\}$ be a set of states on a system $A$, and let
$\{\sigma_{A}^{x}:x\in\mathcal{X}\}$ be a set of positive semi-definite
operators on $A$. Then,
\begin{equation}
\widehat{D}(\rho_{XA}\Vert\sigma_{XA})=\widehat{D}(p\Vert q)+\sum
_{x\in\mathcal{X}}p(x)\widehat{D}(\rho_{A}^{x}\Vert\sigma_{A}^{x}).
\label{eq-BS_rel_ent_direct_sum}
\end{equation}
where
\begin{align}
\rho_{XA}  &  :=\sum_{x\in\mathcal{X}}p(x)|x\rangle\!\langle x|_{X}\otimes
\rho_{A}^{x},\\
\sigma_{XA}  &  :=\sum_{x\in\mathcal{X}}q(x)|x\rangle\!\langle x|_{X}
\otimes\sigma_{A}^{x}.
\end{align}

\end{enumerate}
\end{proposition}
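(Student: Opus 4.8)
The plan is to derive each listed property by one of three routes: direct computation from the explicit formula in Definition~\ref{def:belavkin-sta-rel-ent} (for the algebraic items: isometric invariance, additivity, the direct-sum property); passage to the limit $\alpha \to 1$ in the corresponding property of the geometric R\'enyi relative entropy $\widehat{D}_{\alpha}$, identifying the limit via Proposition~\ref{prop:BS-rel-ent-to-geometric}; or direct application of the data-processing inequality for $\widehat{D}$ from Corollary~\ref{cor:DP-BS-rel-ent} (for the positivity and monotonicity-type items). I would organize the proof as one short argument per item, reusing constructions already appearing in the corresponding $\widehat{D}_{\alpha}$ proofs.

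For isometric invariance I would simply take $\alpha \to 1$ in the isometric invariance of $\widehat{D}_{\alpha}$ (Proposition~\ref{prop:geometric-renyi-props}), invoking Proposition~\ref{prop:BS-rel-ent-to-geometric} on both sides; alternatively it follows by the same conjugation argument applied to the limit formula \eqref{eq:limit-formula-BS-entropy}. For additivity I would expand $\ln\!\big((\rho_1 \otimes \rho_2)^{1/2}(\sigma_1 \otimes \sigma_2)^{-1}(\rho_1 \otimes \rho_2)^{1/2}\big) = \ln\!\big((\rho_1^{1/2}\sigma_1^{-1}\rho_1^{1/2}) \otimes (\rho_2^{1/2}\sigma_2^{-1}\rho_2^{1/2})\big)$, use $\ln(A \otimes B) = (\ln A)\otimes \Pi_B + \Pi_A \otimes (\ln B)$ on the relevant support, and trace against $\rho_1 \otimes \rho_2$; the support condition behaves correctly since $\operatorname{supp}(\rho_1\otimes\rho_2)\subseteq\operatorname{supp}(\sigma_1\otimes\sigma_2) \Longleftrightarrow \operatorname{supp}(\rho_i)\subseteq\operatorname{supp}(\sigma_i)$ for $i\in\{1,2\}$, and the scalar case \eqref{eq-BS_rel_ent_scalar_mult} is the special case of a one-dimensional second system. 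The direct-sum property \eqref{eq-BS_rel_ent_direct_sum} is handled the same way using block-diagonality in the register $X$: one computes $\rho_{XA}^{1/2}\sigma_{XA}^{-1}\rho_{XA}^{1/2} = \sum_x \frac{p(x)}{q(x)}\,|x\rangle\!\langle x|_X \otimes (\rho_A^x)^{1/2}(\sigma_A^x)^{-1}(\rho_A^x)^{1/2}$, takes its logarithm block by block, and traces against $\rho_{XA}$.

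For items 2(a)--(d) I would use data processing for $\widehat{D}$ directly. For (a): Corollary~\ref{cor:DP-BS-rel-ent} with the full trace-out channel gives $\widehat{D}(\rho\Vert\sigma) \geq \widehat{D}(\operatorname{Tr}[\rho]\Vert\operatorname{Tr}[\sigma]) = -\ln\operatorname{Tr}[\sigma] \geq 0$ whenever $\operatorname{Tr}[\sigma]\leq \operatorname{Tr}[\rho] = 1$, with the bound holding trivially as $+\infty$ when $\operatorname{Tr}[\sigma]=0$. For (c) and (d) I would use the block-embedding trick already used for $\widehat{D}_{\alpha}$: for (c) with $\rho \le \sigma$, set $\widehat\rho := |0\rangle\!\langle 0|\otimes\rho$ and $\widehat\sigma := |0\rangle\!\langle 0|\otimes\rho + |1\rangle\!\langle 1|\otimes(\sigma-\rho)$, then combine the direct-sum property with data processing under partial trace of the classical register to get $0 = \widehat{D}(\rho\Vert\rho) = \widehat{D}(\widehat\rho\Vert\widehat\sigma) \geq \widehat{D}(\rho\Vert\sigma)$; item (d) is identical with $\sigma'-\sigma\geq0$ in place of $\sigma-\rho$. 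For faithfulness (b), the ``if'' direction is immediate since $\widehat{D}(\rho\Vert\rho) = \operatorname{Tr}[\rho\ln\Pi_{\rho}] = 0$; for ``only if'' I would use that $\widehat{D}_{\alpha}$ is monotonically increasing in $\alpha$ (Proposition~\ref{prop:geometric-renyi-props}) and converges to $\widehat{D}$ as $\alpha\to1$, so $\widehat{D}(\rho\Vert\sigma) = 0$ together with $\operatorname{Tr}[\sigma]\leq1$ forces $0 \leq \widehat{D}_{\alpha}(\rho\Vert\sigma) \leq 0$ for every $\alpha\in(0,1)$, whence $\rho=\sigma$ by faithfulness of $\widehat{D}_{\alpha}$.

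The main obstacle I anticipate is the bookkeeping of support conditions: ensuring that the operator-algebraic manipulations in the additivity and direct-sum computations are legitimate when $\operatorname{supp}(\rho)\not\subseteq\operatorname{supp}(\sigma)$ (checking that both sides then equal $+\infty$), and that the logarithms and generalized inverses combine as claimed on the appropriate supports. A secondary, minor point is justifying the interchange of the $\alpha\to1$ limit with the inequalities in the faithfulness argument, which is handled by the monotonicity of $\widehat{D}_{\alpha}$ in $\alpha$ together with Proposition~\ref{prop:BS-rel-ent-to-geometric}. All of these are routine given the explicit formula in Definition~\ref{def:belavkin-sta-rel-ent} and the limit formula \eqref{eq:limit-formula-BS-entropy}, but they warrant care.
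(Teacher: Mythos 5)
Your proposal is correct and follows essentially the same route as the paper: isometric invariance via the $\alpha\to 1$ limit of the geometric R\'enyi property, items 2(a)--(d) via data processing and the same block-embedding trick, and additivity and the direct-sum property by direct evaluation from the explicit formula. The only (equally valid) deviation is in the ``only if'' direction of faithfulness, where you use monotonicity of $\widehat{D}_{\alpha}$ in $\alpha$ together with faithfulness of $\widehat{D}_{\alpha}$, whereas the paper instead invokes $D(\rho\Vert\sigma)\leq\widehat{D}(\rho\Vert\sigma)$ and faithfulness of the quantum relative entropy.
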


\begin{proof}
\begin{enumerate}
    \item Isometric invariance is a direct consequence of Propositions
\ref{prop:geometric-renyi-props}\ and \ref{prop:BS-rel-ent-to-geometric}.

\item All of the properties in the second item follow from data processing
(Corollary~\ref{cor:DP-BS-rel-ent}). Applying the trace-out channel, we find
that
\begin{align}
\widehat{D}(\rho\Vert\sigma)  &  \geq\widehat{D}(\operatorname{Tr}[\rho
]\Vert\operatorname{Tr}[\sigma])\\
&  =\operatorname{Tr}[\rho] \ln(\operatorname{Tr}[\rho]/\operatorname{Tr}
[\sigma])\\
&  =- \ln\operatorname{Tr}[\sigma]\\
&  \geq0.
\end{align}

If $\rho=\sigma$, then it follows by direct evalution that $\widehat{D}
(\rho\Vert\sigma)=0$. If $\widehat{D}(\rho\Vert\sigma)=0$ and
$\operatorname{Tr}[\sigma]\leq1$, then $D(\rho\Vert\sigma)=0$ by
Proposition~\ref{cor:BS-to-q-rel-ent}\ and we conclude that $\rho=\sigma$ from
faithfulness of the quantum relative entropy (see, e.g., \cite[Theorem~11.8.2]
{Wbook17}).

If $\rho\leq\sigma$, then $\sigma-\rho$ is positive semi-definite, and the
following operator is positive semi-definite:
\begin{equation}
\hat{\sigma}:=|0\rangle\!\langle0|\otimes\rho+|1\rangle\!\langle1|\otimes\left(
\sigma-\rho\right)  .
\end{equation}
Defining $\hat{\rho}:=|0\rangle\!\langle0|\otimes\rho$, we find from the
direct-sum property that
\begin{equation}
0=\widehat{D}(\rho\Vert\rho)=\widehat{D}(\hat{\rho}\Vert\hat{\sigma}
)\geq\widehat{D}(\rho\Vert\sigma),
\end{equation}
where the inequality follows from data processing by tracing out the first
classical register of $\hat{\rho}$ and $\hat{\sigma}$.

If $\sigma\leq\sigma^{\prime}$, then the operator $\sigma^{\prime}-\sigma$ is
positive semi-definite and so is the following one:
\begin{equation}
\hat{\sigma}:=|0\rangle\!\langle0|\otimes\sigma+|1\rangle\!\langle1|\otimes\left(
\sigma^{\prime}-\sigma\right)  .
\end{equation}
Defining $\hat{\rho}:=|0\rangle\!\langle0|\otimes\rho$, we find from the
direct-sum property that
\begin{equation}
\widehat{D}(\rho\Vert\sigma)=\widehat{D}(\hat{\rho}\Vert\hat{\sigma}
)\geq\widehat{D}(\rho\Vert\sigma^{\prime}),
\end{equation}
where the inequality follows from data processing by tracing out the first
classical register of $\hat{\rho}$ and $\hat{\sigma}$.

\item Additivity follows by direct evaluation.

\item The direct-sum property follows by direct evaluation.

\end{enumerate}
\end{proof}

\bigskip

A statement similar to that made by
Proposition~\ref{prop:geometric-renyi-from-classical-preps} holds for the
Belavkin--Staszewski relative entropy \cite{Mat13,Matsumoto2018}:

\begin{proposition}
[Belavkin--Staszewski Relative Entropy from Classical Preparations]Let $\rho$
be a state and $\sigma$ a positive semi-definite operator satisfying
$\operatorname{supp}(\rho)\subseteq\operatorname{supp}(\sigma)$. The
Belavkin--Staszewski relative entropy is equal to the smallest value that the
classical relative entropy can take by minimizing over classical--quantum
channels that realize the state $\rho$ and the positive semi-definite operator
$\sigma$. That is, the following equality holds
\begin{equation}
\widehat{D}(\rho\Vert\sigma)=\inf_{\left\{  p,q,\mathcal{P}\right\}  }\left\{
D(p\Vert q):\mathcal{P}(p)=\rho,\mathcal{P}(q)=\sigma\right\}  ,
\label{eq:BS-rel-ent-equality-classical-preps}
\end{equation}
where the classical relative entropy is defined as
\begin{equation}
D(p\Vert q):=\sum_{x}p(x)\ln\!\left(  \frac{p(x)}{q(x)}\right)  ,
\end{equation}
the channel $\mathcal{P}$ is a classical--quantum channel, $p:\mathcal{X}
\rightarrow\left[  0,1\right]  $ is a probability distribution over a finite
alphabet $\mathcal{X}$, and $q:\mathcal{X}\rightarrow(0,\infty)$ is a positive
function on $\mathcal{X}$.
\end{proposition}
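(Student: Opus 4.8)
The plan is to prove the equality in \eqref{eq:BS-rel-ent-equality-classical-preps} by establishing the two inequalities separately, mirroring the proof of Proposition~\ref{prop:geometric-renyi-from-classical-preps} for the geometric R\'{e}nyi relative entropy. Throughout, for a probability distribution $p$ and positive function $q$ on a finite alphabet $\mathcal{X}$, write $\omega(p):=\sum_{x}p(x)|x\rangle\!\langle x|$ and $\omega(q):=\sum_{x}q(x)|x\rangle\!\langle x|$.

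First I would handle the direction $\inf_{\{p,q,\mathcal{P}\}}D(p\Vert q)\geq\widehat{D}(\rho\Vert\sigma)$. Fix any alphabet $\mathcal{X}$, distribution $p$, positive function $q$, and classical--quantum channel $\mathcal{P}$ with $\mathcal{P}(\omega(p))=\rho$ and $\mathcal{P}(\omega(q))=\sigma$. Since the Belavkin--Staszewski relative entropy reduces to the classical relative entropy on commuting (in particular diagonal) operators, which one checks by direct evaluation exactly as was done for the Petz--R\'{e}nyi case inside the proof of Proposition~\ref{prop:sand-Petz-geo-ineqs}, we have $D(p\Vert q)=\widehat{D}(\omega(p)\Vert\omega(q))$. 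Applying the data-processing inequality for the Belavkin--Staszewski relative entropy (Corollary~\ref{cor:DP-BS-rel-ent}) with the channel $\mathcal{P}$ gives $\widehat{D}(\omega(p)\Vert\omega(q))\geq\widehat{D}(\mathcal{P}(\omega(p))\Vert\mathcal{P}(\omega(q)))=\widehat{D}(\rho\Vert\sigma)$, and taking the infimum over all admissible triples yields the claimed inequality.

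Next I would prove $\widehat{D}(\rho\Vert\sigma)\geq\inf_{\{p,q,\mathcal{P}\}}D(p\Vert q)$ by exhibiting a triple that saturates it. Using the spectral decomposition $\sigma^{-1/2}\rho\sigma^{-1/2}=\sum_{x}\lambda_{x}\Pi_{x}$ with the inverse taken on $\operatorname{supp}(\sigma)$ (well defined since $\operatorname{supp}(\rho)\subseteq\operatorname{supp}(\sigma)$), take $q(x):=\operatorname{Tr}[\Pi_{x}\sigma]$, $p(x):=\lambda_{x}q(x)$, and $\mathcal{P}(\cdot):=\sum_{x}\langle x|(\cdot)|x\rangle\,\sigma^{1/2}\Pi_{x}\sigma^{1/2}/q(x)$, precisely the choices in \eqref{eq:optimal-choices-classical-preps-1}--\eqref{eq:optimal-choices-classical-preps-3}. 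The verification that $p$ is a probability distribution, that $\mathcal{P}$ is a valid classical--quantum channel, and that $\mathcal{P}(\omega(p))=\rho$ and $\mathcal{P}(\omega(q))=\sigma$ is the same $\alpha$-free computation as in the proof of Proposition~\ref{prop:geometric-renyi-from-classical-preps} and may be imported verbatim. It then remains to check $D(p\Vert q)=\widehat{D}(\rho\Vert\sigma)$: one computes $D(p\Vert q)=\sum_{x}p(x)\ln(p(x)/q(x))=\sum_{x}\lambda_{x}\ln_{\ast}(\lambda_{x})\,\operatorname{Tr}[\Pi_{x}\sigma]=\operatorname{Tr}[\sigma\,(\sigma^{-1/2}\rho\sigma^{-1/2})\ln_{\ast}(\sigma^{-1/2}\rho\sigma^{-1/2})]$, using the function $\ln_{\ast}$ of \eqref{eq:log-star-fnc-geometric-renyi} to absorb any vanishing eigenvalues $\lambda_{x}$, and then the chain of manipulations in \eqref{eq:special-log-func-steps-BS-1}--\eqref{eq:special-log-func-steps-BS-last}, which rest on Lemma~\ref{lem:sing-val-lemma-pseudo-commute}, rewrites this as $\operatorname{Tr}[\rho\ln(\rho^{1/2}\sigma^{-1}\rho^{1/2})]=\widehat{D}(\rho\Vert\sigma)$.

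The only non-routine point, and hence the main obstacle, is this last equality together with the support bookkeeping when $\rho$ is not full rank inside $\operatorname{supp}(\sigma)$; that is exactly why $\ln_{\ast}$ appears, and the needed identity has in fact already been derived in the proof of Proposition~\ref{prop:BS-rel-ent-to-geometric} (the derivative-at-$\alpha=1$ computation there), so I would simply invoke it. A cleaner alternative I might present instead is to observe that the optimal triple in Proposition~\ref{prop:geometric-renyi-from-classical-preps} does not depend on $\alpha$, so that $\widehat{D}_{\alpha}(\rho\Vert\sigma)=D_{\alpha}(p\Vert q)$ holds for this fixed triple and all $\alpha\in(0,1)\cup(1,2]$; letting $\alpha\to1$ and using Proposition~\ref{prop:BS-rel-ent-to-geometric} on the left and convergence of the classical R\'{e}nyi relative entropy to the classical relative entropy on the right gives $\widehat{D}(\rho\Vert\sigma)=D(p\Vert q)$, which combined with the first inequality closes the proof.
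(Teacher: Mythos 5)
Your proof is correct and follows essentially the same route as the paper: the lower bound via reduction of the Belavkin--Staszewski relative entropy to the classical relative entropy on diagonal operators plus data processing (Corollary~\ref{cor:DP-BS-rel-ent}), and saturation via the same $\alpha$-independent optimal triple from \eqref{eq:optimal-choices-classical-preps-1}--\eqref{eq:optimal-choices-classical-preps-3} together with the $\ln_{\ast}$ manipulations of \eqref{eq:special-log-func-steps-BS-1}--\eqref{eq:special-log-func-steps-BS-last}. The alternative closing argument you sketch (fixing the optimal triple and taking $\alpha\to1$ via Proposition~\ref{prop:BS-rel-ent-to-geometric}) is a valid minor variation but not a substantively different proof.
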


\begin{proof}
The proof is very similar to the proof of
Proposition~\ref{prop:geometric-renyi-from-classical-preps}, and so we use the
same notation to provide a brief proof. By following the same reasoning that
leads to \eqref{eq:ineq-classical-prep-bigger-geo-ren}, it follows that
\begin{equation}
\inf_{\left\{  p,q,\mathcal{P}\right\}  }\left\{  D(p\Vert q):\mathcal{P}
(p)=\rho,\mathcal{P}(q)=\sigma\right\}  \geq\widehat{D}(\rho\Vert\sigma).
\label{eq:BS-rel-ent-lower-bound-classical-preps}
\end{equation}
The optimal choices of $p$, $q$, and $\mathcal{P}$ saturating the inequality
in \eqref{eq:BS-rel-ent-lower-bound-classical-preps} are again given by
\eqref{eq:optimal-choices-classical-preps-1}--\eqref{eq:optimal-choices-classical-preps-3}.
Consider for those choices that
\begin{align}
\sum_{x}p(x) \ln\!\left(  \frac{p(x)}{q(x)}\right)   &  =\sum_{x}p(x)
\ln\!\left(  \lambda_{x}\right) \\
&  =\sum_{x}\lambda_{x}q(x) \ln\!\left(  \lambda_{x}\right) \\
&  =\sum_{x}\lambda_{x}\operatorname{Tr}[\Pi_{x}\sigma] \ln\!\left(
\lambda_{x}\right) \\
&  =\operatorname{Tr}\!\left[  \sigma\left(  \sum_{x}\lambda_{x}\log
_{2}\!\left(  \lambda_{x}\right)  \Pi_{x}\right)  \right] \\
&  =\operatorname{Tr}\!\left[  \sigma\left(  \sigma^{-\frac{1}{2}}\rho
\sigma^{-\frac{1}{2}}\right)  \ln\!\left(  \sigma^{-\frac{1}{2}}\rho
\sigma^{-\frac{1}{2}}\right)  \right] \\
&  =\operatorname{Tr}\!\left[  \rho\ln\!\left(  \rho^{\frac{1}{2}}\sigma
^{-1}\rho^{\frac{1}{2}}\right)  \right]  ,
\end{align}
where the last equality follows from reasoning similar to that used to justify
\eqref{eq:special-log-func-steps-BS-1}--\eqref{eq:special-log-func-steps-BS-last}.
Then by following the reasoning at the end of the proof of
Proposition~\ref{prop:geometric-renyi-from-classical-preps}, we conclude \eqref{eq:BS-rel-ent-equality-classical-preps}.
\end{proof}

\subsection{Convergence of geometric R\'enyi relative entropy to max-relative
entropy}

\begin{proposition}
The geometric R\'enyi relative entropy converges to the max-relative entropy
in the limit as $\alpha\rightarrow\infty$:
\begin{equation}
\lim_{\alpha\rightarrow\infty}\widehat{D}_{\alpha}(\rho\Vert\sigma)=D_{\max
}(\rho\Vert\sigma). \label{eq:geometric-renyi-limit-to-max}
\end{equation}

\end{proposition}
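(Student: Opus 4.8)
The plan is to split into the two support cases. When $\operatorname{supp}(\rho)\not\subseteq\operatorname{supp}(\sigma)$, Proposition~\ref{prop:explicit-form-geometric-renyi} gives $\widehat{Q}_{\alpha}(\rho\Vert\sigma)=+\infty$ for every $\alpha\in(1,\infty)$, so $\widehat{D}_{\alpha}(\rho\Vert\sigma)=+\infty$ for all such $\alpha$ and the limit equals $+\infty=D_{\max}(\rho\Vert\sigma)$. For the remaining case, I would assume $\operatorname{supp}(\rho)\subseteq\operatorname{supp}(\sigma)$, work inside $\operatorname{supp}(\sigma)$ where $\sigma$ is invertible, and set $T:=\sigma^{-\frac12}\rho\sigma^{-\frac12}$ (generalized inverse). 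The first step is the identification $D_{\max}(\rho\Vert\sigma)=\ln\lVert T\rVert_{\infty}$: conjugating $\rho\le\lambda\sigma$ by $\sigma^{-\frac12}$ on the support (the support condition ensuring nothing is lost on the kernel of $\sigma$) turns it into $T\le\lambda\Pi_{\sigma}$, i.e.\ $\lambda\ge\lVert T\rVert_{\infty}$. By Proposition~\ref{prop:explicit-form-geometric-renyi}, for $\alpha\in(1,\infty)$ we have $\widehat{Q}_{\alpha}(\rho\Vert\sigma)=\operatorname{Tr}[\sigma T^{\alpha}]$, and $\operatorname{Tr}[\sigma T]=\operatorname{Tr}[\Pi_{\sigma}\rho]=1$.

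Next I would establish matching bounds. For the upper bound: since $T\ge0$ commutes with $T^{\alpha-1}\le\lVert T\rVert_{\infty}^{\alpha-1}\Pi_{\sigma}$, one gets $T^{\alpha}\le\lVert T\rVert_{\infty}^{\alpha-1}T$, hence $\widehat{Q}_{\alpha}(\rho\Vert\sigma)\le\lVert T\rVert_{\infty}^{\alpha-1}\operatorname{Tr}[\sigma T]=\lVert T\rVert_{\infty}^{\alpha-1}$, which gives $\widehat{D}_{\alpha}(\rho\Vert\sigma)\le\ln\lVert T\rVert_{\infty}=D_{\max}(\rho\Vert\sigma)$ for all $\alpha>1$. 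For the lower bound: let $|\psi\rangle$ be a unit eigenvector of $T$ with eigenvalue $\lVert T\rVert_{\infty}$; since $T$ is supported on $\operatorname{supp}(\sigma)$, we have $|\psi\rangle\in\operatorname{supp}(\sigma)$, and $\sigma\ge\lambda_{\min}(\sigma)\,|\psi\rangle\!\langle\psi|$ with $\lambda_{\min}(\sigma)>0$ the least nonzero eigenvalue of $\sigma$. Then $\widehat{Q}_{\alpha}(\rho\Vert\sigma)=\operatorname{Tr}[\sigma T^{\alpha}]\ge\lambda_{\min}(\sigma)\langle\psi|T^{\alpha}|\psi\rangle=\lambda_{\min}(\sigma)\lVert T\rVert_{\infty}^{\alpha}$, so $\widehat{D}_{\alpha}(\rho\Vert\sigma)\ge\tfrac{1}{\alpha-1}\ln\lambda_{\min}(\sigma)+\tfrac{\alpha}{\alpha-1}\ln\lVert T\rVert_{\infty}$. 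Letting $\alpha\to\infty$ in both bounds squeezes the limit to $\ln\lVert T\rVert_{\infty}=D_{\max}(\rho\Vert\sigma)$.

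An alternative that avoids the explicit lower bound is to combine the upper bound just derived with Proposition~\ref{prop:geometric-to-sandwiched}, which gives $\widetilde{D}_{\alpha}(\rho\Vert\sigma)\le\widehat{D}_{\alpha}(\rho\Vert\sigma)\le D_{\max}(\rho\Vert\sigma)$, and the known fact that $\lim_{\alpha\to\infty}\widetilde{D}_{\alpha}(\rho\Vert\sigma)=D_{\max}(\rho\Vert\sigma)$; the sandwich then immediately yields the claim. Either way, the only genuine care required — and the place where the hypothesis $\operatorname{supp}(\rho)\subseteq\operatorname{supp}(\sigma)$ is essential — is the bookkeeping with generalized inverses on $\operatorname{supp}(\sigma)$: verifying that the conjugation argument identifying $D_{\max}(\rho\Vert\sigma)$ with $\ln\lVert T\rVert_{\infty}$ remains valid when $\sigma$ is singular, and that the top eigenvector of $T$ lies in $\operatorname{supp}(\sigma)$ so that the bound $\sigma\ge\lambda_{\min}(\sigma)\,|\psi\rangle\!\langle\psi|$ makes sense. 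This is routine but is the step most likely to trip up a careless write-up.
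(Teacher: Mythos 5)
Your proposal is correct and takes essentially the same approach as the paper: a two-sided squeeze of $\widehat{Q}_{\alpha}(\rho\Vert\sigma)=\operatorname{Tr}[\sigma T^{\alpha}]$ controlled by the extremal eigenvalues of $\sigma$ and the spectral radius of $T=\sigma^{-1/2}\rho\sigma^{-1/2}$, followed by the limit $\alpha\to\infty$. The only cosmetic difference is that the paper finishes via $\lim_{\alpha\to\infty}\lVert T\rVert_{\alpha}=\lVert T\rVert_{\infty}$ after bounding $\sigma$ between $\lambda_{\min}(\sigma)I$ and $\lambda_{\max}(\sigma)I$, whereas you use the operator inequality $T^{\alpha}\le\lVert T\rVert_{\infty}^{\alpha-1}T$ and a top eigenvector of $T$, which additionally yields the clean uniform bound $\widehat{D}_{\alpha}(\rho\Vert\sigma)\le D_{\max}(\rho\Vert\sigma)$ for all $\alpha>1$.
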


\begin{proof}
We only consider the case in which $\operatorname{supp}(\rho)\subseteq
\operatorname{supp}(\sigma)$. Otherwise, we trivially have $\widehat
{D}_{\alpha}(\rho\Vert\sigma)=+\infty$ for all $\alpha>1$. In the case that
$\operatorname{supp}(\rho)\subseteq\operatorname{supp}(\sigma)$, we can
consider, without loss of generality, that $\operatorname{supp}(\sigma
)=\mathcal{H}$, which implies that $\lambda_{\min}(\sigma)>0$. Since we have
that
\begin{equation}
\lambda_{\min}(\sigma)I\leq\sigma\leq\lambda_{\max}(\sigma)I
\end{equation}
it follows that
\begin{align}
\lambda_{\min}(\sigma)\operatorname{Tr}\left[  \left(  \sigma^{-\frac{1}{2}
}\rho\sigma^{-\frac{1}{2}}\right)  ^{\alpha}\right]   &  \leq\operatorname{Tr}
\left[  \sigma\left(  \sigma^{-\frac{1}{2}}\rho\sigma^{-\frac{1}{2}}\right)
^{\alpha}\right] \\
&  \leq\lambda_{\max}(\sigma)\operatorname{Tr}\left[  \left(  \sigma
^{-\frac{1}{2}}\rho\sigma^{-\frac{1}{2}}\right)  ^{\alpha}\right]  .
\end{align}
Now taking a logarithm, dividing by $\alpha-1$, and applying definitions, we
find that the following inequalities hold for $\alpha>1$:
\begin{align}
&  \frac{1}{\alpha-1}\ln\lambda_{\min}(\sigma)+\frac{1}{\alpha-1}\log
_{2}\operatorname{Tr}\left[  \left(  \sigma^{-\frac{1}{2}}\rho\sigma
^{-\frac{1}{2}}\right)  ^{\alpha}\right] \nonumber\\
&  \leq\widehat{D}_{\alpha}(\rho\Vert\sigma
)\label{eq:geometric-renyi-to-max-1}\\
&  \leq\frac{1}{\alpha-1}\ln\lambda_{\max}(\sigma)+\frac{1}{\alpha-1}
\ln\operatorname{Tr}\left[  \left(  \sigma^{-\frac{1}{2}}\rho\sigma^{-\frac
{1}{2}}\right)  ^{\alpha}\right]  . \label{eq:geometric-renyi-to-max-2}
\end{align}
Rewriting
\begin{align}
\frac{1}{\alpha-1}\ln\operatorname{Tr}\left[  \left(  \sigma^{-\frac{1}{2}
}\rho\sigma^{-\frac{1}{2}}\right)  ^{\alpha}\right]   &  =\frac{\alpha}
{\alpha-1}\ln\left(  \operatorname{Tr}\left[  \left(  \sigma^{-\frac{1}{2}
}\rho\sigma^{-\frac{1}{2}}\right)  ^{\alpha}\right]  \right)  ^{\frac
{1}{\alpha}}\\
&  =\frac{\alpha}{\alpha-1}\ln\left\Vert \sigma^{-\frac{1}{2}}\rho
\sigma^{-\frac{1}{2}}\right\Vert _{\alpha}.
\end{align}
Then by applying $\lim_{\alpha\rightarrow\infty}\left\Vert X\right\Vert
_{\alpha}=\left\Vert X\right\Vert _{\infty}$, it follows that
\begin{equation}
\lim_{\alpha\rightarrow\infty}\frac{1}{\alpha-1}\ln\operatorname{Tr}\left[
\left(  \sigma^{-\frac{1}{2}}\rho\sigma^{-\frac{1}{2}}\right)  ^{\alpha
}\right]  =D_{\max}(\rho\Vert\sigma).
\end{equation}
Combining this limit with the inequalities in
\eqref{eq:geometric-renyi-to-max-1} and \eqref{eq:geometric-renyi-to-max-2},
we arrive at the equality in \eqref{eq:geometric-renyi-limit-to-max}.
\end{proof}

\section{Geometric R\'{e}nyi relative entropy of quantum channels}

\label{app:geo-renyi-channels-app}Here we prove the explicit form for the
geometric R\'{e}nyi relative entropy of quantum channels from
Proposition~\ref{prop:explicit-formula-geo-renyi-ch}, as well as the chain
rule from Proposition~\ref{prop:chain-rule-geo-renyi-ch}. We first begin by
recalling the transformer inequality from \cite{KA80} and \cite[Lemma~47]
{Fang2019a}.

\begin{lemma}
Let $X$ and $Y$ be positive semi-definite such that $\operatorname{supp}
(Y)\subseteq\operatorname{supp}(X)$, and let $L$ be a linear operator. Then
for $\alpha\in(1,2]$, the following inequality holds
\begin{equation}
G_{\alpha}(LXL^{\dag},LYL^{\dag})\leq LG_{\alpha}(X,Y)L^{\dag},
\label{eq:transform-ineq-alpha>1}
\end{equation}
where $G_{\alpha}$ is defined in \eqref{eq:weighted-op-geo-mean}. For
$\alpha\in(0,1)$, the following inequality holds
\begin{equation}
LG_{\alpha}(X,Y)L^{\dag}\leq G_{\alpha}(LXL^{\dag},LYL^{\dag}),
\label{eq:transform-ineq-alpha<1}
\end{equation}
In both of the above inequalities, the inverses $\left(  LXL^{\dag}\right)
^{-1}$ are taken on the support of $LXL^{\dag}$. If $L$ is invertible, then
the inequalities hold with equality.
\end{lemma}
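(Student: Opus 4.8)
The plan is to prove the transformer inequality for $G_{\alpha}$ by writing the power function $t\mapsto t^{\alpha}$ as a positively weighted integral of ``elementary'' perspectives, each of which is controlled by the transformer inequality already established in Lemma~\ref{lem:transformer-ineq-basic}. Throughout I would write $P_{f}(A,B):=A^{1/2}f(A^{-1/2}BA^{-1/2})A^{1/2}$ for the non-commutative perspective, so that $G_{\alpha}(A,B)=P_{t^{\alpha}}(A,B)$, and the goal becomes to show $P_{t^{\alpha}}(LXL^{\dag},LYL^{\dag})\ge LP_{t^{\alpha}}(X,Y)L^{\dag}$ for $\alpha\in(0,1)$ and the reverse for $\alpha\in(1,2]$. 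First I would reduce to the case where $X$ and $Y$ are positive definite: the support condition $\operatorname{supp}(Y)\subseteq\operatorname{supp}(X)$ is preserved by the congruence $Z\mapsto LZL^{\dag}$ (it equals $L(\operatorname{supp}(Y))\subseteq L(\operatorname{supp}(X))$), and $G_{\alpha}$ is continuous under the perturbations $X\mapsto X+\varepsilon I$, $Y\mapsto Y+\varepsilon I$ as $\varepsilon\to0^{+}$ — this is exactly the limiting prescription used throughout Appendix~\ref{app:geo-ren-props} — so it suffices to treat positive definite $X,Y$, where all inverses are genuine.

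For $\alpha\in(0,1)$ I would use the representation $t^{\alpha}=\tfrac{\sin(\pi\alpha)}{\pi}\int_{0}^{\infty}\tfrac{t}{t+\lambda}\,\lambda^{\alpha-1}\,d\lambda$ together with $\tfrac{t}{t+\lambda}=1-\tfrac{\lambda}{t+\lambda}$; a direct computation gives $P_{t/(t+\lambda)}(A,B)=A-\lambda A(B+\lambda A)^{-1}A$. Applying Lemma~\ref{lem:transformer-ineq-basic} with the substitutions $X\to A$, $Y\to B+\lambda A$ yields $(LAL^{\dag})(L(B+\lambda A)L^{\dag})^{-1}(LAL^{\dag})\le LA(B+\lambda A)^{-1}AL^{\dag}$, and since this controlled term appears with a minus sign, $P_{t/(t+\lambda)}(LXL^{\dag},LYL^{\dag})\ge LP_{t/(t+\lambda)}(X,Y)L^{\dag}$ for every $\lambda>0$. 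Integrating against the positive weight $\lambda^{\alpha-1}$ (the integral converges at both ends for $\alpha\in(0,1)$) preserves the inequality and delivers \eqref{eq:transform-ineq-alpha<1}. Alternatively, one may simply invoke the Kubo--Ando transformer inequality \cite{KA80} here, since $t^{\alpha}$ is operator monotone for $\alpha\in(0,1)$.

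For $\alpha\in(1,2)$ I would instead use $t^{\alpha}=\tfrac{\sin(\pi(\alpha-1))}{\pi}\int_{0}^{\infty}\tfrac{t^{2}}{t+\lambda}\,\lambda^{\alpha-2}\,d\lambda$ with the decomposition $\tfrac{t^{2}}{t+\lambda}=t-\lambda+\tfrac{\lambda^{2}}{t+\lambda}$, which gives $P_{t^{2}/(t+\lambda)}(A,B)=B-\lambda A+\lambda^{2}A(B+\lambda A)^{-1}A$. Now Lemma~\ref{lem:transformer-ineq-basic} controls the last term in the \emph{forward} direction, so $P_{t^{2}/(t+\lambda)}(LXL^{\dag},LYL^{\dag})\le LP_{t^{2}/(t+\lambda)}(X,Y)L^{\dag}$, and integrating against $\lambda^{\alpha-2}>0$ (again convergent at both ends for $\alpha\in(1,2)$) gives \eqref{eq:transform-ineq-alpha>1}. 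The endpoint $\alpha=2$ is handled separately, since $G_{2}(X,Y)=YX^{-1}Y$ and the statement is then literally Lemma~\ref{lem:transformer-ineq-basic} with $X\to Y$, $Y\to X$. Finally, when $L$ is invertible each of the elementary inequalities above is an equality — the $A$, $\lambda A$, $B$ pieces trivially and the $A(B+\lambda A)^{-1}A$ piece by the equality clause of Lemma~\ref{lem:transformer-ineq-basic} — so the integral identities pass through and yield $G_{\alpha}(LXL^{\dag},LYL^{\dag})=LG_{\alpha}(X,Y)L^{\dag}$; this can also be checked directly using that $K:=(LXL^{\dag})^{-1/2}LX^{1/2}$ is unitary.

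The main obstacle I anticipate is not the algebra but the bookkeeping around supports and generalized inverses: carefully justifying the reduction to positive definite operators, verifying that the elementary-perspective identities for $P_{t/(t+\lambda)}$ and $P_{t^{2}/(t+\lambda)}$ remain valid when the inverses are interpreted on the relevant supports, and checking enough uniformity in $\lambda$ (via a routine dominated-convergence argument) to interchange the integral with the operator inequality. None of these is deep; the conceptual content is simply that every building block of $t^{\alpha}$ reduces to the single inequality proved in Lemma~\ref{lem:transformer-ineq-basic}.
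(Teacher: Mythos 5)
Your proof is correct, and it takes a genuinely different route from the paper's. The paper disposes of both inequalities by citation plus a symmetry trick: for $\alpha\in(0,1)$ it invokes the Kubo--Ando transformer inequality \cite{KA80} directly (which is also your stated fallback for that half), and for $\alpha\in(1,2]$ it first rewrites $G_{\alpha}(X,Y)=G_{1-\alpha}(Y,X)$ via Lemma~\ref{lem:geo-mean-symmetry} and then appeals to \cite[Lemma~47]{Fang2019a} for the weight $1-\alpha\in[-1,0)$, handling the support condition by perturbing $Y\to Y+\varepsilon I$ and taking a limit; the invertible case is obtained by applying the inequality a second time with $L^{-1}$. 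You instead make the whole lemma self-contained: the integral representations of $t^{\alpha}$ for $\alpha\in(0,1)$ and $t^{\alpha}=t\cdot t^{\alpha-1}$ for $\alpha\in(1,2)$ reduce every case to the single Schur-complement inequality of Lemma~\ref{lem:transformer-ineq-basic}, with the sign of \eqref{eq:transform-ineq-alpha<1} versus \eqref{eq:transform-ineq-alpha>1} emerging transparently from whether the controlled term $A(B+\lambda A)^{-1}A$ enters with a minus or a plus, and with $\alpha=2$ being literally Lemma~\ref{lem:transformer-ineq-basic}. Your elementary-perspective computations check out, the difference of the two sides at each $\lambda$ is a positive operator whose integral converges absolutely, and your unitarity argument for $K=(LXL^{\dag})^{-1/2}LX^{1/2}$ correctly gives the equality case. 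What the paper's route buys is brevity; what yours buys is a proof that exposes why operator monotonicity and operator convexity of $t^{\alpha}$ are the relevant properties, at the cost of the support/limit bookkeeping you flag. That bookkeeping is real but no worse than in the paper's own argument (which perturbs only $Y$, leaving $X$ possibly singular); if you write this up, the one step deserving explicit care is that $\lim_{\varepsilon\to0^{+}}G_{\alpha}(LX_{\varepsilon}L^{\dag},LY_{\varepsilon}L^{\dag})=G_{\alpha}(LXL^{\dag},LYL^{\dag})$, since $LX_{\varepsilon}L^{\dag}=LXL^{\dag}+\varepsilon LL^{\dag}$ is not a perturbation by a multiple of the identity and its support can strictly contain that of $LXL^{\dag}$.
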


\begin{proof}
For positive definite $X$ and $Y$ and $\alpha\in(1,2]$, we have that
\begin{align}
G_{\alpha}(X,Y)  &  =G_{1-\alpha}(Y,X),\\
G_{1-\alpha}(LYL^{\dag},LXL^{\dag})  &  \leq LG_{1-\alpha}(Y,X)L^{\dag},
\end{align}
where the equality follows from Lemma~\ref{lem:geo-mean-symmetry}\ and the
inequality from \cite[Lemma~47]{Fang2019a} (the special case of $\alpha=2$ was
established in \cite[Proposition~4.1]{Choi80}). Then by defining
$Y_{\varepsilon}=Y+\varepsilon I$ for $\varepsilon>0$, we conclude that
\begin{equation}
G_{\alpha}(LXL^{\dag},LY_{\varepsilon}L^{\dag})\leq LG_{\alpha}
(X,Y_{\varepsilon})L^{\dag}.
\end{equation}
By taking the limit $\varepsilon\rightarrow0^{+}$, we conclude
\eqref{eq:transform-ineq-alpha>1}, holding for $X$ and $Y$ positive
semi-definite such that $\operatorname{supp}(Y)\subseteq\operatorname{supp}
(X)$.

The inequality in \eqref{eq:transform-ineq-alpha<1} is known from
\cite{KA80}\ for positive definite $X$ and $Y$. Then we get
\eqref{eq:transform-ineq-alpha<1} by employing $Y_{\varepsilon}$ again and
taking the limit $\varepsilon\rightarrow0^{+}$.

For invertible $L$, the equalities follow by applying the inequality again, as
shown in \cite{KA80} and \cite[Lemma~47]{Fang2019a}. For $\alpha\in(1,2]$, we
have the following for invertible $L$:
\begin{align}
G_{\alpha}(LXL^{\dag},LYL^{\dag})  &  \leq LG_{\alpha}(X,Y)L^{\dag}\\
&  =LG_{\alpha}(L^{-1}LXL^{\dag}L^{-\dag},L^{-1}LYL^{\dag}L^{-\dag})L^{\dag}\\
&  \leq LL^{-1}G_{\alpha}(LXL^{\dag},LYL^{\dag})L^{-\dag}L^{\dag}\\
&  =G_{\alpha}(LXL^{\dag},LYL^{\dag}).
\end{align}
The same argument applies for $\alpha\in(0,1)$, but the inequalities flip.
\end{proof}

\bigskip

\begin{proof}
[Proof of Proposition~\ref{prop:explicit-formula-geo-renyi-ch}]First, suppose
that $\alpha\in(1,2]$ and $\operatorname{supp}(\Gamma_{RB}^{\mathcal{N}
})\not \subseteq \operatorname{supp}(\Gamma_{RB}^{\mathcal{M}})$. Then we can
take the maximally entangled state $\Phi_{RA}$ (normalized version of
$\Gamma_{RA}$) as input, and it follows that $\widehat{D}_{\alpha}
(\mathcal{N}\Vert\mathcal{M})=+\infty$.

So let us suppose that $\alpha\in(1,2]$ and $\operatorname{supp}(\Gamma
_{RB}^{\mathcal{N}})\subseteq\operatorname{supp}(\Gamma_{RB}^{\mathcal{M}})$.
Let $\psi_{RA}$ be an arbitrary pure bipartite input state. We can write such
a state as follows:
\begin{equation}
\psi_{RA}=Z_{R}\Gamma_{RA}Z_{R}^{\dag},
\end{equation}
where $Z_{R}$ is an operator satisfying $\operatorname{Tr}[Z_{R}^{\dag}
Z_{R}]=1$. Then it follows that
\begin{equation}
\mathcal{N}_{A\rightarrow B}(\psi_{RA})=Z_{R}\Gamma_{RB}^{\mathcal{N}}
Z_{R}^{\dag}.
\end{equation}
Due to the fact that the set of states with $Z_{R}$ invertible is dense in the
set of all pure bipartite states, it suffices to optimize with respect to this
set:
\begin{align}
&  \widehat{D}_{\alpha}(\mathcal{N}\Vert\mathcal{M})\\
&  =\sup_{\psi_{RA}}\widehat{D}_{\alpha}(\mathcal{N}_{A\rightarrow B}
(\psi_{RA})\Vert\mathcal{M}_{A\rightarrow B}(\psi_{RA}))\\
&  =\sup_{\substack{Z_{R}:\left\vert Z_{R}\right\vert >0,\\\operatorname{Tr}
[Z_{R}^{\dag}Z_{R}]=1}}\widehat{D}_{\alpha}(Z_{R}\Gamma_{RB}^{\mathcal{N}
}Z_{R}^{\dag}\Vert Z_{R}\Gamma_{RB}^{\mathcal{M}}Z_{R}^{\dag})\\
&  =\sup_{\substack{Z_{R}:\left\vert Z_{R}\right\vert >0,\\\operatorname{Tr}
[Z_{R}^{\dag}Z_{R}]=1}}\frac{1}{\alpha-1}\ln\operatorname{Tr}[G_{\alpha}
(Z_{R}\Gamma_{RB}^{\mathcal{M}}Z_{R}^{\dag},Z_{R}\Gamma_{RB}^{\mathcal{N}
}Z_{R}^{\dag})]\\
&  =\sup_{\substack{Z_{R}:\left\vert Z_{R}\right\vert >0,\\\operatorname{Tr}
[Z_{R}^{\dag}Z_{R}]=1}}\frac{1}{\alpha-1}\ln\operatorname{Tr}[Z_{R}G_{\alpha
}(\Gamma_{RB}^{\mathcal{M}},\Gamma_{RB}^{\mathcal{N}})Z_{R}^{\dag}]\\
&  =\sup_{\substack{Z_{R}:\left\vert Z_{R}\right\vert >0,\\\operatorname{Tr}
[Z_{R}^{\dag}Z_{R}]=1}}\frac{1}{\alpha-1}\ln\operatorname{Tr}[Z_{R}^{\dag
}Z_{R}G_{\alpha}(\Gamma_{RB}^{\mathcal{M}},\Gamma_{RB}^{\mathcal{N}})]\\
&  =\frac{1}{\alpha-1}\ln\sup_{\substack{Z_{R}:\left\vert Z_{R}\right\vert
>0,\\\operatorname{Tr}[Z_{R}^{\dag}Z_{R}]=1}}\operatorname{Tr}[Z_{R}^{\dag
}Z_{R}G_{\alpha}(\Gamma_{RB}^{\mathcal{M}},\Gamma_{RB}^{\mathcal{N}})]\\
&  =\frac{1}{\alpha-1}\ln\left\Vert \operatorname{Tr}_{B}\left[  [\Gamma
_{RB}^{\mathcal{M}}]^{1/2}(\left[  \Gamma_{RB}^{\mathcal{M}}\right]
^{-1/2}\Gamma_{RB}^{\mathcal{N}}\left[  \Gamma_{RB}^{\mathcal{M}}\right]
^{-1/2})^{\alpha}[\Gamma_{RB}^{\mathcal{M}}]^{1/2}\right]  \right\Vert
_{\infty}.
\end{align}
The critical equality is the fourth one, which follows from the transformer
equality of \cite[Lemma~47]{Fang2019a}.

Now suppose that $\alpha\in(0,1)$. Then proceeding by similar reasoning, but
taking care with various limits and the sign flip due to the prefactor
$\frac{1}{\alpha-1}$, we find the following:
\begin{align}
&  \widehat{D}_{\alpha}(\mathcal{N}\Vert\mathcal{M})\nonumber\\
&  =\sup_{\psi_{RA}}\widehat{D}_{\alpha}(\mathcal{N}_{A\rightarrow B}
(\psi_{RA})\Vert\mathcal{M}_{A\rightarrow B}(\psi_{RA}))\\
&  =\sup_{Z_{R}:\left\vert Z_{R}\right\vert >0}\widehat{D}_{\alpha}
(Z_{R}\Gamma_{RB}^{\mathcal{N}}Z_{R}^{\dag}\Vert Z_{R}\Gamma_{RB}
^{\mathcal{M}}Z_{R}^{\dag})\\
&  =\sup_{Z_{R}:\left\vert Z_{R}\right\vert >0}\lim_{\varepsilon
\rightarrow0^{+}}\widehat{D}_{\alpha}(Z_{R}\Gamma_{RB}^{\mathcal{N}}
Z_{R}^{\dag}\Vert Z_{R}\Gamma_{RB}^{\mathcal{M}_{\varepsilon}}Z_{R}^{\dag})\\
&  =\sup_{Z_{R}:\left\vert Z_{R}\right\vert >0}\lim_{\varepsilon
\rightarrow0^{+}}\frac{1}{\alpha-1}\ln\operatorname{Tr}[G_{\alpha}(Z_{R}
\Gamma_{RB}^{\mathcal{M}_{\varepsilon}}Z_{R}^{\dag},Z_{R}\Gamma_{RB}
^{\mathcal{N}}Z_{R}^{\dag})]\\
&  =\frac{1}{\alpha-1}\ln\inf_{Z_{R}:\left\vert Z_{R}\right\vert >0}
\lim_{\varepsilon\rightarrow0^{+}}\operatorname{Tr}[G_{\alpha}(Z_{R}
\Gamma_{RB}^{\mathcal{M}_{\varepsilon}}Z_{R}^{\dag},Z_{R}\Gamma_{RB}
^{\mathcal{N}}Z_{R}^{\dag})]\\
&  =\frac{1}{\alpha-1}\ln\inf_{Z_{R}:\left\vert Z_{R}\right\vert >0}
\inf_{\varepsilon>0}\operatorname{Tr}[G_{\alpha}(Z_{R}\Gamma_{RB}
^{\mathcal{M}_{\varepsilon}}Z_{R}^{\dag},Z_{R}\Gamma_{RB}^{\mathcal{N}}
Z_{R}^{\dag})].
\end{align}
The last equality follows from reasoning similar to that in
Lemma~\ref{lem:limit-exchange-geom-renyi-a-0-to1}, that the limit
$\varepsilon\rightarrow0^{+}$ is the same as the infimum over $\varepsilon>0$.
Continuing, we find that
\begin{align}
&  \widehat{D}_{\alpha}(\mathcal{N}\Vert\mathcal{M})\nonumber\\
&  =\frac{1}{\alpha-1}\ln\inf_{\varepsilon>0}\inf_{Z_{R}:\left\vert
Z_{R}\right\vert >0}\operatorname{Tr}[G_{\alpha}(Z_{R}\Gamma_{RB}
^{\mathcal{M}_{\varepsilon}}Z_{R}^{\dag},Z_{R}\Gamma_{RB}^{\mathcal{N}}
Z_{R}^{\dag})]\\
&  =\frac{1}{\alpha-1}\ln\inf_{\varepsilon>0}\inf_{Z_{R}:\left\vert
Z_{R}\right\vert >0}\operatorname{Tr}[Z_{R}G_{\alpha}(\Gamma_{RB}
^{\mathcal{M}},\Gamma_{RB}^{\mathcal{N}})Z_{R}^{\dag}]\\
&  =\frac{1}{\alpha-1}\ln\inf_{\varepsilon>0}\lambda_{\min}\left(
\operatorname{Tr}_{B}\left(  G_{\alpha}(\Gamma_{RB}^{\mathcal{M}},\Gamma
_{RB}^{\mathcal{N}})\right)  \right) \\
&  =\frac{1}{\alpha-1}\ln\lim_{\varepsilon\rightarrow0^{+}}\lambda_{\min
}\left(  \operatorname{Tr}_{B}\left(  G_{\alpha}(\Gamma_{RB}^{\mathcal{M}
},\Gamma_{RB}^{\mathcal{N}})\right)  \right) \\
&  =\lim_{\varepsilon\rightarrow0^{+}}\frac{1}{\alpha-1}\ln\lambda_{\min
}\left(  \operatorname{Tr}_{B}\left(  [\Gamma_{RB}^{\mathcal{M}_{\varepsilon}
}]^{1/2}(\left[  \Gamma_{RB}^{\mathcal{M}_{\varepsilon}}\right]  ^{-1/2}
\Gamma_{RB}^{\mathcal{N}}\left[  \Gamma_{RB}^{\mathcal{M}_{\varepsilon}
}\right]  ^{-1/2})^{\alpha}[\Gamma_{RB}^{\mathcal{M}_{\varepsilon}}
]^{1/2}\right)  \right)  .
\end{align}

Now we establish the formula in \eqref{eq:geo-ren-ch-exp-form-all-cases} for
$\alpha\in(0,1)$. If $\operatorname{supp}(\Gamma_{RB}^{\mathcal{N}}
)\subseteq\operatorname{supp}(\Gamma_{RB}^{\mathcal{M}})$, then taking the
limit $\varepsilon\rightarrow0^{+}$ leads to the formula
\begin{equation}
\widehat{D}_{\alpha}(\mathcal{N}\Vert\mathcal{M})=\frac{1}{\alpha-1}\ln
\lambda_{\min}\left(  \operatorname{Tr}_{B}\left(  [\Gamma_{RB}^{\mathcal{M}
}]^{1/2}(\left[  \Gamma_{RB}^{\mathcal{M}}\right]  ^{-1/2}\Gamma
_{RB}^{\mathcal{N}}\left[  \Gamma_{RB}^{\mathcal{M}}\right]  ^{-1/2})^{\alpha
}[\Gamma_{RB}^{\mathcal{M}}]^{1/2}\right)  \right)  .
\end{equation}

If $\operatorname{supp}(\Gamma_{RB}^{\mathcal{N}})\not \subseteq
\operatorname{supp}(\Gamma_{RB}^{\mathcal{M}})$, then the proof is similar to
the proof of \eqref{eq:geo-ren-exp-form-weird-case}, but more involved. We
need to evaluate the following limit for $\alpha\in(0,1)$:
\begin{equation}
\lim_{\varepsilon\rightarrow0^{+}}\lambda_{\min}\left(  \operatorname{Tr}
_{B}\left[  [\Gamma_{RB}^{\mathcal{M}_{\varepsilon}}]^{1/2}(\left[
\Gamma_{RB}^{\mathcal{M}_{\varepsilon}}\right]  ^{-1/2}\Gamma_{RB}
^{\mathcal{N}}\left[  \Gamma_{RB}^{\mathcal{M}_{\varepsilon}}\right]
^{-1/2})^{\alpha}[\Gamma_{RB}^{\mathcal{M}_{\varepsilon}}]^{1/2}\right]
\right)  .
\end{equation}
For $\varepsilon>0$ and $\delta\in(0,1)$, let us write
\begin{align}
\Gamma_{RB}^{\mathcal{M}_{\varepsilon}}  &  =
\begin{bmatrix}
\hat{\Gamma}_{RB}^{\mathcal{M}_{\varepsilon}} & 0\\
0 & \varepsilon\Pi_{\Gamma^{\mathcal{M}}}^{\perp}
\end{bmatrix}
,\\
\hat{\Gamma}_{RB}^{\mathcal{M}_{\varepsilon}}  &  :=\Gamma_{RB}^{\mathcal{M}
}+\varepsilon\Pi_{\Gamma^{\mathcal{M}}},\\
\Gamma_{RB}^{\mathcal{N}_{\delta}}  &  :=\left(  1-\delta\right)  \Gamma
_{RB}^{\mathcal{N}_{\delta}}+\delta I_{R}\otimes\pi_{B},\\
\Gamma_{RB}^{\mathcal{N}_{\delta}}  &  =
\begin{bmatrix}
(\Gamma_{RB}^{\mathcal{N}_{\delta}})_{0,0} & (\Gamma_{RB}^{\mathcal{N}
_{\delta}})_{0,1}\\
(\Gamma_{RB}^{\mathcal{N}_{\delta}})_{0,1}^{\dag} & (\Gamma_{RB}
^{\mathcal{N}_{\delta}})_{1,1}
\end{bmatrix}
,\\
(\Gamma_{RB}^{\mathcal{N}_{\delta}})_{0,0}  &  :=\Pi_{\Gamma^{\mathcal{M}}
}\Gamma_{RB}^{\mathcal{N}_{\delta}}\Pi_{\Gamma^{\mathcal{M}}},\\
(\Gamma_{RB}^{\mathcal{N}_{\delta}})_{0,1}  &  :=\Pi_{\Gamma^{\mathcal{M}}
}\Gamma_{RB}^{\mathcal{N}_{\delta}}\Pi_{\Gamma^{\mathcal{M}}}^{\perp},\\
(\Gamma_{RB}^{\mathcal{N}_{\delta}})_{1,1}  &  :=\Pi_{\Gamma^{\mathcal{M}}
}^{\perp}\Gamma_{RB}^{\mathcal{N}_{\delta}}\Pi_{\Gamma^{\mathcal{M}}}^{\perp}.
\end{align}
Consider that
\begin{multline}
\lim_{\varepsilon\rightarrow0^{+}}\lambda_{\min}\left(  \operatorname{Tr}
_{B}\left[  [\Gamma_{RB}^{\mathcal{M}_{\varepsilon}}]^{1/2}(\left[
\Gamma_{RB}^{\mathcal{M}_{\varepsilon}}\right]  ^{-1/2}\Gamma_{RB}
^{\mathcal{N}}\left[  \Gamma_{RB}^{\mathcal{M}_{\varepsilon}}\right]
^{-1/2})^{\alpha}[\Gamma_{RB}^{\mathcal{M}_{\varepsilon}}]^{1/2}\right]
\right) \label{eq:lim-exch-geo-ren-ch-exp-form-1}\\
=\lim_{\varepsilon\rightarrow0^{+}}\lim_{\delta\rightarrow0^{+}}\lambda_{\min
}\left(  \operatorname{Tr}_{B}\left[  [\Gamma_{RB}^{\mathcal{M}_{\varepsilon}
}]^{1/2}(\left[  \Gamma_{RB}^{\mathcal{M}_{\varepsilon}}\right]  ^{-1/2}
\Gamma_{RB}^{\mathcal{N}_{\delta}}\left[  \Gamma_{RB}^{\mathcal{M}
_{\varepsilon}}\right]  ^{-1/2})^{\alpha}[\Gamma_{RB}^{\mathcal{M}
_{\varepsilon}}]^{1/2}\right]  \right)  .
\end{multline}
We note by similar reasoning given to establish
Lemma~\ref{lem:limit-exchange-geom-renyi-a-0-to1}, it follows for $\alpha
\in(0,1)$ that
\begin{multline}
\lim_{\varepsilon\rightarrow0^{+}}\lim_{\delta\rightarrow0^{+}}\lambda_{\min
}\left(  \operatorname{Tr}_{B}\left[  [\Gamma_{RB}^{\mathcal{M}_{\varepsilon}
}]^{1/2}(\left[  \Gamma_{RB}^{\mathcal{M}_{\varepsilon}}\right]  ^{-1/2}
\Gamma_{RB}^{\mathcal{N}_{\delta}}\left[  \Gamma_{RB}^{\mathcal{M}
_{\varepsilon}}\right]  ^{-1/2})^{\alpha}[\Gamma_{RB}^{\mathcal{M}
_{\varepsilon}}]^{1/2}\right]  \right)
\label{eq:lim-exch-geo-ren-ch-exp-form-2}\\
=\lim_{\delta\rightarrow0^{+}}\lim_{\varepsilon\rightarrow0^{+}}\lambda_{\min
}\left(  \operatorname{Tr}_{B}\left[  [\Gamma_{RB}^{\mathcal{M}_{\varepsilon}
}]^{1/2}(\left[  \Gamma_{RB}^{\mathcal{M}_{\varepsilon}}\right]  ^{-1/2}
\Gamma_{RB}^{\mathcal{N}_{\delta}}\left[  \Gamma_{RB}^{\mathcal{M}
_{\varepsilon}}\right]  ^{-1/2})^{\alpha}[\Gamma_{RB}^{\mathcal{M}
_{\varepsilon}}]^{1/2}\right]  \right)  .
\end{multline}
Then
\begin{align}
&  \left[  \Gamma_{RB}^{\mathcal{M}_{\varepsilon}}\right]  ^{-1/2}\Gamma
_{RB}^{\mathcal{N}_{\delta}}\left[  \Gamma_{RB}^{\mathcal{M}_{\varepsilon}
}\right]  ^{-1/2}\nonumber\\
&  =
\begin{bmatrix}
\hat{\Gamma}_{RB}^{\mathcal{M}_{\varepsilon}} & 0\\
0 & \varepsilon\Pi_{\Gamma^{\mathcal{M}}}^{\perp}
\end{bmatrix}
^{-\frac{1}{2}}
\begin{bmatrix}
(\Gamma_{RB}^{\mathcal{N}_{\delta}})_{0,0} & (\Gamma_{RB}^{\mathcal{N}
_{\delta}})_{0,1}\\
(\Gamma_{RB}^{\mathcal{N}_{\delta}})_{1,0} & (\Gamma_{RB}^{\mathcal{N}
_{\delta}})_{1,1}
\end{bmatrix}
\begin{bmatrix}
\hat{\Gamma}_{RB}^{\mathcal{M}_{\varepsilon}} & 0\\
0 & \varepsilon\Pi_{\Gamma^{\mathcal{M}}}^{\perp}
\end{bmatrix}
^{-\frac{1}{2}}\\
&  =
\begin{bmatrix}
(\hat{\Gamma}_{RB}^{\mathcal{M}_{\varepsilon}})^{-\frac{1}{2}} & 0\\
0 & \varepsilon^{-\frac{1}{2}}\Pi_{\Gamma^{\mathcal{M}}}^{\perp}
\end{bmatrix}
\begin{bmatrix}
(\Gamma_{RB}^{\mathcal{N}_{\delta}})_{0,0} & (\Gamma_{RB}^{\mathcal{N}
_{\delta}})_{0,1}\\
(\Gamma_{RB}^{\mathcal{N}_{\delta}})_{1,0} & (\Gamma_{RB}^{\mathcal{N}
_{\delta}})_{1,1}
\end{bmatrix}
\begin{bmatrix}
(\hat{\Gamma}_{RB}^{\mathcal{M}_{\varepsilon}})^{-\frac{1}{2}} & 0\\
0 & \varepsilon^{-\frac{1}{2}}\Pi_{\Gamma^{\mathcal{M}}}^{\perp}
\end{bmatrix}
\\
&  =
\begin{bmatrix}
(\hat{\Gamma}_{RB}^{\mathcal{M}_{\varepsilon}})^{-\frac{1}{2}}(\Gamma
_{RB}^{\mathcal{N}_{\delta}})_{0,0}(\hat{\Gamma}_{RB}^{\mathcal{M}
_{\varepsilon}})^{-\frac{1}{2}} & \varepsilon^{-\frac{1}{2}}(\hat{\Gamma}
_{RB}^{\mathcal{M}_{\varepsilon}})^{-\frac{1}{2}}(\Gamma_{RB}^{\mathcal{N}
_{\delta}})_{0,1}\Pi_{\Gamma^{\mathcal{M}}}^{\perp}\\
\varepsilon^{-\frac{1}{2}}\Pi_{\Gamma^{\mathcal{M}}}^{\perp}(\Gamma
_{RB}^{\mathcal{N}_{\delta}})_{1,0}(\hat{\Gamma}_{RB}^{\mathcal{M}
_{\varepsilon}})^{-\frac{1}{2}} & \varepsilon^{-1}\Pi_{\Gamma^{\mathcal{M}}
}^{\perp}(\Gamma_{RB}^{\mathcal{N}_{\delta}})_{1,1}\Pi_{\Gamma^{\mathcal{M}}
}^{\perp}
\end{bmatrix}
\\
&  =
\begin{bmatrix}
(\hat{\Gamma}_{RB}^{\mathcal{M}_{\varepsilon}})^{-\frac{1}{2}}(\Gamma
_{RB}^{\mathcal{N}_{\delta}})_{0,0}(\hat{\Gamma}_{RB}^{\mathcal{M}
_{\varepsilon}})^{-\frac{1}{2}} & \varepsilon^{-\frac{1}{2}}(\hat{\Gamma}
_{RB}^{\mathcal{M}_{\varepsilon}})^{-\frac{1}{2}}(\Gamma_{RB}^{\mathcal{N}
_{\delta}})_{0,1}\\
\varepsilon^{-\frac{1}{2}}(\Gamma_{RB}^{\mathcal{N}_{\delta}})_{1,0}
(\hat{\Gamma}_{RB}^{\mathcal{M}_{\varepsilon}})^{-\frac{1}{2}} &
\varepsilon^{-1}(\Gamma_{RB}^{\mathcal{N}_{\delta}})_{1,1}
\end{bmatrix}
,
\end{align}
so that
\begin{align}
&  \lbrack\Gamma_{RB}^{\mathcal{M}_{\varepsilon}}]^{1/2}(\left[  \Gamma
_{RB}^{\mathcal{M}_{\varepsilon}}\right]  ^{-1/2}\Gamma_{RB}^{\mathcal{N}
_{\delta}}\left[  \Gamma_{RB}^{\mathcal{M}_{\varepsilon}}\right]
^{-1/2})^{\alpha}[\Gamma_{RB}^{\mathcal{M}_{\varepsilon}}]^{1/2}\nonumber\\
&  =
\begin{bmatrix}
\hat{\Gamma}_{RB}^{\mathcal{M}_{\varepsilon}} & 0\\
0 & \varepsilon\Pi_{\Gamma^{\mathcal{M}}}^{\perp}
\end{bmatrix}
^{\frac{1}{2}}
\begin{bmatrix}
(\hat{\Gamma}_{RB}^{\mathcal{M}_{\varepsilon}})^{-\frac{1}{2}}(\Gamma
_{RB}^{\mathcal{N}_{\delta}})_{0,0}(\hat{\Gamma}_{RB}^{\mathcal{M}
_{\varepsilon}})^{-\frac{1}{2}} & \varepsilon^{-\frac{1}{2}}(\hat{\Gamma}
_{RB}^{\mathcal{M}_{\varepsilon}})^{-\frac{1}{2}}(\Gamma_{RB}^{\mathcal{N}
_{\delta}})_{0,1}\\
\varepsilon^{-\frac{1}{2}}(\Gamma_{RB}^{\mathcal{N}_{\delta}})_{0,1}^{\dag
}(\hat{\Gamma}_{RB}^{\mathcal{M}_{\varepsilon}})^{-\frac{1}{2}} &
\varepsilon^{-1}(\Gamma_{RB}^{\mathcal{N}_{\delta}})_{1,1}
\end{bmatrix}
^{\alpha}\nonumber\\
&  \qquad\times
\begin{bmatrix}
\hat{\Gamma}_{RB}^{\mathcal{M}_{\varepsilon}} & 0\\
0 & \varepsilon\Pi_{\Gamma^{\mathcal{M}}}^{\perp}
\end{bmatrix}
^{\frac{1}{2}}\\
&  =
\begin{bmatrix}
(\hat{\Gamma}_{RB}^{\mathcal{M}_{\varepsilon}})^{\frac{1}{2}} & 0\\
0 & \varepsilon^{\frac{1}{2}}\Pi_{\Gamma^{\mathcal{M}}}^{\perp}
\end{bmatrix}
\left(  \varepsilon^{-1}
\begin{bmatrix}
\varepsilon(\hat{\Gamma}_{RB}^{\mathcal{M}_{\varepsilon}})^{-\frac{1}{2}
}(\Gamma_{RB}^{\mathcal{N}_{\delta}})_{0,0}(\hat{\Gamma}_{RB}^{\mathcal{M}
_{\varepsilon}})^{-\frac{1}{2}} & \varepsilon^{\frac{1}{2}}(\hat{\Gamma}
_{RB}^{\mathcal{M}_{\varepsilon}})^{-\frac{1}{2}}(\Gamma_{RB}^{\mathcal{N}
_{\delta}})_{0,1}\\
\varepsilon^{\frac{1}{2}}(\Gamma_{RB}^{\mathcal{N}_{\delta}})_{0,1}^{\dag
}(\hat{\Gamma}_{RB}^{\mathcal{M}_{\varepsilon}})^{-\frac{1}{2}} & (\Gamma
_{RB}^{\mathcal{N}_{\delta}})_{1,1}
\end{bmatrix}
\right)  ^{\alpha}\nonumber\\
&  \qquad\times
\begin{bmatrix}
(\hat{\Gamma}_{RB}^{\mathcal{M}_{\varepsilon}})^{\frac{1}{2}} & 0\\
0 & \varepsilon^{\frac{1}{2}}\Pi_{\Gamma^{\mathcal{M}}}^{\perp}
\end{bmatrix}
\\
&  =\varepsilon^{-\frac{\alpha}{2}}
\begin{bmatrix}
(\hat{\Gamma}_{RB}^{\mathcal{M}_{\varepsilon}})^{\frac{1}{2}} & 0\\
0 & \varepsilon^{\frac{1}{2}}\Pi_{\Gamma^{\mathcal{M}}}^{\perp}
\end{bmatrix}
\begin{bmatrix}
\varepsilon(\hat{\Gamma}_{RB}^{\mathcal{M}_{\varepsilon}})^{-\frac{1}{2}
}(\Gamma_{RB}^{\mathcal{N}_{\delta}})_{0,0}(\hat{\Gamma}_{RB}^{\mathcal{M}
_{\varepsilon}})^{-\frac{1}{2}} & \varepsilon^{\frac{1}{2}}(\hat{\Gamma}
_{RB}^{\mathcal{M}_{\varepsilon}})^{-\frac{1}{2}}(\Gamma_{RB}^{\mathcal{N}
_{\delta}})_{0,1}\\
\varepsilon^{\frac{1}{2}}(\Gamma_{RB}^{\mathcal{N}_{\delta}})_{0,1}^{\dag
}(\hat{\Gamma}_{RB}^{\mathcal{M}_{\varepsilon}})^{-\frac{1}{2}} & (\Gamma
_{RB}^{\mathcal{N}_{\delta}})_{1,1}
\end{bmatrix}
^{\alpha}\nonumber\\
&  \qquad\times\varepsilon^{-\frac{\alpha}{2}}
\begin{bmatrix}
(\hat{\Gamma}_{RB}^{\mathcal{M}_{\varepsilon}})^{\frac{1}{2}} & 0\\
0 & \varepsilon^{\frac{1}{2}}\Pi_{\Gamma^{\mathcal{M}}}^{\perp}
\end{bmatrix}
\\
&  =
\begin{bmatrix}
\varepsilon^{-\frac{\alpha}{2}}(\hat{\Gamma}_{RB}^{\mathcal{M}_{\varepsilon}
})^{\frac{1}{2}} & 0\\
0 & \varepsilon^{\frac{1-\alpha}{2}}\Pi_{\Gamma^{\mathcal{M}}}^{\perp}
\end{bmatrix}
\begin{bmatrix}
\varepsilon(\hat{\Gamma}_{RB}^{\mathcal{M}_{\varepsilon}})^{-\frac{1}{2}
}(\Gamma_{RB}^{\mathcal{N}_{\delta}})_{0,0}(\hat{\Gamma}_{RB}^{\mathcal{M}
_{\varepsilon}})^{-\frac{1}{2}} & \varepsilon^{\frac{1}{2}}(\hat{\Gamma}
_{RB}^{\mathcal{M}_{\varepsilon}})^{-\frac{1}{2}}(\Gamma_{RB}^{\mathcal{N}
_{\delta}})_{0,1}\\
\varepsilon^{\frac{1}{2}}(\Gamma_{RB}^{\mathcal{N}_{\delta}})_{0,1}^{\dag
}(\hat{\Gamma}_{RB}^{\mathcal{M}_{\varepsilon}})^{-\frac{1}{2}} & (\Gamma
_{RB}^{\mathcal{N}_{\delta}})_{1,1}
\end{bmatrix}
^{\alpha}\nonumber\\
&  \qquad\times
\begin{bmatrix}
\varepsilon^{-\frac{\alpha}{2}}(\hat{\Gamma}_{RB}^{\mathcal{M}_{\varepsilon}
})^{\frac{1}{2}} & 0\\
0 & \varepsilon^{\frac{1-\alpha}{2}}\Pi_{\Gamma^{\mathcal{M}}}^{\perp}
\end{bmatrix}
.
\end{align}
Let us define
\begin{equation}
K(\varepsilon):=
\begin{bmatrix}
\varepsilon(\hat{\Gamma}_{RB}^{\mathcal{M}_{\varepsilon}})^{-\frac{1}{2}
}(\Gamma_{RB}^{\mathcal{N}_{\delta}})_{0,0}(\hat{\Gamma}_{RB}^{\mathcal{M}
_{\varepsilon}})^{-\frac{1}{2}} & \varepsilon^{\frac{1}{2}}(\hat{\Gamma}
_{RB}^{\mathcal{M}_{\varepsilon}})^{-\frac{1}{2}}(\Gamma_{RB}^{\mathcal{N}
_{\delta}})_{0,1}\\
\varepsilon^{\frac{1}{2}}(\Gamma_{RB}^{\mathcal{N}_{\delta}})_{0,1}^{\dag
}(\hat{\Gamma}_{RB}^{\mathcal{M}_{\varepsilon}})^{-\frac{1}{2}} & (\Gamma
_{RB}^{\mathcal{N}_{\delta}})_{1,1}
\end{bmatrix}
,
\end{equation}
so that we can write
\begin{multline}
\lbrack\Gamma_{RB}^{\mathcal{M}_{\varepsilon}}]^{1/2}(\left[  \Gamma
_{RB}^{\mathcal{M}_{\varepsilon}}\right]  ^{-1/2}\Gamma_{RB}^{\mathcal{N}
}\left[  \Gamma_{RB}^{\mathcal{M}_{\varepsilon}}\right]  ^{-1/2})^{\alpha
}[\Gamma_{RB}^{\mathcal{M}_{\varepsilon}}]^{1/2}\\
=
\begin{bmatrix}
\varepsilon^{-\frac{\alpha}{2}}(\hat{\Gamma}_{RB}^{\mathcal{M}_{\varepsilon}
})^{\frac{1}{2}} & 0\\
0 & \varepsilon^{\frac{1-\alpha}{2}}\Pi_{\Gamma^{\mathcal{M}}}^{\perp}
\end{bmatrix}
\left[  K(\varepsilon)\right]  ^{\alpha}
\begin{bmatrix}
\varepsilon^{-\frac{\alpha}{2}}(\hat{\Gamma}_{RB}^{\mathcal{M}_{\varepsilon}
})^{\frac{1}{2}} & 0\\
0 & \varepsilon^{\frac{1-\alpha}{2}}\Pi_{\Gamma^{\mathcal{M}}}^{\perp}
\end{bmatrix}
.
\end{multline}
Now let us invoke Lemma~\ref{lem:sisi-zhou-lem}\ with the substitutions
\begin{align}
A  &  \leftrightarrow(\Gamma_{RB}^{\mathcal{N}_{\delta}})_{1,1},\\
B  &  \leftrightarrow(\Gamma_{RB}^{\mathcal{N}_{\delta}})_{0,1}^{\dag}
(\hat{\Gamma}_{RB}^{\mathcal{M}_{\varepsilon}})^{-\frac{1}{2}},\\
C  &  \leftrightarrow\varepsilon(\hat{\Gamma}_{RB}^{\mathcal{M}_{\varepsilon}
})^{-\frac{1}{2}}(\Gamma_{RB}^{\mathcal{N}_{\delta}})_{0,0}(\hat{\Gamma}
_{RB}^{\mathcal{M}_{\varepsilon}})^{-\frac{1}{2}},\\
\varepsilon &  \leftrightarrow\varepsilon^{\frac{1}{2}}.
\end{align}
Defining
\begin{align}
L(\varepsilon)  &  :=
\begin{bmatrix}
\varepsilon S_{\delta} & 0\\
0 & (\Gamma_{RB}^{\mathcal{N}_{\delta}})_{1,1}+\varepsilon R
\end{bmatrix}
,\\
S_{\delta}  &  :=(\hat{\Gamma}_{RB}^{\mathcal{M}_{\varepsilon}})^{-\frac{1}
{2}}\left(  (\Gamma_{RB}^{\mathcal{N}_{\delta}})_{0,0}-(\Gamma_{RB}
^{\mathcal{N}_{\delta}})_{0,1}(\Gamma_{RB}^{\mathcal{N}_{\delta}})_{1,1}
^{-1}(\Gamma_{RB}^{\mathcal{N}_{\delta}})_{0,1}^{\dag}\right)  (\hat{\Gamma
}_{RB}^{\mathcal{M}_{\varepsilon}})^{-\frac{1}{2}},\\
R  &  :=\operatorname{Re}[(\Gamma_{RB}^{\mathcal{N}_{\delta}})_{1,1}
^{-1}(\Gamma_{RB}^{\mathcal{N}_{\delta}})_{0,1}^{\dag}(\hat{\Gamma}
_{RB}^{\mathcal{M}_{\varepsilon}})^{-1}(\Gamma_{RB}^{\mathcal{N}_{\delta}
})_{0,1}],
\end{align}
we conclude from Lemma~\ref{lem:sisi-zhou-lem}\ that
\begin{equation}
\left\Vert K(\varepsilon)-e^{-i\sqrt{\varepsilon}G}L(\varepsilon
)e^{i\sqrt{\varepsilon}G}\right\Vert _{\infty}\leq o(\varepsilon),
\label{eq:approx-diag-geo-ch-form-proof}
\end{equation}
where $G$ in Lemma~\ref{lem:sisi-zhou-lem}\ is defined from $A$ and $B$ above.
The inequality in \eqref{eq:approx-diag-geo-ch-form-proof} in turn implies the
following operator inequalities:
\begin{equation}
e^{-i\sqrt{\varepsilon}G}L(\varepsilon)e^{i\sqrt{\varepsilon}G}-o(\varepsilon
)I\leq K(\varepsilon)\leq e^{-i\sqrt{\varepsilon}G}L(\varepsilon
)e^{i\sqrt{\varepsilon}G}+o(\varepsilon)I.
\label{eq:op-ineq-geo-ren-ch-exp-form}
\end{equation}
Observe that
\begin{equation}
e^{-i\sqrt{\varepsilon}G}L(\varepsilon)e^{i\sqrt{\varepsilon}G}+o(\varepsilon
)I=e^{-i\sqrt{\varepsilon}G}\left[  L(\varepsilon)+o(\varepsilon)I\right]
e^{i\sqrt{\varepsilon}G}.
\end{equation}
Now invoking these and the operator monotonicity of the function $x^{\alpha}$
for $\alpha\in(0,1)$, we find that
\begin{align}
&  \lbrack\Gamma_{RB}^{\mathcal{M}_{\varepsilon}}]^{1/2}(\left[  \Gamma
_{RB}^{\mathcal{M}_{\varepsilon}}\right]  ^{-1/2}\Gamma_{RB}^{\mathcal{N}
}\left[  \Gamma_{RB}^{\mathcal{M}_{\varepsilon}}\right]  ^{-1/2})^{\alpha
}[\Gamma_{RB}^{\mathcal{M}_{\varepsilon}}]^{1/2}\nonumber\\
&  =
\begin{bmatrix}
\varepsilon^{-\frac{\alpha}{2}}(\hat{\Gamma}_{RB}^{\mathcal{M}_{\varepsilon}
})^{\frac{1}{2}} & 0\\
0 & \varepsilon^{\frac{1-\alpha}{2}}\Pi_{\Gamma^{\mathcal{M}}}^{\perp}
\end{bmatrix}
\left[  K(\varepsilon)\right]  ^{\alpha}
\begin{bmatrix}
\varepsilon^{-\frac{\alpha}{2}}(\hat{\Gamma}_{RB}^{\mathcal{M}_{\varepsilon}
})^{\frac{1}{2}} & 0\\
0 & \varepsilon^{\frac{1-\alpha}{2}}\Pi_{\Gamma^{\mathcal{M}}}^{\perp}
\end{bmatrix}
\\
&  \leq
\begin{bmatrix}
\varepsilon^{-\frac{\alpha}{2}}(\hat{\Gamma}_{RB}^{\mathcal{M}_{\varepsilon}
})^{\frac{1}{2}} & 0\\
0 & \varepsilon^{\frac{1-\alpha}{2}}\Pi_{\Gamma^{\mathcal{M}}}^{\perp}
\end{bmatrix}
\left[  e^{-i\sqrt{\varepsilon}G}\left[  L(\varepsilon)+o(\varepsilon
)I\right]  e^{i\sqrt{\varepsilon}G}\right]  ^{\alpha}
\begin{bmatrix}
\varepsilon^{-\frac{\alpha}{2}}(\hat{\Gamma}_{RB}^{\mathcal{M}_{\varepsilon}
})^{\frac{1}{2}} & 0\\
0 & \varepsilon^{\frac{1-\alpha}{2}}\Pi_{\Gamma^{\mathcal{M}}}^{\perp}
\end{bmatrix}
\\
&  =
\begin{bmatrix}
\varepsilon^{-\frac{\alpha}{2}}(\hat{\Gamma}_{RB}^{\mathcal{M}_{\varepsilon}
})^{\frac{1}{2}} & 0\\
0 & \varepsilon^{\frac{1-\alpha}{2}}\Pi_{\Gamma^{\mathcal{M}}}^{\perp}
\end{bmatrix}
e^{-i\sqrt{\varepsilon}G}\left[  L(\varepsilon)+o(\varepsilon)I\right]
^{\alpha}e^{i\sqrt{\varepsilon}G}
\begin{bmatrix}
\varepsilon^{-\frac{\alpha}{2}}(\hat{\Gamma}_{RB}^{\mathcal{M}_{\varepsilon}
})^{\frac{1}{2}} & 0\\
0 & \varepsilon^{\frac{1-\alpha}{2}}\Pi_{\Gamma^{\mathcal{M}}}^{\perp}
\end{bmatrix}
. \label{eq:up-bnd-geo-exp-form-pf}
\end{align}
Defining
\begin{equation}
Q(\varepsilon):=(\Gamma_{RB}^{\mathcal{N}_{\delta}})_{1,1}+\varepsilon
R+o(\varepsilon)I,
\end{equation}
consider that
\begin{align}
\left[  L(\varepsilon)+o(\varepsilon)I\right]  ^{\alpha}  &  =
\begin{bmatrix}
\varepsilon S_{\delta}+o(\varepsilon)I & 0\\
0 & (\Gamma_{RB}^{\mathcal{N}_{\delta}})_{1,1}+\varepsilon R+o(\varepsilon)I
\end{bmatrix}
^{\alpha}\\
&  =
\begin{bmatrix}
\left(  \varepsilon S_{\delta}+o(\varepsilon)I\right)  ^{\alpha} & 0\\
0 & \left(  (\Gamma_{RB}^{\mathcal{N}_{\delta}})_{1,1}+\varepsilon
R+o(\varepsilon)I\right)  ^{\alpha}
\end{bmatrix}
\\
&  =
\begin{bmatrix}
\varepsilon^{\alpha}\left(  S_{\delta}+o(1)I\right)  ^{\alpha} & 0\\
0 & \left(  Q(\varepsilon)\right)  ^{\alpha}
\end{bmatrix}
.
\end{align}
Now expanding $e^{-i\sqrt{\varepsilon}G}$ and $e^{i\sqrt{\varepsilon}G}$ to
first order to evaluate \eqref{eq:up-bnd-geo-exp-form-pf}, we find that
\begin{align}
&
\begin{bmatrix}
\varepsilon^{-\frac{\alpha}{2}}(\hat{\Gamma}_{RB}^{\mathcal{M}_{\varepsilon}
})^{\frac{1}{2}} & 0\\
0 & \varepsilon^{\frac{1-\alpha}{2}}\Pi_{\Gamma^{\mathcal{M}}}^{\perp}
\end{bmatrix}
e^{-i\sqrt{\varepsilon}G}\left[  L(\varepsilon)+o(\varepsilon)I\right]
^{\alpha}e^{i\sqrt{\varepsilon}G}
\begin{bmatrix}
\varepsilon^{-\frac{\alpha}{2}}(\hat{\Gamma}_{RB}^{\mathcal{M}_{\varepsilon}
})^{\frac{1}{2}} & 0\\
0 & \varepsilon^{\frac{1-\alpha}{2}}\Pi_{\Gamma^{\mathcal{M}}}^{\perp}
\end{bmatrix}
\nonumber\\
&  =
\begin{bmatrix}
\varepsilon^{-\frac{\alpha}{2}}(\hat{\Gamma}_{RB}^{\mathcal{M}_{\varepsilon}
})^{\frac{1}{2}} & 0\\
0 & \varepsilon^{\frac{1-\alpha}{2}}\Pi_{\Gamma^{\mathcal{M}}}^{\perp}
\end{bmatrix}
\left[  L(\varepsilon)+o(\varepsilon)I\right]  ^{\alpha}
\begin{bmatrix}
\varepsilon^{-\frac{\alpha}{2}}(\hat{\Gamma}_{RB}^{\mathcal{M}_{\varepsilon}
})^{\frac{1}{2}} & 0\\
0 & \varepsilon^{\frac{1-\alpha}{2}}\Pi_{\Gamma^{\mathcal{M}}}^{\perp}
\end{bmatrix}
\nonumber\\
&  \quad-i\varepsilon^{\frac{1}{2}}
\begin{bmatrix}
\varepsilon^{-\frac{\alpha}{2}}(\hat{\Gamma}_{RB}^{\mathcal{M}_{\varepsilon}
})^{\frac{1}{2}} & 0\\
0 & \varepsilon^{\frac{1-\alpha}{2}}\Pi_{\Gamma^{\mathcal{M}}}^{\perp}
\end{bmatrix}
G\left[  L(\varepsilon)+o(\varepsilon)I\right]  ^{\alpha}
\begin{bmatrix}
\varepsilon^{-\frac{\alpha}{2}}(\hat{\Gamma}_{RB}^{\mathcal{M}_{\varepsilon}
})^{\frac{1}{2}} & 0\\
0 & \varepsilon^{\frac{1-\alpha}{2}}\Pi_{\Gamma^{\mathcal{M}}}^{\perp}
\end{bmatrix}
\nonumber\\
&  \quad+i\varepsilon^{\frac{1}{2}}
\begin{bmatrix}
\varepsilon^{-\frac{\alpha}{2}}(\hat{\Gamma}_{RB}^{\mathcal{M}_{\varepsilon}
})^{\frac{1}{2}} & 0\\
0 & \varepsilon^{\frac{1-\alpha}{2}}\Pi_{\Gamma^{\mathcal{M}}}^{\perp}
\end{bmatrix}
\left[  L(\varepsilon)+o(\varepsilon)I\right]  ^{\alpha}G
\begin{bmatrix}
\varepsilon^{-\frac{\alpha}{2}}(\hat{\Gamma}_{RB}^{\mathcal{M}_{\varepsilon}
})^{\frac{1}{2}} & 0\\
0 & \varepsilon^{\frac{1-\alpha}{2}}\Pi_{\Gamma^{\mathcal{M}}}^{\perp}
\end{bmatrix}
+o(1)\\
&  =
\begin{bmatrix}
(\hat{\Gamma}_{RB}^{\mathcal{M}_{\varepsilon}})^{\frac{1}{2}}\left(
S_{\delta}+o(1)I\right)  ^{\alpha}(\hat{\Gamma}_{RB}^{\mathcal{M}
_{\varepsilon}})^{\frac{1}{2}} & 0\\
0 & \varepsilon^{1-\alpha}\Pi_{\Gamma^{\mathcal{M}}}^{\perp}\left(
Q(\varepsilon)\right)  ^{\alpha}\Pi_{\Gamma^{\mathcal{M}}}^{\perp}
\end{bmatrix}
\nonumber\\
&  \quad-i
\begin{bmatrix}
\varepsilon^{\frac{1-\alpha}{2}}(\hat{\Gamma}_{RB}^{\mathcal{M}_{\varepsilon}
})^{\frac{1}{2}} & 0\\
0 & \varepsilon^{\frac{2-\alpha}{2}}\Pi_{\Gamma^{\mathcal{M}}}^{\perp}
\end{bmatrix}
G
\begin{bmatrix}
\varepsilon^{\frac{\alpha}{2}}\left(  S_{\delta}+o(1)I\right)  ^{\alpha}
(\hat{\Gamma}_{RB}^{\mathcal{M}_{\varepsilon}})^{\frac{1}{2}} & 0\\
0 & \varepsilon^{\frac{1-\alpha}{2}}\left(  Q(\varepsilon)\right)  ^{\alpha
}\Pi_{\Gamma^{\mathcal{M}}}^{\perp}
\end{bmatrix}
\nonumber\\
&  \quad+i
\begin{bmatrix}
\varepsilon^{\frac{\alpha}{2}}(\hat{\Gamma}_{RB}^{\mathcal{M}_{\varepsilon}
})^{\frac{1}{2}}\left(  S_{\delta}+o(1)I\right)  ^{\alpha} & 0\\
0 & \varepsilon^{\frac{1-\alpha}{2}}\Pi_{\Gamma^{\mathcal{M}}}^{\perp}\left(
Q(\varepsilon)\right)  ^{\alpha}
\end{bmatrix}
G
\begin{bmatrix}
\varepsilon^{\frac{1-\alpha}{2}}(\hat{\Gamma}_{RB}^{\mathcal{M}_{\varepsilon}
})^{\frac{1}{2}} & 0\\
0 & \varepsilon^{\frac{2-\alpha}{2}}\Pi_{\Gamma^{\mathcal{M}}}^{\perp}
\end{bmatrix}
+o(1)\\
&  =
\begin{bmatrix}
(\hat{\Gamma}_{RB}^{\mathcal{M}_{\varepsilon}})^{\frac{1}{2}}\left(
S_{\delta}+o(1)I\right)  ^{\alpha}(\hat{\Gamma}_{RB}^{\mathcal{M}
_{\varepsilon}})^{\frac{1}{2}} & 0\\
0 & \varepsilon^{1-\alpha}\Pi_{\Gamma^{\mathcal{M}}}^{\perp}\left(
Q(\varepsilon)\right)  ^{\alpha}\Pi_{\Gamma^{\mathcal{M}}}^{\perp}
\end{bmatrix}
+o(1).
\end{align}
Thus, we have established the following operator inequality:
\begin{multline}
\lbrack\Gamma_{RB}^{\mathcal{M}_{\varepsilon}}]^{1/2}(\left[  \Gamma
_{RB}^{\mathcal{M}_{\varepsilon}}\right]  ^{-1/2}\Gamma_{RB}^{\mathcal{N}
}\left[  \Gamma_{RB}^{\mathcal{M}_{\varepsilon}}\right]  ^{-1/2})^{\alpha
}[\Gamma_{RB}^{\mathcal{M}_{\varepsilon}}]^{1/2}\\
\leq
\begin{bmatrix}
(\hat{\Gamma}_{RB}^{\mathcal{M}_{\varepsilon}})^{\frac{1}{2}}\left(
S_{\delta}+o(1)I\right)  ^{\alpha}(\hat{\Gamma}_{RB}^{\mathcal{M}
_{\varepsilon}})^{\frac{1}{2}} & 0\\
0 & \varepsilon^{1-\alpha}\Pi_{\Gamma^{\mathcal{M}}}^{\perp}\left(
Q(\varepsilon)\right)  ^{\alpha}\Pi_{\Gamma^{\mathcal{M}}}^{\perp}
\end{bmatrix}
+o(1)
\end{multline}
By similar reasoning, but applying the lower bound in
\eqref{eq:op-ineq-geo-ren-ch-exp-form}, we also establish the following
operator inequality lower bound:
\begin{multline}
\begin{bmatrix}
(\hat{\Gamma}_{RB}^{\mathcal{M}_{\varepsilon}})^{\frac{1}{2}}\left(
S_{\delta}-o(1)I\right)  ^{\alpha}(\hat{\Gamma}_{RB}^{\mathcal{M}
_{\varepsilon}})^{\frac{1}{2}} & 0\\
0 & \varepsilon^{1-\alpha}\Pi_{\Gamma^{\mathcal{M}}}^{\perp}\left(
Q(\varepsilon)\right)  ^{\alpha}\Pi_{\Gamma^{\mathcal{M}}}^{\perp}
\end{bmatrix}
+o(1)\\
\leq\lbrack\Gamma_{RB}^{\mathcal{M}_{\varepsilon}}]^{1/2}(\left[  \Gamma
_{RB}^{\mathcal{M}_{\varepsilon}}\right]  ^{-1/2}\Gamma_{RB}^{\mathcal{N}
}\left[  \Gamma_{RB}^{\mathcal{M}_{\varepsilon}}\right]  ^{-1/2})^{\alpha
}[\Gamma_{RB}^{\mathcal{M}_{\varepsilon}}]^{1/2}.
\end{multline}
Now taking the partial trace, evaluating the minimum eigenvalue, and the limit
$\varepsilon\rightarrow0^{+}$, we conclude that
\begin{multline}
\lim_{\varepsilon\rightarrow0^{+}}\lambda_{\min}\left(  \operatorname{Tr}
_{B}\left[  [\Gamma_{RB}^{\mathcal{M}_{\varepsilon}}]^{1/2}(\left[
\Gamma_{RB}^{\mathcal{M}_{\varepsilon}}\right]  ^{-1/2}\Gamma_{RB}
^{\mathcal{N}_{\delta}}\left[  \Gamma_{RB}^{\mathcal{M}_{\varepsilon}}\right]
^{-1/2})^{\alpha}[\Gamma_{RB}^{\mathcal{M}_{\varepsilon}}]^{1/2}\right]
\right) \\
=\lambda_{\min}\left(  \operatorname{Tr}_{B}\left[  [\Gamma_{RB}^{\mathcal{M}
}]^{1/2}(\left[  \Gamma_{RB}^{\mathcal{M}}\right]  ^{-1/2}\widetilde{\Gamma
}_{RB}^{\mathcal{N}_{\delta}}\left[  \Gamma_{RB}^{\mathcal{M}}\right]
^{-1/2})^{\alpha}[\Gamma_{RB}^{\mathcal{M}}]^{1/2}\right]  \right)  ,
\end{multline}
where
\begin{equation}
\widetilde{\Gamma}_{RB}^{\mathcal{N}_{\delta}}:=\left(  (\Gamma_{RB}
^{\mathcal{N}_{\delta}})_{0,0}-(\Gamma_{RB}^{\mathcal{N}_{\delta}}
)_{0,1}(\Gamma_{RB}^{\mathcal{N}_{\delta}})_{1,1}^{-1}(\Gamma_{RB}
^{\mathcal{N}_{\delta}})_{0,1}^{\dag}\right)  (\hat{\Gamma}_{RB}
^{\mathcal{M}_{\varepsilon}})^{-\frac{1}{2}}.
\end{equation}
Noting that
\begin{equation}
\lim_{\delta\rightarrow0^{+}}\widetilde{\Gamma}_{RB}^{\mathcal{N}_{\delta}
}=\widetilde{\Gamma}_{RB}^{\mathcal{N}},
\end{equation}
where $\widetilde{\Gamma}_{RB}^{\mathcal{N}}:=\widetilde{\Gamma}
_{RB}^{\mathcal{N}_{\delta=0}}$, because the image of $(\Gamma_{RB}
^{\mathcal{N}})_{0,1}^{\dag}$ is contained in the support of $(\Gamma
_{RB}^{\mathcal{N}})_{1,1}$, we conclude that
\begin{multline}
\lim_{\delta\rightarrow0^{+}}\lambda_{\min}\left(  \operatorname{Tr}
_{B}\left[  [\Gamma_{RB}^{\mathcal{M}}]^{1/2}(\left[  \Gamma_{RB}
^{\mathcal{M}}\right]  ^{-1/2}\widetilde{\Gamma}_{RB}^{\mathcal{N}_{\delta}
}\left[  \Gamma_{RB}^{\mathcal{M}}\right]  ^{-1/2})^{\alpha}[\Gamma
_{RB}^{\mathcal{M}}]^{1/2}\right]  \right) \\
=\lambda_{\min}\left(  \operatorname{Tr}_{B}\left[  [\Gamma_{RB}^{\mathcal{M}
}]^{1/2}(\left[  \Gamma_{RB}^{\mathcal{M}}\right]  ^{-1/2}\widetilde{\Gamma
}_{RB}^{\mathcal{N}}\left[  \Gamma_{RB}^{\mathcal{M}}\right]  ^{-1/2}
)^{\alpha}[\Gamma_{RB}^{\mathcal{M}}]^{1/2}\right]  \right)  .
\end{multline}
Combining with \eqref{eq:lim-exch-geo-ren-ch-exp-form-1} and
\eqref{eq:lim-exch-geo-ren-ch-exp-form-2}, this concludes the proof.
\end{proof}

\bigskip

\begin{proof}
[Proof of Proposition~\ref{prop:chain-rule-geo-renyi-ch}]Let us first consider
the case $\alpha\in(1,2]$ and $\operatorname{supp}(\rho_{RA})\not \subseteq
\operatorname{supp}(\sigma_{RA})$ or $\operatorname{supp}(\Gamma
_{RB}^{\mathcal{N}})\not \subseteq \operatorname{supp}(\Gamma_{RB}
^{\mathcal{M}})$. In this case, the sum on the right-hand side is equal to
$+\infty$, so that the inequality trivially holds.

Let us then consider the case $\alpha\in(1,2]$ and $\operatorname{supp}
(\rho_{RA})\subseteq\operatorname{supp}(\sigma_{RA})$ and $\operatorname{supp}
(\Gamma_{RB}^{\mathcal{N}})\subseteq\operatorname{supp}(\Gamma_{RB}
^{\mathcal{M}})$. The postselected teleportation identity implies that
\begin{align}
\mathcal{N}_{A\rightarrow B}(\rho_{RA})  &  =\langle\Gamma|_{AS}\rho
_{RA}\otimes\Gamma_{SB}^{\mathcal{N}}|\Gamma\rangle_{AS},\\
\mathcal{M}_{A\rightarrow B}(\sigma_{RA})  &  =\langle\Gamma|_{AS}\sigma
_{RA}\otimes\Gamma_{SB}^{\mathcal{M}}|\Gamma\rangle_{AS}\text{.}
\end{align}
Consider that
\begin{align}
&  \operatorname{Tr}[G_{\alpha}(\mathcal{M}_{A\rightarrow B}(\sigma
_{RA}),\mathcal{N}_{A\rightarrow B}(\rho_{RA}))]\nonumber\\
&  =\operatorname{Tr}[G_{\alpha}(\langle\Gamma|_{AS}\sigma_{RA}\otimes
\Gamma_{SB}^{\mathcal{M}}|\Gamma\rangle_{AS},\langle\Gamma|_{AS}\rho
_{RA}\otimes\Gamma_{SB}^{\mathcal{N}}|\Gamma\rangle_{AS})]\\
&  \leq\operatorname{Tr}[\langle\Gamma|_{AS}G_{\alpha}(\sigma_{RA}
\otimes\Gamma_{SB}^{\mathcal{M}},\rho_{RA}\otimes\Gamma_{SB}^{\mathcal{N}
})|\Gamma\rangle_{AS}]\\
&  =\operatorname{Tr}[\langle\Gamma|_{AS}G_{\alpha}(\sigma_{RA},\rho
_{RA})\otimes G_{\alpha}(\Gamma_{SB}^{\mathcal{M}},\Gamma_{SB}^{\mathcal{N}
})|\Gamma\rangle_{AS}]\\
&  =\operatorname{Tr}_{RB}[\langle\Gamma|_{AS}G_{\alpha}(\sigma_{RA},\rho
_{RA})\otimes G_{\alpha}(\Gamma_{SB}^{\mathcal{M}},\Gamma_{SB}^{\mathcal{N}
})|\Gamma\rangle_{AS}]\\
&  =\langle\Gamma|_{AS}\operatorname{Tr}_{R}[G_{\alpha}(\sigma_{RA},\rho
_{RA})]\otimes\operatorname{Tr}_{B}[G_{\alpha}(\Gamma_{SB}^{\mathcal{M}
},\Gamma_{SB}^{\mathcal{N}})]|\Gamma\rangle_{AS}\\
&  \leq\left\Vert \operatorname{Tr}_{B}[G_{\alpha}(\Gamma_{SB}^{\mathcal{M}
},\Gamma_{SB}^{\mathcal{N}})]\right\Vert _{\infty}\cdot\langle\Gamma
|_{AS}\operatorname{Tr}_{R}[G_{\alpha}(\sigma_{RA},\rho_{RA})]\otimes
I_{S}|\Gamma\rangle_{AS}\\
&  =\left\Vert \operatorname{Tr}_{B}[G_{\alpha}(\Gamma_{SB}^{\mathcal{M}
},\Gamma_{SB}^{\mathcal{N}})]\right\Vert _{\infty}\cdot\operatorname{Tr}
_{RA}[G_{\alpha}(\sigma_{RA},\rho_{RA})]\\
&  =\left\Vert \operatorname{Tr}_{B}[G_{\alpha}(\Gamma_{SB}^{\mathcal{M}
},\Gamma_{SB}^{\mathcal{N}})]\right\Vert _{\infty}\cdot\operatorname{Tr}
[G_{\alpha}(\sigma_{RA},\rho_{RA})].
\end{align}
Now applying a logarithm and dividing by $\alpha-1$, we conclude the chain
rule:
\begin{multline}
\widehat{D}_{\alpha}(\mathcal{N}_{A\rightarrow B}(\rho_{RA}))\Vert
\mathcal{M}_{A\rightarrow B}(\sigma_{RA}))\leq
\frac{1}{\alpha-1}\ln\left\Vert   \operatorname{Tr}
_{B}[G_{\alpha}(\Gamma_{SB}^{\mathcal{M}},\Gamma_{SB}
^{\mathcal{N}})]\right\Vert_\infty \\
+\widehat{D}_{\alpha}(\rho_{RA}\Vert\sigma_{RA}).
\end{multline}

The argument for $\alpha\in(0,1)$ is similar, but we should be careful with
limits and we exploit the minimum eigenvalue instead of the maximum
eigenvalue. Fix $\varepsilon>0$, $\delta\in(0,1)$, and consider that
\begin{align}
&  \operatorname{Tr}[G_{\alpha}(\mathcal{M}_{A\rightarrow B}^{\varepsilon
}(\sigma_{RA}^{\varepsilon}),\mathcal{N}_{A\rightarrow B}^{\delta}(\rho
_{RA}^{\delta}))]\nonumber\\
&  =\operatorname{Tr}[G_{\alpha}(\langle\Gamma|_{AS}\sigma_{RA}^{\varepsilon
}\otimes\Gamma_{SB}^{\mathcal{M}_{\varepsilon}}|\Gamma\rangle_{AS}
,\langle\Gamma|_{AS}\rho_{RA}^{\delta}\otimes\Gamma_{SB}^{\mathcal{N}_{\delta
}}|\Gamma\rangle_{AS})]\\
&  \geq\operatorname{Tr}[\langle\Gamma|_{AS}G_{\alpha}(\sigma_{RA}
^{\varepsilon}\otimes\Gamma_{SB}^{\mathcal{M}_{\varepsilon}},\rho_{RA}
^{\delta}\otimes\Gamma_{SB}^{\mathcal{N}_{\delta}})|\Gamma\rangle_{AS}]\\
&  =\operatorname{Tr}[\langle\Gamma|_{AS}G_{\alpha}(\sigma_{RA}^{\varepsilon
},\rho_{RA}^{\delta})\otimes G_{\alpha}(\Gamma_{SB}^{\mathcal{M}_{\varepsilon
}},\Gamma_{SB}^{\mathcal{N}_{\delta}})|\Gamma\rangle_{AS}]\\
&  =\operatorname{Tr}_{RB}[\langle\Gamma|_{AS}G_{\alpha}(\sigma_{RA}
^{\varepsilon},\rho_{RA}^{\delta})\otimes G_{\alpha}(\Gamma_{SB}
^{\mathcal{M}_{\varepsilon}},\Gamma_{SB}^{\mathcal{N}_{\delta}})|\Gamma
\rangle_{AS}]\\
&  =\langle\Gamma|_{AS}\operatorname{Tr}_{R}[G_{\alpha}(\sigma_{RA}
^{\varepsilon},\rho_{RA}^{\delta})]\otimes\operatorname{Tr}_{B}[G_{\alpha
}(\Gamma_{SB}^{\mathcal{M}_{\varepsilon}},\Gamma_{SB}^{\mathcal{N}_{\delta}
})]|\Gamma\rangle_{AS}\\
&  \geq\lambda_{\min}\left(  \operatorname{Tr}_{B}[G_{\alpha}(\Gamma
_{SB}^{\mathcal{M}_{\varepsilon}},\Gamma_{SB}^{\mathcal{N}_{\delta}})]\right)
\cdot\langle\Gamma|_{AS}\operatorname{Tr}_{R}[G_{\alpha}(\sigma_{RA}
^{\varepsilon},\rho_{RA}^{\delta})]\otimes I_{S}|\Gamma\rangle_{AS}\\
&  =\lambda_{\min}\left(  \operatorname{Tr}_{B}[G_{\alpha}(\Gamma
_{SB}^{\mathcal{M}_{\varepsilon}},\Gamma_{SB}^{\mathcal{N}_{\delta}})]\right)
\cdot\operatorname{Tr}_{RA}[G_{\alpha}(\sigma_{RA}^{\varepsilon},\rho
_{RA}^{\delta})]\\
&  =\lambda_{\min}\left(  \operatorname{Tr}_{B}[G_{\alpha}(\Gamma
_{SB}^{\mathcal{M}_{\varepsilon}},\Gamma_{SB}^{\mathcal{N}_{\delta}})]\right)
\cdot\operatorname{Tr}[G_{\alpha}(\sigma_{RA}^{\varepsilon},\rho_{RA}^{\delta
})].
\end{align}
Now taking a logarithm and dividing by $\alpha-1$, we arrive at the following
inequality:
\begin{multline}
\widehat{D}_{\alpha}(\mathcal{N}_{A\rightarrow B}^{\delta}(\rho_{RA}^{\delta
}))\Vert\mathcal{M}_{A\rightarrow B}^{\varepsilon}(\sigma_{RA}^{\varepsilon
}))\leq\frac{1}{\alpha-1}\ln\lambda_{\min}\left(  \operatorname{Tr}
_{B}[G_{\alpha}(\Gamma_{SB}^{\mathcal{M}_{\varepsilon}},\Gamma_{SB}
^{\mathcal{N}_{\delta}})]\right) \\
+\widehat{D}_{\alpha}(\rho_{RA}^{\delta}\Vert\sigma_{RA}^{\varepsilon}).
\end{multline}
Taking the limit as $\delta\rightarrow0^{+}$, we find that
\begin{multline}
\widehat{D}_{\alpha}(\mathcal{N}_{A\rightarrow B}(\rho_{RA}))\Vert
\mathcal{M}_{A\rightarrow B}^{\varepsilon}(\sigma_{RA}^{\varepsilon}
))\leq\frac{1}{\alpha-1}\ln\lambda_{\min}\left(  \operatorname{Tr}
_{B}[G_{\alpha}(\Gamma_{SB}^{\mathcal{M}_{\varepsilon}},\Gamma_{SB}
^{\mathcal{N}})]\right) \\
+\widehat{D}_{\alpha}(\rho_{RA}\Vert\sigma_{RA}^{\varepsilon}).
\end{multline}
where we used the fact that the operations of evaluating the minimum
eigenvalue and the limit $\delta\rightarrow0^{+}$ commute. Then taking the
limit as $\varepsilon\rightarrow0^{+}$, we conclude that
\begin{multline}
\widehat{D}_{\alpha}(\mathcal{N}_{A\rightarrow B}(\rho_{RA}))\Vert
\mathcal{M}_{A\rightarrow B}(\sigma_{RA}))\leq\lim_{\varepsilon\rightarrow
0^{+}}\frac{1}{\alpha-1}\ln\lambda_{\min}\left(  \operatorname{Tr}
_{B}[G_{\alpha}(\Gamma_{SB}^{\mathcal{M}_{\varepsilon}},\Gamma_{SB}
^{\mathcal{N}})]\right) \\
+\widehat{D}_{\alpha}(\rho_{RA}\Vert\sigma_{RA}).
\end{multline}
This concludes the proof.
\end{proof}

\section{SLD and RLD Fisher informations as limits of R\'enyi relative
entropies}

\label{app:fish-info-fid-limit}

\begin{lemma}
\label{lem:different-exps-fisher-info}For a second-order differentiable family
$\{\rho_{\theta}\}_{\theta}$ of quantum states, the expressions in
\eqref{eq:SLD-fish-bures-connection} and
\eqref{eq:SLD-fish-bures-connection-2} are equal.
\end{lemma}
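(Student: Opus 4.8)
The statement to prove, Lemma~\ref{lem:different-exps-fisher-info}, asserts the equality of the two limit expressions
\begin{equation}
\lim_{\varepsilon\rightarrow0}\lim_{\delta\rightarrow0}\frac{8}{\delta^{2}}\left(1-\sqrt{F}(\rho_{\theta}^{\varepsilon},\rho_{\theta+\delta}^{\varepsilon})\right)
\qquad\text{and}\qquad
\lim_{\varepsilon\rightarrow0}\lim_{\delta\rightarrow0}\frac{4}{\delta^{2}}\left(-\ln F(\rho_{\theta}^{\varepsilon},\rho_{\theta+\delta}^{\varepsilon})\right).
\end{equation}
The plan is to show that, for each fixed $\varepsilon>0$, the inner limits over $\delta$ agree, so the outer limits over $\varepsilon$ trivially agree as well. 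The key observation is that for fixed $\varepsilon>0$, the state $\rho_{\theta}^{\varepsilon}=(1-\varepsilon)\rho_{\theta}+\varepsilon\pi_{d}$ is full rank and, since $\{\rho_{\theta}\}_{\theta}$ is second-order differentiable, so is $\{\rho_{\theta}^{\varepsilon}\}_{\theta}$. Thus $\sqrt{F}(\rho_{\theta}^{\varepsilon},\rho_{\theta+\delta}^{\varepsilon})$ is a smooth function of $\delta$ near $\delta=0$, equal to $1$ at $\delta=0$.

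First I would write $f(\delta):=\sqrt{F}(\rho_{\theta}^{\varepsilon},\rho_{\theta+\delta}^{\varepsilon})$, and note $f(0)=1$ and $0\le f(\delta)\le 1$, so $\delta=0$ is a maximum of $f$ and hence $f'(0)=0$. A Taylor expansion then gives $f(\delta)=1+\tfrac{1}{2}f''(0)\delta^{2}+o(\delta^{2})$, so that
\begin{equation}
\lim_{\delta\rightarrow0}\frac{8}{\delta^{2}}\left(1-f(\delta)\right)=-4f''(0).
\end{equation}
For the second expression, I would set $g(\delta):=-\ln F(\rho_{\theta}^{\varepsilon},\rho_{\theta+\delta}^{\varepsilon})=-2\ln f(\delta)$. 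Since $f(0)=1$, we have $g(0)=0$, and $g'(\delta)=-2f'(\delta)/f(\delta)$ gives $g'(0)=0$. Differentiating once more, $g''(0)=-2f''(0)+2(f'(0))^{2}=-2f''(0)$, using $f(0)=1$ and $f'(0)=0$. Hence a Taylor expansion yields
\begin{equation}
\lim_{\delta\rightarrow0}\frac{4}{\delta^{2}}\left(-\ln F(\rho_{\theta}^{\varepsilon},\rho_{\theta+\delta}^{\varepsilon})\right)=\lim_{\delta\rightarrow0}\frac{4}{\delta^{2}}\cdot\frac{1}{2}g''(0)\delta^{2}=2g''(0)=-4f''(0),
\end{equation}
which matches the first expression. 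Taking $\lim_{\varepsilon\rightarrow0}$ of both sides completes the argument. The same computation essentially appears in the proof of Proposition~\ref{prop:RLD-fish-limit-geo-renyi} for the RLD/geometric R\'enyi case (the equality of \eqref{eq:geo-renyi-limit-to-RLD} and \eqref{eq:geo-renyi-limit-to-RLD-2}), and the paper already cites \cite{Mos17} for this type of result, so I would follow that template.

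The main obstacle is justifying that $f(\delta)$ is genuinely twice differentiable in $\delta$ at $\delta=0$ (with the derivative expansion valid uniformly enough to extract the limits), rather than merely that the quoted limits exist. For full-rank states the root fidelity $\sqrt{F}(\rho,\sigma)=\operatorname{Tr}|\sqrt{\rho}\sqrt{\sigma}|=\operatorname{Tr}\sqrt{\sqrt{\rho}\,\sigma\,\sqrt{\rho}}$ is a smooth function of the pair of operators on the open set of positive definite matrices (the square root and $|\cdot|$ are analytic there), and $\theta\mapsto\rho_{\theta}^{\varepsilon}$ is $C^{2}$ by hypothesis; composing gives that $f$ is $C^{2}$ near $\delta=0$. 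I would spell this out briefly, perhaps by invoking the known explicit second-order expansion of the Bures metric for full-rank states from \cite{Hub92,Sommers_2003}, which already underlies \eqref{eq:SLD-fish-bures-connection} via Proposition~\ref{prop:physical-consistency-SLD-Fish-states}; once $f''(0)$ is known to exist and the expansion $f(\delta)=1+\tfrac12 f''(0)\delta^2+o(\delta^2)$ holds, the rest is the elementary chain-rule computation above relating $-2\ln f$ to $f$. No interchange of the $\varepsilon$ and $\delta$ limits is needed, since we argue termwise in $\varepsilon$.
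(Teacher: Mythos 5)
Your proof is correct, but it takes a genuinely different route from the paper's. The paper's proof does not assume any differentiability of $f(\delta)=\sqrt{F}(\rho_{\theta},\rho_{\theta+\delta})$ in $\delta$: it only assumes that the limit $\lim_{\delta\to 0}\frac{8}{\delta^{2}}\bigl(1-f(\delta)\bigr)$ exists and is finite, writes $-\frac{4}{\delta^{2}}\ln F = \frac{8}{\delta^{2}}\bigl(1-f(\delta)\bigr)+\frac{\delta^{2}}{8}g(\theta,\delta)$ using the series $-\ln(1-x)=\sum_{n\geq 1}x^{n}/n$, and shows the remainder $g$ is uniformly bounded for small $\delta$, so the correction term vanishes. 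Your argument instead establishes that $f$ is $C^{2}$ in $\delta$ (via smoothness of $\operatorname{Tr}\sqrt{\sqrt{\rho}\,\sigma\sqrt{\rho}}$ on the positive definite cone at fixed $\varepsilon>0$, composed with the $C^{2}$ curve $\theta\mapsto\rho_{\theta}^{\varepsilon}$), uses $f(0)=1$, $f\leq 1$ to get $f'(0)=0$, and then compares Taylor coefficients of $1-f$ and $-2\ln f$. Both are sound. The paper's route is more economical in its hypotheses --- it needs only convergence of the first expression, not second-order differentiability of the fidelity itself --- whereas your route requires the extra smoothness input but in exchange proves that the inner limit actually exists (rather than assuming it) and identifies its value as $-4f''(0)$; you also correctly isolate and justify the one nontrivial analytic point, namely why $f$ is twice differentiable at $\delta=0$. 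One small caveat: your closing remark that the same computation appears in the proof of the RLD analogue is slightly off --- that equality is proved in the appendix by the same series-expansion bookkeeping as the paper's proof of this lemma, not by a Taylor/chain-rule argument --- but this does not affect the validity of your proof.
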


\begin{proof}
This follows from the linear approximation of the logarithm around one. Set
\begin{equation}
I_{F}\equiv I_{F}(\theta;\{\mathcal{N}_{\theta}\}_{\theta}):=\lim
_{\delta\rightarrow0}\frac{8}{\delta^{2}}f(\theta,\delta),
\label{eq:I_F-alt-exp}
\end{equation}
where
\begin{equation}
f(\theta,\delta):=1-\sqrt{F}(\rho_{\theta}\Vert\rho_{\theta+\delta}),
\end{equation}
and suppose that the limit in \eqref{eq:I_F-alt-exp} exists and is a finite
number. Then for sufficiently small $\delta>0$, the following inequalities
hold
\begin{equation}
\frac{8}{\delta^{2}}f(\theta,\delta)<\left\vert I_{F}\right\vert
+1,\qquad\left\vert f(\theta,\delta)\right\vert <1/2.
\label{eq:for-small-delta}
\end{equation}
Using the following expansion for $x\in\lbrack0,1)$
\begin{equation}
-\ln(1-x)=\sum_{n=1}^{\infty}\frac{x^{n}}{n},
\end{equation}
we find that
\begin{align}
-\frac{4}{\delta^{2}}\ln F(\rho_{\theta}\Vert\rho_{\theta+\delta})  &
=-\frac{8}{\delta^{2}}\ln\sqrt{F}(\rho_{\theta}\Vert\rho_{\theta+\delta})\\
&  =-\frac{8}{\delta^{2}}\ln(1-f(\theta,\delta))\\
&  =\frac{8}{\delta^{2}}\sum_{n=1}^{\infty}\frac{f(\theta,\delta)^{n}}{n}\\
&  =\frac{8}{\delta^{2}}\left[  f(\theta,\delta)+\frac{f(\theta,\delta)^{2}
}{2}+\sum_{n=1}^{\infty}\frac{f(\theta,\delta)^{n+2}}{n+2}\right] \\
&  =\frac{8}{\delta^{2}}\left[  f(\theta,\delta)+f(\theta,\delta)^{2}\left(
\frac{1}{2}+\sum_{n=1}^{\infty}\frac{f(\theta,\delta)^{n}}{n+2}\right)
\right] \\
&  =\frac{8}{\delta^{2}}f(\theta,\delta)+\frac{\delta^{2}}{8}\left[
\frac{8f(\theta,\delta)}{\delta^{2}}\right]  ^{2}\left(  \frac{1}{2}
+\sum_{n=1}^{\infty}\frac{f(\theta,\delta)^{n}}{n+2}\right) \\
&  =\frac{8}{\delta^{2}}f(\theta,\delta)+\frac{\delta^{2}}{8}g(\theta,\delta),
\end{align}
where
\begin{equation}
g(\theta,\delta):=\left[  \frac{8f(\theta,\delta)}{\delta^{2}}\right]
^{2}\left(  \frac{1}{2}+\sum_{n=1}^{\infty}\frac{f(\theta,\delta)^{n}}
{n+2}\right)  .
\end{equation}
For sufficiently small $\delta$, it follows from \eqref{eq:for-small-delta}
that
\[
\left\vert g(\theta,\delta)\right\vert \leq\left[  \left\vert I_{F}\right\vert
+1\right]  ^{2}\sum_{n=0}^{\infty}\left(  \frac{1}{2}\right)  ^{n}=2\left(
\left\vert I_{F}\right\vert +1\right)  ^{2}.
\]
Then we find that
\begin{equation}
\lim_{\delta\rightarrow0}-\frac{4}{\delta^{2}}\ln F(\rho_{\theta}\Vert
\rho_{\theta+\delta})=\lim_{\delta\rightarrow0}\frac{8}{\delta^{2}}
f(\theta,\delta),
\end{equation}
concluding the proof.
\end{proof}

\begin{lemma}
\label{lem:different-exps-RLD-fisher-info}For a second-order differentiable
family $\{\rho_{\theta}\}_{\theta}$ of quantum states, the expressions in
\eqref{eq:geo-renyi-limit-to-RLD} and \eqref{eq:geo-renyi-limit-to-RLD-2} are equal.
\end{lemma}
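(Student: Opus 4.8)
The plan is to mirror the argument used to prove Lemma~\ref{lem:different-exps-fisher-info}, now applied to the pair of expressions \eqref{eq:geo-renyi-limit-to-RLD} and \eqref{eq:geo-renyi-limit-to-RLD-2}. Recall that \eqref{eq:geo-renyi-limit-to-RLD} is a limit built from $\widehat{Q}_\alpha(\rho_{\theta+\delta}^\varepsilon \Vert \rho_\theta^\varepsilon) - 1$, while \eqref{eq:geo-renyi-limit-to-RLD-2} is a limit built from $\widehat{D}_\alpha(\rho_{\theta+\delta}^\varepsilon \Vert \rho_\theta^\varepsilon) = \frac{1}{\alpha-1}\ln \widehat{Q}_\alpha(\rho_{\theta+\delta}^\varepsilon \Vert \rho_\theta^\varepsilon)$, with the prefactors $\frac{2}{\alpha(\alpha-1)\delta^2}$ and $\frac{2}{\delta^2\alpha}$ chosen precisely so that the $\frac{1}{\alpha-1}$ in the definition of $\widehat{D}_\alpha$ is absorbed. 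So the only content is that passing from $\widehat{Q}_\alpha - 1$ to $\ln \widehat{Q}_\alpha = \ln(1 + (\widehat{Q}_\alpha - 1))$ does not change the value of the limit.

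The key computation, carried out first for fixed $\varepsilon>0$ so that $\rho_\theta^\varepsilon$ is full rank: write $\widehat{Q}_\alpha(\rho_{\theta+\delta}^\varepsilon \Vert \rho_\theta^\varepsilon) = 1 + q(\theta,\delta,\varepsilon)$, where from the expansion already performed in the proof of Proposition~\ref{prop:RLD-fish-limit-geo-renyi} we have $q(\theta,\delta,\varepsilon) = \frac{1}{2}\alpha(\alpha-1)\operatorname{Tr}[d\rho_\theta^\varepsilon (\rho_\theta^\varepsilon)^{-1} d\rho_\theta^\varepsilon] + O((d\rho_\theta^\varepsilon)^3)$ with $d\rho_\theta^\varepsilon := \rho_{\theta+\delta}^\varepsilon - \rho_\theta^\varepsilon$, so that $\frac{2}{\alpha(\alpha-1)\delta^2} q(\theta,\delta,\varepsilon) \to \widehat{I}_F(\theta;\{\rho_\theta^\varepsilon\}_\theta)$ as $\delta \to 0$. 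Now apply the expansion $\ln(1+x) = x - x^2/2 + \cdots = x + x^2 \sum_{n\geq 0} \frac{(-x)^n}{n+2}$, exactly as in Lemma~\ref{lem:different-exps-fisher-info}. For sufficiently small $\delta$ we have $|q| < 1/2$ and $\frac{2}{\alpha(\alpha-1)\delta^2} q$ bounded near its limiting value, so the tail term $\frac{2}{\alpha(\alpha-1)\delta^2} q^2 \sum_{n\geq 0}\frac{(-q)^n}{n+2}$ is of the form $\delta^2 \cdot (\text{bounded})$ and vanishes as $\delta\to 0$. This gives
\begin{equation}
\lim_{\delta\to 0}\frac{2}{\alpha(\alpha-1)\delta^2}\left(\widehat{Q}_\alpha(\rho_{\theta+\delta}^\varepsilon \Vert \rho_\theta^\varepsilon) - 1\right) = \lim_{\delta\to 0}\frac{2}{\alpha(\alpha-1)\delta^2}\ln \widehat{Q}_\alpha(\rho_{\theta+\delta}^\varepsilon \Vert \rho_\theta^\varepsilon) = \lim_{\delta\to 0}\frac{2}{\delta^2\alpha}\widehat{D}_\alpha(\rho_{\theta+\delta}^\varepsilon \Vert \rho_\theta^\varepsilon),
\end{equation}
where the last equality is just the identity $\frac{1}{\alpha(\alpha-1)}\ln\widehat{Q}_\alpha = \frac{1}{\alpha}\cdot\frac{1}{\alpha-1}\ln\widehat{Q}_\alpha = \frac{1}{\alpha}\widehat{D}_\alpha$. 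Finally take $\varepsilon\to 0$ on both sides, invoking the definition \eqref{eq:geo-renyi-limit-to-RLD}--\eqref{eq:geo-renyi-limit-to-RLD-2} and the physical-consistency limit for the RLD Fisher information (Proposition~\ref{prop:physical-consistency-RLD-Fish-states}).

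The main obstacle is bookkeeping rather than conceptual: one must be careful that the $O((d\rho_\theta^\varepsilon)^3)$ error terms, when divided by $\delta^2$, genuinely vanish as $\delta\to0$ (which holds for a second-order differentiable family, since $\|d\rho_\theta^\varepsilon\|_\infty = O(\delta)$ implies $\delta^{-2}O(\|d\rho_\theta^\varepsilon\|_\infty^3) = \delta\cdot O(\|d\rho_\theta^\varepsilon/\delta\|_\infty^3) \to 0$), and that the bounds used in the $\ln(1+x)$ expansion are uniform in $\delta$ near $0$ — exactly the estimate $|q|<1/2$ together with boundedness of $\delta^{-2}q$ that was handled in Lemma~\ref{lem:different-exps-fisher-info}. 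Since $\widehat{Q}_1(\rho_\theta^\varepsilon\Vert\rho_\theta^\varepsilon) = 1$ and $\ln 1 = 0$, no anomalous constant appears, and the argument closes in the same way as the SLD case.
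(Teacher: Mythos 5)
Your proposal is correct and follows essentially the same route as the paper's proof: both reduce the claim to the expansion of the logarithm around $1$, showing that the remainder term is $\delta^{2}$ times a quantity bounded uniformly for small $\delta$ (using $|q|<1/2$ and the boundedness of $\delta^{-2}q$ near its finite limit), and then take $\varepsilon\to 0$ last. The only quibble is a harmless sign slip in your tail series $\bigl(\ln(1+x)-x=-x^{2}\sum_{n\geq 0}\tfrac{(-x)^{n}}{n+2}\bigr)$, which does not affect the boundedness estimate or the conclusion.
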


\begin{proof}
This again follows from the linear approximation of the logarithm around one.
Suppose $\alpha\in(0,1)\cup(1,\infty)$. Set
\begin{equation}
\widehat{I}_{F}\equiv\widehat{I}_{F}(\theta;\{\mathcal{N}_{\theta}\}_{\theta
}):=\lim_{\delta\rightarrow0}\frac{2q_{\alpha}(\theta,\delta)}{\alpha\left(
1-\alpha\right)  \delta^{2}},
\end{equation}
where
\begin{equation}
q_{\alpha}(\theta,\delta):=1-\widehat{Q}_{\alpha}(\rho_{\theta+\delta}
\Vert\rho_{\theta}),
\end{equation}
and suppose that the limit in \eqref{eq:I_F-alt-exp} exists and is a finite
number. Then, for sufficiently small $\delta>0$, the following inequalities
hold
\begin{equation}
\frac{2}{\alpha\left(  1-\alpha\right)  \delta^{2}}q_{\alpha}(\theta
,\delta)<\left\vert \widehat{I}_{F}\right\vert +1,\qquad\left\vert q_{\alpha
}(\theta,\delta)\right\vert <\frac{1}{2}.
\end{equation}
Using the following expansion for $x\in\lbrack0,1)$
\begin{equation}
-\ln(1-x)=\sum_{n=1}^{\infty}\frac{x^{n}}{n},
\end{equation}
and taking sufficiently small $\delta>0$ as stated above, we find that
\begin{align}
&  \frac{2}{\alpha\delta^{2}}\widehat{D}_{\alpha}(\rho_{\theta+\delta}
\Vert\rho_{\theta})\\
&  =-\frac{2}{\alpha\left(  1-\alpha\right)  \delta^{2}}\ln\widehat{Q}
_{\alpha}(\rho_{\theta+\delta}\Vert\rho_{\theta})\\
&  =-\frac{2}{\alpha\left(  1-\alpha\right)  \delta^{2}}\ln(1-q_{\alpha
}(\theta,\delta))\\
&  =\frac{2}{\alpha\left(  1-\alpha\right)  \delta^{2}}\sum_{n=1}^{\infty
}\frac{q_{\alpha}(\theta,\delta)^{n}}{n}\\
&  =\frac{2}{\alpha\left(  1-\alpha\right)  \delta^{2}}\left[  q_{\alpha
}(\theta,\delta)+\frac{q_{\alpha}(\theta,\delta)^{2}}{2}+\sum_{n=1}^{\infty
}\frac{q_{\alpha}(\theta,\delta)^{n+2}}{n+2}\right] \\
&  =\frac{2}{\alpha\left(  1-\alpha\right)  \delta^{2}}\left[  q_{\alpha
}(\theta,\delta)+q_{\alpha}(\theta,\delta)^{2}\left(  \frac{1}{2}+\sum
_{n=1}^{\infty}\frac{q_{\alpha}(\theta,\delta)^{n}}{n+2}\right)  \right] \\
&  =\frac{2}{\alpha\left(  1-\alpha\right)  \delta^{2}}q_{\alpha}
(\theta,\delta)+\frac{\alpha\left(  1-\alpha\right)  }{2}\delta^{2}\left[
\frac{2q_{\alpha}(\theta,\delta)}{\alpha\left(  1-\alpha\right)  \delta^{2}
}\right]  ^{2}\left(  \frac{1}{2}+\sum_{n=1}^{\infty}\frac{q_{\alpha}
(\theta,\delta)^{n}}{n+2}\right) \\
&  =\frac{2}{\alpha\left(  1-\alpha\right)  \delta^{2}}q_{\alpha}
(\theta,\delta)+\frac{\alpha\left(  1-\alpha\right)  }{2}\delta^{2}g_{\alpha
}(\theta,\delta),
\end{align}
where
\begin{equation}
g_{\alpha}(\theta,\delta):=\left[  \frac{2q_{\alpha}(\theta,\delta)}
{\alpha\left(  1-\alpha\right)  \delta^{2}}\right]  ^{2}\left(  \frac{1}
{2}+\sum_{n=1}^{\infty}\frac{q_{\alpha}(\theta,\delta)^{n}}{n+2}\right)  .
\end{equation}
For sufficiently small $\delta$, it follows from \eqref{eq:for-small-delta}
that
\[
\left\vert g_{\alpha}(\theta,\delta)\right\vert \leq\left(  \left\vert
\widehat{I}_{F}\right\vert +1\right)  ^{2}\sum_{n=0}^{\infty}\left(  \frac
{1}{2}\right)  ^{n}=2\left(  \left\vert I_{F}\right\vert +1\right)  ^{2}.
\]
Then we find that
\begin{equation}
\lim_{\delta\rightarrow0}\frac{2}{\alpha\delta^{2}}\widehat{D}_{\alpha}
(\rho_{\theta+\delta}\Vert\rho_{\theta})=\lim_{\delta\rightarrow0}
\frac{2q_{\alpha}(\theta,\delta)}{\alpha\left(  1-\alpha\right)  \delta^{2}},
\end{equation}
concluding the proof.
\end{proof}

\section{RLD Fisher information of quantum channels as a limit of geometric
R\'{e}nyi relative entropy}

\label{app:geo-renyi-limit-to-RLD-fish}

\begin{proposition}
Let $\{\mathcal{N}_{\theta}\}_{\theta}$ be a second-order differentiable
family of channels such that the support condition in
\eqref{eq:finiteness-condition-RLD-fish-ch}\ holds. Then for all $\alpha
\in(0,1)\cup(1,\infty)$, the RLD Fisher information of channels can be written
as
\begin{align}
\widehat{I}_{F}(\theta;\{\mathcal{N}_{\theta}\}_{\theta})  &  =\lim
_{\varepsilon\rightarrow0}\lim_{\delta\rightarrow0}\frac{2}{\alpha\left(
1-\alpha\right)  \delta^{2}}\left(  1-\widehat{Q}_{\alpha}(\mathcal{N}
_{\theta+\delta}^{\varepsilon}\Vert\mathcal{N}_{\theta}^{\varepsilon})\right)
\label{eq:RLD-to-geo-ren-linear-ch}\\
&  =\lim_{\varepsilon\rightarrow0}\lim_{\delta\rightarrow0}\frac{2}
{\alpha\delta^{2}}\widehat{D}_{\alpha}(\mathcal{N}_{\theta+\delta
}^{\varepsilon}\Vert\mathcal{N}_{\theta}^{\varepsilon}),
\label{eq:RLD-to-geo-ren-log-ch}
\end{align}
where $\mathcal{N}_{\theta}^{\varepsilon}(\rho):=(1-\varepsilon)\mathcal{N}
_{\theta}(\rho)+\varepsilon\operatorname{Tr}[\rho]\pi$. Additionally, we have
that
\begin{equation}
\widehat{I}_{F}(\theta;\{\mathcal{N}_{\theta}\}_{\theta})=\lim_{\varepsilon
\rightarrow0}\lim_{\delta\rightarrow0}\frac{2}{\delta^{2}}\widehat
{D}(\mathcal{N}_{\theta+\delta}^{\varepsilon}\Vert\mathcal{N}_{\theta
}^{\varepsilon}). \label{eq:RLD-Fish-limit-of-BS-ent-ch}
\end{equation}

\end{proposition}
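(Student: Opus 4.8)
The plan is to reduce the channel statement to the corresponding state-level statement already established in Proposition~\ref{prop:RLD-fish-limit-geo-renyi}, combined with the explicit Choi-operator formulas for the RLD Fisher information of channels (Proposition~\ref{prop:geo-fish-explicit-formula}) and for the geometric R\'enyi relative entropy of channels (Proposition~\ref{prop:explicit-formula-geo-renyi-ch}), together with the Belavkin--Staszewski relative entropy of channels recalled just after Proposition~\ref{prop:explicit-formula-geo-renyi-ch}. The key observation is that all of these channel quantities, under the stated support condition, reduce to the infinity norm (equivalently, optimization over input states via a pure-state maximally-entangled-type argument) of a partial trace over $B$ of a fixed operator built from the Choi operator. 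Since the family $\{\mathcal{N}_\theta\}_\theta$ is second-order differentiable, so is the Choi-operator family $\{\Gamma_{RB}^{\mathcal{N}_\theta}\}_\theta$, and the noisy family $\{\Gamma_{RB}^{\mathcal{N}_\theta^\varepsilon}\}_\theta = \{(1-\varepsilon)\Gamma_{RB}^{\mathcal{N}_\theta} + \varepsilon I_R \otimes \pi_B\}_\theta$ becomes full rank.

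First I would use the post-selected teleportation identity \eqref{eq:PS-TP-identity} together with the transformer equality from Lemma~\ref{lem:transformer-ineq-basic} (exactly as in the proof of Proposition~\ref{prop:geo-fish-explicit-formula} and of Proposition~\ref{prop:chain-rule-geo-renyi-ch}) to write, for $\varepsilon,\delta>0$,
\begin{equation}
\widehat{Q}_\alpha(\mathcal{N}_{\theta+\delta}^\varepsilon \Vert \mathcal{N}_\theta^\varepsilon) = \left\Vert \operatorname{Tr}_B\!\left[ G_\alpha\!\left(\Gamma_{RB}^{\mathcal{N}_\theta^\varepsilon}, \Gamma_{RB}^{\mathcal{N}_{\theta+\delta}^\varepsilon}\right)\right]\right\Vert_\infty^{\operatorname{sgn}(\alpha-1)}
\end{equation}
(with the minimum-eigenvalue version for $\alpha\in(0,1)$), so that $\widehat{D}_\alpha(\mathcal{N}_{\theta+\delta}^\varepsilon\Vert\mathcal{N}_\theta^\varepsilon) = \frac{1}{\alpha-1}\ln\|\operatorname{Tr}_B[G_\alpha(\Gamma_{RB}^{\mathcal{N}_\theta^\varepsilon},\Gamma_{RB}^{\mathcal{N}_{\theta+\delta}^\varepsilon})]\|_\infty$ for $\alpha>1$ and similarly with $\lambda_{\min}$ for $\alpha<1$. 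Then I would carry out the second-order Taylor expansion in $\delta$ of $G_\alpha(\Gamma_{RB}^{\mathcal{N}_\theta^\varepsilon},\Gamma_{RB}^{\mathcal{N}_{\theta+\delta}^\varepsilon})$ exactly as in the state-level computation \eqref{eq:geometric-fidelity-fisher-explicit-states}ff inside the proof of Proposition~\ref{prop:RLD-fish-limit-geo-renyi}, with $d\Gamma_\theta^\varepsilon := \Gamma_{RB}^{\mathcal{N}_{\theta+\delta}^\varepsilon}-\Gamma_{RB}^{\mathcal{N}_\theta^\varepsilon} = (1-\varepsilon)\, d\Gamma_\theta$ where $d\Gamma_\theta := \Gamma_{RB}^{\mathcal{N}_{\theta+\delta}}-\Gamma_{RB}^{\mathcal{N}_\theta}$; because the full-rank operator $\Gamma_{RB}^{\mathcal{N}_\theta^\varepsilon}$ sits inside the trace, this gives $\operatorname{Tr}_B[G_\alpha] = I_R + \tfrac12\alpha(\alpha-1)\operatorname{Tr}_B[(d\Gamma_\theta^\varepsilon)(\Gamma_{RB}^{\mathcal{N}_\theta^\varepsilon})^{-1}(d\Gamma_\theta^\varepsilon)] + O((d\Gamma_\theta)^3)$, using $\operatorname{Tr}_B[d\Gamma_\theta^\varepsilon]=0$ (trace preservation). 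Dividing by $\delta^2$, the operator inside $\operatorname{Tr}_B[\cdot]$ converges to $I_R$ plus a scaled copy of $\operatorname{Tr}_B[(\partial_\theta\Gamma_{RB}^{\mathcal{N}_\theta^\varepsilon})(\Gamma_{RB}^{\mathcal{N}_\theta^\varepsilon})^{-1}(\partial_\theta\Gamma_{RB}^{\mathcal{N}_\theta^\varepsilon})]$, which is positive semi-definite; hence both $\|\cdot\|_\infty-1$ (for $\alpha>1$) and $\lambda_{\min}(\cdot)-1$ (for $\alpha<1$) scale correctly, and after taking $\delta\to0$ followed by $\varepsilon\to 0$ and invoking $\alpha(1-\alpha)$ sign bookkeeping together with Lemma~\ref{lem:additivity-op-norm}-style stability of $\|\cdot\|_\infty$ under the $\varepsilon\to0$ limit one recovers $\|\operatorname{Tr}_B[(\partial_\theta\Gamma_{RB}^{\mathcal{N}_\theta})(\Gamma_{RB}^{\mathcal{N}_\theta})^{-1}(\partial_\theta\Gamma_{RB}^{\mathcal{N}_\theta})]\|_\infty = \widehat{I}_F(\theta;\{\mathcal{N}_\theta\}_\theta)$ by Proposition~\ref{prop:geo-fish-explicit-formula}. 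The passage between \eqref{eq:RLD-to-geo-ren-linear-ch} and \eqref{eq:RLD-to-geo-ren-log-ch} is the logarithm-linearization argument, identical to Lemma~\ref{lem:different-exps-RLD-fisher-info}. For \eqref{eq:RLD-Fish-limit-of-BS-ent-ch}, I would instead expand $\operatorname{Tr}_B[(\Gamma_{RB}^{\mathcal{N}_{\theta+\delta}^\varepsilon})^{1/2}\log((\Gamma_{RB}^{\mathcal{N}_{\theta+\delta}^\varepsilon})^{1/2}(\Gamma_{RB}^{\mathcal{N}_\theta^\varepsilon})^{-1}(\Gamma_{RB}^{\mathcal{N}_{\theta+\delta}^\varepsilon})^{1/2})(\Gamma_{RB}^{\mathcal{N}_{\theta+\delta}^\varepsilon})^{1/2}]$ using $\eta(1+x)=x+x^2/2+O(x^3)$ exactly as in the derivation of \eqref{eq:RLD-from-BS-rel-ent}, again absorbing the infinity norm at the end.

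The main obstacle I anticipate is justifying the interchange of the two limits and the infinity-norm (or minimum-eigenvalue) operation with the $\delta\to0$ expansion, for $\alpha\in(0,1)$ in particular, where the noise parameter $\varepsilon$ is genuinely needed to keep $\Gamma_{RB}^{\mathcal{N}_\theta}$ full rank and where Proposition~\ref{prop:explicit-formula-geo-renyi-ch} only gives a limiting formula. The clean way around this is to note that for fixed $\varepsilon>0$ everything is analytic in $\delta$ near $0$ (full-rank operators, $G_\alpha$ smooth), so the inner $\delta\to0$ limit is an honest second-order Taylor remainder estimate uniform on compact $\varepsilon$-intervals, and the map $X\mapsto\|\operatorname{Tr}_B X\|_\infty$ (resp. $\lambda_{\min}$) is Lipschitz, so it commutes with the limit; the outer $\varepsilon\to0$ limit is then handled by the monotone/continuity arguments already invoked in Proposition~\ref{prop:geo-fish-explicit-formula} and Lemma~\ref{lem:limit-exchange-geom-renyi-a-0-to1}, using the support condition \eqref{eq:finiteness-condition-RLD-fish-ch} to guarantee finiteness of the limit. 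Second-order differentiability of $\{\mathcal{N}_\theta\}_\theta$ is exactly what makes the cubic remainder term vanish after division by $\delta^2$, as in Lemmas~\ref{lem:different-exps-fisher-info} and \ref{lem:different-exps-RLD-fisher-info}.
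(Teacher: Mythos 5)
Your proposal is correct and follows essentially the same route as the paper's proof: invoke the Choi-operator formula for $\widehat{Q}_{\alpha}$ of channels (via the $\lambda_{\min}$/$\left\Vert\cdot\right\Vert_{\infty}$ dichotomy for $\alpha\in(0,1)$ versus $\alpha\in(1,\infty)$), Taylor-expand $G_{\alpha}$ to second order in $\delta$ using $\operatorname{Tr}_{B}[d\Gamma_{RB}^{\mathcal{N}_{\theta}}]=0$, control the cubic remainder with the second-order differentiability assumption, and pass the eigenvalue/norm operation through the limit before identifying the result with $\left\Vert\operatorname{Tr}_{B}[(\partial_{\theta}\Gamma_{RB}^{\mathcal{N}_{\theta}})(\Gamma_{RB}^{\mathcal{N}_{\theta}})^{-1}(\partial_{\theta}\Gamma_{RB}^{\mathcal{N}_{\theta}})]\right\Vert_{\infty}$. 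The remaining steps (logarithm linearization for the second equality, and the $\eta(1+x)$ expansion for the Belavkin--Staszewski case) also match the paper's argument.
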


\begin{proof}
We focus on the case when $\alpha\in(0,1)$ and for full-rank channels, due to
the order of limits given above and the fact that $\mathcal{N}_{\theta
}^{\varepsilon}(\rho)$ is a full-rank channel for all $\varepsilon\in(0,1)$.
Let $\Gamma_{RB}^{\mathcal{N}_{\theta}}$ denote the Choi operator of the
channel $\mathcal{N}_{\theta}$, and let $\Gamma_{RB}^{\mathcal{N}
_{\theta+\delta}}$ denote the Choi operator of the channel $\mathcal{N}
_{\theta+\delta}$. Let us define
\begin{equation}
d\Gamma_{RB}^{\mathcal{N}_{\theta}}:=\Gamma_{RB}^{\mathcal{N}_{\theta+\delta}
}-\Gamma_{RB}^{\mathcal{N}_{\theta}},
\end{equation}
and observe that
\begin{equation}
\operatorname{Tr}_{B}[d\Gamma_{RB}^{\mathcal{N}_{\theta}}]=0,
\label{eq:PT-differential-equal-zero}
\end{equation}
because $\operatorname{Tr}_{B}[\Gamma_{RB}^{\mathcal{N}_{\theta}
}]=\operatorname{Tr}_{B}[\Gamma_{RB}^{\mathcal{N}_{\theta+\delta}}]=I_{R}
$.\ Then by plugging into \eqref{eq:geo-ren-ch-exp-form-all-cases}, we find
that
\begin{multline}
\widehat{Q}_{\alpha}(\mathcal{N}_{\theta+\delta}\Vert\mathcal{N}_{\theta
})=\label{eq:geometric-fidelity-fisher-explicit}\\
\lambda_{\min}\left(  \operatorname{Tr}_{B}\left[  \left(  \Gamma
_{RB}^{\mathcal{N}_{\theta}}\right)  ^{1/2}\left(  \left(  \Gamma
_{RB}^{\mathcal{N}_{\theta}}\right)  ^{-1/2}\Gamma_{RB}^{\mathcal{N}
_{\theta+\delta}}\left(  \Gamma_{RB}^{\mathcal{N}_{\theta}}\right)
^{-1/2}\right)  ^{\alpha}\left(  \Gamma_{RB}^{\mathcal{N}_{\theta}}\right)
^{1/2}\right]  \right)  .
\end{multline}
Now, by using the expansion
\begin{equation}
\left(  1+x\right)  ^{\alpha}=1+\alpha x+\frac{\alpha\left(  \alpha-1\right)
}{2}x^{2}+O(x^{3}),
\end{equation}
we evaluate the innermost expression of
\eqref{eq:geometric-fidelity-fisher-explicit}:
\begin{align}
&  \left(  \left(  \Gamma_{RB}^{\mathcal{N}_{\theta}}\right)  ^{-1/2}
\Gamma_{RB}^{\mathcal{N}_{\theta+\delta}}\left(  \Gamma_{RB}^{\mathcal{N}
_{\theta}}\right)  ^{-1/2}\right)  ^{\alpha}\nonumber\\
&  =\left(  \left(  \Gamma_{RB}^{\mathcal{N}_{\theta}}\right)  ^{-1/2}\left(
\Gamma_{RB}^{\mathcal{N}_{\theta}}+d\Gamma_{RB}^{\mathcal{N}_{\theta}}\right)
\left(  \Gamma_{RB}^{\mathcal{N}_{\theta}}\right)  ^{-1/2}\right)  ^{\alpha}\\
&  =\left(  I_{RB}+\left(  \Gamma_{RB}^{\mathcal{N}_{\theta}}\right)
^{-1/2}d\Gamma_{RB}^{\mathcal{N}_{\theta}}\left(  \Gamma_{RB}^{\mathcal{N}
_{\theta}}\right)  ^{-1/2}\right)  ^{\alpha}\\
&  =I_{RB}+\alpha\left(  \Gamma_{RB}^{\mathcal{N}_{\theta}}\right)
^{-1/2}d\Gamma_{RB}^{\mathcal{N}_{\theta}}\left(  \Gamma_{RB}^{\mathcal{N}
_{\theta}}\right)  ^{-1/2}\nonumber\\
&  \qquad+\frac{\alpha\left(  \alpha-1\right)  }{2}\left(  \left(  \Gamma
_{RB}^{\mathcal{N}_{\theta}}\right)  ^{-1/2}d\Gamma_{RB}^{\mathcal{N}_{\theta
}}\left(  \Gamma_{RB}^{\mathcal{N}_{\theta}}\right)  ^{-1/2}\right)
^{2}+O\!\left(  \left(  d\Gamma_{RB}^{\mathcal{N}_{\theta}}\right)
^{3}\right)  .
\end{align}
Sandwiching the last expression by $\left(  \Gamma_{RB}^{\mathcal{N}_{\theta}
}\right)  ^{1/2}$ on both sides, we arrive at
\begin{multline}
\left(  \Gamma_{RB}^{\mathcal{N}_{\theta}}\right)  ^{1/2}\left(  \left(
\Gamma_{RB}^{\mathcal{N}_{\theta}}\right)  ^{-1/2}\Gamma_{RB}^{\mathcal{N}
_{\theta+\delta}}\left(  \Gamma_{RB}^{\mathcal{N}_{\theta}}\right)
^{-1/2}\right)  ^{\alpha}\left(  \Gamma_{RB}^{\mathcal{N}_{\theta}}\right)
^{1/2}\\
=\Gamma_{RB}^{\mathcal{N}_{\theta}}+\alpha d\Gamma_{RB}^{\mathcal{N}_{\theta}
}+\frac{\alpha\left(  \alpha-1\right)  }{2}d\Gamma_{RB}^{\mathcal{N}_{\theta}
}\left(  \Gamma_{RB}^{\mathcal{N}_{\theta}}\right)  ^{-1}d\Gamma
_{RB}^{\mathcal{N}_{\theta}}+O\!\left(  \left(  d\Gamma_{RB}^{\mathcal{N}
_{\theta}}\right)  ^{3}\right) .
\end{multline}
Then it follows that the partial trace $\operatorname{Tr}_{B}$ is given by
\begin{multline}
\operatorname{Tr}_{B}\!\left[  \Gamma_{RB}^{\mathcal{N}_{\theta}}+\frac{1}
{2}d\Gamma_{RB}^{\mathcal{N}_{\theta}}-\frac{1}{8}d\Gamma_{RB}^{\mathcal{N}
_{\theta}}\left(  \Gamma_{RB}^{\mathcal{N}_{\theta}}\right)  ^{-1}d\Gamma
_{RB}^{\mathcal{N}_{\theta}}+O\!\left(  \left(  d\Gamma_{RB}^{\mathcal{N}
_{\theta}}\right)  ^{3}\right)  \right] \\
=I_{R}+\frac{\alpha\left(  \alpha-1\right)  }{2}\operatorname{Tr}_{B}\!\left[
d\Gamma_{RB}^{\mathcal{N}_{\theta}}\left(  \Gamma_{RB}^{\mathcal{N}_{\theta}
}\right)  ^{-1}d\Gamma_{RB}^{\mathcal{N}_{\theta}}\right]  +O\!\left(
\operatorname{Tr}_{B}\!\left[  \left(  d\Gamma_{RB}^{\mathcal{N}_{\theta}
}\right)  ^{3}\right]  \right)  ,
\end{multline}
where we used \eqref{eq:PT-differential-equal-zero}. Observe that all higher
order terms correspond to a positive semi-definite operator (each term being
$\left(  \Gamma_{RB}^{\mathcal{N}_{\theta}}\right)  ^{-1}$ sandwiched by other
operators). Supposing that $\delta$ is sufficiently small so that
\begin{equation}
I_{R}+\frac{\alpha\left(  \alpha-1\right)  }{2}\operatorname{Tr}_{B}\!\left[
d\Gamma_{RB}^{\mathcal{N}_{\theta}}\left(  \Gamma_{RB}^{\mathcal{N}_{\theta}
}\right)  ^{-1}d\Gamma_{RB}^{\mathcal{N}_{\theta}}\right]
\end{equation}
is a positive definite operator, we then have the bounds
\begin{align}
&  \lambda_{\min}\!\left(  I_{R}+\frac{\alpha\left(  \alpha-1\right)  }
{2}\operatorname{Tr}_{B}\!\left[  d\Gamma_{RB}^{\mathcal{N}_{\theta}}\left(
\Gamma_{RB}^{\mathcal{N}_{\theta}}\right)  ^{-1}d\Gamma_{RB}^{\mathcal{N}
_{\theta}}\right]  \right) \nonumber\\
&  \leq\lambda_{\min}\!\left(  I_{R}+\frac{\alpha\left(  \alpha-1\right)  }
{2}\operatorname{Tr}_{B}\left[  d\Gamma_{RB}^{\mathcal{N}_{\theta}}\left(
\Gamma_{RB}^{\mathcal{N}_{\theta}}\right)  ^{-1}d\Gamma_{RB}^{\mathcal{N}
_{\theta}}\right]  +O\!\left(  \operatorname{Tr}_{B}\!\left[  \left(
d\Gamma_{RB}^{\mathcal{N}_{\theta}}\right)  ^{3}\right]  \right)  \right) \\
&  \leq\lambda_{\min}\!\left(  I_{R}+\frac{\alpha\left(  \alpha-1\right)  }
{2}\operatorname{Tr}_{B}\left[  d\Gamma_{RB}^{\mathcal{N}_{\theta}}\left(
\Gamma_{RB}^{\mathcal{N}_{\theta}}\right)  ^{-1}d\Gamma_{RB}^{\mathcal{N}
_{\theta}}\right]  \right)  +\lambda_{\max}\!\left(  O\!\left(
\operatorname{Tr}_{B}\!\left[  \left(  d\Gamma_{RB}^{\mathcal{N}_{\theta}
}\right)  ^{3}\right]  \right)  \right) \\
&  =\lambda_{\min}\!\left(  I_{R}+\frac{\alpha\left(  \alpha-1\right)  }
{2}\operatorname{Tr}_{B}\left[  d\Gamma_{RB}^{\mathcal{N}_{\theta}}\left(
\Gamma_{RB}^{\mathcal{N}_{\theta}}\right)  ^{-1}d\Gamma_{RB}^{\mathcal{N}
_{\theta}}\right]  \right)  +\left\Vert \left(  O\!\left(  \operatorname{Tr}
_{B}\!\left[  \left(  d\Gamma_{RB}^{\mathcal{N}_{\theta}}\right)  ^{3}\right]
\right)  \right)  \right\Vert _{\infty}\\
&  \leq\lambda_{\min}\!\left(  I_{R}+\frac{\alpha\left(  \alpha-1\right)  }
{2}\operatorname{Tr}_{B}\left[  d\Gamma_{RB}^{\mathcal{N}_{\theta}}\left(
\Gamma_{RB}^{\mathcal{N}_{\theta}}\right)  ^{-1}d\Gamma_{RB}^{\mathcal{N}
_{\theta}}\right]  \right)  +O\!\left(  \left\Vert \operatorname{Tr}
_{B}\!\left[  \left(  d\Gamma_{RB}^{\mathcal{N}_{\theta}}\right)  ^{3}\right]
\right\Vert _{\infty}\right) \\
&  =\lambda_{\min}\!\left(  I_{R}+\frac{\alpha\left(  \alpha-1\right)  }
{2}\operatorname{Tr}_{B}\left[  d\Gamma_{RB}^{\mathcal{N}_{\theta}}\left(
\Gamma_{RB}^{\mathcal{N}_{\theta}}\right)  ^{-1}d\Gamma_{RB}^{\mathcal{N}
_{\theta}}\right]  \right)  +O\!\left(  \left\Vert d\Gamma_{RB}^{\mathcal{N}
_{\theta}}\right\Vert _{\infty}^{3}\right)  .
\label{eq:triangle-ineq-extra-term}
\end{align}
The first two inequalities are a consequence of the following inequalities
that hold for positive definite operators $X$ and $Y$:
\begin{equation}
\lambda_{\min}(X)\leq\lambda_{\min}(X+Y)\leq\lambda_{\min}(X)+\lambda_{\max
}(Y).
\end{equation}
In the second-to-last last line we employed the submultiplicavity of the
infinity norm, and in the last line the bound
\begin{equation}
\left\Vert \operatorname{Tr}_{B}[X_{AB}]\right\Vert _{\infty}\leq
d_{B}\left\Vert X_{AB}\right\Vert _{\infty},
\end{equation}
where $d_{B}$ is the dimension of the channel output system $B$, as well as
the fact that $d_{B}<\infty$ is a constant. For a second-order differentiable
family, the following limit holds
\begin{equation}
\lim_{\delta\rightarrow0}\frac{1}{\delta^{2}}O\!\left(  \left\Vert
d\Gamma_{RB}^{\mathcal{N}_{\theta}}\right\Vert _{\infty}^{3}\right)
=\lim_{\delta\rightarrow0}\delta\ O\!\left(  \left[  \left\Vert d\Gamma
_{RB}^{\mathcal{N}_{\theta}}/\delta\right\Vert _{\infty}\right]  ^{3}\right)
=0. \label{eq:third-order-assumption}
\end{equation}
This means that we can then focus on the term
\begin{equation}
\lambda_{\min}\!\left(  I_{R}+\frac{\alpha\left(  \alpha-1\right)  }
{2}\operatorname{Tr}_{B}\!\left[  d\Gamma_{RB}^{\mathcal{N}_{\theta}}\left(
\Gamma_{RB}^{\mathcal{N}_{\theta}}\right)  ^{-1}d\Gamma_{RB}^{\mathcal{N}
_{\theta}}\right]  \right)  ,
\end{equation}
because the last term in \eqref{eq:triangle-ineq-extra-term}\ will vanish when
we divide by $\delta^{2}$ and take the final limit as $\delta\rightarrow0$.
For any positive semi-definite operator $A$ with sufficiently small
eigenvalues all strictly less than one, it follows that
\begin{equation}
\lambda_{\min}\left(  I-A\right)  =1-\lambda_{\max}(A)=1-\left\Vert
A\right\Vert _{\infty}.
\end{equation}
We can apply this reasoning to the operator
\begin{equation}
-\frac{\alpha\left(  \alpha-1\right)  }{2}\operatorname{Tr}_{B}\!\left[
d\Gamma_{RB}^{\mathcal{N}_{\theta}}\left(  \Gamma_{RB}^{\mathcal{N}_{\theta}
}\right)  ^{-1}d\Gamma_{RB}^{\mathcal{N}_{\theta}}\right]
\end{equation}
because it is positive semi-definite and its eigenvalues can be made
arbitrarily close to zero for $\delta$ small enough. Then by employing the
expression in \eqref{eq:RLD-to-geo-ren-linear-ch}, we find that
\begin{align}
&  \widehat{I}_{F}(\theta;\{\mathcal{N}_{\theta}\}_{\theta})\nonumber\\
&  =\lim_{\delta\rightarrow0}\frac{2}{\alpha\left(  \alpha-1\right)
\delta^{2}}\left(  \widehat{Q}_{\alpha}(\mathcal{N}_{\theta+\delta}
\Vert\mathcal{N}_{\theta})-1\right) \\
&  =\lim_{\delta\rightarrow0}\frac{2}{\alpha\left(  \alpha-1\right)
\delta^{2}}\left(  \left[  1+\frac{\alpha\left(  \alpha-1\right)  }
{2}\left\Vert \operatorname{Tr}_{B}\!\left[  d\Gamma_{RB}^{\mathcal{N}
_{\theta}}\left(  \Gamma_{RB}^{\mathcal{N}_{\theta}}\right)  ^{-1}d\Gamma
_{RB}^{\mathcal{N}_{\theta}}\right]  \right\Vert _{\infty}\right]  -1 \right) \\
&  =\lim_{\delta\rightarrow0}\frac{1}{\delta^{2}}\left\Vert \operatorname{Tr}
_{B}\!\left[  d\Gamma_{RB}^{\mathcal{N}_{\theta}}\left(  \Gamma_{RB}
^{\mathcal{N}_{\theta}}\right)  ^{-1}d\Gamma_{RB}^{\mathcal{N}_{\theta}
}\right]  \right\Vert _{\infty}\\
&  =\lim_{\delta\rightarrow0}\left\Vert \operatorname{Tr}_{B}\!\left[
\frac{d\Gamma_{RB}^{\mathcal{N}_{\theta}}}{\delta}\left(  \Gamma
_{RB}^{\mathcal{N}_{\theta}}\right)  ^{-1}\frac{d\Gamma_{RB}^{\mathcal{N}
_{\theta}}}{\delta}\right]  \right\Vert _{\infty}\\
&  =\left\Vert \lim_{\delta\rightarrow0}\operatorname{Tr}_{B}\!\left[
\frac{d\Gamma_{RB}^{\mathcal{N}_{\theta}}}{\delta}\left(  \Gamma
_{RB}^{\mathcal{N}_{\theta}}\right)  ^{-1}\frac{d\Gamma_{RB}^{\mathcal{N}
_{\theta}}}{\delta}\right]  \right\Vert _{\infty}\\
&  =\left\Vert \operatorname{Tr}_{B}\!\left[  \left(  \partial_{\theta}
\Gamma_{RB}^{\mathcal{N}_{\theta}}\right)  \left(  \Gamma_{RB}^{\mathcal{N}
_{\theta}}\right)  ^{-1}\left(  \partial_{\theta}\Gamma_{RB}^{\mathcal{N}
_{\theta}}\right)  \right]  \right\Vert _{\infty}.
\label{eq:final-expression-GFI}
\end{align}
The third-to-last line follows because $c\lambda_{\max}(A)=\lambda_{\max}(cA)$
for a positive semi-definite operator $A$ and scaling parameter $c>0$. The
second-to-last line follows because the maximum and limit commute. The last
line follows by evaluating the limit.

The proof for $\alpha \in (1,\infty)$ is similar, except that we work with $\lambda_{\max}$ instead of $\lambda_{\min}$.

The proof that \eqref{eq:RLD-to-geo-ren-linear-ch} is equal to
\eqref{eq:RLD-to-geo-ren-log-ch} is similar to the proof of
Lemma~\ref{lem:different-exps-RLD-fisher-info}.

The proof of \eqref{eq:RLD-Fish-limit-of-BS-ent-ch} is similar to the proof of
\eqref{eq:RLD-from-BS-rel-ent}. Consider that
\begin{equation}
\widehat{D}(\mathcal{N}_{\theta+\delta}\Vert\mathcal{N}_{\theta})=\left\Vert
\operatorname{Tr}_{B}\!\left[  \left(  \Gamma^{\mathcal{N}_{\theta+\delta}
}\right)  ^{\frac{1}{2}}\ln\left(  \left(  \Gamma^{\mathcal{N}_{\theta+\delta
}}\right)  ^{\frac{1}{2}}(\Gamma^{\mathcal{N}_{\theta}})^{-1}\left(
\Gamma^{\mathcal{N}_{\theta+\delta}}\right)  ^{\frac{1}{2}}\right)  \left(
\Gamma^{\mathcal{N}_{\theta+\delta}}\right)  ^{\frac{1}{2}}\right]
\right\Vert _{\infty}.
\end{equation}
So we focus on the operator in the middle. It suffices to consider a full-rank
channel family and consider for $\eta(x)=x\ln x$ that
\begin{align}
&  \left(  \Gamma^{\mathcal{N}_{\theta+\delta}}\right)  ^{\frac{1}{2}}
\ln\left(  \left(  \Gamma^{\mathcal{N}_{\theta+\delta}}\right)  ^{\frac{1}{2}
}(\Gamma^{\mathcal{N}_{\theta}})^{-1}\left(  \Gamma^{\mathcal{N}
_{\theta+\delta}}\right)  ^{\frac{1}{2}}\right)  \left(  \Gamma^{\mathcal{N}
_{\theta+\delta}}\right)  ^{\frac{1}{2}}\nonumber\\
&  =(\Gamma^{\mathcal{N}_{\theta}})^{\frac{1}{2}}(\Gamma^{\mathcal{N}_{\theta
}})^{-\frac{1}{2}}\left(  \Gamma^{\mathcal{N}_{\theta+\delta}}\right)
^{\frac{1}{2}}\ln\left(  \left(  \Gamma^{\mathcal{N}_{\theta+\delta}}\right)
^{\frac{1}{2}}(\Gamma^{\mathcal{N}_{\theta}})^{-\frac{1}{2}}(\Gamma
^{\mathcal{N}_{\theta}})^{-\frac{1}{2}}\left(  \Gamma^{\mathcal{N}
_{\theta+\delta}}\right)  ^{\frac{1}{2}}\right)  \left(  \Gamma^{\mathcal{N}
_{\theta+\delta}}\right)  ^{\frac{1}{2}}\\
&  =(\Gamma^{\mathcal{N}_{\theta}})^{\frac{1}{2}}\ln\left(  (\Gamma
^{\mathcal{N}_{\theta}})^{-\frac{1}{2}}\left(  \Gamma^{\mathcal{N}
_{\theta+\delta}}\right)  ^{\frac{1}{2}}\left(  \Gamma^{\mathcal{N}
_{\theta+\delta}}\right)  ^{\frac{1}{2}}(\Gamma^{\mathcal{N}_{\theta}
})^{-\frac{1}{2}}\right)  (\Gamma^{\mathcal{N}_{\theta}})^{-\frac{1}{2}
}\left(  \Gamma^{\mathcal{N}_{\theta+\delta}}\right)  ^{\frac{1}{2}}\left(
\Gamma^{\mathcal{N}_{\theta+\delta}}\right)  ^{\frac{1}{2}}\\
&  =(\Gamma^{\mathcal{N}_{\theta}})^{\frac{1}{2}}\ln\left(  (\Gamma
^{\mathcal{N}_{\theta}})^{-\frac{1}{2}}\left(  \Gamma^{\mathcal{N}
_{\theta+\delta}}\right)  (\Gamma^{\mathcal{N}_{\theta}})^{-\frac{1}{2}
}\right)  \left(  (\Gamma^{\mathcal{N}_{\theta}})^{-\frac{1}{2}}\left(
\Gamma^{\mathcal{N}_{\theta+\delta}}\right)  (\Gamma^{\mathcal{N}_{\theta}
})^{-\frac{1}{2}}\right)  (\Gamma^{\mathcal{N}_{\theta}})^{\frac{1}{2}}\\
&  =(\Gamma^{\mathcal{N}_{\theta}})^{\frac{1}{2}}\eta\left(  (\Gamma
^{\mathcal{N}_{\theta}})^{-\frac{1}{2}}\left(  \Gamma^{\mathcal{N}
_{\theta+\delta}}\right)  (\Gamma^{\mathcal{N}_{\theta}})^{-\frac{1}{2}
}\right)  (\Gamma^{\mathcal{N}_{\theta}})^{\frac{1}{2}}\\
&  =(\Gamma^{\mathcal{N}_{\theta}})^{\frac{1}{2}}\eta\left(  (\Gamma
^{\mathcal{N}_{\theta}})^{-\frac{1}{2}}\left(  \Gamma^{\mathcal{N}_{\theta}
}+d\Gamma^{\mathcal{N}_{\theta}}\right)  (\Gamma^{\mathcal{N}_{\theta}
})^{-\frac{1}{2}}\right)  (\Gamma^{\mathcal{N}_{\theta}})^{\frac{1}{2}}\\
&  =(\Gamma^{\mathcal{N}_{\theta}})^{\frac{1}{2}}\eta\left(  I+(\Gamma
^{\mathcal{N}_{\theta}})^{-\frac{1}{2}}d\Gamma^{\mathcal{N}_{\theta}}
(\Gamma^{\mathcal{N}_{\theta}})^{-\frac{1}{2}}\right)  (\Gamma^{\mathcal{N}
_{\theta}})^{\frac{1}{2}}\\
&  =(\Gamma^{\mathcal{N}_{\theta}})^{\frac{1}{2}}\left(  (\Gamma
^{\mathcal{N}_{\theta}})^{-\frac{1}{2}}d\Gamma^{\mathcal{N}_{\theta}}
(\Gamma^{\mathcal{N}_{\theta}})^{-\frac{1}{2}}+\left[  (\Gamma^{\mathcal{N}
_{\theta}})^{-\frac{1}{2}}d\Gamma^{\mathcal{N}_{\theta}}(\Gamma^{\mathcal{N}
_{\theta}})^{-\frac{1}{2}}\right]  ^{2}/2+O((d\Gamma^{\mathcal{N}_{\theta}
})^{3})\right)  (\Gamma^{\mathcal{N}_{\theta}})^{\frac{1}{2}}\\
&  =d\Gamma^{\mathcal{N}_{\theta}}+d\Gamma^{\mathcal{N}_{\theta}}
(\Gamma^{\mathcal{N}_{\theta}})^{-1}d\Gamma^{\mathcal{N}_{\theta}
}/2+O((d\Gamma^{\mathcal{N}_{\theta}})^{3}).
\end{align}
The second equality follows from Lemma~\ref{lem:sing-val-lemma-pseudo-commute}
, with $f=\ln$ and $L=\left(  \Gamma^{\mathcal{N}_{\theta+\delta}}\right)
^{\frac{1}{2}}(\Gamma^{\mathcal{N}_{\theta}})^{-\frac{1}{2}}$. The
second-to-last equality follows because $\eta(1+x)=x+x^{2}/2+O(x^{3})$. Now
evaluating the partial trace over $B$, we find that
\begin{multline}
\operatorname{Tr}_{B}\left[  d\Gamma^{\mathcal{N}_{\theta}}+d\Gamma
^{\mathcal{N}_{\theta}}(\Gamma^{\mathcal{N}_{\theta}})^{-1}d\Gamma
^{\mathcal{N}_{\theta}}/2+O((d\Gamma^{\mathcal{N}_{\theta}})^{3})\right] \\
=\frac{1}{2}\operatorname{Tr}_{B}\left[  d\Gamma^{\mathcal{N}_{\theta}}
(\Gamma^{\mathcal{N}_{\theta}})^{-1}d\Gamma^{\mathcal{N}_{\theta}}\right]
+O((d\Gamma^{\mathcal{N}_{\theta}})^{3}),
\end{multline}
which follows because $\operatorname{Tr}_{B}\left[  d\Gamma^{\mathcal{N}
_{\theta}}\right]  =0$. Then finally
\begin{align}
&  \lim_{\delta\rightarrow0}\frac{2}{\delta^{2}}\widehat{D}(\mathcal{N}
_{\theta+\delta}\Vert\mathcal{N}_{\theta})\nonumber\\
&  =\lim_{\delta\rightarrow0}\frac{2}{\delta^{2}}\left\Vert \frac{1}
{2}\operatorname{Tr}_{B}\!\left[  d\Gamma^{\mathcal{N}_{\theta}}
(\Gamma^{\mathcal{N}_{\theta}})^{-1}d\Gamma^{\mathcal{N}_{\theta}}\right]
+O((d\Gamma^{\mathcal{N}_{\theta}})^{3})\right\Vert _{\infty}\\
&  =\lim_{\delta\rightarrow0}\left\Vert \operatorname{Tr}_{B}\!\left[
\frac{d\Gamma^{\mathcal{N}_{\theta}}}{\delta}(\Gamma^{\mathcal{N}_{\theta}
})^{-1}\frac{d\Gamma^{\mathcal{N}_{\theta}}}{\delta}\right]  +O(\delta
(d\Gamma^{\mathcal{N}_{\theta}}/\delta)^{3})\right\Vert _{\infty}\\
&  =\left\Vert \operatorname{Tr}_{B}\!\left[  \left(  \partial_{\theta}
\Gamma_{RB}^{\mathcal{N}_{\theta}}\right)  \left(  \Gamma_{RB}^{\mathcal{N}
_{\theta}}\right)  ^{-1}\left(  \partial_{\theta}\Gamma_{RB}^{\mathcal{N}
_{\theta}}\right)  \right]  \right\Vert _{\infty}.
\end{align}
This concludes the proof.
\end{proof}

\section{Semi-definite program for the root fidelity of quantum
channels\label{app:SDP-root-fid-ch}}

\begin{proof}
[Proof of Proposition~\ref{prop:SDP-ch-fid}]For a pure bipartite state
$\psi_{RA}$, we use the fact that
\begin{equation}
\psi_{RA}=X_{R}\Gamma_{RA}X_{R}^{\dag},
\end{equation}
where $\operatorname{Tr}[X_{R}^{\dag}X_{R}]=1$ to see that
\begin{equation}
\mathcal{N}_{A\rightarrow B}(\psi_{RA})=X_{R}\Gamma_{RB}^{\mathcal{N}}
X_{R}^{\dag},\quad\mathcal{M}_{A\rightarrow B}(\psi_{RA})=X_{R}\Gamma
_{RB}^{\mathcal{M}}X_{R}^{\dag},
\end{equation}
and then plug in to \eqref{eq:dual-root-fid-states-SDP}\ to get that
\begin{equation}
\sqrt{F}(\mathcal{N}_{A\rightarrow B},\mathcal{M}_{A\rightarrow B})=\frac
{1}{2}\inf_{W_{RB},Z_{RB}}\operatorname{Tr}[X_{R}\Gamma_{RB}^{\mathcal{N}
}X_{R}^{\dag}W_{RB}]+\operatorname{Tr}[X_{R}\Gamma_{RB}^{\mathcal{M}}
X_{R}^{\dag}Z_{RB}]
\end{equation}
subject to
\begin{equation}
\begin{bmatrix}
W_{RB} & I_{RB}\\
I_{RB} & Z_{RB}
\end{bmatrix}
\geq0. \label{eq:ineq-const-dual-fid-states-SDP}
\end{equation}
Consider that the objective function can be written as
\begin{equation}
\operatorname{Tr}[\Gamma_{RB}^{\mathcal{N}}W_{RB}^{\prime}]+\operatorname{Tr}
[\Gamma_{RB}^{\mathcal{M}}Z_{RB}^{\prime}],
\end{equation}
with
\begin{equation}
W_{RB}^{\prime}:=X_{R}^{\dag}W_{RB}X_{R},\quad Z_{RB}^{\prime}:=X_{R}^{\dag
}Z_{RB}X_{R}
\end{equation}
Now consider that the inequality in
\eqref{eq:ineq-const-dual-fid-states-SDP}\ is equivalent to
\begin{equation}
\begin{bmatrix}
X_{R} & 0\\
0 & X_{R}
\end{bmatrix}
^{\dag}
\begin{bmatrix}
W_{RB} & I_{RB}\\
I_{RB} & Z_{RB}
\end{bmatrix}
\begin{bmatrix}
X_{R} & 0\\
0 & X_{R}
\end{bmatrix}
\geq0
\end{equation}
Multiplying out the last matrix we find that
\begin{align}
&
\begin{bmatrix}
X_{R} & 0\\
0 & X_{R}
\end{bmatrix}
^{\dag}
\begin{bmatrix}
W_{RB} & I_{RB}\\
I_{RB} & Z_{RB}
\end{bmatrix}
\begin{bmatrix}
X_{R} & 0\\
0 & X_{R}
\end{bmatrix}
\nonumber\\
&  =
\begin{bmatrix}
X_{R}^{\dag}W_{RB}X_{R} & X_{R}^{\dag}X_{R}\otimes I_{B}\\
X_{R}^{\dag}X_{R}\otimes I_{B} & X_{R}^{\dag}Z_{RB}X_{R}
\end{bmatrix}
\\
&  =
\begin{bmatrix}
W_{RB}^{\prime} & \rho_{R}\otimes I_{B}\\
\rho_{R}\otimes I_{B} & Z_{RB}^{\prime}
\end{bmatrix}
,
\end{align}
where we defined $\rho_{R}=X_{R}^{\dag}X_{R}$. Observing that $\rho_{R}\geq0$
and $\operatorname{Tr}[\rho_{R}]=1$, we can write the final SDP\ as follows:
\begin{equation}
\sqrt{F}(\mathcal{N}_{A\rightarrow B},\mathcal{M}_{A\rightarrow B})=\frac
{1}{2}\inf_{\rho_{R},W_{RB},Z_{RB}}\operatorname{Tr}[\Gamma_{RB}^{\mathcal{N}
}W_{RB}]+\operatorname{Tr}[\Gamma_{RB}^{\mathcal{M}}Z_{RB}],
\end{equation}
subject to
\begin{equation}
\rho_{R}\geq0,\quad\operatorname{Tr}[\rho_{R}]=1,\quad
\begin{bmatrix}
W_{RB} & \rho_{R}\otimes I_{B}\\
\rho_{R}\otimes I_{B} & Z_{RB}
\end{bmatrix}
\geq0. \label{eq:ch-fid-SDP-constr}
\end{equation}

Now let us calculate the dual SDP\ to this, using the following standard forms
for primal and dual SDPs, with Hermitian operators $A$ and $B$ and a
Hermiticity-preserving map $\Phi$ \cite{Wat18}:
\begin{equation}
\sup_{X\geq0}\left\{  \operatorname{Tr}[AX]:\Phi(X)\leq B\right\}  ,\qquad
\inf_{Y\geq0}\left\{  \operatorname{Tr}[BY]:\Phi^{\dag}(Y)\geq A\right\}  .
\end{equation}
Consider that the constraint in \eqref{eq:ch-fid-SDP-constr} implies
$W_{RB}\geq0$ and $Z_{RB}\geq0$, so that we can set
\begin{align}
Y  &  =
\begin{bmatrix}
W_{RB} & 0 & 0\\
0 & Z_{RB} & 0\\
0 & 0 & \rho_{R}
\end{bmatrix}
,\quad B=
\begin{bmatrix}
\Gamma_{RB}^{\mathcal{N}} & 0 & 0\\
0 & \Gamma_{RB}^{\mathcal{M}} & 0\\
0 & 0 & 0
\end{bmatrix}
,\\
\Phi^{\dag}(Y)  &  =
\begin{bmatrix}
W_{RB} & \rho_{R}\otimes I_{B} & 0 & 0\\
\rho_{R}\otimes I_{B} & Z_{RB} & 0 & 0\\
0 & 0 & \operatorname{Tr}[\rho_{R}] & 0\\
0 & 0 & 0 & -\operatorname{Tr}[\rho_{R}]
\end{bmatrix}
,\\
A  &  =
\begin{bmatrix}
0 & 0 & 0 & 0\\
0 & 0 & 0 & 0\\
0 & 0 & 1 & 0\\
0 & 0 & 0 & -1
\end{bmatrix}
.
\end{align}
Then with
\begin{equation}
X=
\begin{bmatrix}
P_{RB} & Q_{RB}^{\dag} & 0 & 0\\
Q_{RB} & S_{RB} & 0 & 0\\
0 & 0 & \lambda & 0\\
0 & 0 & 0 & \mu
\end{bmatrix}
\end{equation}
the map $\Phi$ is given by
\begin{align}
&  \operatorname{Tr}[X\Phi^{\dag}(Y)]\nonumber\\
&  =\operatorname{Tr}\left[
\begin{bmatrix}
P_{RB} & Q_{RB}^{\dag} & 0 & 0\\
Q_{RB} & S_{RB} & 0 & 0\\
0 & 0 & \lambda & 0\\
0 & 0 & 0 & \mu
\end{bmatrix}
\begin{bmatrix}
W_{RB} & \rho_{R}\otimes I_{B} & 0 & 0\\
\rho_{R}\otimes I_{B} & Z_{RB} & 0 & 0\\
0 & 0 & \operatorname{Tr}[\rho_{R}] & 0\\
0 & 0 & 0 & -\operatorname{Tr}[\rho_{R}]
\end{bmatrix}
\right] \\
&  =\operatorname{Tr}[P_{RB}W_{RB}]+\operatorname{Tr}[Q_{RB}^{\dag}\left(
\rho_{R}\otimes I_{B}\right)  ]+\operatorname{Tr}[Q_{RB}(\rho_{R}\otimes
I_{B})]\nonumber\\
&  \qquad+\operatorname{Tr}[S_{RB}Z_{RB}]+\left(  \lambda-\mu\right)
\operatorname{Tr}[\rho_{R}]\\
&  =\operatorname{Tr}[P_{RB}W_{RB}]+\operatorname{Tr}[S_{RB}Z_{RB}
]+\operatorname{Tr}[(\operatorname{Tr}_{B}[Q_{RB}+Q_{RB}^{\dag}]+\left(
\lambda-\mu\right)  I_{R})\rho_{R}]\\
&  =\operatorname{Tr}\left[
\begin{bmatrix}
P_{RB} & 0 & 0\\
0 & S_{RB} & 0\\
0 & 0 & \operatorname{Tr}_{B}[Q_{RB}+Q_{RB}^{\dag}]+\left(  \lambda
-\mu\right)  I_{R}
\end{bmatrix}
\begin{bmatrix}
W_{RB} & 0 & 0\\
0 & Z_{RB} & 0\\
0 & 0 & \rho_{R}
\end{bmatrix}
\right]  .
\end{align}
So then
\begin{equation}
\Phi(X)=
\begin{bmatrix}
P_{RB} & 0 & 0\\
0 & S_{RB} & 0\\
0 & 0 & \operatorname{Tr}_{B}[Q_{RB}+Q_{RB}^{\dag}]+\left(  \lambda
-\mu\right)  I_{R}
\end{bmatrix}
.
\end{equation}
The primal is then given by
\begin{equation}
\frac{1}{2}\sup\operatorname{Tr}\left[
\begin{bmatrix}
0 & 0 & 0 & 0\\
0 & 0 & 0 & 0\\
0 & 0 & 1 & 0\\
0 & 0 & 0 & -1
\end{bmatrix}
\begin{bmatrix}
P_{RB} & Q_{RB}^{\dag} & 0 & 0\\
Q_{RB} & S_{RB} & 0 & 0\\
0 & 0 & \lambda & 0\\
0 & 0 & 0 & \mu
\end{bmatrix}
\right]  ,
\end{equation}
subject to
\begin{align}
\begin{bmatrix}
P_{RB} & 0 & 0\\
0 & S_{RB} & 0\\
0 & 0 & \operatorname{Tr}_{B}[Q_{RB}+Q_{RB}^{\dag}]+\left(  \lambda
-\mu\right)  I_{R}
\end{bmatrix}
&  \leq
\begin{bmatrix}
\Gamma_{RB}^{\mathcal{N}} & 0 & 0\\
0 & \Gamma_{RB}^{\mathcal{M}} & 0\\
0 & 0 & 0
\end{bmatrix}
,\\
\begin{bmatrix}
P_{RB} & Q_{RB}^{\dag} & 0 & 0\\
Q_{RB} & S_{RB} & 0 & 0\\
0 & 0 & \lambda & 0\\
0 & 0 & 0 & \mu
\end{bmatrix}
&  \geq0,
\end{align}
which simplifies to
\begin{equation}
\frac{1}{2}\sup\left(  \lambda-\mu\right)
\end{equation}
subject to
\begin{align}
P_{RB}  &  \leq\Gamma_{RB}^{\mathcal{N}},\\
S_{RB}  &  \leq\Gamma_{RB}^{\mathcal{M}},\\
\operatorname{Tr}_{B}[Q_{RB}+Q_{RB}^{\dag}]+\left(  \lambda-\mu\right)  I_{R}
&  \leq0,\\
\begin{bmatrix}
P_{RB} & Q_{RB}^{\dag}\\
Q_{RB} & S_{RB}
\end{bmatrix}
&  \geq0,\\
\lambda,\mu &  \geq0.
\end{align}
We can simplify this even more. We can set $\lambda^{\prime}=\lambda-\mu
\in\mathbb{R}$, and we can substitute $Q_{RB}$ with $-Q_{RB}$ without changing
the value, so then it becomes
\begin{equation}
\frac{1}{2}\sup\lambda^{\prime}
\end{equation}
subject to
\begin{align}
P_{RB}  &  \leq\Gamma_{RB}^{\mathcal{N}},\\
S_{RB}  &  \leq\Gamma_{RB}^{\mathcal{M}},\\
\lambda^{\prime}I_{R}  &  \leq\operatorname{Tr}_{B}[Q_{RB}+Q_{RB}^{\dag}],\\
\begin{bmatrix}
P_{RB} & -Q_{RB}^{\dag}\\
-Q_{RB} & S_{RB}
\end{bmatrix}
&  \geq0,\\
\lambda^{\prime}  &  \in\mathbb{R}.
\end{align}
We can rewrite
\begin{align}
\begin{bmatrix}
P_{RB} & -Q_{RB}^{\dag}\\
-Q_{RB} & S_{RB}
\end{bmatrix}
\geq0\quad &  \Longleftrightarrow\quad
\begin{bmatrix}
P_{RB} & Q_{RB}^{\dag}\\
Q_{RB} & S_{RB}
\end{bmatrix}
\geq0\\
&  \Longleftrightarrow\quad
\begin{bmatrix}
P_{RB} & 0\\
0 & S_{RB}
\end{bmatrix}
\geq
\begin{bmatrix}
0 & -Q_{RB}^{\dag}\\
-Q_{RB} & 0
\end{bmatrix}
\end{align}
We then have the simplified condition
\begin{equation}
\begin{bmatrix}
0 & -Q_{RB}^{\dag}\\
-Q_{RB} & 0
\end{bmatrix}
\leq
\begin{bmatrix}
P_{RB} & 0\\
0 & S_{RB}
\end{bmatrix}
\leq
\begin{bmatrix}
\Gamma_{RB}^{\mathcal{N}} & 0\\
0 & \Gamma_{RB}^{\mathcal{M}}
\end{bmatrix}
.
\end{equation}
Since $P_{RB}$ and $S_{RB}$ do not appear in the objective function, we can
set them to their largest value and obtain the following simplification
\begin{equation}
\frac{1}{2}\sup\lambda^{\prime}
\end{equation}
subject to
\begin{equation}
\lambda^{\prime}I_{R}\leq\operatorname{Tr}_{B}[Q_{RB}+Q_{RB}^{\dag}],\quad
\begin{bmatrix}
\Gamma_{RB}^{\mathcal{N}} & Q_{RB}^{\dag}\\
Q_{RB} & \Gamma_{RB}^{\mathcal{M}}
\end{bmatrix}
\geq0,\quad\lambda^{\prime}\in\mathbb{R}
\end{equation}
Since a feasible solution is $\lambda^{\prime}=0$ and $Q_{RB}=0$, it is clear
that we can restrict to $\lambda^{\prime}\geq0$. After a relabeling, this
becomes
\begin{multline}
\frac{1}{2}\sup_{\lambda\geq0,Q_{RB}}\left\{  \lambda:\lambda I_{R}
\leq\operatorname{Tr}_{B}[Q_{RB}+Q_{RB}^{\dag}],\quad
\begin{bmatrix}
\Gamma_{RB}^{\mathcal{N}} & Q_{RB}^{\dag}\\
Q_{RB} & \Gamma_{RB}^{\mathcal{M}}
\end{bmatrix}
\geq0\right\} \\
=\sup_{\lambda\geq0,Q_{RB}}\left\{  \lambda:\lambda I_{R}\leq\operatorname{Re}
[\operatorname{Tr}_{B}[Q_{RB}]],\quad
\begin{bmatrix}
\Gamma_{RB}^{\mathcal{N}} & Q_{RB}^{\dag}\\
Q_{RB} & \Gamma_{RB}^{\mathcal{M}}
\end{bmatrix}
\geq0\right\}  .
\end{multline}
This is equivalent to
\begin{equation}
\sup_{Q_{RB}}\left\{  \lambda_{\min}\left(  \operatorname{Re}
[\operatorname{Tr}_{B}[Q_{RB}]]\right)  :
\begin{bmatrix}
\Gamma_{RB}^{\mathcal{N}} & Q_{RB}^{\dag}\\
Q_{RB} & \Gamma_{RB}^{\mathcal{M}}
\end{bmatrix}
\geq0\right\}  .
\end{equation}
This concludes the proof.
\end{proof}

\end{document}